\renewcommand{\GLS}[2]{\emph{#1}}
\renewcommand{\gls}[1]{\GLS{#1}{#1}}
\newcommand{\comment}[1]{}
\newcommand{\qwhile}{\textbf{q}\texttt{While}}
\newcommand{\dowhilePX}{\text{do }P\text{ while }X}
\newcommand{\ie}{\emph{i.e.\@}}
\newcommand{\eg}{\emph{e.g.\@}}
\newcommand{\etc}{\emph{etc.\@}}
\newcommand{\etal}{\emph{et al.\@}}
\newcommand{\st}{\mathbin{\text{\ s.t.\ }}}
\newcommand{\ifc}{\text{if\quad}}
\newcommand{\undefined}{\text{undefined}}
\newcommand{\otherwise}{\text{otherwise}}
\newcommand{\im}{\mathrm{im}}
\newcommand{\dom}{\mathrm{dom}}
\newcommand{\bigO}[1]{\ensuremath{\mathcal{O}\left({#1}\right)}}
\newcommand{\Lim}[1]{\lim_{\scaleto{#1 \mathstrut}{7pt}}}
\newcommand{\into}{\hookrightarrow}
\newcommand{\pto}{\rightharpoonup}
\newcommand{\keq}{\simeq}
\newcommand{\fml}[1]{\ensuremath{\mathbf{#1}}}
\newcommand{\floor}[1]{\lfloor {#1} \rfloor}
\newcommand{\powerset}[1]{\mathcal{P}(#1)}
\newcommand{\fpset}[1]{\mathcal{F}(#1)}
\newcommand{\cA}{\ensuremath{\mathcal{A}}}
\newcommand{\cB}{\mathcal{B}}
\newcommand{\cC}{\ensuremath{\mathcal{C}}}
\newcommand{\cF}{\mathcal{F}}
\newcommand{\cW}{\mathcal{W}_\bot}
\DeclarePairedDelimiter{\absolute}{\lvert}{\rvert}
\newcommand{\sem}[1]{\llbracket #1 \rrbracket}
\newcommand{\abs}[1]{\absolute*{#1}}
\newcommand{\norm}[1]{\abs{\!\abs{#1}\!}}
\newcommand{\opnorm}[1]{\norm{#1}_\mathrm{op}}
\newcommand{\Bset}{\mathbb{B}}
\newcommand{\Nset}{\mathbb{N}}
\newcommand{\Zset}{\mathbb{Z}}
\newcommand{\Rset}{\mathbb{R}}
\newcommand{\Cset}{\mathbb{C}}
\newcommand{\CTwo}{\mathbb{C}^2}
\newcommand{\gdir}[2]{{#1} \vert_{#2}}
\newcommand{\id}{\mathrm{id}}
\newcommand{\Ob}{\mathrm{Ob}}
\newcommand{\iso}{\cong}
\newcommand{\xto}[1]{\xrightarrow{#1}}
\newcommand{\comma}[2]{(#1 \!\Rightarrow\! #2)}
\newcommand{\ex}{\mathrm{ex}}
\newcommand{\ki}{\mathrm{ki}}
\newcommand{\Tr}{\mathrm{Tr}}
\newcommand{\sub}[1]{{\scaleto{#1 \mathstrut}{7pt}}}
\newcommand{\dsub}[1]{{\scaleto{#1 \mathstrut}{5pt}}}
\newcommand{\cat}[1]{\ensuremath{\mathbf{#1}}}
\newcommand{\SCat}[1]{\ensuremath{\mathbf{\Sigma}_\mathrm{#1}}}
\newcommand{\A}{\cat{A}}
\newcommand{\B}{\cat{B}}
\newcommand{\C}{\cat{C}}
\newcommand{\D}{\cat{D}}
\newcommand{\V}{\cat{V}}
\newcommand{\Set}{\cat{Set}}
\newcommand{\Rel}{\cat{Rel}}
\newcommand{\Top}{\cat{Top}}
\newcommand{\TopHaus}{\cat{Top}_{\mathrm{Haus}}}
\newcommand{\Mon}{\cat{Mon}}
\newcommand{\CMon}{\cat{CMon}}
\newcommand{\HausCMon}{\cat{HausCMon}}
\newcommand{\HAG}{\cat{HausAb}}
\newcommand{\Grp}{\cat{Grp}}
\newcommand{\Ab}{\cat{Ab}}
\newcommand{\Vect}{\cat{Vect}}
\newcommand{\Hilb}{\cat{Hilb}}
\newcommand{\FdHilb}{\cat{FdHilb}}
\newcommand{\PInj}{\cat{PInj}}
\newcommand{\FinPInj}{\cat{FinPInj}}
\newcommand{\Bijection}{\cat{Bijection}}
\newcommand{\SubStoch}{\cat{SubStoch}}
\newcommand{\Stoch}{\cat{Stoch}}
\newcommand{\Unitary}{\cat{Unitary}}
\newcommand{\Isometry}{\cat{Isometry}}
\newcommand{\Contraction}{\cat{Contraction}}
\newcommand{\FdContraction}{\cat{FdContraction}}
\newcommand{\FdIsometry}{\cat{FdIsometry}}
\newcommand{\FdUnitary}{\cat{FdUnitary}}
\newcommand{\LSI}{\cat{LSI}}
\newcommand{\LSIContraction}{\cat{LSI}_\leq}
\newcommand{\CPTP}{\cat{CPTP}}
\newcommand{\CPTR}{\cat{CPTR}}
\newcommand{\Int}{\cat{Int}}
\newcommand{\SU}{\mathrm{SU}}
\renewcommand{\vector}[2]{\begin{pmatrix} {#1} \\ {#2} \end{pmatrix}}
\newcommand{\conv}{\ast}
\newcommand{\ket}[1]{{\lvert{#1}\rangle}}
\newcommand{\bra}[1]{{\langle{#1}\rvert}}
\newcommand{\braket}[2]{\langle{#1} \hspace{1pt} \vert \hspace{1pt} {#2}\rangle}
\newcommand{\ketbra}[2]{\lvert{#1}\rangle \! \langle{#2}\rvert}
\newcommand{\tr}{\mathrm{tr}}
\newcommand{\ltwo}{\ell^2}
\newcommand{\Ltwo}{L^2}
\newcommand{\Cstar}{\(\mathrm{C}^\ast\)}
\DeclareMathOperator{\Span}{span}
\newcommand{\eye}{\ensuremath{\cdot\!\!(\!\!\!>}}
\newcounter{nthm}[section] 
\theoremstyle{definition}
\newtheorem{definition}[nthm]{Definition}
\newtheorem{remark}[nthm]{Remark}
\newtheorem{notation}[nthm]{Notation}
\newtheorem{example}[nthm]{Example}
\theoremstyle{plain}
\newtheorem{proposition}[nthm]{Proposition}
\newtheorem{lemma}[nthm]{Lemma}
\newtheorem{theorem}[nthm]{Theorem}
\newtheorem{corollary}[nthm]{Corollary}
\title{Unbounded loops in quantum programs: categories and weak while loops}
\author{Pablo Andr\'es-Mart\'inez}
\abstract{
    Control flow of quantum programs is often divided into two different classes: classical and quantum.
    Quantum programs with classical control flow have their conditional branching determined by the classical outcome of measurements, and these collapse quantum data.
    Conversely, quantum control flow is coherent, \ie{} it does not perturb quantum data; quantum walk-based algorithms are practical examples where coherent quantum feedback plays a major role.
    This dissertation has two main contributions: (i) a categorical study of coherent quantum iteration and (ii) the introduction of weak while loops.

    (i) The objective is to endow categories of quantum processes with a traced monoidal structure capable of modelling iterative quantum loops.
    To this end, the trace of a morphism is calculated via the execution formula, which adds up the contribution of all possible paths of the control flow.
    Haghverdi's unique decomposition categories are generalised to admit additive inverses and equipped with convergence criteria using basic topology.
    In this setting, it is possible to prove the validity of the execution formula as a categorical trace on certain categories of quantum processes.
    Among these there are categories of quantum processes over finite dimensional Hilbert spaces (as previously shown by Bartha), but also certain categories of quantum processes over infinite dimensional Hilbert spaces, such as a category of time-shift invariant quantum processes over discrete time.

    (ii) A weak while loop is a classical control flow primitive that offers a trade-off between the collapse caused on each iteration and the amount of information gained.
    The trade-off may be adjusted by tuning a parameter and, in certain situations, it is possible to set its value so that we may control the algorithm without sacrificing its quantum speed-up.
    As an example, it is shown that Grover's search problem can be implemented using a weak while loop, maintaining the same time complexity as the standard Grover's algorithm (as previously shown by Mizel).
    In a more general setting, sufficient conditions are provided that let us determine, with arbitrarily high probability, a worst-case estimate of the number of iterations the loop will run for.
}
\begin{document}

\begin{preliminary}

\maketitle

\begin{acknowledgements}
    These four years of PhD studies have been humbling; I am ultimately content with the final product --- both the experience and the result --- but it did feel grim at times. Some exceptional people helped me push through without becoming a bitter person, and this is what I am most grateful for. To those individuals, I hope you can recognise the sincerity in the words that I dedicate to you below.

    To Chris, you have given me the freedom to explore whatever I found interesting and I always had the certainty you would be willing and interested to talk about it. I am especially grateful for your friendliness, genuine curiosity and the unconditional support you have given me, both in my PhD studies and in discussions of career paths.

    To Robin, collaborating with you was a pleasure. I am inspired by your researcher style and I am grateful for the patience you (and Chris) have had with me. Your joyful spirit towards research lifted me up when I was struggling with burn out. That gave me strength to take my time to polish up my results and write a thesis I can be proud of.

    To my parents, thank you for understanding and putting up with my tendency to drift away when I am not feeling at my best. Thank you for bringing me up in a way that I felt free (and responsible) of my own choices and in an environment that gave me so many opportunities. It goes without saying I would not be here without you. On that note, I must also thank a few teachers and university professors that, along my parents, made me thirst for knowledge; thank you Enrique Meléndez, Juan María Jurado, Francisco Vico and Inmaculada P. de Guzmán.

    To the \emph{Patos Macizones}, each and every one of you. It is rare to have such a long lasting friendship that is still standing strong, it makes me feel special to be part of such an exceptional bunch of people. Spending quality time with you has always felt so genuine, refreshing and liberating --- League games do not count as quality time. A special shout out to my brother, Miguel; I have avoided so many obstacles in life by following your trail, you are a great older brother and friend.

    To Emil and Natalia, the original rad flat was a great experience and having you around made me feel instantly at home in Edinburgh. Your support during periods of self-doubt was invaluable.

    To Ali and Tim, I am really grateful to you for helping me get out and socialise every now and then. The long walks and chats with Ali by the end of the PhD helped a lot in keeping everything into perspective and maintaining a healthy attitude. I have grown as a person by simply being around you both and I have regained confidence in who I am in great deal thanks to you.

    To Hono, by far you are the person I have had the largest number of deep conversations with. Talking with you about almost anything tends to be a wonderful exercise in creative thinking. I derive a great deal of joy from our conversations and they remind me that the thirst for knowledge also asks for things beyond academic matters.

    To Vicky, you are a wonderful partner. I have made some selfish choices to get where I am and you have welcomed each and every one of them because you genuinely want me to grow and enjoy my life at its fullest. As cheesy as it is, your support keeps me afloat and reminds me that, ultimately, whoever I end up being or doing will be alright. There is so much to thank you for that literally everything I have said in the previous four paragraphs applies almost verbatim to you too.

    I am truly grateful.
\end{acknowledgements}

\begin{laysummary}
    If you have ever followed a recipe from a cookbook, it is likely you have encountered an unbounded loop: these have the form ``bake the soufflé until it turns golden". It is a \emph{loop} since you keep doing the same action --- baking --- for an extended duration of time and it is \emph{unbounded} since the actual amount of time you will bake for has not been specified; it may take longer on less powerful ovens. Writing a program is very similar to telling a computer how to follow a recipe: you specify the steps to solve a problem so that the computer ``bakes" the solution for you. Quantum computers are no different, you still need to give them a recipe; however, quantum computer scientists have generally been shy in their use of unbounded loops in quantum recipes. The main motivation behind this thesis is to argue that unbounded loops in quantum computer science may be more interesting than they are usually credited for.

    When following the instruction ``bake the soufflé until it turns golden" you are repeatedly peeking at the soufflé  to decide whether or not it is time to turn off the oven; a classical computer would do the same. However, quantum computers deal with quantum data which is very delicate in the sense that the simple act of measuring it alters it: as if peeking at your quantum soufflé would spoil it. Then, how can you possibly know when the soufflé is ready? It turns out that how ``badly" the soufflé is spoiled depends on how ``intensively" you peeked at it. One of the contributions of this thesis is the discussion of a general procedure that let us play with this trade-off and come out on top: we can keep an eye on the soufflé so we detect when it is ready without spoiling it badly.

    The soufflé in the previous metaphor was meant to represent quantum data, but the nature of our decision making was still classical: we either turned off the oven or let it bake for a little longer. Quantum physics allows us to do something crazier: we may \emph{entangle} the state of the soufflé with the state of the oven. If we engineer the right entanglement, the scenario where the oven is still \emph{on} while the soufflé is burning will simply be unfeasible. Then, we no longer need to control the oven ourselves, avoiding the risk of spoiling the soufflé by peeking at it. Previous works in the literature have attempted to formalise these ideas in mathematical terms that would help us better understand these ``fully quantum" unbounded loops. In this thesis I make contributions to this line of research, discussing a mathematical framework where these ``fully quantum" unbounded loops are proven to be well-defined and consistent with the theory.
\end{laysummary}

\standarddeclaration


\tableofcontents


\end{preliminary}


\chapter{Introduction}

This thesis studies quantum iteration using multiple technical results on traced monoidal categories, Hausdorff topology, the Fourier transform and weak quantum measurements. For the sake of completeness, a brief introduction to each of these fields of study is provided where relevant throughout the text.
Thus, the thesis is self-contained in the sense that no prior knowledge is required from the reader.
This stylistic choice results in a rather long text with multiple sections on introductory material.
These sections have been labelled as (\emph{Preamble}) in their title (also appearing in the index) so that readers experienced in the subject may skip them without risk of missing out on novel contributions.

This first chapter serves as the introduction to the thesis. Section~\ref{sec:QC} provides a brief introduction to quantum computing and, in particular, a discussion of the state of the art in the field of control flow of quantum programs --- the field of study that this thesis contributes to. Sections~\ref{sec:contributions} and~\ref{sec:publications} summarise the contributions of this thesis along with its structure and enumerate previous works we build upon. Sections~\ref{sec:CT} and~\ref{sec:QM_cats} provide a brief introduction to category theory and examples of categories relevant to quantum computing.

\section{Brief introduction to quantum computing}
\label{sec:QC}

The information held in the memory of a computer at any given time is known as its \emph{state}.
A programmer provides a computer with instructions on how it must manipulate its state so that, at the end of the process, the resulting state encodes the solution to the problem the programmer is interested in.
Thus, there are three essential aspects that need to be captured by any mathematical framework used to discuss computation:
\begin{itemize}
  \item the collection of possible states the computer can be in,
  \item the collection of transformations that we may apply to such states and
  \item how information may be retrieved from the computer's state.
\end{itemize}
The collection of possible states of a \emph{quantum computer} is described as a Hilbert space \(H\). 
Information may be retrieved from a quantum state by applying a \emph{measurement}; however, unlike in classical computing, measuring perturbs the quantum state being observed.
To apply a measurement, we  must first choose a decomposition of \(H\) into orthogonal subspaces:
\begin{equation*}
  H \iso \bigoplus_{i \in I} H_i
\end{equation*}
with projections \(\pi_i \colon H \to H_i\).
When the measurement is applied to a state \(v \in H\) it becomes one of the states in the collection \(\{\pi_i(v)\}_{i \in I}\), up to normalisation; with the probability of the \(i\)-th outcome determined by \(\norm{\pi_i(v)}^2\).
Consequently, vectors \(v \in H\) describing quantum states are required to be unit vectors \(\norm{v} = 1\), so that the probabilities of all possible measurement outcomes add up to one.
The rest of the vectors in \(H\) are understood as `unnormalised' states.

\begin{notation}
  Given two vectors \(v,u \in H\), their inner product is denoted \(\braket{u}{v}\); the widespread convention in quantum computing is to define the inner product to be linear in the second argument (and anti-linear in the first one).
  In the bra-ket notation, vectors are denoted as `kets' \(\ket{x} \in H\) where \(x\) may be any arbitrary label, \eg{} \(\ket{0},\ket{1},\ket{\psi} \in H\).
  For each vector \(\ket{\psi} \in H\) we define its `bra' to be the linear map \(\bra{\psi} \colon H \to \Cset\) defined as follows:
  \begin{equation*}
    \bra{\psi}(v) = \braket{\psi}{v}.
  \end{equation*}
  For any two vectors \(\ket{\psi},\ket{\varphi} \in H\) we define the linear map \(\ketbra{\psi}{\varphi} \colon H \to H\) to be:
  \begin{equation*}
    \ketbra{\psi}{\varphi}(v) = \braket{\varphi}{v} \cdot \ket{\psi}.
  \end{equation*}
\end{notation}

To manipulate their state, quantum computers may apply \emph{coherent} operations on their state space \(H\).
The term `coherent' means that no loss of quantum information occurs when applying the operation; formally, this is captured by the requirement that coherent operations are invertible.
The coherent operations that may be applied on a quantum state space \(H\) are unitary linear maps \(f \colon H \to H\), whose inverse is given by their adjoint.
In some situations --- for instance, after applying a measurement --- we may be certain that the computer's state is in a closed subspace \(H_0 \subset H\); then, it is reasonable to consider the restriction of a unitary \(f \colon H \to H\) to \(f \rvert_{H_0} \colon H_0 \to H\).
Such a linear map \(f \rvert_{H_0}\) is no longer surjective and, consequently, it cannot be unitary; however, it still is an isometry (and, hence, it has a left inverse).
The requirement that the computer's operations are isometries guarantees that \(\norm{f(v)} = \norm{v}\) for any \(v \in H\), so that unit vectors (\ie{} normalised states) are mapped to unit vectors.

Recall that, for any \(n \in \Nset\), all Hilbert spaces of dimension \(n\) are isomorphic to each other, the zero-dimensional space is just \(\{0\}\) and the one dimensional space is the field \(\Cset\).
Thus, \(\Cset^2\) can be seen as the smallest nontrivial Hilbert space and, as such, it plays a special role in quantum computing: that of representing the state space of a \gls{qubit}, the smallest unit of quantum memory.
It is customary to fix an orthonormal basis \(\ket{0},\ket{1} \in \Cset^2\) and refer to these as the `classical states' of the qubit, whereas any linear combination:
\begin{equation*}
  \ket{\psi} = \alpha \ket{0} + \beta \ket{1}
\end{equation*}
is referred to as a \emph{superposition} of \(\ket{0}\) and \(\ket{1}\).
Given two quantum memories whose state spaces are \(H\) and \(H'\), the space of states that the two together can store is the tensor product \(H \otimes H'\).
For instance, the state space provided by two qubits is \(\Cset^2 \otimes \Cset^2 \iso \Cset^4\) whose classical states are now denoted by the orthonormal basis 
\begin{equation*}
  \{\ket{0} \!\otimes\! \ket{0},\ \ket{0} \!\otimes\! \ket{1},\ \ket{1} \!\otimes\! \ket{0},\ \ket{1} \!\otimes\! \ket{1}\}
\end{equation*}
whose vectors are often denoted using the shorthand \(\ket{01} := \ket{0} \!\otimes\! \ket{1}\).
All unit vectors in \(H \otimes H'\) represent physically realisable states, but recall that not all vectors in a tensor product are of the form \(\ket{\varphi} \otimes \ket{\phi}\), \ie{} the canonical bilinear map \(p \colon H \times H' \to H \otimes H'\) is not surjective.
States \(\ket{\psi} \in H \otimes H'\) that are not in the image of \(p\) are known as \emph{entangled} states; the term is a direct reference to the fact that \(\ket{\psi} \not \in \im(p)\) implies that the state of the two quantum memories cannot be represented by describing the state of each one in isolation.

Instead of dealing with arbitrary Hilbert spaces \(H\) and arbitrary unitary operators \(H \to H\), quantum computer scientists often work at a level of abstraction closer to the real-world quantum computer.
The memory of a standard quantum computer is comprised of a finite number \(n\) of qubits that it can maintain and manipulate at any given time and, consequently, the state space of a real-world quantum computer is a finite-dimensional Hilbert space \(\Cset^{2^n}\).
Quantum computers support the direct application of only a selection of unitary maps often comprised of:
\begin{itemize}
  \item a finite set of unitary maps acting on single qubits \(\Cset^2 \to \Cset^2\) that, under composition, generates a group \(G_2\) that is a dense subset of \(\SU(2)\) --- in the sense that, for any \(f \in \SU(2)\) and any \(\epsilon > 0\) there is an element \(g \in G_2\) such that \(\opnorm{f - g} < \epsilon\) --- and
  \item a finite set (often a singleton set) of unitary maps acting on two qubits \(\Cset^4 \to \Cset^4\) that, along with \(G_2\) and under composition and tensor product generates a dense subset of \(\SU(2^n)\), so that any arbitrary unitary operation on the state space of \(n\) qubits may be implemented up to arbitrary precision.
\end{itemize}
Moreover, quantum computers support measurement of qubits in their \(\{\ket{0}, \ket{1}\}\) basis which, paired with the ability to approximate any unitary map, enables us to perform any arbitrary measurement.

\subsection{Control flow of quantum programs}
\label{sec:control_flow}

A program specifies a list of operations that a computer is instructed to perform.
The program's \emph{control flow} dictates the order in which such operations ought to be applied; the ordering is often made to depend on the computer's state during the program's execution.
For instance, the control flow may branch according to a decision making process within the program --- thus, applying a sequence of operations or another --- or it may loop and repeat a segment of the program until a condition is met.
Programming languages offer specialised instructions, known as control primitives, that let the programmer specify the control flow of the program; common examples are \emph{if-then-else} statements, \emph{while loops}, recursion, \etc{}
When designing programming languages for quantum computers, two fundamentally different notions of control flow may be considered.

\paragraph*{Classical control flow.} All decision making is dictated by the outcome of measurements, \ie{} classical information.
This paradigm is simple to realise on physical devices --- it may be realised by a quantum chip controlled by a classical computer.
Its mathematical formalisation is well understood, since it shares much in common with formal methods for programming languages in classical computer science (see Section~\ref{sec:classical_loop}).
Its main drawback is that, due to the use of measurements, quantum information is irreversibly lost whenever a control primitive needs to be evaluated.

\paragraph*{Quantum control flow.} The motivation behind this paradigm comes from the realisation that conditional control primitives may be performed \emph{coherently}, \ie{} in a reversible manner --- and, consequently, without loss of quantum information.
The canonical example is the construction of a quantumly controlled unitary \(\Lambda U \colon \Cset^4 \to \Cset^4\) defined by linear extension of:
\begin{equation*}
  \Lambda U(\ket{0} \otimes \ket{\psi}) = \ket{0} \otimes \ket{\psi} \quad\quad\quad \Lambda U(\ket{1} \otimes \ket{\psi}) = \ket{1} \otimes U \ket{\psi}
\end{equation*}
where \(U \colon \Cset^2 \to \Cset^2\) is an arbitrary unitary and \(\ket{\psi} \in \Cset^2\) an arbitrary state.
The unitary \(\Lambda U\) applies \(U\) on the second qubit if the first qubit is in state \(\ket{1}\) and does nothing if it is in state \(\ket{0}\); furthermore, when the first qubit is in a superposition \(\ket{\phi} = \alpha \ket{0} + \beta \ket{1}\), we obtain the following behaviour:
\begin{equation*}
  \Lambda U(\ket{\phi} \otimes \ket{\psi}) = \alpha (\ket{0} \otimes \ket{\psi}) + \beta (\ket{1} \otimes U \ket{\psi})
\end{equation*}
so that both options --- either applying \(U\) or not --- coexist in a superposition.
Under reasonable assumptions (\eg \(P\) describes a unitary process), we may define a conditional control primitive in quantum programming languages:
\begin{algorithm}
                        if $q \in K$ then do $P$
\end{algorithm}
that applies the subprogram \(P\) on every state \(q\) in the closed subspace \(K\), does nothing on the orthogonal complement of \(K\) and, on any other vector, acts accordingly to the linear extension --- for instance, \(\Lambda U\) would be described by setting \(P\) to apply \(U\) on the second qubit and define \(K\) as the subspace of vectors of the form \(\ket{1} \otimes -\).
The next natural step is to consider quantumly controlled iterative loops; unfortunately, the requirement that the computation is reversible creates some major obstacles.
Assume a programming language offers a control primitive
\begin{algorithm}
                        do $P$ while $q \in K$
\end{algorithm}
whose behaviour is described by the following recursive definition:\footnote{The choice of defining a `do-while' primitive instead of the more common `while-do' is not accidental and will be made clear in Section~\ref{sec:semantics} of the concluding chapter.}
\begin{algorithm}
                        do $P$
                        if $q \in K$ then
                          do $P$ while $q \in K$
\end{algorithm}  
so that subprogram \(P\) will be repeatedly applied on any linear component of the state that lands in \(K\).
There are three important issues with this iterative control primitive, which have been noted in multiple previous works.
\begin{itemize}
  \item \emph{A formal issue.} As pointed out by B{\u{a}}descu and Panangaden~\cite{AlternationPanangaden}, it is not clear whether the map \(H \to H\) induced by the iterative primitive would converge at all. Moreover, for the formalism to be considered physically sound, we would require that the solution is unitary and that the map giving the denotational semantics of the `do-while' primitive be continuous in a topological sense --- \ie{} that small changes to \(P\) would not cause abrupt changes in the behaviour of the loop.
  \item \emph{A physical issue.} Some linear components of the input state will have \(P\) applied to them more times than others.
  Evidently, applying subprogram \(P\) twice will take more time than applying it once, which means that different terms in the linear combination may not be `synchronised' in time. However, vectors in the Hilbert space \(H\) are understood as the state of the quantum computer \emph{at a particular time-step} and, hence, the physical intuition of this iterative loop is not properly captured in the mathematical formalism.
  \item \emph{A functionality issue.} To amend the physical issue, Ying, Yu and Feng~\cite{YingFlow} propose to introduce an auxiliary qubit on each iteration, generating a chain of qubits of unbounded length that records the `history' of each term in the superposition. However, as Linden and Popescu discuss in~\cite{HaltingPopescu} this causes linear components with different histories to no longer interact with each other, as their distinct histories make them orthogonal, and unitary operations send orthogonal states to orthogonal states.
  This prevents any interference between different branches of the control flow, making it effectively classical.
\end{itemize}
  
The formal issue was partially resolved in~\cite{Bartha}: Bartha shows that the category of unitary maps is traced with respect to the execution formula; we may interpret this as saying that, if the body of the do-while loop (subprogram \(P\)) describes a unitary \(H \to H\), then the result of aggregating all possible control flow branches of the quantum loop yields a unitary. However, Bartha's approach makes use of pseudoinverses, which complicates the study of whether the iterative primitive is continuous.\footnote{In this work we do not manage to prove that it is in fact continuous, but our framework is more amenable to conduct such a study, as discussed in Remark~\ref{rmk:continuous_trace}.}
The physical issue may be resolved by adding a notion of time to the formalism, in the manner discussed in Section~\ref{sec:LSI}; the crucial distinction with the proposal from Ying \etal{}~\cite{YingFlow} is that their locally generated `history' is replaced with a global notion of `time-line'.
States at different time-steps may not interfere with each other, but the same time-step may be reached via multiple paths in the control flow, and such paths will interact with each other; thus, the functionality issue is prevented.

One of the main objectives of the thesis is to formalise this argument, laying out the groundwork for formal semantics of programming languages with quantum iteration.
In particular, Chapters~\ref{chap:Sigma} and~\ref{chap:trace} will introduce a categorical framework that Section~\ref{sec:semantics} will argue to be appropriate for this task. 
The discussion in Chapters~\ref{chap:Sigma} and~\ref{chap:trace} is highly abstract and technical, but the ideas are rooted in physical intuitions drawn from the computational model of quantum walks, which is briefly introduced in the following section.

\subsection{Quantum walks: an example of coherent quantum feedback}
\label{sec:QW}

Let \(G = (V,E)\) describe a graph on a finite set of vertices \(V\) and a set of edges \(E \subseteq V \times V\) such that if \((u,v) \in E\) then \((v,u) \in E\), \ie{} the graph is undirected.
For each \(v \in V\) let \(E_v\) be the following set of edges:
\begin{equation*}
  E_v = \{(u,v) \in E \mid u \in V\},
\end{equation*}
let \(\Delta \colon V \to \Nset\) be the function that yields the degree of a vertex, \(\Delta(v) = \abs{E_v}\), and let \(H_v\) be the finite-dimensional Hilbert space spanned by taking \(E_v\) as an orthonormal basis.
Fix a collection of unitary maps \(\{f_v \colon H_v \to H_v\}_{v \in V}\) to be the set of \emph{coins} of the walk.
A \gls{quantum walk} is determined by the pair \((G,\{f_v\}_V)\) and it describes the evolution of a quantum state in the Hilbert space \(H = \oplus_{v \in V} H_v\) as follows:
\begin{itemize}
  \item let \(C \colon H \to H\) be the \emph{coin operator} defined as the direct sum
  \begin{equation*}
    C = \bigoplus_{v \in V} f_v;
  \end{equation*}
  \item let \(S \colon H \to H\) be the \emph{shift operator} defined as the linear extension of
  \begin{equation*}
    S \ket{u,v} = \ket{v,u}
  \end{equation*}
  for each \((u,v) \in E\), with \(\ket{u,v}\) being its corresponding unit vector in \(H\);
  \item for any state \(\ket{\psi} \in H\), after \(t \in \Nset\) time-steps the state becomes:
  \begin{equation*}
    (SC)^t \ket{\psi}.
  \end{equation*}
\end{itemize}

An illustrative physical interpretation of quantum walks is to view the graph \(G\) as the layout of an experiment, where each vertex determines a linear optics device (\eg{} lenses, mirrors, beamsplitters, \etc{}) and edges describe the path a photon can take from one device to another.
For each \((u,v) \in E\), the state \(\ket{u,v} \in H\) is interpreted to indicate the presence of a photon located somewhere between the device at vertex \(u\) and the one at vertex \(v\).
A linear combination
\begin{equation*}
  \ket{\psi} = \sum_{(u,v) \in E} \alpha_{u,v} \ket{u,v}
\end{equation*}
describes \emph{a single photon} in a superposition of locations.
The action of the device at vertex \(v\) on an incoming photon is described by the unitary operator \(f_v\) and the coin operator \(C\) simply arranges all of the coins \(f_v\) into a single linear map.
The shift operator \(S\) conveys the structure of the graph, ensuring that the output that \(f_v\) leaves at edge \((u,v)\) is, on the next time-step, an input to \(f_u\).
Quantum walks are not limited to describe experiments on linear optics and, in general, may be understood to be describe arbitrary quantum processes, similarly to how Markov chains and random walks describe stochastic processes~\cite{DTQWOrig}.
In the abstract case, instead of a photon we use the term \emph{walker} to refer to the entity traversing the graph.

The control flow of any program may be described as a graph, where vertices correspond to the different instructions in the program and edges indicate the possible control flow paths between them.
The state of the control flow during execution may be described as a quantum walk on such a graph, where the position of the walker indicates the branch of the control flow (or superposition of branches) that is currently being computed.
Crucially, the graph describing a quantum walk may have cycles, and these make it so that the walker (or a linear term within its superposition) may visit the same vertex any arbitrary number of times, thus representing a loop in the control flow.

Notice that both of the operators \(C\) and \(S\) describing the quantum walk's evolution are unitary and, therefore, reversible.
Consequently, quantum walks are a physically realisable framework that supports coherent feedback of quantum processes and, in particular, it can be understood to model quantum (coherent) control flow of programs.
However, there are two caveats: first, the graph describing the control flow of a program is directed while the graph of a quantum walk is undirected; fortunately, the requirement that \(G\) is undirected is only a matter of convention (which simplifies the quantum walk's definition) and directed quantum walks may be defined, as long as we guarantee that the in-degree and out-degree of each vertex coincides, so that its coin may be a unitary.
The second caveat is that programs are supposed to have inputs and outputs so that they may be composed, but quantum walks are defined on closed graphs and there is no immediate notion of composition for them.
A way around this is to define open quantum walks: quantum walks on graphs with special vertices marked as either `input' or `output' so that the walker may enter or leave the graph through them.
But then, some linear term of the walker's state may leave the graph at time-step \(t\), while other terms may not leave until a later time-step \(t' > t\) (see Figure~\ref{fig:time_domain_loop}).
Thus, to discuss the input-output behaviour of open quantum walks we must extend the Hilbert space \(H\) to describe states over time; a way to do so for arbitrary quantum states and processes is described in Section~\ref{sec:LSI}.

\begin{figure}
  \centering
  \begin{tikzpicture}
  \node[rectangle,draw=black,thick] (f) {\(\begin{pmatrix} f_\sub{BA} & f_\sub{BU} \\ f_\sub{UA} & f_\sub{UU}  \end{pmatrix}\)};
  \coordinate[below=5mm of f.west] (A);
  \node[circle,fill=black,minimum size=2.5mm,inner sep=0,left=7mm of A] (w) {};
  \coordinate[left=3mm of w] (Ad);
  \draw (Ad) -- (w);
  \draw (w) -- node[above] {\tiny \(A\)} (A);
  \coordinate[below=5mm of f.east] (B);
  \coordinate[right=12mm of B] (Bd);
  \draw (Bd) -- node[above] {\tiny \(B\)} (B);
  \coordinate[above=5mm of f.west] (Ui);
  \coordinate[left=6mm of Ui] (Uid);
  \draw (Uid) -- node[above] {\tiny \(U\)} (Ui);
  \coordinate[above=5mm of f.east] (Uo);
  \coordinate[right=6mm of Uo] (Uod);
  \draw (Uod) -- node[above] {\tiny \(U\)} (Uo);
  \coordinate[above=9mm of Uid] (Uiu);
  \coordinate[above=9mm of Uod] (Uou);
  \draw (Uiu) -- (Uou);
  \draw (Uid) edge[out=180,in=180,looseness=1.5] (Uiu);
  \draw (Uod) edge[out=0,in=0,looseness=1.5] (Uou);
  \node[below=0mm of w] (wt) {\scriptsize \(\ket{\psi}\)};
\end{tikzpicture}
\quad {\Large \(\xmapsto{\text{\scriptsize \ \ two steps later \ \ }}\)} \quad
\begin{tikzpicture}
  \node[rectangle,draw=black,thick] (f) {\(\begin{pmatrix} f_\sub{BA} & f_\sub{BU} \\ f_\sub{UA} & f_\sub{UU}  \end{pmatrix}\)};
  \coordinate[below=5mm of f.west] (A);
  \coordinate[left=13mm of A] (Ad);
  \draw (Ad) -- node[below] {\tiny \(A\)} (A);
  \coordinate[below=5mm of f.east] (B);
  \node[circle,fill=black,minimum size=2.5mm,inner sep=0,right=3mm of B] (wBUA) {};
  \node[circle,fill=black,minimum size=2.5mm,inner sep=0,right=6mm of wBUA] (wBA) {};
  \coordinate[right=3mm of wBA] (Bd);
  \draw (B) -- (wBUA);
  \draw (wBUA) -- node[above] {\tiny \(B\)} (wBA);
  \draw (wBA) -- (Bd);
  \coordinate[above=5mm of f.west] (Ui);
  \node[circle,fill=black,minimum size=2.5mm,inner sep=0,left=4mm of Ui] (wUUA) {};
  \coordinate[left=1mm of wUUA] (Uid);
  \draw (Uid) -- (wUUA);
  \draw (wUUA) -- node[above] {\tiny \(U\)} (Ui);
  \coordinate[above=5mm of f.east] (Uo);
  \coordinate[right=6mm of Uo] (Uod);
  \draw (Uod) -- node[above] {\tiny \(U\)} (Uo);
  \coordinate[above=9mm of Uid] (Uiu);
  \coordinate[above=9mm of Uod] (Uou);
  \draw (Uiu) -- (Uou);
  \draw (Uid) edge[out=180,in=180,looseness=1.5] (Uiu);
  \draw (Uod) edge[out=0,in=0,looseness=1.5] (Uou);
  \node[below=0mm of wBA] (wBAt) {\scriptsize \(\ket{\psi_\dsub{ba}}\)};
  \node[below=0mm of wBUA] (wBUAt) {\scriptsize \(\ket{\psi_\dsub{bua}}\)};
  \node[below=0mm of wUUA] (wUUAt) {\scriptsize \(\ket{\psi_\dsub{uua}}\)};
\end{tikzpicture}
  \caption{A state \(\ket{\psi}\) is given as input to a quantum iterative process; two time-steps later, the state is in a superposition \(\ket{\psi_\sub{ba}} + \ket{\psi_\sub{bua}} + \ket{\psi_\sub{uua}}\) where \(\ket{\psi_\sub{ba}} = f_\sub{BA}\ket{\psi}\), \(\ket{\psi_\sub{bua}} = f_\sub{BU}f_\sub{UA}\ket{\psi}\) and \(\ket{\psi_\sub{uua}} = f_\sub{UU}f_\sub{UA}\ket{\psi}\). The term \(\ket{\psi_\sub{ba}}\) `leaves' the system a step earlier than \(\ket{\psi_\sub{bua}}\).}
  \label{fig:time_domain_loop}
\end{figure}

Open quantum walks may also be defined over continuous time~\cite{GraphTails} and, in order to describe their composition with ease, their input-output behaviour is described in the frequency domain; this insight motivates much of Section~\ref{sec:LSI}.
Both continuous-time and discrete-time open quantum walks have been shown to be universal models of quantum computation~\cite{UnivCTQW,UnivDTQW}, in the sense that any unitary operation may be approximated to arbitrary precision by combining a finite variety of resources.
Multiple quantum algorithms have been defined in terms of quantum walks (see~\cite{ElemDistinctness} and the survey from~\cite{QWAlgSurvey}), which may be seen as evidence of the power of coherent feedback in quantum computing.
In practice, a quantum algorithm described via a quantum walk is implemented in a quantum computer using simulation techniques such as the ones described in~\cite{SimQW}.

\section{Contributions and structure of the thesis}
\label{sec:contributions}

This thesis has two main contributions, both relevant in quantum computing; the first one in the field of quantum control flow and the second one in the field of classical control flow of quantum programs.
\begin{itemize}
  \item The first contribution is on the formal study of iterative quantum loops by means of categorical traces on categories of quantum processes.
  In particular, the thesis focuses on the execution formula, which aggregates all of the possible paths of the quantum control flow.
  It is shown that both the category of finite-dimensional Hilbert spaces and contractions and the category of linear shift invariant contractions (as introduced in Section~\ref{sec:LSI}) are totally traced with respect to the execution formula.
  To do so, Haghverdi's unique decomposition categories~\cite{Haghverdi} are generalised and equipped with convergence criteria based on basic topology.
  These contributions are presented in Chapters~\ref{chap:Sigma} and~\ref{chap:trace}.
  \item The second contribution is the proposal of a classical control flow primitive --- the weakly measured while loop --- that offers a trade-off between decoherence due to measurement and information gained.
  It is argued that this primitive offers a restricted ability to monitor a quantum state as it evolves; sufficient conditions for such a monitoring to be compatible with quantum speed-up are discussed.
  These contributions are presented in Chapter~\ref{chap:while}, which reproduces the contents of the published work~\cite{WeakWhileLoop}.
\end{itemize}
More details on the motivation and structure of these chapters is given in the following subsections.
Chapter~\ref{chap:remarks} concludes the thesis and proposes further work in both of these lines of research.

\subsection{Categorical study of iterative quantum loops}

Haghverdi's unique decomposition categories~\cite{Haghverdi} have been used to model iterative loops in classical computer science.
These categories rely on a notion of addition of morphisms that does not admit additive inverses.
Unfortunately, additive inverses are a fundamental aspect of categories of quantum processes, as they capture destructive interference, which is one of the key aspects of quantum theory.
Chapter~\ref{chap:Sigma} proposes a generalisation of the additive structure on Haghverdi's categories, so that additive inverses are allowed.
A connection to topological groups is made, which will be essential to prove the main results in Chapter~\ref{chap:trace}.
Multiple categories capturing subtle differences in the notion of infinitary addition are presented, and their relationship is exhibited in terms of adjunctions.

Chapter~\ref{chap:trace} builds upon the results of the previous chapter to provide a version of Haghverdi's unique decomposition categories equipped with convergence criteria using basic topology.
In this setting, it is possible to prove the validity of the execution formula as a categorical trace on certain categories of quantum processes.
Among these there are categories of quantum processes over finite-dimensional Hilbert spaces (as previously shown by Bartha~\cite{Bartha} for the case of isometries and unitaries), but also certain categories of quantum processes over infinite-dimensional Hilbert spaces, such as the category of time-shift invariant quantum processes over discrete time (introduced in Section~\ref{sec:LSI}).
The latter result lays the groundwork for the design of categorical semantics for quantum programming languages supporting (coherent) quantum iterative loops.
What such a language and its semantics may look like is sketched in Section~\ref{sec:semantics} of the concluding chapter; in essence, the discrete time line needs to be formalised in the semantics and managed as transparently as possible.

\subsection{Weakly measured while loops}

Weakly measured while loops are proposed and discussed in Chapter~\ref{chap:while}.
A weakly measured while loop is a classical control flow primitive that offers a trade-off between the collapse of the quantum state caused on each iteration and the amount of information gained.
The trade-off may be adjusted by tuning a parameter and, in certain situations, it is possible to set its value so that quantum speed-up is achieved.
As an example, it is shown that Grover's search problem can be implemented using a weak while loop, maintaining the same time complexity as the standard Grover's algorithm.
In a more general setting, sufficient conditions are provided that let us determine, with arbitrarily high probability, a worst-case estimate of the number of iterations a weakly measured while loop will run for, thus allowing us to study the worst-case time (and query) complexity of quantum algorithms that use this primitive.

\section{Publications and previous work}
\label{sec:publications}

In this section we enumerate the main works from the literature this thesis builds upon; these will be discussed in more detail when relevant throughout the thesis and, in particular, in the last section of each chapter.
In this section we also point out the publications that have come out from this PhD thesis and discuss where novel results presented here may lead to new publications.

\paragraph*{Categories for infinitary addition.} This topic is explored in Chapter~\ref{chap:Sigma} and builds upon the notion of \(\Sigma\)-monoids proposed by Haghverdi~\cite{Haghverdi} and results and proof strategies introduced by Hoshino~\cite{RTUDC} regarding the category of \(\Sigma\)-monoids. We also consider Hausdorff commutative monoids but only use concepts and results appearing in standard introductory texts on general topology. During the process of connecting \(\Sigma\)-monoids with Hausdorff commutative monoids we took inspiration from Higgs' work on \(\Sigma\)-groups~\cite{Higgs}.

\paragraph*{Categorical study of quantum iteration.} This topic is explored in Chapter~\ref{chap:trace} and builds upon the unique decomposition categories of Haghverdi~\cite{Haghverdi} and their refinement due to Hoshino~\cite{RTUDC}. We provide a new proof for a result shown by Bartha~\cite{Bartha} --- that \(\FdIsometry\) (and, more generally, \(\FdContraction\)) is totally traced using the execution formula --- and, to do so, we use some lemmas from Bartha's work along with the kernel-image trace introduced by Malherbe, Scott and Selinger~\cite{Malherbe}. The transition from \(\FdContraction\) to the category of time-shift invariant quantum processes over discrete time \(\LSIContraction\) is achieved using well-known results on the Fourier transform that can be found in introductory texts on signal processing and engineering. 

Our result that the execution formula in \(\LSIContraction\) is a valid categorical trace is novel. We believe this result may be used to provide semantics for programming languages with quantum control flow (see Section~\ref{sec:semantics} for a toy language). In the coming months we will work on preparing these results in a format suitable for publication.

\paragraph*{Weakly measured while loops.} This topic is explored in Chapter~\ref{chap:while} and builds upon the notion of weak measurement (see~\cite{ToddTutorial} for an introduction). Previous work by Mizel~\cite{Mizel} proposed an algorithm that is fundamentally the same as the weakly measured Grover algorithm we present in Chapter~\ref{chap:while}. What distinguishes our work from Mizel's is that we propose a programming primitive --- the \(\kappa\)-while loop --- and lay the foundations of weakly controlled quantum iteration for the purpose of quantum algorithms, presenting the case of Grover's search as a proof of concept. The results discussed in Chapter~\ref{chap:while} were accepted as a journal publication~\cite{WeakWhileLoop}.

\paragraph*{Universal properties of partial quantum maps.} This topic is tangential to the thesis and, as such, it is not discussed in this text. However, it is worth mentioning that a collaboration with Kaarsgaard and Heunen took place during the final year of my PhD studies, yielding a manuscript~\cite{Robin} that was accepted at the International Workshop on Quantum Physics and Logic (QPL 2022). In this project we succeeded in describing how the category of finite-dimensional \Cstar-algebras and completely positive trace non-increasing maps can be obtained after applying a sequence of constructions characterised by universal properties which start from the category of finite-dimensional Hilbert spaces and unitary maps \(\FdUnitary\).

\section{Preliminaries on category theory (\emph{Preamble})}
\label{sec:CT}

This section is included for the sake of completeness and to introduce the notation used.
The reader is not assumed to have prior knowledge of category theory; however, the section may be too dense in concepts for an uninitiated reader.
There are multiple books providing a pedagogical introduction to category theory and, among them, the book by Heunen and Vicary~\cite{HeunenVicaryBook} is a particularly good match for this thesis, considering both its focus on monoidal categories and the discussion of their relevance in quantum computer science.
For the concepts of (co)limits and adjunctions, the book by Leinster~\cite{Leinster} is recommended.

Composition (of processes, devices, programs, functions, \etc{}) is a fundamental operation often taken for granted.
Given two functions \(f \colon A \to B\) and \(g \colon B \to C\), their composite \(g \circ f\) is obtained by applying \(f\) to the input, then \(g\) to the result:
\begin{equation}
  (g \circ f)(a) = g(f(a)).
\end{equation}
If we see \(- \circ -\) as a binary operation over the set of functions, we find that it is not a total function: not all pairs of functions can be composed together.
At its most basic level, a category is an axiomatisation of composition.

\begin{definition} \label{def:category}
  A \gls{category} \(\C\) is comprised of a collection of \emph{objects} \(\Ob(\C)\) and, for each pair of objects \(A,B \in \Ob(\C)\), a collection of \emph{morphisms} \(\C(A,B)\), together with a \emph{composition} operation 
  \begin{equation*}
    \circ \colon \C(B,C) \times \C(A,B) \to \C(A,C)
  \end{equation*}
  for all \(A,B,C \in \Ob(\C)\) satisfying the following axioms.
  \begin{itemize}
    \item \emph{Identities}: For every object \(A\), there is a special morphism \(\id_A \in \C(A,A)\) so that for all \(A,B \in \Ob(\C)\) and for all \(f \in \C(A,B)\),
    \begin{equation}
      f \circ \id_A = f = \id_B \circ f.
    \end{equation}
    \item \emph{Associativity}: For all \(A,B,C,D \in \Ob(\C)\) and all \(f \in \C(A,B)\), \(g \in \C(B,C)\)and \(h \in \C(C,D)\),
    \begin{equation}
      (h \circ g) \circ f = h \circ (g \circ f).
    \end{equation}
  \end{itemize}
  For any morphism \(f \in \C(A,B)\) we refer to \(f \colon A \to B\) as its \emph{type}.
  To reduce clutter, \(A \in \C\) is often used to indicate that \(A\) is an object of \(\C\).
\end{definition}

\begin{example}
  The category \GLS{\(\Set\)}{Set} has sets as objects and each \(\Set(A,B)\) is the set of all functions of domain \(A\) and codomain \(B\). Composition of functions \((g \circ f)(a) = g(f(a))\) is clearly associative and its identities \(\id_A \colon A \to A\) are the usual identity functions.
\end{example}

\begin{example}
  The category \GLS{\Rel}{Rel} has sets as objects and each \(\Rel(A,B)\) is the set of all relations between set \(A\) and set \(B\), \ie{} \(\Rel(A,B)\) is the powerset of \(A \times B\). The composite of \(\mathcal{R} \colon A \to B\) and \(\mathcal{S} \colon B \to C\) is given as follows:
  \begin{equation*}
    \mathcal{S} \circ \mathcal{R} = \{(a,c) \in A \times C \mid \exists b \in B,\ a\mathcal{R}b \ \text{and}\  b\mathcal{S}c\}.
  \end{equation*}
  Composition is clearly associative and identities are relations \(\id_A = \{(a,a) \mid a \in A\}\).
\end{example}

\begin{example}
  The category \GLS{\(\Mon\)}{Mon} has monoids as objects and each \(\Mon(A,B)\) is the set of all monoid homomorphisms from \(A\) to \(B\). Composition and identities are inherited from \(\Set\).
  Similarly, there is a category \GLS{\(\CMon\)}{CMon} of commutative monoids, a category \GLS{\(\Grp\)}{Grp} of groups and a category \GLS{\(\Ab\)}{Ab} of abelian groups, all of which have monoid homomorphisms as their morphisms.
\end{example}

\begin{example}
  Fix a field \(K\); the category \(\Vect_K\) has vector spaces over the field \(K\) as objects and each \(\Vect_K(V,U)\) is the set all linear maps from the vector space \(V\) to the vector space \(U\). Composition is inherited from \(\Set\), along with identities.
  Whenever \GLS{\(\Vect\)}{Vect} appears in this text without subscript it should be understood that \(K = \mathbb{C}\).
\end{example}

\begin{remark}
  There is some subtlety hidden in the word \emph{collection} in the definition of a category. If we consider the example of the category \(\Set\), we cannot say that \(\Ob(\Set)\) is a set itself, as we would run into Russell's paradox.\footnote{Loosely speaking, if it were a set, it would be the ``set containing all sets". Thus, it would need to be contained in itself, which creates a great deal of problems.}
  The way out of this conundrum is to define some sort of hierarchy of collections, where what we commonly refer to as a set is at the base level, and the collection of all sets is one step up the ladder.
  A category where the collection of objects is actually a set is called a \gls{small category}.
  Similarly, it can happen that a category \(\C\) satisfies that for all objects \(A,B \in \C\), the collection \(\C(A,B)\) is actually a set; then \(\C\) is said to be a \gls{locally small category}.
  \(\Set\), \(\Rel\) and \(\Vect\) are all locally small categories, but not small.
  All of the categories discussed in this thesis are locally small.
  When working with locally small categories, we refer to each set \(\C(A,B)\) as a \gls{hom-set}.
\end{remark}

\begin{definition} \label{def:diagram}
  Let \(\C\) be a category. A \emph{diagram} is a graph where the vertices are objects in \(\C\) and the arrows are morphisms in the direction determined by their type. For instance,
  \[\begin{tikzcd}
    A && B \\
    & C
    \arrow["f", from=1-1, to=1-3]
    \arrow["g"', from=1-1, to=2-2]
    \arrow["h"', from=2-2, to=1-3]
  \end{tikzcd}.\]
  A \gls{commuting diagram} satisfies that composing all morphisms along a path results in the same morphism obtained along any other path between the same endpoints.
  In particular, the diagram above commutes if and only if \(f = h \circ g\).
\end{definition}

From the examples above, it is reasonable to consider a notion of mapping between categories so that, for instance, we may say that all monoid homomorphisms are functions, and that all functions are relations.

\begin{definition}
  Let \(\C\) and \(\D\) be two categories. A \gls{functor} \(F \colon \C \to \D\) is comprised of a mapping between objects so that if \(A \in \C\) then \(F(A) \in \D\), and a mapping between morphisms so that if \(f \in \C(A,B)\) then \(F(f) \in \D(F(A),F(B))\). For it to be a functor, \(F\) must preserve composition and identities:
  \begin{align*}
    F(g \circ f) &= F(g) \circ F(f) \\
    F(\id_A) &= \id_{F(A)}.
  \end{align*}
  A \gls{faithful functor} satisfies that the mapping \(\C(A,B) \to \D(F(A),F(B))\) is injective for each pair of objects \(A,B \in \C\).
  A \gls{full functor} satisfies that the mapping \(\C(A,B) \to \D(F(A),F(B))\) is surjective for each pair of objects \(A,B \in \C\).
\end{definition}

\begin{example} \label{ex:Mon_to_Set}
  There is a faithful functor \(\Mon \to \Set\) that sends each monoid to its underlying set, and each homomorphism to its underlying function. Because composition in \(\Mon\) is defined in the same manner as in \(\Set\), proving that this is a functor is trivial.
\end{example}

\begin{example} \label{ex:Ab_to_CMon}
  There is a full and faithful functor \(\Ab \to \CMon\) that sends each abelian group to its underlying monoid --- which will automatically be commutative --- and acts as the identity on morphisms.
  The morphisms in \(\Ab\) are the same as those in \(\CMon\), so it is trivial to show that this is indeed a functor.
\end{example}

\begin{example}
  There is a faithful functor \(\Set \to \Rel\) that sends each set to itself and each function \(f \colon A \to B\) to its \emph{graph}, \ie{} to the relation:
  \begin{equation*}
    \mathcal{R}_f = \{(a,b) \in A \times B \mid f(a) = b\}.
  \end{equation*}
  The graph of the identity function matches the identity relation; it is easy to check that composition is preserved.
\end{example}

\begin{definition}
  Let \(\C\) and \(\D\) be categories. We say that \(\C\) is a \gls{subcategory} of \(\D\) if and only if \(\Ob(\C)\) is a subcollection of \(\Ob(\D)\) and \(\C(A,B)\) is a subcollection of \(\D(A,B)\) for each pair of objects \(A,B \in \C\) and composition and identities in \(\C\) coincide with those in \(\D\).
  If \(\C\) is a subcategory of \(\D\) then there is a canonical faithful functor \(\C \into \D\) that acts as the identity on objects and morphisms.
  If the functor is also full, we refer to \(\C\) as a \emph{full subcategory} of \(\D\).
\end{definition}

\begin{example}
  \(\Set\) is a subcategory of \(\Rel\). Both \(\CMon\) and \(\Grp\) are full subcategories of \(\Mon\). \(\Ab\) is a full subcategory both of \(\CMon\) and \(\Grp\).
\end{example}

The usefulness of functors goes beyond the simple case of embedding a category into a larger one; intuitively, a functor \(F \colon \C \to \D\) indicates that there is a way to go from \(\C\) to \(\D\) preserving the structure in the category \(\C\).
Currently, the structure of a category is only determined by its composition but, soon enough, richer flavours of categories will be discussed and, along them, there will be refined notions of functors preserving the new structure.

It will often be useful to discuss how pairs of functors \(F \colon \C \to \D\) and \(G \colon \C \to \D\) are related, this is done via natural transformations.

\begin{definition}
  Let \(\C\) and \(\D\) be categories, let \(F \colon \C \to \D\) and \(G \colon \C \to \D\) be functors. A \gls{natural transformation} \(F \xto{\alpha} G\) is a collection of morphisms \(\alpha_A \colon F(A) \to G(A)\) for each \(A \in \C\) such that the following diagram commutes for every \(f \in \C(A,B)\),
  \[\begin{tikzcd}
    {F(A)} && {G(A)} \\
    \\
    {F(B)} && {G(B)}
    \arrow["{\alpha_B}", from=3-1, to=3-3]
    \arrow["{\alpha_A}", from=1-1, to=1-3]
    \arrow["{F(f)}"', from=1-1, to=3-1]
    \arrow["{G(f)}", from=1-3, to=3-3]
  \end{tikzcd}\]
  We say \(\alpha\) is a \gls{natural isomorphism} if, additionally, all morphism \(\alpha_A\) are invertible.\footnote{A morphism \(f \colon A \to B\) is invertible if there is another morphism \(g \colon B \to A\) in the category such that \(g \circ f = \id_A\) and \(f \circ g = \id_B\). Such a morphism \(g\) is often denoted \(f^{-1}\) and referred to as the inverse of \(f\). It is straightforward to check that inverses are unique.}
\end{definition}

An important concept in category theory is the notion of limit; their definition is sketched below.
For a proper introduction to categorical limits see Chapter 5 of Leinster's book~\cite{Leinster}.

\begin{definition} \label{def:cat_limit}
  Let \(\C\) be a category. For any arbitrary diagram in \(\C\) (see Definition~\ref{def:diagram}), let \(\{A_i \in \C\}_{i \in I}\) be the set of objects in it, \eg{}
  \[\begin{tikzcd}
    & {A_0} \\
    {A_1} & {A_2.}
    \arrow["{f'}"', from=2-1, to=2-2]
    \arrow["f", from=1-2, to=2-2]
  \end{tikzcd}\]
  A \emph{cone} is an object \(X \in \C\) along with a set of morphisms \(\{g_i \colon X \to A_i\}_{i \in I}\) such that, when these are included in the diagram, \eg{}
  \[\begin{tikzcd}
    X & {A_0} \\
    {A_1} & {A_2}
    \arrow["{f'}"', from=2-1, to=2-2]
    \arrow["f", from=1-2, to=2-2]
    \arrow["{g_\dsub{1}}"', from=1-1, to=2-1]
    \arrow["{g_\dsub{0}}", from=1-1, to=1-2]
    \arrow["{g_\dsub{2}}"{description}, from=1-1, to=2-2]
  \end{tikzcd}\]
  it is satisfied for each \(i \in I\) that all paths from \(X\) to \(A_i\) yield the same morphism.
  A \gls{categorical limit} is a cone \(\{h_i \colon L \to A_i\}_{i \in I}\) such that for any other cone \(\{g_i \colon X \to A_i\}_{i \in I}\) there is a unique morphism \(m \colon X \to L\) satisfying \(g_i = h_i \circ m\) for all \(i \in I\).
\end{definition}

The limit of certain diagrams have special names; for instance, for any two objects \(A,B \in \C\) the limit of the diagram
\[\begin{tikzcd}
  A & B
\end{tikzcd}\]
is a \emph{categorical product} and, for any two morphisms \(f,g \colon A \to B\), the limit of the diagram
\[\begin{tikzcd}
  A & B
  \arrow["f", shift left=2, from=1-1, to=1-2]
  \arrow["g"', shift right=2, from=1-1, to=1-2]
\end{tikzcd}\]
is known as an \emph{equalizer}.
In general, not every diagram in a category \(\C\) has a limit; when it does, we say that \(\C\) is a \gls{complete category}.
A \emph{colimit} is the dual notion of a limit, defined by inverting the direction of all morphisms in Definition~\ref{def:cat_limit}.
The dual of a product is a coproduct and the dual of an equalizer is a coequalizer.
An object that is both a product and a coproduct and satisfies certain additional algebraic identities is known as a \emph{biproduct}.

\subsection{Monoidal categories}

We will often find that `sequential' composition of morphisms \(- \circ -\) is not the only reasonable notion of composition that may be considered.
The notion of `parallel' composition is formalised in monoidal categories.

\begin{definition}
  Let \(\C\) be a category, let \(\otimes \colon \C \times \C \to \C\) be a functor and let \(I\) be an object in \(\C\). The triple \((\C,\otimes,I)\) is a \gls{monoidal category} if there are natural isomorphisms \(\alpha\), \(\lambda\) and \(\rho\) --- known respectively as associator, left unitor and right unitor --- such that the diagrams below commute for all choices of objects \(A,B,C,D \in \C\).
  We refer to \(\otimes\) as the \gls{monoidal product} and \(I\) as the \gls{monoidal unit}.
  \[\begin{tikzcd}[column sep=small]
    {(A \otimes I) \otimes B} && {A \otimes (I \otimes B)} \\
    & {A\otimes B}
    \arrow["{\alpha_\dsub{A,I,C}}", from=1-1, to=1-3]
    \arrow["{\rho_\dsub{A} \otimes \id_\dsub{B}}"', from=1-1, to=2-2]
    \arrow["{\id_\dsub{A} \otimes \lambda_\dsub{B}}", from=1-3, to=2-2]
  \end{tikzcd}\]
  \[\begin{tikzcd}[column sep=tiny]
    & {(A\otimes B) \otimes (C \otimes D)} \\
    {((A \otimes B) \otimes C) \otimes D} && {A \otimes (B \otimes (C \otimes D))} \\
    \\
    {(A \otimes (B \otimes C)) \otimes D} && {A \otimes ((B \otimes C) \otimes D)}
    \arrow["{\alpha_\dsub{A,B \otimes C,D}}"', from=4-1, to=4-3]
    \arrow["{\alpha_\dsub{A \otimes B,C,D}}", from=2-1, to=1-2]
    \arrow["{\alpha_\dsub{A,B,C} \otimes \id_\dsub{D}}"', from=2-1, to=4-1]
    \arrow["{\alpha_\dsub{A,B,C \otimes D}}", from=1-2, to=2-3]
    \arrow["{\id_\dsub{A} \otimes \alpha_\dsub{B,C,D}}"', from=4-3, to=2-3]
  \end{tikzcd}\]
\end{definition}

The requirement that \(\otimes \colon \C \times \C \to \C\) is a functor imposes an interchange law between \(\circ\) and \(\otimes\):
\begin{equation} \label{eq:monoidal_interchange_law}
  (g \otimes k) \circ (f \otimes h) = (g \circ f) \otimes (k \circ h).
\end{equation}
The intuition behind this equation is provided in Figure~\ref{fig:monoidal_interchange_law}.
The first of the commuting diagrams above imposes that the object \(I\) acts as the unit of `parallel' composition \(- \otimes -\), whereas the second commuting diagram imposes that \(\otimes\) is associative.

\begin{figure}
\centering
\begin{tikzpicture}
  \node (equals) {\Large \(=\)};
  \node[left=10mm of equals] (left) {
    \begin{tikzpicture}
      \node[draw=black,thick,minimum size=7mm] (f) {\(f\)};
      \node[draw=black,thick,minimum size=7mm,above=9mm of f] (h) {\(h\)};
      \node[draw=black,thick,minimum size=7mm,right=13mm of f] (g) {\(g\)};
      \node[draw=black,thick,minimum size=7mm,above=9mm of g] (k) {\(k\)};
      \coordinate[left=6mm of f] (fp);
      \coordinate[right=6mm of g] (gp);
      \coordinate[left=6mm of h] (hp);
      \coordinate[right=6mm of k] (kp);
      \draw (gp) -- (g) -- (f) -- (fp);
      \draw (kp) -- (k) -- (h) -- (hp);
      \coordinate[below left=2.75mm of g] (x1);
      \coordinate[above right=2.75mm of k] (x2);
      \draw[draw=black!60,thick,dashed] (x1) rectangle (x2);
      \coordinate[below left=2.75mm of f] (y1);
      \coordinate[above right=2.75mm of h] (y2);
      \draw[draw=black!60,thick,dashed] (y1) rectangle (y2);
    \end{tikzpicture}
  };
  \node[right=10mm of equals] (right) {
    \begin{tikzpicture}
      \node[draw=black,thick,minimum size=7mm] (f) {\(f\)};
      \node[draw=black,thick,minimum size=7mm,above=9mm of f] (h) {\(h\)};
      \node[draw=black,thick,minimum size=7mm,right=13mm of f] (g) {\(g\)};
      \node[draw=black,thick,minimum size=7mm,above=9mm of g] (k) {\(k\)};
      \coordinate[left=6mm of f] (fp);
      \coordinate[right=6mm of g] (gp);
      \coordinate[left=6mm of h] (hp);
      \coordinate[right=6mm of k] (kp);
      \draw (gp) -- (g) -- (f) -- (fp);
      \draw (kp) -- (k) -- (h) -- (hp);
      \coordinate[below left=2.75mm of f] (x1);
      \coordinate[above right=2.75mm of g] (x2);
      \draw[draw=black!60,thick,dashed] (x1) rectangle (x2);
      \coordinate[below left=2.75mm of h] (y1);
      \coordinate[above right=2.75mm of k] (y2);
      \draw[draw=black!60,thick,dashed] (y1) rectangle (y2);
    \end{tikzpicture}
  };
\end{tikzpicture}
\caption{Pictorial representation of the interchange law~\eqref{eq:monoidal_interchange_law}. It does not matter whether we first compose in parallel or sequentially, the result is the same.}
\label{fig:monoidal_interchange_law}
\end{figure}
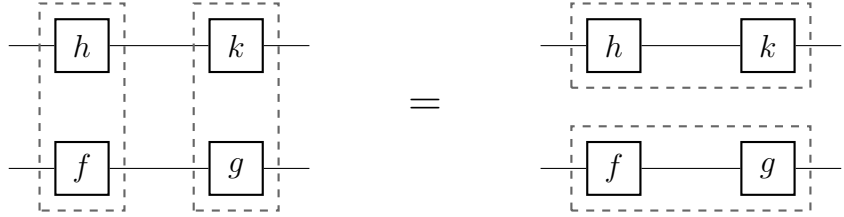

\begin{remark}
  By convention, morphisms in monoidal categories are represented pictorially as in Figure~\ref{fig:monoidal_interchange_law}. Morphisms are depicted as labelled boxes connected via wires when composed (using \(\circ\)) and, when combined using the monoidal product, the morphisms are drawn one on top of the other.
  Thanks to the coherence theorem of monoidal categories (see Section 1.3.4 from~\cite{HeunenVicaryBook}), we may prove facts about morphisms in a monoidal category \(\C\) by manipulating the pictorial representation of their morphisms.
  This is known as the \gls{graphical calculus}; its main advantage is that associators, unitors and the interchange law~\eqref{eq:monoidal_interchange_law} become trivial when represented pictorially, removing unnecessary verbosity from definitions and proofs.
\end{remark}

\begin{example}
  Let \(\times \colon \Set \times \Set \to \Set\) be the functor acting on objects as the Cartesian product of sets. On morphisms \(f \colon A \to B\) and \(g \colon C \to D\), the functor yields a function \(f \times g\) that maps each \((a,c) \in A \times C\) to \((f(a),g(c)) \in B \times D\).
  Let \(\{\bullet\}\) be an arbitrary singleton set; then, \((\Set,\times,\{\bullet\})\) is a monoidal category.
\end{example}

\begin{example}
  Let \(+ \colon \Set \times \Set \to \Set\) be the functor acting on objects as the disjoint union of sets. On morphisms \(f \colon A \to B\) and \(g \colon C \to D\), the functor yields a function \(f + g\) that maps each \(a \in A\) to \(f(a)\) and each \(c \in C\) to \(g(c)\).
  It can be shown that \((\Set,+,\varnothing)\) is a monoidal category.
\end{example}

\begin{definition}
  Given two vector spaces \(A,B \in \Vect_K\), their tensor product \(A \otimes B\) is a vector space characterised (up to isomorphism) by the property that there is a bilinear map\footnote{A bilinear map is a map \(f \colon A \times B \to C\) such that, for all \(a \in A\) and all \(b \in B\), both \(f(a,-)\) and \(f(-,b)\) are linear.} \(p \colon A \times B \to A \otimes B\) such that, for any other bilinear map \(f \colon A \times B \to C\), there is a unique linear map \(g \colon A \otimes B \to C\) making the diagram
  \[\begin{tikzcd}
    {A \times B} && {A \otimes B} \\
    && C
    \arrow["p", from=1-1, to=1-3]
    \arrow["f"', from=1-1, to=2-3]
    \arrow["g", dashed, from=1-3, to=2-3]
  \end{tikzcd}\]
  commute in \(\Set\).
  For any two linear maps \(f \colon A \to B\) and \(g \colon C \to D\), the map \(f \otimes g \colon A \otimes C \to B \otimes D\) is defined as the linear extension of the map \(a \otimes c \mapsto f(a) \otimes g(c)\).
\end{definition}

\begin{example}
  Let \(\otimes \colon \Vect_K \times \Vect_K \to \Vect_K\) be the functor acting as the tensor product on vector spaces and linear maps. 
  It can be shown that \((\Vect_K,\otimes,K)\) is a monoidal category.
\end{example}

\begin{definition}
  Given two vector spaces \(A,B \in \Vect_K\), their \gls{direct sum} \(A \oplus B\) is the vector space on the Cartesian product of their underlying sets, with coordinate-wise addition and scalar multiplication.
  For any two linear maps \(f \colon A \to B\) and \(g \colon C \to D\), the linear map \(f \oplus g \colon A \oplus C \to B \oplus D\) is defined as follows:
  \begin{equation*}
    (f \oplus g)(a,c) = (f(a),g(c)).
  \end{equation*}
\end{definition}

\begin{example}
  Let \(\oplus \colon \Vect_K \times \Vect_K \to \Vect_K\) be the functor acting as the direct sum of vector spaces and linear maps and let \(\{0\}\) be the zero-dimensional vector space. 
  It can be shown that \((\Vect_K,\oplus,\{0\})\) is a monoidal category.
\end{example}

We have seen two distinct monoidal structures both for \(\Set\) and \(\Vect\). Both monoidal structures in \(\Vect\) are relevant to this thesis, with the \(\otimes\)-monoidal structure being the usual focus in the literature of categorical quantum mechanics~\cite{HeunenVicaryBook,TheDodo}.
However, when discussing the control flow of programs, the monoidal structure induced by the direct sum is the one to take the spotlight and, hence, the one that will be most relevant to this thesis.

\begin{definition}
  Let \((\C,\otimes,I)\) be a monoidal category and let \(\sigma\) be a natural isomorphism with components
  \begin{equation*}
    \sigma_{A,B} \colon A \otimes B \to B \otimes A.
  \end{equation*}
  We say \(\C\) is a \gls{braided monoidal category} if \(\sigma\) satisfies the coherence axioms given below.
  We refer to \(\sigma\) as braiding and represent it pictorially as the crossing of wires; the coherence axioms are represented pictorially as:
  \[\begin{tikzpicture}
  \node (equals1) {\Large \(=\)};
  \node[right=70mm of equals1] (equals2) {\Large \(=\)};
  \node[left=-1mm of equals1] (LHS1) {
    \begin{tikzpicture}
      \coordinate (Xu);
      \coordinate[above=5mm of Xu] (Yu);
      \coordinate[above=5mm of Yu] (Zu);
      \coordinate[right=10mm of Xu] (Xc);
      \coordinate[above=5mm of Xc] (Yc);
      \coordinate[above=5mm of Yc] (Zc);
      \coordinate[right=10mm of Xc] (Xd);
      \coordinate[above=5mm of Xd] (Yd);
      \coordinate[above=5mm of Yd] (Zd);
      \draw[thick] (Xu) edge[in=180,out=0] (Yc);
      \draw[thick] (Yc) edge[in=180,out=0] (Zd);
      \coordinate[below right=2.5mm and 5mm of Zc] (O1);
      \filldraw [white] (O1) circle (1.5mm);
      \coordinate[below right=2.5mm and 5mm of Yu] (O2);
      \filldraw [white] (O2) circle (1.5mm);
      \draw[thick] (Yu) edge[in=180,out=0] (Xc);
      \draw[thick] (Xc) edge[in=180,out=0] (Xd);
      \draw[thick] (Zu) edge[in=180,out=0] (Zc);
      \draw[thick] (Zc) edge[in=180,out=0] (Yd);
      \node[left=-0.75mm of Xu] {\scriptsize \(A\)};
      \node[left=-0.75mm of Yu] {\scriptsize \(B\)};
      \node[left=-0.75mm of Zu] {\scriptsize \(C\)};
      \node[right=-0.75mm of Xd] {\scriptsize \(B\)};
      \node[right=-0.75mm of Yd] {\scriptsize \(C\)};
      \node[right=-0.75mm of Zd] {\scriptsize \(A\)};
      \coordinate[above left=4mm and 4mm of O1] (box1ul);
      \coordinate[below right=4mm and 4mm of O1] (box1dr);
      \draw[semithick,black!60,dashed] (box1ul) rectangle (box1dr);
      \node[above=4mm of O1, black!75] {\scriptsize \(\sigma_{A,C}\)};
      \coordinate[above left=4mm and 4mm of O2] (box2ul);
      \coordinate[below right=4mm and 4mm of O2] (box2dr);
      \draw[semithick,black!60,dashed] (box2ul) rectangle (box2dr);
      \node[below=4mm of O2, black!75] {\scriptsize \(\sigma_{A,B}\)};
    \end{tikzpicture}
  };
  \node[right=-1mm of equals1] (RHS1) {
    \begin{tikzpicture}
      \coordinate (Yuc);
      \coordinate[below=1mm of Yuc] (Yul);
      \coordinate[above=1mm of Yuc] (Yur);
      \coordinate[right=20mm of Yuc] (Yd);
      \coordinate[below=10mm of Yuc] (Xu);
      \coordinate[right=20mm of Xu] (Xdc);
      \coordinate[below=1mm of Xdc] (Xdl);
      \coordinate[above=1mm of Xdc] (Xdr);
      \draw[thick] (Xu) edge[in=180,out=0] (Yd);
      \coordinate[below right=4.5mm and 10.5mm of Yuc] (Ol);
      \filldraw [white] (Ol) circle (1.5mm);
      \coordinate[below right=5.5mm and 9.5mm of Yuc] (Or);
      \filldraw [white] (Or) circle (1.5mm);
      \draw[thick] (Yul) edge[in=180,out=0] (Xdl);
      \draw[thick] (Yur) edge[in=180,out=0] (Xdr);
      \node[left=-0.75mm of Xu] {\scriptsize \(A\)};
      \node[below left=-1.5mm and -0.75mm of Yul] {\scriptsize \(B\)};
      \node[above left=-1.5mm and -0.75mm of Yur] {\scriptsize \(C\)};
      \node[below right=-1.5mm and -0.75mm of Xdl] {\scriptsize \(B\)};
      \node[above right=-1.5mm and -0.75mm of Xdr] {\scriptsize \(C\)};
      \node[right=-0.75mm of Yd] {\scriptsize \(A\)};
      \coordinate[above left=6mm and 6mm of Ol] (boxul);
      \coordinate[below right=6mm and 6mm of Or] (boxdr);
      \draw[semithick,black!60,dashed] (boxul) rectangle (boxdr);
      \node[above=7mm of Or, black!75] {\scriptsize \(\sigma_{A, B \otimes C}\)};
    \end{tikzpicture}
  };
  \node[left=-1mm of equals2] (LHS2) {
    \begin{tikzpicture}
      \coordinate (Xu);
      \coordinate[above=5mm of Xu] (Yu);
      \coordinate[above=5mm of Yu] (Zu);
      \coordinate[right=10mm of Xu] (Xc);
      \coordinate[above=5mm of Xc] (Yc);
      \coordinate[above=5mm of Yc] (Zc);
      \coordinate[right=10mm of Xc] (Xd);
      \coordinate[above=5mm of Xd] (Yd);
      \coordinate[above=5mm of Yd] (Zd);
      \draw[thick] (Xu) edge[in=180,out=0] (Xc);
      \draw[thick] (Yu) edge[in=180,out=0] (Zc);
      \coordinate[above right=2.5mm and 5mm of Yu] (O2);
      \filldraw [white] (O2) circle (1.5mm);
      \draw[thick] (Zu) edge[in=180,out=0] (Yc);
      \draw[thick] (Xc) edge[in=180,out=0] (Yd);
      \coordinate[above right=2.5mm and 5mm of Xc] (O1);
      \filldraw [white] (O1) circle (1.5mm);
      \draw[thick] (Yc) edge[in=180,out=0] (Xd);
      \draw[thick] (Zc) edge[in=180,out=0] (Zd);
      \node[left=-0.75mm of Xu] {\scriptsize \(A\)};
      \node[left=-0.75mm of Yu] {\scriptsize \(B\)};
      \node[left=-0.75mm of Zu] {\scriptsize \(C\)};
      \node[right=-0.75mm of Xd] {\scriptsize \(C\)};
      \node[right=-0.75mm of Yd] {\scriptsize \(A\)};
      \node[right=-0.75mm of Zd] {\scriptsize \(B\)};
      \coordinate[above left=4mm and 4mm of O1] (box1ul);
      \coordinate[below right=4mm and 4mm of O1] (box1dr);
      \draw[semithick,black!60,dashed] (box1ul) rectangle (box1dr);
      \node[below=4mm of O1, black!75] {\scriptsize \(\sigma_{A,C}\)};
      \coordinate[above left=4mm and 4mm of O2] (box2ul);
      \coordinate[below right=4mm and 4mm of O2] (box2dr);
      \draw[semithick,black!60,dashed] (box2ul) rectangle (box2dr);
      \node[above=4mm of O2, black!75] {\scriptsize \(\sigma_{B,C}\)};
    \end{tikzpicture}
  };
  \node[right=-1mm of equals2] (RHS2) {
    \begin{tikzpicture}
      \coordinate (Xuc);
      \coordinate[below=1mm of Xuc] (Xul);
      \coordinate[above=1mm of Xuc] (Xur);
      \coordinate[right=20mm of Xuc] (Xd);
      \coordinate[above=10mm of Xuc] (Yu);
      \coordinate[right=20mm of Yu] (Ydc);
      \coordinate[below=1mm of Ydc] (Ydl);
      \coordinate[above=1mm of Ydc] (Ydr);
      \draw[thick] (Xul) edge[in=180,out=0] (Ydl);
      \draw[thick] (Xur) edge[in=180,out=0] (Ydr);
      \coordinate[above right=4.5mm and 10.5mm of Xuc] (Ol);
      \filldraw [white] (Ol) circle (1.5mm);
      \coordinate[above right=5.5mm and 9.5mm of Xuc] (Or);
      \filldraw [white] (Or) circle (1.5mm);
      \draw[thick] (Yu) edge[in=180,out=0] (Xd);
      \node[below left=-1.5mm and -0.75mm of Xul] {\scriptsize \(A\)};
      \node[above left=-1.5mm and -0.75mm of Xur] {\scriptsize \(B\)};
      \node[left=-0.75mm of Yu] {\scriptsize \(C\)};
      \node[right=-0.75mm of Xd] {\scriptsize \(C\)};
      \node[below right=-1.5mm and -0.75mm of Ydl] {\scriptsize \(A\)};
      \node[above right=-1.5mm and -0.75mm of Ydr] {\scriptsize \(B\)};
      \coordinate[above left=6mm and 6mm of Ol] (boxul);
      \coordinate[below right=6mm and 6mm of Or] (boxdr);
      \draw[semithick,black!60,dashed] (boxul) rectangle (boxdr);
      \node[below=7mm of Or, black!75] {\scriptsize \(\sigma_{A \otimes B, C}\)};
    \end{tikzpicture}
  };
\end{tikzpicture}
\]
  We say \(\C\) is a \gls{symmetric monoidal category} if, additionally, \(\sigma_{A,B}^{-1} = \sigma_{B,A}\) for all pairs of objects.
\end{definition}

Both \(\Set\) and \(\Vect\) are symmetric monoidal categories with either of the monoidal structures discussed in the examples above.
Along each new flavour of categories comes a refined notion of structure-preserving functor.

\begin{definition}
  Let \((\C,\otimes_\C,I_\C)\) and \((\D,\otimes_\D,I_\D)\) be monoidal categories and let \(F \colon \C \to \D\) be a functor. Let \(\mu\) be a natural transformation with components
  \begin{equation*}
    \mu_{A,B} \colon F(A) \otimes_\D F(B) \to F(A \otimes_\C B)
  \end{equation*}
  and let \(\mu_I \colon I_\D \to F(I_\C)\) be a morphism in \(\D\).
  We say \((F,\mu)\) is a \emph{lax} \gls{monoidal functor} when \(\mu\) commutes with the associators and unitors of \(\C\) and \(\D\) in an appropriate way, made precise in reference texts (see equations (1.29) and (1.30) from~\cite{HeunenVicaryBook}) and omitted here for brevity.
  If \(\mu_I\) and all \(\mu_{A,B}\) are isomorphisms we say that \((F,\mu)\) is a \emph{strong} monoidal functor.
  Similarly, a \emph{braided monoidal functor} is a monoidal functor \((F,\mu) \colon \C \to \D\) between braided monoidal categories such that if \(\sigma\) is the braiding in \(\C\) then \(F(\sigma)\) is the braiding in \(\D\).
  A \emph{symmetric monoidal functor} is a braided monoidal functor between symmetric monoidal categories.
\end{definition}

There are many other flavours of categories, each of them adding more structure by introducing new operations and axioms.
Category theorists define and classify different flavours of categories, they describe how different categories are related to each other and, most importantly, they study how properties arise from the structure that is imposed.

\subsection{Enriched category theory}
\label{sec:enriched_cat}

In this thesis, it will often be required to combine the contribution of two morphisms \(f,g \colon A \to B\) to describe another morphism of type \(A \to B\).
Abstractly, for some category \(\C\), this can be framed as requiring that each collection of morphisms \(\C(A,B)\) is endowed with an operation
  \[+ \colon \C(A,B) \times \C(A,B) \to \C(A,B)\]
making each hom-set \((\C(A,B),+)\) a commutative monoid.
When hom-sets are endowed with a mathematical structure and composition interacts nicely with it, we say \(\C\) is an enriched category, formally defined below.

\begin{definition}
  Let \((\V,\otimes,I)\) be a monoidal category. A \(\V\)-\gls{enriched category} \(\C\) is comprised of a collection of objects \(\Ob(\C)\) and, for each pair of objects \(A,B \in \Ob(\C)\) a \gls{hom-object} \(\C(A,B) \in \Ob(\V)\).
  Composition and identities in \(\C\) are determined by morphisms in \(\V\),
  \begin{align*}
    \circ_{A,B,C} &\colon \C(B,C) \otimes \C(A,B) \to \C(A,C) \\
    1_A &\colon I \to \C(A,A)
  \end{align*}
  satisfying the following commuting diagrams in \(\V\), which ensure that composition in \(\C\) is associative and that the identities are its units.
  \[\begin{tikzcd}
    {(\C(C,D) \otimes \C(B,C)) \otimes \C(A,B)} && {\C(B,D) \otimes \C(A,B)} \\
    && {\C(A,D)} \\
    {\C(C,D) \otimes (\C(B,C) \otimes \C(A,B))} && {\C(C,D) \otimes \C(A,C)}
    \arrow["\alpha"', from=1-1, to=3-1]
    \arrow["{\circ_\dsub{B,C,D}\, \otimes\, \id}", from=1-1, to=1-3]
    \arrow["{\id\, \otimes\, \circ_\dsub{A,B,C}}", from=3-1, to=3-3]
    \arrow["{\circ_\dsub{A,B,D}}", from=1-3, to=2-3]
    \arrow["{\circ_\dsub{A,C,D}}"', from=3-3, to=2-3]
  \end{tikzcd}\]
  \[\begin{tikzcd}
    {\C(A,B) \otimes \C(A,A)} & {\C(A,B)} & {\C(B,B) \otimes \C(A,B)} \\
    {\C(A,B) \otimes I} && {I \otimes \C(A,B)}
    \arrow["{\circ_\dsub{A,A,B}}", from=1-1, to=1-2]
    \arrow["{\circ_\dsub{A,B,B}}"', from=1-3, to=1-2]
    \arrow["{\id\, \otimes\, 1_\dsub{A}}", from=2-1, to=1-1]
    \arrow["\rho"', from=2-1, to=1-2]
    \arrow["\lambda", from=2-3, to=1-2]
    \arrow["{1_\dsub{B}\, \otimes\, \id}"', from=2-3, to=1-3]
  \end{tikzcd}\]
\end{definition}

An immediate example of an \(\Ab\)-enriched category is \(\Vect\); the proof is sketched in Example~\ref{ex:vect_ab-enriched} below.
The monoidal structure in \(\Ab\) must be chosen with care: even though the cartesian product of groups provides a valid monoidal structure, it is \emph{not} the appropriate one to be used for the enrichment.\footnote{Given two abelian groups \(A\) and \(B\), their cartesian product \(A \times B\) has its group operation defined as \((a,b) + (a',b') = (a+a',b+b')\). If we used this monoidal structure for enrichment, we would have that \[h \circ (f+g) = \circ(h,f+g) = \circ((h,f) + (0,g)) = h \circ f + 0\] which is not the correct way composition and addition of linear maps interact.}
Instead, the monoidal category to be used is \((\Ab,\otimes,\Zset)\) where \(\otimes\) is the tensor product of abelian groups and \(\Zset\) is the group of integers with addition.
The explicit definition of this tensor product can be found in~\cite{AbTensor}; for the purposes of this thesis, it suffices to say that \(A \otimes B\) is special in that the set of linear maps in \(\Ab(A \otimes B, C)\) is in one-to-one correspondence with the set of functions \(f \colon A \times B \to C\) satisfying
\begin{equation} \label{eq:Ab_bihom}
  \begin{aligned}
    f(a,b+b') = f(a,b) + f(a,b'), &\quad\quad f(a,0) = 0,\\
    f(a+a',b) = f(a,b) + f(a',b), &\quad\quad f(0,b) = 0.
  \end{aligned}
\end{equation}

\begin{example} \label{ex:vect_ab-enriched}
  \(\Vect\) is an \(\Ab\)-enriched category where for any two linear maps \(f,g \colon U \to V\) their addition is defined pointwise, \ie{} for all \(u \in U\):
  \begin{equation*}
    (f + g)(u) = f(u) + f(v).
  \end{equation*}
  Indeed, this operation is associative and commutative, there is a linear map sending every vector to the zero vector --- thus acting as the addition's neutral element --- and every linear map \(f\) has an additive inverse \(u \mapsto -f(u)\), making the set of all linear maps \(U \to V\) an abelian group.
  Importantly, asking that composition in \(\Vect\) is a morphism in \(\Ab\) along with~\eqref{eq:Ab_bihom} implies that composition distributes over addition, as expected:
  \begin{align*}
    h \circ (f + g) &= h \circ f + h \circ g, \\
    (f + g) \circ h &= f \circ h + g \circ h.
  \end{align*}
\end{example}

Chapter~\ref{chap:Sigma} will define multiple \(\SCat{*}\) categories capturing the notion of infinitary sum; an appropriate notion of tensor product of objects in these categories will be given in Section~\ref{sec:SCat_tensor}.
Then, Chapter~\ref{chap:trace} will define \(\SCat{*}\)-enriched categories and use their notion of infinitary sum of morphisms to formalise iterative loops.

\subsection{Adjunctions}

Example~\ref{ex:Mon_to_Set} discussed the existence of a functor \(\Mon \to \Set\) that `forgets' the monoid structure, sending monoids and homomorphisms to their underlying sets and functions.
A similar forgetful functor \(\CMon \to \Set\) exists as well; we may entertain the idea of a functor on the opposite direction \(\Set \to \CMon\) as described in the following example.

\begin{example} \label{ex:Set_to_CMon}
  Let \(A\) be an arbitrary set and let \(F(A)\) be the collection of all finite multisets of elements in \(A\).
  There is a functor \(F \colon \Set \to \CMon\) that maps each set \(A\) to the commutative monoid on the set \(F(A)\) with disjoint union as its monoid operation: disjoint union is associative and commutative, with \(\varnothing \in M(A)\) acting as its neutral element.
  Each function \(f \colon A \to B\) is lifted to a monoid homomorphism \(F(f) \colon F(A) \to F(B)\) that maps each multiset \(\{a,b, \ldots\} \in F(A)\) to \(\{f(a),f(b), \ldots\} \in F(B)\).
  It is trivial to check that \(F(g \circ f) = F(g) \circ F(f)\) and \(F(\id_A) = \id_{F(A)}\) so \(F\) is indeed a functor.
\end{example}

This functor \(F \colon \Set \to \CMon\) satisfies a powerful property: it is left adjoint to the canonical forgetful functor \(G \colon \CMon \to \Set\).
The concept of categorical adjunction is defined below in terms of the initial object of a comma category.

\begin{definition} \label{def:comma}
  Let \(\A\), \(\B\) and \(\C\) be arbitrary categories and let \(F \colon \A \to \C\) and \(G \colon \B \to \C\) be arbitrary functors.
  The \gls{comma category} \(\comma{F}{G}\) is defined as follows:
  \begin{itemize}
    \item objects are triples \((A,h,B)\) where \(A\) and \(B\) are objects in \(\A\) and \(\B\) respectively and \(h \colon F(A) \to G(B)\) is a morphism in \(\C\);
    \item morphisms \((A,h,B) \to (A',h',B')\) are pairs \((f,g)\) where \(f \colon A \to A'\) is a morphism in \(\A\) and \(g \colon B \to B'\) is a morphism in \(\B\) such that the diagram
    \[\begin{tikzcd}
      {F(A)} & {F(A')} \\
      {G(B)} & {G(B')}
      \arrow["{F(f)}", from=1-1, to=1-2]
      \arrow["{G(g)}"', from=2-1, to=2-2]
      \arrow["h"', from=1-1, to=2-1]
      \arrow["{h'}", from=1-2, to=2-2]
    \end{tikzcd}\]
  \end{itemize}
  commutes.
\end{definition} 

Let \(\cat{1}\) be the category with a single object and a single morphism; for any arbitrary object \(A \in \C\) we may define a functor \(A \colon \cat{1} \to \C\) that maps the unique object in \(\cat{1}\) to \(A\) and maps its unique morphism to \(\id_A\).
Let \(G \colon \D \to \C\) be a functor; the comma category \(\comma{A}{G}\) is used often throughout the thesis.
To reduce clutter, objects of \(\comma{A}{G}\) are denoted by pairs \((h,B)\) --- since the domain of \(h\) is guaranteed to be \(A\) --- and morphisms \((\id_A,g)\) are simply denoted \(g\).

\begin{definition} \label{def:adjunction}
  Let \(\C\) and \(\D\) be categories, let \(1_\C \colon \C \to \C\) be the identity functor and let \(F \colon \C \to \D\) and \(G \colon \D \to \C\) be arbitrary functors.
  We say that \(F\) is \gls{left adjoint} to \(G\), denoted \(F \dashv G\), if there is a natural transformation \(1_\C \xto{\eta} GF\) such that, for every \(A \in \C\), the object \((\eta_A,F(A))\) of the comma category \(\comma{A}{G}\) is initial --- \ie{} for each object \((h,B) \in \comma{A}{G}\) there is a unique morphism \((\eta_A,F(A)) \to (h,B)\).
  The triple \((F,G,\eta)\) is known as an \gls{adjunction}, with \(\eta\) being its \emph{unit}; equivalently, we say that \(G\) is \emph{right adjoint} to \(F\).
\end{definition}

Unwrapping the definition we find that a unique morphism \((\eta_A,F(A)) \to (h,B)\) exists if and only if there is a unique morphism \(\bar{h} \colon F(A) \to B\) in \(\D\) that makes the diagram
\[\begin{tikzcd}
  A && {G(B)} \\
  && {GF(A)}
  \arrow["h", from=1-1, to=1-3]
  \arrow["{\eta_A}"', from=1-1, to=2-3]
  \arrow["{G(\bar{h})}"', dashed, from=2-3, to=1-3]
\end{tikzcd}\]
commute.

\begin{example} \label{ex:CMon_Set_adj}
  Let \(F \colon \Set \to \CMon\) be the functor from Example~\ref{ex:Set_to_CMon} and let \(G \colon \CMon \to \Set\) be the canonical forgetful functor; there is an adjunction \(F \dashv G\).
  For every set \(A\), the function \(\eta_A \colon A \to GF(A)\) maps each \(a \in A\) to \(\{a\} \in GF(A)\); it is straightforward to check that \(1_\Set \xto{\eta} GF\) is a natural transformation.
  Let \((B,+)\) be an arbitrary commutative monoid; for any function \(f \colon A \to B\) the requirement that the diagram
  \[\begin{tikzcd}
    A && {G(B)} \\
    && {GF(A)}
    \arrow["f", from=1-1, to=1-3]
    \arrow["{\eta_A}"', from=1-1, to=2-3]
    \arrow["{G(\bar{f})}"', dashed, from=2-3, to=1-3]
  \end{tikzcd}\]
  commutes implies that \(\bar{f}\) must map each \(\eta_A(a) = \{a\}\) to \(f(a)\).
  Moreover, the requirement that \(\bar{f}\) is a monoid homomorphism implies that
  \begin{equation*}
    \bar{f}(\{a,a'\}) = \bar{f}(\{a\} \uplus \{a'\}) = \bar{f}(\{a\}) + \bar{f}(\{a'\}) = f(a) + f(a').
  \end{equation*}
  This may be extended to any element in \(F(A)\), \ie{} any finite multiset of elements in \(A\).
  Such a monoid homomorphism \(\bar{f}\) is well-defined and it is the unique one satisfying the commuting diagram above so, indeed, \((F,G,\eta)\) is an adjunction.
\end{example}

\begin{example} \label{ex:Ab_CMon_adj}
  The canonical embedding functor \(G \colon \Ab \into \CMon\) from Example~\ref{ex:Ab_to_CMon} has a left adjoint \(F \colon \CMon \to \Ab\) explicitly defined below.
  \begin{itemize}
    \item Let \((A,+)\) be a commutative monoid and define an equivalence relation \(\sim\) such that for all pairs \((a_+,a_-),(b_+,b_-) \in A \times A\):
    \begin{equation*}
      (a_+,a_-) \sim (b_+,b_-) \iff \exists z \in A \st a_+ + b_- + z = b_+ + a_- + z.
    \end{equation*}
    \item \(F(A)\) is the abelian group \(((A \times A)/{\sim},+')\) where
    \begin{equation*}
      [(a_+,a_-)] +' [(b_+,b_-)] = [(a_+ + b_+,a_- + b_-)].
    \end{equation*}
    It can be shown that the result of \(+'\) is independent of the choice of representatives and, hence, it is well-defined.
    Then, \(F(A)\) is an abelian group whose neutral element is \([(0,0)]\), while the inverse of any \([(a_+,a_-)]\) is \([(a_-,a_+)]\).
    \item Let \(f \colon A \to B\) be a monoid homomorphism, then \(F(f)\) is the monoid homomorphism that maps each \([(a_+,a_-)] \in F(A)\) to \([(f(a_+),f(a_-))] \in F(B)\).
  \end{itemize}
  For each \(A \in \CMon\) the morphism \(\eta_A \colon A \to GF(A)\) maps each \(a \in A\) to \([(a,0)] \in F(A)\).
  Let \((B,+)\) be a group; for any monoid homomorphism \(f \colon A \to G(B)\), the corresponding group homomorphism \(\bar{f} \colon F(A) \to B\) is
  \begin{equation}
    \bar{f}[(a_+,a_-)] = g(a_+) - g(a_-).
  \end{equation}
  It is straightforward to check that \(\eta\) is a natural transformation and that \(\bar{f}\) is the unique group homomorphism such that \(f = G(\bar{f}) \circ \eta_A\), implying that \(F \dashv G\).
\end{example}

The functor \(FG \colon \Ab \to \Ab\) obtained by `forgetting' the existence of inverses --- via \(G \colon \Ab \into \CMon\) --- then recovering them using \(F \colon \CMon \to \Ab\) yields the original abelian group up to isomorphism.
This is the case whenever a full and faithful functor has a left adjoint.
In particular, the embedding functor of every full subcategory is full and faithful; thus, full subcategories whose embedding has a left adjoint are of particular interest.

\begin{definition}
  Let \(\C\) be a category and let \(\D\) be a subcategory of \(\C\) with \(G \colon \D \into \C\) being its corresponding embedding.
  We say \(\D\) is a \gls{reflective subcategory} of \(\C\) if \(G\) is full and it has a left adjoint.  
\end{definition}

Thus, \(\Ab\) is a reflective subcategory of \(\CMon\).
Multiple examples of reflective subcategories will be discussed in Chapter~\ref{chap:Sigma}.

In Examples~\ref{ex:CMon_Set_adj} and~\ref{ex:Ab_CMon_adj} the existence of a left adjoint has been proven by defining it explicitly.
However, it is sometimes convenient to establish the existence of an adjunction via non-constructive means since, when working at a higher level of abstraction, cumbersome details may be avoided.
A powerful tool to this end is the general adjoint functor theorem, which is introduced below.
The statement of this result (and its proof) make use of the concept of categorical limit and complete categories; these were introduced in Definition~\ref{def:cat_limit} and are covered in depth in most books on basic category theory (for instance, see Chapter 5 from~\cite{Leinster}).

\begin{definition} \label{def:weakly_initial_set}
  Let \(\C\) be a category. A \gls{weakly initial set} is a set \(S\) of objects in \(\C\) satisfying that for every object \(B \in \C\) there is at least one morphism \(A \to B\) such that \(A \in S\).\footnote{It is essential that it is an actual set, \ie{} a \emph{small} collection.}
\end{definition}

\begin{lemma} \label{lem:weakly_initial_set}
  Let \(\C\) be a complete locally small category with a weakly initial set. Then \(\C\) has an initial object.
\end{lemma}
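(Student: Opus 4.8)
The plan is to realise the initial object as a limit built in two stages: a product over the weakly initial set, which yields a single \emph{weakly initial object}, followed by a joint equalizer of all its endomorphisms, which forces maps out of it to be unique. Both stages are instances of the completeness hypothesis applied to \emph{small} diagrams, and local smallness is exactly what keeps the second diagram small.

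First I would collapse the weakly initial set into one object. Writing $S = \{A_i\}_{i \in I}$ for the weakly initial set, completeness gives the product $W = \prod_{i \in I} A_i$; this is a legitimate set-indexed limit precisely because $S$ is a genuine set, which is where the smallness hypothesis on $S$ is used. For any object $B$, weak initiality supplies some index $i$ and a morphism $A_i \to B$, and composing with the projection $\pi_i \colon W \to A_i$ produces a morphism $W \to B$. Thus $W$ is weakly initial as a single object: every hom-set $\C(W,B)$ is nonempty.

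Next I would cut $W$ down to an object admitting \emph{at most} one map into each target. Because $\C$ is locally small, $\C(W,W)$ is a set, so completeness lets me form the joint equalizer $e \colon I \to W$ of this whole family of parallel endomorphisms. Since $\id_W$ belongs to the family, this morphism satisfies $f \circ e = e$ for every endomorphism $f \colon W \to W$, and $e$ is a monomorphism because every equalizer is monic. Existence of a morphism $I \to B$ for each object $B$ is then immediate: compose $e$ with a morphism $W \to B$ furnished by the previous step.

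The main obstacle, and the genuinely clever step, is uniqueness. Given two parallel morphisms $u, v \colon I \to B$, I would form their equalizer $h \colon V \to I$ and invoke weak initiality of $W$ to pick some $s \colon W \to V$. The composite $\phi = e \circ h \circ s$ is then an endomorphism of $W$, so the defining property of $I$ gives $e \circ (h \circ s \circ e) = \phi \circ e = e = e \circ \id_I$; left-cancelling the monomorphism $e$ yields $h \circ (s \circ e) = \id_I$. Hence $h$ is a split epimorphism, and being also a monomorphism (equalizers are monic) it is an isomorphism, in particular an epimorphism. From $u \circ h = v \circ h$ and $h$ epi we conclude $u = v$, so together with the existence step $I$ is initial. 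I expect the packaging of ``the joint equalizer of all endomorphisms'' as a bona fide small limit --- and the bookkeeping that both the original set $S$ (smallness) and the endomorphism family (local smallness) are legitimately small --- to be the points most in need of care, while the split-epi/mono/iso juggling in the uniqueness argument carries the real content.
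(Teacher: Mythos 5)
Your proposal is correct, and it is precisely the argument behind the result the paper cites: the paper's "proof" simply defers to Lemma A.1 of Leinster's book, whose proof is exactly your two-stage construction (product over the weakly initial set to get a single weakly initial object \(W\), then the joint equalizer of all endomorphisms of \(W\), with the split-epi/mono argument giving uniqueness). The only cosmetic point is that monicity of the \emph{joint} equalizer should be justified by the same cancellation argument as for binary equalizers rather than quoted as "every equalizer is monic", but that argument goes through verbatim, so there is no gap.
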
 \begin{proof}
  See Lemma A.1 from the book on basic category theory by Leinster~\cite{Leinster}.
\end{proof}

The previous lemma is the main technical result used in the proof of the general adjoint functor theorem. Moreover, the lemma will be relevant on its own in Section~\ref{sec:SCat_tensor}.

\begin{theorem} \label{thm:GAFT}
  Let \(G \colon \D \to \C\) be a functor. Assume that \(\D\) is complete and locally small and that for each object \(A \in \C\) the comma category \(\comma{A}{G}\) has a weakly initial set. Then,
  \begin{equation*}
    G \text{ has a left adjoint} \iff G \text{ preserves limits}.
  \end{equation*}
\end{theorem}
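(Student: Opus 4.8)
The plan is to prove the two implications separately, with the backward direction carrying essentially all of the work.

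For the forward direction, I would assume \(F \dashv G\) with unit \(\eta\), which by Definition~\ref{def:adjunction} gives, for each \(A\) and \(B\), a bijection \(\D(F(A),B) \cong \C(A,G(B))\) natural in both arguments, realised by transposition along \(\eta\) (sending \(\bar h \colon F(A) \to B\) to \(G(\bar h) \circ \eta_A\)). Given a diagram \(D \colon J \to \D\) with limit \(L\) and projections \(p_j \colon L \to D_j\), I would show that \(G(L)\) with the cone \(\{G(p_j)\}\) is a limit of \(G \circ D\) in \(\C\). For any cone \(\{c_j \colon X \to G(D_j)\}\) over \(G \circ D\), transposing each \(c_j\) yields \(\hat c_j \colon F(X) \to D_j\); naturality of the bijection converts the cone condition on the \(c_j\) into the cone condition on the \(\hat c_j\), so the universal property of \(L\) produces a unique \(\hat m \colon F(X) \to L\), whose transpose \(m \colon X \to G(L)\) is the required mediating morphism, with uniqueness transferred back through the bijection. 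This is the standard ``right adjoints preserve limits'' argument.

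For the backward direction, assume \(G\) preserves limits. By Definition~\ref{def:adjunction} it suffices to produce, for each \(A \in \C\), an initial object of \(\comma{A}{G}\), and then to verify these assemble into a functor \(F\) with natural unit \(\eta\). The idea is to invoke Lemma~\ref{lem:weakly_initial_set}, so I must check that each \(\comma{A}{G}\) is complete, locally small, and has a weakly initial set. The weakly initial set is exactly the hypothesis; local smallness is immediate, since a comma morphism \((h,B) \to (h',B')\) is a morphism \(g \in \D(B,B')\) subject to a commuting condition and \(\D\) is locally small.

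The heart of the argument, and the step I expect to be the main obstacle, is showing each \(\comma{A}{G}\) is complete, for this is where the hypothesis on \(G\) is used. Given a diagram in \(\comma{A}{G}\) selecting objects \((h_j,B_j)\) and connecting morphisms \(g_u\), I would take the limit \(L\) of the underlying diagram \(j \mapsto B_j\) in \(\D\), with projections \(p_j \colon L \to B_j\). Since \(G\) preserves limits, \(G(L)\) with \(\{G(p_j)\}\) is a limit of \(j \mapsto G(B_j)\); the comma condition \(G(g_u) \circ h_j = h_{j'}\) says precisely that \(\{h_j \colon A \to G(B_j)\}\) is a cone over this diagram, so the universal property gives a unique \(\bar h \colon A \to G(L)\) with \(G(p_j) \circ \bar h = h_j\). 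I would then verify that \((\bar h, L)\) with the \(p_j\) is a limiting cone in \(\comma{A}{G}\): a competing cone with apex \((k,C)\) forgets to a cone in \(\D\), inducing a unique \(m \colon C \to L\), and computing \(G(p_j) \circ (G(m)\circ k) = G(q_j)\circ k = h_j = G(p_j)\circ \bar h\) forces \(G(m)\circ k = \bar h\) by the uniqueness clause for \(G(L)\), so \(m\) respects the comma structure. Lemma~\ref{lem:weakly_initial_set} then delivers an initial object \((\eta_A, F(A))\). Finally, to assemble the adjunction, for each \(f \colon A \to A'\) I would view \(\eta_{A'} \circ f\) as an object of \(\comma{A}{G}\) and let \(F(f)\) be the unique comma morphism out of \((\eta_A,F(A))\); its defining equation \(G(F(f)) \circ \eta_A = \eta_{A'} \circ f\) is exactly the naturality square for \(\eta\), and functoriality of \(F\) follows from uniqueness of morphisms out of an initial object, exhibiting \(F \dashv G\) with unit \(\eta\).
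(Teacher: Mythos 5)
Your proof is correct and takes essentially the same approach as the paper's: the forward direction is the standard ``right adjoints preserve limits'' argument, and the converse establishes that each comma category \(\comma{A}{G}\) is complete and locally small, applies Lemma~\ref{lem:weakly_initial_set} to obtain an initial object \((\eta_A, F(A))\), and assembles these into the adjunction. The only difference is that you write out in full the three steps the paper delegates to citations of Leinster (Theorem 6.3.1, Lemma A.2, and Corollary 2.3.7).
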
 \begin{proof}[Proof. (Sketch)]
  A detailed proof can be found in the appendix of the book on basic category theory by Leinster~\cite{Leinster}.
  One direction is immediate since every right adjoint functor preserves limits (see Theorem 6.3.1 from~\cite{Leinster}).
  To prove the other direction, we assume that \(G\) preserves limits; then, \(\D\) being complete implies that the comma category \(\comma{A}{G}\) is complete (see Lemma A.2 from~\cite{Leinster}) and,
  since \(\D\) is locally small, it follows that \(\comma{A}{G}\) is locally small.
  Then, for all \(A \in \C\) the existence of a weakly initial set in \(\comma{A}{G}\) implies that \(\comma{A}{G}\) has an initial object, as established by Lemma~\ref{lem:weakly_initial_set}.
  According to Corollary 2.3.7 from~\cite{Leinster} if for all \(A \in \C\) the comma category \(\comma{A}{G}\) has an initial object then a functor \(F \colon \C \to \D\) and a natural transformation \(1_\C \xto{\eta} GF\) exist such that \((F,G,\eta)\) is an adjunction.
\end{proof}

In general, it is preferable to provide constructive proofs when attainable since they tend to be more illustrative.
Unfortunately, for some of the left adjoint functors whose existence is established in Chapter~\ref{chap:Sigma} an explicit definition could not be achieved and their existence is instead proven via the general adjoint functor theorem.

\section{Categories in quantum computer science (\emph{Preamble})}
\label{sec:QM_cats}

As discussed in Section~\ref{sec:QC}, the state of a quantum computer is described by a vector in a Hilbert space and operators are described by linear maps.
Therefore, the following categories are of great importance in quantum computing.

\begin{definition}
  Let \GLS{\(\Hilb\)}{Hilb} be the category whose objects are complex Hilbert spaces and whose morphisms are bounded linear maps.
  Let \GLS{\(\Contraction\)}{Contraction} be the subcategory of \(\Hilb\) obtained by restricting morphisms to (weak) contractions; a linear map \(f \colon A \to B\) between normed vector spaces is a (weak) \gls{contraction} if it satisfies \(\norm{f(a)}_B \leq \norm{a}_A\) for all \(a \in A\).\footnote{A \emph{strict} contraction \(f\) satisfies \(\norm{f(a)}_B < \norm{a}_A\) for all nonzero \(a \in A\). In this thesis, the term `contraction' refers to weak contractions unless stated otherwise.}
  Let \GLS{\(\Isometry\)}{Isometry} be the subcategory of \(\Contraction\) obtained by restricting morphisms to isometries; a linear map \(f \colon A \to B\) between normed vector spaces is an isometry if for all \(a \in A\) it satisfies \(\norm{f(a)}_B = \norm{a}_A\).
  Let \GLS{\(\Unitary\)}{Unitary} be the subcategory of \(\Isometry\) obtained by restricting morphisms to unitaries; a linear map is unitary if it is a surjective isometry.
\end{definition}

Evidently, there is a chain of faithful functors:
\begin{equation*}
  \Unitary \into \Isometry \into \Contraction \into \Hilb \to \Vect.
\end{equation*}
Considering that real-world quantum computers have finite memory, quantum computer science tends to deal with categories whose objects are finite-dimensional vector spaces.

\begin{definition}
  Let \(\FdHilb\) be the full subcategory of \(\Hilb\) obtained by restricting objects to finite-dimensional Hilbert spaces.
  Similarly, let \(\FdContraction\), \(\FdIsometry\) and \(\FdUnitary\) be the corresponding full subcategories of \(\Contraction\), \(\Isometry\) and \(\Unitary\), respectively.
\end{definition}

All of these categories can be given a monoidal structure in terms of the (orthogonal) direct sum of Hilbert spaces.

\begin{definition}
  For any two Hilbert spaces \(A,B \in \Hilb\), let \(A \oplus B\) be their direct sum, made into a Hilbert space by the inner product \(\braket{(a,b)}{(a',b')} = \braket{a}{a'} + \braket{b}{b'}\).
\end{definition}

\begin{proposition}
  Let \(\C\) be either \(\Hilb\), \(\Contraction\), \(\Isometry\), \(\Unitary\) or the finite-dimensional version of any of them.
  Then, \((\C,\oplus,\{0\})\) is a symmetric monoidal category.
\end{proposition}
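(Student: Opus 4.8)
The plan is to leverage the fact, recorded earlier in this section, that $(\Vect,\oplus,\{0\})$ is a symmetric monoidal category, together with the chain of faithful functors $\Unitary \into \Isometry \into \Contraction \into \Hilb \to \Vect$ (and the faithful full-subcategory inclusions in the finite-dimensional case). Writing $U \colon \C \to \Vect$ for the resulting faithful functor, the strategy is to transport the entire symmetric monoidal structure of $\Vect$ along $U$, checking only that (i) $\oplus$ genuinely restricts to a functor $\C \times \C \to \C$, and (ii) the structure isomorphisms $\alpha,\lambda,\rho,\sigma$ supplied by $\Vect$ already live inside the smallest category $\Unitary$, hence inside every $\C$ named in the statement. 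Everything else --- naturality of $\alpha,\lambda,\rho,\sigma$, the pentagon and triangle identities, the hexagon axioms and the symmetry condition $\sigma_{B,A}^{-1} = \sigma_{A,B}$ --- consists of equations between morphisms that hold in $\Vect$ and that $U$ reflects: since $U$ preserves composition and is faithful, any such equation whose two sides are genuine $\C$-morphisms must already hold in $\C$.

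For step (i), the objects of $\Contraction$, $\Isometry$ and $\Unitary$ are all Hilbert spaces, since these categories differ from $\Hilb$ only in their morphisms, and the direct sum of two Hilbert spaces is again a Hilbert space --- finite-dimensional when both summands are, which covers the finite-dimensional full subcategories. The real content of step (i) is therefore morphism-level closure. Given $f \colon A \to B$ and $g \colon C \to D$ in $\C$, a direct computation with the inner product on the direct sum gives, for all $(a,c) \in A \oplus C$,
\begin{equation*}
  \norm{(f \oplus g)(a,c)}^2 = \norm{f(a)}^2 + \norm{g(c)}^2.
\end{equation*}
From this identity one reads off at once that $f \oplus g$ is bounded, with $\opnorm{f \oplus g} \leq \max(\opnorm{f},\opnorm{g})$; that it is a contraction whenever $f$ and $g$ are contractions; and that it is an isometry whenever $f$ and $g$ are isometries. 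In the unitary case one additionally notes that $f \oplus g$ is surjective whenever $f$ and $g$ are, so it is unitary. Preservation of identities and composition by $\oplus$ is inherited verbatim from $\Vect$, since composition and identities in all these categories coincide with those in $\Vect$.

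For step (ii), the associator $\alpha_{A,B,C} \colon ((a,b),c) \mapsto (a,(b,c))$, the unitors $\lambda_A \colon (0,a) \mapsto a$ and $\rho_A \colon (a,0) \mapsto a$, and the braiding $\sigma_{A,B} \colon (a,b) \mapsto (b,a)$ are precisely the underlying linear maps of the $\oplus$-monoidal structure on $\Vect$; each is a linear bijection that manifestly preserves the direct-sum inner product, hence is a surjective isometry, i.e.\ unitary. Consequently every structure isomorphism is a morphism of $\Unitary$ and therefore of each category in the statement, and the unit $\{0\}$ is a (zero-dimensional) object common to all of them.

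I expect the norm computations in step (i) and the ``bijective isometry, hence unitary'' checks in step (ii) to be entirely routine; the only point demanding a little care is the bookkeeping that lets coherence be imported for free. Concretely, one must confirm that $U$ carries the $\oplus$-monoidal data of each $\C$ to \emph{exactly} the $\oplus$-monoidal data of $\Vect$ --- same underlying spaces, same underlying linear maps, same unit $\{0\}$ --- so that the coherence diagrams of $\C$ are literally the $U$-images of the coherence diagrams of $\Vect$; faithfulness of $U$ then closes the argument. This transport step is where an oversight could hide, so it is the part I would write out most carefully.
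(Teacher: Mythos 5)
Your proof is correct, and its computational core --- the identity \(\norm{(f \oplus g)(a,c)}^2 = \norm{f(a)}^2 + \norm{g(c)}^2\) and the resulting closure of contractions, isometries and unitaries (via surjectivity) under \(\oplus\) --- is exactly the computation the paper's proof uses. The difference lies in how coherence is handled. The paper declares that \((\Hilb,\oplus,\{0\})\) is symmetric monoidal (``straightforward to check'') and then concludes that \(\Contraction\), \(\Isometry\) and \(\Unitary\) are symmetric monoidal \emph{subcategories}, leaving implicit the fact that the associator, unitors and braiding actually restrict to morphisms of those subcategories. You instead import the monoidal structure from \((\Vect,\oplus,\{0\})\) along the faithful forgetful functor, argue that faithful functors reflect equations (so naturality, pentagon, triangle, hexagon and symmetry come for free), and explicitly check that the structure isomorphisms are unitary, hence lie in every category named in the statement, down to \(\Unitary\) and \(\FdUnitary\). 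That unitarity check is precisely what licenses the paper's phrase ``symmetric monoidal subcategory'', so your version is slightly more self-contained; your reflection argument also replaces a per-category coherence verification with a single one in \(\Vect\). Both routes do the same essential work and deliver the same result.
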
 \begin{proof}
  It is straightforward to check that \(\oplus\) is a valid monoidal product and that the zero-dimensional Hilbert space \(\{0\}\) is its monoidal unit so that \((\Hilb,\oplus,\{0\})\) is a symmetric monoidal category.
  Let \(f \colon A \to B\) and \(g \colon C \to D\) be morphisms in \(\Contraction\); then, for all \(a \in A\) and \(c \in C\):
  \begin{equation*}
    \norm{(f \oplus g)(a,c)}^2_{B \oplus D} = \norm{f(a)}^2_B + \norm{g(c)}^2_D \leq \norm{a}^2_A + \norm{c}^2_C = \norm{(a,c)}^2_{A \oplus C}
  \end{equation*}
  implying that \(f \oplus g\) is a morphism in \(\Contraction\).
  Thus, \((\Contraction,\oplus,\{0\})\) is a symmetric monoidal subcategory of \((\Hilb,\oplus,\{0\})\).
  Similarly, if \(f\) and \(g\) are isometries then \(f \oplus g\) is an isometry, and if both \(f\) and \(g\) are surjective (so that they are unitary) then \(f \oplus g\) is also surjective and, hence, unitary.
  Consequently, both \((\Isometry,\oplus,\{0\})\) and \((\Unitary,\oplus,\{0\})\) are symmetric monoidal subcategories of \((\Contraction,\oplus,\{0\})\).
  Evidently, each of the subcategories where objects are finite-dimensional are symmetric monoidal as well.
\end{proof}

For all objects \(A,B \in \Hilb\) the object \(A \oplus B \in \Hilb\) is a biproduct.
In contrast, \(A \oplus B\) is not a biproduct in \(\Contraction\) (nor in its subcategories), since the diagonal morphism \(a \mapsto (a,a)\) is not a contraction; but \(A \oplus B\) is still a coproduct both in \(\Contraction\) and \(\Isometry\).
In \(\Unitary\), \(A \oplus B\) is not even a coproduct, since injections \(A \to A \oplus B\) are not surjective and, hence, not unitary.
Even though \(A \oplus B\) is not a product in \(\Contraction\), the projection \(A \oplus B \to A\) from \(\Hilb\) is still a morphism in \(\Contraction\); Haghverdi~\cite{Haghverdi} and other authors have noticed that this is sufficient for morphisms in \(\Contraction\) to have a unique matrix decomposition (more on this in Chapter~\ref{chap:trace}).
This hints at the importance of \(\Contraction\) in quantum computing: even though strict contractions are not physical operations, every isometry and unitary is a morphism in \(\Contraction\) and the entries in their block matrix decomposition are (possibly strict) contractions.

Apart from the monoidal structure induced by direct sum, \(\Hilb\) can be given a monoidal structure using tensor product of Hilbert spaces.

\begin{proposition}
  Let \(A,B \in \Hilb\) and let \(A \boxtimes B\) be the tensor product of their underlying vector spaces made into an inner product space by defining
  \begin{equation*}
    \braket{a \boxtimes b}{a' \boxtimes b'} = \braket{a}{a'} \cdot \braket{b}{b'}.
  \end{equation*}
  The inner product in \(A \boxtimes B\) induces a norm, making it a metric space. The metric completion of \(A \boxtimes B\) (see Theorem 43.7 from~\cite{Munkres}) is a Hilbert space denoted \(A \otimes B\) whose inner product is the canonical extension of that of \(A \boxtimes B\).\footnote{If \(A\) and \(B\) are finite-dimensional, \(A \otimes B = A \boxtimes B\).}
\end{proposition}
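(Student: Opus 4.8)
The plan is to verify the three assertions in turn: that the prescribed formula extends to a genuine inner product on the algebraic tensor product \(A \boxtimes B\), that this inner product induces a metric, and that the metric completion of an inner product space is a Hilbert space. The second assertion is the standard fact that any inner product yields a norm \(\norm{\xi} = \sqrt{\braket{\xi}{\xi}}\) (with the triangle inequality following from Cauchy--Schwarz) and hence a metric \(d(\xi,\eta) = \norm{\xi - \eta}\); it requires no work beyond positivity, which is subsumed in the first assertion. I would therefore concentrate effort on the first and third assertions.

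First I would confirm that \(\braket{-}{-}\) is a well-defined sesquilinear form. The map \(A \times B \times A \times B \to \Cset\) sending \((a,b,a',b') \mapsto \braket{a}{a'}\braket{b}{b'}\) is antilinear in its first two arguments and linear in its last two; by the universal property of the algebraic tensor product (treating the antilinear slots via conjugate spaces) it factors uniquely through a sesquilinear form on \((A \boxtimes B) \times (A \boxtimes B)\) agreeing with the stated formula on simple tensors. Conjugate symmetry, \(\braket{\eta}{\xi} = \overline{\braket{\xi}{\eta}}\), is immediate on simple tensors from the corresponding property of the inner products on \(A\) and \(B\), and extends to all of \(A \boxtimes B\) by sesquilinearity.

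The crux is positive-definiteness. Given a nonzero \(\xi \in A \boxtimes B\), I would first rewrite it in a reduced form: starting from any representation \(\xi = \sum_k a_k \boxtimes c_k\), the finitely many vectors \(a_k\) span a finite-dimensional subspace of \(A\), on which I apply Gram--Schmidt to obtain an orthonormal family \(e_1, \ldots, e_n\); expanding each \(a_k\) in this family and collecting terms yields \(\xi = \sum_{i=1}^n e_i \boxtimes b_i\) for suitable \(b_i \in B\). Since the \(e_i\) are orthonormal, \(\braket{e_i}{e_j} = \delta_{ij}\), so
\begin{equation*}
  \braket{\xi}{\xi} = \sum_{i,j} \braket{e_i}{e_j}\braket{b_i}{b_j} = \sum_{i=1}^n \norm{b_i}^2 \geq 0,
\end{equation*}
with equality precisely when every \(b_i = 0\). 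As the \(e_i\) are linearly independent, the latter happens only when \(\xi = 0\), giving positive-definiteness and completing the proof that \(A \boxtimes B\) is an inner product space.

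For the final assertion I would invoke the metric completion (Theorem 43.7 of~\cite{Munkres}) to obtain a complete metric space \(A \otimes B\) containing \(A \boxtimes B\) as a dense subspace. It remains to transport the algebraic and inner-product structure across. Addition and scalar multiplication are uniformly continuous and hence extend to the completion, making it a complex vector space; and the inner product, being jointly uniformly continuous on bounded sets by Cauchy--Schwarz, extends uniquely to a sesquilinear form on \(A \otimes B\). Conjugate symmetry and positive-semidefiniteness survive the limit, and because the extended form induces exactly the (definite) completion metric via \(\norm{\xi}^2 = \braket{\xi}{\xi}\), it is in fact positive-definite. Thus \(A \otimes B\) is a complete inner product space, \ie{} a Hilbert space, whose inner product restricts to the stated formula on simple tensors. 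The main obstacle is the positive-definiteness argument on \(A \boxtimes B\); the continuous extension to the completion, while demanding attention to uniform continuity, is a routine density argument.
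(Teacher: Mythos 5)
Your proof is correct. There is, however, nothing in the paper to compare it against: this proposition appears in a preamble section as standard background and is stated \emph{without proof} --- the citation to Munkres (Theorem 43.7) covers only the existence of the metric completion, and the remaining claims (that the stated formula yields a genuine inner product on \(A \boxtimes B\), and that the completion carries a Hilbert-space structure extending it) are left as well-known facts. Your write-up supplies exactly the standard textbook argument for those claims: factoring the sesquilinear form through the universal property of the algebraic tensor product, proving positive-definiteness by passing to a representation \(\xi = \sum_i e_i \boxtimes b_i\) with \(\{e_i\}\) orthonormal so that \(\braket{\xi}{\xi} = \sum_i \norm{b_i}^2\), and transporting the algebraic and inner-product structure to the completion by density and Cauchy--Schwarz. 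Two cosmetic remarks. First, once \(\braket{\xi}{\xi} = \sum_i \norm{b_i}^2 = 0\) forces every \(b_i = 0\), the conclusion \(\xi = 0\) follows immediately from \(\xi = \sum_i e_i \boxtimes b_i\); the appeal to linear independence of the \(e_i\) is superfluous (it would be needed for the converse implication, which your argument does not use). Second, scalar multiplication \(\Cset \times V \to V\) is not jointly uniformly continuous; what the density argument actually uses is that the map \(v \mapsto \lambda v\) is uniformly continuous for each fixed \(\lambda\), together with uniform continuity of addition --- a phrasing worth tightening, though the argument goes through unchanged.
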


The monoidal category \((\Hilb,\otimes,\Cset)\) captures the notion of combining the state space \(A,B \in \Hilb\) of two distinct quantum memories into a single state space \(A \otimes B \in \Hilb\) (see Section~\ref{sec:QC}).
In contrast, \((\Hilb,\oplus,\{0\})\) is the monoidal structure that captures the notion of parallel composition arising in quantum walks (see Section~\ref{sec:QW}) and the orthogonal decomposition of a state space with respect to a predicate (see Section~\ref{sec:QC}).
Consequently, \((\Hilb,\oplus,\{0\})\) is the appropriate monoidal category to use when discussing control flow of quantum programs.

\subsection{Beyond coherent quantum operations}
\label{sec:CPTR}

Measurements are not coherent operations --- projections are generally not invertible --- and the outcome they yield is probabilistic.
A general mathematical framework where measurements and coherent operations are described on equal footing needs to capture not only `pure' quantum states \(\ket{\psi} \in H\) but, more generally, probability distributions of quantum states.

Let \(H\) be a finite-dimensional complex Hilbert space and let \(B(H)\) be the vector space of bounded linear maps \(H \to H\); in the general framework, a state is represented by a \gls{density operator} \(\rho \in B(H)\): a positive semi-definite map that satisfies \(\tr(\rho) = 1\).\footnote{Recall that, if \(H\) is a \emph{complex} Hilbert space then positive semi-definiteness implies self-adjointness.}
Let \(\ket{\psi} \in H\) be a unit vector and let \(\rho \in B(H)\) be a positive density operator; then it is guaranteed that:
\begin{equation*}
  0 \leq \bra{\psi} \rho \ket{\psi} \leq 1.
\end{equation*} 
Moreover, for any orthonormal basis of \(H\), say \(\{\psi_i\}_{i \in I}\), the trace of \(\rho\) satisfies:
\begin{equation*}
  \tr(\rho) = \sum_{i \in I} \bra{\psi_i} \rho \ket{\psi_i} = 1.
\end{equation*}
The value \(\bra{\psi_i} \rho \ket{\psi_i} \in [0,1]\) for each \(i \in I\) describes the probability of obtaining the pure quantum state \(\ket{\psi_i}\) as the result of measuring \(\rho\) in the chosen basis.
In this framework:
\begin{itemize}
  \item all pure quantum states \(\ket{\varphi} \in H\) may be represented as density operators \(\ketbra{\varphi}{\varphi} \in B(H)\) and, more generally,
  \item for any finite set of pure states \(\{\ket{\varphi_i} \in H\}_{i \in I}\) and any probability distribution \(p \colon I \to [0,1]\) assigned to them,
  \begin{equation*}
    \rho = \sum_{i \in I} p(i) \cdot \ketbra{\varphi_i}{\varphi_i}
  \end{equation*}
  is a density operator.
  We refer to these as \emph{mixed} quantum states.
\end{itemize} 

Given two Hilbert spaces \(H\) and \(K\), any density operator in the vector space \(B(H) \oplus B(K)\) is a pair \((\rho,\rho')\) where \(\rho \in B(H)\) and \(\rho' \in B(H)\) are positive semi-definite operators satisfying \(\tr(\rho) + \tr(\rho') = 1\).
Thus, density operators \((\rho,\rho') \in B(H) \oplus B(K)\) describe mixed states that are known to be either in \(B(H)\) with probability \(\tr(\rho)\) or in \(B(K)\) with probability \(\tr(\rho')\).
Notice that \(B(H) \oplus B(K)\) is strictly contained in \(B(H \oplus K)\); in particular, pure states \(\ketbra{\phi}{\phi} \in B(H \oplus K)\) are not in \(B(H) \oplus B(K)\).
Thus, in certain situations --- such as after applying a measurement that projects a state to either \(H\) or \(K\) --- we will use \(B(H) \oplus B(K)\) to indicate that we are certain that the state is a probabilistic mixture of quantum states.

\begin{remark} \label{rmk:Cstar_algebra}
  The state spaces we are interested in are direct sums \(\oplus_{i \in I} B(H_i)\) where \(\{H_i\}_{i \in I}\) is a finite collection of finite-dimensional Hilbert spaces.
  Any such space can be made into a \Cstar-algebra by defining its multiplication to be composition of maps and defining its involution to yield the adjoint of a map.
  In fact, any finite-dimensional \Cstar-algebra is isomorphic to the one defined from some \(\oplus_{i \in I} B(H_i)\) as above.
  Thus, when managing non-coherent operations, the convention in quantum computing is to use \Cstar-algebras as the state spaces.
  However, the extra structure that a \Cstar-algebra carries is not relevant to this thesis and, hence, no knowledge of \Cstar-algebras is required to follow it.
  For the purposes of this thesis, it suffices to think of a finite-dimensional \Cstar-algebra as a direct sum \(\oplus_{i \in I} B(H_i)\).
\end{remark}

Given a pair of finite-dimensional \Cstar-algebras \(A\) and \(B\), the linear maps \(A \to B\) that are physically meaningful are those that map states to states, \ie{} density operators to density operators.
These are the so-called \gls{completely positive trace-preserving (CPTP)} maps:
\begin{itemize}
  \item a positive map \(A \to B\) sends positive semi-definite operators in \(A\) to positive semi-definite operators in \(B\);
  \item a positive map \(f \colon A \to B\) is completely positive if for every finite-dimensional \Cstar-algebra \(C\) it is satisfied that \(f \otimes \id_C \colon A \otimes C \to B \otimes C\) is a positive map;
  \item a trace-preserving map \(f \colon A \to B\) satisfies \(\tr(f(\rho)) = \tr(\rho)\) for every \(\rho \in A\).
\end{itemize}
A common example of a positive map that is not completely positive is the map \(B(H) \to B(H)\) sending each operator to its matrix transpose.

\begin{definition}
  The category \(\CPTP\) has finite-dimensional \Cstar-algebras as objects and CPTP maps as morphisms.
  The triple \((\CPTP,\oplus,Z)\) is a symmetric monoidal category, where \(\oplus\) is the direct sum of \Cstar-algebras and CPTP maps and \(Z\) is the zero-dimensional \Cstar-algebra (\ie{} \(Z = B(\{0\}) = \{0\}\)).
\end{definition}

Similarly to how the category of (weak) contractions will be useful to discuss the entries of `block matrices' describing morphisms in \(\Isometry\) and \(\Unitary\), it is useful to define a category of completely positive (weak) trace-reducing maps so that \(\CPTP\) embeds in it.

\begin{definition} \label{def:CPTR}
  The category \(\CPTR\) has finite-dimensional \Cstar-algebras as objects and completely positive (weak) trace-reducing maps as morphisms; a map \(f \colon A \to B\) is (weak) trace-reducing if it satisfies that \(\tr(f(\rho)) \leq \tr(\rho)\) for all \(\rho \in A\).\footnote{A \emph{strict} trace-reducing map \(f\) satisfies \(\tr(f(\rho)) < \tr(\rho)\) for all \(\rho \in A\). In this thesis, the term `trace-reducing' refers to weak trace-reducing maps unless stated otherwise.}
  The triple \((\CPTR,\oplus,\{0\})\) is a symmetric monoidal category.
\end{definition}
\chapter{Categories for infinitary addition}
\label{chap:Sigma}

Let \((X,+)\) be an arbitrary commutative monoid. By applying \(+\) multiple times, we can add up any finite number of elements and, thanks to associativity and commutativity, the result will be unambiguous. However, standard monoids fall short if we are interested in adding up an infinite collection of elements. For instance, not every infinite sequence of real numbers is summable: to assign a result to a series we must verify that its sequence of partial sums converges. 
The first section of this chapter introduces \(\Sigma\)-monoids, a generalisation of monoids whose monoid operation is partially defined and aggregates (possibly infinite) collections of elements.
The second section provides a brief introduction to topological monoids: these are monoids endowed with extra structure that lets us define a general notion of convergence and, consequently, formalise infinite sums.
It will be shown that certain class of topological monoids --- Hausdorff commutative monoids --- provides an abundant number of examples of \(\Sigma\)-monoids.
 
The notion of \emph{strong} \(\Sigma\)-monoid presented in this chapter was originally proposed by Haghverdi~\cite{Haghverdi}; other flavours of \(\Sigma\)-monoids are proposed in this chapter, and their relation to each other is described in terms of adjunctions between their categories.
Moreover, it is shown that the tensor product of \(\Sigma\)-monoids is well-defined, thus allowing us to define categories enriched over \(\Sigma\)-monoids; this will be essential in Chapter~\ref{chap:trace}.

\begin{definition} \label{def:families}
  A \gls{family} of elements in \(X\) is an indexed multiset \(\{x_i \in X\}_{i \in I}\), where \(I\) is an arbitrary countable set.\footnote{Being an indexed multiset, there may be distinct indices \(i,j \in I\) with \(x_i = x_j\).}
  Two families \(\{x_i\}_{i \in I}\) and \(\{x_j\}_{j \in J}\) are the same if there is a bijection \(\phi \colon I \to J\) such that for all \(i \in I\) it satisfies \(x_i = x_{\phi(i)}\).
  The collection of all families of elements in \(X\) is denoted \GLS{\(X^*\)}{X_star}.
\end{definition}

\begin{proposition} \label{prop:family_set}
  For any set \(X\), the collection \(X^*\) is a (small) set.
\end{proposition}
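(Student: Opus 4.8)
The plan is to reduce the (a priori proper-class-sized) collection of index sets to a single set of canonical index sets, and then bound $X^*$ by a power set. The crux is that a family is only specified up to the equivalence relation of Definition~\ref{def:families}, so we are free to replace any index set by another one in bijection with it; this flexibility is exactly what lets us sidestep the fact that there are class-many countable sets in the universe.

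First I would recall that a set $I$ is countable precisely when it admits a bijection onto some subset $J \subseteq \Nset$ (a finite initial segment if $I$ is finite, all of $\Nset$ if $I$ is countably infinite). Hence, given any family $\{x_i\}_{i \in I}$ with $I$ countable, I fix such a bijection $\phi \colon I \to J$; then $\{x_i\}_{i \in I}$ and $\{x_{\phi^{-1}(j)}\}_{j \in J}$ are the same family in the sense of Definition~\ref{def:families}. Consequently every element of $X^*$ has a representative whose index set is a subset of $\Nset$.

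Next I would show that the collection $R$ of all families indexed by subsets of $\Nset$ is a genuine set. A family indexed by $J \subseteq \Nset$ is nothing but a function $J \to X$, which I encode by its graph $\{(j,x_j) \mid j \in J\}$, a subset of $\Nset \times X$. Distinct functions yield distinct graphs (the domain is recovered as the projection onto $\Nset$), so this encoding is an injection $R \into \powerset{\Nset \times X}$; since the right-hand side is a set by the power set axiom, $R$ is a set.

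Finally, the assignment sending each family in $R$ to its equivalence class is a surjection $R \to X^*$, by the first step. Because the surjective image of a set is again a set (by replacement), $X^*$ is a set, as claimed. The only genuine obstacle is the opening reduction: stated naively, $X^*$ quantifies over all countable index sets in the universe, a proper class; once every family is put in canonical form indexed inside $\Nset$, the power-set bound and the surjection are entirely routine.
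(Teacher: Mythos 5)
Your proof is correct, and although it starts from the same encoding idea as the paper — representing a family $\{x_i\}_{i \in I}$ by its graph, a set of pairs — it diverges at the decisive step, and the divergence matters. The paper maps each family to the countable set $\{(x_i,i) \mid i \in I\} \subseteq X \times I$ and then asserts that ``the collection of all countable sets is small''; that assertion is false in ZFC (for every set $y$, the singleton $\{y\}$ is countable, so the countable sets form a proper class), so the paper's injection lands in a class rather than a set and the argument as written has a gap. Your proposal repairs exactly this: you first invoke the identification of families up to bijection of index sets (which Definition~\ref{def:families} builds into the meaning of $X^*$) to put every family in canonical form with index set $J \subseteq \Nset$, so that all graphs live inside the genuine set $\powerset{\Nset \times X}$, and then realise $X^*$ as a quotient (surjective image) of that set. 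This canonicalisation is what converts the class-sized quantification over index sets into a power-set bound, and it is the step the paper is missing. One small point of hygiene: since equivalence classes of families are themselves proper classes, it is cleanest to phrase the last step as ``$X^*$ is in bijection with $R/{\sim}$, where $\sim$ is the (set-sized) restriction of the equivalence relation to $R$'', rather than as a class-valued surjection; your argument delivers exactly this, so nothing essential changes.
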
 \begin{proof}
  We can represent a multiset \(\{x_i \in X\}_{i \in I}\) as a set \(\{(x_i,i) \mid i \in I\} \subseteq X \times I\). Since the indexing set \(I\) of a family is countable by definition, there is an injection from \(X^*\) to the collection of all countable sets. The collection of all countable sets is small, hence, \(X^*\) is small.
\end{proof}

\begin{notation} \label{not:families}
  The shorthand notation \(\{x_i\}_I\) identifies indices with the lowercase of the letter denoting the indexing set.
  When the indexing set is not specified, all elements of the family are explicitly given within curly brackets, for instance, \(\{x\}\) is the singleton family and \(\{a,b,c\}\) is a finite family containing three elements.
  When explicit reference to the elements of the family is not necessary, a family may be denoted with a bold font variable \(\fml{x} \in X^*\).
  A family \(\{x_j\}_J\) is a \gls{subfamily} of another \(\{x_i\}_I\) if there is an injection \(\phi \colon J \to I\) such that \(x_j = x_{\phi(j)}\) for each \(j \in J\).
  Hence, there is a partial order \(\subseteq\) on \(X^*\) so that \(\fml{x'} \subseteq \fml{x}\) iff \(\fml{x'}\) is a subfamily of \(\fml{x}\).
  Union, disjoint union and intersection of families are defined with respect to the corresponding operations on their indexed sets.\footnote{Whenever an index \(i\) appears in both \(\fml{x}\) and \(\fml{y}\), their corresponding elements \(x_i\) and \(y_i\) must be equal for union and intersection to be well-defined.}
  Let \(f \colon X \to Y\) be a function and \(\fml{x} = \{x_i\}_I\) a family in \(X^*\); the shorthand \(f\fml{x}\) refers to the family \(\{f(x_i)\}_I\) in \(Y^*\).
\end{notation}

\begin{notation} \label{not:keq}
  For partial functions \(f \colon A \pto B\), \(g \colon A' \pto B\), the Kleene equality \(f(a) \keq g(a')\) indicates that \(f(a)\) is defined if and only if \(g(a')\) is defined and, when they both are, their results agree.
  In particular, \(f(a) \keq b\) implies that \(f(a)\) is defined and is equal to \(b\).
  The equals sign \(f(a) = b\) is only used in the presence of partial functions when the context has already established that \(f(a)\) is defined.
\end{notation}

\section{\(\Sigma\)-monoids}
\label{sec:SCat}

The notion of \emph{strong} \(\Sigma\)-monoid that will be presented in this section was originally proposed by Haghverdi~\cite{Haghverdi}, although similar mathematical structures predate it: for instance, the partially additive monoids from Manes and Arbib~\cite{ManesArbib} or the \(\Sigma\)-groups of Higgs~\cite{Higgs}.
Haghverdi~\cite{Haghverdi} and Hoshino~\cite{RTUDC} have discussed how certain categories enriched in strong \(\Sigma\)-monoids may be used to capture iteration; this line of work is the main focus of Chapter~\ref{chap:trace}.
However, strong \(\Sigma\)-monoids are too restrictive for our application: Proposition~\ref{prop:no_inverses} shows that these forbid the existence of additive inverses, which are essential in quantum computing since they capture destructive interference.
In this section, a weaker version of \(\Sigma\)-monoids is proposed which admits additive inverses.

A \(\Sigma\)-monoid is a set \(X\) together with a partial function \(\Sigma \colon X^* \pto X\) describing an infinitary operation which satisfies certain axioms to ensure commutativity, associativity and existence of a neutral element.

\begin{definition} \label{def:wSm}
  Let \(X\) be a set and let \(\Sigma \colon X^* \pto X\) be a partial function. We say \(\fml{x} \in X^*\) is a \gls{summable family} if \(\Sigma \fml{x}\) is defined. The pair \((X,\Sigma)\) is a \GLS{weak $\Sigma$-monoid}{wSm} if the following axioms are satisfied.
  \begin{itemize}
    \item \emph{Singleton.} For all \(x \in X\) the family \(\{x\}\) is summable with \(\Sigma \{x\} \keq x\).
    \item \emph{Neutral element.} (\emph{i}) There is an element \(0 \in X\) such that \(\Sigma \varnothing \keq 0\). \\
    (\emph{ii}) Let \(\fml{x} \in X^*\) be a family and let \(\fml{x}_\emptyset\) be its subfamily of elements other than \(0\); if \(\fml{x}\) is summable then \(\fml{x}_\emptyset\) is summable as well.
    \item \emph{Bracketing.} Let \(\{\fml{x}_j\}_J\) be an indexed set of \emph{finite} summable families \(\fml{x}_j \in X^*\) and let \(\fml{x} = \uplus_J \fml{x}_j\); then:
    \begin{equation*} \label{eq:bracketing}
      \Sigma \fml{x} \keq x \implies \Sigma \{\Sigma \fml{x}_j\}_J \keq x.
    \end{equation*}
    \item \emph{Flattening.} Let \(\{\fml{x}_j\}_J\) be an indexed set of summable families \(\fml{x}_j \in X^*\) whose indexing set \(J\) is \emph{finite} and let \(\fml{x} = \uplus_J \fml{x}_j\); then:
    \begin{equation*} \label{eq:fin_flattening}
      \Sigma \{\Sigma \fml{x}_j\}_J \keq x \implies \Sigma \fml{x} \keq x.
    \end{equation*}
  \end{itemize}
\end{definition}

In this definition, the axioms of commutative monoids have been lifted to the infinitary case. 
Bracketing and flattening together capture associativity and commutativity.
Notice that these are not exactly dual to each other: bracketing assumes that each family \(\fml{x}_j\) is finite, whereas flattening admits infinite families \(\fml{x}_j\) but assumes there is only a finite number of them.
As will be established in Proposition~\ref{prop:no_inverses}, the flattening axiom's requirement that \(J\) is finite is necessary for weak \(\Sigma\)-monoids to admit additive inverses.
On the other hand, the bracketing axiom's requirement that each family \(\fml{x}_j\) is finite is necessary so that every Hausdorff commutative monoid is a weak \(\Sigma\)-monoid whose summable families are precisely those that converge according to the topology (see Proposition~\ref{prop:wSm_Hausdorff}).
The following proposition establishes that \(0 \in X\) acts as the neutral element of the \(\Sigma\)-monoid.

\begin{proposition} \label{prop:neutral_element}
  Let \((X,\Sigma)\) be a weak \(\Sigma\)-monoid with \(\Sigma \varnothing = 0\). Let \(\fml{x} \in X^*\) be a family such that \(\Sigma \fml{x} \keq x\) for some \(x \in X\).
  Let \(\fml{0} \in X^*\) be a (possibly infinite) family whose elements are all \(0\); then \(\Sigma (\fml{x} \uplus \fml{0}) \keq x\).
  Conversely, let \(\fml{x}_\emptyset \subseteq \fml{x}\) be the subfamily of elements other than \(0\); then \(\Sigma \fml{x}_\emptyset \keq x\).
\end{proposition}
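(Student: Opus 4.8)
The plan is to prove the first ("adding zeros") statement directly from the \emph{bracketing} axiom, and then obtain the second ("removing zeros") statement cheaply from it, using neutral-element axiom (\emph{ii}) together with the fact that $\Sigma$ is a partial \emph{function} and hence single-valued.

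For the first statement, write $\fml{x} = \{x_i\}_I$ and $\fml{0} = \{0\}_K$ with $I, K$ countable, and set $J = I \sqcup K$ (still countable). Define a $J$-indexed set of \emph{finite} summable families by $\fml{x}_i = \{x_i\}$ for $i \in I$ and $\fml{x}_k = \varnothing$ for $k \in K$. The empty pieces contribute nothing, so $\uplus_J \fml{x}_j = \fml{x}$ (up to the bijection of index sets of Definition~\ref{def:families}), and by hypothesis $\Sigma \fml{x} \keq x$. Applying bracketing --- each piece is finite and summable, with $\Sigma\{x_i\} \keq x_i$ by the \emph{singleton} axiom and $\Sigma \varnothing \keq 0$ --- yields $\Sigma \{\Sigma \fml{x}_j\}_J \keq x$. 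But the family $\{\Sigma \fml{x}_j\}_J$ is precisely $\{x_i\}_I \uplus \{0\}_K = \fml{x} \uplus \fml{0}$, so $\Sigma(\fml{x} \uplus \fml{0}) \keq x$, as required.

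For the converse, decompose $\fml{x} = \fml{x}_\emptyset \uplus \fml{0}$, where $\fml{0}$ is the subfamily of zero entries of $\fml{x}$. Neutral-element axiom (\emph{ii}) guarantees that $\fml{x}_\emptyset$ is summable; write $\Sigma \fml{x}_\emptyset \keq y$. Now apply the first statement (already established) to the summable family $\fml{x}_\emptyset$ and the family of zeros $\fml{0}$, obtaining $\Sigma(\fml{x}_\emptyset \uplus \fml{0}) \keq y$. Since $\fml{x}_\emptyset \uplus \fml{0} = \fml{x}$ and $\Sigma$ is single-valued, comparing with $\Sigma \fml{x} \keq x$ forces $y = x$, i.e. $\Sigma \fml{x}_\emptyset \keq x$.

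The only real care needed is the bookkeeping of index sets: I must check that $J = I \sqcup K$ stays countable so that $\{\fml{x}_j\}_J$ and $\uplus_J \fml{x}_j$ are legitimate families, and that $\uplus_J \fml{x}_j$ genuinely equals $\fml{x}$. The conceptual crux, and the reason the argument is organised this way, is the asymmetry in the axioms: bracketing permits an \emph{infinite} index $J$ provided each piece is finite, which is exactly what lets us absorb a possibly-infinite family of zeros, whereas a flattening-based attempt would first require the infinite zero family to be summable --- a fact that is \emph{not} immediate from the axioms and is in fact recovered only as the special case $\fml{x} = \varnothing$ of the statement being proved.
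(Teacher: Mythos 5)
Your proof is correct, but it is organised differently from the paper's. For the first half, the paper proceeds in three steps: it first shows \(\Sigma \fml{0} \keq 0\) (by bracketing the empty family into infinitely many empty pieces), then shows \(\Sigma \{x,0\} \keq x\) (by bracketing \(\{x\} = \{x\} \uplus \varnothing\)), and finally invokes \emph{flattening} with the finite index set \(J = \{1,2\}\) on the two summable blocks \(\fml{x}\) and \(\fml{0}\) to conclude \(\Sigma(\fml{x} \uplus \fml{0}) \keq x\). You instead make a single application of bracketing, in the opposite direction: you partition \(\fml{x}\) itself into singletons padded with empty families indexed by \(I \sqcup K\), and observe that the resulting family of partial sums \(\{\Sigma \fml{x}_j\}_{J}\) \emph{is} \(\fml{x} \uplus \fml{0}\). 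This is a genuinely different decomposition --- the paper goes from coarse sums down to the fine union via flattening, you go from the fine family up to the coarse family of sums via bracketing --- and it buys you a shorter argument that avoids flattening entirely in the first half; the paper's route, in exchange, produces the intermediate facts \(\Sigma \fml{0} \keq 0\) and \(\Sigma\{x,0\} \keq x\), which it then reuses verbatim in the second half. For the converse, the two arguments are structurally the same (decompose \(\fml{x} = \fml{x}_\emptyset \uplus \fml{0}\), apply neutral-element axiom (\emph{ii}), and use single-valuedness of \(\Sigma\)), except that you cite the already-proven first half where the paper re-invokes flattening with \(\{x',0\}\). Your closing remark about the asymmetry of the axioms is also accurate and consistent with the paper: the summability of an infinite family of zeros is not free, and the paper, like you, must obtain it from bracketing (there as a stand-alone step, in your proof absorbed into the single bracketing application).
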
 \begin{proof}
  Since \(\Sigma \varnothing = 0\), the family \(\fml{0}\) may be equivalently defined as
  \begin{equation*}
    \fml{0} = \{\Sigma \varnothing, \Sigma \varnothing, \ldots\}
  \end{equation*}
  and, due to bracketing, \(\Sigma \fml{0} \keq 0\).
  Moreover, for any element \(x \in X\), the singleton axiom imposes that \(\Sigma \{x\} \keq x\) and, clearly, \(\{x\} = \{x\} \uplus \varnothing\) so according to bracketing:
  \begin{equation*}
    \Sigma \{x\} \keq x \implies \Sigma \{\Sigma \{x\}, \Sigma \varnothing\} \keq x.
  \end{equation*}
  Using the singleton axiom and the definition of the neutral element, the right hand side can be rewritten as \(\Sigma \{x,0\} \keq x\).
  Then, it follows that \(\Sigma \{\Sigma \fml{x}, \Sigma \fml{0}\} \keq x\) and the flattening axiom implies that \(\Sigma (\fml{x} \uplus \fml{0}) \keq x\), proving the first part of the claim.

  On the other hand, we may partition any summable family \(\fml{x} \in X^*\) into its subfamily \(\fml{x}_\emptyset\) and its subfamily \(\fml{0}\) so that \(\fml{x} = \fml{x}_\emptyset \uplus \fml{0}\).
  Since the neutral element axiom imposes that \(\fml{x}_\emptyset\) is summable, there is some \(x' \in X\) such that \(\Sigma \fml{x}_\emptyset \keq x'\) and, since \(\Sigma \fml{0} \keq 0\) and \(\Sigma \{x',0\} \keq x'\) have been established above, we may apply flattening to obtain \(\Sigma \fml{x} \keq x'\).
  But \(\Sigma \fml{x} \keq x\) by assumption, so it follows that \(x = x'\) and \(\Sigma \fml{x}_\emptyset \keq x\), as claimed.
\end{proof}

\begin{example} \label{ex:wSm_pm}
  Let \(\pm\) be the set \(\{0,+,-\}\) and, for each family \(\fml{x} \in \pm^*\), let \(n_+(\fml{x})\) (and \(n_-(\fml{x})\)) be the number of occurrences of \(+\) (respectively, of \(-\)). Define a partial function \(\Sigma \colon \pm^* \pto \pm\) as follows:
  \begin{equation}
    \Sigma \fml{x} = \begin{cases}
      0 &\ifc n_+(\fml{x}) < \infty \text{ and } n_+(\fml{x}) = n_-(\fml{x}) \\
      + &\ifc n_+(\fml{x}) < \infty \text{ and } n_+(\fml{x}) = n_-(\fml{x})+1 \\
      - &\ifc n_+(\fml{x}) < \infty \text{ and } n_+(\fml{x}) = n_-(\fml{x})-1 \\
      \undefined &\otherwise.
    \end{cases}
  \end{equation}
  Then, \((\pm,\Sigma)\) is a weak \(\Sigma\)-monoid.
\end{example} \begin{proof}
  Both neutral element and singleton axioms are trivial to check; flattening is also straightforward: if each \(\fml{x}_j\) is summable, then each \(n_+(\fml{x}_j)\) is finite and
  \begin{equation*}
    n_+(\uplus_J \fml{x}_j) = \sum_J n_+(\fml{x}_j)
  \end{equation*}
  holds, where the right hand side is a finite sum of natural numbers since \(J\) is finite.
  The same can be said of \(n_-\), so the difference between \(n_+\) and \(n_-\) stays the same after flattening.
  To see how bracketing is satisfied, recall that both the family \(\fml{x}\) and each \(\fml{x}_j\) are assumed to be summable.
  For all \(j \in J\), the sum \(\Sigma \fml{x}_j\) is either \(0\), \(+\) or \(-\), so we may partition \(J\) into \(J_0\), \(J_+\) and \(J_-\) depending on this result.
  Notice that \(J_+\) and \(J_-\) must be finite sets, otherwise \(\fml{x} = \uplus_J \fml{x}_j\) would have an infinite number of nonzero elements and would not be summable.
  Then, a simple counting argument determines that: if \(\Sigma \fml{x} = 0\) then \(\abs{J_+} = \abs{J_-}\), if \(\Sigma \fml{x} = +\) then \(\abs{J_+} = \abs{J_-} + 1\) and if \(\Sigma \fml{x} = -\) then \(\abs{J_+} = \abs{J_-} - 1\).
  Therefore, it follows that the family \(\{\Sigma \fml{x}_j\}_J\) is summable, and the result agrees with that of \(\Sigma \fml{x}\), thus satisfying the bracketing axiom.
\end{proof}

\begin{example} \label{ex:wSm_ominus}
  Let \(X\) be a set and let \(P = \mathcal{P}(X)\) be its power set. For any \(x \in X\), define \(n_x \colon P^* \to \Nset \cup \{\infty\}\) as follows:
  \begin{equation}
    n_x(\{A_i\}_I) = \abs{\{i \in I \mid x \in A_i\}}.
  \end{equation}
  Define a partial function \(\Sigma \colon P^* \pto P\) as follows:
  \begin{equation}
    \Sigma \{A_i\}_I = \begin{cases}
      \{x \in X \mid n_x(\{A_i\}_I) \text{ odd}\} &\ifc \forall x \in X,\, n_x(\{A_i\}_I) < \infty \\
      \undefined &\otherwise.
    \end{cases}
  \end{equation}
  Then, \((P,\Sigma)\) is a weak \(\Sigma\)-monoid.
\end{example} \begin{proof}
  Notice that on finite families \(\Sigma\) corresponds to symmetric difference of sets.
  Singleton and neutral element axioms are trivial and flattening follows from associativity and commutativity of the symmetric difference of sets. Bracketing can be verified by realising that \(n_x(\{A_i\}_I)\) is odd if and only if, for every partition \(I = \uplus_J I_j\), there is an odd number of odd \(n_x(\{A_i\}_{I_j})\).
\end{proof}

Particularly relevant to this thesis is the fact that every Hausdorff commutative monoid is a weak \(\Sigma\)-monoid. This is established in Section~\ref{sec:topology} along with a brief introduction to general topology and Hausdorff commutative monoids. 
Many examples arise from this result; one of the simplest among them is given below.

\begin{example} \label{ex:wSm_Rset}
  Let \((\Rset,+)\) be the standard Hausdorff commutative monoid of real numbers. According to Proposition~\ref{prop:wSm_Hausdorff}, we may define a partial function \(\Sigma \colon \Rset^* \pto \Rset\) so that \((\Rset,\Sigma)\) is a weak \(\Sigma\)-monoid.
  Indeed, such \(\Sigma\) corresponds to the usual notion of absolute convergence of series.
\end{example}

In the same spirit that monoid homomorphisms are functions preserving the monoid structure, \(\Sigma\)-homomorphisms are functions that preserve summable families.

\begin{definition}
  Let \((X,\Sigma)\) and \((Y,\Sigma')\) be weak \(\Sigma\)-monoids. A function \(f \colon X \to Y\) is a \(\Sigma\)-homomorphism if for every family \(\fml{x} \in X^*\) and every \(x \in X\):
  \begin{equation*}
    \Sigma \fml{x} \keq x \implies \Sigma' f\fml{x} \keq f(x).
  \end{equation*}
\end{definition}

\begin{example}
  Let \((\pm,\Sigma)\) and \((\Rset,\Sigma)\) be the weak \(\Sigma\)-monoids from Examples~\ref{ex:wSm_pm} and~\ref{ex:wSm_Rset} respectively.
  The function \(f \colon \pm \to \Rset\) given by:
  \begin{equation*}
    f(0) = 0 \quad\quad\quad f(+) = 1 \quad\quad\quad f(-) = -1
  \end{equation*}
  is a \(\Sigma\)-homomorphism.
\end{example}

Notice that in the weak \(\Sigma\)-monoid \((\pm,\Sigma)\) from Example~\ref{ex:wSm_pm} the family \(\{+,+,-\}\) is summable whereas its subfamily \(\{+,+\}\) is not.
Similarly, we may define a weak \(\Sigma\)-monoid \(([0,1],\Sigma)\) whose \(\Sigma\) function is the same as in \((\Rset,\Sigma)\) when the result is in the interval \([0,1]\) and it is undefined otherwise.
Once again, \(\{0.75,0.5,-0.25\}\) is summable whereas its subfamily \(\{0.75,0.5\}\) is not.
This idea of defining a weak \(\Sigma\)-monoid from an existing one by restricting the underlying set is generalised by the following lemma.

\begin{lemma} \label{lem:wSm_restriction}
  Let \((Y,\Sigma)\) be a weak \(\Sigma\)-monoid and let \(X\) be a set. Let \(f \colon X \to Y\) be an injective function such that \(0 \in \im(f)\). Define a partial function \(\Sigma^{f} \colon X^* \pto X\) as follows:
  \begin{equation*}
    \Sigma^f \fml{x} = \begin{cases}
      x &\ifc \exists x \in X \st \Sigma f\fml{x} \keq f(x) \\
      \undefined &\otherwise.
    \end{cases}
  \end{equation*}
  Then, \((X,\Sigma^f)\) is a weak \(\Sigma\)-monoid and \(f \colon X \to Y\) is a \(\Sigma\)-homomorphism.
\end{lemma}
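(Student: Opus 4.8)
The plan is to transport every family computation along $f$ and reduce each axiom for $(X,\Sigma^f)$ to the corresponding axiom for $(Y,\Sigma)$. First I would record some preliminary facts. Since $f$ is injective, whenever an $x \in X$ satisfies $\Sigma f\fml{x} \keq f(x)$ it is the unique such element, so $\Sigma^f$ is a well-defined partial function. Writing $0_Y \keq \Sigma \varnothing$, the hypothesis $0_Y \in \im(f)$ yields a unique $0_X \in X$ with $f(0_X) = 0_Y$; this $0_X$ is the candidate neutral element. Injectivity also gives the key compatibility $f(x_i) = 0_Y \iff x_i = 0_X$, from which $f\fml{x}_\emptyset = (f\fml{x})_\emptyset$ (applying $f$ elementwise commutes with discarding neutral elements), while $f(\uplus_J \fml{x}_j) = \uplus_J f\fml{x}_j$ holds because $f$ acts indexwise and so preserves both cardinality and disjoint unions of families.

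With these in hand, the singleton and neutral-element-(i) axioms are immediate: $\Sigma f\{x\} = \Sigma\{f(x)\} \keq f(x)$ and $\Sigma f\varnothing = \Sigma \varnothing \keq f(0_X)$, so $\Sigma^f\{x\} \keq x$ and $\Sigma^f \varnothing \keq 0_X$. For bracketing, suppose $\{\fml{x}_j\}_J$ are finite $\Sigma^f$-summable families with $\fml{x} = \uplus_J \fml{x}_j$ and $\Sigma^f \fml{x} \keq x$. Then each $f\fml{x}_j$ is finite and summable in $Y$ with $\Sigma f\fml{x}_j \keq f(\Sigma^f \fml{x}_j)$, and $\Sigma(\uplus_J f\fml{x}_j) \keq f(x)$; bracketing in $Y$ gives $\Sigma\{\Sigma f\fml{x}_j\}_J \keq f(x)$, and since $\{\Sigma f\fml{x}_j\}_J = f\{\Sigma^f \fml{x}_j\}_J$ this says exactly $\Sigma^f\{\Sigma^f \fml{x}_j\}_J \keq x$. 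Flattening is proved by the same translation, now reading the flattening axiom of $Y$ in the opposite direction (it applies because $J$ is finite and each $f\fml{x}_j$ is summable).

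The one step that needs more than a direct translation is neutral-element-(ii), and this is where I expect the only real subtlety. Given a $\Sigma^f$-summable $\fml{x}$ with $\Sigma f\fml{x} \keq f(x)$, I want its nonzero subfamily $\fml{x}_\emptyset$ to be $\Sigma^f$-summable; translating, I need $\Sigma f\fml{x}_\emptyset$ not merely to be defined but to \emph{land in} $\im(f)$. The neutral-element axiom in $Y$ only secures definedness of $\Sigma (f\fml{x})_\emptyset$; to pin down its value I would invoke the converse part of Proposition~\ref{prop:neutral_element} applied in $Y$, which gives $\Sigma (f\fml{x})_\emptyset \keq f(x)$. Combined with $f\fml{x}_\emptyset = (f\fml{x})_\emptyset$ this yields $\Sigma f\fml{x}_\emptyset \keq f(x) \in \im(f)$, hence $\Sigma^f \fml{x}_\emptyset \keq x$. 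Finally, the claim that $f$ is a $\Sigma$-homomorphism is immediate from the construction: by definition $\Sigma^f \fml{x} \keq x$ means precisely $\Sigma f\fml{x} \keq f(x)$.
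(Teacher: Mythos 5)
Your proof is correct and follows essentially the same route as the paper's: transport each family along $f$ and reduce the singleton, bracketing, and flattening axioms for $\Sigma^f$ to the corresponding axioms in $(Y,\Sigma)$, with the $\Sigma$-homomorphism claim being immediate from the definition. Your extra care on neutral-element-(ii) — invoking the converse part of Proposition~\ref{prop:neutral_element} to ensure $\Sigma f\fml{x}_\emptyset$ actually lands in $\im(f)$ rather than merely being defined — is a valid and worthwhile refinement of a step the paper dismisses as trivially derived.
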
 \begin{proof}
  The definition of \(\Sigma^f\) is unambiguous thanks to the requirement that \(f\) is injective.
  Neutral element and singleton axioms in \((X,\Sigma^f)\) are trivially derived from those in \((Y,\Sigma)\).
  Let \(\{\fml{x}_j\}_J\) be an indexed set where each \(\fml{x}_j \in X^*\) is a summable family in \((X,\Sigma^f)\) and let \(\fml{x} = \uplus_J \fml{x}_j\), evidently, \(f\fml{x} = \uplus_J f\fml{x}_j\) (see Notation~\ref{not:families}).
  Assume that each family \(\fml{x}_j\) is finite, then the following sequence of implications proves bracketing:
  \begin{align*}
    \Sigma^f \fml{x} \keq x &\iff \Sigma f\fml{x} \keq f(x) &&\text{(definition of \(\Sigma^f\))} \\
        &\implies \Sigma \{\Sigma f\fml{x}_j\}_J \keq f(x) &&\text{(bracketing in \(Y\))} \\
        &\iff \Sigma \{f(\Sigma^f \fml{x}_j)\}_J \keq f(x) &&\text{(\(\fml{x}_j\) summable and def. of \(\Sigma^f\))} \\
        &\iff \Sigma^f \{\Sigma^f \fml{x}_j\}_J \keq x &&\text{(definition of \(\Sigma^f\))}.
  \end{align*}
  Finally, flattening of \((X,\Sigma^f)\) can be proven following a similar argument, this time using flattening in \((Y,\Sigma)\).
  Thus, it has been shown that \((X,\Sigma^f)\) is a weak \(\Sigma\)-monoid; to check that \(f \colon X \to Y\) is a \(\Sigma\)-homomorphism realise that, by definition, \(\fml{x}\) is summable if and only if \(f\fml{x}\) is summable, so it is immediate that \(\Sigma^f \fml{x} \keq x\) implies \(\Sigma f\fml{x} \keq f(x)\).
\end{proof}

The remainder of this section will study the category of weak \(\Sigma\)-monoids and certain important full subcategories of it, including that of the strong \(\Sigma\)-monoids discussed by Haghverdi~\cite{Haghverdi}.

\begin{definition}
  Let \GLS{\(\SCat{w}\)}{SCatw} be the category whose objects are weak \(\Sigma\)-monoids and whose morphisms are \(\Sigma\)-homomorphisms.
\end{definition}

\begin{proposition} \label{prop:SCatw_complete}
  The category \(\SCat{w}\) is complete.
\end{proposition}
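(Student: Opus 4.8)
The plan is to establish completeness by verifying that $\SCat{w}$ has all small products and all equalizers, since a category with both of these has all small limits. I would construct each limit explicitly by equipping an appropriate underlying set with a $\Sigma$-structure and checking the universal property, relying on the fact that $\Sigma$-homomorphisms are exactly the functions preserving summable families.

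First I would handle products. Given a small family of weak $\Sigma$-monoids $\{(X_k,\Sigma_k)\}_{k \in K}$, I would take the underlying set of the product to be the cartesian product $\prod_k X_k$ and declare a family $\fml{x} = \{x_i\}_I$ in $(\prod_k X_k)^*$ to be summable precisely when, for every coordinate $k$, the projected family $\{\pi_k(x_i)\}_I$ is summable in $X_k$; the sum is then defined coordinatewise by $\pi_k(\Sigma \fml{x}) = \Sigma_k \{\pi_k(x_i)\}_I$. I would verify the four axioms of Definition~\ref{def:wSm} coordinatewise --- they hold in each $X_k$, and since singleton, neutral element, bracketing and flattening are all statements that can be checked one coordinate at a time, they lift immediately to the product. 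The projections $\pi_k$ are $\Sigma$-homomorphisms by construction, and given any cone of $\Sigma$-homomorphisms $\{f_k \colon Z \to X_k\}_K$ the unique mediating function $\langle f_k \rangle_K \colon Z \to \prod_k X_k$ preserves summable families because each $f_k$ does, so it is a $\Sigma$-homomorphism; this gives the universal property.

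Next I would handle equalizers. Given two $\Sigma$-homomorphisms $f,g \colon X \to Y$, I would take the equalizer object to be the subset $E = \{x \in X \mid f(x) = g(x)\}$ with the inclusion $\iota \colon E \into X$, and apply Lemma~\ref{lem:wSm_restriction} to transport the $\Sigma$-structure of $X$ onto $E$ along $\iota$ (noting $\iota$ is injective and $0 \in E$ since $f(0)=0=g(0)$, so $0 \in \im(\iota)$). By that lemma, $(E,\Sigma^\iota)$ is a weak $\Sigma$-monoid and $\iota$ is a $\Sigma$-homomorphism equalizing $f$ and $g$. For the universal property, any $\Sigma$-homomorphism $h \colon Z \to X$ with $f \circ h = g \circ h$ factors uniquely through $E$ as a function; I would check the factorisation is a $\Sigma$-homomorphism, which follows since a family is summable in $E$ exactly when its image is summable in $X$ (the content of Lemma~\ref{lem:wSm_restriction} identifying summability in $E$ with summability in $X$).

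The main obstacle I anticipate is in the product construction, specifically confirming that the \emph{bracketing} and \emph{flattening} axioms genuinely lift coordinatewise given their asymmetric finiteness hypotheses. Bracketing requires each inner family $\fml{x}_j$ to be finite and flattening requires the index set $J$ to be finite; I must ensure these finiteness conditions are preserved verbatim when passing to and from coordinates, so that the coordinatewise summability definition is consistent with applying the axioms in each $X_k$. Once this bookkeeping is checked, the rest is routine verification of universal properties.
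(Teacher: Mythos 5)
Your proposal is correct and takes essentially the same route as the paper's own proof: equalizers are obtained as the subset \(E = \{x \in X \mid f(x) = g(x)\}\) with the \(\Sigma\)-structure transported along the inclusion via Lemma~\ref{lem:wSm_restriction}, small products carry the coordinatewise \(\Sigma\)-structure on the cartesian product, and completeness follows because a category with equalizers and small products is complete. The finiteness bookkeeping you flag for bracketing and flattening is indeed routine (the paper dismisses it as straightforward), since projections leave the index sets of families untouched, so the hypotheses of each axiom hold coordinatewise exactly when they hold in the product.
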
 \begin{proof}
  First, we show that \(\SCat{w}\) has all equalizers.
  Let \(X\) and \(Y\) be weak \(\Sigma\)-monoids and let \(f,g \colon X \to Y\) be \(\Sigma\)-homomorphisms.
  Define the subset \(E = \{x \in X \mid f(x) = g(x)\}\); the inclusion \(e \colon E \to X\) is injective and \(0 \in E\) due to \(f(\Sigma \varnothing) = g(\Sigma \varnothing)\).
  Then, according to Lemma~\ref{lem:wSm_restriction}, \(E\) can be endowed with a weak \(\Sigma\)-monoid structure:
  \begin{equation} \label{eq:SCat_equalizer}
    \Sigma^e \fml{x} = \begin{cases}
      x &\ifc \exists x \in E \st \Sigma e\fml{x} \keq e(x) \\
      \undefined &\text{otherwise}
    \end{cases}
  \end{equation}
  so that \(e \colon E \to X\) is a \(\Sigma\)-homomorphism.
  Consequently, \((E,\Sigma^e)\) is a cone in \(\SCat{w}\) and, since \(E\) is an equalizer in \(\Set\), for any other cone \((A,\Sigma')\) with a \(\Sigma\)-homomorphism \(h \colon A \to X\) there is a unique function \(m \colon A \to E\) such that \(h = e \circ m\).
  Therefore, to prove that \((E,\Sigma^e)\) is an equalizer in \(\SCat{w}\) we only need to show that \(m\) is a \(\Sigma\)-homomorphism.
  For any summable family \(\fml{a} \in A^*\) we have that \(\Sigma h\fml{a} \keq h(\Sigma' \fml{a})\) due to \(h\) being a \(\Sigma\)-homomorphism and since \((A,\Sigma')\) is a cone \(fh(\Sigma' \fml{a}) = gh(\Sigma' \fml{a})\) so \(h(\Sigma' \fml{a}) \in E\) or, more precisely, \(m(\Sigma' \fml{a}) \in E\).
  It then follows from \(h = em\) that \(\Sigma em\fml{a} \keq em(\Sigma' \fml{a})\) so, by definition of \(\Sigma^e\), the family \(m\fml{a}\) is summable in \(E\) and \(m\) is a \(\Sigma\)-homomorphism.

  Next, we show that \(\SCat{w}\) has all small products.
  Let \(\{(X_i,\Sigma^i)\}_I\) be a small collection of objects in \(\SCat{w}\).
  Endow the set \(\times_I X_i\) with the following partial function:
  \begin{equation} \label{eq:SCat_small_product}
    \Sigma^\times (\fml{x}_i)_{i \in I} = \begin{cases}
      (\Sigma^i \fml{x}_i)_{i \in I} &\ifc \forall i \in I,\, \fml{x}_i \text{ is summable} \\
      \undefined &\otherwise.
    \end{cases}
  \end{equation}
  Then, \((\times_I X_i,\Sigma^\times)\) is a weak \(\Sigma\)-monoid; the proof is straightforward.
  It is immediate that projections are \(\Sigma\)-homomorphisms, so this is a cone for the discrete diagram \(\{(X_i,\Sigma^i)\}_I\).
  Moreover, \(\times_I X_i\) is the categorical product in \(\Set\) so, for any other cone \((A,\Sigma')\) with \(\Sigma\)-homomorphisms \(h_i \colon A \to X_i\), there is a unique function \(m \colon A \to \times_I X_i\) so that each \(h_i\) factors through \(m\).
  To prove that \((\times_I X_i,\Sigma^\times)\) is a categorical product in \(\SCat{w}\) it only remains to show that this unique function \(m\) is a \(\Sigma\)-homomorphism.
  If \(\fml{a}\) is summable in \(A\), then each \(h_i\fml{a}\) is summable in \(X_i\) and, by definition, \(\Sigma^\times (h_i\fml{a})_{i \in I}\) is summable in \(\times_I X_i\).
  Each \(h_i\) is equal to \(\pi_i \circ m\) where \(\pi_i\) is the corresponding projection, so it is immediate that \(\Sigma^\times m\fml{a} \keq m(\Sigma' \fml{a})\) and \(m\) is a \(\Sigma\)-homomorphism.

  A category with all equalizers and small products is complete, so \(\SCat{w}\) is complete as claimed.
\end{proof}

\subsection{Important full subcategories of \(\SCat{w}\)}
\label{sec:SCat*}

Different flavours of \(\Sigma\)-monoids are introduced in this subsection.
The differences between them are with respect to the partiality of \(\Sigma\), \ie{} new axioms are introduced that require more families to be summable.
The first to be introduced is the notion of \(\Sigma\)-monoid proposed by Haghverdi~\cite{Haghverdi}.

\begin{definition} \label{def:sSm}
  A \GLS{strong \(\Sigma\)-monoid}{sSm} is a weak \(\Sigma\)-monoid \((X,\Sigma)\) that satisfies the following extra axioms.
  \begin{itemize}
    \item \emph{Subsummability.} Let \(\{x_i\}_I\) be a summable family; for every subset \(J \subset I\) the subfamily \(\{x_j\}_J\) is also summable.
    \item \emph{Strong bracketing.} Let \(\{\fml{x}_j\}_J\) be an indexed set where each \(\fml{x}_j \in X^*\) is a summable family and let \(\fml{x} = \uplus_J \fml{x}_j\); then:
    \begin{equation*}
       \Sigma \fml{x} \keq x \implies \Sigma \{\Sigma \fml{x}_j\}_J \keq x.
    \end{equation*}
    \item \emph{Strong flattening.} Let \(\{\fml{x}_j\}_J\) be an indexed set where each \(\fml{x}_j \in X^*\) is a summable family and let \(\fml{x} = \uplus_J \fml{x}_j\); then:
    \begin{equation*}
      \Sigma \{\Sigma \fml{x}_j\}_J \keq x \implies \Sigma \fml{x} \keq x.
    \end{equation*}
  \end{itemize}
\end{definition}

Notice that the preconditions of finiteness required in the standard bracketing and flattening axioms are removed, thus making the axioms stronger.
It is straightforward to check that this definition is equivalent to that of Haghverdi~\cite{Haghverdi}.
In Haghverdi's definition, strong bracketing, strong flattening and subsummability are condensed into a single axiom, and the neutral element axiom of weak \(\Sigma\)-monoids is omitted since it follows immediately from subsummability --- notice that the empty family is a trivial subfamily.
Importantly, the following proposition establishes that strong flattening forbids the existence of inverse elements.

\begin{proposition}[Haghverdi~\cite{Haghverdi}] \label{prop:no_inverses}
Let \((X,\Sigma)\) be a strong \(\Sigma\)-monoid.
If a family \(\fml{x} \in X^*\) satisfies \(\Sigma \fml{x} \keq 0\), then every element \(a \in \fml{x}\) is the neutral element \(a = 0\).
\end{proposition}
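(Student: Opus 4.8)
The plan is to fix an arbitrary element $a$ of $\fml{x}$ and show $a = 0$. The strategy is to first collapse $\fml{x}$ down to a two-element summable family whose sum is $0$, and then exhibit an infinite family that can be bracketed in two incompatible ways; since $\Sigma$ is a partial \emph{function}, the two bracketings must agree, and this forces $a = 0$.

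First I would isolate $a$. Write $\fml{x} = \{x_i\}_I$ and fix an index $i_0$ with $x_{i_0} = a$. By \emph{subsummability}, the complementary subfamily $\fml{x}' = \{x_i\}_{i \neq i_0}$ is summable; write $\Sigma \fml{x}' \keq y$. Since $\fml{x} = \{a\} \uplus \fml{x}'$ and both pieces are summable, \emph{strong bracketing} turns $\Sigma \fml{x} \keq 0$ into
\begin{equation*}
  \Sigma \{\Sigma\{a\}, \Sigma\fml{x}'\} \keq 0,
\end{equation*}
which by the singleton axiom reads $\Sigma \{a, y\} \keq 0$. Here it is essential that I may use \emph{strong} bracketing, as the piece $\fml{x}'$ is in general infinite, so the weak bracketing axiom would not apply.

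Next I would form the infinite alternating family $\fml{z} = \{a, y, a, y, \dots\}$ indexed by $\Nset$ and bracket it two ways. Grouping $\fml{z}$ into the adjacent pairs $\{a,y\}, \{a,y\}, \dots$ produces bracket-sums $\{0, 0, \dots\}$, which sums to $0$ by Proposition~\ref{prop:neutral_element}; \emph{strong flattening} then yields $\Sigma \fml{z} \keq 0$. Grouping $\fml{z}$ instead as the singleton $\{a\}$ followed by the shifted pairs $\{y,a\}, \{y,a\}, \dots$ produces bracket-sums $\{a, 0, 0, \dots\}$, which sums to $a$ by Proposition~\ref{prop:neutral_element}; strong flattening now yields $\Sigma \fml{z} \keq a$. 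As $\Sigma \fml{z}$ is uniquely determined, I conclude $a = 0$, and since $i_0$ was arbitrary this proves the claim.

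The hard part is recognising that the two-bracketing trick requires dropping all finiteness conditions: the partition into adjacent pairs has an infinite index set $J = \Nset$, so ordinary flattening (which insists $J$ be finite) cannot be invoked and genuinely strong flattening is needed. This is exactly why the result fails for merely weak $\Sigma$-monoids; for instance, in Example~\ref{ex:wSm_pm} the family $\{+,-\}$ is summable with $\Sigma\{+,-\} \keq 0$ even though neither element is the neutral $0$.
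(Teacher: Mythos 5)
Your proof is correct, but it takes a recognisably different route from the paper's. You first reduce to a two-element sum: isolate \(a\), use subsummability to make the complement \(\fml{x}'\) summable with sum \(y\), and use strong bracketing to get \(\Sigma\{a,y\} \keq 0\); you then run the classic telescoping (Mazur-style) swindle, bracketing the alternating family \(\{a,y,a,y,\dots\}\) in two incompatible ways and invoking strong flattening twice. The paper skips the reduction entirely: it forms \(\fml{z} = \fml{x} \uplus \fml{x} \uplus \dots\), gets \(\Sigma \fml{z} \keq 0\) and \(\Sigma(\{a\} \uplus \fml{z}) \keq a\) by strong flattening, and then observes that \(\{a\} \uplus \fml{z}\) and \(\fml{z}\) are literally the \emph{same} family in the sense of Definition~\ref{def:families} --- since \(a\) already occurs countably infinitely often in \(\fml{z}\), prepending one more copy admits a bijection of index sets --- so the two sums must coincide. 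The trade-off: your argument is the more familiar swindle and makes very explicit how an additive inverse pair clashes with infinite regrouping, but it spends two extra axioms (subsummability and strong bracketing) on the preprocessing. The paper's Hilbert-hotel variant needs only strong flattening together with the singleton and neutral-element axioms, which is sharper for the purpose the paper later puts it to (Section~\ref{sec:Sigma_rel_work}): pinning the blame for the non-existence of inverses on strong flattening specifically, and thereby motivating that it is flattening, not bracketing, whose weakening recovers inverse-friendly \(\Sigma\)-monoids. Your closing remark about why the result fails for weak \(\Sigma\)-monoids (e.g.\ \(\Sigma\{+,-\} \keq 0\) in Example~\ref{ex:wSm_pm}) is accurate and consistent with both proofs.
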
 \begin{proof}
  Let \(\fml{x}\) be a family such that \(\Sigma \fml{x} \keq 0\) and choose an arbitrary element \(a \in \fml{x}\). Define the following infinite families:
  \begin{equation*}
    \begin{aligned}
      \fml{z} &= \fml{x} \uplus \fml{x} \uplus \dots \\
      \fml{a} &= \{a\} \uplus \fml{z}
    \end{aligned}
  \end{equation*}
  Due to strong flattening, neutral element and singleton axioms:
  \begin{equation*}
    \begin{aligned}
      \Sigma \fml{z} &\keq \Sigma \{\Sigma \fml{x}, \Sigma \fml{x}, \dots \} \keq \Sigma \{0,0,\dots\} \keq 0\\
      \Sigma \fml{a} &\keq \Sigma \{\Sigma \{a\}, \Sigma \fml{z}\} \keq \Sigma \{a,0\} \keq a
    \end{aligned}
  \end{equation*}
  However, both families \(\fml{z}\) and \(\fml{a}\) contain the same elements: a countably infinite number of copies of each element in \(\fml{x}\). Being the same family, it is immediate that \(\Sigma \fml{a} \keq \Sigma \fml{z}\) which implies \(a = 0\).
  This arguments holds for each element \(a \in \fml{x}\), proving the claim.
\end{proof}

Since weak \(\Sigma\)-monoids do admit inverses, it is relevant to consider the notion of \(\Sigma\)-groups.
It is straightforward to check that the axioms of \(\Sigma\)-groups defined below are equivalent to those proposed by Higgs~\cite{Higgs}. However, Higgs' \(\Sigma\)-groups are more general since Higgs' notion of families are not restricted to be countable, unlike ours (see Definition~\ref{def:families}).

\begin{definition}
  A \GLS{finitely total \(\Sigma\)-monoid}{ftSigma-monoid} is a weak \(\Sigma\)-monoid where every finite family is summable.
  A \GLS{\(\Sigma\)-group}{Sigma-group} is a finitely total \(\Sigma\)-monoid where, for every \(x \in X\), there is an element \(-x \in X\) satisfying \(\Sigma \{x,-x\} = 0\) and where the function that maps each \(x \in X\) to \(-x\) is a \(\Sigma\)-homomorphism.
\end{definition}

\begin{remark} \label{rmk:SCat_inverse}
  Notice that every finitely total \(\Sigma\)-monoid \((X,\Sigma)\) is trivially a commutative monoid \((X,+,0)\) where \(x + y\) is defined to be \(\Sigma \{x,y\}\) for every \(x, y \in X\) and \(0 = \Sigma \varnothing\); both associativity and commutativity follow from the combination of bracketing and flattening.
  Similarly, every \(\Sigma\)-group is trivially an abelian group and, hence, the inverse of an element in a \(\Sigma\)-group is unique.
  Moreover, thanks to the inverse mapping being a \(\Sigma\)-homomorphism, any summable family \(\fml{x} \in X\) (even infinite ones) have an `inverse' \(\fml{-x}\) obtained by applying the inverse mapping to each of its elements so that:
\begin{equation*}
  0 = \Sigma \{\Sigma \fml{x}, -(\Sigma \fml{x})\} = \Sigma \{\Sigma \fml{x}, \Sigma \fml{-x}\}.
\end{equation*}
\end{remark}

For each of these flavours of \(\Sigma\)-monoids we may consider a full subcategory of \(\SCat{w}\) as defined below.

\begin{definition} \label{def:SCat_subcats}
  Let \GLS{\(\SCat{s}\)}{SCats}, \GLS{\(\SCat{ft}\)}{SCatft} and \GLS{\(\SCat{g}\)}{SCatg} be the full subcategories of \(\SCat{w}\) obtained by restricting the class of objects to strong \(\Sigma\)-monoids, finitely total \(\Sigma\)-monoids and \(\Sigma\)-groups, respectively.
  For brevity, these subcategories together with \(\SCat{w}\) are referred to as \GLS{\(\SCat{*}\)}{SCatstar} categories.
\end{definition}

Since strong \(\Sigma\)-monoids do not admit inverses (see Proposition~\ref{prop:no_inverses}) it is clear that \(\Sigma\)-groups and strong \(\Sigma\)-monoids are disjoint subclasses of weak \(\Sigma\)-monoids.
The hierarchy of inclusions between \(\SCat{*}\) categories can be summarised by the existence of the following full and faithful embedding functors:
\begin{equation} \label{eq:SCat_inclusions}
  \SCat{s} \into \SCat{w} \quad\quad\quad\quad \SCat{g} \into \SCat{ft} \into \SCat{w}.
\end{equation}

\begin{proposition} \label{prop:SCat*_complete}
  All of the \(\SCat{*}\) categories are complete.
  All of the embedding functors from~\eqref{eq:SCat_inclusions} create limits.
\end{proposition}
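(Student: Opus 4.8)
The plan is to reduce everything to two closure checks. By Proposition~\ref{prop:SCatw_complete}, $\SCat{w}$ is complete, and its proof realises limits from the small products $\Sigma^\times$ of~\eqref{eq:SCat_small_product} and the equalizers $\Sigma^e$ of~\eqref{eq:SCat_equalizer}; recall that a general small limit is obtained as the equalizer of a parallel pair of maps between two products indexed, respectively, by the objects and by the morphisms of the (small) diagram. Because each embedding in~\eqref{eq:SCat_inclusions} is the inclusion of a \emph{full} subcategory, the products $\Sigma^\times$ and equalizers $\Sigma^e$ of objects drawn from the subcategory are computed exactly as in $\SCat{w}$, and the parallel maps between products are automatically morphisms of the subcategory by fullness. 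Hence it suffices to prove that the underlying $\Sigma$-monoid of each such product and each such equalizer again satisfies the defining axioms of the subcategory. Once this closure is established, every small limit of a diagram in the subcategory is computed in $\SCat{w}$ and lands in the subcategory, so by fullness the embedding creates it; in particular each $\SCat{*}$ category is complete.

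For $\SCat{ft}$ I must check that every finite family is summable. In a product $\times_I X_i$ with $\Sigma^\times$, a finite family of tuples has each component a finite family in $X_i$, hence summable by finite totality of $X_i$, so by definition of $\Sigma^\times$ the tuple family is summable. For an equalizer $E = \{x \in X \mid f(x) = g(x)\}$, a finite family $\fml{x}$ in $E$ is summable in $X$ with some sum $x$; since $f,g$ are $\Sigma$-homomorphisms and the summands lie in $E$ we have $f\fml{x} = g\fml{x}$, whence $f(x) \keq \Sigma f\fml{x} \keq \Sigma g\fml{x} \keq g(x)$, so $x \in E$ and $\Sigma^e\fml{x}\keq x$. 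For $\SCat{g}$ I additionally verify closure under inverses: in a product the componentwise inverse works and its $\Sigma$-homomorphism property is inherited componentwise; in an equalizer, applying $f$ to $\Sigma\{x,-x\}\keq 0$ forces $f(-x)\keq -f(x)$ and likewise $g(-x)\keq -g(x)$, so $x \in E$ yields $-x \in E$, and the inverse map on $E$ is the restriction (through $e$) of that on $X$, hence a $\Sigma$-homomorphism.

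For $\SCat{s}$ the content lies in the three strong axioms. In products, subsummability, strong bracketing and strong flattening all hold because summability in $\times_I X_i$ is componentwise and each axiom holds in every $X_i$. In equalizers the argument is the characteristic one: whenever the ambient strong axiom in $X$ produces an intermediate partial sum $y \keq \Sigma\fml{x}'$ from a family $\fml{x}'$ of elements of $E$, I apply $f$ and $g$ and use $f\fml{x}' = g\fml{x}'$ (valid since the summands lie in $E$) to conclude $f(y)\keq g(y)$, that is $y \in E$; by definition of $\Sigma^e$ this is precisely what is needed to transport the axiom from $X$ to $E$. Carrying this out for subsummability, strong bracketing and strong flattening completes the closure of $\SCat{s}$.

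The main obstacle I anticipate is exactly this last point for $\SCat{s}$: the strong bracketing and strong flattening axioms assert summability of \emph{re-bracketed} families, and to invoke the definition of $\Sigma^e$ one must first know that each intermediate sum $\Sigma^e\fml{x}_j$ lands back in $E$. Establishing that these intermediate sums are fixed points of $f$ and $g$ — rather than appealing to componentwise behaviour as in the product case — is where care is required; everything else is a routine transcription of the weak-$\Sigma$-monoid bookkeeping through the injection $e$ together with the homomorphism property of $f$ and $g$.
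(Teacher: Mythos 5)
Your proof is correct and takes essentially the same route as the paper: show that the equalizer and small-product constructions from Proposition~\ref{prop:SCatw_complete} are closed under the defining axioms of each subcategory (\(\SCat{s}\), \(\SCat{ft}\), \(\SCat{g}\)), then use that the embeddings are full and faithful to reflect, and hence create, limits. The only difference is that you explicitly carry out the closure verifications (finite totality, inverses, and the three strong axioms, including the key point that intermediate sums of families of elements of \(E\) land back in \(E\)), which the paper's proof dismisses as ``straightforward to check''.
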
 \begin{proof}
  Proposition~\ref{prop:SCatw_complete} already established the claim for \(\SCat{w}\).
  Whenever the objects in the diagram are in \(\SCat{s}\), it is straightforward to check that the construction from Proposition~\ref{prop:SCatw_complete} --- both for equalizers and small products --- yields a strong \(\Sigma\)-monoid and, hence, provides a cone in \(\SCat{s}\).
  The inclusion functor \(\SCat{s} \into \SCat{w}\) reflects limits, by virtue of being full and faithful; therefore, \(\SCat{s}\) is complete.
  The same argument applies to \(\SCat{ft}\) and \(\SCat{g}\) so they are all complete, as claimed.
  Since all \(\SCat{*}\) categories are complete and the embedding functors reflect limits, it is immediate that they also preserve them and, hence, all of the embedding functors from~\eqref{eq:SCat_inclusions} create limits.
\end{proof}

The following subsections establish that \(\SCat{s}\), \(\SCat{ft}\) and \(\SCat{g}\) are not only full subcategories of \(\SCat{w}\) but reflective ones --- \ie{} there is a left adjoint to each of the embedding functors in~\eqref{eq:SCat_inclusions}.

\subsection{\(\SCat{s}\) is a reflective subcategory of \(\SCat{w}\)}
\label{sec:s_w_adj}

The left adjoint to the canonical embedding \(\SCat{s} \into \SCat{w}\) is explicitly constructed in Proposition~\ref{prop:s_w_adj}.
The key insight is that the strong versions of bracketing and flattening can be recovered by defining a  congruence \(\sim\) and quotienting the set of all families with respect to it.
Then, we ought to find the `smallest' strong \(\Sigma\)-monoid defined on this quotient set that satisfies the universal property of the adjunction.
The following lemma is instrumental to find such a `smallest' strong \(\Sigma\)-monoid.

\begin{lemma} \label{lem:intersection_of_sSm}
  Let \(\{(X_i,\Sigma^i)\}_I\) be a collection of strong \(\Sigma\)-monoids.
  Define a partial function \(\Sigma^\cap \colon (\cap_I X_i)^* \pto \cap_I X_i\) as follows:
  \begin{equation*}
    \Sigma^\cap \fml{x} = \begin{cases}
      x &\ifc \forall i \in I,\ \Sigma^i \fml{x} \keq x \\
      \undefined &\otherwise.
    \end{cases}
  \end{equation*}
  Then, \((\cap_I X_i, \Sigma^\cap)\) is a strong \(\Sigma\)-monoid.
\end{lemma}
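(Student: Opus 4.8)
The plan is to verify that $(\cap_I X_i, \Sigma^\cap)$ satisfies each axiom of a strong $\Sigma$-monoid (Definitions~\ref{def:wSm} and~\ref{def:sSm}) by reducing it, axiom by axiom, to the same axiom holding in every component $(X_i,\Sigma^i)$. The single fact driving every reduction is the defining equivalence
\[
  \Sigma^\cap \fml{x} \keq x \iff \forall i \in I,\ \Sigma^i \fml{x} \keq x,
\]
valid for any $\fml{x} \in (\cap_I X_i)^*$ and any $x \in \cap_I X_i$. First I would record that the common neutral element $0$ lies in $\cap_I X_i$ and that $\Sigma^\cap \varnothing \keq 0$, so the intersection is pointed and the neutral element axiom's first clause holds.

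For the \emph{singleton} and \emph{neutral element} axioms, and for \emph{bracketing} and \emph{flattening} together with their strong counterparts, the verification is then immediate from the displayed equivalence. In each of these axioms the hypothesis already fixes a \emph{single} value attained in every component, so applying the corresponding (strong) axiom inside each $X_i$ yields the conclusion with that same common value, which is precisely $\Sigma^\cap$-summability. For instance, for strong flattening one takes $\{\fml{x}_j\}_J$ with each $\fml{x}_j$ summable to a common $s_j$ and $\Sigma^\cap \{s_j\}_J \keq x$; then $\Sigma^i \{s_j\}_J \keq x$ and $\Sigma^i \fml{x}_j \keq s_j$ for every $i$, so strong flattening in $X_i$ gives $\Sigma^i \fml{x} \keq x$ for all $i$, i.e. $\Sigma^\cap \fml{x} \keq x$; strong bracketing is dual, and the finitary weak axioms are special cases. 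Proposition~\ref{prop:neutral_element} handles inserting or discarding copies of $0$ uniformly across components.

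The one axiom that is \emph{not} a direct componentwise transcription --- and which I expect to be the main obstacle --- is \emph{subsummability}. Here the hypothesis supplies a common value $s = \Sigma^\cap \{x_i\}_I$ only for the \emph{whole} family; subsummability inside each component merely guarantees that a subfamily $\{x_j\}_J$ is $\Sigma^i$-summable to \emph{some} value $t_i \in X_i$, and a priori these $t_i$ could differ between components, in which case $\Sigma^\cap \{x_j\}_J$ would be undefined and the axiom would fail. The crux is therefore to show that all the $t_i$ coincide; their common value, being an element of every $X_i$, then lies in $\cap_I X_i$ and delivers subsummability for $\Sigma^\cap$.

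To force this agreement I would exploit the relationship between the components that the construction provides: the strong $\Sigma$-monoids being intersected are mutually compatible, each summation $\Sigma^i$ being the restriction of one ambient structure, so that the sum of any summable family is independent of the component in which it is computed. Under this compatibility $t_i = \Sigma\{x_j\}_J$ for every $i$, the common value lies in $\cap_I X_i$, and subsummability follows. As a consistency check one partitions $I = J \uplus (I \setminus J)$, applies subsummability in each $X_i$ to the complementary subfamily, and invokes strong bracketing to relate $t_i$, the complementary sum, and $s$; this pins the subfamily value down and confirms it is shared across components.
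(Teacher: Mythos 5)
For every axiom except subsummability, your componentwise transfer is precisely the paper's own (very brief) argument, and it is sound: in the singleton, neutral element, and (strong) bracketing and flattening axioms the hypotheses already fix one common value in every component, so applying the component axiom inside each \(X_i\) returns that same value, which is exactly \(\Sigma^\cap\)-summability.

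On subsummability you have located a genuine problem --- but it lies in the paper, not in your reasoning. The paper's proof claims that since every subfamily of a \(\Sigma^\cap\)-summable family is \(\Sigma^i\)-summable for each \(i\), it is \(\Sigma^\cap\)-summable; this silently ignores that \(\Sigma^\cap\)-summability also demands that the component sums \emph{agree}, and they need not. In fact the lemma as stated is false. Take \(X=\{0,a,b,c,u,v,w,s\}\) (eight distinct elements) and let \(\Sigma^1\) be the structure on \(X\) in which a family is summable iff its subfamily of nonzero elements is one of \(\varnothing\), a singleton, or \(\{a,b\}\mapsto u\), \(\{a,c\}\mapsto v\), \(\{b,c\}\mapsto w\), \(\{u,c\}\mapsto s\), \(\{v,b\}\mapsto s\), \(\{w,a\}\mapsto s\), \(\{a,b,c\}\mapsto s\); checking the axioms of Definition~\ref{def:sSm} is routine since summable supports are closed under subsets and under the relevant regroupings. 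Let \(\Sigma^2\) be obtained from \(\Sigma^1\) by transporting along the bijection of \(X\) that swaps \(u\) and \(v\); it is again a strong \(\Sigma\)-monoid on the same underlying set, with \(\Sigma^2\{a,b\}\keq v\) and \(\Sigma^2\{a,b,c\}\keq s\). Then \(\Sigma^\cap\{a,b,c\}\keq s\) is defined, yet \(\Sigma^\cap\{a,b\}\) is undefined because \(\Sigma^1\{a,b\}=u\neq v=\Sigma^2\{a,b\}\), so subsummability fails for \(\Sigma^\cap\). Consequently no purely componentwise argument can close this case, and your proposed repair --- assuming each \(\Sigma^i\) is the restriction of a single ambient summation, so that a summable family has the same sum in every component --- is exactly the right one. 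Strictly speaking, that compatibility is not among the lemma's hypotheses, so your argument proves a corrected lemma rather than the stated one; but the corrected lemma is all the paper ever uses: in its sole application (Proposition~\ref{prop:s_w_adj}) every \(\Sigma^f\) computes \([\uplus_I \fml{x}_i]\) whenever it is defined at all, so your agreement condition holds automatically there and your proof, including the subsummability step, goes through.
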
 \begin{proof}
  Whenever a family \(\fml{x}\) is summable in \(\Sigma^\cap\), it must be summable in \(\Sigma^i\) for all \(i \in I\) and, considering that each \((X_i,\Sigma^i)\) is a strong \(\Sigma\)-monoid, it follows that every subfamily of \(\fml{x}\) is summable in every \(\Sigma^i\) and, hence, it is summable in \(\Sigma^\cap\) so that subsummability is satisfied.
  Similarly, strong bracketing and strong flattening in \(\Sigma^\cap\) follow from those in each \(\Sigma^i\) and the neutral element and singleton axioms are trivially satisfied.
  Consequently, \((\cap_I X_i, \Sigma^\cap)\) is a strong \(\Sigma\)-monoid.
\end{proof}

\begin{proposition} \label{prop:s_w_adj}
  There is a left adjoint functor to the embedding \(\SCat{s} \into \SCat{w}\).
\end{proposition}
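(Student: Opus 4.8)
The plan is to build the left adjoint \(F\colon \SCat{w} \to \SCat{s}\) explicitly, producing for each weak \(\Sigma\)-monoid \(X\) a strong \(\Sigma\)-monoid \(F(X)\) with a \(\Sigma\)-homomorphism \(\eta_X \colon X \to F(X)\) that is initial among \(\Sigma\)-homomorphisms from \(X\) into strong \(\Sigma\)-monoids (Definition~\ref{def:adjunction}). The guiding observation is that passing from the weak to the strong axioms only enlarges the \emph{domain} of \(\Sigma\): we must declare enough families summable to validate subsummability and the unrestricted bracketing and flattening laws, while identifying precisely the elements those stronger laws force to be equal. Because strong \(\Sigma\)-monoids admit no nontrivial additive inverses (Proposition~\ref{prop:no_inverses}), \(F\) is genuinely a quotient: it collapses, for instance, the weak \(\Sigma\)-monoid \((\pm,\Sigma)\) of Example~\ref{ex:wSm_pm} all the way down to the trivial monoid.

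First I would fix the \emph{values}. On the set of families \(X^*\) I would take the smallest congruence \(\sim\) (with respect to disjoint union and reindexing) that contains the weak structure --- setting \(\fml{x} \sim \{x\}\) whenever \(\Sigma\fml{x} \keq x\) --- and is closed under the strong bracketing and flattening rules, now without the finiteness side-conditions of Definition~\ref{def:wSm}. Then I would fix the \emph{summability} by endowing the quotient \(X^*/{\sim}\) with the smallest strong \(\Sigma\)-monoid structure making \(\eta_X(x) = [\{x\}]\) a \(\Sigma\)-homomorphism, namely the one declaring a family summable exactly when every such structure does. This is where Lemma~\ref{lem:intersection_of_sSm} is decisive: the intersection of this collection of strong \(\Sigma\)-monoids is again strong, and I take it to be \(F(X)\). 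Concretely one can realise the intersection by taking the product in \(\SCat{s}\) (complete, by Proposition~\ref{prop:SCat*_complete}) of all strong \(\Sigma\)-monoids \(Y\) equipped with a weak \(\Sigma\)-homomorphism \(h\colon X \to Y\) whose image generates \(Y\), and intersecting the sub-strong-\(\Sigma\)-monoids of that product that contain the diagonal image of \(X\); a size argument (discussed below) ensures this collection forms a genuine set.

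For the universal property, given a strong \(\Sigma\)-monoid \(Y\) and a \(\Sigma\)-homomorphism \(h\colon X \to Y\), I would set \(\bar{h}[\fml{x}] = \Sigma\, h\fml{x}\). Since \(\sim\) was generated from laws valid in every strong \(\Sigma\)-monoid, \(h\) carries \(\sim\)-related families to families with equal \(Y\)-sums; this makes \(\bar{h}\) a well-defined \(\Sigma\)-homomorphism with \(\bar{h}\circ\eta_X = h\). Uniqueness is immediate because \(\eta_X(X)\) generates \(F(X)\) under summation, so \(\bar{h}\) is forced on a generating set.

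The delicate point --- and the main obstacle --- is to pin down the right \emph{summability} rather than merely the right values. Declaring every family summable would instead yield the free \emph{totally summable} strong \(\Sigma\)-monoid, which is strictly larger than \(F(X)\) whenever \(X\) is already strong but not totally summable (for example \((\Nset,\Sigma)\) with only finite sums defined is strong yet not totally summable). Minimising via Lemma~\ref{lem:intersection_of_sSm} repairs this, but it requires a size argument to guarantee that the collection being intersected --- equivalently, the generating set of quotients \((Y,h)\) --- is a set: this follows from the fact that the strong \(\Sigma\)-monoid generated by the image of \(X\) has cardinality bounded by a fixed function of \(\lvert X\rvert\), since every summable family is countably indexed and closing \(\eta_X(X)\) under countable sums stabilises at bounded cardinality. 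This is exactly the weakly initial set condition of the general adjoint functor theorem (Theorem~\ref{thm:GAFT}), so a non-constructive alternative is to invoke that theorem directly: \(\SCat{s}\) is complete and locally small and the embedding preserves limits by Proposition~\ref{prop:SCat*_complete}, leaving only this solution-set condition to verify.
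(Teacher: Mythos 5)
Your strategy is fundamentally the same as the paper's: an explicit construction that quotients \(X^*\) by the congruence generated by collapsing summable families to their sums, uses Lemma~\ref{lem:intersection_of_sSm} to pick out a smallest strong structure, and verifies the universal property via \(\bar h[\fml{x}] = \Sigma\, h\fml{x}\). The gap is in your choice of carrier. Your primary formulation keeps the \emph{whole} quotient \(X^*/{\sim}\) and only minimises which families are summable. But \(X^*/{\sim}\) contains classes \([\fml{x}]\) for which \emph{some} \(\Sigma\)-homomorphism \(h \colon X \to Y\) into a strong \(\Sigma\)-monoid has \(h\fml{x}\) non-summable; on such a class your formula \(\bar h[\fml{x}] = \Sigma\, h\fml{x}\) is not even defined, so \(\bar h\) is not total. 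Worse, such ``junk'' classes cannot be the sum of \(\{\eta_X(x_i)\}_I\) in the minimal structure: one checks that the structure in which junk classes may only be summed together with copies of \([\varnothing]\) is strong and makes \(\eta_X\) a \(\Sigma\)-homomorphism, so it belongs to the collection being intersected, and minimality keeps junk classes inert. Hence \(\eta_X(X)\) does not generate your \(F(X)\), and uniqueness fails: two candidate \(\Sigma\)-homomorphisms extending \(h\) may be chosen to differ arbitrarily on junk classes. A concrete test: take \(X\) already strong but not totally summable, e.g.\ \(\Nset\) with exactly the families having finitely many nonzero entries summable. Since \(\SCat{s}\) is full in \(\SCat{w}\), the unit of a reflection at such an \(X\) must be an isomorphism, yet your \(\eta_X\) is not surjective --- no junk class, such as \([\{1,1,1,\ldots\}]\), lies in its image.

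The paper's proof avoids this by shrinking the carrier together with the summability: for each \(f \colon X \to G(Y)\) it forms the \emph{subset} \(A_f \subseteq X^*/{\sim}\) of classes whose image under \(f\) is summable, shows each \((A_f,\Sigma^f)\) is a strong \(\Sigma\)-monoid, and applies Lemma~\ref{lem:intersection_of_sSm} to the family \(\{(A_f,\Sigma^f)\}_f\). The resulting \(F(X)\) contains no junk, every element of it is genuinely a sum of elements of \(\eta_X(X)\), and \(\bar h\) is then total, well defined and forced --- and smallness is immediate, since each \(A_f\) is a subset of the fixed small set \(X^*/{\sim}\), with no need for your cardinality bound on generated subalgebras. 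Note that your ``concrete realisation'' --- closing the diagonal image of \(X\) inside a product of strong \(\Sigma\)-monoids --- is \emph{not} the same object as your first description (it, too, excludes junk) and could be made to work, as could retreating to Theorem~\ref{thm:GAFT}; but the main construction as you state it, with carrier \(X^*/{\sim}\), does not satisfy the universal property.
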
 \begin{proof}
  Let \(G \colon \SCat{s} \into \SCat{w}\) denote the canonical embedding; its left adjoint \(F \colon \SCat{w} \to \SCat{s}\) is defined explicitly.
  Let \(X \in \SCat{w}\) be an arbitrary weak \(\Sigma\)-monoid and define a relation \(\leadsto\) on \(X^*\) such that for any two families \(\fml{x},\fml{x'} \in X^*\), we have that \(\fml{x} \leadsto \fml{x'}\) iff there is a partition of \(\fml{x}\) into subfamilies \(\fml{x} = \uplus_J \fml{x}_j\) such that each (possibly infinite) subfamily \(\fml{x}_j\) is summable and such that \(\fml{x'}\) is the family of their sums \(\fml{x'} = \{\Sigma \fml{x}_j\}_J\).
  The relation \(\leadsto\) is reflexive since any family may be partitioned into its singleton subfamilies.
  Let \(\sim\) be the equivalence closure of \(\leadsto\); then, \(\fml{x} \sim \fml{x'}\) iff there is a zig-zag chain of \(\leadsto\) relations such as:
  \[\begin{tikzcd}[column sep=tiny]
    & {\fml{a_1}} && {\fml{a_3}} & \ldots & {\fml{a_n}} \\
    {\fml{x}} && {\fml{a}_2} && \ldots && {\fml{x'}}
    \arrow[squiggly, from=1-2, to=2-1]
    \arrow[squiggly, from=1-2, to=2-3]
    \arrow[squiggly, from=1-4, to=2-3]
    \arrow[squiggly, from=1-4, to=2-5]
    \arrow[squiggly, from=1-6, to=2-5]
    \arrow[squiggly, from=1-6, to=2-7]
  \end{tikzcd}\]
  Notice that \(\sim\) is a congruence with respect to arbitrary disjoint union, \ie{} it satisfies:
  \begin{equation} \label{eq:weak_strong_sim_cong}
    \forall j \in J,\, \fml{x}_j \sim \fml{x'}_j \implies \uplus_J \fml{x}_j \sim \uplus_J \fml{x'}_j
  \end{equation}
  since a zig-zag chain from \(\uplus_J \fml{x}_j\) to \(\uplus_J \fml{x'}_j\) may be obtained by composing the chains relating each subfamily \(\fml{x}_j\) to \(\fml{x'}_j\).
  Let \(A\) be the quotient set \(X^*/{\sim}\) and let \(q \colon X \to A\) be the function that maps each \(x \in X\) to the equivalence class \([\{x\}] \in A\).
  
  Let \(Y \in \SCat{s}\) be an arbitrary strong \(\Sigma\)-monoid and let \(f \colon X \to G(Y)\) be an arbitrary \(\Sigma\)-homomorphism.
  Define \(\leadsto\) on \(Y^*\) as above; since \(f\) is a \(\Sigma\)-homomorphism we have that \(\fml{x} \leadsto \fml{x'}\) implies \(f\fml{x} \leadsto f\fml{x'}\) and, consequently,
  \begin{equation} \label{eq:sim_hom}
    \fml{x} \sim \fml{x'} \implies f\fml{x} \sim f\fml{x'}.
  \end{equation}
  Moreover, in a strong \(\Sigma\)-monoid \(\fml{y} \leadsto \fml{y'}\) implies \(\Sigma \fml{y} \keq \Sigma \fml{y'}\), since if \(\fml{y}\) is summable then \(\fml{y'}\) is summable due to strong bracketing whereas if \(\fml{y'}\) is summable then \(\fml{y}\) is summable due to strong flattening.
  Consequently, we have that for any two families \(\fml{x},\fml{x'} \in X^*\):
  \begin{equation} \label{eq:s_w_sim_implies}
    \fml{x} \sim \fml{x'} \implies \Sigma f\fml{x} \keq \Sigma f\fml{x'}.
  \end{equation}
  Let \(A_f\) be the subset of \(A\) defined as follows:
  \begin{equation*}
    A_f = \{[\fml{x}] \in A \mid f\fml{x} \text{ is summable in } Y\}.
  \end{equation*}
  Notice that, for any equivalence class \([\fml{x}] \in A_f\), each of its families \(\fml{x'} \in [\fml{x}]\) satisfies that \(f\fml{x'}\) is summable, due to the implication~\eqref{eq:s_w_sim_implies} and the fact that \(f\fml{x}\) is summable.
  Let \(\Sigma^f \colon A_f^* \pto A_f\) be the following partial function:
  \begin{equation*}
    \Sigma^f \{[\fml{x}_i]\}_I = \begin{cases}
      [\uplus_I \fml{x}_i] &\ifc [\uplus_I \fml{x}_i] \in A_f \\
      \undefined &\otherwise.
    \end{cases}
  \end{equation*}
  Notice that this definition is independent of the choice of representatives thanks to \(\sim\) being a congruence~\eqref{eq:weak_strong_sim_cong}.
  We must check that \((A_f,\Sigma^f)\) is a strong \(\Sigma\)-monoid.
  \begin{itemize}
    \item \emph{Singleton.} Singleton families are trivially summable.
    \item \emph{Subsummability.} If \(\{[\fml{x}_i]\}_I \in A_f^*\) is summable then \(\uplus_I f\fml{x}_i\) is summable in \(Y\) and so is \(\uplus_J f\fml{x}_j\) for any \(J \subseteq I\) due to subsummability in \(Y\). Thus, \(\{[\fml{x}_j]\}_J\) is summable in \(A_f\) and \(\Sigma^f\) satisfies subsummability.
    \item \emph{Neutral element.} Follows immediately from subsummability.
    \item \emph{Strong flattening.} Let \(I\) be an arbitrary set partitioned into \(I = \uplus_J I_j\) and, for each \(j \in J\), let \(\{[\fml{x}_i]\}_{I_j}\) be a summable family in \(A_f\), \ie{} \(\Sigma^f \{[\fml{x}_i]\}_{I_j} \keq [\uplus_{I_j} \fml{x}_i]\). Assume that the family \(\{[\uplus_{I_j} \fml{x}_i]\}_J \in A_f^*\) is summable, \ie{} \(\Sigma^f \{[\uplus_{I_j} \fml{x}_i]\}_J \keq [\uplus_J (\uplus_{I_j} \fml{x}_i)]\); then, due to associativity and commutativity of \(\uplus\) we have that \(\uplus_I \fml{x}_i = \uplus_J (\uplus_{I_j} \fml{x}_i)\) and, hence, \([\uplus_I \fml{x}_i] \in A_f\) so that \(\{[\fml{x}_i]\}_I\) is summable in \(A_f\).
    \item \emph{Strong bracketing.} As in the case of flattening, this axiom follows from associativity and commutativity of disjoint union.
  \end{itemize}

  Let \(S\) be the set of all such strong \(\Sigma\)-monoids \((A_f,\Sigma^f)\).\footnote{Even though \(f \colon X \to G(Y)\) ranges over all \(Y \in \SCat{s}\), each \(A_f\) is by definition a subset of \(A\) which is small since \(X^*\) is small (see Proposition~\ref{prop:family_set}), hence, \(S\) is a (small) set.}
  Let \(F(X)\) be the strong \(\Sigma\)-monoid obtained as the intersection of all the members of \(S\) as per Lemma~\ref{lem:intersection_of_sSm}.
  Let \(\eta_X \colon X \to GF(X)\) be the function that maps each \(x \in X\) to \([\{x\}]\); such an equivalence class is present in every \(A_f\), so it is present in \(F(X)\).
  Notice that \(\eta_X\) is a \(\Sigma\)-homomorphism: if \(\fml{x} = \{x_i\}_I \in X^*\) is summable, then \(\fml{x} \leadsto \{\Sigma \fml{x}\}\) and \([\fml{x}] = [\{\Sigma \fml{x}\}]\), hence,
  \begin{equation*}
    \Sigma^\cap \{\eta_X(x_i)\}_I \keq [\uplus_I \{x_i\}] = [\fml{x}] = [\{\Sigma \fml{x}\}] = \eta_X(\Sigma \fml{x}).
  \end{equation*}
  On morphisms \(f \colon X \to X'\), let \(F(f)\) be the function that maps \([\fml{x}] \in F(X)\) to \([f\fml{x}] \in F(X')\); this is a \(\Sigma\)-homomorphism thanks to~\eqref{eq:sim_hom}.
  It is straightforward to check that this construction yields a functor \(F \colon \SCat{w} \to \SCat{s}\) and a natural transformation \(\eta\) whose components \(\eta_X\) were defined above.

  It remains to check that \(F\) is left adjoint to \(G\).
  Fix some arbitrary \(X \in \SCat{w}\), \(Y \in \SCat{s}\) and \(f \in \SCat{w}(X,G(Y))\).
  There is a unique \(\Sigma\)-homomorphism \(\bar{f} \colon F(X) \to Y\) making the diagram
  \[\begin{tikzcd}
    X && {G(Y)} \\
    && {GF(X)}
    \arrow["f", from=1-1, to=1-3]
    \arrow["{\eta_X}"', from=1-1, to=2-3]
    \arrow["{G(\bar{f})}"', dashed, from=2-3, to=1-3]
  \end{tikzcd}\]
  commute: the requirement that \(f = G(\bar{f}) \circ \eta_X\) imposes that \(\bar{f}\) maps each \(\eta_X(x) = [\{x\}]\) to \(f(x)\); the requirement that \(\bar{f}\) is a \(\Sigma\)-homomorphism imposes that, for any arbitrary equivalence class \([\{x_i\}_I] \in F(X)\),
  \begin{equation*}
    \bar{f} [\{x_i\}_I] = \bar{f} [\uplus_I \{x_i\}] = \bar{f} (\Sigma^\cap \{\eta_X(x_i)\}_I) = \Sigma \{f(x_i)\}_I.
  \end{equation*}
  The value of such a \(\Sigma\)-homomorphism \(\bar{f}\) is uniquely determined and, thus, it has been shown that \(F\) is left adjoint to \(G\), as claimed.
\end{proof}

Since \(\SCat{s}\) is a full subcategory of \(\SCat{w}\) it follows that \(\SCat{s}\) is a reflective subcategory of \(\SCat{w}\).
Let \(X\) be a weak \(\Sigma\)-monoid containing a pair of elements \(a,b \in X\) such that \(\Sigma \{a,b\} \keq 0\).
The family \(\fml{x} = \{a\}_\Nset \uplus \{b\}_\Nset\) that contains an infinite number of copies of \(a\) and \(b\) satisfies that \(\fml{x} \leadsto \{a\}\), \(\fml{x} \leadsto \{0\}\) and \(\fml{x} \leadsto \{b\}\) depending on how we partition it.
Then, \([\{a\}] = [\{0\}] = [\{b\}]\), implying \(\eta_X\) is not injective and, in fact, every element of \(X\) that has an additive inverse is mapped to the neutral element \([\{0\}]\) of \(F(X)\).
This is to be expected since, according to Proposition~\ref{prop:no_inverses}, strong \(\Sigma\)-monoids cannot have additive inverses.

\subsection{\(\SCat{ft}\) is a reflective subcategory of \(\SCat{w}\)}
\label{sec:ft_w_adj}

The canonical embedding \(\SCat{ft} \into \SCat{w}\) has a left adjoint.
In this case, an explicit construction of the free finitely total \(\Sigma\)-monoid has not been achieved; instead, the proof uses the general adjoint functor theorem.

\begin{proposition} \label{prop:ft_w_adj}
  There is a left adjoint functor to the embedding \(\SCat{ft} \into \SCat{w}\).
\end{proposition}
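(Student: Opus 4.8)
The plan is to apply the general adjoint functor theorem (Theorem~\ref{thm:GAFT}) to the embedding $G \colon \SCat{ft} \into \SCat{w}$. Three of its four hypotheses are immediate: $\SCat{ft}$ is complete and locally small by Proposition~\ref{prop:SCat*_complete} (together with the standing assumption that every category considered here is locally small), and $G$ preserves limits because Proposition~\ref{prop:SCat*_complete} establishes that the embeddings in~\eqref{eq:SCat_inclusions} create --- hence preserve --- limits. It therefore remains to verify the solution set condition: for every $A \in \SCat{w}$, the comma category $\comma{A}{G}$ admits a weakly initial set in the sense of Definition~\ref{def:weakly_initial_set}. Once this is in place, Theorem~\ref{thm:GAFT} yields the desired left adjoint.

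To build the weakly initial set I would pass from an arbitrary object $(h', B')$ of $\comma{A}{G}$ --- that is, a finitely total $\Sigma$-monoid $B'$ with a $\Sigma$-homomorphism $h' \colon A \to B'$ --- to the sub-$\Sigma$-monoid of $B'$ generated by $h'(A)$. Concretely, let $B \subseteq B'$ be the least subset containing $\{0\} \cup h'(A)$ and closed under $\Sigma \fml{x}$ for every family $\fml{x} \in (B')^*$ that is summable in $B'$ and whose elements all lie in $B$. Applying Lemma~\ref{lem:wSm_restriction} to the inclusion $\iota \colon B \into B'$ equips $B$ with an induced weak $\Sigma$-monoid structure for which $\iota$ is a $\Sigma$-homomorphism; since $B'$ is finitely total, every finite family in $B$ is summable, so $B$ is itself finitely total. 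The corestriction $h \colon A \to B$ of $h'$ is then a $\Sigma$-homomorphism with $\iota \circ h = h'$, so $\iota$ is a morphism $(h, B) \to (h', B')$ in $\comma{A}{G}$.

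The crux of the argument is a cardinality bound on $B$ depending only on $|A|$. I would construct $B$ as a transfinite union $\bigcup_{\alpha < \omega_1} B_\alpha$, where $B_0 = \{0\} \cup h'(A)$, each successor step adjoins $\Sigma \fml{x}$ for all families summable in $B'$ with elements in $B_\alpha$, and unions are taken at limit stages. Because summable families are countable (Definition~\ref{def:families}) and $\omega_1$ is regular, the elements of any such family already appear at some stage below $\omega_1$, so the process closes by stage $\omega_1$. A set of cardinality $\lambda \geq \aleph_0$ indexes at most $\lambda^{\aleph_0}$ countable families, so setting $\kappa_0 = \max(|A|, \aleph_0)$, $\kappa_{\alpha+1} = \kappa_\alpha^{\aleph_0}$ and $\kappa_\lambda = \sup_{\alpha < \lambda} \kappa_\alpha$, the cardinal $\mu = \kappa_{\omega_1}$ bounds $|B|$ and depends only on $|A|$. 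I would then fix a set $\mathcal{T}$ of representatives, up to isomorphism, of all finitely total $\Sigma$-monoids of cardinality at most $\mu$ (for instance, all finitely total $\Sigma$-structures carried by cardinals $\leq \mu$); for each $C \in \mathcal{T}$ the collection $\SCat{w}(A, G(C))$ is a set. Taking
\begin{equation*}
  S = \{(k, C) \mid C \in \mathcal{T},\ k \in \SCat{w}(A, G(C))\},
\end{equation*}
which is a set, I would verify weak initiality as follows: given $(h', B')$, form $B$, $\iota$ and $h$ as above, choose an isomorphism $\phi \colon B \to C$ with $C \in \mathcal{T}$, and note that $(\phi \circ h, C) \in S$ while $\iota \circ \phi^{-1} \colon C \to B'$ is a morphism $(\phi \circ h, C) \to (h', B')$ in $\comma{A}{G}$, since $(\iota \circ \phi^{-1}) \circ (\phi \circ h) = \iota \circ h = h'$.

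The main obstacle is precisely this last step: showing that the generated sub-$\Sigma$-monoid is simultaneously finitely total and small. Finite totality must be read off carefully from the closure condition combined with the finite totality of $B'$, while the smallness bound rests on the fact that summable families are countable, which is exactly what makes the $\omega_1$-step closure terminate with a cardinality controlled by $|A|$. With both facts established, the solution set condition holds and Theorem~\ref{thm:GAFT} delivers the left adjoint.
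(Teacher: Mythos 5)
Your proof is correct, and it follows the paper's high-level route --- the general adjoint functor theorem, with completeness, local smallness and limit preservation quoted from Proposition~\ref{prop:SCat*_complete} --- but it certifies the solution set condition in a genuinely different way. For an object \((h',B')\) of \(\comma{A}{G}\), the paper (following Hoshino's strategy) forms the set \(Z \subseteq A^*\) of families whose image under \(h'\) is summable, quotients by the relation ``equal sum'', and uses the injection \([\fml{x}] \mapsto \Sigma h'\fml{x}\) together with Lemma~\ref{lem:wSm_restriction} to make \(Z/{\sim}\) a finitely total \(\Sigma\)-monoid through which \(h'\) factors; smallness of the resulting solution set is then automatic, since every such object is a quotient of a subset of the small set \(A^*\) (Proposition~\ref{prop:family_set}), so no cardinal arithmetic is needed. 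You instead take the reachable sub-\(\Sigma\)-monoid \(B \subseteq B'\) generated by \(h'(A)\), bound \(\abs{B}\) by a transfinite closure up to \(\omega_1\), and pass to isomorphism representatives carried by cardinals \(\leq \mu\) --- the classical Freyd-style solution-set argument --- and all your steps check out: the induced structure on \(B\) from Lemma~\ref{lem:wSm_restriction}, its finite totality, the closure at stage \(\omega_1\) by regularity, and the smallness of \(\mathcal{T}\) and hence of \(S\). What each approach buys: the paper's syntactic quotient gives smallness for free and transfers almost verbatim to the \(\Sigma\)-group case (Proposition~\ref{prop:g_ft_adj}), whereas yours is the standard textbook technique and realises the approximating object as an honest subobject of the target. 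One simplification worth noting: closing under \emph{all} summable families is stronger than you need. Closure under finite sums already yields a finitely total sub-\(\Sigma\)-monoid through which \(h'\) factors, because for any summable family \(\fml{a} \in A^*\) the sum of \(h'\fml{a}\) equals \(h'(\Sigma \fml{a})\), which lies in \(h'(A) \subseteq B\) from the outset; with that weakening the transfinite induction collapses to countably many stages and the bound \(\abs{B} \leq \max(\abs{A},\aleph_0)\) suffices.
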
 \begin{proof}
  Let \(G \colon \SCat{ft} \into \SCat{w}\) denote the canonical embedding.
  It is immediate that \(\SCat{ft}\) is locally small since there is a faithful forgetful functor \(\SCat{ft} \to \Set\).
  According to Proposition~\ref{prop:SCat*_complete}, the category \(\SCat{ft}\) is complete and the embedding functor \(G\) preserves limits.
  It remains to show that, for any \(X \in \SCat{w}\), the comma category \(\comma{X}{G}\) has a weakly initial set; then, the existence of a left adjoint will follow from the general adjoint functor theorem (see Theorem~\ref{thm:GAFT}).
  Let \(Y \in \SCat{ft}\) be an arbitrary finitely total \(\Sigma\)-monoid and let \(f \colon X \to G(Y)\) be an arbitrary \(\Sigma\)-homomorphism.
  Let \(Z\) be the following subset of \(X^*\):
  \begin{equation*}
    Z = \{\fml{x} \in X^* \mid f\fml{x} \text{ is summable in } Y\}
  \end{equation*}
  and let \(\sim\) be the equivalence relation on \(Z\) where:
  \begin{equation*}
    \fml{x} \sim \fml{x'} \iff \Sigma f\fml{x} = \Sigma f\fml{x'}.
  \end{equation*}
  Let \(A_f\) be the quotient set \(Z/{\sim}\) and let \(\bar{f} \colon A_f \to Y\) be the injective function that maps each \([\fml{x}] \in A_f\) to \(\Sigma f\fml{x}\).
  According to Lemma~\ref{lem:wSm_restriction}, there is a partial function \(\Sigma^{\bar{f}} \colon A_f^* \pto A_f\) induced by \(\bar{f}\):
  \begin{equation*}
    \Sigma^{\bar{f}} \{[\fml{x}_j]\}_J = \begin{cases}
      [\fml{x}] &\ifc \exists [\fml{x}] \in A_f \st \Sigma \{\bar{f}[\fml{x}_j]\}_J \keq \bar{f}[\fml{x}] \\
      \undefined &\otherwise.
    \end{cases}
  \end{equation*}
  which, by virtue of \(\bar{f}\) being injective, satisfies that \((A_f,\Sigma^{\bar{f}})\) is a weak \(\Sigma\)-monoid.
  Using the definition of \(\bar{f}\) we obtain the following equivalent definition of \(\Sigma^{\bar{f}}\):
  \begin{equation*}
    \Sigma^{\bar{f}} \{[\fml{x}_j]\}_J = \begin{cases}
      [\fml{x}] &\ifc \exists \fml{x} \in Z \st \Sigma \{\Sigma f\fml{x}_j\}_J \keq \Sigma f\fml{x} \\
      \undefined &\otherwise.
    \end{cases}
  \end{equation*}
  The definition is independent of the choice of representatives since \(Z\) is comprised of families whose image under \(f\) is summable and \(\sim\) relates those that sum up to the same value in \(Y\).
  Recall that each \(\fml{x}_j\) is in \(Z\) by definition and, consequently, each \(f\fml{x}_j\) is summable; then, when \(J\) is a finite set the family \(\{\Sigma f\fml{x}_j\}_J\) is summable thanks to \(Y\) being finitely total and \(\Sigma \{\Sigma f\fml{x}_j\}_J \keq \Sigma (\uplus_J f\fml{x}_j)\) due to flattening.
  Therefore, whenever \(J\) is finite we have that:
  \begin{equation*}
    \Sigma^{\bar{f}} \{[\fml{x}_j]\}_J \keq [\uplus_J \fml{x}_j]
  \end{equation*}
  implying that every finite family is summable in \(\Sigma^{\bar{f}}\) so that \((A_f,\Sigma^{\bar{f}}) \in \SCat{ft}\).

  Let \(q \colon X \to G(A_f)\) be the function that maps each \(x \in X\) to \([\{x\}]\).
  Assume that \(\fml{x} = \{x_i\}_I \in X^*\) is summable, then \(f\fml{x}\) is summable in \(Y\) due to \(f\) being a \(\Sigma\)-homomorphism; thus, \(\fml{x} \in Z\) and \(q\) is a \(\Sigma\)-homomorphism:
  \begin{align*}
    q(\Sigma \fml{x}) &= [\{\Sigma \fml{x}\}] && \text{(def\@. of \(q\))} \\
      &= [\fml{x}] && \text{(\(f(\Sigma \fml{x}) = \Sigma f\fml{x}\))} \\
      &\keq \Sigma^{\bar{f}} \{[\{x_i\}]\}_I && \text{(\(\Sigma f\fml{x} = \Sigma \{\Sigma \{f(x_i)\}\}_I\))} \\
      &= \Sigma^{\bar{f}} \{q(x_i)\}_I. && \text{(def\@. of \(q\))}
  \end{align*}
  According to Lemma~\ref{lem:wSm_restriction}, the function \(\bar{f} \colon A_f \to Y\) is also a \(\Sigma\)-homomorphism and it is straightforward to check that \(f = G(\bar{f}) \circ q\).
  Thus, it has been shown that for every object \((f,Y)\) in the comma category \(\comma{X}{G}\) there is an object \((q,A_f)\) such that a morphism \(\bar{f} \colon (q,A_f) \to (f,Y)\) exists.
  Let \(S\) be the collection of all such \((q,A_f)\) objects; notice that \(S\) is a (small) set since each \(A_f\) is the quotient of a subset of \(X^*\) (which is itself small, Proposition~\ref{prop:family_set}), so the collection of all such \(A_f\) is small and, for each of them, there is only a small collection of partial functions \(A_f^* \pto A_f\).
  Consequently, \(S\) is a weakly initial set in the comma category \(\comma{X}{G}\) and the claim that \(G\) has a left adjoint follows from the general adjoint functor theorem (see Theorem~\ref{thm:GAFT}).
\end{proof}

Since \(\SCat{ft}\) is a full subcategory of \(\SCat{w}\) it follows that \(\SCat{ft}\) is a reflective subcategory of \(\SCat{w}\).
The proof given above follows the same strategy that Hoshino~\cite{RTUDC} used to show that the category of totally defined strong \(\Sigma\)-monoids is a reflective subcategory of \(\SCat{s}\).
It is reasonable to think that a constructive proof may be achieved using the following strategy: given any weak \(\Sigma\)-monoid \((X,\Sigma)\), let \((X,+)\) be the partially commutative monoid obtained by defining \(x+x' \keq \Sigma \{x,x'\}\) and let \((Y,+')\) be its free commutative monoid constructed via the left adjoint to \(\CMon \into \cat{PCM}\) (see~\cite{JacobsPCM}); then, we may attempt to extend \((Y,+')\) to a \(\Sigma\)-monoid where the only infinite families that are summable are those required for \(\eta_X \colon (X,\Sigma) \to (Y,\Sigma')\) to be a \(\Sigma\)-homomorphism, plus the ones imposed by bracketing and flattening.
Such a strategy was attempted but, unfortunately, it was not clear whether quotienting the underlying set \(Y\) would be necessary to guarantee that different bracketings of the same infinite family would agree in their sum and, if quotienting is required, verifying that the definition of \(\Sigma'\) is independent of the choice of representatives becomes exceedingly subtle and it was not pursued further.

\subsection{\(\SCat{g}\) is a reflective subcategory of \(\SCat{ft}\)}

The canonical embedding \(\SCat{g} \into \SCat{ft}\) has a left adjoint.
An explicit construction of the free \(\Sigma\)-group has not been achieved; instead, the proof uses the general adjoint functor theorem.
The proof is a combination of the Grothendieck group construction (Example~\ref{ex:Ab_CMon_adj}) and the proof of Proposition~\ref{prop:ft_w_adj} from the previous subsection.

\begin{proposition} \label{prop:g_ft_adj}
  There is a left adjoint functor to the embedding \(\SCat{g} \into \SCat{ft}\).
\end{proposition}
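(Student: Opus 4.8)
The plan is to invoke the general adjoint functor theorem (Theorem~\ref{thm:GAFT}) exactly as in the proof of Proposition~\ref{prop:ft_w_adj}. Write \(G \colon \SCat{g} \into \SCat{ft}\) for the embedding. Local smallness of \(\SCat{g}\) is immediate from the faithful forgetful functor to \(\Set\), while Proposition~\ref{prop:SCat*_complete} supplies completeness of \(\SCat{g}\) and the fact that \(G\) preserves limits. So the only real work is to exhibit, for each \(X \in \SCat{ft}\), a weakly initial set in the comma category \(\comma{X}{G}\).

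To build it I would imitate Proposition~\ref{prop:ft_w_adj} but replace single families by the ``formal differences'' of the Grothendieck construction (Example~\ref{ex:Ab_CMon_adj}). Fix a \(\Sigma\)-group \(Y \in \SCat{g}\) and a \(\Sigma\)-homomorphism \(f \colon X \to G(Y)\); recall that \(Y\) is in particular an abelian group (Remark~\ref{rmk:SCat_inverse}), so subtraction is available. Let
\[
  Z = \{(\fml{x}_+,\fml{x}_-) \in X^* \times X^* \mid f\fml{x}_+ \text{ and } f\fml{x}_- \text{ are summable in } Y\},
\]
and quotient by \((\fml{x}_+,\fml{x}_-) \sim (\fml{x'}_+,\fml{x'}_-)\) iff \(\Sigma f\fml{x}_+ - \Sigma f\fml{x}_- = \Sigma f\fml{x'}_+ - \Sigma f\fml{x'}_-\). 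Writing \(A_f = Z/{\sim}\), the map \(\bar{f}[(\fml{x}_+,\fml{x}_-)] = \Sigma f\fml{x}_+ - \Sigma f\fml{x}_-\) is injective by construction, and \(0 \in \im(\bar{f})\) via \([(\varnothing,\varnothing)]\); thus Lemma~\ref{lem:wSm_restriction} equips \(A_f\) with a weak \(\Sigma\)-monoid structure \(\Sigma^{\bar{f}}\) making \(\bar{f}\) a \(\Sigma\)-homomorphism.

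The crux is then to verify that \((A_f,\Sigma^{\bar{f}})\) lands in \(\SCat{g}\) and that the evident map \(q(x) = [(\{x\},\varnothing)]\) is a \(\Sigma\)-homomorphism. Finite totality follows by taking componentwise disjoint unions \(\uplus_J \fml{x}_{j,+}\) and \(\uplus_J \fml{x}_{j,-}\) of a finite family, applying flattening in \(Y\), and using that finite \(\Sigma\)-sums in \(Y\) agree with its group addition. Inverses are given by the swap \([(\fml{x}_+,\fml{x}_-)] \mapsto [(\fml{x}_-,\fml{x}_+)]\), which transports along the injection \(\bar{f}\) to the negation map on \(Y\); since negation is a \(\Sigma\)-homomorphism on the \(\Sigma\)-group \(Y\), so is the swap, and \(\Sigma\{[(\fml{x}_+,\fml{x}_-)],[(\fml{x}_-,\fml{x}_+)]\}\) maps to \(0\) under \(\bar{f}\), hence equals \([(\varnothing,\varnothing)]\). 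That \(q\) is a \(\Sigma\)-homomorphism and that \(f = G(\bar{f}) \circ q\) are short checks through the injective \(\bar{f}\), using that \(f\) is itself a \(\Sigma\)-homomorphism.

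Finally, since each \(A_f\) is a quotient of a subset of \(X^* \times X^*\) --- which is small by Proposition~\ref{prop:family_set} --- the collection \(S\) of all objects \((q,A_f)\) obtained this way is a (small) set, and by construction every object \((f,Y)\) of \(\comma{X}{G}\) receives a morphism \(\bar{f}\) from some member of \(S\). Thus \(S\) is weakly initial and Theorem~\ref{thm:GAFT} yields the desired left adjoint. I expect the main obstacle to be the bookkeeping showing that \((A_f,\Sigma^{\bar{f}})\) is genuinely a \(\Sigma\)-group --- in particular that the swap is a \(\Sigma\)-homomorphism on infinite summable families --- which is precisely where transporting structure along the injective \(\bar{f}\) into the ambient \(\Sigma\)-group \(Y\) does the heavy lifting.
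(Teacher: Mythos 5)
Your proposal is correct and follows essentially the same route as the paper's own proof: the general adjoint functor theorem applied after building a weakly initial set from pairs of families \((\fml{p},\fml{n}) \in X^* \times X^*\) (the Grothendieck-style formal differences), quotienting by equality of \(\Sigma f\fml{p} - \Sigma f\fml{n}\), transporting the \(\Sigma\)-structure along the injective \(\bar{f}\) via Lemma~\ref{lem:wSm_restriction}, and verifying finite totality and that the swap gives \(\Sigma\)-homomorphic inverses. The paper's proof differs only in the level of detail in these verifications, not in substance.
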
 \begin{proof}
  Let \(G \colon \SCat{g} \into \SCat{ft}\) denote the canonical embedding.
  It is immediate that \(\SCat{g}\) is locally small, since there is a faithful forgetful functor \(\SCat{g} \to \Set\).
  According to Proposition~\ref{prop:SCat*_complete}, the category \(\SCat{g}\) is complete and the embedding functor \(G\) preserves limits.
  It remains to show that, for any \(X \in \SCat{ft}\), the comma category \(\comma{X}{G}\) has a weakly initial set; then, the existence of a left adjoint will follow from the general adjoint functor theorem (see Theorem~\ref{thm:GAFT}).
  Let \(Y \in \SCat{g}\) be an arbitrary \(\Sigma\)-group and let \(f \colon X \to G(Y)\) be an arbitrary \(\Sigma\)-homomorphism.
  Let \(Z\) be the following subset of \(X^* \times X^*\):
  \begin{equation*}
    Z = \{(\fml{p},\fml{n}) \in X^* \times X^* \mid f\fml{p} \text{ and } f\fml{n} \text{ are summable in } Y\}
  \end{equation*}
  and let \(\sim\) be the equivalence relation on \(Z\) where:
  \begin{equation*}
    (\fml{p},\fml{n}) \sim (\fml{p'},\fml{n'}) \iff \Sigma \{\Sigma f\fml{p}, -(\Sigma f\fml{n})\} = \Sigma \{\Sigma f\fml{p'}, -(\Sigma f\fml{n'})\}.
  \end{equation*}
  Let \(A_f\) be the quotient set \(Z/{\sim}\) and let \(\bar{f} \colon A_f \to Y\) be the injective function defined as follows for every equivalence class \([(\fml{p},\fml{n})] \in A_f\):
  \begin{equation*}
    \bar{f}[(\fml{p},\fml{n})] = \Sigma \{\Sigma f\fml{p}, -\Sigma f\fml{n}\}.
  \end{equation*}
  According to Lemma~\ref{lem:wSm_restriction}, there is a partial function \(\Sigma^{\bar{f}} \colon A_f^* \pto A_f\) induced by \(\bar{f}\):
  \begin{equation*}
    \Sigma^{\bar{f}} \{[(\fml{p}_j,\fml{n}_j)]\}_J = \begin{cases}
      [(\fml{p},\fml{n})] &\ifc \exists (\fml{p},\fml{n}) \in Z \st \Sigma \{\bar{f}[(\fml{p}_j,\fml{n}_j)]\}_J \keq \bar{f}[(\fml{p},\fml{n})] \\
      \undefined &\otherwise.
    \end{cases}
  \end{equation*}
  which, by virtue of \(\bar{f}\) being injective, satisfies that \((A_f,\Sigma^{\bar{f}})\) is a weak \(\Sigma\)-monoid.
  Next, we show that every family \(\{(\fml{p}_j,\fml{n}_j) \in A_f\}_J\) such that \(J\) is finite is a summable family.
  Notice that, by definition of \(A_f\), if \((\fml{p}_j,\fml{n}_j) \in A_f\) for all \(j \in J\) then \(f\fml{p}_j\) and \(f\fml{n}_j\) are summable in \(Y\) and, due to \(J\) being finite and \(Y\) finitely total, \(\{f\fml{p}_j\}_J\) and \(\{f\fml{n}_j\}_J\) are summable as well.
  It follows that:
  \begin{align*}
    \Sigma \{\bar{f}[(\fml{p}_j,\fml{n}_j)]\}_J &= \Sigma \{\Sigma \{\Sigma f\fml{p}_j, -\Sigma f\fml{n}_j\}\}_J && \text{(def\@. \(\bar{f}\))} \\
    &\keq \Sigma \{ \Sigma \{\Sigma f\fml{p}_j\}_J, \Sigma \{-\Sigma f\fml{n}_j\}_J\} && \text{(flattening, then bracketing)} \\
    &\keq \Sigma \{ \Sigma (\uplus_J f\fml{p}_j), -\Sigma (\uplus_J f\fml{n}_j) \} && \text{(inversion \(\Sigma\)-hom\@. and flattening)} \\
    &= \bar{f}[(\uplus_J \fml{p}_j, \uplus_J \fml{n}_j)] && \text{(def\@. \(\bar{f}\))}
  \end{align*}
  where the first \(\keq\) follows from the fact that \(J\) is finite so that \(\Sigma \{\Sigma \{a_j,-b_j\}\}_J\) can be first flattened into a sum of a finite family and then bracketed into \(\Sigma \{\Sigma \{a_j\}_J, \Sigma \{-b_j\}_J\}\).
  According to the definition of \(\Sigma^{\bar{f}}\), the above implies that
  \begin{equation*}
    \Sigma^{\bar{f}} \{[(\fml{p}_j,\fml{n}_j)]\}_J \keq [(\uplus_J \fml{p}_j, \uplus_J \fml{n}_j)]
  \end{equation*}
  so that every finite family in \(A_f^*\) is summable.
  Finally, for every element \([(\fml{p},\fml{n})] \in A_f\) the element \([(\fml{n},\fml{p})] \in A_f\) acts as its inverse; to check this, notice that the neutral element of \(A_f\) is \([(\varnothing,\varnothing)]\) which satisfies \(\bar{f}[(\varnothing,\varnothing)] = 0\) and let \(a = \Sigma f\fml{p}\) and \(b = \Sigma f\fml{n}\) so that:
  \begin{align*}
    \Sigma \{\bar{f}[(\fml{p},\fml{n})],\bar{f}[(\fml{n},\fml{p})]\} &= \Sigma \{\Sigma \{a,-b\},\Sigma \{b,-a\}\} &&\text{(def. of \(\bar{f}\))} \\
      &= \Sigma \{a,-b,b,-a\} &&\text{(flattening in \(Y\))} \\
      &= \Sigma \{\Sigma \{a,-a\},\Sigma \{b,-b\}\} &&\text{(bracketing in \(Y\))} \\
      &= \Sigma \{0,0\} = 0 &&\text{(inverses in \(Y\))}
  \end{align*}
  thus, \(\Sigma^{\bar{f}} \{[(\fml{p},\fml{n})],[(\fml{n},\fml{p})]\} = [(\varnothing,\varnothing)]\).
  It is straightforward to check that:
  \begin{equation*}
    \Sigma^{\bar{f}} \{[(\fml{p}_j,\fml{n}_j)]\}_J \keq [(\fml{p},\fml{n})] \iff \Sigma^{\bar{f}} \{[(\fml{n}_j,\fml{p}_j)]\}_J \keq [(\fml{n},\fml{p})]
  \end{equation*}
  so that the inverse mapping is a \(\Sigma\)-homomorphism.
  Thus, it has been established that \((A_f,\Sigma^{\bar{f}})\) is a \(\Sigma\)-group.

  Let \(q \colon X \to G(A_f)\) be the function that maps each \(x \in X\) to \([(\{x\},\varnothing)]\).
  It is straightforward to check that \(q\) is a \(\Sigma\)-homomorphism since for every summable family \(\fml{x} = \{x_i\}_I \in X^*\) it follows that:
  \begin{equation*}
    q(\Sigma \fml{x}) = [(\{\Sigma \fml{x}\},\varnothing)] = [(\fml{x},\varnothing)] \keq \Sigma^{\bar{f}} \{q(x_i)\}_I.
  \end{equation*}
  According to Lemma~\ref{lem:wSm_restriction}, the function \(\bar{f} \colon A_f \to Y\) is also a \(\Sigma\)-homomorphism and it is straightforward to check that \(f = G(\bar{f}) \circ q\).
  Thus, it has been shown that for every object \((f,Y)\) in the comma category \(\comma{X}{G}\) there is an object \((q,A_f)\) such that a morphism \(\bar{f} \colon (q,A_f) \to (f,Y)\) exists.
  Let \(S\) be the collection of all such \((q,A_f)\) objects; notice that \(S\) is a (small) set since each \(A_f\) is a quotient of a subset of \(X^*\) (which is itself small, Proposition~\ref{prop:family_set}), so the collection of all such \(A_f\) is small and, for each of them, there is only a small collection of partial functions \(A_f^* \pto A_f\).
  Consequently, \(S\) is a weakly initial set in the comma category \(\comma{X}{G}\) and the claim that \(G\) has a left adjoint follows from the general adjoint functor theorem (see Theorem~\ref{thm:GAFT}).
\end{proof}

Since \(\SCat{g}\) is a full subcategory of \(\SCat{ft}\) it follows that \(\SCat{g}\) is a reflective subcategory of \(\SCat{ft}\).
A constructive proof may perhaps be achieved by adapting the construction of the Grothendieck group (see Example~\ref{ex:Ab_CMon_adj}) to infinitary addition \(\Sigma\).
Unfortunately, as in the case of the left adjoint to \(\SCat{ft} \into \SCat{w}\), it is unclear whether quotienting would be necessary to guarantee that the bracketing axiom is satisfied and, if quotienting is required, proving well-definedness of \(\Sigma\) becomes exceedingly subtle.

\subsection{Tensor product of \(\Sigma\)-monoids}
\label{sec:SCat_tensor}

The results in this section are adapted from those in the appendix of Hoshino's work on the category of strong \(\Sigma\)-monoids~\cite{RTUDC}.
The proofs presented here are more detailed and apply to all of the \(\SCat{*}\) categories but, in essence, the strategy they follow is due to Hoshino.

Proposition~\ref{prop:SCat*_complete} already established that each \(\SCat{*}\) category has small products; denote the terminal object as \((\{0\},\Sigma) \in \SCat{*}\) whose \(\Sigma\) is the unique total function of its type. Consequently, \(\SCat{*}\) can be given a Cartesian monoidal structure.

\begin{proposition}
  For each \(\SCat{*}\) category, let \(\times \colon \SCat{*} \times \SCat{*} \to \SCat{*}\) be the functor mapping each pair of objects to their categorical product and each pair of morphisms to the universal morphism that makes the following diagram
  \[\begin{tikzcd}[row sep=small, column sep=small]
    X && {X \times Y} && Y \\
    \\
    Z && {Z \times W} && W
    \arrow["{\pi_l}"', from=1-3, to=1-1]
    \arrow["{\pi_l}", from=3-3, to=3-1]
    \arrow["{\pi_r}", from=1-3, to=1-5]
    \arrow["{\pi_r}"', from=3-3, to=3-5]
    \arrow["g", from=1-5, to=3-5]
    \arrow["f"', from=1-1, to=3-1]
    \arrow[dashed, from=1-3, to=3-3]
\end{tikzcd}\]
  commute.
  Then, each \((\SCat{*},\times,\{0\})\) is a symmetric monoidal category.
\end{proposition}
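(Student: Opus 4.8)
The plan is to recognise the statement as the familiar fact that a category with finite products is automatically a symmetric monoidal category (a \emph{cartesian} monoidal structure), so that the only content specific to the $\SCat{*}$ categories is that the relevant limits exist. That existence is already in hand: Proposition~\ref{prop:SCat*_complete} established that each $\SCat{*}$ is complete, so binary products supply the action of $\times$ on objects while the empty product is the terminal object $(\{0\},\Sigma)$, which will serve as the monoidal unit.

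First I would check that $\times$ is a functor. For morphisms $f \colon X \to Z$ and $g \colon Y \to W$, the morphism $f \times g$ is defined as the unique arrow satisfying $\pi_l \circ (f \times g) = f \circ \pi_l$ and $\pi_r \circ (f \times g) = g \circ \pi_r$, exactly as in the displayed diagram. Preservation of identities and composition, $\id_X \times \id_Y = \id_{X \times Y}$ and $(f' \times g') \circ (f \times g) = (f' \circ f) \times (g' \circ g)$, then follows immediately: in each case both sides are arrows into a product whose composites with $\pi_l$ and $\pi_r$ coincide, so they are equal by the uniqueness clause of the universal property.

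Next I would produce the structural isomorphisms. The associator $\alpha_{X,Y,Z}$, the unitors $\lambda_X, \rho_X$, and the braiding $\sigma_{X,Y}$ are each defined as the unique morphism characterised by its composites with the projections of the target product (for instance, $\sigma_{X,Y} \colon X \times Y \to Y \times X$ is the unique arrow with $\pi_l \circ \sigma_{X,Y} = \pi_r$ and $\pi_r \circ \sigma_{X,Y} = \pi_l$). Each is invertible, with inverse supplied by the analogous universal arrow in the opposite direction, and each is natural because, once more, the two composites defining naturality agree after post-composition with both projections. The unitors additionally use that $\{0\}$ is terminal, so the projection away from the $\{0\}$ factor is an isomorphism.

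The remaining work --- the triangle and pentagon coherence axioms, the braiding axioms, and the symmetry condition $\sigma_{Y,X} \circ \sigma_{X,Y} = \id$ --- is the bookkeeping part, and I expect it to be the only mildly tedious step, though still entirely mechanical. For each such identity, both sides are morphisms into a product, so it suffices to verify that composing with each projection yields the same arrow; these verifications are short unwindings of the defining equations above. Because all the coherence data are pinned down by universal properties rather than by any feature peculiar to infinitary addition, no genuine obstacle arises: the argument is uniform across $\SCat{w}$, $\SCat{s}$, $\SCat{ft}$ and $\SCat{g}$, the only prerequisite being the existence of products furnished by Proposition~\ref{prop:SCat*_complete}.
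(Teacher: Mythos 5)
Your proposal is correct and takes essentially the same route as the paper: both treat this as the standard cartesian monoidal structure, take the existence of binary products and the terminal object $(\{0\},\Sigma)$ from Proposition~\ref{prop:SCat*_complete}, obtain the associator, unitors and braiding from the universal property of products, and dispose of the coherence axioms by uniqueness of universal arrows. The paper's own proof is in fact terser than yours—it simply invokes the well-known construction—so your extra detail on functoriality and naturality is an elaboration of, not a departure from, the same argument.
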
 \begin{proof}
  This construction is well-known for categories with finitary products.
  Unitors are given by projections and associators are given by the fact that both \((X \times Y) \times Z\) and \(X \times (Y \times Z)\) are categorical products of the collection \(\{X,Y,Z\}\), so they are isomorphic.
  The (symmetric) braiding is given by the fact that \(X \times Y\) and \(Y \times X\) are both products and, therefore, isomorphic.
  All coherence diagrams follow from the universal properties of categorical products and terminal object.
\end{proof}

Unfortunately such a Cartesian monoidal structure is not suitable for our purposes: we are interested in defining a monoidal structure on each \(\SCat{*}\) category so that categories enriched over \(\SCat{*}\) satisfy that their composition distributes over addition.
As discussed in Section~\ref{sec:enriched_cat}, to do so we need that each of the \(\SCat{*}\) categories has tensor products and use these to define their monoidal structure.
Such is the goal of this section, which begins by defining the notion of \(\Sigma\)-bilinear function.

\begin{definition} \label{def:bihom}
  For each \(\SCat{*}\) category, let \(X,Y,Z \in \SCat{*}\) be arbitrary objects and let \(f \colon X \times Y \to Z\) be a function.
  We say \(f\) is a \emph{\(\Sigma\)-bilinear} function if \(f(x,-)\) and \(f(-,y)\) are \(\Sigma\)-homomorphisms for all \(x \in X\) and all \(y \in Y\).
  Let \(\SCat{*}^{X,Y}(Z)\) be the set of all \(\Sigma\)-bilinear functions of type \(X \times Y \to Z\).
\end{definition}

\begin{remark} \label{rmk:bihom_inverse}
  It immediately follows from the definition of a \(\Sigma\)-bilinear function \(f \colon X \times Y \to Z\) that \(f(x,y) = 0\) if \(x = 0\) or \(y = 0\):
  \begin{equation*}
    f(x,0) = f(x,\Sigma \varnothing) = \Sigma \varnothing = 0.
  \end{equation*}
  Furthermore, if \(x \in X\) has an inverse then \(f(\Sigma \{x,-x\}, y) = f(0,y) = 0\) and, due to \(f\) being \(\Sigma\)-bilinear:
  \begin{equation*}
    \Sigma \{f(x,y), f(-x,y)\} \keq 0.
  \end{equation*}
  When the codomain \(Z\) is a \(\Sigma\)-group the uniqueness of the inverse element implies \(-f(x,y) = f(-x,y)\) and similarly \(-f(x,y) = f(x,-y)\) if \(y \in Y\) has an inverse.
\end{remark}

For each \(\SCat{*}\) category and every pair of objects \(X,Y \in \SCat{*}\) there is a functor \(\SCat{*}^{X,Y} \colon \SCat{*} \to \Set\) that maps each object \(Z \in \SCat{*}\) to the set of \(\Sigma\)-bilinear functions \(\SCat{*}^{X,Y}(Z)\) and maps each \(\Sigma\)-homomorphism \(f \colon Z \to W\) to a function \(\SCat{*}^{X,Y}(Z) \to \SCat{*}^{X,Y}(W)\) that maps each \(h \in \SCat{*}^{X,Y}(Z)\) to the \(\Sigma\)-bilinear function \(f \circ h\).

\begin{lemma} \label{lem:bihom_set_functor_preserves_limits}
  For all \(X,Y \in \SCat{w}\), the functor \(\SCat{w}^{X,Y} \colon \SCat{w} \to \Set\) preserves limits.
\end{lemma}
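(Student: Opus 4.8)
The plan is to exploit the standard fact that a \(\Set\)-valued functor preserves all small limits as soon as it preserves small products and equalizers, combined with the explicit descriptions of these limits in \(\SCat{w}\) given in Proposition~\ref{prop:SCatw_complete} and Lemma~\ref{lem:wSm_restriction}. Since \(\SCat{w}\) is complete with all limits built from small products and equalizers, it suffices to check that \(\SCat{w}^{X,Y}\) sends each of these two kinds of limiting cone in \(\SCat{w}\) to a limiting cone in \(\Set\); a general limit is then handled by expressing it as an equalizer of two maps between products.

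For products, let \(\{Z_i\}_I\) be a small family with product \(\times_I Z_i\) and projections \(\pi_i\). I would exhibit the natural bijection
\[ \SCat{w}^{X,Y}(\times_I Z_i) \cong \prod_I \SCat{w}^{X,Y}(Z_i) \]
sending \(h\) to the tuple \((\pi_i \circ h)_I\). Since each \(\pi_i\) is a \(\Sigma\)-homomorphism, post-composition preserves \(\Sigma\)-bilinearity, so this assignment is well-defined; its inverse sends a tuple \((h_i)_I\) to the map \((x,y) \mapsto (h_i(x,y))_I\). The load-bearing observation is that this tupled map is again \(\Sigma\)-bilinear: for a summable family \(\fml{y}\) in \(Y\) with \(\Sigma \fml{y} \keq y\), each \(h_i(x,\fml{y})\) is summable, so by the componentwise definition of \(\Sigma^\times\) in~\eqref{eq:SCat_small_product} the family \((h_i(x,\fml{y}))_I\) is summable with sum \((h_i(x,y))_I\), and symmetrically in the first argument.

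For equalizers, let \(e \colon E \into Z\) be the equalizer of \(f,g \colon Z \to W\), so that \(E = \{z : f(z) = g(z)\}\) carries the restricted structure \(\Sigma^e\) of Lemma~\ref{lem:wSm_restriction}. I would show that \(\SCat{w}^{X,Y}(E)\) is the equalizer in \(\Set\) of the maps \(f \circ (-)\) and \(g \circ (-)\), i.e. that it is in bijection with \(\{h \in \SCat{w}^{X,Y}(Z) : f \circ h = g \circ h\}\). Any such \(h\) satisfies \(h(x,y) \in E\) for every \((x,y)\), so there is a unique set map \(k\) with \(e \circ k = h\); the crux is that \(k\) is automatically \(\Sigma\)-bilinear. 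This follows directly from the definition of \(\Sigma^e\): for a summable family \(\fml{y}\) with sum \(y\), since \(h(x,-)\) is a \(\Sigma\)-homomorphism we have that \(e(k(x,\fml{y})) = h(x,\fml{y})\) is summable with sum \(h(x,y) = e(k(x,y))\), whence Lemma~\ref{lem:wSm_restriction} makes \(k(x,\fml{y})\) summable in \(E\) with sum \(k(x,y)\); the argument in the first variable is identical.

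The remaining checks — naturality of both bijections and the reduction of an arbitrary limit to products and equalizers — are routine. The one genuinely delicate point, and where I would concentrate my care, is showing that the set-theoretic factorisation through a limit is always \(\Sigma\)-bilinear rather than merely a function of sets. This is exactly where the concrete pointwise descriptions of \(\Sigma^\times\) and \(\Sigma^e\) are indispensable: they guarantee that summability of a family in the limit, and the value of its sum, are detected by the legs of the limiting cone, which is precisely the condition that \(\Sigma\)-bilinearity of the factorisation requires.
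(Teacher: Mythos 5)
Your proof is correct and follows essentially the same route as the paper's: verify that \(\SCat{w}^{X,Y}\) preserves small products and equalizers using the explicit constructions from Proposition~\ref{prop:SCatw_complete} and Lemma~\ref{lem:wSm_restriction}, then invoke completeness of \(\SCat{w}\) to conclude that all limits are preserved. If anything, your treatment of the equalizer case is slightly more explicit than the paper's, since you spell out why the factorisation \(k\) through \(E\) is automatically \(\Sigma\)-bilinear via the definition of \(\Sigma^e\), a point the paper passes over when identifying \(\SCat{w}^{X,Y}(E)\) with a subset of \(\SCat{w}^{X,Y}(Z)\).
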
 \begin{proof}
  Let \(Z,W \in \SCat{w}\) be arbitrary objects, let \(f,g \in \SCat{w}(Z,W)\) be arbitrary \(\Sigma\)-homomorphisms and let \(E \in \SCat{w}\) be their equalizer constructed as in Proposition~\ref{prop:SCatw_complete}. Recall that \(E \subseteq Z\) and \(z \in E\) implies \(f(z) = g(z)\) so it is clear that any \(h \in \SCat{w}^{X,Y}(E)\) satisfies \(f \circ h = g \circ h\).
  Conversely, if a \(\Sigma\)-bilinear function \(h \colon X \times Y \to Z\) satisfies \(f \circ h = g \circ h\) then, for every element \(z \in \im(h)\), it holds that \(f(z) = g(z)\), so \(\im(h) \subseteq E\).
  Therefore, \(\SCat{w}^{X,Y}(E)\) is precisely the subset of \(\SCat{w}^{X,Y}(Z)\) such that for all \(h \in \SCat{w}^{X,Y}(E)\),
  \begin{equation*}
    \SCat{w}^{X,Y}(f)(h) = f \circ h = g \circ h = \SCat{w}^{X,Y}(g)(h)
  \end{equation*}
  \ie{} \(\SCat{w}^{X,Y}(E)\) is an equalizer of the diagram in \(\Set\).
  Thus, \(\SCat{w}^{X,Y}\) preserves equalizers.

  Let \(Z \times W \in \SCat{w}\) be the categorical product of two weak \(\Sigma\)-monoids and let \(\pi_l\) and \(\pi_r\) be the corresponding projections.
  Since the categorical product in \(\Set\) is given by the Cartesian product, there is a unique function \(m\) making the following diagram
  \[\begin{tikzcd}[column sep=small]
    && {\SCat{w}^{X,Y}(Z \times W)} \\
    \\
    {\SCat{w}^{X,Y}(Z)} && {\SCat{w}^{X,Y}(Z) \times \SCat{w}^{X,Y}(W)} && {\SCat{w}^{X,Y}(W)}
    \arrow["{\SCat{w}^{X,Y}(\pi_l)}"', from=1-3, to=3-1]
    \arrow["{\pi_l}", from=3-3, to=3-1]
    \arrow["m", dashed, from=1-3, to=3-3]
    \arrow["{\SCat{w}^{X,Y}(\pi_r)}", from=1-3, to=3-5]
    \arrow["{\pi_r}"', from=3-3, to=3-5]
  \end{tikzcd}\]
  commute in \(\Set\).
  Such a function \(m\) maps every \(\Sigma\)-bilinear function \(f \in \SCat{w}^{X,Y}(Z \times W)\) to the pair of \(\Sigma\)-bilinear functions \((\pi_l \circ f, \pi_r \circ f)\).
  Conversely, there is a function \(u \colon \SCat{w}^{X,Y}(Z) \times \SCat{w}^{X,Y}(W) \to \SCat{w}^{X,Y}(Z \times W)\) mapping each pair of \(\Sigma\)-bilinear functions \((g,h)\) to the function given below:
  \begin{equation*}
    k(x,y) = (g(x,y),h(x,y))
  \end{equation*}
  for every \(x \in X\) and \(y \in Y\).
  Notice that \(u\) is well-defined since \(k\) is a \(\Sigma\)-bilinear function:
  \begin{align*}
    k(\Sigma \fml{x},y) &= (g(\Sigma \fml{x},y),h(\Sigma \fml{x},y)) = (\Sigma g(\fml{x},y),\Sigma h(\fml{x},y)) \\
      &= \Sigma^\times (g(\fml{x},y), h(\fml{x},y)) = \Sigma^\times k(\fml{x},y)
  \end{align*}
  where \(g(\fml{x},y)\) is a shorthand for the family obtained after applying \(g(-,y)\) to each element in \(\fml{x}\).
  It is straightforward to check that \(u \circ m = \id\) and \(m \circ u = \id\) so that \(\SCat{w}^{X,Y}(Z \times W)\) is isomorphic to \(\SCat{w}^{X,Y}(Z) \times \SCat{w}^{X,Y}(W)\).
  Thus, it follows that \(\SCat{w}^{X,Y}(Z \times W)\) is a categorical product and \(\SCat{w}^{X,Y}\) preserves binary products.
  It is straightforward to generalise this argument to small products and, hence, \(\SCat{w}^{X,Y}\) preserves small products.

  A functor that preserves equalizers and small products and whose domain is a complete category automatically preserves all limits. Therefore, \(\SCat{w}^{X,Y}\) preserves limits, as claimed.
\end{proof}

\begin{corollary} \label{cor:bihom_set_functor_preserves_limits}
  For every \(\SCat{*}\) category and every pair of objects \(X,Y \in \SCat{*}\) the functor \(\SCat{*}^{X,Y} \colon \SCat{*} \to \Set\) preserves limits.
\end{corollary}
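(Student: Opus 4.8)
The plan is to reduce the statement to the case of $\SCat{w}$, which has already been settled in Lemma~\ref{lem:bihom_set_functor_preserves_limits}, by showing that for each of the subcategories $\SCat{s}$, $\SCat{ft}$ and $\SCat{g}$ the bilinear-functor factors through the corresponding embedding into $\SCat{w}$. Write $G \colon \SCat{*} \into \SCat{w}$ for the canonical embedding from~\eqref{eq:SCat_inclusions} (taking $G$ to be the identity functor in the case $\SCat{*} = \SCat{w}$, so that all four cases are handled uniformly). For fixed objects $X, Y \in \SCat{*}$, I would first observe that the two functors $\SCat{*}^{X,Y} \colon \SCat{*} \to \Set$ and $\SCat{w}^{X,Y} \colon \SCat{w} \to \Set$ fit together as
\begin{equation*}
  \SCat{*}^{X,Y} = \SCat{w}^{X,Y} \circ G.
\end{equation*}

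The key observation justifying this equality is that each embedding $G$ is full and faithful and acts as the identity on objects and on morphisms. Consequently, an object $Z \in \SCat{*}$, viewed through $G$, is literally the same weak $\Sigma$-monoid with the same collection of summable families, and a $\Sigma$-homomorphism between objects of $\SCat{*}$ is exactly a $\Sigma$-homomorphism between their images in $\SCat{w}$. Unwinding Definition~\ref{def:bihom}, a $\Sigma$-bilinear function $f \colon X \times Y \to Z$ is determined by the requirement that $f(x,-)$ and $f(-,y)$ be $\Sigma$-homomorphisms, a condition that depends only on the summability structure of $X$, $Y$ and $Z$; since this structure is preserved verbatim by $G$, the set $\SCat{*}^{X,Y}(Z)$ of $\Sigma$-bilinear functions coincides with $\SCat{w}^{X,Y}(G(Z))$, and the action on morphisms matches as well. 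This verification — essentially a matter of checking that no data is lost or altered under the embedding — is the only genuinely load-bearing step; everything downstream is formal.

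With the factorisation in hand I would conclude as follows. By Lemma~\ref{lem:bihom_set_functor_preserves_limits}, the functor $\SCat{w}^{X,Y}$ preserves limits. By Proposition~\ref{prop:SCat*_complete}, every $\SCat{*}$ category is complete and each embedding $G$ creates limits; since the source is complete, $G$ therefore preserves limits. A composite of limit-preserving functors preserves limits, so $\SCat{*}^{X,Y} = \SCat{w}^{X,Y} \circ G$ preserves limits, as claimed. I do not expect any serious obstacle here: the substance of the argument lives entirely in Lemma~\ref{lem:bihom_set_functor_preserves_limits}, and the corollary merely transports that result across the reflective embeddings using fullness and the limit-creation property already recorded in Proposition~\ref{prop:SCat*_complete}.
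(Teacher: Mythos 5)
Your proof is correct and follows essentially the same route as the paper's: the paper likewise factors \(\SCat{*}^{X,Y}\) as \(\SCat{w}^{X,Y}\) composed with the embedding into \(\SCat{w}\) (calling this factorisation "evident"), and then invokes Lemma~\ref{lem:bihom_set_functor_preserves_limits} together with the limit-preservation of the embedding from Proposition~\ref{prop:SCat*_complete}. Your only addition is the explicit justification of the factorisation, which the paper omits.
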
 \begin{proof}
  The previous lemma establishes the claim for \(\SCat{w}\).
  For the rest of the \(\SCat{*}\) categories, it is evident that the corresponding functor \(\SCat{*}^{X,Y}\) is equal to the composition \(\SCat{w}^{U(X),U(Y)} \circ U\) where \(U\) is the embedding functor \(\SCat{*} \into \SCat{w}\).
  Proposition~\ref{prop:SCat*_complete} established that \(U\) preserves limits, so it follows that \(\SCat{*}^{X,Y}\) preserves limits.
\end{proof}

\begin{lemma} \label{lem:SCatw_tensor}
 The comma category \(\comma{\{\bullet\}}{\SCat{w}^{X,Y}}\) has an initial object. Such an initial object will be denoted \((p,X \otimes Y)\).
\end{lemma}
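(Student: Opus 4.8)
The plan is to recognise the statement as an existence result for a universal $\Sigma$-bilinear map and to prove it exactly as Theorem~\ref{thm:GAFT} is proved: by exhibiting a weakly initial set and invoking Lemma~\ref{lem:weakly_initial_set}. Unwinding Definition~\ref{def:comma} with $\{\bullet\} \colon \cat{1} \to \Set$ and $\SCat{w}^{X,Y} \colon \SCat{w} \to \Set$, an object of $\comma{\{\bullet\}}{\SCat{w}^{X,Y}}$ is a pair $(f,Z)$ with $Z \in \SCat{w}$ and $f$ a $\Sigma$-bilinear function $X \times Y \to Z$, while a morphism $(f,Z) \to (g,W)$ is a $\Sigma$-homomorphism $k \colon Z \to W$ with $k \circ f = g$. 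An initial object is therefore precisely a $\Sigma$-bilinear map $p \colon X \times Y \to X \otimes Y$ through which every $\Sigma$-bilinear $f$ factors uniquely, which is the tensor product we are after.

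To set up Lemma~\ref{lem:weakly_initial_set} I would first verify that the comma category is complete and locally small. Completeness follows from the same ingredients used in the proof of Theorem~\ref{thm:GAFT}: $\SCat{w}$ is complete (Proposition~\ref{prop:SCatw_complete}) and $\SCat{w}^{X,Y}$ preserves limits (Corollary~\ref{cor:bihom_set_functor_preserves_limits}), whence $\comma{\{\bullet\}}{\SCat{w}^{X,Y}}$ is complete by Lemma A.2 of~\cite{Leinster}. Local smallness is inherited from that of $\SCat{w}$ and $\Set$.

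The core of the argument is the construction of the weakly initial set, which parallels Propositions~\ref{prop:ft_w_adj} and~\ref{prop:g_ft_adj}. Given any object $(f,Z)$, I would consider $W_f = \{\fml{w} \in (X \times Y)^* \mid f\fml{w} \text{ is summable in } Z\}$ and quotient it by the relation $\fml{w} \sim \fml{w}'$ iff $\Sigma f\fml{w} = \Sigma f\fml{w}'$, writing $A_f = W_f/{\sim}$. The assignment $\bar{f} \colon A_f \to Z$, $[\fml{w}] \mapsto \Sigma f\fml{w}$, is injective by construction and hits $0$ (take $\fml{w} = \varnothing$), so Lemma~\ref{lem:wSm_restriction} endows $A_f$ with a weak $\Sigma$-monoid structure making $\bar{f}$ a $\Sigma$-homomorphism. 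Setting $p_f(x,y) = [\{(x,y)\}]$ gives a map $X \times Y \to A_f$ with $\bar{f} \circ p_f = f$; because each $f(-,y)$ and $f(x,-)$ is a $\Sigma$-homomorphism, the defining property of the induced $\Sigma^{\bar f}$ from Lemma~\ref{lem:wSm_restriction} shows $p_f$ is $\Sigma$-bilinear, so $(p_f,A_f)$ is an object and $\bar{f}$ is a morphism $(p_f,A_f) \to (f,Z)$. Since each $A_f$ is a subquotient of the small set $(X \times Y)^*$ (Proposition~\ref{prop:family_set}) carrying one of a small collection of partial functions $A_f^* \pto A_f$, the collection $S$ of all $(p_f,A_f)$ is a genuine set, and by the above it is weakly initial.

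Applying Lemma~\ref{lem:weakly_initial_set} then produces an initial object, which I would name $(p, X \otimes Y)$. The main obstacle is not the high-level structure --- which mirrors the two earlier adjoint-existence proofs --- but the bookkeeping in the construction of $A_f$: checking that $\sim$ makes $\bar{f}$ well-defined and injective, that $p_f$ is genuinely $\Sigma$-bilinear rather than merely separately additive, and that $S$ is small. These are exactly the places where the bracketing and flattening axioms must be invoked with care.
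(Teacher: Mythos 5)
Your proposal is correct and follows essentially the same route as the paper: both establish completeness and local smallness of the comma category via Proposition~\ref{prop:SCatw_complete} and the limit-preservation of \(\SCat{w}^{X,Y}\), then build the same weakly initial set by quotienting the summable families in \((X \times Y)^*\) by equality of sums, endowing the quotient with a \(\Sigma\)-monoid structure via Lemma~\ref{lem:wSm_restriction}, and invoking Lemma~\ref{lem:weakly_initial_set}. The details you flag as needing care (well-definedness and injectivity of \(\bar{f}\), \(\Sigma\)-bilinearity of \(p_f\), smallness of \(S\)) are exactly the points the paper verifies explicitly, and your sketches of them are sound.
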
 \begin{proof}
  The category \(\SCat{w}\) is complete (see Proposition~\ref{prop:SCatw_complete}) and
  Lemma~\ref{lem:bihom_set_functor_preserves_limits} establishes that the functor \(\SCat{w}^{X,Y}\) preserves limits.
  Consequently, the comma category \(\comma{\{\bullet\}}{\SCat{w}^{X,Y}}\) is complete (see Lemma A.2 from~\cite{Leinster}) and, since \(\SCat{w}\) is locally small, it follows that the comma category is locally small.
  Then, according to Lemma~\ref{lem:weakly_initial_set}, it suffices to provide a weakly initial set to prove that the comma category \(\comma{\{\bullet\}}{\SCat{w}^{X,Y}}\) has an initial object.
  The elements of such a weakly initial set ought to be functions of type \(\{\bullet\} \to \SCat{w}^{X,Y}(A)\) for some \(A \in \SCat{w}\); but a function with singleton domain simply selects an element in its codomain, so it is equivalent to think of the weakly initial set as a collection of \(\Sigma\)-bilinear functions \(X \times Y \xto{q} A\).
  Let \(S\) be the collection of all \(\Sigma\)-bilinear functions with domain \(X \times Y\) and whose codomain may be any weak \(\Sigma\)-monoid whose underlying set is a quotient of a subset of \((X \times Y)^*\).
  Notice that \(S\) is small since all quantifiers in its definition are with respect to fixed sets \(X\) and \(Y\)and \((X \times Y)^*\) is small (see Proposition~\ref{prop:family_set}).

  Let \(W \in \SCat{w}\) and let \(f \colon X \times Y \to W\) be a \(\Sigma\)-bilinear function.
  Define a subset \(Z \subseteq (X \times Y)^*\) as follows:
  \begin{equation*}
    Z = \{\fml{z} \in (X \times Y)^* \mid f\fml{z} \text{ is summable}\}
  \end{equation*}
  and define an equivalence relation \(\sim\) on \(Z\) where:
  \begin{equation*}
    \fml{z} \sim \fml{z'} \iff \Sigma f\fml{z} = \Sigma f\fml{z'}.
  \end{equation*}
  Define a function \(\bar{f} \colon Z/{\sim} \to W\) so that each equivalence class \([\fml{z}] \in Z/{\sim}\) is mapped to \(\Sigma f\fml{z}\); it is straightforward to check that \(\bar{f}\) is injective and, moreover, \([\varnothing] \in Z/{\sim}\), so the neutral element \(0 \in W\) is in the image of \(\bar{f}\).
  Therefore, according to Lemma~\ref{lem:wSm_restriction}, \(Z/{\sim}\) may be endowed with a partial function \(\Sigma^{\bar{f}}\) defined as follows for every family \(\{[\fml{z}_i]\}_I \in (Z/{\sim})^*\):
  \begin{equation*}
    \Sigma^{\bar{f}} \{[\fml{z}_i]\}_I = \begin{cases}
      [\fml{z}] &\ifc \exists [\fml{z}] \in Z/{\sim} \st \Sigma \{\bar{f}[\fml{z}_i]\}_I \keq \bar{f}[\fml{z}] \\
      \undefined &\otherwise
    \end{cases}
  \end{equation*}
  so that \((Z/{\sim},\Sigma^{\bar{f}})\) is a weak \(\Sigma\)-monoid and \(\bar{f}\) is a \(\Sigma\)-homomorphism.
  Finally, define the function \(q \colon X \times Y \to Z/{\sim}\) as the composite
  \begin{equation*}
    q = X \times Y \xto{\{-\}} Z \xto{[-]} Z/{\sim}
  \end{equation*}
  where \(\{-\}\) is the function mapping each \((x,y) \in X \times Y\) to the singleton family \((\{(x,y)\})\) and \([-]\) is the quotient map.
  It is straightforward to check that \(f = \bar{f} \circ q\) and, for any \(x \in X\) and any summable family \(\{y_i\}_I \in Y^*\),
  \begin{equation*}
    \Sigma \{\bar{f} q(x,y_i)\}_I = \Sigma \{f(x,y_i)\}_I \keq f(x,\Sigma \{y_i\}_I) = \bar{f} q (x,\Sigma \{y_i\}_I)
  \end{equation*}
  since \(f\) is \(\Sigma\)-bilinear, implying that the family \(\{q(x,y_i)\}_I\) is summable in \((Z/{\sim},\Sigma^{\bar{f}})\) with:
  \begin{equation*}
    \Sigma^{\bar{f}} \{q(x,y_i)\}_I \keq q(x,\Sigma \{y_i\}_I).
  \end{equation*}
  A similar result holds if we fix \(y \in Y\) instead and let \(\fml{x} \in X^*\) be an arbitrary summable family, thus, \(q\) is a \(\Sigma\)-bilinear function.

  In summary, it has been shown that \(Z/{\sim}\) is an object in \(\SCat{w}\) and that \(X \times Y \xto{q} Z/{\sim}\) is a \(\Sigma\)-bilinear function, hence, an element of \(S\). Moreover, \(\bar{f}\) is a \(\Sigma\)-homomorphism and \(f = \bar{f} \circ q\), implying that \(\bar{f}\) is a valid morphism from \(q\) to \(f\) in in the comma category.
  This construction may be reproduced for any \(\Sigma\)-bilinear function \(f \colon X \times Y \to W\), so it follows that \(S\) is a weakly initial set. 
  Thus, Lemma~\ref{lem:weakly_initial_set} establishes that the comma category \(\comma{\{\bullet\}}{\SCat{w}^{X,Y}}\) has an initial object, as claimed.
\end{proof}

\begin{lemma} \label{lem:SCat*_tensor}
  For each \(\SCat{*}\) category, the comma category \(\comma{\{\bullet\}}{\SCat{*}^{X,Y}}\) has an initial object. Such an initial object will be denoted \((p,X \otimes Y)\).
\end{lemma}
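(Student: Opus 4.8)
The plan is to reproduce the argument of Lemma~\ref{lem:SCatw_tensor} inside each $\SCat{*}$ category, invoking Lemma~\ref{lem:weakly_initial_set} to extract the initial object from a weakly initial set. First I would record the ambient hypotheses: $\SCat{*}$ is complete (Proposition~\ref{prop:SCat*_complete}) and, by Corollary~\ref{cor:bihom_set_functor_preserves_limits}, the functor $\SCat{*}^{X,Y} \colon \SCat{*} \to \Set$ preserves limits, so the comma category $\comma{\{\bullet\}}{\SCat{*}^{X,Y}}$ is complete (Lemma A.2 from~\cite{Leinster}); and since $\SCat{*}$ is locally small via its faithful forgetful functor to $\Set$, so is the comma category. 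As in Lemma~\ref{lem:SCatw_tensor}, an object of this comma category is just a $\Sigma$-bilinear function $X \times Y \xto{q} A$ with $A \in \SCat{*}$ (Definition~\ref{def:bihom}), and an initial object is precisely a universal such function; it therefore remains only to exhibit a weakly initial set $S$.

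The core task is, for each $\Sigma$-bilinear $f \colon X \times Y \to W$ with $W \in \SCat{*}$, to factor $f = \bar{f} \circ q$ through a \emph{small} object $A_f \in \SCat{*}$. For $\SCat{s}$ and $\SCat{ft}$ I would reuse the quotient built in Lemma~\ref{lem:SCatw_tensor}: set $Z = \{\fml{z} \in (X\times Y)^* \mid f\fml{z}\text{ is summable in } W\}$, declare $\fml{z} \sim \fml{z'}$ iff $\Sigma f\fml{z} = \Sigma f\fml{z'}$, let $A_f = Z/{\sim}$ carry the partial sum $\Sigma^{\bar{f}}$ induced through the injection $\bar{f}[\fml{z}] = \Sigma f\fml{z}$ (Lemma~\ref{lem:wSm_restriction}; note $\bar{f}[\varnothing] = 0$), and let $q$ send $(x,y)$ to $[\{(x,y)\}]$. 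What is new is checking that $A_f$ inherits the extra axioms of $W$: a subfamily (resp. finite family) of a family summable in $A_f$ is summable in $W$ by subsummability (resp. finite totality) of $W$, and its sum again lies in $\im \bar{f}$ because the disjoint union of the underlying families is sent by $f$ to a summable family by strong flattening (resp. flattening) in $W$; strong bracketing and strong flattening of $A_f$ transfer in the same fashion. For $\SCat{g}$ the plain quotient fails, since $\im \bar{f}$ need not be closed under additive inverses, so I would instead transport the pairing construction of Proposition~\ref{prop:g_ft_adj}, taking $A_f$ to be a quotient of a subset of $(X\times Y)^* \times (X\times Y)^*$ recording a `positive' and a `negative' family, which supplies inverses while keeping $A_f$ small.

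In every case the underlying set of $A_f$ is a quotient of a subset of $(X\times Y)^*$ (or of its square), which is small by Proposition~\ref{prop:family_set}; hence the collection $S$ of all such $\Sigma$-bilinear $q \colon X \times Y \to A_f$ is a genuine set, and by construction each object $f$ of the comma category receives a morphism $\bar{f}$ from some member of $S$, so $S$ is weakly initial. Lemma~\ref{lem:weakly_initial_set} then yields an initial object $(p, X\otimes Y)$, as required. I expect the main obstacle to be the verification that $A_f$ really lies in $\SCat{*}$: the strong and finitely total cases hinge on showing $\im \bar{f}$ is closed under the relevant sub-sums, and the $\Sigma$-group case is the most delicate, since closure under inverses forces the positive/negative bookkeeping of Proposition~\ref{prop:g_ft_adj} rather than the naive quotient. (Alternatively, one could sidestep the construction by representing $\SCat{*}^{X,Y}$ as $\SCat{*}(F(T),-)$, where $T$ is the $\SCat{w}$-tensor of $X$ and $Y$ from Lemma~\ref{lem:SCatw_tensor} and $F$ is the reflector of the embedding $\SCat{*} \into \SCat{w}$, using that $\Sigma$-bilinearity of a map into an object of $\SCat{*}$ is unchanged when that object is regarded in $\SCat{w}$.)
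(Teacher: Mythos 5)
Your setup (completeness of \(\SCat{*}\) from Proposition~\ref{prop:SCat*_complete}, limit preservation by \(\SCat{*}^{X,Y}\) from Corollary~\ref{cor:bihom_set_functor_preserves_limits}, local smallness, and Lemma~\ref{lem:weakly_initial_set} applied to a weakly initial set of small quotient factorisations) is exactly the paper's argument, and your treatment of \(\SCat{s}\) and \(\SCat{ft}\) coincides with the paper's proof. The divergence is the \(\SCat{g}\) case, and your stated reason for diverging is false: the plain quotient does \emph{not} fail there. In the setting of this lemma \(X\) and \(Y\) are themselves \(\Sigma\)-groups, and \(\Sigma\)-bilinearity forces \(f(-x,y) = -f(x,y)\) (Remark~\ref{rmk:bihom_inverse}); hence for any \(\fml{z} = \{(x_i,y_i)\}_I \in Z\) the family \(\fml{-z} = \{(-x_i,y_i)\}_I\) also lies in \(Z\), with \(\bar{f}[\fml{-z}] = -\bar{f}[\fml{z}]\) because the inversion map of \(W\) is a \(\Sigma\)-homomorphism. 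So \(\im \bar{f}\) \emph{is} closed under additive inverses, \([\fml{-z}]\) is an inverse for \([\fml{z}]\), and one checks the inversion map on \(Z/{\sim}\) is a \(\Sigma\)-homomorphism; this is precisely how the paper completes the \(\SCat{g}\) case using the same quotient as for the other three categories.

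That said, your substitute construction is not broken, merely unnecessary. Transporting Proposition~\ref{prop:g_ft_adj}: the map \(\bar{f}[(\fml{p},\fml{n})] = \Sigma \{\Sigma f\fml{p}, -\Sigma f\fml{n}\}\) is injective on the quotient, Lemma~\ref{lem:wSm_restriction} equips \(A_f\) with a \(\Sigma\)-structure which is finitely total (by finite totality of \(W\)) and has inverses given by swapping the two components, and \(q(x,y) = [(\{(x,y)\},\varnothing)]\) is \(\Sigma\)-bilinear with \(f = \bar{f} \circ q\); smallness holds since \(A_f\) is a quotient of a subset of \((X \times Y)^* \times (X \times Y)^*\), which is small by Proposition~\ref{prop:family_set}. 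So your proof goes through, at the cost of a positive/negative bookkeeping layer the paper avoids. Your parenthetical alternative is also valid, and arguably the cleanest of the three routes: since \(\SCat{*}^{X,Y} = \SCat{w}^{U(X),U(Y)} \circ U\) and \(U\) has a left adjoint \(F\) (Propositions~\ref{prop:s_w_adj} and~\ref{prop:ft_w_adj}, and for \(\SCat{g}\) the composite of Propositions~\ref{prop:ft_w_adj} and~\ref{prop:g_ft_adj}), one gets natural bijections \(\SCat{*}^{X,Y}(Z) \cong \SCat{w}(T,U(Z)) \cong \SCat{*}(F(T),Z)\), exhibiting \(\SCat{*}^{X,Y}\) as representable, so the comma category has initial object \((\eta_T \circ p,\, F(T))\) where \(p\) is the universal \(\Sigma\)-bilinear map into \(T\) and \(\eta\) the unit of the adjunction.
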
 \begin{proof}
  We must show that whenever \(X\), \(Y\) and \(W\) are objects in certain \(\SCat{*}\) category, the weak \(\Sigma\)-monoid \((Z/{\sim}, \Sigma^{\bar{f}})\) defined in the previous lemma is in fact an object in the same \(\SCat{*}\) category. Such a result has already been established for \(\SCat{w}\) in the previous lemma.

  Assume \(W \in \SCat{s}\); then, \(\Sigma^{\bar{f}}\) satisfies strong bracketing and strong flattening, the argument being the same as that for (weak) bracketing and flattening (see Lemma~\ref{lem:wSm_restriction}): these follow directly from their counterpart in \(W\) since \(\Sigma^{\bar{f}}\) is defined if and only if the sum in \(W\) is defined.
  Let \(\{[\fml{z}_i]\}_I\) be an arbitrary summable family in \((Z/{\sim})^*\); by definition of \(\Sigma^{\bar{f}}\), the fact that it is summable implies that the family \(\{\bar{f}[\fml{z}_i]\}_I \in W^*\) is summable or, equivalently, the family \(\{\Sigma f\fml{z}_i\}_I \in W^*\) is summable.
  Then, due to strong flattening in \(W\), we know that \(\uplus_I f\fml{z}_i\) is summable and, thanks to subsummability in \(W\), the subfamily \(\uplus_J f\fml{z}_j\) is summable for every \(J \subseteq I\).
  It then follows from strong bracketing in \(W\) that the subfamily \(\{\Sigma f\fml{z}_j\}_J\) is summable or, equivalently, \(\{\bar{f}[\fml{z}_j]\}_J\) is summable.
  Thus, every subfamily of \(\{[\fml{z}_i]\}_I\) is summable, implying that \(\Sigma^{\bar{f}}\) satisfies subsummability.
  Consequently, if \(W \in \SCat{s}\) then \(Z/{\sim} \in \SCat{s}\).

  Assume \(W \in \SCat{ft}\) and let \(\{[\fml{z}_i]\}_I\) be an arbitrary finite family in \((Z/{\sim})^*\).
  Notice that \(\{\Sigma f\fml{z}_i\}_I\) is a summable family due to \(W\) being finitely total.
  Since \(I\) is finite and each \(f\fml{z}_i\) is summable (due to the definition of \(Z\)), the flattening axiom in \(W\) implies that \(\uplus_I f\fml{z}_i\) is summable:
  \begin{equation*}
    \Sigma (\uplus_I f\fml{z}_i) \keq \Sigma \{\Sigma f\fml{z}_i\}_I.
  \end{equation*}
  Consequently, \(\uplus_I f\fml{z}_i \in Z\) and 
  \begin{equation*}
    \Sigma^{\bar{f}} \{[\fml{z}_i]\}_I \keq [\uplus_I \fml{z}_i]
  \end{equation*}
  implying that every finite family in \(Z/{\sim}\) is summable.
  Consequently, if \(W \in \SCat{ft}\) then \(Z/{\sim} \in \SCat{ft}\).

  Assume that both \(X\) and \(W\) are \(\Sigma\)-groups and recall that \(-f(x,y) = f(-x,y)\) (see Remark~\ref{rmk:bihom_inverse}).
  For every element \(\fml{z} = \{(x_i,y_i)\}_I\) of \(Z\) let \(\fml{-z}\) be the family \(\{(-x_i,y_i)\}_I\); notice that \(\fml{-z}\) is in \(Z\) as well since its image under \(f\) is summable in \(W\):
  \begin{equation} \label{eq:SCat_g_tensor_aux}
    \Sigma \{f(-x_i,y_i)\}_I = \Sigma \{-f(x_i,y_i)\}_I \keq - \Sigma \{f(x_i,y_i)\}_I = - \Sigma f\fml{z}
  \end{equation}
  where the \(\keq\) step follows from the inversion map being a \(\Sigma\)-homomorphism as required for \(W\) to be a \(\Sigma\)-group.
  For every \([\fml{z}] \in Z/{\sim}\) there is an inverse element \([\fml{-z}] \in Z/{\sim}\) since:
  \begin{equation*}
    \Sigma \{\bar{f}[\fml{z}],\bar{f}[\fml{-z}]\} = \Sigma \{\Sigma f\fml{z}, -\Sigma f\fml{z}\} = 0.
  \end{equation*}
  Such an inversion map \([\fml{z}] \mapsto [-\fml{z}]\) is well-defined since it is immediate from~\eqref{eq:SCat_g_tensor_aux} that \(\fml{z} \sim \fml{z'}\) implies \(\fml{-z} \sim \fml{-z'}\).
  It remains to show that the inversion map in \(Z/{\sim}\) is a \(\Sigma\)-homomorphism.
  Assume \(\{[\fml{z}_i]\}_I\) is a summable family in \((Z/{\sim})^*\); this implies the existence of a family \(\fml{z} \in Z\) such that:
  \begin{equation*}
    \Sigma \{\bar{f}[\fml{z}_i]\}_I \keq \bar{f}[\fml{z}].
  \end{equation*}
  But it is immediate from~\eqref{eq:SCat_g_tensor_aux} that \(\bar{f}[\fml{-z'}] = -\bar{f}[\fml{z'}]\) for all \(\fml{z'} \in Z\), thus:
  \begin{equation*}
    \Sigma \{\bar{f}[\fml{-z}_i]\}_I = \Sigma \{- \bar{f}[\fml{z}_i]\}_I \keq -\Sigma \{\bar{f}[\fml{z}_i]\}_I \keq -\bar{f}[\fml{z}] = \bar{f}[-\fml{z}]
  \end{equation*}
  and, hence, \(\Sigma^{\bar{f}} \{[\fml{-z}_i]\}_I \keq [-\fml{z}]\).
  Consequently, the inversion map in \(Z/{\sim}\) is a \(\Sigma\)-homomorphism and we conclude that, if \(W \in \SCat{g}\) and \(X \in \SCat{g}\) (or \(Y \in \SCat{g}\)) then \(Z/{\sim} \in \SCat{g}\).

  In summary, it has been shown that if \(X\), \(Y\) and \(W\) are all objects from the same \(\SCat{*}\) category and \(f \colon X \times Y \to W\) is a \(\Sigma\)-bilinear function, then \((Z/{\sim},\Sigma^{\bar{f}})\) as constructed in Lemma~\ref{lem:SCatw_tensor} is an object in the same \(\SCat{*}\) category.
  Then, thanks to Proposition~\ref{prop:SCat*_complete} and Corollary~\ref{cor:bihom_set_functor_preserves_limits} it is straightforward to check that the argument from the previous lemma holds for any \(\SCat{*}\) category, concluding the proof.
\end{proof}

The previous lemma establishes the existence of an object \(X \otimes Y\) known as the tensor product and a \(\Sigma\)-bilinear function \(p \colon X \times Y \to X \otimes Y\) that let us uniquely represent any \(\Sigma\)-bilinear function \(f \colon X \times Y \to Z\) as a \(\Sigma\)-homomorphism \(\bar{f} \colon X \otimes Y \to Z\) so that \(f = \bar{f} \circ p\).
In the case of \(\Sigma\)-groups, a constructive proof may perhaps be achieved by adapting the construction of the tensor product in abelian groups (see~\cite{AbTensor}) to infinitary addition \(\Sigma\).
Unfortunately, as in the case of the left adjoint to \(\SCat{ft} \into \SCat{w}\) (see Section~\ref{sec:ft_w_adj}), it is unclear whether quotienting the underlying set would be necessary to guarantee that the bracketing axiom is satisfied and, if quotienting is required, proving well-definedness of \(\Sigma\) becomes exceedingly subtle.

The goal of the rest of this section is to show that every \(\SCat{*}\) category has a (closed symmetric) monoidal structure given by this tensor product.
However, the monoidal structure is somewhat different for each of these categories; this is illustrated by the differences in the definition of their tensor unit, given below.

\begin{definition}
  Let \(S \colon \Nset^* \to \Nset \cup \{\infty\}\) be the standard sum of families of natural numbers.
  For each category \(\SCat{*}\), define the \emph{tensor unit} \(I \in \SCat{*}\) as follows:
  \begin{itemize}
    \item in \(\SCat{w}\), let \(I = \{0,1\}\) along with \(\Sigma \fml{n} = S \fml{n}\) if \(S \fml{n} \leq 1\) and otherwise undefined;
    \item in \(\SCat{s}\), let \(I\) and its \(\Sigma\) be the same as in \(\SCat{w}\);
    \item in \(\SCat{ft}\), let \(I = \Nset\) along with \(\Sigma \fml{n} = S \fml{n}\) if \(S \fml{n}\) is finite and otherwise undefined;
    \item in \(\SCat{g}\), let \(I = \Zset\) along with \(\Sigma \fml{n} = (S m_+\fml{n}) + (S m_-\fml{n})\) if \(S m_+\fml{n}\) and \(S m_-\fml{n}\) are finite and otherwise undefined; where \(m_+\) maps all negative integers to \(0\) and \(m_-\) maps all positive integers to \(0\).
  \end{itemize}
\end{definition}

It is immediate to check that each of these tensor units are well-defined \(\Sigma\)-monoids in their respective categories.
These definitions are accompanied by functions \(l_X \colon I \times X \to X\) and \(r_x \colon X \times I \to X\) for every \(X \in \SCat{*}\), defined as follows:
\begin{equation} \label{eq:lr_bihoms}
  \begin{aligned}
    l_X(n,x) &= \Sigma \{\overbrace{x,x,\ldots}^{\footnotesize n\text{ times}} \} \\
    r_X(x,n) &= \Sigma \{\overbrace{x,x,\ldots}^{\footnotesize n\text{ times}} \} \\
  \end{aligned}
\end{equation}
except for \(\SCat{g}\) where, if \(n < 0\), then the corresponding family is of \(\abs{n}\) copies of \(-x\) instead.
It is immediate that \(l_X(-,x)\) is a \(\Sigma\)-homomorphism.
To show that \(l_X(n,-)\) is also a \(\Sigma\)-homomorphism, let \(\fml{x} = \{x_i\}_I \in X^*\) be an arbitrary summable family; then, since \(n\) is finite we may apply flattening followed by bracketing to obtain:
\begin{equation*}
  l(n,\Sigma \fml{x}) = \Sigma \{\overbrace{\Sigma \fml{x},\ldots}^{\footnotesize n\text{ times}}\} \keq \Sigma \{\Sigma \{\overbrace{x_i,x_i,\ldots}^{\footnotesize n\text{ times}} \}\}_I = \Sigma \{l(n,x_i)\}_I.
\end{equation*} 
Therefore, \(l_X\) is a \(\Sigma\)-bilinear function and a similar argument holds for \(r_X\).
Moreover, it is straightforward to check that \(l\) and \(r\) are natural transformations since for any \(\Sigma\)-homomorphism \(f \colon X \to Y\),
\begin{equation*}
  l (\id \times f)(n,x) = l(n,f(x)) = \Sigma \{\overbrace{f(x),\ldots}^{\footnotesize n\text{ times}} \} = f(\Sigma \{\overbrace{x,\ldots}^{\footnotesize n\text{ times}} \}) = fl(n,x).
\end{equation*}

\begin{corollary} \label{cor:tensor_unitors}
  For any \(\SCat{*}\) category, let \(l_X\) and \(r_X\) be the \(\Sigma\)-bilinear functions defined in~\eqref{eq:lr_bihoms} and let \(a\) be the associator from the Cartesian monoidal structure on each \(\SCat{*}\). Then, the following diagram commutes:
  \[\begin{tikzcd}[column sep=tiny]
    {(X \times I) \times Y} && {X \times (I \times Y)} \\
    {X \times Y} && {X \times Y} \\
    & {X \otimes Y}
    \arrow["p", from=2-1, to=3-2]
    \arrow["p"', from=2-3, to=3-2]
    \arrow["{r \times \id}", from=1-1, to=2-1]
    \arrow["{\id \times l}"', from=1-3, to=2-3]
    \arrow["a", from=1-1, to=1-3]
  \end{tikzcd}\]
\end{corollary}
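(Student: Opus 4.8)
Since every object in the diagram is a $\Sigma$-monoid and every arrow ($a$, $r \times \id$, $\id \times l$ and $p$) is a concrete function between the underlying sets — note that $p$ is only $\Sigma$-bilinear, not a $\Sigma$-homomorphism — commutativity amounts to an equality of functions, so the plan is to verify it pointwise. Writing a generic element of $(X \times I) \times Y$ as $((x,n),y)$ with $x \in X$, $n \in I$ and $y \in Y$, the associator of the Cartesian structure acts as $a((x,n),y) = (x,(n,y))$. Hence the two composites to compare are
\[
  p(r_X(x,n),\,y) \qquad\text{and}\qquad p(x,\,l_Y(n,y)),
\]
and it suffices to show these agree for every such element.

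The core step will be to expand both sides using the definition~\eqref{eq:lr_bihoms} of $l$ and $r$ together with the $\Sigma$-bilinearity of $p$ established in Lemma~\ref{lem:SCat*_tensor}. For $n \geq 0$ the family $\fml{x}$ consisting of $n$ copies of $x$ is finite, hence summable in each $\SCat{*}$ category — by the singleton and neutral-element axioms when $n \in \{0,1\}$ (the cases relevant to $\SCat{w}$ and $\SCat{s}$, whose unit is $\{0,1\}$) and by finite totality for $\SCat{ft}$ and $\SCat{g}$ — with $\Sigma \fml{x} = r_X(x,n)$. Applying the $\Sigma$-homomorphism $p(-,y)$ to this summable family yields
\[
  p(r_X(x,n),y) = \Sigma \{\, p(x,y),\ldots,p(x,y) \,\}
\]
with $n$ copies of $p(x,y)$; and symmetrically, applying the $\Sigma$-homomorphism $p(x,-)$ gives the very same $n$-fold sum
\[
  p(x,l_Y(n,y)) = \Sigma \{\, p(x,y),\ldots,p(x,y) \,\}.
\]
Equality of the two composites for $n \geq 0$ is then immediate.

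The only remaining case is $n < 0$, which arises solely in $\SCat{g}$ where $I = \Zset$, and I expect this to be the one genuinely non-formal point of the argument. Here the definition of $l$ and $r$ replaces the $n$ copies of $x$ (resp.\ $y$) by $\abs{n}$ copies of $-x$ (resp.\ $-y$), and the step is discharged using Remark~\ref{rmk:bihom_inverse}, which records that a $\Sigma$-bilinear map into a $\Sigma$-group satisfies $p(-x,y) = p(x,-y) = -p(x,y)$. The same $\Sigma$-homomorphism computation as above then turns both composites into $\Sigma$ of $\abs{n}$ copies of $p(-x,y) = p(x,-y)$, so they again coincide. This exhausts all values of $n$ across every $\SCat{*}$ category and establishes that the diagram commutes.
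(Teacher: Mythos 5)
Your proof is correct and takes essentially the same approach as the paper's: a pointwise computation that expands both composites through the \(\Sigma\)-bilinearity of \(p\), identifying each with the \(n\)-fold sum \(\Sigma \{p(x,y),\ldots,p(x,y)\}\). You are in fact slightly more thorough than the paper's one-line argument, which does not explicitly treat the \(n<0\) case in \(\SCat{g}\); your appeal to Remark~\ref{rmk:bihom_inverse} (giving \(p(-x,y) = p(x,-y)\)) settles that case correctly.
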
 \begin{proof}
  For all \(\SCat{*}\) categories, the claim follows from the following:
  \begin{equation*}
    p(\id \times l_Y)(x,n,y) = p(x,\Sigma \{\overbrace{y,\ldots}^{\footnotesize n\text{ times}} \}) = \Sigma \{\overbrace{p(x,y),\ldots}^{\footnotesize n\text{ times}} \} = p(r_X \times \id)(x,n,y)
  \end{equation*}
  due to \(p\) being a \(\Sigma\)-bilinear function.
\end{proof}

These \(\Sigma\)-bilinear functions \(l\) and \(r\) will be used to define the unitors of the monoidal structure given by \(\otimes\). The associator is trickier since the \(\Sigma\)-homomorphism representation of a `\(\Sigma\)-trilinear' function is not yet clear.
The following two lemmas deal with this, proving that the object \((X \otimes Y) \otimes Z\) along with the `\(\Sigma\)-trilinear' function \(p \circ (p \times \id)\) can take up the role of the ternary tensor product.
To do so, we must first introduce a \(\Sigma\)-monoid structure on sets of \(\Sigma\)-homomorphisms.

\begin{lemma} \label{lem:SCat_internal_hom}
  For every \(\SCat{*}\) category and objects \(X,Y \in \SCat{*}\), the hom-set \([X,Y] = \SCat{*}(X,Y)\) may be endowed with a partial function \(\Sigma^\to \colon [X,Y]^* \pto [X,Y]\) so that \([X,Y] \in \SCat{*}\).
\end{lemma}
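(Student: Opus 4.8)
The plan is to define $\Sigma^\to$ as the \emph{pointwise} sum, declared to be defined precisely when the resulting function is again a $\Sigma$-homomorphism. Concretely, since the underlying set $X$ is small, the small product $P = \prod_{x \in X} Y$ exists in $\SCat{*}$ by Proposition~\ref{prop:SCat*_complete}, equipped with the coordinatewise summation $\Sigma^\times$ of~\eqref{eq:SCat_small_product}. An element of $P$ is simply a function $X \to Y$, so $[X,Y]$ sits inside $P$ via the inclusion $\iota \colon [X,Y] \into P$ that regards each $\Sigma$-homomorphism as a tuple. This inclusion is injective and the constant-zero map (which is a $\Sigma$-homomorphism) lies in its image, so Lemma~\ref{lem:wSm_restriction} applies directly: it endows $[X,Y]$ with a partial function $\Sigma^\iota$ making $([X,Y],\Sigma^\iota)$ a weak $\Sigma$-monoid and $\iota$ a $\Sigma$-homomorphism. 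Unwinding the definition, $\Sigma^\iota \{f_i\}_I \keq g$ holds exactly when every family $\{f_i(x)\}_I$ is summable in $Y$ and the pointwise sum $x \mapsto \Sigma \{f_i(x)\}_I$ is a $\Sigma$-homomorphism equal to $g$; this is the intended $\Sigma^\to$, and it settles the case $\SCat{*} = \SCat{w}$.

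For the three subcategories, Proposition~\ref{prop:SCat*_complete} guarantees that $P$ again lies in the relevant $\SCat{*}$, and it remains to verify the extra axioms on $[X,Y]$. The recurring technical step is to show that a pointwise sum of $\Sigma$-homomorphisms is again a $\Sigma$-homomorphism, which amounts to interchanging the nested summations $\Sigma \{\Sigma \{f_i(x_k)\}_I\}_K \keq \Sigma \{\Sigma \{f_i(x_k)\}_K\}_I$ over the double family $\{f_i(x_k)\}_{(i,k)}$; here I would use that each $f_i$ is itself a $\Sigma$-homomorphism to identify the relevant row and column sums and then apply bracketing and flattening in $Y$. In $\SCat{s}$ this interchange is licensed by \emph{strong} flattening (collapsing the possibly infinitely many rows into the double family) followed by \emph{strong} bracketing (regrouping by columns); combined with subsummability in $P$ this gives subsummability for $\Sigma^\to$, while strong bracketing and strong flattening for $\Sigma^\to$ descend from their counterparts in $P$ (the relevant partial sums stay in $\im \iota$). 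In $\SCat{ft}$ I would instead observe that for a finite family $\{f_i\}_I$ the set of rows is finite, so ordinary flattening applies, and each column is finite, so ordinary bracketing applies; hence the pointwise sum of finitely many $\Sigma$-homomorphisms is a $\Sigma$-homomorphism and $[X,Y]$ is finitely total. In $\SCat{g}$ the inversion map $\nu \colon Y \to Y$ is a $\Sigma$-homomorphism by definition of a $\Sigma$-group, so $-f := \nu \circ f$ is again a $\Sigma$-homomorphism and serves as the additive inverse of $f$; naturality of $\nu$ makes $f \mapsto -f$ a $\Sigma$-homomorphism on $[X,Y]$, completing the verification that $[X,Y] \in \SCat{g}$.

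The main obstacle is exactly this double-indexed rearrangement together with the requirement that the pointwise sum remain a homomorphism. In a weak $\Sigma$-monoid neither direction of the interchange is available when both index sets are infinite, which is precisely why the homomorphism clause must be built into the definition of $\Sigma^\to$ rather than derived; the subcategory arguments succeed only because $\SCat{s}$ supplies the unrestricted bracketing and flattening needed to rearrange infinite double families, whereas in $\SCat{ft}$ the finiteness of the index set conveniently matches the finiteness side-conditions of the weak axioms.
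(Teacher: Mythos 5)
Your proposal is correct and, in substance, it coincides with the paper's proof: the partial operation you construct is exactly the paper's pointwise \(\Sigma^\to\) (defined precisely when every pointwise family \(\{f_i(x)\}_I\) is summable in \(Y\) and the resulting function is again a \(\Sigma\)-homomorphism), and your subcategory verifications use the same key steps as the paper --- the row/column interchange licensed by strong flattening followed by strong bracketing in \(Y\) to get subsummability in \(\SCat{s}\), the observation that the finiteness side-conditions of weak flattening and weak bracketing exactly match the finite index set and finite columns in \(\SCat{ft}\), and post-composition with the inversion map of \(Y\) to produce inverses in \(\SCat{g}\). The one genuine difference is the base case \(\SCat{w}\): where the paper checks the singleton, neutral-element, bracketing and flattening axioms by hand, you embed \([X,Y]\) into the small product \(P = \prod_{x \in X} Y\) (available by Proposition~\ref{prop:SCat*_complete}, with the coordinatewise \(\Sigma^\times\) of~\eqref{eq:SCat_small_product}) and invoke Lemma~\ref{lem:wSm_restriction} on the injective, zero-preserving inclusion. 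This is a clean reuse of existing machinery: it makes the definedness clause of \(\Sigma^\to\) transparent, yields the weak axioms for free, and it neatly explains why strong bracketing and strong flattening descend to \([X,Y]\) (in those axioms the relevant ambient sums in \(P\) are already known to lie in the image of the inclusion), whereas subsummability and finite totality genuinely need the interchange argument --- the same division of labour as in the paper. When writing this up you should make two small points explicit: that the constant-zero map is a \(\Sigma\)-homomorphism (so that \(0 \in \im(\iota)\) and Lemma~\ref{lem:wSm_restriction} applies), and that in the \(\SCat{g}\) case finite totality of \([X,Y]\) is inherited from the \(\SCat{ft}\) argument before inverses are added; also, your appeal to ``naturality'' of the inversion map is really just the pointwise definition of \(\Sigma^\to\) combined with the inversion map of \(Y\) being a \(\Sigma\)-homomorphism, which is how the paper phrases it.
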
 \begin{proof}
  The partial function \(\Sigma^\to\) is defined in the same manner for all of the \(\SCat{*}\) categories; it is defined pointwise, using the \(\Sigma\) from \(Y\).
  For every family \(\fml{f} = \{f_i\}_I \in [X,Y]^*\) and every \(x \in X\) let \(\fml{f}(x)\) be a shorthand for the family \(\{f_i(x)\}_I \in Y^*\) and define a partial function \(s_\fml{f} \colon X \pto Y\) as follow for every \(x \in X\):
  \begin{equation*}
    s_\fml{f}(x) = \begin{cases}
      y &\ifc \exists y \in Y \st \Sigma \fml{f}(x) \keq y \\
      \undefined &\otherwise.
    \end{cases}
  \end{equation*}
  Define the partial function \(\Sigma^\to \colon [X,Y]^* \pto [X,Y]\) as follows for every \(\fml{f} \in [X,Y]^*\):
  \begin{equation*}
    \Sigma^\to \fml{f} = \begin{cases}
      s_\fml{f} &\ifc s_\fml{f} \in [X,Y] \\
      \undefined &\otherwise.
    \end{cases}
  \end{equation*}
  Recall that all \(\Sigma\)-homomorphisms are total functions, so the condition \(s_\fml{f} \in [X,Y]\) imposes that \(\Sigma \fml{f}(x)\) is defined for all \(x \in X\).
  We now show that if \(Y\) is an object in a \(\SCat{*}\) category then \(([X,Y],\Sigma^\to)\) is an object in the same category.

  Assume \(Y \in \SCat{w}\); it is immediate that for every singleton family \(\{f\} \in [X,Y]^*\) the corresponding function \(s_{\{f\}}\) maps every \(x \in X\) to \(f(x)\), so \(\Sigma^\to \{f\} \keq f\) as required by the singleton axiom.
  Similarly, it is immediate that the function \(s_\varnothing\) maps every \(x \in X\) to the neutral element \(0 \in Y\), so the empty family is summable in \(([X,Y],\Sigma^\to)\), with the neutral element being \(s_\varnothing\).
  Moreover, let \(\fml{f} \in [X,Y]^*\) be a summable family and let \(\fml{f}_\emptyset\) be the subfamily where all occurrences of \(s_\varnothing\) have been removed; it is immediate that \(\fml{f}_\emptyset\) is summable since \(s_\varnothing\) only contributes to the sum of \(\fml{f}(x) \in Y^*\) by adding a \(0\), which may be disregarded thanks to the neutral element axiom in \(Y\).
  Consequently, the neutral element axiom is satisfied by \(\Sigma^\to\); it remains to prove the bracketing and flattening axioms.  
  Let \(\{\fml{f}_j \in [X,Y]^*\}_J\) be a collection of summable families and let \(\fml{f} = \uplus_J \fml{f}_j\).
  Assume that \(\fml{f}_j\) is a finite family for every \(j \in J\) and assume that \(\fml{f}\) is summable; to prove bracketing we must show that the family \(\fml{g} = \{\Sigma^\to \fml{f}_j\}_J \in [X,Y]^*\) is summable and \(s_\fml{g} = s_\fml{f}\).
  This is straightforward since for every \(x \in X\) we have that:
  \begin{equation*}
    s_\fml{g}(x) = \Sigma \{(\Sigma^\to \fml{f}_j)(x)\}_J = \Sigma \{\Sigma \fml{f}_j(x)\}_J \keq \Sigma \fml{f}(x) = s_\fml{f}(x)
  \end{equation*}
  where the \(\keq\) step corresponds to bracketing in \(Y\).
  Therefore, \(s_\fml{g} \in [X,Y]\) is implied by \(s_\fml{f} \in [X,Y]\) --- which holds due to \(\fml{f}\) being summable --- and, hence, \(\fml{g} = \{\Sigma^\to \fml{f}_j\}_J\) is summable with its sum matching that of \(\fml{f}\).
  Flattening is proven via the same argument, this time using flattening in \(Y\) instead of bracketing.
  Consequently, if \(Y \in \SCat{w}\) then \([X,Y] \in \SCat{w}\).

  Assume \(Y \in \SCat{s}\); then, the same argument that established bracketing and flattening in \([X,Y]\) can be used to prove strong bracketing and strong flattening in \([X,Y]\) via the corresponding axioms from \(Y\).
  Moreover, if \(\fml{f}\) is summable and \(\fml{h}\) is a subfamily of \(\fml{f}\), it follows from subsummability in \(Y\) that \(s_\fml{h}\) is a total function.
  To prove that \(\Sigma^\to\) satisfies subsummability, we must show that \(s_\fml{h}\) is a \(\Sigma\)-homomorphism; this follows from strong flattening and strong bracketing in \(Y\), as described below.
  Let \(\fml{h} = \{h_j\}_J\) and let \(\{x_i\}_I \in X^*\) be an arbitrary summable family; then:
  \begin{align*}
    s_\fml{h}(\Sigma \{x_i\}_I) &= \Sigma \{h_j(\Sigma \{x_i\}_I)\}_J && \text{(definition of \(s_\fml{h}\))} \\
      &\keq \Sigma \{\Sigma \{h_j(x_i)\}_I\}_J    && \text{(\(h_j \in [X,Y]\))} \\
      &\keq \Sigma \{h_j(x_i)\}_{I \times J}    && \text{(strong flattening in \(Y\) and \(\uplus_J I \iso I \times J\))} \\
      &\keq \Sigma \{\Sigma \{h_j(x_i)\}_J\}_I && \text{(strong bracketing in \(Y\) and \(s_\fml{h}\) total)} \\
      &= \Sigma \{s_\fml{h}(x_i)\}_I            && \text{(definition of \(s_\fml{h}\))}
  \end{align*}
  so \(s_\fml{h} \in [X,Y]\) and \(\fml{h}\) is summable, implying that subsummability is satisfied.
  Consequently, if \(Y \in \SCat{s}\) then \([X,Y] \in \SCat{s}\).

  Assume \(Y \in \SCat{ft}\); it is immediate from \(Y\) being finitely total that, for every finite family \(\fml{h} \in [X,Y]^*\), the function \(s_\fml{h}\) is total.
  Once again, to prove that \(\fml{h}\) is summable in \([X,Y]\) we must show that \(s_\fml{h}\) is a \(\Sigma\)-homomorphism; this can be achieved via the same argument used above to prove subsummability but, in this case, only (weak) flattening and bracketing are required since \(\fml{h}\) is assumed to be a finite family.
  Consequently, if \(Y \in \SCat{ft}\) then \([X,Y] \in \SCat{ft}\).

  Assume \(Y \in \SCat{g}\) and let \(f \in [X,Y]\). Due to \(Y\) having inverses, we can define a function that maps every \(x\) to \(-f(x)\), call it \(-f\).
  Since the inversion map in \(Y\) is a \(\Sigma\)-homomorphism and so is \(f\), it is trivial to check that \(-f \in [X,Y]\).
  Thanks to \(\Sigma^\to\) being defined pointwise, it is straightforward to check that \(\Sigma^\to \{f,-f\}\) is the function sending every \(x\) to \(0 \in Y\) --- \ie{} the neutral element \(s_\varnothing\) in \([X,Y]\) --- and the inversion map \(f \mapsto -f\) is a \(\Sigma\)-homomorphism.
  Consequently, if \(Y \in \SCat{g}\) then \([X,Y] \in \SCat{g}\).

  In conclusion, it has been shown that \(([X,Y],\Sigma^\to)\) is an object in the same \(\SCat{*}\) category as \(Y\), proving the claim.
\end{proof}

We may now make use of the Cartesian closed structure in \(\Set\) together with the definition of tensor product in \(\SCat{*}\) categories to lift the universal property of the binary tensor product to the ternary case.

\begin{definition} \label{def:trihom}
  For each \(\SCat{*}\) category, let \(X,Y,Z,W \in \SCat{*}\) be arbitrary objects and let \(f \colon X \times Y \times Z \to W\) be a function.
  We say \(f\) is a \emph{\(\Sigma\)-trilinear} function if \(f(x,y,-)\), \(f(x,-,z)\) and \(f(-,y,z)\) are \(\Sigma\)-homomorphisms for every choice of \(x \in X\), \(y \in Y\) and \(z \in Z\).
\end{definition}

\begin{lemma} \label{lem:SCat_3_tensor}
  For every \(\SCat{*}\) category and objects \(X,Y,Z,W \in \SCat{*}\), let \(f \colon (X \times Y) \times Z \to W\) be an arbitrary \(\Sigma\)-trilinear function.
  Then, there is a unique \(\Sigma\)-homomorphism \((X \otimes Y) \otimes Z \to W\) making the diagram
  \[\begin{tikzcd}
    {(X \times Y) \times Z} & W \\
    & {(X \otimes Y) \otimes Z}
    \arrow[dashed, from=2-2, to=1-2]
    \arrow["f", from=1-1, to=1-2]
    \arrow["{p\circ(p \times \id)}"', from=1-1, to=2-2]
  \end{tikzcd}\]
  commute in \(\Set\).
\end{lemma}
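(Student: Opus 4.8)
The plan is to reduce the ternary case to the binary universal property of Lemma~\ref{lem:SCat*_tensor} by currying through the internal hom $[Z,W]$ supplied by Lemma~\ref{lem:SCat_internal_hom}, invoking the binary property twice. Throughout I would write $p_{X,Y} \colon X \times Y \to X \otimes Y$ and $p_{X \otimes Y,Z} \colon (X \otimes Y) \times Z \to (X \otimes Y) \otimes Z$ for the two universal $\Sigma$-bilinear maps, so that the map along the bottom of the diagram is $p_{X \otimes Y, Z} \circ (p_{X,Y} \times \id)$. Recall that the universal property states: for any $V \in \SCat{*}$ and any $\Sigma$-bilinear $h \colon A \times B \to V$ there is a unique $\Sigma$-homomorphism $\bar{h} \colon A \otimes B \to V$ with $h = \bar{h} \circ p_{A,B}$.

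First I would curry $f$ over its last argument. For each $(x,y) \in X \times Y$ the map $f(x,y,-) \colon Z \to W$ is a $\Sigma$-homomorphism by $\Sigma$-trilinearity, hence an element of $[Z,W]$; this defines $\hat{f} \colon X \times Y \to [Z,W]$ by $\hat{f}(x,y) = f(x,y,-)$. The key verification is that $\hat{f}$ is $\Sigma$-bilinear. Since $\Sigma^\to$ on $[Z,W]$ is computed pointwise (Lemma~\ref{lem:SCat_internal_hom}), for a summable family $\{x_i\}_I \in X^*$ and fixed $y$ one checks at each $z$ that $\Sigma \{\hat{f}(x_i,y)(z)\}_I = \Sigma \{f(x_i,y,z)\}_I \keq f(\Sigma \{x_i\}_I, y, z)$, using that $f(-,y,z)$ is a $\Sigma$-homomorphism; crucially the resulting pointwise-sum function equals $f(\Sigma \{x_i\}_I, y, -)$, which is again a $\Sigma$-homomorphism by trilinearity, so $\{\hat{f}(x_i,y)\}_I$ is genuinely $\Sigma^\to$-summable with the expected sum. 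The symmetric check handles the other argument. As $[Z,W] \in \SCat{*}$ by Lemma~\ref{lem:SCat_internal_hom}, the binary universal property of $X \otimes Y$ then yields a unique $\Sigma$-homomorphism $g_1 \colon X \otimes Y \to [Z,W]$ with $\hat{f} = g_1 \circ p_{X,Y}$.

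Next I would uncurry: define $\tilde{f} \colon (X \otimes Y) \times Z \to W$ by $\tilde{f}(t,z) = g_1(t)(z)$. Here $\tilde{f}(t,-) = g_1(t) \in [Z,W]$ is a $\Sigma$-homomorphism, while $\tilde{f}(-,z) = \mathrm{ev}_z \circ g_1$ is a $\Sigma$-homomorphism because the evaluation $\mathrm{ev}_z \colon [Z,W] \to W$ is one (immediate from $\Sigma^\to$ being pointwise). Hence $\tilde{f}$ is $\Sigma$-bilinear, and the binary universal property of $(X \otimes Y) \otimes Z$ gives a unique $\Sigma$-homomorphism $g \colon (X \otimes Y) \otimes Z \to W$ with $\tilde{f} = g \circ p_{X \otimes Y, Z}$. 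Unwinding the definitions gives $g\bigl(p_{X \otimes Y,Z}(p_{X,Y}(x,y),z)\bigr) = \tilde{f}(p_{X,Y}(x,y),z) = g_1(p_{X,Y}(x,y))(z) = \hat{f}(x,y)(z) = f(x,y,z)$, so $g$ makes the diagram commute.

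For uniqueness, suppose $g,g'$ both satisfy $g \circ p_{X \otimes Y,Z} \circ (p_{X,Y} \times \id) = f = g' \circ p_{X \otimes Y,Z} \circ (p_{X,Y} \times \id)$. By the binary universal property of $(X \otimes Y) \otimes Z$ it suffices to show $g \circ p_{X \otimes Y,Z} = g' \circ p_{X \otimes Y,Z}$ as $\Sigma$-bilinear maps; fixing $z$, the maps $(g \circ p_{X \otimes Y,Z})(-,z)$ and $(g' \circ p_{X \otimes Y,Z})(-,z)$ are $\Sigma$-homomorphisms $X \otimes Y \to W$ whose precomposites with $p_{X,Y}$ both equal $f(-,-,z)$, so the uniqueness clause of the universal property of $X \otimes Y$ forces them to coincide. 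I expect the main obstacle to be not any single hard step but the bookkeeping in the curry step: one must confirm that applying $f$ coordinatewise sends summable families to $\Sigma^\to$-summable families, and this is precisely where the full $\Sigma$-trilinearity hypothesis (rather than mere separate bilinearity) is used, together with the pointwise description of $\Sigma^\to$ from Lemma~\ref{lem:SCat_internal_hom}.
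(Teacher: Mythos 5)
Your proposal is correct and follows essentially the same route as the paper's proof: curry $f$ through the internal hom $[Z,W]$ of Lemma~\ref{lem:SCat_internal_hom}, verify $\Sigma$-bilinearity of the curried map using the pointwise definition of $\Sigma^\to$, and apply the binary universal property of Lemma~\ref{lem:SCat*_tensor} twice (once for $X \otimes Y$, once for $(X \otimes Y) \otimes Z$), with evaluation's $\Sigma$-bilinearity supplying the uncurrying step. Your uniqueness argument, fixing $z$ and invoking uniqueness of the binary factorisation through $p_{X,Y}$, is in fact a slightly more explicit rendering of the paper's terser claim that any second factorisation would also make the outer edges of its diagram commute.
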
 \begin{proof}
  For every \(\SCat{*}\) category and objects \(Z,W \in \SCat{*}\), define a function \(\mathrm{ev} \colon [Z,W] \times Z \to W\) that maps each pair \((h,z)\) to \(h(z)\).
  Let \(h \in [Z,W]\) be an arbitrary \(\Sigma\)-homomorphism and \(\fml{z} = \{z_i\}_I \in Z\) an arbitrary summable family, then:
  \begin{equation*}
    \mathrm{ev}(h,\Sigma \fml{z}) = h(\Sigma \fml{z}) \keq \Sigma h\fml{z} = \Sigma \{\mathrm{ev}(h,z_i)\}_I.
  \end{equation*}
  Similarly, let \(\fml{h} = \{h_i\}_I \in [Z,W]^*\) be an arbitrary summable family and fix an arbitrary \(z \in Z\), then:
  \begin{equation*}
    \mathrm{ev}(\Sigma^\to \fml{h},z) = \mathrm{ev}(s_\fml{h},z) = s_\fml{h}(z) = \Sigma \{h_i(z)\}_I = \Sigma \{\mathrm{ev}(h_i,z)\}_I.
  \end{equation*}
  Therefore, \(\mathrm{ev}\) is a \(\Sigma\)-bilinear function.
  Let \(f \colon (X \times Y) \times Z \to W\) be a \(\Sigma\)-trilinear function.
  Since \(\Set\) is Cartesian closed, there is a unique function \(\Lambda{f} \colon X \times Y \to [Z,W]\) making the diagram
  \begin{equation} \label{diag:SCat_currying}
    \begin{tikzcd}
      {[Z,W] \times Z} \\
      {(X \times Y) \times Z} & W
      \arrow["f"', from=2-1, to=2-2]
      \arrow["{\Lambda{f} \times \id}", dashed, from=2-1, to=1-1]
      \arrow["{\mathrm{ev}}", from=1-1, to=2-2]
    \end{tikzcd}
  \end{equation}
  commute in \(\Set\).
  The function \(\Lambda{f}\) maps each pair \((x,y) \in X \times Y\) to the function \(f(x,y,-)\), which is a \(\Sigma\)-homomorphism since \(f\) is \(\Sigma\)-trilinear; we now show that \(\Lambda{f}\) is \(\Sigma\)-bilinear.
  Fix \(x \in X\) and let \(\fml{y} = \{y_i\}_I \in Y^*\) be an arbitrary summable family; then, for every \(z \in Z\):
  \begin{align*}
    \Sigma \{\Lambda{f}(x,y_i)(z)\}_I &= \Sigma \{f(x,y_i,z)\}_I  && \text{(definition of \(\Lambda{f}\))} \\
      &\keq f(x,\Sigma \fml{y},z)  && \text{(\(f\) is \(\Sigma\)-trilinear)} \\
      &= \Lambda{f}(x,\Sigma \fml{y})(z)  && \text{(definition of \(\Lambda{f}\))}
  \end{align*}
  with \(\Lambda{f}(x,\Sigma \fml{y})\) a \(\Sigma\)-homomorphism, as previously discussed.
  Therefore,
  \begin{equation*}
    \Sigma^\to \{\Lambda{f}(x,y_i)\}_I \keq \Lambda{f}(x,\Sigma \fml{y}).
  \end{equation*}
  The same can be said for summable families \(\fml{x} \in X^*\) if \(y \in Y\) is fixed instead, hence, \(\Lambda{f}\) is a \(\Sigma\)-bilinear function.

  Since both \(\Lambda{f}\) and \(\mathrm{ev}\) are a \(\Sigma\)-bilinear functions, the diagram
  \[\begin{tikzcd}[column sep=large]
    & {[Z,W] \times Z} \\
    {(X \otimes Y) \times Z} & {(X \times Y) \times Z} & W \\
    & {(X \otimes Y) \otimes Z}
    \arrow["f"{pos=0.4}, from=2-2, to=2-3]
    \arrow["{\Lambda{f} \times \id}"{pos=0.4}, dashed, from=2-2, to=1-2]
    \arrow["{\mathrm{ev}}", from=1-2, to=2-3]
    \arrow["{p \times \id}"', from=2-2, to=2-1]
    \arrow["{\overline{\Lambda{f}} \times \id}", dashed, from=2-1, to=1-2]
    \arrow["p"', from=2-1, to=3-2]
    \arrow[dashed, from=3-2, to=2-3]
  \end{tikzcd}\]
  commutes in \(\Set\) with each dashed arrow denoting uniqueness.
  The top right triangle is diagram~\eqref{diag:SCat_currying} which was already established to commute.
  Because \(\Lambda{f}\) is a \(\Sigma\)-bilinear function, Lemma~\ref{lem:SCat*_tensor} implies that the top left triangle commutes, where \(\overline{\Lambda{f}}\) is a \(\Sigma\)-homomorphism.
  Considering that \(\mathrm{ev}\) is a \(\Sigma\)-bilinear function, it is straightforward to check that \(\mathrm{ev} \circ (\overline{\Lambda{f}} \times \id)\) is also a \(\Sigma\)-bilinear function; then Lemma~\ref{lem:SCat*_tensor} implies that the outer edges of the diagram commute, with the corresponding unique function \((X \otimes Y) \otimes Z \to W\) being a \(\Sigma\)-homomorphism.
  Consequently, the triangle at the bottom of the diagram commutes.
  Notice that if we were to find another \(\Sigma\)-homomorphism \(g \colon (X \otimes Y) \otimes Z \to W\) making the bottom triangle commute, such \(g\) would also make the outer edges of the diagram commute but, according to Lemma~\ref{lem:SCat*_tensor}, there is only one such \(\Sigma\)-homomorphism. Therefore, the \(\Sigma\)-homomorphism \((X \otimes Y) \otimes Z \to W\) from the claim has been shown to exist and be unique, completing the proof.
\end{proof}

It is straightforward to check that the same argument allows us to factor any \(\Sigma\)-trilinear function \(f \colon X \times (Y \times Z) \to W\) through \(p \circ (\id \times p)\) so that there is a unique \(\Sigma\)-homomorphism characterising \(f\), \ie{} both \((X \otimes Y) \otimes Z\) and \(X \otimes (Y \otimes Z)\) are ternary tensor products.
Finally, it is time to define the monoidal structure on the tensor product.

\begin{definition}
  For every \(\SCat{*}\) category and \(\Sigma\)-homomorphism \(X \xto{f} Z\) and \(Y \xto{g} W\), let \(f \otimes g\), \(\lambda\), \(\rho\) and \(\alpha\) be defined as the unique \(\Sigma\)-homomorphisms making the following diagrams commute:
  \[\begin{tikzcd}[column sep=small]
    {X \times Y} & {Z \times W} && {I \times X} & X && {X \times I} & X \\
    {X \otimes Y} & {Z \otimes W} && {I \otimes X} &&& {X \otimes I}
    \arrow["{f \times g}", from=1-1, to=1-2]
    \arrow["p"', from=1-1, to=2-1]
    \arrow["p", from=1-2, to=2-2]
    \arrow["{f \otimes g}"', dashed, from=2-1, to=2-2]
    \arrow["l", from=1-4, to=1-5]
    \arrow["p"', from=1-4, to=2-4]
    \arrow["\lambda"', dashed, from=2-4, to=1-5]
    \arrow["\rho"', dashed, from=2-7, to=1-8]
    \arrow["r", from=1-7, to=1-8]
    \arrow["p"', from=1-7, to=2-7]
  \end{tikzcd}\]
  \[\begin{tikzcd}[column sep=small]
    {(X \times Y) \times Z} && {X \times(Y \times Z)} \\
    {(X \otimes Y) \otimes Z} && {X \otimes (Y \otimes Z).}
    \arrow["a", from=1-1, to=1-3]
    \arrow["{p\circ(\id \times p)}", from=1-3, to=2-3]
    \arrow["{p \circ (p \times \id)}"', from=1-1, to=2-1]
    \arrow["\alpha"', dashed, from=2-1, to=2-3]
  \end{tikzcd}\]
\end{definition}

\begin{theorem} \label{thm:SCat_tensor_monoidal}
  Every \(\SCat{*}\) category is a monoidal category with monoidal product \(\otimes\) and monoidal unit \(I\).
\end{theorem}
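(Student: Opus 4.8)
The plan is to verify, in order, the bifunctoriality of $\otimes$, the naturality and invertibility of $\lambda$, $\rho$ and $\alpha$, and finally the triangle and pentagon coherence axioms. The engine driving every one of these verifications is a cancellation principle that is immediate from Lemma~\ref{lem:SCat*_tensor}: since $(p, X \otimes Y)$ is initial in $\comma{\{\bullet\}}{\SCat{*}^{X,Y}}$, any two $\Sigma$-homomorphisms $g, g' \colon X \otimes Y \to Z$ satisfying $g \circ p = g' \circ p$ must coincide. Because all of the structure morphisms were defined precisely as the unique $\Sigma$-homomorphisms factoring a given $\Sigma$-bilinear (or, via Lemma~\ref{lem:SCat_3_tensor}, $\Sigma$-trilinear) function through $p$, every identity between them reduces to an identity of $\Sigma$-multilinear functions, which in turn may be checked on underlying sets, where the Cartesian monoidal structure on $\Set$ does the bookkeeping.

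First I would establish that $\otimes$ is a bifunctor. Both $\id_X \otimes \id_Y$ and $\id_{X \otimes Y}$ make the defining square for $\id_X \times \id_Y$ commute, so by the cancellation principle they agree; similarly $(f' \otimes g') \circ (f \otimes g)$ and $(f' \circ f) \otimes (g' \circ g)$ both make the outer rectangle obtained by stacking two copies of the defining square commute after precomposition with $p$, hence they are equal. Naturality of the unitors follows from the fact that $l$ and $r$ are natural transformations of the underlying $\Sigma$-bilinear functions (established before Corollary~\ref{cor:tensor_unitors}): precomposing the relevant naturality square with $p$ reduces it to the naturality of $l$ (respectively $r$), and cancellation closes the argument. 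Naturality of $\alpha$ is handled the same way, this time precomposing with $p \circ (p \times \id)$ and invoking naturality of the Cartesian associator $a$ together with Lemma~\ref{lem:SCat_3_tensor}.

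Next I would prove invertibility. For $\lambda \colon I \otimes X \to X$ I would define $\lambda^{-1} \colon X \to I \otimes X$ by $x \mapsto p(1, x)$, which is a $\Sigma$-homomorphism because $p(1,-)$ is (as $p$ is $\Sigma$-bilinear) and which satisfies $\lambda \circ \lambda^{-1} = \id_X$ by the singleton axiom; the identity $\lambda^{-1} \circ \lambda = \id_{I \otimes X}$ then follows from cancellation, using that $n = \Sigma\{1, \ldots, 1\}$ in $I$ (with the evident sign adjustment in $\SCat{g}$) and the $\Sigma$-bilinearity of $p$. The same recipe handles $\rho$. For $\alpha$ invertibility is free: the remark following Lemma~\ref{lem:SCat_3_tensor} shows that both $(X \otimes Y) \otimes Z$ and $X \otimes (Y \otimes Z)$ are ternary tensor products, so the mediating map $\alpha$ between two objects enjoying the same universal property is automatically an isomorphism.

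Finally the coherence axioms. The triangle identity reduces, after precomposing with $p \circ (p \times \id)$ and unwinding the definitions of $\alpha$, $\rho$ and $\lambda$, to exactly the commuting triangle of Corollary~\ref{cor:tensor_unitors}, so cancellation gives it. The pentagon is where the real work lies, and I expect it to be the main obstacle: to invoke cancellation I need a four-fold analogue of Lemma~\ref{lem:SCat_3_tensor}, namely that a $\Sigma$-homomorphism out of $((X \otimes Y) \otimes Z) \otimes W$ is uniquely determined by its composite with the canonical $\Sigma$-quadrilinear map. I would obtain this by one further iteration of the currying argument of Lemma~\ref{lem:SCat_3_tensor}: curry off the last variable into the internal hom $[W, V] \in \SCat{*}$ (Lemma~\ref{lem:SCat_internal_hom}), apply the ternary result to classify the remaining $\Sigma$-trilinear function $X \times Y \times Z \to [W,V]$, then uncurry, checking at each stage that the evaluation and currying maps are $\Sigma$-multilinear. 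With this quaternary universal property in hand, both paths around the pentagon, precomposed with the canonical quadrilinear map, collapse to the pentagon identity for the Cartesian associator $a$ in $\Set$, which holds because $(\Set, \times)$ is monoidal; cancellation then yields the pentagon and completes the proof.
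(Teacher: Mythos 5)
Your proposal is correct and follows essentially the same route as the paper's own proof: the same defining squares and cancellation principle from the universal property of \(p\), Corollary~\ref{cor:tensor_unitors} for the triangle, reduction of the pentagon to the Cartesian one, the inverse \(x \mapsto p(1,x)\) for \(\lambda\) verified via \(p(1,l(n,x)) \keq p(n,x)\), and invertibility of \(\alpha\) from the two ternary universal properties. The one place you add machinery beyond the paper---constructing a quaternary analogue of Lemma~\ref{lem:SCat_3_tensor} by a second currying pass through \([W,V]\)---is valid but heavier than necessary: for the pentagon only the uniqueness (cancellation) half is needed, and that follows directly by fixing the last variable \(w\) (so that \(g \circ p(-,w)\) is a \(\Sigma\)-homomorphism out of \((X \otimes Y) \otimes Z\) agreeing with \(g' \circ p(-,w)\) on the canonical \(\Sigma\)-trilinear map), invoking the uniqueness clause of Lemma~\ref{lem:SCat_3_tensor}, and then the binary universal property of Lemma~\ref{lem:SCat*_tensor}; this cheaper observation is all that the paper's terse appeal to the Cartesian pentagon plus naturality of \(a\) implicitly relies on.
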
 \begin{proof}
  On objects \(X,Y \in \SCat{*}\), the functor \(- \otimes -\) yields \(X \otimes Y\) and, on morphisms \(f\) and \(g\), it yields the unique \(\Sigma\)-homomorphism \(f \otimes g\) given in the definition above; it is straightforward to check that this is indeed a functor, thanks to functoriality of the Cartesian product and the universal property of \(X \otimes Y\).
  Unitors \(\lambda\) and \(\rho\) and associators \(\alpha\) are given in the definition above.
  The triangle equation follows from Corollary~\ref{cor:tensor_unitors} after lifting all occurrences of \(\times\) to \(\otimes\) using \(p\), whereas the pentagon equation follows from the pentagon equation of the Cartesian monoidal structure on \(\SCat{*}\) along with naturality of \(a\).
  Naturality of \(\lambda\), \(\rho\) and \(\alpha\) follows from naturality of \(l\), \(r\) and \(a\).
  The inverse of \(\alpha\) is induced from the inverse of \(a\) and the universal property of \(X \otimes (Y \otimes Z)\).
  The inverse of \(\lambda\) is somewhat trickier: let \(i \colon X \to I \times X\) be the function that maps each \(x \in X\) to \((1,x)\); it is immediate that \(l \circ i = \id\), and \(\lambda \circ p \circ i = \id\) follows from the definition of \(\lambda\).
  On the other hand:
  \begin{equation*}
    p(1,l(n,x)) = p(1,\Sigma \{\overbrace{x,\ldots}^{\footnotesize n \text{ times}}\}) \keq \Sigma \{\overbrace{p(1,x),\ldots}^{\footnotesize n \text{ times}}\} \keq p(n,x)
  \end{equation*}
  so, \(p \circ i \circ l = p\) and, by definition of \(\lambda\), we have \(p \circ i \circ \lambda \circ p = p\); this means that the \(\Sigma\)-bilinear function \(p\) may be characterised by the \(\Sigma\)-homomorphism \(\bar{p} = p \circ i \circ \lambda\) but, trivially, \(\bar{p} = \id\) and due to uniqueness of \(\bar{p}\) we conclude that \((p \circ i) \circ \lambda = \id\).
  Therefore, the \(\Sigma\)-homomorphism \(p \circ i\) is the inverse of \(\lambda\) and we may define the inverse of \(\rho\) by similar means.
  This completes the proof that \((\SCat{*},\otimes,I)\) is a monoidal category for every \(\SCat{*}\) category.
\end{proof}

\begin{corollary}
  For every \(\SCat{*}\) category, \((\SCat{*},\otimes,I,[-,-])\) is a closed symmetric monoidal category.
\end{corollary}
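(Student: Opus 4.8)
The plan is to build on Theorem~\ref{thm:SCat_tensor_monoidal}, which already supplies the monoidal structure $(\SCat{*},\otimes,I)$, and to add the two missing ingredients: a symmetric braiding and closure. Both will be obtained by the same technique used throughout this section, namely by exhibiting a suitable $\Sigma$-bilinear function and invoking the universal property of the tensor product (Lemma~\ref{lem:SCat*_tensor}) to factor it uniquely through $p$.

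For the symmetric structure, I would define the braiding $\sigma_{X,Y}\colon X \otimes Y \to Y \otimes X$ as follows. Let $s\colon X \times Y \to Y \times X$ be the symmetry of the Cartesian monoidal structure on $\SCat{*}$. Then $p \circ s \colon X \times Y \to Y \otimes X$ is a $\Sigma$-bilinear function, since it is merely $p$ with its two arguments interchanged and $p$ is $\Sigma$-bilinear. Lemma~\ref{lem:SCat*_tensor} therefore yields a unique $\Sigma$-homomorphism $\sigma_{X,Y}$ satisfying $\sigma_{X,Y} \circ p = p \circ s$. Naturality of $\sigma$, the two hexagon coherence axioms, and the involution law $\sigma_{Y,X} \circ \sigma_{X,Y} = \id$ then all follow by lifting the corresponding identities for $s$ (which hold in the Cartesian symmetric monoidal structure) along $p$ and appealing to the uniqueness clause of the universal property, exactly as the associator and unitors were handled in Theorem~\ref{thm:SCat_tensor_monoidal}.

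For the closed structure, the target is a natural isomorphism $\SCat{*}(X \otimes Y, Z) \iso \SCat{*}(X,[Y,Z])$ exhibiting $[Y,-]$ as right adjoint to $- \otimes Y$. The first half of the correspondence is already available: by Lemma~\ref{lem:SCat*_tensor}, $\Sigma$-homomorphisms $X \otimes Y \to Z$ are in bijection with $\Sigma$-bilinear functions $X \times Y \to Z$. It remains to establish a bijection between $\Sigma$-bilinear functions $X \times Y \to Z$ and $\Sigma$-homomorphisms $X \to [Y,Z]$, where $[Y,Z] \in \SCat{*}$ by Lemma~\ref{lem:SCat_internal_hom}. Currying sends a $\Sigma$-bilinear $f$ to $\Lambda f\colon x \mapsto f(x,-)$, which lands in $[Y,Z]$ because $f$ is linear in its second argument; that $\Lambda f$ is itself a $\Sigma$-homomorphism follows because $\Sigma^\to$ is defined pointwise, so for a summable family $\fml{x} = \{x_i\}_I$ with $\Sigma\fml{x} \keq x$ one checks $\Sigma\{f(x_i,y)\}_I \keq f(x,y)$ for every $y$ using that $f(-,y)$ is a $\Sigma$-homomorphism. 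Conversely, precomposing a $\Sigma$-homomorphism $g\colon X \to [Y,Z]$ with the evaluation map $\mathrm{ev}\colon [Y,Z] \times Y \to Z$ (shown $\Sigma$-bilinear by the argument of Lemma~\ref{lem:SCat_3_tensor}) recovers a $\Sigma$-bilinear function, and these operations are mutually inverse by the Cartesian-closed structure of $\Set$. Composing the two bijections gives the desired isomorphism, and since $\otimes$ is symmetric it suffices to have treated $- \otimes Y$.

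The main obstacle I anticipate is not any single step but the bookkeeping of \emph{naturality}: one must verify that the composite bijection $\SCat{*}(X\otimes Y,Z) \iso \SCat{*}(X,[Y,Z])$ is natural in $X$ and $Z$ simultaneously, so that it genuinely constitutes an adjunction rather than a mere family of set-isomorphisms. This amounts to chasing the diagrams relating currying, evaluation, and the tensor factorisation through $p$, and confirming compatibility with the braiding $\sigma$ so that left- and right-closure coincide; each individual square commutes by uniqueness in Lemma~\ref{lem:SCat*_tensor}, but assembling them coherently is the delicate part of the proof.
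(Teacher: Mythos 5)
Your proposal is correct and follows essentially the same route as the paper's (two-sentence) proof: the braiding is obtained by lifting the Cartesian symmetry through \(p\) via the universal property of Lemma~\ref{lem:SCat*_tensor}, and closure comes from the currying/evaluation argument of Lemma~\ref{lem:SCat_3_tensor} combined with the internal hom \([Y,Z]\) of Lemma~\ref{lem:SCat_internal_hom}. Your write-up is in fact more explicit than the paper's sketch, particularly on the hom-set bijection and the naturality bookkeeping it leaves implicit.
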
 \begin{proof}
  The symmetry of the monoidal structure is derived from that of the Cartesian monoidal structure, lifted using \(p\).
  The closure follows from a similar argument to Lemma~\ref{lem:SCat_3_tensor}, but replacing each appearance of \(\times\) with \(\otimes\) and lifting through \(p\) so that all \(\Sigma\)-bilinear functions may be represented as \(\Sigma\)-homomorphisms and, hence, the diagrams can be made to commute in \(\SCat{*}\).
\end{proof}

\section{Topological monoids}
\label{sec:topology}

This section is meant to provide a brief introduction to the notions of topology that will be relevant to the thesis, namely, Hausdorff monoids, nets, limit points, continuous functions and their interaction with limits.
For an in-depth introduction to topology see the book by John L. Kelley~\cite{Topology}.

\subsection{Brief introduction to general topology (\emph{Preamble})}

Topological monoids are monoids whose underlying set has some additional structure, known as a \emph{topology}.
Sets with a topology are known as topological spaces; these are formally defined below.

\begin{definition} \label{def:topological_space}
  Let \(X\) be a set and let \(\tau\) be a subset of the power set \(\powerset{X}\). The pair \((X,\tau)\) is said to be a \gls{topological space} and \(\tau\) its topology if the following conditions are satisfied:
  \begin{itemize}
    \item the sets \(\varnothing\) and \(X\) are in \(\tau\);
    \item for any \(U,V \in \tau\), the intersection \(U \cap V\) is in \(\tau\);
    \item for any arbitrary (possibly infinite) indexed set \(\{U_i \in \tau\}_{i \in I}\),
    \begin{equation*}
      \bigcup_{i \in I} U_i \in \tau.
    \end{equation*}
  \end{itemize}
  Let \((X,\tau)\) be a topological space; an element \(U \in \tau\) is known as an \gls{open set} and, if \(x \in U\), we say that \(U\) is an \gls{open neighbourhood} of \(x\).
  Elements in \(X\) are known as \emph{points}.
\end{definition}

Although highly abstract, this definition has a down to earth interpretation which is helpful to keep in mind.
An open neighbourhood of a point \(x \in X\) is, as the name suggests, a subset of points in \(X\) that are perceived as being `near' \(x\).
Let \(U\) and \(V\) be two open neighbourhoods of a point \(x\); if \(U \subseteq V\), we say that no point in \(U\) is farther away from \(x\) than \(x\) is from the farthest point in \(V\).
Hence, a topology --- being a collection of open neighbourhoods --- provides a loose hierarchy of vicinity between points.
A common way to define a topology on a set \(X\) is to provide a \emph{base} for it: a collection of subsets of \(X\) from which the topology is unambiguously generated.

\begin{definition} \label{def:base}
  Let \(X\) be set. A \gls{base for a topology} on \(X\) is a subset \(\cB \subseteq \powerset{X}\) such that
  \begin{itemize}
    \item for each element \(x \in X\), there is at least one \(B \in \cB\) such that \(x \in B\), and
    \item for all \(B_1,B_2 \in \cB\) and all \(x \in X\),
    \begin{equation*}
      x \in B_1 \cap B_2 \implies \exists B' \in \cB \st x \in B' \text{ and } B' \subseteq B_1 \cap B_2.
    \end{equation*}
  \end{itemize}
  Each base \(\cB\) generates a topology on \(X\), defined as the set of all unions (including infinite ones) of elements in \(\cB\).
\end{definition}

Let \(\tau\) and \(\tau'\) be two topologies on the same set \(X\).
We say that \(\tau\) is a \gls{coarser topology} than \(\tau'\) when \(\tau \subseteq \tau'\), and \emph{strictly coarser} when the containment is strict.
We may show that a topology is coarser than another by studying their bases, as established below.

\begin{proposition} \label{prop:base_inclusion}
  Let \(\tau\) and \(\tau'\) be topologies on \(X\) and let \(\cB\) and \(\cB'\) be bases for them, respectively. Then, \(\tau \subseteq \tau'\) iff for each \(x \in X\) and each \(B \in \cB\):
  \begin{equation*}
    x \in B \implies \exists B' \in \cB' \st B' \subseteq B \text{ and } x \in B'.
  \end{equation*}
\end{proposition}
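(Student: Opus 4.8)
The plan is to prove both implications by exploiting the defining feature of a generated topology (Definition~\ref{def:base}): the open sets of $\tau$ are exactly the unions of members of $\cB$, and in particular every $B \in \cB$ is itself open, being the union of the singleton collection $\{B\}$; the same observation applies to $\tau'$ and $\cB'$. This dictionary between ``base element'' and ``open set'' is what lets the two formulations of the statement talk to each other.

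For the forward implication, I would assume $\tau \subseteq \tau'$ and fix $x \in X$ together with $B \in \cB$ such that $x \in B$. Since $B$ is open in $\tau$, the inclusion gives $B \in \tau'$, so $B$ can be written as a union $\bigcup_{i \in I} B'_i$ of members of $\cB'$. As $x \in B$, the point $x$ lies in at least one $B'_i$, and this $B'_i$ is precisely the desired $B' \in \cB'$ satisfying $x \in B' \subseteq B$.

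For the reverse implication, I would assume the stated condition and take an arbitrary $U \in \tau$, the goal being to exhibit $U$ as a union of members of $\cB'$ so that $U \in \tau'$. Writing $U = \bigcup_{j \in J} B_j$ with each $B_j \in \cB$, any point $x \in U$ lies in some $B_j \subseteq U$; the hypothesis then supplies a $B'_x \in \cB'$ with $x \in B'_x \subseteq B_j \subseteq U$. Collecting these over all points yields $U = \bigcup_{x \in U} B'_x$, since each $B'_x$ sits inside $U$ while every $x \in U$ is covered by its own $B'_x$. Hence $U$ is a union of $\cB'$-elements, so $U \in \tau'$, and therefore $\tau \subseteq \tau'$.

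No step here is genuinely hard; the only point that demands care is the bookkeeping in the reverse direction, namely verifying that the union of the chosen $B'_x$ recovers $U$ \emph{exactly} (the inclusion $\bigcup_x B'_x \subseteq U$ and the reverse coverage $U \subseteq \bigcup_x B'_x$ must both be checked), and remembering at the outset to record that base elements are themselves open, so that the set-theoretic descriptions of $\tau$ and $\tau'$ line up with the membership statements about $\cB$ and $\cB'$.
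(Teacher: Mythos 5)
Your proof is correct and follows essentially the same route as the paper's: the forward direction writes \(B\) (open in \(\tau'\)) as a union of \(\cB'\)-elements and picks one containing \(x\), and the reverse direction expresses an arbitrary \(U \in \tau\) as a union of \(\cB\)-elements, refines each point's neighbourhood via the hypothesis, and checks both inclusions to conclude \(U \in \tau'\). No gaps; the one point you flag for care (the exact equality \(U = \bigcup_{x \in U} B'_x\)) is handled the same way in the paper.
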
 \begin{proof}
  Assume \(\tau \subseteq \tau'\); then automatically \(B \in \tau'\), but \(\cB'\) generates \(\tau'\), so it must be that \(B\) can be obtained as the union of elements in \(\cB'\) and, in particular, if \(x \in B\) then there must be a \(B' \in \cB'\) such that \(B' \subseteq B\) and \(x \in B'\).
  To prove the other direction, let \(U \in \tau\) be an arbitrary open set and recall that, because \(\cB\) generates \(\tau\), there must be a family \(\{B_x \in \cB\}_{x \in U}\) such that each \(B_x\) satisfies \(x \in B_x\) and \(B_x \subseteq U\); consequently:
  \begin{equation*}
    U = \bigcup_{x \in U} B_x.
  \end{equation*}
  Then, by assumption, for each \(B_x\) there is a \(B'_x \in \cB'\) such that \(B'_x \subseteq B_x\) and \(x \in B'_x\).
  Then, it follows that \(B'_x \subseteq U\) and all elements \(x \in U\) are accounted for, so that:
  \begin{equation*}
    U = \bigcup_{x \in U} B'_x
  \end{equation*}
  which implies that \(U \in \tau'\).
  Since this argument holds for every \(U \in \tau\) we conclude that \(\tau \subseteq \tau'\), completing the proof.
\end{proof}

A Hausdorff space is a topological space where we may discern any two distinct points by studying their neighbourhoods.

\begin{definition}
  A \gls{Hausdorff space} is a topological space \((X,\tau)\) satisfying the Hausdorff separation condition; namely, for any two points \(x,y \in X\):
  \begin{equation}
    x \not = y \implies \exists U,V \in \tau \st x \in U,\, y \in V \text{ and } U \cap V = \varnothing.
  \end{equation}
\end{definition}

\begin{example} \label{ex:R_topology}
  The collection of all open intervals \((a,b)\) of real numbers is a base for the standard topology on \(\Rset\).
  This topological space is Hausdorff: let \(x,y \in \Rset\) and assume \(x < y\); for \(\epsilon = \tfrac{y-x}{2}\) the open neighbourhoods \((x-\epsilon,x+\epsilon)\) and \((y-\epsilon,y+\epsilon)\) are disjoint and, hence, the Hausdorff separation condition is satisfied.
\end{example}

\begin{example} \label{ex:metric_topology}
  Every metric space is a Hausdorff space.
  Let \((X,d)\) be a metric space. A ball centered at \(x \in X\) with radius \(\epsilon \in \Rset\) is a subset of \(X\) defined as follows:
  \begin{equation*}
    B_{x,\epsilon} = \{y \in X \mid d(x,y) < \epsilon\}.
  \end{equation*}
  The collection of all balls for every choice of center and radius forms a base.
  This is a Hausdorff space: if \(x \not = y\) choose \(\epsilon = \tfrac{d(x,y)}{2}\), then the open neighbourhoods \(B_{x,\epsilon}\) and \(B_{y,\epsilon}\) are disjoint.
  Moreover, every normed vector space is a metric space with its distance function induced by the norm
  \begin{equation*}
    d(u,v) = \norm{u - v}
  \end{equation*}
  and, hence, every normed vector space is a Hausdorff space.
\end{example}

As it is usual whenever a mathematical structure is defined, it is useful to provide a notion of mapping between topological spaces that, in some sense, preserves the topological structure.
Such is the role of continuous functions, defined below.

\begin{definition}
  Let \((X,\tau_X)\) and \((Y,\tau_Y)\) be two topological spaces. A function \(f \colon X \to Y\) is said to be continuous at point \(x \in X\) if for every open neighbourhood \(V\) of \(f(x)\) there is an open neighbourhood \(U\) of \(x\) such that \(f(U) \subseteq V\).\footnote{The shorthand \(f(U)\) refers to the set \(\{f(u) \mid u \in U\}\).}
  Furthermore, \(f \colon X \to Y\) is said to be a \gls{continuous function} if it is continuous at every point \(x \in X\).
\end{definition}

\begin{example}
  Endow \(\Rset\) with the standard topology, as in Example~\ref{ex:R_topology}. The function \(f \colon \Rset \to \Rset\) defined as \(f(x) = 2x\) is continuous: for any \(x\) and any open set \(V = (f(x)-a,f(x)+b)\) with \(a,b > 0\), the open set \(U = (x-\tfrac{a}{2},x+\tfrac{b}{2})\) satisfies \(f(U) \subseteq V\).
  In fact, the standard notion of a continuous function studied in analysis coincides with the topological notion when \(\Rset\) is endowed with the standard topology.
\end{example}

\begin{example} \label{ex:top_identities}
  Let \(\tau\) and \(\tau'\) be two topologies on the same set \(X\). If \(\tau \subseteq \tau'\) then the identity function \((X,\tau') \to (X,\tau)\) is continuous.
  If the containment is strict \(\tau \subset \tau'\) then the identity function on the opposite direction \((X,\tau) \to (X,\tau')\) cannot be continuous.
\end{example}

Although this is not the usual definition of a continuous function given in most introductory texts on topology, it is equivalent to the standard one, as shown in Theorem 3.1 from~\cite{Topology}.
The definition given in this text is a better fit for the arguments that will appear later in this chapter.
Thanks to the following proposition, we may prove that a function is continuous by studying the open sets in the base of the codomain.

\begin{proposition} \label{prop:base_continuous}
  Let \((X,\tau_X)\) and \((Y,\tau_Y)\) be two topological spaces and let \(\cB\) be a base for \(\tau_Y\). A function \(f \colon X \to Y\) is continuous at point \(x \in X\) iff for all \(B \in \cB\) containing \(f(x)\) there is an open neighbourhood \(U\) of \(x\) such that \(f(U) \subseteq B\).
\end{proposition}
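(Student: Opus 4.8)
The plan is to prove the two implications of the biconditional separately. The forward direction is essentially an instantiation of the definition of continuity at a point, while the backward direction requires the defining property of a base from Definition~\ref{def:base}. Throughout, the crucial auxiliary fact is that every base element is itself an open set: by Definition~\ref{def:base} the topology $\tau_Y$ is the collection of all unions of elements of $\cB$, and each $B \in \cB$ is the trivial (single-term) union of itself, so $B \in \tau_Y$.

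For the forward direction, I would assume $f$ is continuous at $x$ and take an arbitrary $B \in \cB$ with $f(x) \in B$. By the observation above, $B$ is an open neighbourhood of $f(x)$, so the definition of continuity at $x$ applied to $V = B$ immediately supplies an open neighbourhood $U$ of $x$ with $f(U) \subseteq B$, which is exactly the condition claimed.

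For the backward direction, I would assume the base condition and verify the definition of continuity at $x$ directly. Let $V \in \tau_Y$ be an arbitrary open neighbourhood of $f(x)$. Since $\cB$ generates $\tau_Y$, the set $V$ is a union of base elements, and because $f(x) \in V$ it lies in at least one of these; hence there is some $B \in \cB$ with $f(x) \in B \subseteq V$. Applying the hypothesis to this $B$ yields an open neighbourhood $U$ of $x$ with $f(U) \subseteq B$, and transitivity of $\subseteq$ gives $f(U) \subseteq V$. As $V$ was an arbitrary open neighbourhood of $f(x)$, this shows $f$ is continuous at $x$.

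I expect no substantial obstacle here; the only points demanding care are the two facts about the generated topology — that base elements are open, and that every open neighbourhood of a point contains a base element containing that point — both of which follow directly from Definition~\ref{def:base}. I would state these explicitly rather than treat them as folklore, since the entire argument turns on passing between arbitrary open neighbourhoods and basic ones.
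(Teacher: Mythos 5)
Your proof is correct and takes essentially the same approach as the paper: the forward direction instantiates the definition of continuity at $B$ (using that base elements are open), and the backward direction passes from an arbitrary open neighbourhood $V$ of $f(x)$ to a base element $B$ with $f(x) \in B \subseteq V$, then applies the hypothesis and transitivity of $\subseteq$. Your explicit remark that base elements are open (as single-term unions) is a point the paper leaves implicit, but otherwise the arguments coincide.
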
 \begin{proof}
  If \(f\) is continuous at \(x\) and \(B \in \cB\) is an open neighbourhood of \(f(x)\), the existence of the required open neighbourhood \(U\) is immediate.
  To prove the other direction, assume \(U\) exists for every \(B\) as stated in the claim.
  Recall that every open set in \(\tau_Y\) is the union of a collection of elements in \(\cB\), hence, for every open neighbourhood \(V\) of \(f(x)\) there is at least one element \(B \in \cB\) such that \(B \subseteq V\) and \(f(x) \in B\).
  By assumption, there is an open neighbourhood \(U\) of \(x\) such that \(f(U) \subseteq B\) so we may conclude that \(f(U) \subseteq V\), proving that \(f\) is continuous at \(x\).
\end{proof}

\begin{proposition}
  There is a category \GLS{\(\Top\)}{Top} whose objects are topological spaces and whose morphisms are continuous functions.
  There is a full subcategory \GLS{\(\TopHaus\)}{TopHaus} of \(\Top\), obtained by restricting objects to Hausdorff spaces.
\end{proposition}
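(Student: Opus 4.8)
The plan is to verify the category axioms for $\Top$ directly, using the pointwise definition of continuity introduced above, and then to observe that the claim about $\TopHaus$ is essentially definitional. The only verification with any content is that the composite of continuous functions is continuous; everything else is inherited from $\Set$ or built into the definition of a subcategory.

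First I would establish that $\Top$ is a category by exhibiting its composition and identities and checking the axioms. The identity function $\id_X \colon X \to X$ on any topological space is continuous: given a point $x$ and an open neighbourhood $V$ of $\id_X(x) = x$, the set $U = V$ is itself an open neighbourhood of $x$ satisfying $\id_X(U) = V \subseteq V$. For composition, let $f \colon X \to Y$ and $g \colon Y \to Z$ be continuous; I would show $g \circ f$ is continuous at each point $x \in X$. Given an open neighbourhood $W$ of $g(f(x))$, continuity of $g$ at $f(x)$ yields an open neighbourhood $V$ of $f(x)$ with $g(V) \subseteq W$, and continuity of $f$ at $x$ yields an open neighbourhood $U$ of $x$ with $f(U) \subseteq V$; then $(g \circ f)(U) = g(f(U)) \subseteq g(V) \subseteq W$, as required. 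Associativity of $\circ$ and the unit laws are inherited verbatim from $\Set$, since composition of continuous functions is just composition of the underlying functions and the identities are the underlying identity functions.

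Next I would treat $\TopHaus$. Its objects are Hausdorff spaces, which form a subcollection of the objects of $\Top$, and its morphisms between any two Hausdorff spaces $X, Y$ are all continuous functions $X \to Y$ — that is, the entire hom-set $\Top(X,Y)$. Composition and identities are those of $\Top$, so $\TopHaus$ is a subcategory; it is full because, by construction, no morphisms are discarded in passing from $\Top$ to $\TopHaus$ — only objects are restricted — so the canonical inclusion $\TopHaus \into \Top$ is surjective on each hom-set.

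There is no substantive obstacle here. The short neighbourhood chase for composites is the only step requiring any argument, and the remaining claims (identities, associativity, fullness) are immediate, the first two because they are inherited from $\Set$ and the last because it is part of how the subcategory is defined.
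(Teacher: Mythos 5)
Your proof is correct and follows the same route as the paper's: composition and identities checked via the pointwise neighbourhood definition of continuity, associativity and unit laws inherited from \(\Set\), and fullness of \(\TopHaus \into \Top\) observed to be definitional since only objects are restricted. The paper merely labels these steps "straightforward" (citing its earlier example for identities), whereas you spell out the neighbourhood chase explicitly; the content is identical.
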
 \begin{proof}
  It is straightforward to prove that if \(f \colon A \to B\) and \(g \colon B \to C\) are continuous functions then \(g \circ f\) is a continuous function as well. Identity functions between the same topological spaces are continuous (see Example~\ref{ex:top_identities}). Associativity and identity axioms follow immediately from those of \(\Set\).
  These arguments are independent of whether objects are Hausdorff spaces or general topological spaces, thus, both \(\Top\) and \(\TopHaus\) are categories, and it is straightforward to check that the embedding \(\TopHaus \into \Top\) is a full functor.
\end{proof}

Whenever a set can be injectively mapped into a topological space, we may endow it with the subspace topology: the coarsest topology that makes the injection continuous.

\begin{definition} \label{def:subspace_topology}
  Let \((Y,\tau)\) be a topological space, let \(X\) be an arbitrary set and let \(f \colon X \to Y\) be an injective function. The \gls{subspace topology} on \(X\) induced by \(f\) is defined as follows:
  \begin{equation*}
    \tau_f = \{f^{-1}(V) \mid V \in \tau\}
  \end{equation*}
  where \(f^{-1}(V) = \{x \in X \mid f(x) \in V\}\).
  It is trivial to check that \(f \colon (X,\tau_f) \to (Y,\tau)\) is continuous.
\end{definition}

It is straightforward to show that, if \(Y\) is Hausdorff, the subspace topology \(\tau_f\) is also Hausdorff.

\subsection{Nets and limit points (\emph{Preamble})}

In this thesis, topology is used to discuss convergence so that, eventually, we may say that an infinite sum is defined if and only if it converges.
The notion of convergence is often linked to sequences but, in the study of general topology, it is more advantageous to discuss it in terms of nets.
Nets are a generalisation of sequences that are better behaved in the context of topological spaces, in the sense that there are important results (for instance, Proposition~\ref{prop:preserve_limit}) which hold for nets but are not necessarily true for sequences.

\begin{definition}
  Let \(D\) be a nonempty set and let \(\leq\) be a partial order on it. The pair \((D,\leq)\) is said to be a \gls{directed set} if for every pair of elements \(a,b \in D\), there is an upper bound, \ie{} there is an element \(c \in D\) such that \(a \leq c\) and \(b \leq c\).
  A \gls{net} is a function whose domain is a directed set.
  A net in \(X\) is a net whose codomain is \(X\).
\end{definition}

\begin{example} \label{ex:sequence_as_net}
  Let \(\{x_i\}_{i \in \Nset}\) be a family indexed by the natural numbers, \ie{} a sequence. The order \(\leq\) of natural numbers makes \((\Nset,\leq)\) a directed set and, hence, the function mapping each \(i \in \Nset\) to \(x_i\) is a net.
\end{example}

\begin{example} \label{ex:directed_set_product}
  Let \((A,\leq)\) and \((B,\preccurlyeq)\) be two directed sets. Define \((A \times B,\sqsubseteq)\) to be the partially ordered set where \((a,b) \sqsubset (a',b')\) iff both \(a \leq a'\) and \(b \preccurlyeq b'\). It is straightforward to check that \((A \times B,\sqsubseteq)\) is a directed set, as its upper bounds can be defined with respect to those from \((A,\leq)\) and \((B,\preccurlyeq)\).
\end{example}

\begin{example} \label{ex:fpset_directed}
  Let \(X\) be a set and let \(\fml{x} \in X^*\) be an arbitrary family. Let \(\fpset{\fml{x}}\) be the collection of all finite subfamilies of \(\fml{x}\). Evidently, \((\fpset{\fml{x}},\subseteq)\) is a partially ordered set and, for any two \(\fml{x'},\fml{x''} \in \fpset{\fml{x}}\), it is clear that \(\fml{x'} \cup \fml{x''} \in \fpset{\fml{x}}\) so their union acts as their upper bound.
\end{example}

Let \(\fml{x} \in \Rset^*\) be a family of real numbers. Let \((\fpset{\fml{x}},\subseteq)\) be the directed set defined in Example~\ref{ex:fpset_directed}. There are multiple examples of nets in \(\Rset\) with domain \(\fpset{\fml{x}}\), for instance, \(\sigma(\fml{x'}) = \sum \fml{x'}\) and \(\mu(\fml{x'}) = \prod \fml{x'}\). Importantly, every element \(\fml{x'} \in \fpset{\fml{x}}\) is finite by definition, so these sums and products are guaranteed to be well-defined.

It is time to define limit points, which formalise the notion of convergence.

\begin{definition}
  Let \((D,\leq)\) be a directed set and let \(d \in D\) be an arbitrary element. Let \((\gdir{D}{d},\leq)\) be the directed set where:
  \begin{equation*}
    \gdir{D}{d} = \{d' \in D \mid d \leq d'\}.
  \end{equation*}
  Let \(X\) be a topological space; a point \(x \in X\) is a \gls{limit point} of a net \(\alpha \colon D \to X\), written
  \begin{equation*}
    \alpha \to x
  \end{equation*}
  if and only if for each open neighbourhood \(U\) of \(x\) there is an element \(d \in D\) such that \(\alpha(\gdir{D}{d}) \subseteq U\).
\end{definition}

If we are given a base for the topology, we are only required to check the open sets in the base to prove a net has a limit point.

\begin{proposition} \label{prop:base_limit}
  Let \(\alpha \colon D \to X\) be a net and let \(\cB\) be a base for a topology in \(X\). A point \(x \in X\) is a limit point of \(\alpha\) iff for each \(B \in \cB\) such that \(x \in B\) there is an element \(d \in D\) satisfying that for all \(\alpha(\gdir{D}{d}) \subseteq B\).
\end{proposition}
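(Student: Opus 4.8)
The plan is to mirror the proof of Proposition~\ref{prop:base_continuous} almost verbatim, replacing the role of open neighbourhoods of $f(x)$ with open neighbourhoods of the candidate limit point $x$, and replacing the preimage condition with the tail condition $\alpha(\gdir{D}{d}) \subseteq U$. Since the statement is a biconditional, I would treat the two implications separately, and neither requires more than the defining properties of a base (Definition~\ref{def:base}) and of a limit point.

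For the forward direction, I would assume $x$ is a limit point of $\alpha$ and fix an arbitrary $B \in \cB$ with $x \in B$. Because every element of a base belongs to the topology it generates, the set $B$ is itself an open neighbourhood of $x$. Applying the definition of limit point to the particular open neighbourhood $U = B$ immediately yields an element $d \in D$ with $\alpha(\gdir{D}{d}) \subseteq B$, which is exactly the condition claimed.

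For the reverse direction, I would assume the tail condition holds for every basic open set containing $x$ and fix an arbitrary open neighbourhood $U$ of $x$. By Definition~\ref{def:base}, $U$ is a union of elements of $\cB$, so there exists at least one $B \in \cB$ with $x \in B$ and $B \subseteq U$. The hypothesis then supplies some $d \in D$ with $\alpha(\gdir{D}{d}) \subseteq B$, and since $B \subseteq U$ this gives $\alpha(\gdir{D}{d}) \subseteq U$. As $U$ was an arbitrary open neighbourhood of $x$, the definition of limit point is satisfied and $\alpha \to x$.

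I do not anticipate any genuine obstacle here: the argument is elementary, relying only on the facts that base elements are open and that every open set decomposes as a union of base elements. The sole point requiring a little care is confirming that the tail sets $\gdir{D}{d}$ are simply carried along unchanged between the two conditions, so that no property of the directed set beyond those in its definition is invoked.
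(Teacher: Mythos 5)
Your proposal is correct and follows essentially the same route as the paper's own proof: the forward direction specialises the limit-point definition to basic open sets (which are open in the generated topology), and the reverse direction picks a base element $B$ with $x \in B \subseteq U$ inside an arbitrary open neighbourhood $U$ and carries the tail $\alpha(\gdir{D}{d}) \subseteq B$ along into $U$. The only difference is cosmetic: you state explicitly that base elements belong to the generated topology, which the paper leaves implicit.
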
 \begin{proof}
  If \(x\) is a limit point of \(\alpha\), the existence of the required \(d \in D\) is guaranteed by definition for every open neighbourhood of \(x\), including those in \(\cB\).
  To prove the other direction, recall that any open set in the topology is the union of a collection of elements in \(\cB\). Therefore, for any open neighbourhood \(U\) of \(x\), there must be at least one element \(B \in \cB\) such that \(x \in B\) and \(B \subseteq U\).
  Take this \(B\) and assume we can obtain a \(d \in D\) so that \(\alpha(\gdir{D}{d}) \subseteq B\); then, trivially \(\alpha(\gdir{D}{d}) \subseteq U\).
  We may do this for every open neighbourhood \(U\) of \(x\) so, indeed, \(\alpha \to x\) as claimed.
\end{proof}

The following proposition establishes that continuous functions are precisely the functions that preserve convergence.

\begin{proposition} \label{prop:preserve_limit}
  Let \(X\) and \(Y\) be topological spaces and let \(f \colon X \to Y\) be a function. The function \(f\) is continuous iff for all \(x \in X\) and every net \(\alpha \colon D \to X\):
  \begin{equation*}
    \alpha \to x \implies f \circ \alpha \to f(x).
  \end{equation*}
\end{proposition} \begin{proof}
  Assume \(f\) is continuous. Let \(V\) be an arbitrary open neighbourhood of \(f(x)\); because \(f\) is continuous, there is an open neighbourhood \(U\) of \(x\) such that \(f(U) \subseteq V\). Let \(\alpha \to x\) and take \(d \in D\) so that \(\alpha(\gdir{D}{d}) \subseteq U\); it follows that \(f(\alpha(\gdir{D}{d})) \subseteq V\) and so \(f \circ \alpha \to f(x)\).

  To prove the other direction, we show that if \(f\) is not continuous then there is a net \(\alpha\) such that \(\alpha \to x\) but \(f(x)\) is not a limit point of \(f \circ \alpha\).
  To define such a net, let \(D_x\) be the set of open neighbourhoods of \(x\); then, \((D_x,\supseteq)\) is a directed set whose upper bounds are given by intersection.
  By assumption, \(f\) is not continuous at some \(x \in X\), hence, for some open neighbourhood \(V\) of \(f(x)\) all open neighbourhoods \(U \in D_x\) contain a point \(x_U \in U\) such that \(f(x_U) \not \in V\).
  Define a net \(\alpha \colon D_x \to X\) assigning to each \(U \in D_x\) such a point \(x_U\).
  This net satisfies \(\alpha \to x\) since, for any open neighbourhood \(U\) of \(x\), it holds that \(\alpha(\gdir{D_x}{U}) \subseteq U\) since every \(U' \in \gdir{D_x}{U}\) is, by definition of \(\gdir{D_x}{U}\), contained in \(U\).
  However, for all \(U \in D_x\) the definition of \(x_U\) makes it so that \((f \circ \alpha)(U) = f(x_U)\) is not in \(V\); thus, the open neighbourhood \(V\) prevents \(f(x)\) from being a limit point of \(f \circ \alpha\).
  Therefore, it is not satisfied that \(\alpha \to x \implies f \circ \alpha \to f(x)\) and the proof by contrapositive is complete.
\end{proof}

Finally, it is useful to define the notion of subnet. 

\begin{definition} \label{def:subnet}
  Let \((A,\leq)\) and \((B,\preccurlyeq)\) be directed sets. A net \(\beta \colon B \to X\) is a \gls{subnet} of \(\alpha \colon A \to X\) if there is a function \(m \colon B \to A\) such that the following conditions are satisfied:
  \begin{itemize}
    \item the net \(\beta\) factors through \(\alpha\), \ie{} \(\beta = \alpha \circ m\);
    \item \(m\) is \emph{isotone}, \ie{} for all \(b,b' \in B\), if \(b \preccurlyeq b'\) then \(m(b) \leq m(b')\);
    \item \(m(B)\) is \emph{cofinal} in \(A\), \ie{} for every \(a \in A\) there is a \(b \in B\) satisfying \(a \leq m(b)\).
  \end{itemize}
\end{definition}

The first two conditions are straightforward: we are embedding the partial order of \(B\) into that of \(A\) while making sure that \(\alpha\) and \(\beta\) match. The third condition prevents \(B\) from being a truncated version of \(A\) that omits all of \(A\)'s larger elements; this precaution is necessary so that subnets enjoy the following property.\footnote{In Section 2.3 of~\cite{Topology} Kelley does not require the mediating function \(m\) to be isotone. However, this is a common requirement in many reference texts on topology (for instance, see Exercise 8 of the supplementary material from Chapter 3 of Munkres's book~\cite{Munkres}) since it simplifies many results on limit points of subnets.}

\begin{proposition} \label{prop:subnet_limit}
  Let \(\alpha \colon A \to X\) and \(\beta \colon B \to X\) be nets in a topological space \(X\). If \(\beta\) is a subnet of \(\alpha\) then:
  \begin{equation*}
    \alpha \to x \implies \beta \to x.
  \end{equation*}
\end{proposition} \begin{proof}
  Let \(m \colon B \to A\) be a function witnessing that \(\beta\) is a subnet of \(\alpha\). Assume \(\alpha \to x\), \ie{} for every open neighbourhood \(U\) of \(x\), there is an element \(a \in A\) such that \(\alpha(\gdir{A}{a}) \subseteq U\).
  Choose some \(b \in B\) that satisfies \(a \leq m(b)\); such an element is guaranteed to exists thanks to \(m(B)\) being cofinal in \(A\).
  Due to \(m\) being isotone we have that \(m(\gdir{B}{b}) \subseteq \gdir{A}{a}\) and, hence, \(\alpha(m(\gdir{B}{b})) \subseteq U\).
  But \(\beta = \alpha \circ m\) so \(\beta(\gdir{B}{b}) \subseteq U\), implying \(\beta \to x\) as claimed.
\end{proof}

The proposition below characterises the limits in the subspace topology. 

\begin{proposition} \label{prop:subspace_topology_limit}
  Let \((Y,\tau)\) be a topological space, let \(X\) be an arbitrary set and \(f \colon X \to Y\) an injective function. Assign to \(X\) the subspace topology induced by \(f\) (see Definition~\ref{def:subspace_topology}); then, for any net \(\alpha \colon D \to X\):
  \begin{equation*}
    \alpha \to x \iff f \circ \alpha \to f(x).
  \end{equation*}
\end{proposition}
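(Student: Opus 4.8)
The plan is to prove the two implications separately, extracting the easy direction from results already established and handling the converse by a direct unwinding of the definition of the subspace topology.

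For the forward implication, $\alpha \to x \implies f \circ \alpha \to f(x)$, I would simply invoke continuity. Definition~\ref{def:subspace_topology} already records that $f \colon (X,\tau_f) \to (Y,\tau)$ is continuous when $X$ carries the subspace topology $\tau_f$. Proposition~\ref{prop:preserve_limit} states that continuous functions preserve limit points of nets, so applying it to this $f$ yields the desired implication immediately.

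The converse, $f \circ \alpha \to f(x) \implies \alpha \to x$, is the substantive direction, and I would argue it directly from the definition $\tau_f = \{f^{-1}(V) \mid V \in \tau\}$. Assume $f \circ \alpha \to f(x)$ and let $U$ be an arbitrary open neighbourhood of $x$ in $(X,\tau_f)$. By the definition of the subspace topology there is some $V \in \tau$ with $U = f^{-1}(V)$, and since $x \in U$ we have $f(x) \in V$, so $V$ is an open neighbourhood of $f(x)$ in $(Y,\tau)$. The hypothesis $f \circ \alpha \to f(x)$ then supplies an element $d \in D$ such that $(f \circ \alpha)(\gdir{D}{d}) \subseteq V$; that is, $f(\alpha(d')) \in V$ for every $d' \in \gdir{D}{d}$, which is precisely $\alpha(d') \in f^{-1}(V) = U$. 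Hence $\alpha(\gdir{D}{d}) \subseteq U$, and since $U$ was an arbitrary open neighbourhood of $x$, this establishes $\alpha \to x$.

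I do not anticipate a genuine obstacle here: the entire content is the observation that, by construction of $\tau_f$, every open neighbourhood of $x$ pulls back from an open neighbourhood of $f(x)$, so the convergence condition transfers verbatim along $f^{-1}$. The only point requiring a little care is keeping the preimage bookkeeping straight, reading $\alpha(d') \in f^{-1}(V)$ as equivalent to $f(\alpha(d')) \in V$. Notably, injectivity of $f$ is not used in this argument; it is assumed in Definition~\ref{def:subspace_topology} but plays no role in the proof of convergence.
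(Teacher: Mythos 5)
Your proof is correct and takes essentially the same route as the paper's: the forward direction is immediate from continuity of \(f\) (via Proposition~\ref{prop:preserve_limit}), and the converse is exactly the paper's unwinding, writing each open neighbourhood of \(x\) as \(f^{-1}(V)\) and transferring the tail of the net along the preimage. Your closing remark that injectivity plays no role in the convergence argument is also accurate (the statement holds for the initial topology induced by any function), a point the paper leaves implicit.
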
 \begin{proof}
  One direction is immediate from \(f\) being continuous by definition of the subspace topology.
  The other direction follows from the fact that, for every open neighbourhood \(U \in \tau_f\) of \(x\), there is an open set \(V \in \tau\) such that \(U = f^{-1}(V)\) by definition of the subspace topology.
  Then, if \(f \circ \alpha \to f(x)\), there is an element \(d \in D\) such that \(f(\alpha(\gdir{D}{d})) \subseteq V\); thus, \(\alpha(\gdir{D}{d}) \subseteq f^{-1}(V)\), implying \(\alpha \to x\).
\end{proof}

In general, limit points need not be unique, but they are if the net is defined in a Hausdorff space, as established below.

\begin{proposition}
  Let \(\alpha\) be a net in a Hausdorff space. Then, \(\alpha\) has at most one limit point.
\end{proposition}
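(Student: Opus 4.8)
The plan is to argue by contradiction, invoking the directedness of the indexing set to force a single tail of the net into two disjoint neighbourhoods simultaneously. Suppose for contradiction that a net $\alpha \colon D \to X$ in a Hausdorff space $X$ has two distinct limit points $x, y \in X$ with $x \not= y$. Since $X$ is Hausdorff, the separation condition yields open neighbourhoods $U$ of $x$ and $V$ of $y$ such that $U \cap V = \varnothing$.

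Next I would unfold the definition of limit point at each of the two points. Because $\alpha \to x$, there is an element $d_1 \in D$ such that $\alpha(\gdir{D}{d_1}) \subseteq U$; because $\alpha \to y$, there is an element $d_2 \in D$ such that $\alpha(\gdir{D}{d_2}) \subseteq V$. The crucial step is now to combine these two tails: since $(D,\leq)$ is a directed set, the pair $d_1, d_2$ admits an upper bound $d \in D$ satisfying $d_1 \leq d$ and $d_2 \leq d$. Transitivity of $\leq$ then gives the inclusions $\gdir{D}{d} \subseteq \gdir{D}{d_1}$ and $\gdir{D}{d} \subseteq \gdir{D}{d_2}$, since any $d' \in \gdir{D}{d}$ satisfies $d_1 \leq d \leq d'$ and $d_2 \leq d \leq d'$.

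Applying $\alpha$ to these inclusions yields $\alpha(\gdir{D}{d}) \subseteq U$ and $\alpha(\gdir{D}{d}) \subseteq V$, whence $\alpha(\gdir{D}{d}) \subseteq U \cap V = \varnothing$. The contradiction is sealed by observing that $\gdir{D}{d}$ is nonempty (it contains $d$ itself, as $d \leq d$), so $\alpha(\gdir{D}{d})$ cannot be empty. This rules out the existence of two distinct limit points, proving the claim.

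I expect no genuine obstacle here; the only point requiring care is the explicit use of directedness to produce the common upper bound $d$, which is precisely the feature of directed sets (over arbitrary partially ordered sets) that makes the argument work — this is exactly why the statement is phrased for nets rather than for families indexed by an arbitrary poset. A minor bookkeeping point worth stating is the nonemptiness of $\gdir{D}{d}$, which is what converts the set-theoretic contradiction $\alpha(\gdir{D}{d}) \subseteq \varnothing$ into an actual impossibility.
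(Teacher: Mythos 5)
Your proof is correct and follows essentially the same route as the paper's: both use the definition of limit point to obtain two tails, invoke directedness to find a common upper bound $d$, and observe that $\alpha$ evaluated at (or beyond) $d$ lies in both neighbourhoods. The only difference is presentational — the paper argues contrapositively (any two neighbourhoods of $x$ and $y$ intersect, hence $x = y$ by the Hausdorff condition), while you assume $x \not= y$ and derive a contradiction from the disjoint neighbourhoods; these are logically the same argument.
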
 \begin{proof}
  Let \((D,\leq)\) be a directed set and \(X\) a Hausdorff space. Assume both \(x,y \in X\) are limit points of \(\alpha \colon D \to X\).
  Let \(U_x\) and \(U_y\) be open neighbourhoods of \(x\) and \(y\) respectively.
  Take \(d_x \in D\) such that \(\alpha(\gdir{D}{d_x}) \subseteq U_x\); do the same for \(U_y\), obtaining \(d_y \in D\).
  Because \(D\) is a directed set, we may take an upper bound of \(d_x\) and \(d_y\), call it \(d\).
  It follows that \(\alpha(d)\) is both in \(U_x\) and \(U_y\) and, hence, \(U_x \cap U_y \not = \varnothing\).
  This holds for any two open neighbourhoods of \(x\) and \(y\) so, according to the Hausdorff condition, they must be equal.
\end{proof}

When working with a Hausdorff space \(X\), the notation
\begin{equation*}
  \Lim{d \in D} \alpha(d) \keq x
\end{equation*}
will be used to indicate that the net \(\alpha \colon D \to X\) has \(x\) as its (unique) limit point, and we may use \(\lim \alpha\) to refer to such a point.

\subsection{Hausdorff commutative monoids}

A Hausdorff monoid is a monoid whose underlying set comes equipped with a Hausdorff topology and whose monoid operation is continuous.
Let \((X,+)\) be a monoid; since the monoid operation has the type \(+ \colon X \times X \to X\), to discuss continuity we must first assign a topology to \(X \times X\).

\begin{definition} \label{def:product_topology}
  Let \(\{(X_i,\tau_i)\}_I\) be a collection of topological spaces. Let \(\times_I X_i\) be the Cartesian product of the underlying sets and let \(\cB_\times\) be the collection of sets of the form \(\times_I U_i\) such that \(U_i \in \tau_i\) for each \(i \in I\) and the subset \(\{i \in I \mid U_i \not = X_i\}\) is finite.
  The \gls{product topology} \(\tau_\times\) is the one generated from the base \(\cB_\times\).
\end{definition}

This definition makes the topological space \((\times_I X_i, \tau_\times)\) a categorical product in \(\Top\) and in \(\TopHaus\).
In essence, this means that \(\tau_\times\) is the coarsest topology that can be assigned to the Cartesian product while satisfying that its projections are continuous functions.
Since \(\times_I X_i\) is a categorical product in \(\Set\), every net \(\alpha \colon D \to \times_I X_i\) is uniquely determined by a collection of nets \(\{\alpha_i \colon D \to X_i\}_I\) obtained by composing \(\alpha\) with the appropriate projection.
Let \((\alpha_i)_I\) be a shorthand for a net \(\alpha\) in \(\times_I X_i\) and, similarly, let \((x_i)_I \in \times_I X_i\) denote a tuple of dimension \(\abs{I}\) where \(x_i \in X_i\) for each \(i \in I\).

\begin{proposition} \label{prop:product_topology_limit}
  Let \(\{(X_i,\tau_i)\}_I\) be an indexed set of topological spaces and let \((\alpha_i)_I \colon D \to \times_I X_i\) be an arbitrary net. Then,
  \begin{equation*}
    (\alpha_i)_I \to (x_i)_I \  \iff  \ \forall i \in I,\, \alpha_i \to x_i.
  \end{equation*}
\end{proposition}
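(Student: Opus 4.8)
The plan is to prove the two directions separately, using the characterisations of limit points established earlier in this chapter. The forward implication is a direct consequence of continuity. Recall that the product topology $\tau_\times$ (Definition~\ref{def:product_topology}) is precisely the topology making each projection $\pi_i \colon \times_I X_i \to X_i$ continuous. Since $\alpha_i = \pi_i \circ (\alpha_i)_I$ by construction, assuming $(\alpha_i)_I \to (x_i)_I$ and applying Proposition~\ref{prop:preserve_limit} to each continuous $\pi_i$ yields $\alpha_i = \pi_i \circ (\alpha_i)_I \to \pi_i\big((x_i)_I\big) = x_i$ for every $i \in I$.

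For the converse, I would assume $\alpha_i \to x_i$ for all $i \in I$ and appeal to Proposition~\ref{prop:base_limit}, so that it suffices to verify convergence against the basic open sets in $\cB_\times$. Let $B = \times_I U_i \in \cB_\times$ be a basic open neighbourhood of $(x_i)_I$, with each $U_i \in \tau_i$ containing $x_i$. By the definition of the base, the set $F = \{i \in I \mid U_i \neq X_i\}$ is finite. For each $i \in F$, since $\alpha_i \to x_i$ and $U_i$ is an open neighbourhood of $x_i$, there is an element $d_i \in D$ with $\alpha_i(\gdir{D}{d_i}) \subseteq U_i$.

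The crux of the argument is to collapse these finitely many witnesses $\{d_i\}_{i \in F}$ into a single one. I would first note that the directed set axiom, which supplies an upper bound for every \emph{pair} of elements, extends by a routine induction to yield an upper bound for any finite nonempty subset of $D$; thus there is some $d \in D$ with $d_i \leq d$ for all $i \in F$ (if $F = \varnothing$ any element of the nonempty set $D$ will do). Then, for every $d' \in \gdir{D}{d}$ and every $i \in F$ we have $d_i \leq d'$, so $\alpha_i(d') \in U_i$; and for $i \notin F$ we have $U_i = X_i$, so $\alpha_i(d') \in U_i$ trivially. Hence $(\alpha_i(d'))_I \in \times_I U_i = B$, that is, $(\alpha_i)_I(\gdir{D}{d}) \subseteq B$. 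Proposition~\ref{prop:base_limit} then gives $(\alpha_i)_I \to (x_i)_I$.

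The main obstacle is precisely this step of finding a common index $d$, and it is exactly here that the finiteness condition built into $\cB_\times$ does the essential work: directedness only guarantees upper bounds for finite collections, so the restriction that only finitely many coordinates differ from the full space $X_i$ is what makes the converse hold. I expect no further subtlety beyond making the inductive extension of the upper-bound property explicit and handling the degenerate case $F = \varnothing$.
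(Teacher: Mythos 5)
Your proof is correct and follows essentially the same route as the paper: the forward direction via continuity of the projections and Proposition~\ref{prop:preserve_limit}, and the converse by checking only the basic open sets of \(\cB_\times\) via Proposition~\ref{prop:base_limit}, extracting a witness \(d_i\) for each of the finitely many coordinates with \(U_i \neq X_i\), and passing to a common upper bound supplied by directedness. Your explicit remarks on extending pairwise upper bounds to finite sets by induction and on the degenerate case \(F = \varnothing\) are details the paper leaves implicit, but the substance of the argument is identical.
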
 \begin{proof}
  One direction follows from Proposition~\ref{prop:preserve_limit} since projections \(\pi_i \colon \times_I X_i \to X_i\) are continuous.
  For the other direction assume that \(\alpha_i \to x_i\) for all \(i \in I\); this implies that for every open neighbourhood \(U \in \tau_i\) of \(x_i\) there is an element \(d_i \in D\) such that \(\alpha_i(\gdir{D}{d_i}) \subseteq U\).
  Let \(\times_I V_i\) be an open neighbourhood of \((x_i)_I\) in the base \(\cB_\times\) (see Definition~\ref{def:product_topology}) and define the following subset of \(I\):
  \begin{equation*}
    J = \{i \in I \mid V_i \not= X_i\}
  \end{equation*}
  which is guaranteed to be finite due to the definition of \(\cB_\times\).
  For each \(j \in J\) obtain an element \(d_j \in D\) such that \(\alpha_j(\gdir{D}{d_j}) \subseteq V_j\) and let \(\hat{d}\) be an upper bound for the collection \(\{d_j\}_J\) --- such an upper bound exists because \(J\) is finite.
  Clearly, it holds that \(\alpha_i(\gdir{D}{\hat{d}}) \subseteq V_i\) for all \(i \in I\), either because \(i \in J\) or because \(V_i = X_i\); hence, \((\alpha_i(\gdir{D}{\hat{d}}))_I \subseteq \times_I V_i\).
  This applies to every open neighbourhood of \((x_i)_I\) in the base \(\cB_\times\) so that Proposition~\ref{prop:base_limit} implies that \((\alpha_i)_I \to (x_i)_I\), concluding the proof.
\end{proof}

In particular, this allows us to characterise continuous function \(X \times Y \to Z\) as limit-preserving functions on both coordinates.

\begin{proposition} \label{prop:preserve_limit_2}
  Let \(X\), \(Y\) and \(Z\) be topological spaces and let \(f \colon X \times Y \to Z\) be a function; then, \(f\) is continuous iff for every net \((\alpha,\beta) \colon D \to X \times Y\) the following is satisfied:
  \begin{equation} \label{eq:preserve_limit_2}
    \alpha \to x \text{ and } \beta \to y \implies f(\alpha,\beta) \to f(x,y).
  \end{equation}
\end{proposition}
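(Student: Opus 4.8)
The plan is to obtain this as an immediate corollary of two results already established: Proposition~\ref{prop:preserve_limit}, which characterises continuity of a function as preservation of net limits, and Proposition~\ref{prop:product_topology_limit}, which characterises limits in the product topology coordinate-wise. The essential observation (recorded in the discussion preceding Proposition~\ref{prop:product_topology_limit}) is that a net into the product $X \times Y$ is the same data as a pair of nets $\alpha \colon D \to X$ and $\beta \colon D \to Y$ sharing a common directed domain $D$; I write such a net as $(\alpha,\beta)$. Instantiating Proposition~\ref{prop:product_topology_limit} at the two-element index set gives that $(\alpha,\beta) \to (x,y)$ holds if and only if both $\alpha \to x$ and $\beta \to y$.

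For the forward direction, I would assume $f$ continuous and apply Proposition~\ref{prop:preserve_limit} to $f \colon X \times Y \to Z$: for every net $(\alpha,\beta) \colon D \to X \times Y$ and every point $(x,y)$, the limit $(\alpha,\beta) \to (x,y)$ implies $f \circ (\alpha,\beta) \to f(x,y)$. Rewriting the hypothesis $(\alpha,\beta) \to (x,y)$ via Proposition~\ref{prop:product_topology_limit} as the conjunction $\alpha \to x$ and $\beta \to y$ yields exactly the implication~\eqref{eq:preserve_limit_2}.

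For the converse, I would assume~\eqref{eq:preserve_limit_2} and verify the hypothesis of Proposition~\ref{prop:preserve_limit} for $f$. Let $(\alpha,\beta) \colon D \to X \times Y$ be any net with $(\alpha,\beta) \to (x,y)$. By Proposition~\ref{prop:product_topology_limit} this gives $\alpha \to x$ and $\beta \to y$, so by the assumed property~\eqref{eq:preserve_limit_2} we obtain $f(\alpha,\beta) \to f(x,y)$, i.e. $f \circ (\alpha,\beta) \to f(x,y)$. Since this holds for every net converging to every point of $X \times Y$, Proposition~\ref{prop:preserve_limit} concludes that $f$ is continuous.

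There is no substantial obstacle here; the proof is a direct translation between the two established characterisations. The only point requiring a moment of care is the identification of an arbitrary net into $X \times Y$ with a pair $(\alpha,\beta)$ of coordinate nets on the same directed set, which legitimises applying the product-limit characterisation of Proposition~\ref{prop:product_topology_limit} to the very nets that Proposition~\ref{prop:preserve_limit} quantifies over; once this is in place both directions unwind mechanically.
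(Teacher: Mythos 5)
Your proof is correct and follows essentially the same route as the paper: both directions are obtained by translating between Proposition~\ref{prop:preserve_limit} (continuity as net-limit preservation) and Proposition~\ref{prop:product_topology_limit} (coordinate-wise limits in the product), using the identification of a net into $X \times Y$ with a pair of coordinate nets on a common directed set. No gaps.
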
 \begin{proof}
  According to Proposition~\ref{prop:preserve_limit}, \(f\) is continuous if and only if for every net \((\alpha,\beta) \colon D \to X \times Y\) the following is satisfied: 
  \begin{equation} \label{eq:preserve_limit_2_aux}
    (\alpha,\beta) \to (x,y) \implies f \circ (\alpha,\beta) \to f(x,y).
  \end{equation}
  Assume \(\alpha \to x\) and \(\beta \to y\); then, Proposition~\ref{prop:product_topology_limit} implies that \((\alpha,\beta) \to (x,y)\) so, if \(f\) is continuous, \(f \circ (\alpha,\beta) \to f(x,y)\) and the implication~\eqref{eq:preserve_limit_2} is satisfied.
  Conversely, assume \((\alpha,\beta) \to (x,y)\) and recall that Proposition~\ref{prop:product_topology_limit} works in both directions, so it implies \(\alpha \to x\) and \(\beta \to y\). Then, if~\eqref{eq:preserve_limit_2} is satisfied, it imposes \(f \circ (\alpha,\beta) \to f(x,y)\); therefore,~\eqref{eq:preserve_limit_2_aux} holds for every net \((\alpha,\beta)\) in \(X \times Y\), implying that \(f\) is continuous.
\end{proof}

Finally, we have all the ingredients to define topological monoids and, in particular, Hausdorff commutative monoids.

\begin{definition} \label{def:topological_monoid}
  Let \((X,\tau)\) be a topological space and let \((X,+)\) be a monoid.
  We say \((X,\tau,+)\) is a \gls{topological monoid} if \(+ \colon X \times X \to X\) is continuous.
  A \gls{Hausdorff commutative monoid} is a topological monoid where \((X,\tau)\) is Hausdorff and \((X,+)\) is commutative.
  A \gls{Hausdorff abelian group} is a Hausdorff commutative monoid \((X,\tau,+)\) where \((X,+)\) is an abelian group and the inversion map \(x \mapsto -x\) is continuous.
\end{definition}

\begin{example}
  The set \(\Rset\) of real numbers with the standard topology (see Example~\ref{ex:R_topology}) and the standard addition forms a Hausdorff commutative monoid --- in fact, a Hausdorff abelian group.
\end{example}

\begin{proposition} \label{prop:normed_Hausdorff}
  Every normed vector space is a Hausdorff abelian group.
\end{proposition}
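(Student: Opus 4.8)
The plan is to verify, for an arbitrary normed vector space $(V,\norm{-})$, the three defining properties of a Hausdorff abelian group from Definition~\ref{def:topological_monoid}: that $(V,+)$ is an abelian group, that the norm-induced topology is Hausdorff, and that both addition $+ \colon V \times V \to V$ and inversion $x \mapsto -x$ are continuous. The first two are immediate from earlier results. Vector space addition is associative and commutative, has neutral element $0$, and every $v \in V$ has additive inverse $-v$, so $(V,+)$ is an abelian group. Moreover, Example~\ref{ex:metric_topology} establishes that the norm induces a metric $d(u,v) = \norm{u-v}$ making $V$ a Hausdorff space whose topology admits the open balls $B_{x,\epsilon}$ as a base. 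It therefore remains only to check continuity of the two operations.

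For continuity of $+$, I would argue at an arbitrary point $(x,y) \in V \times V$ and invoke Proposition~\ref{prop:base_continuous}, so that it suffices to consider the basic open neighbourhoods of $x+y$, namely the balls $B_{x+y,\epsilon}$. Taking the product-topology neighbourhood $U = B_{x,\epsilon/2} \times B_{y,\epsilon/2}$ of $(x,y)$ (a basic open set by Definition~\ref{def:product_topology}), the triangle inequality yields, for every $(u,v) \in U$,
\[
  \norm{(u+v)-(x+y)} \leq \norm{u-x} + \norm{v-y} < \tfrac{\epsilon}{2} + \tfrac{\epsilon}{2} = \epsilon,
\]
hence $+(U) \subseteq B_{x+y,\epsilon}$. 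As $(x,y)$ and $\epsilon$ are arbitrary, this shows $+$ is continuous. Equivalently, one could phrase the same estimate through the net characterisation of Proposition~\ref{prop:preserve_limit_2}, verifying $\alpha + \beta \to x+y$ whenever $\alpha \to x$ and $\beta \to y$.

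For continuity of inversion the same strategy applies at an arbitrary $x \in V$: given a basic neighbourhood $B_{-x,\epsilon}$ of $-x$, the ball $B_{x,\epsilon}$ works, since for $u \in B_{x,\epsilon}$ one has $\norm{(-u)-(-x)} = \norm{u-x} < \epsilon$, so $-u \in B_{-x,\epsilon}$; Proposition~\ref{prop:base_continuous} then gives continuity. No step presents a genuine obstacle. The only point requiring care is to select the neighbourhood in the \emph{product} topology when treating $+$, and the single analytic ingredient throughout is the triangle inequality for the norm.
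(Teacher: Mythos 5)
Your proof is correct and follows essentially the same route as the paper's: both reduce to the $\epsilon/2$ triangle-inequality estimate $+(B_{x,\epsilon/2}, B_{y,\epsilon/2}) \subseteq B_{x+y,\epsilon}$ after citing Example~\ref{ex:metric_topology} for the Hausdorff property, with inversion handled by the analogous (trivial) estimate. Your version is merely slightly more explicit, spelling out the appeal to Proposition~\ref{prop:base_continuous} and the inversion argument that the paper dismisses as ``similar means.''
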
 \begin{proof}
  As discussed in Example~\ref{ex:metric_topology}, every normed vector space \(V\) is a Hausdorff space and, by virtue of being a vector space, it is an abelian group with respect to addition.
  For every pair of vectors \(v,u \in V\) and every open neighbourhood \(U\) of \(v+u\) there is some \(\epsilon > 0\) such that \(B_{v+u,\epsilon} \subseteq U\) where \(B_{v+u,\epsilon}\) is the ball of radius \(\epsilon\) and center \(v+u\).
  It follows from the triangle inequality of the norm that every pair of vectors \(v' \in B_{v,\tfrac{\epsilon}{2}}\) and \(u' \in B_{u,\tfrac{\epsilon}{2}}\) satisfies \(v'+u' \in B_{v+u,\epsilon}\) and, consequently, \(+(B_{v,\tfrac{\epsilon}{2}},B_{u,\tfrac{\epsilon}{2}}) \subseteq U\), implying that \(+\) is continuous.
  The fact that the inversion map is continuous can be shown by similar means, completing the proof.
\end{proof}

\begin{definition}
  Let \GLS{\(\HausCMon\)}{HausCMon} be the category whose objects are Hausdorff commutative monoids and whose morphisms are continuous monoid homomorphisms.
  Let \GLS{\(\HAG\)}{HAG} be the full subcategory of \(\HausCMon\) whose objects are restricted to Hausdorff abelian groups.
\end{definition}

It is well known that \(\CMon\), \(\Ab\) and \(\TopHaus\) are complete categories and the forgetful functor from each of these to \(\Set\) preserves limits. This can be used to show that \(\HausCMon\) and \(\HAG\) are complete categories, as established below.

\begin{proposition} \label{prop:HausCMon_complete}
  Both \(\HausCMon\) and \(\HAG\) are complete categories.
\end{proposition}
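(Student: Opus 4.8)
The plan is to show that $\HausCMon$ has all small products and all equalizers, from which completeness follows. The guiding principle is that limits in both $\CMon$ and $\TopHaus$ are computed on the underlying set (their forgetful functors to $\Set$ preserve limits); I will therefore build each limit on the underlying-set limit, equipping it simultaneously with the limit monoid structure from $\CMon$ and the limit topology from $\TopHaus$, then verify that the two structures remain compatible, \ie{} that the operation stays continuous.

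For products, given a small family $\{(X_i,\tau_i,+_i)\}_I$ of Hausdorff commutative monoids, I endow $\times_I X_i$ with the product topology $\tau_\times$ (Definition~\ref{def:product_topology}) and with the componentwise monoid operation. The space $(\times_I X_i,\tau_\times)$ is the categorical product in $\TopHaus$, hence Hausdorff, while the componentwise operation makes $\times_I X_i$ the categorical product in $\CMon$. It then remains to check that the operation is continuous with respect to $\tau_\times$; I verify this via nets using Proposition~\ref{prop:preserve_limit_2} together with Proposition~\ref{prop:product_topology_limit}: a net in $(\times_I X_i) \times (\times_I X_i)$ converges iff it converges componentwise, and since each $+_i$ is continuous, applying the operation componentwise preserves the limit. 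Projections are continuous homomorphisms, so this is a cone, and for any competing cone $\{h_i\colon A \to X_i\}_I$ the mediating set-function $m$ is a homomorphism (product in $\CMon$) and continuous (product in $\TopHaus$), giving the universal property in $\HausCMon$.

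For equalizers, given $f,g\colon X \to Y$ in $\HausCMon$, I set $E = \{x \in X \mid f(x) = g(x)\}$ equipped with the subspace topology induced by the inclusion $e\colon E \into X$ (Definition~\ref{def:subspace_topology}) and the submonoid structure inherited from $X$. Since $f$ and $g$ are homomorphisms, $E$ is closed under $+$ and contains $0$; the subspace topology of a Hausdorff space is Hausdorff, and the restriction of the continuous operation to $E \times E$ is continuous. Thus $(E,\tau_e,+)$ is a Hausdorff commutative monoid and $e$ is a continuous homomorphism equalizing $f$ and $g$. The universal property is inherited: for any cone $h\colon A \to X$ with $f \circ h = g \circ h$, the unique set-function $m\colon A \to E$ with $h = e \circ m$ is a homomorphism (equalizer in $\CMon$) and continuous by Proposition~\ref{prop:subspace_topology_limit} and Proposition~\ref{prop:preserve_limit}. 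Having both equalizers and small products, $\HausCMon$ is complete.

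For $\HAG$ the same two constructions apply, with the extra checks that inversion remains continuous: in the product, componentwise inversion is continuous by the same net argument, and in the equalizer $E$ is a subgroup (the equalizer of group homomorphisms is closed under inversion) on which the restricted inversion is continuous. The main obstacle throughout is the continuity of the binary operation on the product, where some care is needed because its domain $(\times_I X_i) \times (\times_I X_i)$ is a product of products; the net characterizations of Propositions~\ref{prop:product_topology_limit} and~\ref{prop:preserve_limit_2} reduce this to the componentwise continuity that holds by hypothesis.
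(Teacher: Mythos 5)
Your proof is correct and takes essentially the same approach as the paper's: completeness is established by constructing equalizers via the subspace topology on \(E = \{x \in X \mid f(x) = g(x)\}\) and small products via the product topology with coordinatewise operations, checking in each case that the operation (and, for \(\HAG\), inversion) stays continuous and that the mediating maps are continuous homomorphisms. The only cosmetic difference is that you verify continuity of the product operation through the net characterisations of Propositions~\ref{prop:product_topology_limit} and~\ref{prop:preserve_limit_2}, whereas the paper argues directly with base open sets via Proposition~\ref{prop:base_continuous}; both are valid.
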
 \begin{proof}
  First, we show that \(\HausCMon\) has equalizers.
  For every pair of Hausdorff commutative monoids \((X,\tau_X,+_X)\) and \((Y,\tau_Y,+_Y)\) and every pair of morphisms \(f,g \in \HausCMon(X,Y)\), define the subset \(E = \{x \in X \mid f(x) = g(x)\}\) with inclusion \(e \colon E \into X\) and let \(\tau_e\) be the subspace topology:
  \begin{equation*}
    \tau_e = \{e^{-1}(U) \mid U \in \tau_X\} = \{U \cap E \mid U \in \tau_X\}.
  \end{equation*}
  It is generally known --- and a simple exercise to show --- that \((E,\tau_e)\) along with the continuous function \(e\) is an equalizer of \(f\) and \(g\) in \(\TopHaus\).
  Let \(+_E \colon E \times E \to E\) be the restriction of \(+_X\) to inputs in \(E\); again, it is general knowledge that \((E,+_E)\) along with the monoid homorphism \(e\) is an equalizer of \(f\) and \(g\) in \(\CMon\).
  Consequently, \(e \colon E \to X\) is a continuous monoid homomorphism and \((E,\tau_e,+_E)\) is both a Hausdorff space and a commutative monoid; to show that this \(E\) is a Hausdorff commutative monoid we must prove that \(+_E\) is continuous.
  Recall that \(+_X\) is continuous by definition, so for any \(x,x' \in E\) and any open neighbourhood \(V \in \tau_X\) of \(x+x'\), we may find \(U,W \in \tau_X\) such that \(+_X(U,W) \subseteq V\); notice that \(+_E(U \cap E,W \cap E) \subseteq V \cap E\) and every open set in \(\tau_e\) is of the form \(- \cap E\), so \(+_E\) is continuous and \(E \in \HausCMon\).
  Since \(E\) is the equalizer of \(f\) and \(g\) in \(\Set\), it follows that \(E\) is a cone in \(\HausCMon\).
  Moreover, for any other cone \(A \in \HausCMon\), the morphism \(h \colon A \to X\) will factor through the unique function \(m \colon A \to E\) due to the fact that \(E\) is an equalizer in \(\Set\).
  It only remains to show that such a unique function \(m\) is a morphism in \(\HausCMon\); this is immediate since \(m\) is the mediating morphism both in \(\TopHaus\) and \(\CMon\), implying that \(m\) is a continuous monoid homomorphism.
  In conclusion, we have shown that that \(\HausCMon\) has equalizers.
  To prove that \(\HAG\) has equalizers as well we simply need to show that the inversion map in \((E,\tau_e,+_E)\) is continuous; this follows immediately from the corresponding property in the Hausdorff abelian group \(X\).

  Next, we show that \(\HausCMon\) has small products.
  Let \(\{(X_i,\tau_i,+_i)\}_I\) be a small collection of objects in \(\HausCMon\).
  Define \(+_\times\) by applying the corresponding \(+_i\) to each coordinate; \((\times_I X_i, +_i)\) is a small product in \(\CMon\).
  Define the product topology \(\tau_\times\) as in Definition~\ref{def:product_topology}; it is well-known that \((\times_I X_i, \tau_\times)\) is a small product in \(\TopHaus\).
  It follows that projections are continuous monoid homomorphisms and we must show that \(+_\times\) is continuous.
  Every open neighbourhood in the base \(\cB_\times\) for the product topology is of the form \(\times_I V_i\) where \(V_i \in \tau_i\) for each \(i \in I\); consequently, we may find \(U_i,U'_i \in \tau_i\) so that \(+_i(U_i,U'_i) \subseteq V_i\) for each \(i \in I\) due to each \(+_i\) being continuous.
  Then, it is immediate that the condition from Proposition~\ref{prop:base_continuous} is satisfied, implying that \(+_\times\) is continuous.
  Thus, it has been shown that \((\times_I X_i, \tau_\times, +_\times)\) is a Hausdorff commutative monoid and, evidently, a cone of the discrete diagram in \(\HausCMon\); it remains to show that it is a limit.
  For any other cone \(A \in \HausCMon\), each morphism \(h_i \colon A \to X_i\) will factor through the unique function \(m \colon A \to \times_I X_i\) due to the fact that \(\times_I X_i\) is a categorical product in \(\Set\).
  It only remains to show that such a unique function \(m\) is a morphism in \(\HausCMon\); this is immediate since \(m\) is the mediating morphism both in \(\TopHaus\) and \(\CMon\), implying that \(m\) is a continuous monoid homomorphism.
  In conclusion, we have shown \(\HausCMon\) has small products.
  To prove that \(\HAG\) has small products as well we simply need to show that the inversion map in \((\times_I X_i, \tau_\times, +_\times)\) is continuous; this follows immediately from the corresponding property in each Hausdorff abelian group \(X_i\).

  A category with equalizers and small products is complete, so both \(\HausCMon\) and \(\HAG\) are complete categories, as claimed.
\end{proof}

As promised in Section~\ref{sec:SCat}, every Hausdorff commutative monoid \((X,\tau,+)\) can be given a partial function \(\Sigma \colon X^* \pto X\) such that \((X,\Sigma)\) is a finitely total \(\Sigma\)-monoid whose summable families are precisely those that converge according to the topology.

\begin{definition} \label{def:topological_Sigma}
  Let \((X,\tau,+)\) be a Hausdorff commutative monoid. For any family \(\fml{x} \in X^*\), let \((\fpset{\fml{x}},\subseteq)\) be the directed set of finite subfamilies of \(\fml{x}\) (see Example~\ref{ex:fpset_directed}) and let \(\sigma \colon \fpset{\fml{x}} \to X\) be the net that maps each finite subfamily \(\fml{x'} \in \fpset{\fml{x}}\) to the sum of its elements, obtained after a finite number of applications of \(+\).
  The \emph{extended monoid operation} \(\Sigma \colon X^* \pto X\) is defined as follows:
  \begin{equation*}
    \Sigma \fml{x} = \begin{cases}
      x \quad &\ifc \lim \sigma \keq x \\
      \undefined &\otherwise.
    \end{cases}
  \end{equation*}
\end{definition}

We require \((X,+)\) to be a commutative monoid so that associativity and commutativity guarantee that the net \(\sigma\) is well-defined: otherwise the result would depend on the arbitrary order in which the elements from the finite family are combined.
The space is required to be Hausdorff so that when \(\sigma\) converges the result is unique.\footnote{A topological space is Hausdorff iff each net in the space converges to at most one point; see Chapter 2, Theorem 3 from Kelley's book~\cite{Topology}.}
The following lemmas prove the axioms of finitely total \(\Sigma\)-monoids for the extended monoid operation.

\begin{lemma} \label{lem:topSigma_finite}
  Let \((X,\tau,+)\) be a Hausdorff commutative monoid and let \(\Sigma\) be its extended monoid operation.
  If \(\fml{x} \in X^*\) is a finite family, then \(\Sigma \fml{x}\) is defined and it equals the sum of its elements, obtained after a finite number of applications of \(+\).
\end{lemma}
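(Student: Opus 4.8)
The plan is to exploit the fact that a finite family sits at the very top of its own lattice of finite subfamilies, so that the net defining $\Sigma$ is eventually constant and therefore converges for trivial reasons, with no appeal to continuity of $+$ required.

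First I would observe that, when $\fml{x}$ is finite, $\fml{x}$ itself is an element of $\fpset{\fml{x}}$ and is in fact the greatest element of the directed set $(\fpset{\fml{x}},\subseteq)$ from Example~\ref{ex:fpset_directed}: every $\fml{x'} \in \fpset{\fml{x}}$ is by definition a (finite) subfamily of $\fml{x}$, hence $\fml{x'} \subseteq \fml{x}$. Consequently the up-set $\gdir{(\fpset{\fml{x}})}{\fml{x}}$ consists of exactly those $\fml{x'} \in \fpset{\fml{x}}$ satisfying $\fml{x} \subseteq \fml{x'} \subseteq \fml{x}$, which forces $\fml{x'} = \fml{x}$; thus $\gdir{(\fpset{\fml{x}})}{\fml{x}} = \{\fml{x}\}$. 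Next I would set $s = \sigma(\fml{x})$, the sum of all elements of $\fml{x}$ obtained after a finite number of applications of $+$; this value is well-defined and independent of the order of combination precisely because $(X,+)$ is a commutative monoid (as noted in the discussion following Definition~\ref{def:topological_Sigma}). Taking $d = \fml{x}$ in the definition of a limit point, for an arbitrary open neighbourhood $U$ of $s$ we have $\sigma(\gdir{(\fpset{\fml{x}})}{\fml{x}}) = \{s\} \subseteq U$, so $s$ is a limit point of $\sigma$. Since $X$ is Hausdorff the limit point is unique, giving $\lim \sigma \keq s$, and hence $\Sigma \fml{x} \keq s$ by Definition~\ref{def:topological_Sigma}, which is exactly the claim.

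There is essentially no substantive obstacle in this argument; the only points demanding care are bookkeeping ones. I would make sure the direction of the partial order $\subseteq$ on $\fpset{\fml{x}}$ is correctly matched against the definition of $\gdir{D}{d}$, so that $\fml{x}$ really is the top (rather than bottom) element and the up-set collapses to a singleton. I would also flag that the well-definedness of $s$ rests on commutativity, which is why the Hausdorff \emph{commutative} monoid hypothesis is the relevant one here, whereas the Hausdorff condition is used only to pin down uniqueness of the limit. This lemma will then serve as the base case establishing that the extended monoid operation restricts to ordinary finite addition, feeding into the subsequent verification of the finitely total $\Sigma$-monoid axioms.
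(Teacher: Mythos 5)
Your proof is correct and follows essentially the same route as the paper's: identify \(\fml{x}\) as the greatest element of \((\fpset{\fml{x}},\subseteq)\), note that the up-set \(\gdir{\fpset{\fml{x}}}{\fml{x}}\) collapses to \(\{\fml{x}\}\) so the net \(\sigma\) trivially has \(\sigma(\fml{x})\) as a limit point, and invoke Hausdorffness (implicit in the paper's \(\lim\) notation) to conclude \(\Sigma \fml{x} \keq \sigma(\fml{x})\). Your additional remarks on where commutativity versus the Hausdorff condition are used are accurate bookkeeping, nothing more is needed.
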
 \begin{proof}
  Let \(\sigma \colon \fpset{\fml{x}} \to X\) be the net that maps each finite subfamily \(\fml{x'} \in \fpset{\fml{x}}\) to the sum of its elements, obtained after a finite number of applications of \(+\).
  When \(\fml{x}\) is finite it is the largest element in \(\fpset{\fml{x}}\). Thus, \(\gdir{\fpset{\fml{x}}}{\fml{x}} = \{\fml{x}\}\) and \(\sigma(\fml{x})\) trivially satisfies being the limit point of \(\sigma\), regardless of the topology.
  By definition, \(\Sigma \fml{x} \keq \lim \sigma\) so we conclude that \(\Sigma \fml{x} \keq \sigma(\fml{x})\), as claimed.
\end{proof}

\begin{lemma} \label{lem:topSigma_neutral}
  Let \((X,\tau,+)\) be a Hausdorff commutative monoid and let \(\Sigma\) be its extended monoid operation.
  Let \(\fml{x} \in X^*\) be an arbitrary family and let \(\fml{x}_\emptyset\) be its subfamily of elements other than \(0\); if \(\fml{x}\) is summable then \(\fml{x}_\emptyset\) is summable as well.
\end{lemma}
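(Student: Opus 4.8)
The plan is to reduce the summability of $\fml{x}_\emptyset$ to that of $\fml{x}$ by comparing the two nets whose limits define the extended monoid operation (Definition~\ref{def:topological_Sigma}). Write $\fml{x} = \fml{x}_\emptyset \uplus \fml{x}_0$, where $\fml{x}_0$ collects the occurrences of $0$, and let $\sigma \colon \fpset{\fml{x}} \to X$ and $\sigma_\emptyset \colon \fpset{\fml{x}_\emptyset} \to X$ be the finite-partial-sum nets of $\fml{x}$ and $\fml{x}_\emptyset$, respectively. The starting observation is that deleting zeros from a finite family does not alter its finite sum, since $0$ is the neutral element of $+$ and $+$ is associative and commutative; formally, the \emph{zero-stripping} map $r \colon \fpset{\fml{x}} \to \fpset{\fml{x}_\emptyset}$ that sends each finite subfamily $\fml{z}$ to its nonzero part $\fml{z}_\emptyset$ is isotone and satisfies $\sigma = \sigma_\emptyset \circ r$ (using Lemma~\ref{lem:topSigma_finite} to evaluate these finite sums). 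By hypothesis $\fml{x}$ is summable, so $\sigma \to x$ for some $x \in X$; I aim to show $\sigma_\emptyset \to x$, which gives $\Sigma \fml{x}_\emptyset \keq x$ and in particular summability.

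First I would note that $r$ is surjective (any zero-free finite subfamily is its own nonzero part), so $\sigma$ is a \emph{subnet} of $\sigma_\emptyset$ in the sense of Definition~\ref{def:subnet}; however, this only yields the unwanted implication ``$\sigma_\emptyset \to x \Rightarrow \sigma \to x$'' via Proposition~\ref{prop:subnet_limit}. The reverse inclusion $\fpset{\fml{x}_\emptyset} \into \fpset{\fml{x}}$ fails to have cofinal image whenever $\fml{x}_0$ is nonempty, so $\sigma_\emptyset$ is \emph{not} a subnet of $\sigma$ and Proposition~\ref{prop:subnet_limit} cannot be invoked in the direction I need. Instead I would argue convergence directly via Proposition~\ref{prop:base_limit}: given an open neighbourhood $U$ of $x$, pick $\fml{z} \in \fpset{\fml{x}}$ with $\sigma(\gdir{\fpset{\fml{x}}}{\fml{z}}) \subseteq U$, and propose $\fml{y} = \fml{z}_\emptyset$ as the witnessing tail index for $\sigma_\emptyset$. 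The crucial step is a lifting construction: for any $\fml{y}' \in \fpset{\fml{x}_\emptyset}$ with $\fml{y} \subseteq \fml{y}'$, set $\fml{w} = \fml{y}' \uplus \fml{z}_0$; then $\fml{w}$ is a finite subfamily of $\fml{x}$ with $\fml{z} \subseteq \fml{w}$ and nonzero part exactly $\fml{y}'$, whence $\sigma_\emptyset(\fml{y}') = \sigma(\fml{w}) \in U$. This shows $\sigma_\emptyset(\gdir{\fpset{\fml{x}_\emptyset}}{\fml{y}}) \subseteq U$, and since $U$ was arbitrary, $\sigma_\emptyset \to x$.

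The main obstacle is precisely this lifting step, which is where the asymmetry between the two nets lives: I must recover, from an arbitrary large finite subfamily $\fml{y}'$ of $\fml{x}_\emptyset$, a large finite subfamily $\fml{w}$ of $\fml{x}$ dominating $\fml{z}$ and having $\fml{y}'$ as its nonzero part. Reattaching the fixed zero-block $\fml{z}_0$ does the job, but I would take care that $\fml{w}$ is a genuine subfamily of $\fml{x}$ --- this is legitimate because the index sets of $\fml{x}_\emptyset$ and $\fml{x}_0$ partition that of $\fml{x}$, so $\fml{y}' \uplus \fml{z}_0$ uses disjoint indices --- and that the sum over $\fml{w}$ indeed collapses to the sum over $\fml{y}'$ by the neutral-element identity already used for $r$. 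With the convergence $\sigma_\emptyset \to x$ established, summability of $\fml{x}_\emptyset$ follows from Definition~\ref{def:topological_Sigma}, consistent with the statement of Proposition~\ref{prop:neutral_element}.
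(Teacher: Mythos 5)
Your proof is correct and follows essentially the same route as the paper's: both compare the net of finite partial sums of \(\fml{x}\) with that of \(\fml{x}_\emptyset\), using that stripping zeros from a finite subfamily does not change its sum, and then transfer the tail condition from \(\sigma\) to \(\sigma_\emptyset\). In fact, your explicit lifting step \(\fml{w} = \fml{y}' \uplus \fml{z}_0\) carefully fills in exactly the point the paper dismisses as ``straightforward to check.''
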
 \begin{proof}
  Let \(\sigma \colon \fpset{\fml{x}} \to X\) be the net of finite partial sums of \(\fml{x}\) and let \(\sigma_\emptyset \colon \fpset{\fml{x}_\emptyset} \to X\) be the corresponding net for the family \(\fml{x}_\emptyset\).
  If \(\Sigma \fml{x} \keq x\) it follows from the definition of \(\Sigma\) that \(\lim \sigma \keq x\), \ie{} for every open neighbourhood \(U\) of \(x\) there is a finite subfamily \(\fml{x'} \in \fpset{\fml{x}}\) such that \(\sigma(\gdir{\fpset{\fml{x}}}{\fml{x'}}) \subseteq U\).
  Then, it is straightforward to check that \(\sigma_\emptyset(\gdir{\fpset{\fml{x}_\emptyset}}{\fml{x'}_\emptyset}) \subseteq U\) as well since for all \(\fml{x'} \in \fpset{\fml{x}}\) we have that \(\sigma(\fml{x'}) = \sigma_\emptyset(\fml{x'}_\emptyset)\) due to \(0\) being the neutral element of \(+\).
\end{proof}

\begin{lemma} \label{lem:topSigma_flattening}
  Let \((X,\tau,+)\) be a Hausdorff commutative monoid and let \(\Sigma\) be its extended monoid operation.
  Let \(\{\fml{x}_j\}_J\) be a collection of summable families where \(\abs{J} < \infty\) and let \(\fml{x} = \uplus_J \fml{x}_j\); then:
  \begin{equation}~\label{eq:topSigma_flattening}
    \Sigma \{\Sigma \fml{x}_j\}_J \keq x \implies \Sigma \fml{x} \keq x.
  \end{equation}
\end{lemma}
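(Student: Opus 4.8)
The plan is to reduce the statement to the interaction between the partial-sum nets of the families $\fml{x}_j$ and the partial-sum net of their disjoint union $\fml{x}$, exploiting continuity of $+$ together with the finiteness of $J$. Write $\fml{x}_j = \{x_i\}_{I_j}$ so that $\fml{x} = \{x_i\}_I$ with $I = \uplus_J I_j$, and for each $j$ let $\sigma_j \colon \fpset{\fml{x}_j} \to X$ and $\sigma \colon \fpset{\fml{x}} \to X$ be the partial-sum nets of Definition~\ref{def:topological_Sigma}. Each $\fml{x}_j$ is summable, so $\lim \sigma_j \keq x_j$ where $x_j := \Sigma \fml{x}_j$; and since $J$ is finite, Lemma~\ref{lem:topSigma_finite} lets me read the hypothesis $\Sigma \{x_j\}_J \keq x$ as the genuine finite sum $x = \sum_{j \in J} x_j$ obtained by iterating $+$. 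The goal is then to show $\lim \sigma \keq x$.

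First I would re-express $\sigma$ as a finite sum of reindexed copies of the $\sigma_j$. For each finite $\fml{x}' \in \fpset{\fml{x}}$, let $\fml{x}'_j$ denote the subfamily of $\fml{x}'$ whose indices lie in $I_j$; then $\fml{x}' = \uplus_J \fml{x}'_j$, and by associativity and commutativity of the finitely many applications of $+$ we get $\sigma(\fml{x}') = \sum_{j \in J} \sigma_j(\fml{x}'_j)$. Define $\tilde{\sigma}_j \colon \fpset{\fml{x}} \to X$ by $\tilde{\sigma}_j(\fml{x}') = \sigma_j(\fml{x}'_j)$. The map $m_j \colon \fml{x}' \mapsto \fml{x}'_j$ from $\fpset{\fml{x}}$ to $\fpset{\fml{x}_j}$ is isotone and surjective, hence has cofinal image, so $\tilde{\sigma}_j = \sigma_j \circ m_j$ is a subnet of $\sigma_j$ in the sense of Definition~\ref{def:subnet}; Proposition~\ref{prop:subnet_limit} then yields $\tilde{\sigma}_j \to x_j$.

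Next I would assemble these coordinate nets. Since each $\tilde{\sigma}_j \to x_j$, Proposition~\ref{prop:product_topology_limit} shows that the product net $(\tilde{\sigma}_j)_J \colon \fpset{\fml{x}} \to \prod_{j \in J} X$ converges to $(x_j)_J$. The finite-summation map $\mathrm{sum} \colon \prod_{j \in J} X \to X$, $(y_j)_J \mapsto \sum_{j \in J} y_j$, is continuous, being a composition of finitely many copies of the continuous operation $+$ (formally, this follows by induction on $\abs{J}$ using Proposition~\ref{prop:preserve_limit_2}). Applying Proposition~\ref{prop:preserve_limit} to $\mathrm{sum}$ gives $\mathrm{sum} \circ (\tilde{\sigma}_j)_J \to \mathrm{sum}((x_j)_J) = \sum_{j \in J} x_j = x$. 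But the pointwise decomposition above says precisely that $\mathrm{sum} \circ (\tilde{\sigma}_j)_J = \sigma$, so $\sigma \to x$; that is, $\Sigma \fml{x} \keq x$, as required.

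The remaining work is bookkeeping rather than conceptual, but I expect the main point to guard is the role of finiteness of $J$, which is used twice and essentially: once to ensure $\mathrm{sum}$ is continuous (an infinite product of coordinates would not be a composite of finitely many applications of $+$, and indeed the analogue of this lemma fails for infinite $J$, which is exactly why the flattening axiom restricts $J$ to be finite), and once to justify the identity $\sigma(\fml{x}') = \sum_{j \in J} \sigma_j(\fml{x}'_j)$ by associativity and commutativity of a \emph{finite} number of applications of $+$. I would also take care to verify the two subnet conditions (isotone and cofinal) for each $m_j$, and to flag that the hypothesis is interpreted through Lemma~\ref{lem:topSigma_finite} as an ordinary finite sum so that it matches $\mathrm{sum}((x_j)_J)$.
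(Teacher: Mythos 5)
Your proof is correct and takes essentially the same route as the paper's: the same isotone, cofinal reindexing maps \(m_j \colon \fpset{\fml{x}} \to \fpset{\fml{x}_j}\) turning each \(\sigma_j\) into a subnet defined on the common directed set \(\fpset{\fml{x}}\), an appeal to Proposition~\ref{prop:subnet_limit}, and continuity of addition to pass the limit through the finite sum. The only difference is organizational: the paper reduces to the case \(\abs{J}=2\) via Proposition~\ref{prop:preserve_limit_2} and iterates finitely many times, whereas you aggregate all coordinates at once through Proposition~\ref{prop:product_topology_limit} and a continuous finite-summation map, whose continuity you justify by the same induction on \(\abs{J}\).
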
 \begin{proof}
  Let \(\sigma \colon \fpset{\fml{x}} \to X\) be the net that maps each finite family \(\fml{x'} \in \fpset{\fml{x}}\) to the sum of its elements.
  Similarly, let \(\sigma_j \colon \fpset{\fml{x}_j} \to X\) be the corresponding net for each \(j \in J\); the claim assumes that each \(\fml{x}_j\) is summable, hence, the limit point \(\lim \sigma_j\) exists for each \(j \in J\).
  Assume that \(J = \{0,1\}\); the general case will follow straightforwardly.
  In this case, the implication~\eqref{eq:topSigma_flattening} simplifies to:
  \begin{equation*}
    \lim \sigma_0 + \lim \sigma_1 \keq x \implies \lim \sigma \keq x.
  \end{equation*}

  Assume that, indeed, \(\lim \sigma_0 + \lim \sigma_1 \keq x\); unfortunately, we cannot apply continuity of \(+\) since \(\sigma_0\) and \(\sigma_1\) are defined with respect to to different directed sets.
  Define a function \(m \colon \fpset{\fml{x}} \to \fpset{\fml{x}_0}\) given by \(m(\fml{x'}) = \fml{x'} \cap \fml{x}_0\); clearly, \(m\) is isotone and, for each \(\fml{y} \in \fpset{\fml{x}_0}\) it is trivially satisfied that \(\fml{y} \subseteq m(\fml{y})\).
  Thus, the net \(\hat{\sigma}_0 \colon \fpset{\fml{x}} \to X\) given by \(\hat{\sigma}_0 = \sigma_0 \circ m_0\) is a subnet of \(\sigma_0\) and Proposition~\ref{prop:subnet_limit} implies that \(\lim \hat{\sigma}_0 \keq \lim \sigma_0\).
  We may define \(\hat{\sigma}_1\) in the same manner and it is immediate that:
  \begin{equation*}
    \lim \hat{\sigma}_0 + \lim \hat{\sigma}_1 \keq x.
  \end{equation*}
  Since both \(\hat{\sigma}_0\) and \(\hat{\sigma}_1\) are now defined on the same directed set \(\fpset{\fml{x}}\), Proposition~\ref{prop:preserve_limit_2} implies that \(\lim\, (\hat{\sigma}_0 + \hat{\sigma}_1) \keq x\).
  It is straightforward to check that, due to associativity and commutativity of \(+\), the nets \(\sigma\) and \(\hat{\sigma}_0 + \hat{\sigma}_1\) are, in fact, the same function.
  Consequently, it has been shown that \(\lim \sigma \keq x\); this establishes the claim in the case that \(\abs{J} = 2\).

  If \(J\) is empty or it is a singleton set, the claim follows immediately.
  If \(\abs{J} > 2\) we may repeat the argument above as many times as necessary to aggregate all subfamilies; this process will eventually be completed since the claim assumes \(J\) is finite.
\end{proof}

\begin{lemma} \label{lem:topSigma_bracketing}
  Let \((X,\tau,+)\) be a Hausdorff commutative monoid and let \(\Sigma\) be its extended monoid operation.
  Let \(\{\fml{x}_j\}_J\) be a collection of finite families and let \(\fml{x} = \uplus_J \fml{x}_j\); then:
    \begin{equation} \label{eq:topSigma_bracketing}
      \Sigma \fml{x} \keq x \implies \Sigma \{\Sigma \fml{x}_j\}_J \keq x.
    \end{equation}
\end{lemma}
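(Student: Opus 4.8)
The plan is to realise the net of finite partial sums of $\{\Sigma \fml{x}_j\}_J$ as a subnet of the net of finite partial sums of $\fml{x}$, and then invoke Proposition~\ref{prop:subnet_limit}. First I would note that, since each $\fml{x}_j$ is finite, Lemma~\ref{lem:topSigma_finite} guarantees that $y_j := \Sigma \fml{x}_j$ is defined and equal to the ordinary finite partial sum; hence $\fml{y} = \{y_j\}_J$ is a genuine family in $X^*$ and the statement $\Sigma \{\Sigma \fml{x}_j\}_J \keq x$ is meaningful. Let $\sigma_{\fml{x}} \colon \fpset{\fml{x}} \to X$ and $\sigma_{\fml{y}} \colon \fpset{\fml{y}} \to X$ denote the nets of finite partial sums of $\fml{x}$ and $\fml{y}$ respectively; by hypothesis $\lim \sigma_{\fml{x}} \keq x$, and the goal is to prove $\lim \sigma_{\fml{y}} \keq x$.

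Next I would define a mediating function $m \colon \fpset{\fml{y}} \to \fpset{\fml{x}}$. A finite subfamily of $\fml{y}$ is indexed by a finite subset $J' \subseteq J$, and I set $m(\{y_j\}_{J'}) = \uplus_{j \in J'} \fml{x}_j$. The crucial point — and the one place where the finiteness hypothesis on each $\fml{x}_j$ is used — is that $m$ indeed lands in $\fpset{\fml{x}}$, because a finite disjoint union of finite families is finite. With $m$ in hand I would verify the three conditions of Definition~\ref{def:subnet}. Isotonicity is immediate, since $J' \subseteq J''$ gives $\uplus_{J'} \fml{x}_j \subseteq \uplus_{J''} \fml{x}_j$. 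The factoring $\sigma_{\fml{y}} = \sigma_{\fml{x}} \circ m$ follows because, for a finite $J'$ and finite families $\fml{x}_j$, associativity and commutativity of $+$ yield $\sum_{j \in J'} \Sigma \fml{x}_j = \Sigma(\uplus_{j \in J'} \fml{x}_j)$ (using Lemma~\ref{lem:topSigma_finite} to interpret each $\Sigma$ as an ordinary finite sum), which is exactly $\sigma_{\fml{y}}(\{y_j\}_{J'}) = \sigma_{\fml{x}}(m(\{y_j\}_{J'}))$.

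The remaining condition, cofinality of $m(\fpset{\fml{y}})$ in $\fpset{\fml{x}}$, is where I expect the only real care to be needed. Given an arbitrary finite subfamily $\fml{x}' \in \fpset{\fml{x}}$, each of its finitely many elements belongs to some block $\fml{x}_j$, so $\fml{x}'$ meets only a finite set of indices $J' \subseteq J$; then $\fml{x}' \subseteq \uplus_{j \in J'} \fml{x}_j = m(\{y_j\}_{J'})$, establishing cofinality. Having checked all three conditions, $\sigma_{\fml{y}}$ is a subnet of $\sigma_{\fml{x}}$, and Proposition~\ref{prop:subnet_limit} delivers $\lim \sigma_{\fml{x}} \keq x \implies \lim \sigma_{\fml{y}} \keq x$, i.e.\ $\Sigma \{\Sigma \fml{x}_j\}_J \keq x$, which is the claim.

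The subnet framing is what makes the argument clean, mirroring the strategy already used for flattening in Lemma~\ref{lem:topSigma_flattening}. The subtlety to watch is purely that $m$ must output \emph{finite} subfamilies of $\fml{x}$, which is exactly the role played by the hypothesis that each $\fml{x}_j$ is finite; this is also precisely why the dual (flattening) lemma instead restricts $J$ to be finite while permitting infinite blocks, so the two results are genuinely not symmetric.
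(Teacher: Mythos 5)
Your proposal is correct and follows essentially the same route as the paper's proof: both realise the net of finite partial sums of \(\{\Sigma \fml{x}_j\}_J\) as a subnet of the net of finite partial sums of \(\fml{x}\) via the mediating function \(J' \mapsto \uplus_{J'} \fml{x}_j\), check isotonicity, factoring and cofinality exactly as you do, and conclude by Proposition~\ref{prop:subnet_limit}. The only cosmetic difference is that the paper indexes the outer net by \(\fpset{J}\) rather than by \(\fpset{\{\Sigma \fml{x}_j\}_J}\), which amounts to the same directed set.
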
 \begin{proof}
  Since each family \(\fml{x}_j\) is finite, Lemma~\ref{lem:topSigma_finite} establishes that \(\Sigma \fml{x}_j \keq x_j\) for some \(x_j \in X\).
  Let \(\hat{\sigma} \colon \fpset{J} \to X\) be the net that maps each finite subset \(J' \in \fpset{J}\) to the sum of the finite family \(\{x_j\}_{J'}\).
  Notice that, by definition, \(\Sigma \{\Sigma \fml{x}_j\}_J \keq \lim \hat{\sigma}\) so our goal is to show that \(\Sigma \fml{x} \keq x\) implies \(\lim \hat{\sigma} \keq x\).

  Assume that \(\Sigma \fml{x} \keq x\) and let \(\sigma \colon \fpset{\fml{x}} \to X\) be the net that maps each finite subfamily \(\fml{x'} \in \fpset{\fml{x}}\) to the sum of its elements; then, \(\lim \sigma \keq x\).
  Define a function \(m \colon \fpset{J} \to \fpset{\fml{x}}\) as follows:
  \begin{equation*}
    m(J') = \uplus_{J'} \fml{x}_j.
  \end{equation*}
  The result of \(m(J')\) is a finite family since \(J'\) is finite and the claim assumes that each \(\fml{x}_j\) is a finite family.
  It is evident that \(m\) is isotone and, for each finite subfamily \(\fml{x'} \in \fpset{\fml{x}}\) we may define a finite subset \(\hat{J} \in \fpset{J}\) as follows:
  \begin{equation*}
    \hat{J} = \{j \in J \mid \fml{x'} \cap \fml{x}_j \not = \varnothing\}
  \end{equation*}
  so that \(\fml{x'} \subseteq m(\hat{J})\) is trivially satisfied.
  Moreover, notice that \(\hat{\sigma} = \sigma \circ m\); since all families involved are finite the identity follows from associativity and commutativity of \(+\).
  Consequently, \(\hat{\sigma}\) is a subnet of \(\sigma\) and Proposition~\ref{prop:subnet_limit} implies that \(\lim \hat{\sigma} \keq x\).
  As discussed above, this means that \(\Sigma \{\Sigma \fml{x}_j\}_J \keq x\), completing the proof.
\end{proof}

\begin{proposition} \label{prop:wSm_Hausdorff}
  Let \((X,\tau,+)\) be a Hausdorff commutative monoid and let \(\Sigma\) be its extended monoid operation; then, \((X,\Sigma)\) is a finitely total \(\Sigma\)-monoid.
\end{proposition}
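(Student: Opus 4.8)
The plan is to verify each defining axiom of a finitely total $\Sigma$-monoid directly, observing that the four preceding lemmas have already established all the substantial content. Recall that a finitely total $\Sigma$-monoid is a weak $\Sigma$-monoid (Definition~\ref{def:wSm}) in which every finite family is summable. So I would check the singleton, neutral element, bracketing and flattening axioms, together with the finite totality condition, each by appealing to an appropriate lemma about the extended monoid operation $\Sigma$.

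First I would dispatch finite totality together with the singleton and neutral element (i) axioms in one stroke using Lemma~\ref{lem:topSigma_finite}: every finite family $\fml{x}$ is summable, with $\Sigma\fml{x}$ equal to the sum obtained from finitely many applications of $+$. In particular, the singleton family $\{x\}$ gives $\Sigma\{x\} \keq x$, and the empty family gives $\Sigma\varnothing \keq 0$ where $0$ is the neutral element of the monoid; moreover, this directly witnesses that every finite family is summable, which is precisely finite totality. Neutral element axiom (ii) is then Lemma~\ref{lem:topSigma_neutral} verbatim.

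Next I would note that the bracketing axiom of Definition~\ref{def:wSm} --- where each subfamily $\fml{x}_j$ is required to be \emph{finite} --- is exactly the content of Lemma~\ref{lem:topSigma_bracketing}, while the flattening axiom --- where the index set $J$ is required to be \emph{finite} --- is exactly Lemma~\ref{lem:topSigma_flattening}. Having verified all four axioms of a weak $\Sigma$-monoid and the finite totality condition, I would conclude that $(X,\Sigma)$ is a finitely total $\Sigma$-monoid, completing the proof.

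Because every step is a direct citation of an already-established lemma, there is no genuine obstacle remaining at this stage; the real work lives entirely in the preceding lemmas (and ultimately in the topological facts about nets, subnets and the product topology). The only point demanding a moment's care is matching the finiteness hypotheses: the weak $\Sigma$-monoid bracketing axiom requires each $\fml{x}_j$ to be finite and the flattening axiom requires $J$ to be finite, and these are exactly the finiteness conditions under which Lemmas~\ref{lem:topSigma_bracketing} and~\ref{lem:topSigma_flattening} were proved. Since the match is exact, no further argument is needed, and the proposition follows by assembly.
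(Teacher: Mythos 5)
Your assembly is structurally the same as the paper's: finite totality and the singleton axiom from Lemma~\ref{lem:topSigma_finite}, neutral element (ii) from Lemma~\ref{lem:topSigma_neutral}, and bracketing/flattening from Lemmas~\ref{lem:topSigma_bracketing} and~\ref{lem:topSigma_flattening} (your matching of the finiteness hypotheses there is exact and correct). The one place where you move too fast --- and it is the only place where this proposition's proof has any real content --- is the claim that Lemma~\ref{lem:topSigma_finite} directly gives \(\Sigma\varnothing \keq 0\) with \(0\) the \emph{monoid} identity. Note that the neutral element of the \(\Sigma\)-monoid is, by definition, whatever element \(\Sigma\varnothing\) happens to be, and it is \emph{that} element which appears in axiom (ii); Lemma~\ref{lem:topSigma_neutral}, by contrast, is a statement about the subfamily of elements other than the monoid unit. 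So before you can cite that lemma ``verbatim'' you must know that \(\Sigma\varnothing\) coincides with the monoid unit. Lemma~\ref{lem:topSigma_finite} only says \(\Sigma\varnothing\) equals ``the sum of the elements of the empty family'', i.e.\ an empty sum, and reading an empty sum as the monoid unit is a convention rather than something forced by the definition of the net of finite partial sums --- which is presumably why the paper does not treat this as immediate.

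The paper closes this point without appealing to the convention: having first secured singleton, bracketing and flattening, it invokes the first half of Proposition~\ref{prop:neutral_element} (whose proof uses only those three axioms) to get \(\Sigma\{x,\Sigma\varnothing\} \keq x\), compares this with \(\Sigma\{x,\Sigma\varnothing\} \keq x + \Sigma\varnothing\) supplied by Lemma~\ref{lem:topSigma_finite}, and then sets \(x = 0\) so that the monoid laws force \(\Sigma\varnothing = 0\). Your proof is salvageable either by stating the empty-sum convention explicitly (it is arguably implicit in Definition~\ref{def:topological_Sigma}, since the net of partial sums must assign some value to the empty subfamily) or by reproducing this short derivation; but your closing remark that ``no genuine obstacle remains'' overlooks precisely the obstacle the paper's proof is built around.
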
 \begin{proof}
  Every finite family is summable according to Lemma~\ref{lem:topSigma_finite}.
  In particular, the empty family is summable and we have that \(\Sigma \{x,\Sigma \varnothing\} \keq x + \Sigma \varnothing\).
  Moreover, according to Proposition~\ref{prop:neutral_element}, we have \(\Sigma \{x,\Sigma \varnothing\} \keq x\) for all \(x \in X\).
  If we combine these two results and let \(x = 0\) we obtain \(0 + \Sigma \varnothing = 0\) which, due to \((X,+,0)\) being a monoid, results in \(\Sigma \varnothing = 0\).
  Thus, the neutral element of the monoid is the neutral element of the \(\Sigma\)-monoid and Lemma~\ref{lem:topSigma_neutral} completes the proof of the neutral element axiom.
  The singleton axiom follows trivially from Lemma~\ref{lem:topSigma_finite}, flattening is proven in Lemma~\ref{lem:topSigma_flattening} and bracketing is proven in Lemma~\ref{lem:topSigma_bracketing}.
\end{proof}

\begin{remark} \label{rmk:weak_bracketing}
  Since Hausdorff abelian groups are a subclass of Hausdorff commutative monoids that have additive inverses, it is immediate from Proposition~\ref{prop:no_inverses} that Hausdorff commutative monoids need not satisfy strong flattening.
  The question of whether or not Hausdorff commutative monoids satisfy strong bracketing is more subtle.
  The main obstacle is that, at some point, we must prove that for some net \(\sigma \colon \fpset{J} \times \fpset{\fml{x}} \to X\) the following implication holds:
  \begin{equation*}
    \Lim{\fpset{J} \times \fpset{\fml{x}}} \sigma \keq x \implies \Lim{J' \in \fpset{J}} \left( \Lim{\fpset{\fml{x}}} \sigma(J',-) \right) \keq x.
  \end{equation*}
  In essence, we need to obtain the limit of a net \(\hat{\sigma} \colon \fpset{J} \to X\) that maps each finite subset \(J' \in \fpset{J}\) to the limit point \(\lim \sigma(J',-)\).
  Unfortunately, since in the case of strong bracketing the families \(\fml{x}_j\) are allowed to be infinite, the argument via the construction of a subnet from the proof of Lemma~\ref{lem:topSigma_bracketing} cannot be used --- the function \(m(J') = \uplus_{J'} \fml{x}_j\) need not yield families in \(\fpset{\fml{x}}\).
  On the other hand, even though \(\Lim{\fpset{J} \times \fpset{\fml{x}}} \sigma \keq x\) guarantees that for every open neighbourhood \(U\) of \(x\) we may find \((J',\fml{x'})\) such that \(\sigma(\gdir{\fpset{J} \times \fpset{\fml{x}}}{(J',\fml{x'})}) \subseteq U\), this does not imply that \(\lim \sigma(J',-) \in U\) since \(U\) need not be closed.
  In fact, proving that an iterated limit \(\Lim{A} \Lim{B} \alpha\) exists and coincides with the double limit \(\Lim{A \times B} \alpha\) is generally a subtle matter. Theorem 4 from Chapter 2 in~\cite{Topology} shows that if \(\Lim{A} \Lim{B} \alpha\) exists we may define a directed set \(A \times B^A\) such that
  \begin{equation*}
    \Lim{(a,f) \in A \times B^A} \alpha(a,f(a)) \ \keq\ \Lim{a \in A} \Lim{b \in B} \alpha(a,b)
  \end{equation*}
  but, in order to get from the left hand side to \(\Lim{A \times B} \alpha\), it appears we would need that for each open neighbourhood \(U\) of \(\Lim{A} \Lim{B} \alpha\) the (possibly infinite) set \(\{f_U(a') \mid a' \geq a_U\}\) had an upper bound --- where \((a_U,f_U) \in A \times B^A\) is the point such that all \((a',f') \geq (a_U,f_U)\) satisfy \(\alpha(a',f'(a')) \in U\).
\end{remark}

Continuous monoid homomorphisms between Hausdorff commutative monoids become \(\Sigma\)-homomorphisms between the \(\Sigma\)-monoids induced by their extended monoid operation.

\begin{lemma} \label{lem:HausCMon_preserves_summability}
  Let \(X\) and \(Y\) be Hausdorff commutative monoids and let \(f \colon X \to Y\) be a continuous monoid homomorphism. Then, for every \(\fml{x} \in X^*\) and \(x \in X\):
  \begin{equation*}
    \Sigma \fml{x} \keq x \implies \Sigma f\fml{x} \keq f(x).
  \end{equation*}
\end{lemma}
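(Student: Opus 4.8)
The plan is to reduce the claim to the fact that continuous functions preserve limits of nets (Proposition~\ref{prop:preserve_limit}), once I recognise that the net of finite partial sums associated with $f\fml{x}$ is merely a reindexing of the $f$-image of the net associated with $\fml{x}$. First I would unfold Definition~\ref{def:topological_Sigma} for both families. Writing $\fml{x} = \{x_i\}_I$, the family $f\fml{x} = \{f(x_i)\}_I$ carries the \emph{same} index set $I$, so the finite subfamilies of each are indexed by the finite subsets $I' \subseteq I$. Let $\sigma \colon \fpset{\fml{x}} \to X$ send $\{x_i\}_{I'}$ to the finite sum of its elements, and let $\sigma' \colon \fpset{f\fml{x}} \to Y$ be the analogous net for $f\fml{x}$. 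By Definition~\ref{def:topological_Sigma}, the hypothesis $\Sigma \fml{x} \keq x$ is precisely $\sigma \to x$, and the conclusion $\Sigma f\fml{x} \keq f(x)$ is precisely $\sigma' \to f(x)$.

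Next I would invoke continuity: since $f$ is a continuous function, Proposition~\ref{prop:preserve_limit} applied to the net $\sigma$ yields $f \circ \sigma \to f(x)$. It then remains to connect $f \circ \sigma$, whose domain is $\fpset{\fml{x}}$, with $\sigma'$, whose domain is $\fpset{f\fml{x}}$. Because the two families share the index set $I$, there is a canonical order isomorphism $\phi \colon \fpset{\fml{x}} \to \fpset{f\fml{x}}$ sending $\{x_i\}_{I'}$ to $\{f(x_i)\}_{I'}$. Using that $f$ is a monoid homomorphism, I may push $f$ through each finite sum to obtain
\begin{equation*}
  (\sigma' \circ \phi)(\{x_i\}_{I'}) = \sum_{i \in I'} f(x_i) = f\Bigl(\sum_{i \in I'} x_i\Bigr) = (f \circ \sigma)(\{x_i\}_{I'}),
\end{equation*}
so that $\sigma' \circ \phi = f \circ \sigma$ as nets.

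Finally I would transport the limit along $\phi$. Since $\phi$ is an order isomorphism, the inverse $\phi^{-1} \colon \fpset{f\fml{x}} \to \fpset{\fml{x}}$ is isotone with cofinal image, witnessing that $\sigma'$ is a subnet of $\sigma' \circ \phi$ in the sense of Definition~\ref{def:subnet}; hence Proposition~\ref{prop:subnet_limit} gives $\sigma' \circ \phi \to f(x) \implies \sigma' \to f(x)$. Combining this with $f \circ \sigma = \sigma' \circ \phi \to f(x)$ yields $\sigma' \to f(x)$, i.e.\ $\Sigma f\fml{x} \keq f(x)$, as required. I expect the only genuine subtlety to be the bookkeeping in this reindexing step: making precise that finite subfamilies of $f\fml{x}$ correspond bijectively and order-isomorphically to those of $\fml{x}$ through their common index set, and that the monoid-homomorphism property is exactly what forces $\sigma' \circ \phi$ and $f \circ \sigma$ to coincide pointwise rather than merely to share a limit.
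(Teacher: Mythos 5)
Your proposal is correct and takes essentially the same route as the paper: unfold Definition~\ref{def:topological_Sigma}, apply Proposition~\ref{prop:preserve_limit} to the net of finite partial sums \(\sigma\) to obtain \(f \circ \sigma \to f(x)\), and then identify \(f \circ \sigma\) with the net of finite partial sums of \(f\fml{x}\) via the monoid-homomorphism property. The only difference is presentational: the paper compresses your reindexing/subnet bookkeeping into the single remark that this identification is ``straightforward to check,'' whereas you spell it out explicitly.
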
 \begin{proof}
  Let \(\sigma \colon \fpset{\fml{x}} \to X\) be the net of finite partial sums of \(\fml{x}\) and assume that \(\Sigma \fml{x} \keq x\).
  Then \(\lim \sigma \keq x\) and, due to \(f\) being continuous, Proposition~\ref{prop:preserve_limit} implies that \(\lim\, (f \circ \sigma) \keq f(x)\).
  Considering that \(f\) is a monoid homomorphism, it is straightforward to check that \(f \circ \sigma\) is equivalent to the net assigned to \(f\fml{x}\).
  It follows that \(\Sigma f\fml{x} \keq f(x)\), proving the claim.
\end{proof}

\begin{definition} \label{def:HausCMon_ft}
  There is a faithful functor \(G \colon \HausCMon \to \SCat{ft}\) that equips each Hausdorff commutative monoid with the extended monoid operation from Definition~\ref{def:topological_Sigma} and acts as the identity on morphisms.
\end{definition} 

Clearly, \(G\) is faithful: \(f\) and \(G(f)\) are the same function, so it is obvious that \(f \not= g\) implies \(G(f) \not= G(g)\).
However, \(G\) is not full as illustrated by the following example.

\begin{example}
  Let \(A \in \HausCMon\) with underlying set \(\{0\} \cup [1,\infty)\), discrete topology and the standard addition over the real numbers.
  Let \(B \in \HausCMon\) with the same underlying set \(\{0\} \cup [1,\infty)\) along with standard addition, but whose topology is given by the base of open sets comprised of the singleton \(\{0\}\) along with every interval \([x, x+\epsilon)\) for each \(x \in [1,\infty]\) and each \(\epsilon > 0\).

  Notice that \(\Sigma\) on both \(G(A)\) and \(G(B)\) is only defined on families with a finite number of nonzero elements; the reason being that all nonzero elements are \(\geq 1\) so infinite sums of these always diverge. Since both \(G(A)\) and \(G(B)\) have the same summable families, it is immediate that \(G(A) = G(B)\). Then, if we consider the identity function on the underlying sets \(\id \colon B \to A\) we get that:
  \begin{itemize}
    \item \(\id \colon G(B) \to G(A)\) is a valid morphism in \(\SCat{ft}\) (indeed, it is the identity morphism);
    \item since \(G\) acts as the identity on morphisms, if the morphism \(\id \colon G(B) \to G(A)\) were to be in the image of the map \(\HausCMon(B,A) \to \SCat{ft}(G(B),G(A))\), we would need \(\id \colon B \to A\) to be continuous;
    \item however, \(\id \colon B \to A\) is not continuous since the topology of \(B\) is strictly coarser than that of \(A\).
  \end{itemize}
  Therefore, \(G \colon \HausCMon \to \SCat{ft}\) is not full.
\end{example}

Similarly, there is a faithful functor \(\HAG \to \SCat{g}\) since the requirement that the inversion map is continuous implies that it is a \(\Sigma\)-homomorphism; we expect this functor not to be full either.
The following subsection establishes that both this functor and \(\HausCMon \to \SCat{ft}\) have a left adjoint.
But first, the following remark briefly discusses whether \(\HAG\)-enriched categories are well-defined and what their relevance is with respect to categories in quantum quantum computer science.

\begin{remark} \label{rmk:HausCMon-enriched}
  Every Hilbert space is a Hausdorff abelian group and \((\FdHilb,\otimes,I)\) is a closed symmetric monoidal category, so its hom-sets are Hilbert spaces and, consequently, \(\FdHilb\) could potentially be perceived as a \(\HAG\)-enriched category.
  This is possible because the standard tensor product of vector spaces --- when applied to \emph{finite}-dimensional Hilbert spaces --- is already complete.
  However, the matter becomes more subtle when we consider the hom-sets of \(\Hilb\), since these are no longer Hilbert spaces, even though they are Hausdorff abelian groups using the strong operator topology from Definition~\ref{def:SOT}.
  However, tensor products in \(\HAG\) are a very subtle matter~\cite{TensorTAG} and it is well-known~\cite{TensorMonoidalTAG} that there is no tensor product \(\otimes\) for which \((\HAG,\otimes,I)\) is a monoidal category, implying that \(\HAG\)-enriched categories are not well-defined.

  Furthermore, in the case of \(\FdContraction\), hom-sets are not even (total) monoids, so \(\HausCMon\)-enriched categories (even if they were well-defined) would be too restrictive for our purposes.
  Fortunately, the fact that every \((\SCat{*},\otimes,I)\) category is monoidal (see Section~\ref{sec:SCat_tensor}) along with the functor \(\HAG \to \SCat{g}\) can be used to realise \(\Hilb\) as a \(\SCat{g}\)-enriched category.
  Then, due to Lemma~\ref{lem:wSm_restriction}, the hom-sets of \(\Contraction\) --- which are restrictions of the hom-sets of \(\Hilb\) --- are weak \(\Sigma\)-monoids, realising \(\Contraction\) as a \(\SCat{w}\)-enriched category.
  Consequently, Chapter~\ref{chap:trace} will deal with \(\SCat{*}\)-enriched categories whose \(\Sigma\) function on hom-sets is defined in terms of the extended monoid operation of a Hausdorff abelian group.
\end{remark}

\subsection{The left adjoint to \(\HausCMon \to \SCat{ft}\)}
\label{sec:HausCMon->SCatft}

The functor \(G \colon \HausCMon \to \SCat{ft}\) described in the previous section has a left adjoint.
An explicit construction has not been achieved; instead, the proof uses the general adjoint functor theorem.

\begin{lemma} \label{lem:HausCMon->SCatft_preserves_limits}
  The functor \(G \colon \HausCMon \to \SCat{ft}\) preserves limits.
\end{lemma}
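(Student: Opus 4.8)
The plan is to reduce the claim to the preservation of equalizers and small products, since the domain $\HausCMon$ is complete (Proposition~\ref{prop:HausCMon_complete}) and, as observed in the proof of Lemma~\ref{lem:bihom_set_functor_preserves_limits}, a functor out of a complete category that preserves equalizers and small products preserves all limits. The crucial feature throughout is that $G$ acts as the identity on underlying sets and on morphisms (Definition~\ref{def:HausCMon_ft}); consequently, the equalizers and products constructed in $\HausCMon$ (Proposition~\ref{prop:HausCMon_complete}) and in $\SCat{ft}$ (via Propositions~\ref{prop:SCatw_complete} and~\ref{prop:SCat*_complete}) already share the same underlying sets and the same structural maps. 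Thus it suffices to verify in each case that the extended monoid operation (Definition~\ref{def:topological_Sigma}) of the limit object in $\HausCMon$ coincides exactly with the $\Sigma$-monoid structure placed on that same set by the corresponding construction in $\SCat{ft}$.

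For equalizers, let $f,g \colon X \to Y$ be continuous monoid homomorphisms and let $E = \{x \in X \mid f(x) = g(x)\}$ carry the subspace topology induced by the inclusion $e \colon E \to X$. On the $\SCat{ft}$ side, $E$ is equipped with the restricted operation $\Sigma^e$ of Lemma~\ref{lem:wSm_restriction}, under which $\fml{x} \in E^*$ is summable with sum $x$ iff $\Sigma e\fml{x} \keq e(x)$. I would show this agrees with the extended monoid operation of the Hausdorff equalizer: writing $\sigma \colon \fpset{\fml{x}} \to E$ for the net of finite partial sums of $\fml{x}$, the fact that $e$ is a monoid homomorphism ensures that $e \circ \sigma$ is the net of finite partial sums of $e\fml{x}$, and Proposition~\ref{prop:subspace_topology_limit} gives $\sigma \to x$ in $E$ iff $e \circ \sigma \to e(x)$ in $X$. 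Hence $\fml{x}$ converges in $(E,\tau_e)$ precisely when $\Sigma e\fml{x} \keq e(x)$, so the two $\Sigma$-monoid structures on $E$ coincide and $G$ carries the $\HausCMon$-equalizer to the $\SCat{ft}$-equalizer.

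For small products, let $\{X_i\}_I$ be a small family and equip $\times_I X_i$ with the product topology and coordinate-wise operation $+_\times$. Its $\SCat{ft}$ counterpart $\Sigma^\times$ declares a family summable iff each of its coordinate families $\fml{z}_i \in X_i^*$ is summable, with coordinate-wise sum. Since addition in the product is coordinate-wise, the net of finite partial sums of a family $\fml{z} \in (\times_I X_i)^*$ is exactly the tuple of the coordinate nets, and Proposition~\ref{prop:product_topology_limit} states that such a net converges to $(x_i)_I$ iff each coordinate net converges to $x_i$. Therefore $\fml{z}$ converges in the product topology precisely when every $\fml{z}_i$ is summable, matching $\Sigma^\times$; since the projections are common to both constructions, $G$ preserves small products.

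The main obstacle is the bookkeeping in the two cases, namely identifying the net of finite partial sums of a family in the limit object with the corresponding nets in the factors: this is exactly where the homomorphism property of $e$ and the coordinate-wise nature of $+_\times$ are needed, and where Propositions~\ref{prop:subspace_topology_limit} and~\ref{prop:product_topology_limit} do the real work of transferring convergence. Once these identifications are made, the two $\Sigma$-monoid structures literally agree on each limit object, so $G$ preserves equalizers and small products; the completeness of $\HausCMon$ then upgrades this to preservation of all limits, completing the proof.
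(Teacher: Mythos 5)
Your proposal is correct and follows essentially the same route as the paper's proof: reduce to equalizers and small products via completeness of \(\HausCMon\), then show in each case that the extended monoid operation of the topological limit coincides with the \(\Sigma\)-monoid structure of the corresponding construction in \(\SCat{ft}\), using Proposition~\ref{prop:subspace_topology_limit} for equalizers and Proposition~\ref{prop:product_topology_limit} for products. The identifications you flag (the homomorphism property of \(e\) and the coordinate-wise nature of \(+_\times\)) are exactly the points the paper relies on as well.
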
 \begin{proof}
  First, we show that \(G\) preserves equalizers.
  Let \(X\) and \(Y\) be two Hausdorff commutative monoids and let \(f,g \in \HausCMon(X,Y)\); the equalizer of \(f\) and \(g\) in \(\HausCMon\) is defined in Proposition~\ref{prop:HausCMon_complete}.
  Its underlying set \(E = \{x \in X \mid f(x) = g(x)\}\) is the same as that of the equalizer of \(G(f)\) and \(G(g)\) in \(\SCat{ft}\) (see Proposition~\ref{prop:SCatw_complete}).
  Moreover, the extended monoid operation induced by the topology assigned to \(E \in \HausCMon\) coincides with the \(\Sigma\) function from the equalizer in \(\SCat{ft}\).
  To show this, recall that \(E\) is endowed with the subspace topology induced by \(e \colon E \into X\) so, according to Proposition~\ref{prop:subspace_topology_limit}, every family \(\fml{x} \in E^*\) whose net of finite partial sums is denoted \(\sigma \colon \fpset{\fml{x}} \to E\) satisfies:
  \begin{equation*}
    \lim \sigma \keq x \iff \lim e \circ \sigma \keq e(x)
  \end{equation*}
  Since \(e\) is a monoid homomorphism it follows from the definition of the extended monoid operation that the partial function \(\Sigma' \colon E^* \pto E\) in \(G(E)\) satisfies
  \begin{equation*}
    \Sigma' \fml{x} \keq x \iff \Sigma e\fml{x} = e(x)
  \end{equation*}
  where \(\Sigma\) is the extended monoid operation assigned to \(G(X)\).
  The latter two-way implication uniquely characterises the \(\Sigma\) function~\eqref{eq:SCat_equalizer} of the equalizer in \(\SCat{ft}\).
  Thus, both constructions yield the same \(\Sigma\)-monoid, implying \(G\) preserves equalizers.

  Similarly, the categorical product in \(\HausCMon\) of a small collection of objects \(\{X_i \in \HausCMon\}_I\) and the categorical product of \(\{G(X_i)\}_I\) in \(\SCat{ft}\) coincide in their underlying sets and projections.
  To prove that \(G\) preserves small products, it suffices to show that the extended monoid operation assigned by \(G\) to the product in \(\HausCMon\) coincides with the \(\Sigma\) function of the product in \(\SCat{ft}\).
  Let \(\{x_i \in X_i^*\}\) be a collection of families and let \(\sigma_i \colon \fpset{\fml{x}_i} \to X_i\) be their corresponding net of finite partial sums; Proposition~\ref{prop:product_topology_limit} establishes that:
  \begin{equation*}
    \lim\, (\sigma_i)_I \keq (x_i)_I \ \iff \  \forall i \in I,\, \lim \sigma_i \keq x_i.
  \end{equation*}
  Since addition in \(\times_I X_i\) is defined coordinate-wise, it follows from the definition of the extended monoid operation that the partial function \(\Sigma'\) in \(G(\times_I X_i)\) satisfies
  \begin{equation*}
    \Sigma' (\fml{x}_i)_I \keq (x_i)_I \ \iff \ \forall i \in I,\, \Sigma^i \fml{x}_i \keq x_i
  \end{equation*}
  where \(\Sigma^i\) is the extended monoid operation assigned to each \(G(X_i)\).
  The latter two-way implication uniquely characterises the \(\Sigma\) function~\eqref{eq:SCat_small_product} of the categorical product in \(\SCat{ft}\).
  Thus, both constructions yield the same \(\Sigma\)-monoid, implying \(G\) preserves small products.

  A functor that preserves equalizers and small products and whose domain is a complete category automatically preserves all limits. Therefore, \(G\) preserves limits.
\end{proof}

\begin{proposition} \label{prop:HCMon_ft_adj}
  There is a left adjoint to the functor \(G \colon \HausCMon \to \SCat{ft}\).
\end{proposition}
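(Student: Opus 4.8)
The plan is to invoke the general adjoint functor theorem (Theorem~\ref{thm:GAFT}), exactly as in the proofs of Propositions~\ref{prop:ft_w_adj} and~\ref{prop:g_ft_adj}. The category $\HausCMon$ is complete by Proposition~\ref{prop:HausCMon_complete} and locally small, since the forgetful functor $\HausCMon \to \Set$ is faithful; moreover, Lemma~\ref{lem:HausCMon->SCatft_preserves_limits} established that $G$ preserves limits. Thus the only thing left to verify is that, for every $A \in \SCat{ft}$, the comma category $\comma{A}{G}$ has a weakly initial set; the existence of the left adjoint then follows immediately from Theorem~\ref{thm:GAFT}.

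To build such a set, I would proceed as follows. Fix an object $(f,Y)$ of $\comma{A}{G}$, i.e.\ a Hausdorff commutative monoid $Y$ together with a $\Sigma$-homomorphism $f \colon A \to G(Y)$. Let $Y_f \subseteq Y$ be the submonoid algebraically generated by the image $f(A)$ --- that is, the set of all finite sums of elements of $f(A)$, including the empty sum $0$ --- endowed with the subspace topology induced by the inclusion $e \colon Y_f \into Y$. Since a submonoid of a commutative monoid is again a commutative monoid, a subspace of a Hausdorff space is Hausdorff, and the restriction of a continuous addition is continuous, $Y_f$ is an object of $\HausCMon$ and $e$ is a continuous monoid homomorphism. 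Let $q \colon A \to G(Y_f)$ be the corestriction of $f$, which is well-defined because $f(A) \subseteq Y_f$. The key verification is that $q$ is a $\Sigma$-homomorphism: if $\fml{x} \in A^*$ is summable with $\Sigma \fml{x} \keq a$, then $f\fml{x}$ is summable in $Y$ with sum $f(a)$, and the net of finite partial sums of $q\fml{x}$ in $Y_f$ is carried by $e$ to that of $f\fml{x}$ in $Y$; since the limit $f(a) = e(q(a))$ lies in $e(Y_f)$, Proposition~\ref{prop:subspace_topology_limit} gives convergence in $Y_f$ to $q(a)$, so $\Sigma q\fml{x} \keq q(a)$. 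Because $e$ is a continuous monoid homomorphism it is a $\Sigma$-homomorphism by Lemma~\ref{lem:HausCMon_preserves_summability}, and $G(e) \circ q = f$, so $(\id_A, e)$ is a morphism $(q, Y_f) \to (f, Y)$ in $\comma{A}{G}$.

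It remains to organise the objects $(q, Y_f)$ into an actual (small) set $S$. The underlying set of $Y_f$ has cardinality at most $\max(\abs{A}, \aleph_0)$, since it consists of finite sums of elements of $f(A)$ and $\abs{f(A)} \leq \abs{A}$. Transporting the monoid structure and topology along a bijection onto a subset of a fixed set $\kappa$ of that cardinality yields an isomorphic object of $\HausCMon$, and an isomorphism in $\HausCMon$ transports the comma morphism accordingly. Hence I would let $S$ be the collection of all pairs $(q, Y')$ where $Y'$ is a Hausdorff commutative monoid whose underlying set is a subset of $\kappa$ and $q \colon A \to G(Y')$ is a $\Sigma$-homomorphism; this is a set because there is only a set of binary operations and a set of topologies on subsets of $\kappa$, and only a set of $\Sigma$-homomorphisms out of $A$ into each. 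By the previous paragraph, every object of $\comma{A}{G}$ receives a morphism from some element of $S$, so $S$ is weakly initial and Theorem~\ref{thm:GAFT} applies. The main obstacle is precisely this smallness bookkeeping: unlike the purely algebraic $\SCat{*}$ cases, where one quotients the small set $X^*$ of families, here the candidate objects carry a topology, so one must both verify that the \emph{subspace} topology makes the corestriction $q$ a $\Sigma$-homomorphism (via Proposition~\ref{prop:subspace_topology_limit}) and bound the number of admissible topologies to keep $S$ small; taking the algebraically generated submonoid rather than its topological closure is what keeps the cardinality under control.
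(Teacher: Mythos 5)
Your proof is correct, and it shares the paper's overall scaffolding: the general adjoint functor theorem, with completeness of \(\HausCMon\), local smallness, and Lemma~\ref{lem:HausCMon->SCatft_preserves_limits} doing the routine work, so that everything reduces to exhibiting a weakly initial set in \(\comma{A}{G}\). Where you genuinely diverge is in how that set is built. For each \(f \colon A \to G(Y)\), the paper constructs a Hausdorff commutative monoid on the \emph{quotient} \(A/{\sim}\) of the fixed domain (where \(a \sim a'\) iff \(f(a) = f(a')\)): smallness of the resulting collection is then immediate, but the paper must define the addition \([a] +_\Sigma [b] = [\Sigma \{a,b\}]\) by hand, check it is well defined using finite totality, verify its continuity through a chain of net implications, and prove the quotient map is a \(\Sigma\)-homomorphism via a subnet argument (the quotient map is not injective, so the partial-sum nets of \(q\fml{x}\) and \(f\fml{x}\) live over different directed sets). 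You instead take the \emph{subobject} \(Y_f \subseteq Y\); note that since \(A\) is finitely total and \(f\) is a \(\Sigma\)-homomorphism, \(f(A)\) is already closed under finite sums, so your algebraically generated submonoid is exactly the image \(f(A)\), which is (up to isomorphism) the very same object the paper builds. On your side the commutative monoid structure, the Hausdorff property and continuity of addition come for free by restriction, and because the corestriction leaves elements unchanged, the families \(q\fml{x}\) and \(f\fml{x}\) coincide, so the \(\Sigma\)-homomorphism check is a direct application of Proposition~\ref{prop:subspace_topology_limit} with no subnet argument; the price is the smallness bookkeeping, which you handle correctly by bounding \(\abs{Y_f} \leq \max(\abs{A},\aleph_0)\) and transporting structure onto subsets of a fixed set \(\kappa\). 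In short, the two proofs are dual presentations of one construction --- quotient-of-domain with free smallness versus subobject-of-codomain with free structure --- and your closing remark is the right one: taking the algebraic image rather than its topological closure is what keeps the cardinality, and hence the weakly initial set, under control.
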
 \begin{proof}
  It is immediate that \(\HausCMon\) is locally small since there is a faithful forgetful functor \(\HausCMon \to \Set\).
  According to Proposition~\ref{prop:HausCMon_complete}, the category \(\HausCMon\) is complete and Lemma~\ref{lem:HausCMon->SCatft_preserves_limits} establishes that the functor \(G\) preserves limits.
  It remains to show that, for any \(X \in \SCat{ft}\), the comma category \(\comma{X}{G}\) has a weakly initial set; then, the existence of a left adjoint will follow from the general adjoint functor theorem (see Theorem~\ref{thm:GAFT}).
  Let \(S\) be the collection of all \(\Sigma\)-homomorphisms \(X \xto{q} G(A)\) where \(A\) may be any Hausdorff commutative monoid on a quotient of \(X\).  
  Notice that \(S\) is small since all quantifiers in its definition are with respect to a fixed set \(X\).

  Let \(Y\) be an arbitrary Hausdorff commutative monoid and let \(f \colon X \to G(Y)\) be a \(\Sigma\)-homomorphism.
  Recall that every function can be factored through its image as follows:
  \begin{equation*}
    f = X \xto{q} X/{\sim} \xto{\bar{f}} \im(f) \xto{u} G(Y)
  \end{equation*}
  where \(x \sim x'\) iff \(f(x) = f(x')\); moreover, \(q\) is surjective, \(\bar{f}\) is bijective and \(u\) is injective.
  Recall that \(G(Y)\) has the same underlying set as \(Y\), so \(\im(f) \subseteq Y\) and, thanks to \(u\) and \(\bar{f}\) being injective, we may define a subspace topology \(\tau_\Sigma\) on \(X/{\sim}\) induced by \(u\bar{f}\).
  The composite \(u \bar{f} \colon X/{\sim} \to Y\) is continuous and \((X/{\sim},\tau_\Sigma)\) is Hausdorff.
  Notice that \(\sim\) satisfies the following for every \(a,a',b,b' \in X\):
  \begin{equation*}
    a \sim a' \text{ and } b \sim b' \  \implies \ \Sigma \{a,b\} \sim \Sigma \{a',b'\}.
  \end{equation*}
  due to \(f\) being a \(\Sigma\)-homomorphism and \(X\) being finitely total:
  \begin{equation*}
    f(\Sigma \{a,b\}) = \Sigma \{f(a),f(b)\} = \Sigma \{f(a'),f(b')\} = f(\Sigma \{a',b'\}).
  \end{equation*}
  Therefore, the operation \(+_\Sigma \colon X/{\sim} \times X/{\sim} \to X/{\sim}\) defined as follows:
  \begin{equation*}
    [a] +_\Sigma [b] = [\Sigma \{a,b\}]
  \end{equation*}
  is well-defined and it is trivial to show that it forms a commutative monoid, with the equivalence class \([0]\) being its neutral element.
  Moreover, the composite \(u\bar{f} \colon X/{\sim} \to Y\) is a monoid homomorphism:
  \begin{equation*}
    u\bar{f}([a] +_\Sigma [b]) = f(\Sigma \{a,b\}) = \Sigma \{f(a),f(b)\} = f(a) + f(b) = u\bar{f}[a] + u\bar{f}[b]
  \end{equation*}
  where \(+\) is the monoid operation from \(Y\) and \(\Sigma \{f(a),f(b)\} = f(a) + f(b)\) by definition of the extended monoid operation from \(Y\).
  To prove that \(+_\Sigma\) is continuous first notice that for every two nets \(\alpha,\beta \colon D \to X\) the following is satisfied:
  \begin{align*}
    \lim\, q\alpha &\keq [a] \text{ and } \lim\, q\beta \keq [b] && \\
      &\implies \lim f\alpha \keq f(a) \text{ and } \lim f\beta \keq f(b) &&\text{(\(u\bar{f}\) continuous, \(f = u\bar{f}q\))} \\
      &\implies \lim\, (f\alpha + f\beta) \keq f(a) + f(b) &&\text{(\(+\) is continuous)} \\
      &\implies \lim f(\Sigma \{\alpha,\beta\}) \keq f(\Sigma \{a,b\}) &&\text{(\(\Sigma\)-homomorphism, \(X \in \SCat{ft}\))} \\
      &\implies \lim u\bar{f}[\Sigma \{\alpha,\beta\}] \keq u\bar{f}[\Sigma \{a,b\}] &&\text{(\(f = u\bar{f}q\))} \\
      &\implies \lim\, [\Sigma \{\alpha,\beta\}] \keq [\Sigma \{a,b\}]. &&\text{(def.\@ \(\tau_\Sigma\), Proposition~\ref{prop:subspace_topology_limit})}
  \end{align*}
  Notice that every net \(\alpha' \colon D \to X/{\sim}\) may be (non-uniquely) characterised by a net \(\alpha \colon D \to X\) such that \(\alpha' = q\alpha\) since \(q\) is surjective.
  Thus, according to Proposition~\ref{prop:preserve_limit_2} and the chain of implications above, the function \(+_\Sigma = [\Sigma \{-,-\}]\) is continuous and it follows that \((X/{\sim},\tau_\Sigma,+_\Sigma)\) is a Hausdorff commutative monoid.
  Finally, we must show that the quotient map \(q \colon X \to G(X/{\sim})\) is a \(\Sigma\)-homomorphism.
  Let \(\fml{x} \in X^*\) be a summable family and let \(\sigma_f \colon \fpset{f\fml{x}} \to Y\) and \(\sigma_q \colon \fpset{q\fml{x}} \to X/{\sim}\) be the nets of finite partial sums of \(f\fml{x} \in Y^*\) and \(q\fml{x} \in (X/{\sim})^*\) respectively.
  Let \(m \colon \fpset{q\fml{x}} \to \fpset{f\fml{x}}\) be the function that maps
  each subfamily \(q\fml{x'} \in \fpset{q\fml{x}}\) to the subfamily \(f\fml{x'} = u\bar{f}q\fml{x'}\); it is evident that \(m\) is isotone.
  For each \(f\fml{x'} \in \fpset{f\fml{x}}\) it is immediate that \(f\fml{x'} \subseteq m(q\fml{x'})\) so that \(m(\fpset{q\fml{x}})\) is cofinal in \(\fpset{f\fml{x}}\).
  Moreover, \(u\bar{f}\sigma_q = \sigma_f \circ m\) so that \(u\bar{f}\sigma_q\) is a subnet of \(\sigma_f\).
  The following chain of implications establishes that \(q\) is a \(\Sigma\)-homomorphism:
  \begin{align*}
    \Sigma \fml{x} \keq x &\implies \Sigma f\fml{x} \keq f(x)  &&\text{(\(\Sigma\)-homomorphism \(f\))} \\
      &\implies \lim \sigma_f \keq f(x)  &&\text{(def.\@ \(\Sigma\) in \(G(Y)\))} \\
      &\implies \lim\, (u\bar{f} \circ \sigma_q) \keq u\bar{f}q(x)  &&\text{(\(u\bar{f}\sigma_q\) subnet, \(f = u\bar{f}q\))} \\
      &\implies \lim \sigma_q \keq q(x)  &&\text{(def.\@ \(\tau_\Sigma\), Proposition~\ref{prop:subspace_topology_limit})} \\
      &\implies \Sigma q\fml{x} \keq q(x).  &&\text{(def.\@ \(\Sigma\) in \(G(X/{\sim})\))}
  \end{align*}

  In summary, it has been shown that \(X/{\sim}\) is a Hausdorff commutative monoid and that \(X \xto{q} G(X/{\sim})\) is an element in \(S\).
  Moreover, \(f = u \bar{f} \circ q\) and \(u\bar{f}\) is a continuous monoid homomorphism, providing a morphism \((q,X/{\sim}) \to (f,Y)\) in the comma category.
  This construction can be reproduced for every \(Y \in \HausCMon\) and every \(\Sigma\)-homomorphism \(f \colon X \to G(Y)\).
  Consequently, \(S\) is a weakly initial set in the comma category \(\comma{X}{G}\) and the claim that \(G\) has a left adjoint follows from the general adjoint functor theorem (see Theorem~\ref{thm:GAFT}).
\end{proof}

\begin{corollary} \label{cor:HAG_g_adj}
  There is a left adjoint to the functor \(G \colon \HAG \to \SCat{g}\).
\end{corollary}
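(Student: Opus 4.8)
The statement to prove is Corollary~\ref{cor:HAG_g_adj}: that the faithful functor $G \colon \HAG \to \SCat{g}$ admits a left adjoint. Since this is stated as a corollary to Proposition~\ref{prop:HCMon_ft_adj} (which establishes the left adjoint to $G \colon \HausCMon \to \SCat{ft}$), the natural approach is to reuse the proof strategy of that proposition essentially verbatim, verifying at each step that the extra structure --- additive inverses and continuity of inversion --- is preserved. The argument rests on the general adjoint functor theorem (Theorem~\ref{thm:GAFT}), so I would organise the proof around its three hypotheses: local smallness of $\HAG$, completeness of $\HAG$ together with limit-preservation by $G$, and the existence of a weakly initial set in each comma category $\comma{X}{G}$.

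\emph{First}, I would dispatch the easy hypotheses. Local smallness of $\HAG$ is immediate from the faithful forgetful functor $\HAG \to \Set$. Completeness of $\HAG$ is already established in Proposition~\ref{prop:HausCMon_complete}. For limit-preservation, I would observe that the equalizer and small-product constructions used in Lemma~\ref{lem:HausCMon->SCatft_preserves_limits} produce Hausdorff \emph{abelian groups} when fed Hausdorff abelian groups: the subspace and product constructions inherit a continuous inversion map (as noted in the proof of Proposition~\ref{prop:HausCMon_complete}), and the $\Sigma$-group structure on the $\SCat{g}$ side matches by the same subspace-topology and coordinate-wise arguments (Propositions~\ref{prop:subspace_topology_limit} and~\ref{prop:product_topology_limit}). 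Thus $G \colon \HAG \to \SCat{g}$ preserves equalizers and small products, hence all limits.

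\emph{Second}, the core of the proof is constructing a weakly initial set in $\comma{X}{G}$ for each $X \in \SCat{g}$. Here I would reproduce the image-factorisation argument of Proposition~\ref{prop:HCMon_ft_adj}: given any $\Sigma$-homomorphism $f \colon X \to G(Y)$ with $Y \in \HAG$, factor $f = u \bar{f} q$ through $X/{\sim}$ (with $x \sim x'$ iff $f(x) = f(x')$), endow $X/{\sim}$ with the subspace topology induced by $u\bar{f}$ and the well-defined operation $[a] +_\Sigma [b] = [\Sigma\{a,b\}]$, and check continuity of $+_\Sigma$ and that $q$ is a $\Sigma$-homomorphism exactly as before. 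The single new ingredient is that $X/{\sim}$ must be a Hausdorff abelian \emph{group} with continuous inversion. The inverse of $[a]$ is $[-a]$, well-defined because $a \sim a'$ implies $-a \sim -a'$ (applying the $\Sigma$-homomorphism $f$, which respects inverses since $X, Y \in \SCat{g}$ and $f$ preserves the continuous/$\Sigma$-homomorphic inversion map). Continuity of $[a] \mapsto [-a]$ follows from continuity of inversion in $Y$ transported through the subspace topology via Proposition~\ref{prop:subspace_topology_limit}, by a chain of implications on nets entirely parallel to the continuity proof for $+_\Sigma$. The weakly initial set $S$ consists of all such $q \colon X \to G(X/{\sim})$, which is small because each $X/{\sim}$ is a quotient of the fixed set $X$.

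\emph{The main obstacle} I anticipate is not conceptual but one of bookkeeping: ensuring that every appeal to ``the same argument as Proposition~\ref{prop:HCMon_ft_adj}'' genuinely lands inside $\HAG$ rather than merely $\HausCMon$. The delicate point is the continuity of the inversion map on the quotient $X/{\sim}$ --- one must verify it using nets and the subspace topology rather than assuming it, since quotient constructions can destroy continuity. Everything else is a routine transfer of structure. Given how faithfully the corollary mirrors its parent proposition, the cleanest writeup would simply state that the proof of Proposition~\ref{prop:HCMon_ft_adj} carries over \emph{mutatis mutandis}, with the additional verification that the constructed Hausdorff commutative monoid $(X/{\sim}, \tau_\Sigma, +_\Sigma)$ is in fact a Hausdorff abelian group whose inversion map is continuous, and then invoke the general adjoint functor theorem.
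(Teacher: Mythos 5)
Your proposal matches the paper's own proof: the paper's corollary likewise invokes the argument of Proposition~\ref{prop:HCMon_ft_adj} essentially verbatim and isolates exactly the same new ingredient, namely that the quotient \((X/{\sim},\tau_\Sigma,+_\Sigma)\) is a Hausdorff abelian group, with inverses given by \([x] \mapsto [-x]\) and continuity of inversion verified by a chain of implications on nets through the subspace topology (Proposition~\ref{prop:subspace_topology_limit}), using surjectivity of \(q\) and Proposition~\ref{prop:preserve_limit}. No gaps; your writeup is just a more explicit rendering of the same argument.
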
 \begin{proof}
  The claim follows from the same argument used in the previous proposition with the exception that, in this case, we must prove that \((X/{\sim},\tau_\Sigma,+_\Sigma)\) is a Hausdorff abelian group.
  It has already been established that \(X/{\sim} \in \HausCMon\), so we only need to prove that inverses exist in \(X/{\sim}\) for each element and that the inversion map is continuous.
  Since in this case \(X\) is a \(\Sigma\)-group, the existence of inverses in \(X/{\sim}\) is immediate, where the inverse of every \([x] \in X/{\sim}\) is \([-x]\).
  To prove that the inversion map \([x] \mapsto [-x]\) is continuous, first notice that for every net \(\alpha \colon D \to X\) the following is satisfied:
  \begin{align*}
    \lim\, q\alpha \keq [a] &\implies \lim f\alpha \keq f(a) &&\text{(\(u\bar{f}\) continuous, \(f = u\bar{f}q\))} \\
      &\implies \lim\, (-f\alpha) \keq -f(a) &&\text{(continuous inversion map in \(Y\))} \\
      &\implies \lim f(-\alpha) \keq f(-a) &&\text{(group homomorphism \(f\))} \\
      &\implies \lim u\bar{f}[-\alpha] \keq u\bar{f}[-a] &&\text{(\(f = u\bar{f}q\))} \\
      &\implies \lim\, [-\alpha] \keq [-a]. &&\text{(def.\@ \(\tau_\Sigma\), Proposition~\ref{prop:subspace_topology_limit})}
  \end{align*}
  Recall that every net \(\alpha' \colon D \to X/{\sim}\) may be (non-uniquely) characterised by a net \(\alpha \colon D \to X\) such that \(\alpha' = q\alpha\) since \(q\) is surjective.
  Thus, according to Proposition~\ref{prop:preserve_limit} and the chain of implications above, the inversion map in \(X/{\sim}\) is continuous and it follows that \((X/{\sim},\tau_\Sigma,+_\Sigma)\) is a Hausdorff abelian group, completing the proof.
\end{proof}

\section{Related work}
\label{sec:Sigma_rel_work}

The main goal of this chapter was to lay the groundwork for Chapter~\ref{chap:trace} which builds upon \(\SCat{*}\)-enriched categories to formalise iteration in a categorical setting.
Such was the objective of Haghverdi's work on unique decomposition categories~\cite{Haghverdi}, which are an instance of \(\SCat{s}\)-enriched categories that will take a central role in the next chapter.
To this end, Haghverdi introduced a notion of \(\Sigma\)-monoids that has been generalised in Section~\ref{sec:SCat} of this thesis.
On the other hand, Section~\ref{sec:topology} is, for the most part, an introduction to general topology and Hausdorff commutative monoids; the main contribution in it being the construction of a functor \(\HausCMon \to \SCat{ft}\) that  extends the monoid operation to infinitary inputs using the notion of convergence of nets from general topology.
A brief discussion of a selection of works closely related to the contents of this chapter is provided below.

\paragraph*{Haghverdi~\cite{Haghverdi}.} Haghverdi's (strong) \(\Sigma\)-monoids are defined in terms of two axioms: the unary sum axiom and the partition-asociativity axiom. The former corresponds precisely to the singleton axiom from weak \(\Sigma\)-monoids (Definition~\ref{def:wSm}) whereas the latter is equivalent to the combination of the axioms of subsummability, strong bracketing and strong flattening (Definition~\ref{def:sSm}).
Separating partition-associativity into its two directions of implication (\ie{} strong bracketing and strong flattening) allow us to weaken each of them in a different way.
The motivation behind the weakening of flattening is evident from a close inspection of Proposition~\ref{prop:no_inverses}, since it is strong flattening on \emph{infinite} collections of families what prevents the existence of additive inverses.
On the other hand, the motivation of the weakening of bracketing comes from the desire to capture  Hausdorff commutative monoids as instances of weak \(\Sigma\)-monoids (see Remark~\ref{rmk:weak_bracketing}).

\paragraph*{Hoshino~\cite{RTUDC}.} Hoshino continued Haghverdi's line of work on \(\SCat{s}\)-enriched unique decomposition categories. In~\cite{RTUDC}, Hoshino proved that the category of total strong \(\Sigma\)-monoids (\ie{} those where \(\Sigma\) is a total function) is a \emph{reflective} subcategory of strong \(\Sigma\)-monoids.
This was achieved via a non-constructive argument using the general adjoint functor theorem and its strategy is reproduced in Section~\ref{sec:ft_w_adj} to prove that the embedding \(\SCat{ft} \into \SCat{w}\) has a left adjoint.
Interestingly, the proof of Proposition~\ref{prop:s_w_adj} which establishes that \(\SCat{s}\) is a reflective subcategory of \(\SCat{w}\) can be used (virtually without alteration) to give a constructive proof of Hoshino's result; in this case, the intersection construction from Lemma~\ref{lem:intersection_of_sSm} is no longer necessary.
Moreover, the proof of the existence of a monoidal structure given by the tensor product on every \(\SCat{*}\) category (Section~\ref{sec:SCat_tensor}) follows the same strategy Hoshino used in~\cite{RTUDC} to prove such a result for the particular case of \(\SCat{s}\).

\paragraph*{Higgs~\cite{Higgs}.} The \(\Sigma\)-groups defined in this thesis are equivalent to a subclass of Higgs' notion of \(\Sigma\)-groups; in particular, to those where the set of summable families is restricted to contain countable families only. Thus, the weak \(\Sigma\)-monoids introduced in Section~\ref{sec:SCat} are general enough to capture two disjoint notions of infinitary monoids: Haghverdi's strong \(\Sigma\)-monoids (which cannot have inverses) and Higgs' \(\Sigma\)-groups (which must have inverses).
To verify that the axioms in Higgs' definition of \(\Sigma\)-groups~\cite{Higgs} coincides with ours, notice that the neutral element axiom and the singleton axiom of weak \(\Sigma\)-monoids appear explicitly in Higgs' definition as axioms (\(\Sigma1\)) and (\(\Sigma2\)) respectively, whereas finite totality is imposed by Higgs by requiring that the \(\Sigma\)-group \((X,\Sigma)\) contains an abelian group.
The bracketing axiom of weak \(\Sigma\)-monoids is Higgs' (\(\Sigma4\)) axiom, whereas Higgs' (\(\Sigma3\)) axiom corresponds to flattening and is stated as follows (paraphrased using this thesis' notation):
\begin{equation*}
  \Sigma \fml{x} \keq x \text{ and } \Sigma \fml{x'} \keq x' \implies \Sigma (\fml{x} \uplus -\fml{x'}) \keq x - x'.
\end{equation*}
Since flattening of weak \(\Sigma\)-monoids only discusses addition of finite collection of families and since \(\Sigma\)-groups are required to be finitely total, Higgs' axiom implies flattening.
Moreover, when \(\fml{x} = \varnothing\), Higgs' axiom implies that every summable family \(\fml{x'}\) has a counterpart \(-\fml{x'}\) so that \(\Sigma (-\fml{x'}) \keq -(\Sigma \fml{x'})\).
Thus, Higgs combines in a single axiom both the flattening axiom of weak \(\Sigma\)-monoids and the requirement that the inverse mapping is a \(\Sigma\)-homomorphism.
With these remarks in mind, it is straightforward to check that Higgs' definition of \(\Sigma\)-groups is equivalent to the definition presented in this thesis.

On another note, the realisation that Hausdorff abelian groups can be captured as instances of \(\Sigma\)-groups is briefly mentioned in the introduction of Higgs' paper~\cite{Higgs}, although no explicit proof is provided.
It seems that such a result does not hold if \(\Sigma\)-groups are required to satisfy strong bracketing (see Remark~\ref{rmk:weak_bracketing}) and it was Higgs' definition that revealed the subtle weakening to bracketing that enables us to define a functor \(\HAG \to \SCat{g}\).
For the purposes of this thesis, it is sufficient to study the connection between \(\Sigma\)-monoids and Hausdorff commutative monoids; but Higgs was interested in the more general framework of net groups.\footnote{A net group is an abelian group \((X,+)\) equipped with a relation \(\to\) between nets \(\alpha\) in \(X\) and elements \(x \in X\), along with some axioms that make it reasonable to interpret \(\alpha \to x\) as the statement `the net \(\alpha\) converges to \(x\)' without the use of topology.}
Higgs proved that every \(\Sigma\)-group is a net group and, moreover, the corresponding embedding functor has a right adjoint.
This is somewhat dual to the result from Section~\ref{sec:HausCMon->SCatft} where the functor realising every Hausdorff abelian group as a \(\Sigma\)-group --- via a \(\Sigma\) function defined in terms of the convergence of certain net --- is shown to have a left adjoint.
\chapter{Categorical study of quantum iteration}
\label{chap:trace}

This chapter builds upon the previous one to introduce \(\SCat{*}\)-enriched categories and study categorical traces on them.
In particular, we are interested in proving the validity of the execution formula as a categorical trace in certain categories of quantum processes.
The execution formula is closely related to the notion of iteration in computer science and, in fact, the main motivation behind this chapter is to provide the categorical foundations to study unbounded quantum iterative loops.

This chapter is structured as follows: Section~\ref{sec:traced_categories} introduces the framework of traced monoidal categories along with standard examples of categorical traces, including the execution formula; Section~\ref{sec:UDC} introduces Haghverdi's unique decomposition categories and builds upon them to provide a framework capable of capturing quantum iteration; Section~\ref{sec:quantum_ex_trace} proves that \((\FdContraction,\oplus,\ex)\) is a totally traced category, where \(\ex\) corresponds to the execution formula; Section~\ref{sec:LSI} introduces a particular category of quantum processes over time and shows that the execution formula is a well-defined categorical trace on it; finally, Section~\ref{sec:trace_rel_work} discusses related work.

\section{Traced monoidal categories (\emph{Preamble})}
\label{sec:traced_categories}

Assume \((\C,\otimes,I)\) is a monoidal category whose morphisms \(f \colon A \to B\) may be interpreted as processes that receive input of type \(A\) and yield output of type \(B\); Section~\ref{sec:QM_cats} provides examples of such categories in the context of quantum processes.
If we wish to formalise the notion of an iterative process within the category \(\C\) we need to define an operation that connects the output of a process back to one of its inputs:
\begin{equation} \label{eq:pic_trace}
  \begin{tikzpicture}
  \node[rectangle,draw=black,thick,minimum width=6mm,minimum height=10mm] (f) {\(f\)};
  \coordinate[below=3mm of f.west] (A);
  \coordinate[left=7mm of A] (Ad);
  \draw (Ad) -- node[below] {\tiny \(A\)} (A);
  \coordinate[below=3mm of f.east] (B);
  \coordinate[right=7mm of B] (Bd);
  \draw (Bd) -- node[below] {\tiny \(B\)} (B);
  \coordinate[above=3mm of f.west] (Ui);
  \coordinate[left=7mm of Ui] (Uid);
  \draw (Uid) -- node[above] {\tiny \(U\)} (Ui);
  \coordinate[above=3mm of f.east] (Uo);
  \coordinate[right=7mm of Uo] (Uod);
  \draw (Uod) -- node[above] {\tiny \(U\)} (Uo);
\end{tikzpicture}
\quad\quad \mapsto \quad\quad
\begin{tikzpicture}
  \node[rectangle,draw=black,thick,minimum width=6mm,minimum height=10mm] (f) {\(f\)};
  \coordinate[below=3mm of f.west] (A);
  \coordinate[left=7mm of A] (Ad);
  \draw (Ad) -- node[below] {\tiny \(A\)} (A);
  \coordinate[below=3mm of f.east] (B);
  \coordinate[right=7mm of B] (Bd);
  \draw (Bd) -- node[below] {\tiny \(B\)} (B);
  \coordinate[above=3mm of f.west] (Ui);
  \coordinate[left=3mm of Ui] (Uid);
  \draw (Uid) -- node[above] {\tiny \(U\)} (Ui);
  \coordinate[above=3mm of f.east] (Uo);
  \coordinate[right=3mm of Uo] (Uod);
  \draw (Uod) -- node[above] {\tiny \(U\)} (Uo);
  \coordinate[above=6mm of Uid] (Uiu);
  \coordinate[above=6mm of Uod] (Uou);
  \draw (Uiu) -- (Uou);
  \draw (Uid) edge[out=180,in=180,looseness=1.5] (Uiu);
  \draw (Uod) edge[out=0,in=0,looseness=1.5] (Uou);
\end{tikzpicture}
\end{equation}
More precisely, we need a map of type \(\C(A\otimes U, B \otimes U) \to \C(A,B)\) for all \(A,B,U \in \C\) -- which we will refer to as the \emph{trace operator} --- and, as usual, we ought to impose a set of axioms that ensure this trace operator interacts nicely with composition and the monoidal structure.
Such a framework is captured by traced monoidal categories, originally proposed by Joyal, Street and Verity~\cite{TraceJoyal} and whose definition is given below.
Traced monoidal categories are not limited to iterative loops; for instance, Section~\ref{sec:CCC_trace} provides an example of a traced monoidal category that captures the notion of calculating the trace of a matrix, which has little in common with the notion of iteration.
Consequently, for the purposes of this thesis it will be necessary to find an appropriate definition of the trace operator that captures iteration; such will be the role of the execution formula, as discussed in Section~\ref{sec:ex_formula}.

\begin{definition} \label{def:traced_category}
  Let \((\C,\otimes,I)\) be a symmetric monoidal category and let there be a family of maps
  \begin{equation*}
    \Tr_{A,B}^U \colon \C(A\otimes U, B \otimes U) \to \C(A,B)
  \end{equation*}
  for all \(A,B,U \in \C\).
  The mapping \(\Tr_{A,B}^U\) may be represented pictorially as in~\eqref{eq:pic_trace} and, for the sake of brevity, the subscript is often omitted, writing \(\Tr^U\) instead.
  \(\C\) is a \gls{traced monoidal category} if the following axioms are satisfied.
  \begin{itemize}
    \item \emph{Naturality.} For all morphisms \(f \colon A \otimes U \to B \otimes U\), \(g \colon A' \to A\) and \(h \colon B \to B'\) in \(\C\),
    \begin{align*}
      \begin{tikzpicture}
  \node (margin) {
    \begin{tikzpicture}
      \node[rectangle,draw=black,thick,minimum width=6mm,minimum height=10mm] (f) {\(f\)};
      \coordinate[below=3mm of f.west] (A);
      \node[left=10mm of A,rectangle,draw=black,thick,minimum size=4mm] (g) {\(g\)};
      \draw (g.east) -- node[below] {\tiny \(A\)} (A);
      \coordinate[left=5mm of g.west] (Ad);
      \draw (g.west) -- node[below] {\tiny \(A'\)} (Ad);
      \coordinate[below=3mm of f.east] (B);
      \node[right=10mm of B,rectangle,draw=black,thick,minimum size=4mm] (h) {\(h\)};
      \draw (h) -- node[below] {\tiny \(B\)} (B);
      \coordinate[right=5mm of h.east] (Bd);
      \draw (h.east) -- node[below] {\tiny \(B'\)} (Bd);
      \coordinate[above=3mm of f.west] (Ui);
      \coordinate[left=3mm of Ui] (Uid);
      \draw (Uid) -- node[above] {\tiny \(U\)} (Ui);
      \coordinate[above=3mm of f.east] (Uo);
      \coordinate[right=3mm of Uo] (Uod);
      \draw (Uod) -- node[above] {\tiny \(U\)} (Uo);
      \coordinate[above=6mm of Uid] (Uiu);
      \coordinate[above=6mm of Uod] (Uou);
      \draw (Uiu) -- (Uou);
      \draw (Uid) edge[out=180,in=180,looseness=1.5] (Uiu);
      \draw (Uod) edge[out=0,in=0,looseness=1.5] (Uou);
      \coordinate[below=2mm of f.south] (td);
      \coordinate[above=6mm of f.north] (tu);
      \coordinate[left=7mm of f.west] (tl);
      \coordinate[right=7mm of f.east] (tr);
      \draw[dashed,draw=black!60] (td) -| (tl) |- (tu) -| (tr) |- (td);
    \end{tikzpicture}
  };
\end{tikzpicture}
\quad\quad &= \quad\quad
\begin{tikzpicture}
  \node (margin) {
    \begin{tikzpicture}
      \node[rectangle,draw=black,thick,minimum width=6mm,minimum height=10mm] (f) {\(f\)};
      \coordinate[below=3mm of f.west] (A);
      \node[left=3mm of A,rectangle,draw=black,thick,minimum size=4mm] (g) {\(g\)};
      \draw (g.east) -- node[below] {\tiny \(A\)} (A);
      \coordinate[left=8mm of g.west] (Ad);
      \draw (g.west) -- node[below] {\tiny \(A'\)} (Ad);
      \coordinate[below=3mm of f.east] (B);
      \node[right=3mm of B,rectangle,draw=black,thick,minimum size=4mm] (h) {\(h\)};
      \draw (h) -- node[below] {\tiny \(B\)} (B);
      \coordinate[right=8mm of h.east] (Bd);
      \draw (h.east) -- node[below] {\tiny \(B'\)} (Bd);
      \coordinate[above=3mm of f.west] (Ui);
      \coordinate[left=10mm of Ui] (Uid);
      \draw (Uid) -- node[above] {\tiny \(U\)} (Ui);
      \coordinate[above=3mm of f.east] (Uo);
      \coordinate[right=10mm of Uo] (Uod);
      \draw (Uod) -- node[above] {\tiny \(U\)} (Uo);
      \coordinate[above=6mm of Uid] (Uiu);
      \coordinate[above=6mm of Uod] (Uou);
      \draw (Uiu) -- (Uou);
      \draw (Uid) edge[out=180,in=180,looseness=1.5] (Uiu);
      \draw (Uod) edge[out=0,in=0,looseness=1.5] (Uou);
      \coordinate[below=2mm of f.south] (td);
      \coordinate[above=6mm of f.north] (tu);
      \coordinate[left=14mm of f.west] (tl);
      \coordinate[right=14mm of f.east] (tr);
      \draw[dashed,draw=black!60] (td) -| (tl) |- (tu) -| (tr) |- (td);
    \end{tikzpicture}
  };
\end{tikzpicture} \\      
      h \circ \Tr^U(f) \circ g \quad\quad &= \quad\quad \Tr^U((h \otimes \id_U) \circ f \circ (g \otimes \id_U)).
    \end{align*}
    \item \emph{Dinaturality.} For all morphisms \(f \colon A \otimes U \to B \otimes U'\) and \(g \colon U' \to U\) in \(\C\),
    \begin{align*}
      \begin{tikzpicture}
  \node (margin) {
    \begin{tikzpicture}
      \node[rectangle,draw=black,thick,minimum width=6mm,minimum height=10mm] (f) {\(f\)};
      \coordinate[below=3mm of f.west] (A);
      \coordinate[left=10mm of A] (Ad);
      \draw (A) -- node[below] {\tiny \(A\)} (Ad);
      \coordinate[below=3mm of f.east] (B);
      \coordinate[right=19mm of B] (Bd);
      \draw (B) -- node[below] {\tiny \(B\)} (Bd);
      \coordinate[above=3mm of f.west] (Ui);
      \coordinate[left=3mm of Ui] (Uid);
      \draw (Uid) -- node[above] {\tiny \(U\)} (Ui);
      \coordinate[above=3mm of f.east] (Uo);
      \node[right=4mm of Uo,rectangle,draw=black,thick,minimum size=4mm] (g) {\(g\)};
      \coordinate[right=3mm of g] (Uod);
      \draw (Uo) -- node[above] {\tiny \(U'\)} (g.west);
      \draw (Uod) -- node[above] {\tiny \(U\)} (g.east);
      \coordinate[above=6mm of Uid] (Uiu);
      \coordinate[above=6mm of Uod] (Uou);
      \draw (Uiu) -- (Uou);
      \draw (Uid) edge[out=180,in=180,looseness=1.5] (Uiu);
      \draw (Uod) edge[out=0,in=0,looseness=1.5] (Uou);
      \coordinate[below=2mm of f.south] (td);
      \coordinate[above=6mm of f.north] (tu);
      \coordinate[left=7mm of f.west] (tl);
      \coordinate[right=16mm of f.east] (tr);
      \draw[dashed,draw=black!60] (td) -| (tl) |- (tu) -| (tr) |- (td);
    \end{tikzpicture}
  };
\end{tikzpicture}
\quad\quad &= \quad\quad
\begin{tikzpicture}
  \node (margin) {
    \begin{tikzpicture}
      \node[rectangle,draw=black,thick,minimum width=6mm,minimum height=10mm] (f) {\(f\)};
      \coordinate[below=3mm of f.west] (A);
      \coordinate[left=19mm of A] (Ad);
      \draw (A) -- node[below] {\tiny \(A\)} (Ad);
      \coordinate[below=3mm of f.east] (B);
      \coordinate[right=10mm of B] (Bd);
      \draw (B) -- node[below] {\tiny \(B\)} (Bd);
      \coordinate[above=3mm of f.west] (Ui);
      \node[left=4mm of Ui,rectangle,draw=black,thick,minimum size=4mm] (g) {\(g\)};
      \coordinate[left=3mm of g] (Uid);
      \draw (Ui) -- node[above] {\tiny \(U\)} (g.east);
      \draw (Uid) -- node[above] {\tiny \(U'\)} (g.west);
      \coordinate[above=3mm of f.east] (Uo);
      \coordinate[right=3mm of Uo] (Uod);
      \draw (Uod) -- node[above] {\tiny \(U'\)} (Uo);
      \coordinate[above=6mm of Uid] (Uiu);
      \coordinate[above=6mm of Uod] (Uou);
      \draw (Uiu) -- (Uou);
      \draw (Uid) edge[out=180,in=180,looseness=1.5] (Uiu);
      \draw (Uod) edge[out=0,in=0,looseness=1.5] (Uou);
      \coordinate[below=2mm of f.south] (td);
      \coordinate[above=6mm of f.north] (tu);
      \coordinate[left=16mm of f.west] (tl);
      \coordinate[right=7mm of f.east] (tr);
      \draw[dashed,draw=black!60] (td) -| (tl) |- (tu) -| (tr) |- (td);
    \end{tikzpicture}
  };
\end{tikzpicture} \\      
      \Tr^U((\id_B \otimes g) \circ f) \quad\quad &= \quad\quad \Tr^{U'}(f \circ (\id_A \otimes g)).
    \end{align*}
    \item \emph{Superposing.} For all morphisms \(f \colon A\otimes U \to B \otimes U\) and \(g \colon C \to D\) in \(\C\),
    \begin{align*}
      \begin{tikzpicture}
  \node (margin) {
    \begin{tikzpicture}
      \node[rectangle,draw=black,thick,minimum width=6mm,minimum height=10mm] (f) {\(f\)};
      \coordinate[below=3mm of f.west] (A);
      \coordinate[left=10mm of A] (Ad);
      \draw (A) -- node[below] {\tiny \(A\)} (Ad);
      \coordinate[below=3mm of f.east] (B);
      \coordinate[right=10mm of B] (Bd);
      \draw (B) -- node[below] {\tiny \(B\)} (Bd);
      \coordinate[above=3mm of f.west] (Ui);
      \coordinate[left=3mm of Ui] (Uid);
      \draw (Uid) -- node[above] {\tiny \(U\)} (Ui);
      \coordinate[above=3mm of f.east] (Uo);
      \coordinate[right=3mm of Uo] (Uod);
      \draw (Uod) -- node[above] {\tiny \(U\)} (Uo);
      \coordinate[above=6mm of Uid] (Uiu);
      \coordinate[above=6mm of Uod] (Uou);
      \draw (Uiu) -- (Uou);
      \draw (Uid) edge[out=180,in=180,looseness=1.5] (Uiu);
      \draw (Uod) edge[out=0,in=0,looseness=1.5] (Uou);
      \node[below=4mm of f,rectangle,draw=black,thick,minimum size=6mm] (g) {\(g\)};
      \coordinate[left=10mm of g.west] (Cd);
      \coordinate[right=10mm of g.east] (Dd);
      \draw (Cd) -- node[below] {\tiny \(C\)} (g.west);
      \draw (Dd) -- node[below] {\tiny \(D\)} (g.east);
      \coordinate[below=2mm of f.south] (td);
      \coordinate[above=6mm of f.north] (tu);
      \coordinate[left=7mm of f.west] (tl);
      \coordinate[right=7mm of f.east] (tr);
      \draw[dashed,draw=black!60] (td) -| (tl) |- (tu) -| (tr) |- (td);
    \end{tikzpicture}
  };
\end{tikzpicture}
\quad\quad &= \quad\quad
\begin{tikzpicture}
  \node (margin) {
    \begin{tikzpicture}
      \node[rectangle,draw=black,thick,minimum width=6mm,minimum height=10mm] (f) {\(f\)};
      \coordinate[below=3mm of f.west] (A);
      \coordinate[left=10mm of A] (Ad);
      \draw (A) -- node[below] {\tiny \(A\)} (Ad);
      \coordinate[below=3mm of f.east] (B);
      \coordinate[right=10mm of B] (Bd);
      \draw (B) -- node[below] {\tiny \(B\)} (Bd);
      \coordinate[above=3mm of f.west] (Ui);
      \coordinate[left=3mm of Ui] (Uid);
      \draw (Uid) -- node[above] {\tiny \(U\)} (Ui);
      \coordinate[above=3mm of f.east] (Uo);
      \coordinate[right=3mm of Uo] (Uod);
      \draw (Uod) -- node[above] {\tiny \(U\)} (Uo);
      \coordinate[above=6mm of Uid] (Uiu);
      \coordinate[above=6mm of Uod] (Uou);
      \draw (Uiu) -- (Uou);
      \draw (Uid) edge[out=180,in=180,looseness=1.5] (Uiu);
      \draw (Uod) edge[out=0,in=0,looseness=1.5] (Uou);
      \node[below=4mm of f,rectangle,draw=black,thick,minimum size=6mm] (g) {\(g\)};
      \coordinate[left=10mm of g.west] (Cd);
      \coordinate[right=10mm of g.east] (Dd);
      \draw (Cd) -- node[below] {\tiny \(C\)} (g.west);
      \draw (Dd) -- node[below] {\tiny \(D\)} (g.east);
      \coordinate[below=2mm of g.south] (td);
      \coordinate[above=6mm of f.north] (tu);
      \coordinate[left=7mm of f.west] (tl);
      \coordinate[right=7mm of f.east] (tr);
      \draw[dashed,draw=black!60] (td) -| (tl) |- (tu) -| (tr) |- (td);
    \end{tikzpicture}
  };
\end{tikzpicture} \\
      g \otimes \Tr^U(f) \quad\quad &= \quad\quad \Tr^U(g \otimes f).
    \end{align*}
    \item \emph{Vanishing I.} For all morphisms \(f \colon A \otimes I \to B \otimes I\) in \(\C\),
    \begin{align*}
      \begin{tikzpicture}
  \node (margin) {
    \begin{tikzpicture}
      \node[rectangle,draw=black,thick,minimum width=6mm,minimum height=10mm] (f) {\(f\)};
      \coordinate[below=3mm of f.west] (A);
      \coordinate[left=10mm of A] (Ad);
      \draw (A) -- node[below] {\tiny \(A\)} (Ad);
      \coordinate[below=3mm of f.east] (B);
      \coordinate[right=10mm of B] (Bd);
      \draw (B) -- node[below] {\tiny \(B\)} (Bd);
      \coordinate[above=3mm of f.west] (Ui);
      \coordinate[left=3mm of Ui] (Uid);
      \draw (Uid) -- node[above] {\tiny \(I\)} (Ui);
      \coordinate[above=3mm of f.east] (Uo);
      \coordinate[right=3mm of Uo] (Uod);
      \draw (Uod) -- node[above] {\tiny \(I\)} (Uo);
      \coordinate[above=6mm of Uid] (Uiu);
      \coordinate[above=6mm of Uod] (Uou);
      \draw (Uiu) -- (Uou);
      \draw (Uid) edge[out=180,in=180,looseness=1.5] (Uiu);
      \draw (Uod) edge[out=0,in=0,looseness=1.5] (Uou);
      \coordinate[below=2mm of f.south] (td);
      \coordinate[above=6mm of f.north] (tu);
      \coordinate[left=7mm of f.west] (tl);
      \coordinate[right=7mm of f.east] (tr);
      \draw[dashed,draw=black!60] (td) -| (tl) |- (tu) -| (tr) |- (td);
    \end{tikzpicture}
  };
\end{tikzpicture}
\quad\quad &= \quad\quad
\begin{tikzpicture}
  \node (margin) {
    \begin{tikzpicture}
      \node[rectangle,draw=black,thick,minimum width=6mm,minimum height=10mm] (f) {\(f\)};
      \coordinate[below=3mm of f.west] (A);
      \coordinate[left=7mm of A] (Ad);
      \draw (A) -- node[below] {\tiny \(A\)} (Ad);
      \coordinate[below=3mm of f.east] (B);
      \coordinate[right=7mm of B] (Bd);
      \draw (B) -- node[below] {\tiny \(B\)} (Bd);
      \coordinate[above=3mm of f.west] (Ui);
      \node[left=3mm of Ui,circle,fill=black,scale=0.3] (Uid) {};  
      \draw (Ui) -- node[above] {\tiny \(I\)} (Uid);
      \coordinate[above=3mm of f.east] (Uo);
      \node[right=3mm of Uo,circle,fill=black,scale=0.3] (Uod) {};  
      \draw (Uo) -- node[above] {\tiny \(I\)} (Uod);
    \end{tikzpicture}
  };
\end{tikzpicture} \\
      \Tr^I(f) \quad\quad &= \quad\quad \rho \circ f \circ \rho^{-1}.
    \end{align*}
    \item \emph{Vanishing II.} For all morphisms \(f \colon A \otimes U \otimes V \to B \otimes U \otimes V\) in \(\C\),
    \begin{align*}
      \begin{tikzpicture}
  \node (margin) {
    \begin{tikzpicture}
      \node[rectangle,draw=black,thick,minimum width=6mm,minimum height=14mm] (f) {\(f\)};
      \coordinate[below=5mm of f.west] (A);
      \coordinate[left=6mm of A] (Ad);
      \coordinate[left=17mm of A] (Add);
      \draw (A) -- node[below] {\tiny \(A\)} (Ad);
      \draw (Ad) -- (Add);
      \coordinate[below=5mm of f.east] (B);
      \coordinate[right=6mm of B] (Bd);
      \coordinate[right=17mm of B] (Bdd);
      \draw (B) -- node[below] {\tiny \(B\)} (Bd);
      \draw (Bd) -- (Bdd);
      \coordinate[above=5mm of f.west] (Vi);
      \coordinate[left=2mm of Vi] (Vid);
      \draw (Vid) -- node[above] {\tiny \(V\)} (Vi);
      \coordinate[above=5mm of f.east] (Vo);
      \coordinate[right=2mm of Vo] (Vod);
      \draw (Vod) -- node[above] {\tiny \(V\)} (Vo);
      \coordinate[above=6mm of Vid] (Viu);
      \coordinate[above=6mm of Vod] (Vou);
      \draw (Viu) -- (Vou);
      \draw (Vid) edge[out=180,in=180,looseness=1.5] (Viu);
      \draw (Vod) edge[out=0,in=0,looseness=1.5] (Vou);
      \coordinate[left=6mm of f.west] (Uid);
      \draw (Uid) -- node[above] {\tiny \(U\)} (f.west);
      \coordinate[right=6mm of f.east] (Uod);
      \draw (Uod) -- node[above] {\tiny \(U\)} (f.east);
      \coordinate[above=16mm of Uid] (Uiu);
      \coordinate[above=16mm of Uod] (Uou);
      \draw (Uiu) -- (Uou);
      \draw (Uid) edge[out=180,in=180,looseness=1.5] (Uiu);
      \draw (Uod) edge[out=0,in=0,looseness=1.5] (Uou);
      \coordinate[below=2mm of f.south] (td);
      \coordinate[above=6mm of f.north] (tu);
      \coordinate[left=6mm of f.west] (tl);
      \coordinate[right=6mm of f.east] (tr);
      \draw[dashed,draw=black!60] (td) -| (tl) |- (tu) -| (tr) |- (td);
      \coordinate[below=4mm of f.south] (ttd);
      \coordinate[above=11mm of f.north] (ttu);
      \coordinate[left=14.5mm of f.west] (ttl);
      \coordinate[right=14.5mm of f.east] (ttr);
      \draw[dashed,draw=black!60] (ttd) -| (ttl) |- (ttu) -| (ttr) |- (ttd);
    \end{tikzpicture}
  };
\end{tikzpicture}
\quad\quad &= \quad\quad
\begin{tikzpicture}
  \node (margin) {
    \begin{tikzpicture}
      \node[rectangle,draw=black,thick,minimum width=6mm,minimum height=10mm] (f) {\(f\)};
      \coordinate[below=3mm of f.west] (A);
      \coordinate[left=15mm of A] (Ad);
      \draw (A) -- node[below] {\tiny \(A\)} (Ad);
      \coordinate[below=3mm of f.east] (B);
      \coordinate[right=15mm of B] (Bd);
      \draw (B) -- node[below] {\tiny \(B\)} (Bd);
      \coordinate[above=3mm of f.west] (Ui);
      \coordinate[left=8mm of Ui] (Uid);
      \draw (Uid) -- node[above] {\tiny \(U \otimes V\)} (Ui);
      \coordinate[above=3mm of f.east] (Uo);
      \coordinate[right=8mm of Uo] (Uod);
      \draw (Uod) -- node[above] {\tiny \(U \otimes V\)} (Uo);
      \coordinate[above=9mm of Uid] (Uiu);
      \coordinate[above=9mm of Uod] (Uou);
      \draw (Uiu) -- (Uou);
      \draw (Uid) edge[out=180,in=180,looseness=1.5] (Uiu);
      \draw (Uod) edge[out=0,in=0,looseness=1.5] (Uou);
      \coordinate[below=2mm of f.south] (td);
      \coordinate[above=9mm of f.north] (tu);
      \coordinate[left=13mm of f.west] (tl);
      \coordinate[right=13mm of f.east] (tr);
      \draw[dashed,draw=black!60] (td) -| (tl) |- (tu) -| (tr) |- (td);
    \end{tikzpicture}
  };
\end{tikzpicture} \\
      \Tr^U(\Tr^V(f)) \quad\quad &= \quad\quad \Tr^{U \otimes V}(f).
    \end{align*}
    \item \emph{Yanking.} For all objects \(U \in \C\) where \(\sigma_{U,U}\) is the symmetric braiding,
    \begin{align*}
      \begin{tikzpicture}
  \node (margin) {
    \begin{tikzpicture}
      \node[rectangle,minimum width=8mm,minimum height=10mm] (f) {};
      \coordinate[below=3mm of f.west] (A);
      \coordinate[left=10mm of A] (Ad);
      \draw (A) -- node[below] {\tiny \(U\)} (Ad);
      \coordinate[below=3mm of f.east] (B);
      \coordinate[right=10mm of B] (Bd);
      \draw (B) -- node[below] {\tiny \(U\)} (Bd);
      \coordinate[above=3mm of f.west] (Ui);
      \coordinate[left=3mm of Ui] (Uid);
      \draw (Uid) -- node[above] {\tiny \(U\)} (Ui);
      \coordinate[above=3mm of f.east] (Uo);
      \coordinate[right=3mm of Uo] (Uod);
      \draw (Uod) -- node[above] {\tiny \(U\)} (Uo);
      \draw (A) edge[out=0,in=180] (Uo);
      \draw (Ui) edge[out=0,in=180] (B);
      \coordinate[above=6mm of Uid] (Uiu);
      \coordinate[above=6mm of Uod] (Uou);
      \draw (Uiu) -- (Uou);
      \draw (Uid) edge[out=180,in=180,looseness=1.5] (Uiu);
      \draw (Uod) edge[out=0,in=0,looseness=1.5] (Uou);
      \coordinate[below=2mm of f.south] (td);
      \coordinate[above=6mm of f.north] (tu);
      \coordinate[left=7mm of f.west] (tl);
      \coordinate[right=7mm of f.east] (tr);
      \draw[dashed,draw=black!60] (td) -| (tl) |- (tu) -| (tr) |- (td);
    \end{tikzpicture}
  };
\end{tikzpicture}
\quad\quad &= \quad\quad
\begin{tikzpicture}
  \node (margin) {
    \begin{tikzpicture}
      \coordinate (O);
      \coordinate[left=5mm of O] (Ui);
      \coordinate[left=8mm of O] (Uid);
      \coordinate[right=5mm of O] (Uo);
      \coordinate[right=8mm of O] (Uod);
      \draw (Uid) -- node[above] {\tiny \(U\)} (Ui);
      \draw (Ui) -- (Uo);
      \draw (Uod) -- node[above] {\tiny \(U\)} (Uo);
    \end{tikzpicture}
  };
\end{tikzpicture} \\
      \Tr^U(\sigma_{U,U}) \quad\quad &= \quad\quad \id_U.
    \end{align*}
  \end{itemize} 
\end{definition}

In this thesis, traced monoidal categories are denoted by a triple \((\C,\otimes,\Tr)\) where \(\Tr\) is the collection of all trace operators \(\Tr_{A,B}^U\) and the monoidal unit is omitted when unambiguous.
In certain cases of interest to this thesis, the trace operator \(\Tr_{A,B}^U\) cannot be defined for all morphisms in its domain.
Haghverdi and Scott~\cite{PartialTraceOrig} proposed a generalisation of traced categories where the trace operator need not be total.
To define these, the Kleene equality is used (see Notation~\ref{not:keq}) so that, for two partial functions \(f\) and \(g\), the expression \(f(a) \keq g(a')\) indicates that \(f(a)\) is defined iff \(g(a')\) is defined and their results agree.

\begin{definition} \label{def:partially_traced_category}
  Let \((\C,\otimes,I)\) be a symmetric monoidal category and let there be a family of partial maps
  \begin{equation*}
    \Tr_{A,B}^U \colon \C(A\otimes U, B \otimes U) \pto \C(A,B)
  \end{equation*}
  for all \(A,B,U \in \C\).
  \(\C\) is a \gls{partially traced category} if the following axioms are satisfied.
  \begin{itemize}
    \item \emph{Naturality.} For all \(f \colon A \otimes U \to B \otimes U\), \(g \colon A' \to A\) and \(h \colon B \to B'\),
    \begin{equation*}     
      \Tr^U(f) \,\text{ defined} \ \implies\ h \circ \Tr^U(f) \circ g \ \keq\ \Tr^U((h \otimes \id_U) \circ f \circ (g \otimes \id_U)).
    \end{equation*}
    \item \emph{Dinaturality.} For all \(f \colon A \otimes U \to B \otimes U'\) and \(g \colon U' \to U\),
    \begin{equation*}     
      \Tr^U((\id_B \otimes g) \circ f) \ \keq \ \Tr^{U'}(f \circ (\id_A \otimes g)).
    \end{equation*}
    \item \emph{Superposing.} For all \(f \colon A\otimes U \to B \otimes U\) and \(g \colon C \to D\),
    \begin{equation*}     
      \Tr^U(f) \,\text{ defined} \ \implies\ g \otimes \Tr^U(f) \ \keq\ \Tr^U(g \otimes f).
    \end{equation*}
    \item \emph{Vanishing I.} For all \(f \colon A \otimes I \to B \otimes I\),
    \begin{equation*}     
      \Tr^I(f) \ \keq \ \rho \circ f \circ \rho^{-1}.
    \end{equation*}
    \item \emph{Vanishing II.} For all \(f \colon A \otimes U \otimes V \to B \otimes U \otimes V\),
    \begin{equation*}     
      \Tr^V(f) \,\text{ defined} \ \implies\ \Tr^U(\Tr^V(f)) \ \keq\ \Tr^{U \otimes V}(f).
    \end{equation*}
    \item \emph{Yanking.} For all objects \(U \in \C\),
    \begin{equation*}
      \Tr^U(\sigma_{U,U}) \ \keq \ \id_U.
    \end{equation*}
  \end{itemize} 
\end{definition}

Any traced monoidal category \((\C,\otimes,\Tr)\) as given in Definition~\ref{def:traced_category} is, evidently, an instance of a partially traced category where all operators \(\Tr^U_{A,B}\) are totally defined; consequently, we refer to these as \emph{totally} traced categories.
The PhD thesis of Octavio Malherbe~\cite{MalherbePhD} and its accompanying paper~\cite{Malherbe} are valuable references on partially traced categories.
The following proposition is due to Malherbe and collaborators, and it lets us induce a partial trace on \(\C\) from the partial trace of another category \(\D\).

\begin{proposition}[Malherbe~\cite{MalherbePhD}, Proposition 4.3.1] \label{prop:induced_trace}
  Let \(F \colon \C \to \D\) be a faithful strong symmetric monoidal functor with \((\D,\otimes,\Tr)\) a partially traced category and \((\C,\otimes)\) a symmetric monoidal category.
  For every morphism \(f \in \C(A \oplus U, B \oplus U)\) define:
  \begin{equation*}
    \widehat{\Tr}^U(f) = \begin{cases}
      g &\ifc \exists g \in \C(A,B) \ \text{ s.t. }\  \Tr^{F(U)}(\mu_\sub{B,U}^{-1} F(f) \mu_\sub{A,U}) \keq F(g) \\
      \undefined &\otherwise.
    \end{cases}
  \end{equation*}
  Then, \((\C,\otimes,\widehat{\Tr})\) is a partially traced category.
\end{proposition}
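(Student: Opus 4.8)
The plan is to verify each of the six axioms of a partially traced category (Definition~\ref{def:partially_traced_category}) for $\widehat{\Tr}$ by transporting every identity along $F$, so that it reduces to the corresponding axiom already known to hold for $\Tr$ in $\D$. First I would record well-definedness: since $F$ is faithful, each map $\C(A,B) \to \D(F(A),F(B))$ is injective, so whenever a $g$ with $F(g) \keq \Tr^{F(U)}(\mu_{B,U}^{-1} F(f) \mu_{A,U})$ exists it is unique; hence $\widehat{\Tr}^U$ is a genuine partial function, with $\widehat{\Tr}^U(f)$ defined precisely when the $\D$-trace on the right is defined and lies in the image of $F$. Throughout I write $\mu_{X,Y} \colon F(X) \otimes F(Y) \to F(X \oplus Y)$ and $\mu_I \colon I \to F(I_{\C})$ for the invertible coherence morphisms of the strong monoidal functor $F$, and I freely use functoriality, $F(\id) = \id$, and the naturality square $F(p \oplus q) \circ \mu_{X,Y} = \mu_{X',Y'} \circ (F(p) \otimes F(q))$.

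The engine of the proof is a single rewriting pattern, which I would spell out in full for naturality and then apply \emph{mutatis mutandis}. Take naturality as the representative case: assuming $\widehat{\Tr}^U(f) \keq t$, I would compute the $\D$-side of the right-hand expression,
\begin{align*}
  \mu_{B',U}^{-1} F\big((h \oplus \id_U) \circ f \circ (g \oplus \id_U)\big) \mu_{A',U}
    &= \mu_{B',U}^{-1} F(h \oplus \id_U)\, F(f)\, F(g \oplus \id_U)\, \mu_{A',U} \\
    &= (F(h) \otimes \id)\, \mu_{B,U}^{-1} F(f) \mu_{A,U}\, (F(g) \otimes \id),
\end{align*}
where the second line uses the naturality square of $\mu$ together with $F(\id_U) = \id_{F(U)}$. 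Applying $\D$-naturality of $\Tr$ and then the hypothesis $\Tr^{F(U)}(\mu_{B,U}^{-1} F(f) \mu_{A,U}) \keq F(t)$ yields $F(h) \circ F(t) \circ F(g) = F(h \circ t \circ g)$, and faithfulness lets me descend to $\widehat{\Tr}^U(\text{RHS}) \keq h \circ t \circ g$. Dinaturality and superposing follow the same three moves — push $F$ through the composite, slide the coherence isomorphisms $\mu$ past $F$ of a monoidal product using naturality, invoke the matching $\D$-axiom, then descend by faithfulness — the only change being that dinaturality slides $g$ through the second ($U$) argument of $\mu$, while superposing presents the traced part as a monoidal factor using naturality of $\mu$ and the associativity coherence of $F$ before invoking $\D$-superposing. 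Yanking is immediate: because $F$ is symmetric, $F(\sigma^{\C}_{U,U})\, \mu_{U,U} = \mu_{U,U}\, \sigma^{\D}_{F(U),F(U)}$, so $\mu_{U,U}^{-1} F(\sigma^{\C}_{U,U}) \mu_{U,U} = \sigma^{\D}_{F(U),F(U)}$ and $\D$-yanking gives $\Tr^{F(U)}(\sigma^{\D}) \keq \id_{F(U)} = F(\id_U)$.

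The remaining two axioms require extra care, because the $\D$-trace on the right of the definition is taken over $F$ of a unit or a composite object, which is only \emph{isomorphic} to the relevant $\D$-object rather than equal to it; bridging this gap is where I expect the real work to lie. For Vanishing~I, the trace $\Tr^{F(I_{\C})}$ must be converted to a trace over the genuine $\D$-unit $I$ along $\mu_I \colon I \to F(I_{\C})$; I would do this by transporting across $\mu_I$ using dinaturality of $\Tr$, then apply $\D$-Vanishing~I, and finally match the resulting $\rho$ with $F(\rho^{\C})$ using the unit coherence triangle of the strong monoidal functor. For Vanishing~II, starting from $\widehat{\Tr}^U(\widehat{\Tr}^V(f))$ I would apply $\D$-Vanishing~II to collapse the nested traces into $\Tr^{F(U) \otimes F(V)}$ and then transport this to $\Tr^{F(U \oplus V)}$ along the isomorphism $\mu_{U,V}$ by dinaturality; the delicate step is lining up the two bracketings of the coherence isomorphisms, relating $\mu_{A \oplus U, V}\circ(\mu_{A,U}\otimes\id)$ with $\mu_{A, U \oplus V}\circ(\id\otimes\mu_{U,V})$ via the associator, which is exactly the associativity coherence of $F$. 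Once the definedness conditions are checked to transfer correctly under these Kleene equalities — which they do, since faithfulness makes ``lies in the image of $F$'' stable under the equalities produced by the $\D$-axioms — all six axioms hold and $(\C,\oplus,\widehat{\Tr})$ is a partially traced category.
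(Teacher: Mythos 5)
Your proof is correct, and its core engine --- push \(F\) through composites, slide the coherence morphisms past \(F\) of a monoidal product via naturality, invoke the corresponding axiom of \(\Tr\) in \(\D\), and descend along faithfulness --- is exactly the strategy the paper uses. The difference is one of scope: the paper's proof is explicitly a sketch that assumes \(F\) is \emph{strict} monoidal (so all the \(\mu\)'s are identities) and defers the general strong monoidal case to Malherbe's thesis, whereas you carry the coherence isomorphisms \(\mu_{X,Y}\) and \(\mu_I\) through the whole argument. That extra generality is precisely where your ``real work'' lies: under the paper's strictness assumption every axiom is a trivial transport, while you correctly identify that in the strong case Vanishing~I requires converting \(\Tr^{F(I_\C)}\) into a trace over the genuine unit \(I_\D\) along \(\mu_I\) by dinaturality and closing the gap with the unit coherence triangle, and Vanishing~II requires relating \(\Tr^{F(U)\otimes F(V)}\) to \(\Tr^{F(U \oplus V)}\) along \(\mu_{U,V}\) using the associativity coherence of \(F\). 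So your proposal is self-contained for the proposition as actually stated (strong monoidal \(F\)), where the paper's is not; what the paper's version buys is brevity, at the cost of leaning on the cited reference for the general case.
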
 \begin{proof}[Proof. (Sketch)]
  For simplicity, assume \(F\) is strict monoidal (\ie{} \(\mu\) is the identity); for a detailed proof in the general case see~\cite{MalherbePhD}, Proposition 4.3.1.
  First, notice that \(\widehat{\Tr}\) is well-defined: if \(g \in \C(A,B)\) exists such that \(\Tr^{F(U)}(F(f)) \keq F(g)\) then such \(g\) is unique due to \(F\) being faithful.
  To check each axiom of partially traced categories we proceed as follows:
  \begin{itemize}
    \item assume \(\widehat{\Tr}^U(f)\) is defined and appears on the left (or right) hand side of one of the axioms;
    \item by definition, \(F(\widehat{\Tr}^U(f)) = \Tr^{F(U)}(F(f))\) and we may use that \(\D\) is partially traced to derive the existence of the right (or left) hand side of the axiom in \(\D\);
    \item since \(F\) is a strict symmetric monoidal functor, the left (and right) hand side of the axiom can be rewritten as \(F\) being applied to the corresponding left (and right) hand side of the axiom in \(\C\);
    \item since \(F\) is faithful it follows that the axiom in \(\C\) is satisfied.
  \end{itemize}
  Consequently, \((\C,\otimes,\widehat{\Tr})\) is a partially traced category, as claimed.
\end{proof}

\subsection{The execution formula}
\label{sec:ex_formula}

The execution formula is a candidate for a categorical trace in multiple categories relevant in computer science~\cite{ManesArbib,Haghverdi,RTUDC}.
The primary goal of this section is to explain how the execution formula captures the notion of iterative loops and identify the requirements that a category must meet so that the execution formula may be defined on it.

In the picture below, the box labelled \(f\) may be interpreted as a physical device with input ports \(A\) and \(U\) and output ports \(B\) and \(U\) whose \(U\)-ports have been joined together; arrows have been added to clarify the direction of information flow within the wires.
\begin{equation} \label{eq:diag_loop}
  \begin{tikzpicture}
    \node[rectangle,draw=black,thick,minimum width=6mm,minimum height=10mm] (f) {\(f\)};
    \coordinate[below=3mm of f.west] (A);
    \coordinate[left=7mm of A] (Ad);
    \draw (Ad) -- node[below] {\tiny \(A\)} node {\tiny \(>\)} (A);
    \coordinate[below=3mm of f.east] (B);
    \coordinate[right=7mm of B] (Bd);
    \draw (Bd) -- node[below] {\tiny \(B\)} node {\tiny \(>\)} (B);
    \coordinate[above=3mm of f.west] (Ui);
    \coordinate[left=3mm of Ui] (Uid);
    \draw (Uid) -- node[above] {\tiny \(U\)} (Ui);
    \coordinate[above=3mm of f.east] (Uo);
    \coordinate[right=3mm of Uo] (Uod);
    \draw (Uod) -- node[above] {\tiny \(U\)} (Uo);
    \coordinate[above=6mm of Uid] (Uiu);
    \coordinate[above=6mm of Uod] (Uou);
    \draw (Uiu) -- node {\tiny \(<\)} (Uou);
    \draw (Uid) edge[out=180,in=180,looseness=1.5] (Uiu);
    \draw (Uod) edge[out=0,in=0,looseness=1.5] (Uou);
  \end{tikzpicture}
\end{equation}
Once the connection between the \(U\)-ports is made, any input on \(A\) may only leave through \(B\), but before doing so, it may traverse the loop an arbitrary number of times.
If we intend to describe the relationship between inputs on \(A\) and outputs on \(B\), we must aggregate the contribution of all of these possible paths; that is the key idea behind the execution formula.
To formalise this in a monoidal category \((\C,\oplus,Z)\) the following conditions must be satisfied:
\begin{itemize}
  \item morphisms \(f \colon A \oplus U \to B \oplus U\) must admit a characterisation in terms of components \(f_\sub{YX} \colon X \to Y\) for each \(X \in \{A,U\}\) and \(Y \in \{B,U\}\), so that we may study the different paths from \(A\) to \(B\) in isolation, for instance:
  \begin{align*}
    A &\xto{f_\sub{BA}} B, \\
    A \xto{f_\sub{UA}} U \xto{f_\sub{UU}} &\ U \xto{f_\sub{UU}} U \xto{f_\sub{BU}} B,
  \end{align*}
  \etc{} --- this can be achieved by requiring \(A \oplus B\) to be a biproduct for all objects \(A,B \in \C\) but, as we will see in Section~\ref{sec:UDC}, certain categories without biproducts may also satisfy this condition;
  \item there must be a notion of infinite summation of morphisms, so that the contribution of all paths may be aggregated --- this can be achieved by requiring that \(\C\) is a category enriched over \(\Sigma\)-monoids, as discussed in Chapter~\ref{chap:Sigma}.
\end{itemize}
In such a situation, the \gls{execution formula} may be informally presented as:
\begin{equation*}
  \ex^U(f) = f_\sub{BA} + \sum^\infty_{n = 0} f_\sub{BU} f_\sub{UU}^n f_\sub{UA}
\end{equation*}
for any morphism \(f \colon A \oplus U \to B \oplus U\).
The abstract diagram from~\eqref{eq:diag_loop} can be interpreted to depict any process \(f\) whose component \(f_\sub{UU}\) may be applied an arbitrary number of times; if these processes are described by computer programs, we recover the notion of iteration from computer science (more on this in Section~\ref{sec:classical_loop}).
When the execution formula provides a valid categorical trace some works in the literature~\cite{ParticleStyle} refer to it as a ``particle-style" trace.

Whenever a category \(\C\) is traced with respect to the execution formula, we are guaranteed that \(\C\) captures a well-behaved notion of iteration, in the sense that it satisfies the axioms of traced monoidal categories and, hence, the execution formula interacts nicely with both composition and the symmetric monoidal structure.
But, perhaps surprisingly, proving that the execution formula is a categorical trace is not a simple matter,
and the study of unique decomposition categories, initiated by Haghverdi~\cite{Haghverdi}, aims to characterise what must be required of \(\C\) for this to be the case.
This chapter builds upon the framework of unique decomposition categories to formalise the notion of iteration on categories of quantum processes.
But, before doing so, we discuss other candidates for trace operators on categories of quantum processes and how these relate to iteration.

\subsection{The kernel-image trace}
\label{sec:kernel_image}

As discussed in the previous section, the execution formula may only be defined in a category enriched over \(\Sigma\)-monoids since there is an infinite number of execution paths that need to be aggregated.
This section discusses a couple of closely related trace operators, defined on categories enriched over abelian groups.

\begin{definition} \label{def:additive_cat}
  An \emph{additive category} is an \(\Ab\)-enriched category that has all finite biproducts.
  Every additive category \(\C\) has a canonical monoidal structure \((\C,\oplus,0)\) where \(\oplus\) is the canonical extension of binary biproducts to a monoidal product and \(0\) is a zero object.
\end{definition}

\begin{notation} \label{not:matrix_decomposition}
  For a morphism \(f \colon A \oplus B \to C \oplus D\) it is useful to refer to its components using the subscript shorthand \(f_\sub{CA}\), in this case referring to the morphism
  \begin{equation*}
    f_\sub{CA} = A \xto{\iota} A \oplus B \xto{f} C \oplus D \xto{\pi} C
  \end{equation*}
  where \(\iota\) and \(\pi\) are the injection and projection morphisms of the biproduct.
  The order of the subscripts facilitates the readability of composition: for another morphism \(h \colon X \to A \oplus B\), we may compose \(f_\sub{CA} \circ h_\sub{AX}\) and it is immediate to read, from right to left, the objects that the morphism factors through.
\end{notation}

Haghverdi and Scott~\cite{PartialTraceOrig} studied the following partial trace, which may be defined on every additive category.

\begin{proposition}[Haghverdi and Scott~\cite{PartialTraceOrig}]
  Every additive category \((\C,\oplus,0)\) is partially traced, where its trace operator \(\Tr_{A,B}^U\) is defined as follows:
  \begin{equation*}
    \Tr_{A,B}^U(f) = \begin{cases}
      f_\sub{BA} + f_\sub{BU} (\id_\sub{U}\!-\!f_\sub{UU})^{-1} f_\sub{UA} &\ifc \id_\sub{U}\!-\!f_\sub{UU} \text{ is invertible} \\
      \undefined &\otherwise.
    \end{cases}
  \end{equation*}
\end{proposition}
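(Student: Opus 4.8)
The plan is to verify the six axioms of a partially traced category (Definition~\ref{def:partially_traced_category}) directly, exploiting the matrix calculus available in any additive category: since $A \oplus U$ and $B \oplus U$ are biproducts, every morphism $f \colon A \oplus U \to B \oplus U$ is determined by its four components $f_\sub{BA}, f_\sub{BU}, f_\sub{UA}, f_\sub{UU}$ (Notation~\ref{not:matrix_decomposition}), composition corresponds to matrix multiplication, and the $\Ab$-enrichment makes addition and its bilinear interaction with composition available. Throughout I abbreviate $W = (\id_U - f_\sub{UU})^{-1}$ whenever this inverse exists, so that $\Tr^U(f) = f_\sub{BA} + f_\sub{BU} W f_\sub{UA}$; well-definedness is automatic since every ingredient is a morphism of $\C$ and addition of parallel morphisms is defined by enrichment.

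First I would dispatch the four axioms that reduce to straightforward block computations. For \emph{naturality}, pre- and post-composing with $g \oplus \id_U$ and $h \oplus \id_U$ leaves the $UU$-block equal to $f_\sub{UU}$ (so the definedness condition is unchanged) and transforms the formula into $h(f_\sub{BA} + f_\sub{BU} W f_\sub{UA}) g$ by bilinearity of composition. For \emph{superposing}, the morphism $g \oplus f$ again has $UU$-block $f_\sub{UU}$, and writing its components as block matrices over $C \oplus A$ and $D \oplus B$ shows the trace is the block-diagonal $g \oplus (f_\sub{BA} + f_\sub{BU} W f_\sub{UA})$. For \emph{vanishing I}, the traced object is the zero object $0 = I$; its only endomorphism is $\id_0$, so $\id_0 - f_\sub{00}$ is trivially invertible, the correction term factors through $0$ and vanishes, and $\Tr^0(f) = f_\sub{00\text{-complement}} = \rho \circ f \circ \rho^{-1}$ after identifying the unitors with the biproduct injection and projection. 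For \emph{yanking}, the swap $\sigma_{U,U}$ has components $f_\sub{BA} = 0$, $f_\sub{BU} = \id_U = f_\sub{UA}$ and $f_\sub{UU} = 0$, whence $W = \id_U$ and $\Tr^U(\sigma_{U,U}) = \id_U$.

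Next I would treat \emph{dinaturality}, whose content is genuinely algebraic. Given $f \colon A \oplus U \to B \oplus U'$ and $g \colon U' \to U$, computing the two traces reduces the claim to two facts about the endomorphisms $g f_\sub{U'U} \colon U \to U$ and $f_\sub{U'U} g \colon U' \to U'$: the invertibility equivalence $\id_U - g f_\sub{U'U} \text{ invertible} \iff \id_{U'} - f_\sub{U'U} g \text{ invertible}$, and the push-through identity
\begin{equation*}
  (\id_U - g f_\sub{U'U})^{-1} g \keq g (\id_{U'} - f_\sub{U'U} g)^{-1}.
\end{equation*}
Both follow from Jacobson's lemma in the form $(\id_{U'} - f_\sub{U'U} g)^{-1} = \id_{U'} + f_\sub{U'U}(\id_U - g f_\sub{U'U})^{-1} g$, which I would verify by direct composition; crucially this is a formal identity using only composition, addition, identities, and the one hypothesised inverse, so it holds verbatim for morphisms between distinct objects of $\C$, not merely for elements of a single ring.

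The main obstacle will be \emph{vanishing II}. Here I would present $f \colon A \oplus U \oplus V \to B \oplus U \oplus V$ as a block morphism and recognise the two required equalities as the Schur-complement statements for the $2 \times 2$ block $\bigl(\begin{smallmatrix} \id_U - f_\sub{UU} & -f_\sub{UV} \\ -f_\sub{VU} & \id_V - f_\sub{VV}\end{smallmatrix}\bigr)$ on $U \oplus V$. Concretely: when $\id_V - f_\sub{VV}$ is invertible with inverse $M$, the inner trace $\Tr^V(f)$ has $UU$-component $f_\sub{UU} + f_\sub{UV} M f_\sub{VU}$, so its Schur complement $\id_U - f_\sub{UU} - f_\sub{UV} M f_\sub{VU}$ is exactly the endomorphism whose invertibility governs $\Tr^U(\Tr^V(f))$; the Schur-complement criterion then gives the equivalence with invertibility of $\id_{U \oplus V} - f_\sub{(UV)(UV)}$, matching the definedness of $\Tr^{U \oplus V}(f)$. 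Equality of the two resulting morphisms $A \to B$ then follows from the block-inversion (LDU) formula applied to the $(U \oplus V)$-block. The care needed is to confirm that the Schur-complement and block-inversion identities, ordinarily stated over a commutative or at least single-object ring, remain valid in a general additive category; as with dinaturality, I would emphasise that they are formal identities in the biproduct structure, so that all composites are well-typed and the computation goes through unchanged. Assembling these verifications establishes that $(\C, \oplus, \Tr)$ satisfies every axiom of Definition~\ref{def:partially_traced_category}, proving the proposition.
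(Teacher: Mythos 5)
Your proposal is correct, but there is nothing in the paper to compare it against: the paper explicitly omits the proof of this proposition ("The proof of this result has been omitted for brevity and can be found in~\cite{PartialTraceOrig}"), deferring entirely to Haghverdi and Scott's original work. So you should be judged on the mathematics alone, and the mathematics checks out. Your direct verification of the six axioms of Definition~\ref{def:partially_traced_category} is the natural route, and you correctly isolate the only two points of genuine content: dinaturality, which rests on the invertibility equivalence and push-through identity
\begin{equation*}
  (\id_U - g f_\sub{U'U})^{-1} g \keq g (\id_{U'} - f_\sub{U'U} g)^{-1},
\end{equation*}
supplied by Jacobson's lemma, and vanishing II, which is exactly the Schur-complement/block-LDU statement for \(\id_{U \oplus V} - f_\sub{(UV)(UV)}\) under the hypothesis that \(\id_V - f_\sub{VV}\) is invertible. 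Your emphasis that both are \emph{formal} identities -- built only from composition, addition, additive inverses, identities and the hypothesised inverses, with every composite well-typed between possibly distinct objects -- is precisely the point that makes them valid in an arbitrary additive category rather than a single-object ring, and it is also why vanishing II only needs to be checked in the conditional form the definition demands (assuming \(\Tr^V(f)\) defined, never the converse). The remaining axioms (naturality, superposing, vanishing I, yanking) reduce, as you say, to block bookkeeping, with the one observation worth keeping explicit in a full write-up: in each case the \(UU\)-block of the transformed morphism is unchanged, so the definedness condition transfers and the Kleene equalities are genuine equalities of defined morphisms.
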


The proof of this result has been omitted for brevity and can be found in~\cite{PartialTraceOrig}.
This Haghverdi-Scott trace is somewhat similar to the execution formula, especially in the light of the following well-known identity on the real numbers \(\abs{r} < 1\):
\begin{equation*}
  \sum_{n=0}^\infty r^n = (1-r)^{-1}.
\end{equation*}
However, there is a caveat: the Haghverdi-Scott trace of morphisms such as the identity \(f = \id_\sub{A \oplus U}\) is undefined since
\begin{equation*}
  \id_\sub{U} - f_\sub{UU} = \id_\sub{U} - \pi_\sub{U} \circ \id_\sub{A \oplus U} \circ \iota_\sub{U} = \id_\sub{U} - \id_\sub{U} = 0
\end{equation*}
is clearly not invertible.
This is somewhat odd since \(\Tr^U(\id_\sub{A \oplus U}) \keq \id_\sub{A} \oplus \Tr^U(\id_\sub{U})\) according to the superposing axiom and, even though \(\Tr^U(\id_\sub{U})\) is undefined as well, its type is \(0 \to 0\) suggests that it may naturally be defined as the zero morphism. 
In such a scenario, the superposing axiom would yield \(\Tr^U(\id_\sub{A \oplus U}) \keq \id_\sub{A}\) which matches the intuition behind iteration since tracing \(\id_\sub{A \oplus U}\) would create an isolated loop on \(U\) that no input can reach.
Indeed, Malherbe, Scott and Selinger~\cite{Malherbe} proposed a generalisation of the Haghverdi-Scott trace where identities and other morphisms that would `leave an isolated loop' can be traced.

\begin{definition}[Malherbe \etal{}~\cite{Malherbe}] \label{def:ki_trace}
  Let \(\C\) be an additive category. A morphism \(f \colon A \oplus U \to B \oplus U\) in \(\C\) is said to have a trace witnessed by morphisms \(i \colon A \to U\) and \(k \colon U \to B\) if the diagram
  \begin{equation*}
    \begin{tikzcd} [row sep=large]
      A && U \\
      & {} & U && B
      \arrow["i", dashed, from=1-1, to=1-3]
      \arrow["{f_\dsub{UA}}"', from=1-1, to=2-3]
      \arrow["{f_\dsub{BU}}", from=1-3, to=2-5]
      \arrow["k", dashed, from=2-3, to=2-5]
      \arrow["{\id - f_\dsub{UU}}"', from=1-3, to=2-3]
    \end{tikzcd}
  \end{equation*}
  commutes.
  Whenever it does, we write \((k,i) \Vdash \Tr^{U}(f)\) and define \(\Tr^U(f) \colon A \to B\) as follows:
  \begin{equation*}
    \Tr^U(f) = f_\sub{BA} + k \circ f_\sub{UA} = f_\sub{BA} + f_\sub{BU} \circ i
  \end{equation*}
  and otherwise we leave \(\Tr^U(f)\) undefined.
  The collection of all such partial functions \(\Tr^U \colon \C(A \oplus U, B \oplus U) \pto \C(A,B)\) for all objects \(A,B,U \in \C\) is known as the \gls{kernel-image trace}.
\end{definition}

\begin{proposition}[Malherbe \etal{}~\cite{Malherbe}, Proposition 3.17] \label{prop:Ab_ki_trace}
  Any additive category is partially traced using the kernel-image trace.
\end{proposition}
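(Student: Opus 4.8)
The central tool is the block-matrix calculus available in any additive category: a morphism $f \colon A \oplus U \to B \oplus U$ is determined by its four components $f_\sub{BA}, f_\sub{BU}, f_\sub{UA}, f_\sub{UU}$ (Notation~\ref{not:matrix_decomposition}), composition corresponds to matrix multiplication, and $\oplus$ of morphisms is the block-diagonal matrix; all manipulations below are carried out in this calculus. Before touching the axioms I would first record that $\Tr^U(f)$ is a well-defined partial function, \ie{} its value is independent of the chosen witness: if both $(k,i) \Vdash \Tr^U(f)$ and $(k',i') \Vdash \Tr^U(f)$, then $(\id_\sub{U} - f_\sub{UU})(i - i') = f_\sub{UA} - f_\sub{UA} = 0$, whence $f_\sub{BU}(i - i') = k(\id_\sub{U} - f_\sub{UU})(i - i') = 0$, so $f_\sub{BU} \circ i = f_\sub{BU} \circ i'$ and the value agrees. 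The same computation shows the two expressions $f_\sub{BA} + k \circ f_\sub{UA}$ and $f_\sub{BA} + f_\sub{BU} \circ i$ coincide.

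The four ``structural'' axioms then reduce to producing a witness for the traced morphism out of a witness $(k,i)$ for $f$. \textbf{Yanking:} for $\sigma_\sub{U,U}$ one has $\sigma_\sub{UU} = 0$ and $\sigma_\sub{BU} = \sigma_\sub{UA} = \id_\sub{U}$, so $(\id_\sub{U}, \id_\sub{U})$ is a witness and $\Tr^U(\sigma_\sub{U,U}) = 0 + \id_\sub{U} \circ \id_\sub{U} = \id_\sub{U}$. \textbf{Vanishing I:} when $U = I = 0$ the only candidate witnesses are the zero morphisms, the witness equations hold trivially, and $\Tr^0(f) = f_\sub{BA} = \rho \circ f \circ \rho^{-1}$. \textbf{Naturality:} if $(k,i) \Vdash \Tr^U(f)$ then $(h \circ k,\ i \circ g) \Vdash \Tr^U\big((h \oplus \id_\sub{U}) \circ f \circ (g \oplus \id_\sub{U})\big)$, since the $UU$-component is unchanged; evaluating gives $h \circ \Tr^U(f) \circ g$. \textbf{Superposing:} precomposing $i$ with the projection onto $A$ and postcomposing $k$ with the injection of $B$ yields a witness for $g \oplus f$, whose trace computes to $g \oplus \Tr^U(f)$.

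\textbf{Dinaturality} requires relating witnesses across the two objects $U$ and $U'$, and must be proved as a Kleene equality (each side defined iff the other is). Writing $f \colon A \oplus U \to B \oplus U'$, $g \colon U' \to U$, $f_1 = (\id_\sub{B} \oplus g) \circ f$ and $f_2 = f \circ (\id_\sub{A} \oplus g)$, the key is a ``sliding'' correspondence: from a witness $(k_1, i_1)$ for $\Tr^U(f_1)$ one builds $\big(k_1 \circ g,\ f_\sub{U'A} + f_\sub{U'U} \circ i_1\big)$ for $\Tr^{U'}(f_2)$, relying on the derived identity $i_1 = g \circ i_2$; conversely $(k_2, i_2)$ for $\Tr^{U'}(f_2)$ yields $\big(f_\sub{BU} + k_2 \circ f_\sub{U'U},\ g \circ i_2\big)$ for $\Tr^U(f_1)$. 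In both cases the common value is $f_\sub{BA} + f_\sub{BU} \circ g \circ i_2$, so the traces agree; each verification is a short computation with the two witness equations.

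The main obstacle is \textbf{Vanishing II}, which expresses that solving the feedback loop on $U \oplus V$ at once equals solving it on $V$ and then on $U$. Here I would work with $f \colon A \oplus U \oplus V \to B \oplus U \oplus V$ and assemble a witness for $\Tr^{U \oplus V}(f)$ out of a witness $(k_V, i_V)$ for $\Tr^V(f)$ and a witness $(k_U, i_U)$ for $\Tr^U(\Tr^V(f))$: on the domain side I expect $i_\sub{UV} = (i_U,\ i_V^A + i_V^U \circ i_U) \colon A \to U \oplus V$, where $i_V^A, i_V^U$ are the restrictions of $i_V$ to $A$ and $U$, and dually for $k_\sub{UV}$; one then checks that the combined witness equation for $U \oplus V$ decomposes into exactly the $V$-witness equation, the $U$-witness equation, and the defining formula for the components of $\Tr^V(f)$. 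The converse direction is constrained by the standing hypothesis that $\Tr^V(f)$ is already defined, so from a $U \oplus V$-witness together with the given $V$-witness one extracts a $U$-witness for $\Tr^V(f)$. Matching up the four row/column blocks of the combined witness equation while tracking that witnesses are not unique is where the bulk of the work and the risk of error lie, and I expect this to be the most delicate part of the argument.
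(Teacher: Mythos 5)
Your proof is correct and takes essentially the same route as the paper: the paper itself defers this proposition to Malherbe \etal{}, but its own proof sketch of the generalisation to \(\SCat{g}\)-UDCs (Proposition~\ref{prop:SCat_ki_trace}) uses exactly the witness constructions you give --- well-definedness first, then \((hk,\, ig)\) for naturality, \((kg,\, f_\sub{U'A} + f_\sub{U'U} \circ i)\) with converse for dinaturality, \((\iota k,\, i\pi)\) for superposing, \((\id,\id)\) for yanking, \((0,0)\) for vanishing I, and the same assembled witness \(i''_\sub{UA} = i_U\), \(i''_\sub{VA} = i_V^A + i_V^U i_U\) (dually for \(k''\)) for vanishing II. The only cosmetic difference is that you work in the biproduct matrix calculus, which is available here since the category is additive, whereas the paper phrases the same manipulations via unique decompositions so that the argument also covers UDCs without biproducts.
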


We write \(\Tr_\sub{\ki}\) when we wish to distinguish the kernel-image trace from other traces available on the same category.
The following proposition justifies the name given to the kernel-image trace.

\begin{proposition}[Malherbe \etal{}~\cite{Malherbe}, Remark 3.16] \label{prop:Vect_ki}
  A morphism \(f \colon A \oplus U \to B \oplus U\) in \(\Vect\) satisfies \((k,i) \Vdash \Tr^U(f)\) for some morphisms \(k \in \Vect(U,B)\) and \(i \in \Vect(A,U)\) iff the following inclusions hold:
  \begin{equation} \label{eq:ki_inclusions}
    \begin{aligned}
      \im(f_\sub{UA}) \ &\subseteq \ \im(\id\!-\!f_\sub{UU}) \\
      \ker(\id\!-\!f_\sub{UU}) \ &\subseteq \ \ker(f_\sub{BU}).
    \end{aligned}
  \end{equation}
\end{proposition}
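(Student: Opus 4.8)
The plan is to unpack the commutativity of the diagram in Definition~\ref{def:ki_trace} into two independent factorisation problems and then solve each using standard facts about factoring linear maps in $\Vect$. First I would note that the diagram commutes precisely when the two triangle identities
\[
  (\id - f_\sub{UU}) \circ i = f_\sub{UA} \qquad\text{and}\qquad k \circ (\id - f_\sub{UU}) = f_\sub{BU}
\]
hold. Since the first equation constrains only $i$ and the second only $k$, the existence of a witnessing pair $(k,i)$ is equivalent to the conjunction of ``some $i$ solves the first equation'' and ``some $k$ solves the second equation''. The proof therefore splits into these two cases, each matching one of the claimed inclusions.

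For the equation involving $i$, I would invoke the elementary fact that a linear map $g \colon A \to U$ factors as $g = p \circ i$ through a linear map $p \colon U \to U$ on the codomain side if and only if $\im(g) \subseteq \im(p)$. Taking $p = \id - f_\sub{UU}$ and $g = f_\sub{UA}$ yields $\im(f_\sub{UA}) \subseteq \im(\id - f_\sub{UU})$. The forward direction is immediate by applying $\id - f_\sub{UU}$ to $i$; for the converse I would fix a basis of $A$, use the inclusion to choose a preimage under $\id - f_\sub{UU}$ for each basis image, and extend linearly to obtain $i$.

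For the equation involving $k$, I would dually use that a linear map $g \colon U \to B$ factors as $g = k \circ p$ through $p \colon U \to U$ on the domain side if and only if $\ker(p) \subseteq \ker(g)$. With $p = \id - f_\sub{UU}$ and $g = f_\sub{BU}$ this gives $\ker(\id - f_\sub{UU}) \subseteq \ker(f_\sub{BU})$. The forward direction follows by evaluating the identity on $\ker(\id - f_\sub{UU})$; for the converse I would define $k$ on $\im(\id - f_\sub{UU})$ by $k\bigl((\id - f_\sub{UU})(u)\bigr) = f_\sub{BU}(u)$, verify well-definedness using the kernel inclusion, and then extend $k$ to all of $U$ along a complement of $\im(\id - f_\sub{UU})$.

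Neither step is genuinely difficult; the one subtlety I would flag is that both converse constructions rely on the fact that in $\Vect$ every subspace admits a complement (equivalently, that a linear map defined on a subspace extends to the whole space). This is exactly what makes the two factorisation lemmas hold, and it is the crux that would fail in more constrained categories such as $\Contraction$ or $\Isometry$, where a factoring map witnessing the trace need not exist even when the image and kernel inclusions are satisfied.
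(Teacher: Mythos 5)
Your proposal is correct and follows essentially the same route as the paper: both split the commuting diagram into the two independent factorisation equations \((\id - f_\sub{UU}) \circ i = f_\sub{UA}\) and \(k \circ (\id - f_\sub{UU}) = f_\sub{BU}\), dispatch the forward directions immediately, and construct \(i\) and \(k\) in the converse using the existence of bases/complements in \(\Vect\). The only difference is packaging --- the paper factors \(\id - f_\sub{UU}\) through \(U/\ker(\id - f_\sub{UU})\) and \(\im(\id - f_\sub{UU})\) and takes one-sided inverses of the resulting surjection and injection, whereas you choose preimages on a basis and extend along a complement; in \(\Vect\) these amount to the same construction.
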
 \begin{proof}
  Assume that \((k,i) \Vdash \Tr^U(f)\) is satisfied; this implies the existence of a morphism \(i \colon A \to U\) such that \(f_\sub{UA} = (\id\!-\!f_\sub{UU}) \circ i\) and, hence, \(u \in \im(f_\sub{UA})\) implies \(u \in \im(\id\!-\!f_\sub{UU})\).
  Similarly, since a morphism \(k \colon U \to B\) satisfying \(f_\sub{BU} = k \circ (\id\!-\!f_\sub{UU})\) exists, it follows that for any \(u \in U\) such that \((\id\!-\!f_\sub{UU})(u) = 0\) we must have \(f_\sub{BU}(u) = 0\).
  Therefore, \((k,i) \Vdash \Tr^U(f)\) implies both inclusions~\eqref{eq:ki_inclusions}.

  Conversely, assume that both of inclusions~\eqref{eq:ki_inclusions} are satisfied and let \(h = \id\!-\!f_\sub{UU}\). Decompose \(h\) as follows:\footnote{This decomposition is available in every abelian category, which \(\Vect\) is an example of.}
  \begin{equation*}
    h = U \xto{q} U\!/{\ker(h)} \xto{h'} \im(h) \xto{m} U
  \end{equation*}
  where \(U\!/{\ker(h)}\) is the quotient space induced by the equivalence relation \(\sim\) where:
  \begin{equation*}
    u \sim v \iff (u-v) \in \ker(h)
  \end{equation*}
  and \(q\) is its quotient map.
  The linear map \(h'\) sends equivalence classes \([u] \in U\!/{\ker(h)}\) to \(h(u)\); this is well-defined since \([u] = [v]\) implies \(h(u-v) = 0\) and hence \(h(u) = h(v)\).
  It can be shown that \(h'\) is bijective, \(q\) is surjective and \(m\) is injective.
  Consequently, \(h' \circ q\) is surjective, so there is at least one linear map \(r \colon \im(h) \to U\) acting as its right inverse, \((h' \circ q) \circ r = \id\).
  Similarly, \(m \circ h'\) is injective, so there is at least one linear map \(l \colon U \to U\!/{\ker(h)}\) acting as its left inverse, \(l \circ (m \circ h') = \id\).
  Now, since we assumed that \(\im(f_\sub{UA}) \subseteq \im(h)\) we may decompose \(f_\sub{UA}\) as follows:
  \begin{equation*}
    f_\sub{UA} = A \xto{f_\sub{UA}'} \im(h) \xto{m} U.
  \end{equation*}
  Let \(i = r \circ f_\sub{UA}'\); it is immediate that:
  \begin{equation*}
    (\id\!-\!f_\sub{UU}) \circ i = (m \circ h' \circ q) \circ (r \circ f_\sub{UA}') = m \circ f_\sub{UA}' = f_\sub{UA}.
  \end{equation*}
  On the other hand, since we assumed that \(\ker(h) \subseteq \ker(f_\sub{BU})\) we may decompose \(f_\sub{BU}\) as follows:
  \begin{equation*}
    f_\sub{BU} = U \xto{q} U\!/{\ker(h)} \xto{f_\sub{BU}'} B.
  \end{equation*}
  Let \(k = f_\sub{BU}' \circ l\); it is immediate that:
  \begin{equation*}
    k \circ (\id\!-\!f_\sub{UU}) = (f_\sub{BU}' \circ l) \circ (m \circ h' \circ q) = f_\sub{BU}' \circ q = f_\sub{BU}.
  \end{equation*}
  Therefore, the kernel-image inclusions~\eqref{eq:ki_inclusions} imply \((k,i) \Vdash \Tr^U(f)\).
\end{proof}

\begin{corollary} \label{cor:FdHilb_ki}
  A morphism \(f \colon A \oplus U \to B \oplus U\) in \(\FdHilb\) satisfies \((k,i) \Vdash \Tr^U(f)\) for some morphisms \(k \in \FdHilb(U,B)\) and \(i \in \FdHilb(A,U)\) iff the kernel-image inclusions~\eqref{eq:ki_inclusions} hold.
\end{corollary}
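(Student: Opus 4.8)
The plan is to deduce Corollary~\ref{cor:FdHilb_ki} directly from Proposition~\ref{prop:Vect_ki} by observing that the statement is purely about the existence of linear maps $k$ and $i$ satisfying the two witnessing equations, and that these equations only involve the underlying vector-space structure. The key point is that $\FdHilb$ and $\Vect$ share the same morphisms between their objects: a morphism $f \colon A \oplus U \to B \oplus U$ in $\FdHilb$ is simply a bounded linear map, and since $A$, $B$, $U$ are finite-dimensional, \emph{every} linear map between them is automatically bounded. Thus the hom-sets $\FdHilb(X,Y)$ and $\Vect(X,Y)$ coincide for finite-dimensional $X,Y$, and the biproduct $\oplus$ in $\FdHilb$ agrees with the direct sum in $\Vect$.

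First I would make explicit that the components $f_\sub{UA}$, $f_\sub{BU}$, $f_\sub{UU}$ are the same whether computed in $\FdHilb$ or in $\Vect$, since the injections and projections of the biproduct $A \oplus U$ are identical in both categories (the direct-sum Hilbert space has exactly the direct-sum vector space as its underlying space, by the definition of $\oplus$ in Section~\ref{sec:QM_cats}). Consequently the diagram in Definition~\ref{def:ki_trace} commutes in $\FdHilb$ if and only if it commutes in $\Vect$, because commutativity is an equality of linear maps and the ambient category does not affect such equalities. The condition $(k,i) \Vdash \Tr^U(f)$ is therefore insensitive to whether we work in $\FdHilb$ or $\Vect$.

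Next I would invoke Proposition~\ref{prop:Vect_ki}: regarding $f$ as a morphism in $\Vect$, we have $(k,i) \Vdash \Tr^U(f)$ for some $k \in \Vect(U,B)$ and $i \in \Vect(A,U)$ if and only if the kernel-image inclusions~\eqref{eq:ki_inclusions} hold. Since the witnessing maps $k$ and $i$ produced in that proof are linear maps between finite-dimensional spaces, they are automatically morphisms in $\FdHilb$; conversely, any $\FdHilb$-witnesses are in particular $\Vect$-witnesses. This gives the biconditional claimed in the corollary.

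The only mild subtlety — and the step I would flag as the one requiring care rather than difficulty — is confirming that the decomposition $h = m \circ h' \circ q$ used in the proof of Proposition~\ref{prop:Vect_ki}, together with the choice of one-sided inverses $r$ and $l$, stays within $\FdHilb$. This is immediate in finite dimensions: all the intermediate spaces $U/{\ker(h)}$ and $\im(h)$ are finite-dimensional, and every linear map between finite-dimensional spaces is bounded, so $r$ and $l$ are genuine $\FdHilb$-morphisms. Hence no obstruction arises, and the corollary follows by restricting the equivalence of Proposition~\ref{prop:Vect_ki} along the full and faithful inclusion $\FdHilb \into \Vect$ on finite-dimensional objects.
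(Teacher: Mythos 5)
Your proposal is correct and takes essentially the same approach as the paper: the paper's own proof simply reuses the argument of Proposition~\ref{prop:Vect_ki} and notes that the constructed witnesses \(i\) and \(k\) are automatically bounded because every linear map between finite-dimensional spaces is bounded. Your additional remarks (the hom-sets and biproduct structure of \(\FdHilb\) and \(\Vect\) coinciding on finite-dimensional objects) just make explicit what the paper leaves implicit.
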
 \begin{proof}
  The claim can be proven using the argument from the previous proposition. However, we must check that the linear maps \(i\) and \(k\) defined in the second part of the proof are bounded. But this is immediate since every linear map between finite-dimensional spaces is bounded.
\end{proof}

The case of \(\Hilb\) is more complicated. On one hand, \((\Hilb,\oplus,\{0\})\) is an additive category, so it follows from Proposition~\ref{prop:Ab_ki_trace} that \((\Hilb,\oplus,\Tr_\sub{\ki})\) is a partially traced category.
On the other hand, it appears that the kernel-image inclusions~\eqref{eq:ki_inclusions} are not sufficient for a morphism \(f\) in \(\Hilb\) to be traceable.
This is because when the domain and codomain are infinite-dimensional the image of \(f\) need not be a closed subspace, which causes problems when attempting to prove the existence and boundness of \(i\) and \(k\) as constructed in the proof of Proposition~\ref{prop:Vect_ki}.
Taking this into account, the following proposition establishes sufficient conditions for \(f\) to be traceable.

\begin{proposition} \label{prop:Hilb_ki}
  For any \(f \in \Hilb(A \oplus U, B \oplus U)\), if
  \begin{itemize}
    \item it satisfies the kernel-image inclusions~\eqref{eq:ki_inclusions} and
    \item \(\im(\id\!-\!f_\sub{UU})\) is a closed subspace,
  \end{itemize}
  then \((k,i) \Vdash \Tr^U(f)\) for some morphisms \(k \in \Hilb(U,B)\) and \(i \in \Hilb(A,U)\).
\end{proposition}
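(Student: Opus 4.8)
The plan is to follow the structure of the constructive direction in the proof of Proposition~\ref{prop:Vect_ki}, but to take advantage of the two extra hypotheses to guarantee that the linear maps we build are bounded. Write \(h = \id - f_\sub{UU}\), which is a bounded operator on \(U\). The hypotheses are exactly that \(\im(f_\sub{UA}) \subseteq \im(h)\), that \(\ker(h) \subseteq \ker(f_\sub{BU})\), and that \(\im(h)\) is closed. The goal is to produce \(i \colon A \to U\) with \(h \circ i = f_\sub{UA}\) and \(k \colon U \to B\) with \(k \circ h = f_\sub{BU}\), both bounded; then \((k,i) \Vdash \Tr^U(f)\) by Definition~\ref{def:ki_trace}.

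First I would set up the factorisation of \(h\). Since \(\ker(h)\) is a closed subspace of the Hilbert space \(U\) (being the kernel of a bounded operator), the quotient \(U/{\ker(h)}\) is itself a Hilbert space and the quotient map \(q \colon U \to U/{\ker(h)}\) is bounded. The induced map \(h' \colon U/{\ker(h)} \to \im(h)\) is then a bounded linear bijection onto \(\im(h)\); crucially, because \(\im(h)\) is assumed to be \emph{closed}, it is a Hilbert space, so \(h'\) is a bounded bijection between Hilbert spaces and the bounded inverse theorem (open mapping theorem) guarantees that \((h')^{-1}\) is bounded. This is the step where the closedness hypothesis does the essential work and replaces the purely algebraic argument available in \(\Vect\).

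Next I would construct \(i\) and \(k\). For \(i\): since \(\im(f_\sub{UA}) \subseteq \im(h)\), corestrict \(f_\sub{UA}\) to a bounded map \(f_\sub{UA}' \colon A \to \im(h)\), and set \(i = (h')^{-1} \circ q' \circ f_\sub{UA}'\) where \(q'\) is an appropriate bounded right-inverse-type map into the quotient; concretely one takes \(i\) to be the composite \(A \xto{f_\sub{UA}'} \im(h) \xto{(h')^{-1}} U/{\ker(h)}\) followed by a bounded section of \(q\). A section of \(q\) exists as a bounded map because \(\ker(h)\) is closed: the orthogonal complement \(\ker(h)^\perp\) provides one, identifying \(U/{\ker(h)}\) isometrically with \(\ker(h)^\perp \subseteq U\). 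Composing, \(i\) is bounded and satisfies \(h \circ i = f_\sub{UA}\). For \(k\): since \(\ker(h) \subseteq \ker(f_\sub{BU})\), the map \(f_\sub{BU}\) factors through \(q\) as a bounded map \(f_\sub{BU}' \colon U/{\ker(h)} \to B\), and I set \(k = f_\sub{BU}' \circ (h')^{-1}\), using again that \((h')^{-1}\) is bounded, corestricting to \(\im(h)\) as needed; then \(k \circ h = f_\sub{BU}\).

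The main obstacle is precisely the boundedness of the two constructed maps, since in infinite dimensions a linear map need not be continuous. The delicate point is ensuring that the section of the quotient map \(q\) and the inverse \((h')^{-1}\) are bounded; both reduce to the open mapping theorem together with the closedness of \(\ker(h)\) and of \(\im(h)\), and to the standard fact that a Hilbert-space quotient by a closed subspace is isometrically realised on the orthogonal complement. Once these continuity facts are in hand, verifying \(h \circ i = f_\sub{UA}\) and \(k \circ h = f_\sub{BU}\) is a routine diagram chase identical to the one in Proposition~\ref{prop:Vect_ki}, and the commuting square of Definition~\ref{def:ki_trace} follows immediately, establishing \((k,i) \Vdash \Tr^U(f)\).
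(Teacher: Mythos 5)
Your proof takes essentially the same route as the paper's: factor \(h = \id - f_\sub{UU}\) as \(U \xto{q} U/{\ker(h)} \xto{h'} \im(h) \into U\), use closedness of \(\im(h)\) (and of \(\ker(h)\)) together with the open mapping theorem to obtain bounded inverses --- the paper instead cites Aubin's propositions on bounded one-sided inverses of bounded injections/surjections, which amount to the same thing --- and then assemble \(i\) and \(k\) from the factorisations of \(f_\sub{UA}\) and \(f_\sub{BU}\). The only point to tighten: as written, \(k = f_\sub{BU}' \circ (h')^{-1}\) has domain \(\im(h)\) rather than \(U\) (and \((k,i) \Vdash \Tr^U(f)\) requires \(k \in \Hilb(U,B)\)), so you should precompose with the orthogonal projection of \(U\) onto the closed subspace \(\im(h)\); this projection is bounded and fixes \(\im(h)\) pointwise, so \(k \circ h = f_\sub{BU}\) still holds, and the resulting composite is exactly the paper's bounded left inverse \(l\) of \(m \circ h'\).
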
 \begin{proof}[Proof. (Sketch)]
  As in the proof of Proposition~\ref{prop:Vect_ki}, let \(h = \id\!-\!f_\sub{UU}\) and decompose it as follows:
  \begin{equation*}
    h = U \xto{q} U\!/{\ker(h)} \xto{h'} \im(h) \xto{m} U.
  \end{equation*}
  This decomposition exists in \(\Hilb\) precisely because \(\im(h)\) is closed by assumption and, hence, it is a Hilbert space, whereas it can be shown that \(U\!/{\ker(h)}\) is isomorphic to the orthogonal complement of \(\ker(h)\) and, hence, a Hilbert space as well; since \(h\) is bounded, it follows that \(h'\) is bounded and it is simple to show that \(m\) and \(q\) are bounded as well.
  Then, it is immediate that \(h' \circ q\) is surjective, so there is at least one linear map \(r \colon \im(h) \to U\) acting as its right inverse.
  We must show that \(r\) is bounded; according to Proposition 4.6.1 from~\cite{Aubin} this is implied by \(h' \circ q\) being a surjective bounded linear map.
  Similarly, \(m \circ h'\) is injective, so there is at least one linear map \(l \colon U \to U\!/{\ker(h)}\) acting as its left inverse.
  We must show that \(l\) is bounded; according to Proposition 4.5.2 from~\cite{Aubin} this is implied by \(m \circ h'\) being an injective bounded linear map and \(\im(m \circ h') = \im(h)\) being a closed subspace --- the latter being satisfied by assumption.
  It is straightforward to check that both \(f_\sub{UA}\) and \(f_\sub{BU}\) may be decomposed into
  \begin{align*}
    f_\sub{UA} &= A \xto{f_\sub{UA}'} \im(h) \xto{m} U \\
    f_\sub{BU} &= U \xto{q} U\!/{\ker(h)} \xto{f_\sub{BU}'} B.
  \end{align*}
  where \(f_\sub{UA}'\) and \(f_\sub{BU}'\) are bounded.
  It follows that \(i = r \circ f_\sub{UA}'\) and \(k = f_\sub{BU}' \circ l\) are bounded linear maps and, hence, morphisms in \(\Hilb\). What remains of the proof follows by the same argument from Proposition~\ref{prop:Vect_ki}.
\end{proof}

The following two propositions are well-known facts about the orthogonal complement\footnote{Given a Hilbert space \(H\) and a set of vectors \(S \subseteq H\), the orthogonal complement of \(S\) is defined as follows: \(S^\perp = \{v \in H \mid \forall u \in S,\, \braket{u}{v} = 0\}\). It can be shown that \(S^\perp\) is a closed subspace of \(H\).}  of Hilbert spaces and they will be used in Section~\ref{sec:quantum_ex_trace} to prove that the kernel-image inclusions~\eqref{eq:ki_inclusions} are satisfied by every morphism in \(\FdContraction\).

\begin{proposition} \label{prop:ker_dagger_im_perp}
  Every morphism \(f \in \Hilb(A,B)\) satisfies:
  \begin{equation*}
    \ker(f^\dagger) = \im(f)^\perp
  \end{equation*}
  where \(f^\dagger\) is the adjoint of \(f\).\footnote{Recall that the adjoint of a bounded linear map \(f \in \Hilb(A,B)\) is the bounded linear map \(f^\dagger \in \Hilb(B,A)\) satisfying \(\braket{f(a)}{b} = \braket{a}{f^\dagger(b)}\) for every \(a \in A\) and \(b \in B\).}
\end{proposition}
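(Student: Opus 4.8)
The plan is to prove the two inclusions $\ker(f^\dagger) \subseteq \im(f)^\perp$ and $\im(f)^\perp \subseteq \ker(f^\dagger)$ separately, using nothing more than the defining property of the adjoint, $\braket{f(a)}{b} = \braket{a}{f^\dagger(b)}$, together with the definition of the orthogonal complement given in the footnote, namely $\im(f)^\perp = \{b \in B \mid \forall a \in A,\, \braket{f(a)}{b} = 0\}$.

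For the first inclusion, I would take an arbitrary $b \in \ker(f^\dagger)$, so that $f^\dagger(b) = 0$, and compute for every $a \in A$ that $\braket{f(a)}{b} = \braket{a}{f^\dagger(b)} = \braket{a}{0} = 0$. Since $a$ ranges over all of $A$, this says precisely that $b$ is orthogonal to every vector in $\im(f)$, hence $b \in \im(f)^\perp$.

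For the converse, I would take $b \in \im(f)^\perp$, so that $\braket{a}{f^\dagger(b)} = \braket{f(a)}{b} = 0$ for all $a \in A$. The key step --- and the only place where any genuine argument is needed --- is to specialise this to the choice $a = f^\dagger(b)$, which yields $\norm{f^\dagger(b)}^2 = \braket{f^\dagger(b)}{f^\dagger(b)} = 0$; positive-definiteness of the inner product then forces $f^\dagger(b) = 0$, i.e.\ $b \in \ker(f^\dagger)$.

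There is no substantial obstacle here: the statement holds in full generality for bounded linear maps between Hilbert spaces and requires neither closedness of $\im(f)$ nor finite-dimensionality, so it can be cited freely in Section~\ref{sec:quantum_ex_trace}. The only points to keep track of are the inner-product conventions fixed earlier (linear in the second argument, anti-linear in the first) when invoking the adjoint identity, and the fact that the adjoint $f^\dagger$ exists for every bounded $f$, as recalled in the accompanying footnote.
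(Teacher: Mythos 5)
Your proof is correct and follows essentially the same route as the paper's: both arguments amount to combining the adjoint identity \(\braket{f(a)}{b} = \braket{a}{f^\dagger(b)}\) with non-degeneracy of the inner product (the paper phrases this as \(a = 0 \iff \forall a',\ \braket{a'}{a} = 0\) and writes a single chain of equivalences, while you split it into two inclusions and prove non-degeneracy on the spot by specialising to \(a = f^\dagger(b)\)). The difference is purely organisational, so nothing further is needed.
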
 \begin{proof}
  Recall that in a Hilbert space \(A\), every vector \(a \in A\) satisfies:
  \begin{equation} \label{eq:kdip_aux}
    a = 0 \iff \forall a' \in A,\ \braket{a'}{a} = 0.
  \end{equation}
  Then, the following sequence of implications prove the claim:
  \begin{align*}
    v \in \ker(f^\dagger) &\iff \forall a \in A,\ \braket{a}{f^\dagger(v)} = 0 &&\text{(implication~\eqref{eq:kdip_aux})}\\
     &\iff \forall a \in A,\ \braket{f(a)}{v} = 0 &&\text{(def\@. of adjoint)}\\
     &\iff v \in \im(f)^\perp &&\text{(def\@. of orthogonal complement)}. \qedhere
  \end{align*}
\end{proof}

\begin{proposition} \label{prop:perp_inclusion}
  Let \(H\) be a Hilbert space, \(K\) a closed subspace of \(H\) and \(S\) a subset of \(H\). Then,
  \begin{equation*}
    S \subseteq K \iff K^\perp \subseteq S^\perp.
  \end{equation*}
\end{proposition}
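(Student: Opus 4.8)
The plan is to prove the two directions separately, observing that the forward implication holds for arbitrary $S$ and $K$ while only the converse uses that $K$ is closed.

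First I would record the \emph{antitone} property of the orthogonal complement: for any two subsets $A \subseteq B$ of $H$ one has $B^\perp \subseteq A^\perp$. This is immediate from the definition, since a vector orthogonal to every element of $B$ is, in particular, orthogonal to every element of the smaller set $A$. Instantiating this with $S \subseteq K$ yields $K^\perp \subseteq S^\perp$ at once, requiring no hypothesis on $K$; this settles the forward implication.

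For the converse I would invoke two standard facts about orthogonal complements: (i) every subset satisfies $S \subseteq (S^\perp)^\perp$, which again follows directly from the definition; and (ii) for a \emph{closed} subspace $K$ one has $(K^\perp)^\perp = K$, a consequence of the orthogonal decomposition $H \iso K \oplus K^\perp$ granted by the projection theorem. Applying the antitone property to the hypothesis $K^\perp \subseteq S^\perp$ gives $(S^\perp)^\perp \subseteq (K^\perp)^\perp$, and chaining this with (i) and (ii) produces
\begin{equation*}
  S \subseteq (S^\perp)^\perp \subseteq (K^\perp)^\perp = K,
\end{equation*}
which is the desired inclusion.

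The only step that is not purely formal is fact (ii), namely $(K^\perp)^\perp = K$, and it is precisely here that closedness of $K$ is indispensable: for a non-closed subspace $(K^\perp)^\perp$ recovers the closure of $K$ rather than $K$ itself, so the stated equivalence would fail without the hypothesis. As this is a well-known theorem of Hilbert space theory, I would cite it (alongside the references already used in this section) rather than reprove it, keeping the proof short and entirely in the style of the surrounding orthogonal-complement lemmas.
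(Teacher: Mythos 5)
Your proposal is correct and follows essentially the same route as the paper's proof: the forward direction via the antitone property of the orthogonal complement, and the converse by applying that property to the hypothesis and chaining $S \subseteq S^{\perp\perp} \subseteq K^{\perp\perp} = K$, with closedness of $K$ used exactly for $K^{\perp\perp} = K$. The only difference is your added remark on why closedness is indispensable, which the paper leaves implicit.
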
 \begin{proof}
  Assume \(S \subseteq K\); for any \(v \in K^\perp\) we have that for all \(u \in K\), \(\braket{u}{v} = 0\). Thus, in particular, for all \(u \in S\), \(\braket{u}{v} = 0\) and \(v \in S^\perp\).
  To prove the other direction, assume \(K^\perp \subseteq S^\perp\); then the previous argument establishes that \(S^{\perp\perp} \subseteq K^{\perp\perp}\).
  It is immediate that \(S \subseteq S^{\perp\perp}\) and, for any closed subspace, \(K = K^{\perp\perp}\).
  Thus, \(S \subseteq S^{\perp\perp} \subseteq K^{\perp\perp} = K\) and the proof is complete.
\end{proof}

\subsection{The canonical trace on compact closed categories}
\label{sec:CCC_trace}

This section describes the canonical trace on the monoidal category \((\FdHilb,\otimes,\Cset)\) and, more generally, on any compact closed category.
It will be argued that the trace in \((\FdHilb,\otimes,\Cset)\) is not well suited to capture the notion of quantum iteration.
Nevertheless, this compact category is of great importance in the study of categorical quantum mechanics~\cite{HeunenVicaryBook,TheDodo} and it is worth discussing it briefly.

A monoidal category \((\C,\otimes,I)\) is said to have (right) duals if for every object \(A \in \C\) there is an object \(A^* \in \C\) and morphisms \(\eta \colon I \to A^* \otimes A\) and \(\varepsilon \colon A \otimes A^* \to I\) such that the following diagrams commute:
\begin{equation} \label{eq:diag_CCC}
  \begin{aligned}
    \begin{tikzcd}
      A & {A \otimes I} & {A \otimes (A^* \otimes A)} \\
      A & {I \otimes A} & {(A \otimes A^*) \otimes A}
      \arrow["{\rho^{-1}}", from=1-1, to=1-2]
      \arrow["{\id \otimes \eta}", from=1-2, to=1-3]
      \arrow["{\alpha^{-1}}", from=1-3, to=2-3]
      \arrow["{\varepsilon \otimes \id}", from=2-3, to=2-2]
      \arrow["\lambda", from=2-2, to=2-1]
      \arrow["\id"', from=1-1, to=2-1]
    \end{tikzcd} 
    \\
    \\
    \begin{tikzcd}
      {A^*} & {I \otimes A^*} & {(A^* \otimes A) \otimes A^*} \\
      {A^*} & {A^* \otimes I} & {A^* \otimes (A \otimes A^*)}
      \arrow["\id"', from=1-1, to=2-1]
      \arrow["{\lambda^{-1}}", from=1-1, to=1-2]
      \arrow["{\eta \otimes \id}", from=1-2, to=1-3]
      \arrow["\alpha", from=1-3, to=2-3]
      \arrow["{\id \otimes \varepsilon}", from=2-3, to=2-2]
      \arrow["\rho", from=2-2, to=2-1]
    \end{tikzcd}
  \end{aligned}
\end{equation}
A symmetric monoidal category with duals which interact nicely with the monoidal structure is known as a compact closed category (see Chapter 3, Definition 3.34 from~\cite{HeunenVicaryBook} for a proper definition).
Every compact closed category is totally traced, with the trace of a morphism \(f \colon A \otimes U \to B \otimes C\) defined as follows:
\begin{equation} \label{eq:CCC_trace}
  \Tr^U(f) = (\id_B \otimes \varepsilon) \circ (f \otimes \id_{U^*}) \circ (\id_A \otimes \sigma \eta).
\end{equation}

In particular, \(\FdHilb\) is a compact closed category:
Section~\ref{sec:QM_cats} already established that \((\FdHilb,\otimes,\Cset)\) is symmetric monoidal, and it can be shown that the dual of a finite-dimensional Hilbert space \(H\) is the space defined on the hom-set \(H^* = \FdHilb(H,\Cset)\) using pointwise addition and scalar multiplication; since \(H^*\) is finite-dimensional it can be made into a Hilbert space by defining an appropriate inner product.
Given an orthonormal basis \(\{e_i \in H\}_{1\leq i \leq \dim(H)}\) of \(H\), an orthonormal basis of \(H^*\) is given by the collection \(\{\phi_i \colon H \to \Cset\}_{1\leq i \leq \dim(H)}\) where \(\phi_i\) is defined for each \(v \in H\) as follows:
\begin{equation*}
  \phi_i(v) = \braket{e_i}{v}.
\end{equation*}
The morphisms \(\eta \colon \Cset \to H^* \otimes H\) and \(\varepsilon \colon H \otimes H^* \to \Cset\) are defined by linear extension of:
\begin{equation*}
  \eta(1) = \sum_{1\leq i \leq \dim(H)} \phi_i \otimes e_i \quad\quad \text{and} \quad\quad
  \varepsilon(e_i \otimes \phi_j) = \begin{cases}
    1 &\ifc i = j \\
    0 &\otherwise
  \end{cases}
\end{equation*}
and it is straightforward to check that the diagrams~\eqref{eq:diag_CCC} commute.
Then, for any morphism \(f \in \FdHilb(A\otimes U, B \otimes U)\) the trace defined using the compact closed structure~\eqref{eq:CCC_trace} yields:
\begin{equation} \label{eq:FdHilb_CCC_trace}
  \Tr^U(f) = \sum_{1\leq i \leq \dim(U)} (\id_B \otimes \phi_i) \circ f \circ (\id_A \otimes e_i)
\end{equation}
which corresponds to the linear algebraic trace of a block matrix \(f = (f_{i,j})_{i,j \leq \dim(U)}\) determined by the orthonormal basis \(\{e_i\}_{i \leq \dim(U)}\) of \(U\), where each block is of type \(f_{i,j} \colon A \to B\).

Such a categorical trace on \(\FdHilb\) has little in common with the notion of iterative loops: Section~\ref{sec:ex_formula} identifies the requirement that morphisms \(f \colon A \otimes U \to B \otimes U\) must be characterisable in terms of components \(f_\sub{BA}\), \(f_\sub{BU}\), \(f_\sub{UA}\) and \(f_\sub{UU}\), which is certainly not the case for the tensor product \(\otimes\).
Importantly, the fact that \((\FdHilb,\otimes,\Cset)\) cannot be given a categorical trace that captures iteration is only due to its monoidal structure and has nothing to do with its compact closedness.
In fact, there are compact closed categories whose canonical trace coincides with the execution formula; for instance, this is the case on any category obtained as the result of applying the \(\Int\) construction (see~\cite{TraceJoyal}) to a category traced with respect to the execution formula.

\section{Unique decomposition categories}
\label{sec:UDC}

Unique decomposition categories were introduced by Haghverdi~\cite{Haghverdi} as a general framework where a morphism \(f \colon A \oplus U \to B \oplus U\) may be decomposed as a matrix
\begin{equation*}
  f = \begin{pmatrix}
    f_\sub{BA} \colon A \to B \quad & f_\sub{BU} \colon U \to B \\
    f_\sub{UA} \colon A \to U \quad & f_\sub{UU} \colon U \to U \\
  \end{pmatrix}
\end{equation*}
and where a notion of infinite summation of morphisms exists, so that the execution formula may be defined:
\begin{equation} \label{eq:ex_formula_informal}
  \ex^U(f) = f_\sub{BA} + \sum^\infty_{n = 0} f_\sub{BU} f_\sub{UU}^n f_\sub{UA}.
\end{equation}
The execution formula is meant to behave as a categorical trace, and the intuition is that it aggregates all possible paths that go from \(A\) to \(B\).
However, as discussed by Hoshino in Appendix B from~\cite{RTUDC}, Haghverdi's original definition is too weak: it admits categories whose execution formula violates vanishing I.
Hoshino's proposal was to define a stronger version of unique decomposition categories; their definition is provided below.

\begin{definition} \label{def:sUDC}
  Let \(\SCat{*}\) be one of the categories of \(\Sigma\)-monoids from Definition~\ref{def:SCat_subcats} endowed with a monoidal structure \((\SCat{*},\otimes,I)\) given by the tensor product as described in Section~\ref{sec:SCat_tensor}.
  For a \(\SCat{*}\)-category \(\C\) let there be:
  \begin{itemize}
    \item a zero object \(Z \in \C\) and
    \item a functor \(\oplus \colon \C \times \C \to \C\) such that its action on hom-objects on arbitrary \(A,B,C,D \in \C\) is described by a morphism in \(\SCat{*}\)
    \begin{equation*}
      - \oplus - \colon \C(A,C) \times \C(B,D) \to \C(A\oplus B, C \oplus D)
    \end{equation*}
    whose domain is a categorical product in \(\SCat{*}\).
  \end{itemize}
  If \((\C,\oplus,Z)\) is a symmetric monoidal category we say that it is a \(\SCat{*}\)-enriched \emph{unique decomposition category} (\GLS{\SCat{*}-UDC}{SCat-UDC}).
\end{definition}

Slight changes have been made to Hoshino's original definition~\cite{RTUDC}. On one hand, Hoshino only discusses \(\SCat{s}\)-enriched UDCs; the definition has been generalised so that all flavours of \(\Sigma\)-monoids introduced in Chapter~\ref{chap:Sigma} may be used.
On the other hand, Hoshino asks that \(\id_Z\) is the neutral element of the \(\Sigma\)-monoid \(\C(Z,Z)\), instead of requiring that the monoidal unit is a zero object.
These conditions are equivalent: if \(\id_Z = \Sigma \varnothing\) any morphism \(f \colon Z \to A\) is equal to the neutral element:
\begin{equation*}
  f = f \circ \id_Z = f \circ (\Sigma \varnothing) = \Sigma \varnothing
\end{equation*}
due to composition being \(\Sigma\)-bilinear; thus, \(Z\) is initial and a similar argument shows that \(Z\) is also terminal.
Notice that the zero morphism of each hom-object \(\C(A,B)\) --- \ie{} the unique morphism that factors through \(Z\) --- is precisely the neutral element of the \(\Sigma\)-monoid \(\C(A,B)\).
Moreover, if \(Z\) is a zero object then \(\C(Z,Z)\) is a singleton hom-set, so \(\id_Z\) must be the neutral element of the \(\Sigma\)-monoid \(\C(Z,Z)\).
Furthermore, Hoshino requires that
\begin{equation*}
  \Sigma \{\id_A \oplus 0_\sub{B,B}, 0_\sub{A,A} \oplus \id_B\} \keq \id_{A \oplus B}
\end{equation*}
but this can be inferred from our definition:
\begin{align*}
  \Sigma \{\id_A \oplus 0_\sub{B,B}, 0_\sub{A,A} \oplus \id_B\} &\keq \Sigma \{\id_A, 0_\sub{A,A}\} \oplus \Sigma \{0_\sub{B,B}, \id_B\} &&\text{(product in \(\SCat{*}\))} \\
  &\keq \id_A \oplus \id_B &&\text{(\(\Sigma\)-monoid axioms)} \\
  &= \id_{A \oplus B}. &&\text{(\(\oplus\) functor)}
\end{align*}
Alternatively, it can be shown that this equality imposes that the functor \(\oplus\) acts on hom-objects as a \(\Sigma\)-homomorphism whose domain is a categorical product and, hence, our definition is equivalent to Hoshino's in the case of \(\SCat{s}\)-UDCs.

Every monoidal category whose unit is a zero object has certain morphisms that act in a similar way to projections and injections, even though they may lack their universal property.
These morphisms are essential in the discussion of UDCs and we define them below.

\begin{definition}
  For each finite collection of objects \(\{A_j \in \C\}_J\) and each \(i \in J\), define the \emph{canonical quasi-projection} \(\pi_i \colon \oplus_J A_j \to A_i\) to be the unique morphism of its type generated by \(\{\id,\alpha,\alpha^{-1},\lambda,\rho\}\) together with the zero morphism \(0 \colon A_j \to Z\) for each \(j \in J\).
  Similarly, for each \(i \in J\) define the \emph{canonical quasi-injection} \(\pi_i \colon A_i \to \oplus_J A_j\) to be the unique morphism generated by \(\{\id,\alpha,\alpha^{-1},\lambda^{-1},\rho^{-1}\}\) along with the zero morphism \(0 \colon Z \to A_j\) for each \(j \in J\).
\end{definition}

The equations Haghverdi required in the definition of a UDC~\cite{Haghverdi} are satisfied in the case of canonical quasi-injections and canonical quasi-projections, as shown in the following remark.

\begin{remark} \label{rmk:sUDC_unitors}
  It follows from the definition of the canonical quasi-projection that \(\pi_A \colon A \oplus B \to A\) is \(\pi_A = \rho (\id_A \oplus 0)\) and the canonical quasi-injection \(\iota_A \colon A \to A \oplus B\) is \(\iota_A = (\id_A \oplus 0) \rho^{-1}\); similarly for \(\pi_B\) and \(\iota_B\) using the left unitor \(\lambda\) instead.
  It is immediate that \(\pi_A \iota_A = \id_A\) whereas \(\pi_A \iota_B = 0\).
  The fact that \(\Sigma \{\id_A \oplus 0, 0 \oplus \id_B\} \keq \id_{A \oplus B}\) implies that:
  \begin{equation*}
    \Sigma \{\iota_A \pi_A, \iota_B \pi_B\} \keq \id_{A \oplus B}.
  \end{equation*}
\end{remark}

This suggests that the monoidal structure on \(\C\) is very similar to the canonical one on a category with biproducts; indeed, every additive category (see Definition~\ref{def:additive_cat}) is a \(\SCat{g}\)-UDC, as established below.

\begin{example}
  Every additive category \((\C,\oplus,0)\) is a \(\SCat{g}\)-UDC.
  To check this, recall that an additive category is enriched over \(\Ab\) and every abelian group is a \(\Sigma\)-group whose \(\Sigma\) function on infinite families is undefined.
  Moreover, the monoidal unit of an additive category is a zero object according to the canonical definition of its monoidal structure, and the requirement that \(\Sigma \{\id_A \oplus 0, 0 \oplus \id_B\} = \id_{A \oplus B}\) is immediate from the universal property of biproducts.
\end{example}

\begin{example}
  Both \(\Hilb\) and \(\FdHilb\) are additive categories and, hence, they are \(\SCat{g}\)-UDCs.
  However, as discussed above, such a trivially defined \(\SCat{g}\)-enrichment assumes no infinite family is summable.
  In contrast, Section~\ref{sec:quantum_ex_trace} will provide a Hausdorff topology on each hom-set of \(\Hilb\), thus realising each of them as a \(\Sigma\)-group via the functor \(\HAG \to \SCat{g}\) from Definition~\ref{def:HausCMon_ft} which defines summability of infinite families in terms of convergence.
  Since the definition of UDC (Definition~\ref{def:sUDC}) does not impose any extra condition on the summability of infinite families, it is immediate that \(\Hilb\) and \(\FdHilb\) are still \(\SCat{g}\)-UDCs.
  Similarly, \(\Contraction\) and \(\FdContraction\) are \(\SCat{w}\)-UDCs; the \(\Sigma\) function on each hom-set of \(\Contraction\) is induced from the inclusion \(\Contraction(A,B) \into \Hilb(A,B)\) using the construction from Lemma~\ref{lem:wSm_restriction}.
  On the other hand, \(\Isometry\) and \(\Unitary\), along with their subcategories on finite dimensions are not UDCs since they lack a terminal object (and hence, a zero object).
\end{example}

Notice that a \(\SCat{*}\)-UDC need not have biproducts; for instance, this is the case for \(\Contraction\) since the universal morphism for the diagram:
\[\begin{tikzcd}
  & A \\
  A & {A\oplus A} & A
  \arrow["{\pi_A}", from=2-2, to=2-1]
  \arrow["{\pi_B}"', from=2-2, to=2-3]
  \arrow["\id"', from=1-2, to=2-1]
  \arrow["\id", from=1-2, to=2-3]
  \arrow[dashed, from=1-2, to=2-2]
\end{tikzcd}\]
would need to be the diagonal map \(a \mapsto (a,a)\) for all \(a \in A\), which is evidently not a contraction.
However, whenever the corresponding universal morphism does exist, it is unique, as established by the following proposition.
This is perhaps the most important characteristic of unique decomposition categories, justifying their name.

\begin{proposition}[Haghverdi~\cite{Haghverdi}] \label{prop:unique_decomposition}
  Let \((\C,\oplus,Z)\) be a \(\SCat{*}\)-UDC. Let \(f \colon \oplus_I A_i \to \oplus_J B_j\) be a morphism in \(\C\), with finite \(I\) and \(J\). There is a unique summable family \(\{f_{ji} \in \C(A_i,B_j)\}_{J \times I}\) satisfying:
  \begin{equation*}
    \Sigma \{\iota_j \circ f_{ji} \circ \pi_i\}_{J \times I} \keq f.
  \end{equation*}
\end{proposition}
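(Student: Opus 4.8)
The plan is to exhibit the matrix entries explicitly as $f_{ji} = \pi_j \circ f \circ \iota_i$, where the $\pi_j$ are the canonical quasi-projections and the $\iota_i$ the canonical quasi-injections, and then treat existence and uniqueness separately. The only structural facts I will need are the two families of ``biproduct-like'' identities: the orthogonality relations $\pi_j \circ \iota_i = \id_{A_i}$ when $i = j$ and $\pi_j \circ \iota_i = 0$ otherwise, together with the resolution of the identity $\Sigma \{\iota_i \circ \pi_i\}_I \keq \id_{\oplus_I A_i}$ and its counterpart for $J$. Both are recorded in Remark~\ref{rmk:sUDC_unitors} for binary $\oplus$; their $n$-ary versions follow by induction on the size of the index set, using the binary case together with the bracketing and flattening axioms. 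Throughout I will repeatedly use that composition in a $\SCat{*}$-enriched category is $\Sigma$-bilinear, so that composition with a fixed morphism on either side is a $\Sigma$-homomorphism and in particular preserves summability and sums (sending $0$ to $0$ by Remark~\ref{rmk:bihom_inverse}).

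\textbf{Existence.} I would start from $f = \id_{\oplus_J B_j} \circ f \circ \id_{\oplus_I A_i}$ and resolve each identity. Applying the $\Sigma$-homomorphism $f \circ (-)$ to $\Sigma\{\iota_i \circ \pi_i\}_I \keq \id$ gives $\Sigma \{f \circ \iota_i \circ \pi_i\}_I \keq f$; for each fixed $i$, applying $(-) \circ (f \circ \iota_i \circ \pi_i)$ to $\Sigma\{\iota_j \circ \pi_j\}_J \keq \id$ gives $\Sigma\{\iota_j \circ \pi_j \circ f \circ \iota_i \circ \pi_i\}_J \keq f \circ \iota_i \circ \pi_i$. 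I now have a collection $\{\fml{x}_i\}_I$ indexed by the \emph{finite} set $I$, each member $\fml{x}_i = \{\iota_j \circ \pi_j \circ f \circ \iota_i \circ \pi_i\}_J$ summable, with $\Sigma\{\Sigma \fml{x}_i\}_I \keq f$. The flattening axiom of Definition~\ref{def:wSm}, whose hypothesis of a finite outer index set is exactly met here, then yields $\Sigma(\uplus_I \fml{x}_i) \keq f$, that is, $\Sigma\{\iota_j \circ f_{ji} \circ \pi_i\}_{J \times I} \keq f$ with $f_{ji} = \pi_j \circ f \circ \iota_i$. Note this step uses only the resolution of the identity, not the orthogonality relations, and never assumes a finite family is automatically summable -- which matters since $\SCat{w}$ and $\SCat{s}$ need not be finitely total.

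\textbf{Uniqueness.} Suppose $\{g_{ji}\}_{J \times I}$ is any summable family with $\Sigma\{\iota_j \circ g_{ji} \circ \pi_i\}_{J \times I} \keq f$. Fix $(j_0, i_0)$ and sandwich: applying the $\Sigma$-homomorphism $\pi_{j_0} \circ (-) \circ \iota_{i_0}$ to both sides gives $\Sigma\{(\pi_{j_0} \circ \iota_j) \circ g_{ji} \circ (\pi_i \circ \iota_{i_0})\}_{J \times I} \keq \pi_{j_0} \circ f \circ \iota_{i_0} = f_{j_0 i_0}$. By orthogonality, every term with $(j,i) \neq (j_0, i_0)$ carries a factor $\pi_{j_0} \circ \iota_j = 0$ or $\pi_i \circ \iota_{i_0} = 0$ and hence vanishes, while the $(j_0,i_0)$ term is $g_{j_0 i_0}$. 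Discarding the zero entries via the neutral-element axiom (Proposition~\ref{prop:neutral_element}) and applying the singleton axiom leaves $g_{j_0 i_0} \keq f_{j_0 i_0}$. As $(j_0,i_0)$ was arbitrary, the summable family is unique and coincides with $\{\pi_j \circ f \circ \iota_i\}$.

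\textbf{Main obstacle.} The difficulty is bookkeeping rather than conceptual. The delicate point in existence is correctly invoking the \emph{asymmetric} flattening axiom: it permits the inner families $\fml{x}_i$ to be infinite but requires the outer index set to be finite, and the double resolution above must be arranged precisely so that the finite set $I$ sits on the outside. A secondary care point is the promotion of the binary identities of Remark~\ref{rmk:sUDC_unitors} to their $n$-ary form; in the absence of biproducts these are genuine consequences of the $\SCat{*}$-enrichment, so the induction must be threaded explicitly through bracketing and flattening rather than read off a universal property.
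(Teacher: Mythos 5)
Your proof is correct and takes essentially the same approach as the paper's: both define \(f_{ji} = \pi_j \circ f \circ \iota_i\), obtain existence by resolving the identity on both sides of \(f\), distributing with \(\Sigma\)-bilinearity of composition and collapsing the nested sum via the flattening axiom, and obtain uniqueness by sandwiching an arbitrary decomposition with \(\pi_j \circ (-) \circ \iota_i\) and applying the orthogonality relations. The only (immaterial) differences are that you nest the sums with the finite set \(I\) outermost where the paper uses \(J\), and that you make explicit the neutral-element/singleton bookkeeping and the \(n\)-ary promotion of the binary identities that the paper treats as straightforward.
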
 \begin{proof}
  It straightforward to check that the following equations are satisfied for any finite collection of objects \(\{A_i \in \C\}_I\):
  \begin{equation} \label{eq:UDC}
    \pi_i \circ \iota_{i'} = \begin{cases}
      \id &\text{if \(i = i'\)} \\
      0 &\otherwise
    \end{cases}
    \quad\quad \text{and} \quad\quad
    \Sigma \{\iota_i \circ \pi_i\}_I \keq \id.
  \end{equation}
  The first equation follows trivially from the definition of the canonical quasi-injections and quasi-projections, the second one follows from the axiom \(\Sigma \{\id_A \oplus 0, 0 \oplus \id_B\} \keq \id_{A \oplus B}\) of \(\SCat{*}\)-UDCs and the flattening axiom of weak \(\Sigma\)-monoids.

  Let \(f \colon \oplus_I A_i \to \oplus_J B_j\) be a morphism in \(\C\), with finite \(I\) and \(J\). Let \(f_{ji} = \pi_j f \iota_i\); then:
  \begin{align*}
    f = \id \circ f \circ \id &= \Sigma \{\iota_j \, \pi_j\}_J \circ f \circ \Sigma \{\iota_i \, \pi_i\}_I && \text{(equation~\eqref{eq:UDC})} \\
      &\keq \Sigma \{\iota_j \, \pi_j \, f \circ \Sigma \{\iota_i \, \pi_i\}_I\}_J && \text{(\(\circ\) is \(\Sigma\)-bilinear)} \\
      &\keq \Sigma \{\Sigma \{\iota_j \, \pi_j \, f \, \iota_i \, \pi_i\}_I\}_J && \text{(\(\circ\) is \(\Sigma\)-bilinear)} \\
      &\keq \Sigma \{\iota_j \, f_{ji} \, \pi_i\}_{J \times I} && \text{(flattening axiom and def. of \(f_{ji}\)).}
  \end{align*}
  Thus, there is at least one family \(\{f_{ji}\}_{J \times I}\) satisfying the claim.
  To show uniqueness, assume there is another family \(\{g_{j'i'}\}_{J \times I}\) satisfying \(\Sigma \{\iota_{j'} \circ g_{j'i'} \circ \pi_{i'}\}_{J \times I} \keq f\); then:
  \begin{equation*}
    f_{ji} = \pi_j \, f \, \iota_i = \pi_j (\Sigma \{\iota_{j'} \, g_{j'i'} \, \pi_{i'}\}_{J \times I}) \iota_i \keq \Sigma \{\pi_j \, \iota_{j'} \, g_{j'i'} \, \pi_{i'} \, \iota_i\}_{J \times I} = \Sigma \{g_{ji}\} = g_{ji}
  \end{equation*}
  due to composition being \(\Sigma\)-bilinear.
  Therefore, both families \(\{f_{ji}\}_{J \times I}\) and \(\{g_{j'i'}\}_{J \times I}\) contain the same elements and are thus the same family.
\end{proof}

The family of components has a suggestive representation as a matrix; for instance, for \(f \colon A \oplus B \to C \oplus D\) we may write:
  \begin{equation*}
    f = \begin{pmatrix}
      f_\sub{CA} & f_\sub{CB} \\
      f_\sub{DA} & f_\sub{DB} \\
    \end{pmatrix}
  \end{equation*}
The composition of two morphisms can be given in terms of their `matrix multiplication' where multiplication of entries is replaced by their composition, as established below.

\begin{proposition} \label{prop:matrix_composition}
  Let \((\C,\oplus,Z)\) be a \(\SCat{*}\)-UDC.
  Let \(f \colon \oplus_I A_i \to \oplus_J B_j\) and \(g \colon \oplus_J B_j \to \oplus_K C_k\) be two morphisms in \(\C\), with finite \(I\), \(J\) and \(K\).
  Then, for each \((k,i) \in K \times I\) the family \(\{g_\sub{kj} \circ f_\sub{ji}\}_J\) is summable and the unique decomposition of \(g \circ f\) is given by the collection of morphisms
  \begin{equation*}
    (g \circ f)_\sub{ki} = \Sigma \{g_\sub{kj} \circ f_\sub{ji}\}_J
  \end{equation*}
  for all \((k,i) \in K \times I\).
\end{proposition}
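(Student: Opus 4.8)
The plan is to prove the claim in two stages: first establish summability of each family $\{g_\sub{kj} \circ f_\sub{ji}\}_J$, then compute the unique decomposition of $g \circ f$ explicitly and invoke Proposition~\ref{prop:unique_decomposition} to conclude. Summability is immediate: the indexing set $J$ is finite, so $\{g_\sub{kj} \circ f_\sub{ji}\}_J$ is a finite family. For weak $\Sigma$-monoids, finite families need not be summable in general, but the relevant hom-object $\C(A_i, C_k)$ is a $\SCat{*}$-monoid and the sum in question is obtained as a component of a composite, so its summability will fall out of the calculation below rather than needing a separate argument.

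The key computation is to expand $g \circ f$ using the unique decompositions of $f$ and $g$ together with the $\Sigma$-bilinearity of composition. First I would write $f \keq \Sigma \{\iota_j f_\sub{ji} \pi_i\}_{J \times I}$ and $g \keq \Sigma \{\iota_k g_\sub{kj} \pi_j\}_{K \times J}$ as provided by Proposition~\ref{prop:unique_decomposition}. Then I compute
\begin{align*}
  g \circ f &\keq \Sigma \{\iota_k \, g_\sub{kj} \, \pi_j\}_{K \times J} \circ \Sigma \{\iota_{j'} \, f_\sub{j'i} \, \pi_i\}_{J \times I} \\
    &\keq \Sigma \{\iota_k \, g_\sub{kj} \, \pi_j \, \iota_{j'} \, f_\sub{j'i} \, \pi_i\}_{(K \times J) \times (J \times I)}
\end{align*}
where the second step uses that composition is $\Sigma$-bilinear (applied twice, once for each argument) together with the flattening axiom to merge the nested sums into a single family indexed over $(K \times J) \times (J \times I)$. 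Now I apply equation~\eqref{eq:UDC}: the factor $\pi_j \iota_{j'}$ equals $\id$ when $j = j'$ and $0$ otherwise. Using the neutral element axiom (Proposition~\ref{prop:neutral_element}) to discard the zero terms, the sum collapses to
\begin{equation*}
  g \circ f \keq \Sigma \{\iota_k \, g_\sub{kj} \, f_\sub{ji} \, \pi_i\}_{K \times J \times I}.
\end{equation*}
Finally, I would group this family by bracketing over the $J$ index: since each inner family $\{\iota_k \, g_\sub{kj} \, f_\sub{ji} \, \pi_i\}_J$ is finite, the bracketing axiom yields $\Sigma \{\iota_k \, (\Sigma \{g_\sub{kj} f_\sub{ji}\}_J) \, \pi_i\}_{K \times I} \keq g \circ f$, where I again use $\Sigma$-bilinearity of composition to pull $\iota_k (-) \pi_i$ outside the inner sum. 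This exhibits $g \circ f$ in the form $\Sigma \{\iota_k \, h_\sub{ki} \, \pi_i\}_{K \times I}$ with $h_\sub{ki} = \Sigma \{g_\sub{kj} f_\sub{ji}\}_J$, and the uniqueness clause of Proposition~\ref{prop:unique_decomposition} forces $(g \circ f)_\sub{ki} = h_\sub{ki}$, which is the claim.

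The main obstacle I anticipate is bookkeeping with the $\Sigma$-monoid axioms rather than any genuine mathematical difficulty: I must be careful about the order in which flattening and bracketing are applied, since in a weak $\Sigma$-monoid these axioms carry finiteness side-conditions (flattening requires finitely many families, bracketing requires each family finite). Because $I$, $J$, $K$ are all finite by hypothesis, every invocation stays within the permitted regime, but I would verify at each step that the relevant index set or inner family is finite before appealing to the axiom. A secondary subtlety is justifying the discarding of the $j \neq j'$ terms: these are genuinely the neutral element $0$ (not merely omitted), so the appeal to Proposition~\ref{prop:neutral_element} is what legitimises passing from the sum over $(K \times J) \times (J \times I)$ to the sum over the diagonal $K \times J \times I$. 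Once these axiom applications are handled correctly, summability of $\{g_\sub{kj} f_\sub{ji}\}_J$ is guaranteed retroactively, since the bracketing step only succeeds if those inner sums are defined.
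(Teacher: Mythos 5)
Your route --- expand both \(f\) and \(g\) via Proposition~\ref{prop:unique_decomposition}, multiply out by \(\Sigma\)-bilinearity and flattening, kill the off-diagonal terms with~\eqref{eq:UDC} and Proposition~\ref{prop:neutral_element}, then re-bracket over \(J\) --- is genuinely different from the paper's proof, and everything up to the diagonal sum \(\Sigma \{\iota_k\, g_\sub{kj}\, f_\sub{ji}\, \pi_i\}_{K \times J \times I} \keq g \circ f\) can be made rigorous. The gap is in the re-bracketing step. The bracketing axiom of weak \(\Sigma\)-monoids has \emph{two} hypotheses on each subfamily: that it is finite \emph{and} that it is summable; only then does it transfer summability of the total family to the family of sums. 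You verify finiteness and dismiss summability of each \(\{\iota_k\, g_\sub{kj}\, f_\sub{ji}\, \pi_i\}_J\) with the claim that it is ``guaranteed retroactively, since the bracketing step only succeeds if those inner sums are defined.'' That is circular: an implication cannot deliver its own hypothesis. Nor is the hypothesis vacuous in \(\SCat{w}\): finite subfamilies of a summable family need not be summable --- in Example~\ref{ex:wSm_pm} the family \(\{+,+,-\}\) is summable while its subfamily \(\{+,+\}\) is not. Since summability of \(\{g_\sub{kj} \circ f_\sub{ji}\}_J\) is itself half of what the proposition asserts, your proof as written does not establish the statement.

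The missing argument is short, and once you see it the rest of your computation becomes unnecessary. Each inner family satisfies
\begin{equation*}
  \{g_\sub{kj} \circ f_\sub{ji}\}_J = \{(\pi_k\, g)\,(\iota_j\, \pi_j)\,(f\, \iota_i)\}_J,
\end{equation*}
i.e.\ it is the image of the family \(\{\iota_j\, \pi_j\}_J\) --- summable with sum \(\id\) by~\eqref{eq:UDC} --- under the map \(h \mapsto \pi_k\, g\, h\, f\, \iota_i\), which is a \(\Sigma\)-homomorphism because composition is \(\Sigma\)-bilinear. Hence the family is summable with sum \(\pi_k\, g\, f\, \iota_i = \pi_k\,(g \circ f)\,\iota_i\), and this is exactly \((g \circ f)_\sub{ki}\) by the uniqueness part of Proposition~\ref{prop:unique_decomposition}. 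This single observation is the paper's entire proof: insert \(\id \keq \Sigma \{\iota_j\, \pi_j\}_J\) between \(g\) and \(f\) inside the component \(\pi_k\,(g \circ f)\,\iota_i\) and distribute. It yields summability and the value of the decomposition in one stroke, with no need for flattening, the off-diagonal cancellation over \((K \times J) \times (J \times I)\), or bracketing at all.
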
 \begin{proof}
  We show that the family \(\{g_\sub{kj} \circ f_\sub{ji}\}_J\) is summable for all \((k,i) \in K \times I\):
  \begin{align*}
    \pi_k (g \circ f) \iota_i &\keq \pi_k \, g \circ (\Sigma \{\iota_j \, \pi_j\}_J) \circ f \, \iota_i &&\text{(equation~\eqref{eq:UDC})} \\
      &\keq \Sigma \{\pi_k \, g \, \iota_j \, \pi_j \, f \, \iota_i\}_J &&\text{(\(\circ\) is \(\Sigma\)-bilinear)} \\
      &= \Sigma \{g_{kj} \circ f_{ji}\}_J. &&\text{(def\@. of \(g_{kj}\) and \(f_{ji}\))}
  \end{align*}
  Recall that \((g \circ f)_\sub{ki} = \pi_k (g \circ f) \iota_i\) by uniqueness of the decomposition, thus completing the proof of the claim.
\end{proof}

Haghverdi's original definition of UDCs simply required \(\C\) to be \(\SCat{s}\)-enriched, symmetric monoidal, and satisfy~\eqref{eq:UDC}. Unlike Hoshino, Haghverdi did not impose a particular choice of morphisms to act as quasi-injections and quasi-projections, which permitted the choice of ones that altered the input in a subtle way, causing vanishing I to fail for the execution formula (see Appendix B from~\cite{RTUDC} for further details).
In the \(\SCat{*}\)-UDCs of Hoshino, the symmetric monoidal structure satisfies some useful properties discussed in the following remarks.

\begin{remark} \label{rmk:sUDC_oplus}
  Let \(f \colon A \to B\) and \(g \colon C \to D\) be morphisms in a \(\SCat{*}\)-UDC \((\C,\oplus,Z)\).
  It follows that
  \begin{align*}
    \pi_B (f \oplus g) \iota_A &= \rho (\id_B \oplus 0) (f \oplus g) (\id_A \oplus 0) \rho^{-1} && \text{(def. of \(\pi_B\) and \(\iota_A\))} \\
      &= \rho (f \oplus 0) \rho^{-1} && \text{(functoriality of \(\oplus\), zero morphism)} \\
      &= f \circ \rho (\id_A \oplus \id_Z)\rho^{-1} = f && \text{(nat. of \(\rho\), \(\id_Z = 0\))}
  \end{align*}
  and similarly \(\pi_D (f \oplus g) \iota_C = g\), whereas \(\pi_B (f \oplus g) \iota_C = 0\) and \(\pi_D (f \oplus g) \iota_A = 0\).
  Therefore, the unique decomposition of \(f \oplus g\) is:
  \begin{equation*}
    f \oplus g = \begin{pmatrix}
      f & 0 \\
      0 & g
    \end{pmatrix}.
  \end{equation*}
\end{remark}

\begin{remark} \label{rmk:sUDC_symmetry}
  Let \(\sigma \colon A \oplus B \to B \oplus A\) be the symmetric braiding of a \(\SCat{*}\)-UDC \((\C,\oplus,Z)\).
  It follows that
  \begin{align*}
    \pi_A \, \sigma \, \iota_A &= \lambda (0 \oplus \id_A) \sigma (\id_A \oplus 0) \rho^{-1} && \text{(def.\@ of \(\pi_A\) and \(\iota_A\))} \\
      &= \lambda (0 \oplus \id_A) (0 \oplus \id_A) \sigma \, \rho^{-1} && \text{(naturality of \(\sigma\))} \\
      &= \lambda (0 \oplus \id_A) \sigma \, \rho^{-1} && \text{(functoriality of \(\oplus\))} \\
      &= \lambda \, \sigma \, \rho^{-1} && \text{(\(\id_Z = 0\), nat.\@ of \(\lambda\))} \\
      &= \id_A && \text{(coherence theorem)}
  \end{align*}
  with the last step due to the coherence theorem of symmetric monoidal categories, which establishes that if two morphisms have the same type and both are built from composition and monoidal product of \(\{\id,\alpha,\lambda,\rho,\sigma\}\) and their inverses, then the two morphisms are equivalent (see Corollary 1.42 from~\cite{HeunenVicaryBook}).
  Similarly, \(\pi_B \, \sigma \, \iota_B = \id_B\), whereas \(\pi_A \, \sigma \, \iota_B = 0\) and \(\pi_B \, \sigma \, \iota_A = 0\).
  Therefore, the unique decomposition of the symmetric braiding is:
  \begin{equation*}
    \sigma = \begin{pmatrix}
      0 & \id_B \\
      \id_A & 0
    \end{pmatrix}.
  \end{equation*}
\end{remark}

These remarks together with the monoidal unit being a zero object and composition being \(\Sigma\)-bilinear are \emph{almost} sufficient for \(\SCat{*}\)-UDCs to be partially traced categories with respect to the execution formula.
This is made precise by the following lemma.

\begin{lemma} \label{lem:all_but_vII}
  Let \((\C,\oplus,Z)\) be a \(\SCat{*}\)-UDC.
  Then, for any morphism \(f \in \C(A \oplus U, B \oplus U)\), the execution formula
  \begin{equation*}
    \ex^U(f) \keq \Sigma \left(\{f_\sub{BA}\} \uplus \{f_\sub{BU} f_\sub{UU}^n f_\sub{UA}\}_\Nset \right)
  \end{equation*}
  is guaranteed to satisfy all of the axioms of partially traced categories but vanishing II.
\end{lemma}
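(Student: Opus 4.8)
The strategy is to verify each of the six trace axioms (naturality, dinaturality, superposing, vanishing I, and yanking) directly by expanding the execution formula using the matrix-decomposition machinery from Proposition~\ref{prop:unique_decomposition} and the composition rule from Proposition~\ref{prop:matrix_composition}. The key computational tool throughout is $\Sigma$-bilinearity of composition, which lets us distribute composition over the (possibly infinite) sums appearing in $\ex^U(f) = \Sigma(\{f_\sub{BA}\} \uplus \{f_\sub{BU} f_\sub{UU}^n f_\sub{UA}\}_\Nset)$, together with the bracketing and flattening axioms of weak $\Sigma$-monoids to regroup terms. For each axiom I would first compute the matrix components of both sides using Remarks~\ref{rmk:sUDC_oplus} and~\ref{rmk:sUDC_symmetry}, then show the two execution formulas agree term by term.

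\textbf{The easy axioms.} Naturality is the most direct: precomposing $f$ with $g \otimes \id_U$ and postcomposing with $h \otimes \id_U$ only modifies the $A$- and $B$-indexed components, sending $f_\sub{BA} \mapsto h f_\sub{BA} g$, $f_\sub{BU} \mapsto h f_\sub{BU}$, $f_\sub{UA} \mapsto f_\sub{UA} g$, and leaving $f_\sub{UU}$ untouched. Expanding $\ex^U$ and using $\Sigma$-bilinearity to pull $h$ out on the left and $g$ out on the right gives $h \circ \ex^U(f) \circ g$ immediately. Superposing follows similarly: by Remark~\ref{rmk:sUDC_oplus} the components of $g \oplus f$ place $g$ in a separate block that never interacts with the traced $U$-loop, so the execution formula factors as $g \oplus \ex^U(f)$. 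Vanishing I uses that $I = Z$ is a zero object: when $U = Z$ all components $f_\sub{BU}$, $f_\sub{UA}$, $f_\sub{UU}$ factor through $Z$ and are therefore the zero morphism, so every term in the infinite sum vanishes and only $f_\sub{BA} = \rho \circ f \circ \rho^{-1}$ survives. Yanking reduces to computing the components of $\sigma_{U,U}$ from Remark~\ref{rmk:sUDC_symmetry}: here $(\sigma)_\sub{UU} = 0$, so $\ex^U(\sigma)$ collapses to $(\sigma)_\sub{BA} + (\sigma)_\sub{BU}\,(\sigma)_\sub{UA} = 0 + \id_U \circ \id_U = \id_U$.

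\textbf{The main obstacle.} Dinaturality is the hardest of the five to verify, and I expect it to be the crux of the argument. The issue is that $\Tr^U((\id_B \otimes g)\circ f)$ and $\Tr^{U'}(f \circ (\id_A \otimes g))$ involve loops over \emph{different} objects $U$ and $U'$, so the two infinite sums are not termwise equal; instead one must show they are equal after regrouping. Writing the components of $f \colon A \oplus U \to B \oplus U'$ and inserting $g \colon U' \to U$, the left-hand loop power $((\id_B \otimes g) f)_\sub{U U}^n$ expands (via Proposition~\ref{prop:matrix_composition} and $\Sigma$-bilinearity) into a sum indexed by $(g f_\sub{U'U})^n$, whereas the right-hand power involves $(f_\sub{U'U} g)^n$. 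The plan is to use the identity $g \circ (f_\sub{U'U} g)^n = (g f_\sub{U'U})^n \circ g$, established by induction, to match the two series term by term; the delicate point is justifying the necessary rearrangement of the doubly-indexed family using only the \emph{finitary} bracketing and flattening axioms available in a general weak $\Sigma$-monoid, since we cannot appeal to strong bracketing or subsummability. I would handle this by grouping the expansion into finitely many families at each fixed loop length $n$, applying flattening within each such finite bundle, and only then aggregating over $n \in \Nset$, taking care that each intermediate family is genuinely summable. Once dinaturality is settled, the lemma follows by assembling the five verified axioms, with vanishing II explicitly excluded as stated.
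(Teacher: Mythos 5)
Your verification of naturality, superposing, vanishing I and yanking is exactly the paper's proof: expand \(\ex\), use \(\Sigma\)-bilinearity of composition together with the neutral-element and singleton axioms, and read the components off Remarks~\ref{rmk:sUDC_oplus} and~\ref{rmk:sUDC_symmetry}. Where you diverge is dinaturality, and there you have misidentified the difficulty: there is no doubly-indexed family and no regrouping to justify. Since the off-diagonal components of \(\id_B \oplus g\) vanish (Remark~\ref{rmk:sUDC_oplus}), Proposition~\ref{prop:matrix_composition} together with the neutral-element axiom yields \emph{single morphisms}, not sums, for the components of the composites: \([(\id_B \oplus g)f]_\sub{UU} = g f_\sub{U'U}\), \([(\id_B \oplus g)f]_\sub{UA} = g f_\sub{U'A}\), and dually \([f(\id_A \oplus g)]_\sub{U'U'} = f_\sub{U'U}\,g\), \([f(\id_A \oplus g)]_\sub{BU'} = f_\sub{BU}\,g\). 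Hence the \(n\)-th member of the family defining \(\ex^U((\id_B \oplus g)\circ f)\) is \(f_\sub{BU}(g f_\sub{U'U})^n (g f_\sub{U'A})\), which by plain associativity of \(\circ\) \emph{is} the \(n\)-th member \((f_\sub{BU}\,g)(f_\sub{U'U}\,g)^n f_\sub{U'A}\) of the family defining \(\ex^{U'}(f \circ (\id_A \oplus g))\); your inductive identity is nothing but this associativity. The two families are therefore literally the same family, so the Kleene equality — including the ``defined iff defined'' direction — is immediate, with no rearrangement at all. This matters beyond economy: the procedure you sketch (splitting into bundles, flattening within each, then re-aggregating over \(\Nset\)) is exactly the kind of infinitary rearrangement that weak \(\Sigma\)-monoids do \emph{not} license, so if it were genuinely needed your proof would be stuck. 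The paper stresses precisely this point after the proof: the entire lemma uses only \(\Sigma\)-bilinearity of composition and the neutral-element and singleton axioms, never bracketing or flattening, which is why it holds for every \(\SCat{*}\)-UDC and why the genuine obstruction is confined to vanishing II.
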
 \begin{proof}
  Each of the axioms of partially traced categories (except vanishing II) are proven below.
  \begin{itemize}
    \item \emph{Naturality}. For all \(f \colon A \oplus U \to B \oplus U\), if \(\ex^U(f)\) is defined then:
    \begin{align*}
      h \circ \ex^U(f) \circ g &\keq h \circ \left(\Sigma \left(\{f_\sub{BA}\} \uplus \{f_\sub{BU} f_\sub{UU}^n f_\sub{UA}\}_\Nset \right) \right) \circ g &&\text{(def\@. of \(\ex\))} \\
        &\keq \Sigma \left(\{h f_\sub{BA} g\} \uplus \{h f_\sub{BU} f_\sub{UU}^n f_\sub{UA} g\}_\Nset \right) &&\text{(\(\circ\) is \(\Sigma\)-bilinear)} \\
        &\keq \ex^U((h \oplus \id) \circ f \circ (g \oplus \id)) &&\text{(def\@. of \(\ex\))}
    \end{align*}
    so that naturality is satisfied.
    \item \emph{Dinaturality.} For all \(f \colon A \oplus U \to B \oplus U'\) and \(g \colon U' \to U\), if \(\ex^U((\id \oplus g) f)\) is defined then:
    \begin{align*}
      \ex^U((\id \oplus g) \circ f) &\keq \Sigma \left( \{f_\sub{BA}\} \uplus \{f_\sub{BU} (g f_\sub{U'U})^n (g f_\sub{U'A})\}_\Nset \right) &&\text{(def\@. of \(\ex\))} \\
        &= \Sigma \left( \{f_\sub{BA}\} \uplus \{(f_\sub{BU} g) (f_\sub{U'U} g)^n f_\sub{U'A}\}_\Nset \right) &&\text{(associativity of \(\circ\))} \\
        &\keq \ex^{U'}(f \circ (\id \oplus g)). &&\text{(def\@. of \(\ex\))}
    \end{align*}
    Similarly, if \(\ex^{U'}(f \circ (\id \oplus g))\) is defined then \(\ex^U((\id \oplus g) \circ f)\) is defined as well and they coincide, so that dinaturality is satisfied.
    \item \emph{Superposing.} For all \(f \colon A \oplus U \to B \oplus U\) and \(g \colon C \to D\), if \(\ex^U(f)\) is defined then, according to the matrix decomposition of \(g \oplus f\) (see Remark~\ref{rmk:sUDC_oplus}):
    \begin{equation*}
      \ex^U(g \oplus f) = \ex^U \begin{pmatrix}
        g & 0 \\
        0 & f
      \end{pmatrix} \keq
      \begin{pmatrix}
        \Sigma \left( \{g\} \uplus \{0 f_\sub{UU}^n 0\}_\Nset \right) & 0 \\
        0 & \ex^U(f)
      \end{pmatrix} \keq g \oplus \ex^U(f)
    \end{equation*}
    where the neutral element and singleton axioms of weak \(\Sigma\)-monoids have been used.
    Thus, superposing is satisfied.
    \item \emph{Yanking.} The axiom follows trivially from the matrix decomposition of the symmetry \(\sigma \colon U \oplus U \to U \oplus U\) (see Remark~\ref{rmk:sUDC_symmetry}):
    \begin{equation*}
      \ex^U(\sigma) = \ex^U \begin{pmatrix}
        0 & \id_U \\
        \id_U & 0
      \end{pmatrix} \keq \Sigma \left( \{0\} \uplus \{\id_U \, 0^n \, \id_U\}_\Nset \right) \keq \id_U
    \end{equation*}
    where the neutral element and singleton axioms of weak \(\Sigma\)-monoids have been used.
    Thus, yanking is satisfied.
    \item \emph{Vanishing I.} For all \(f \colon A \oplus Z \to B \oplus Z\) notice that \(f_\sub{ZA} = 0\), \(f_\sub{BZ} = 0\) and \(f_\sub{ZZ} = 0\) due to \(Z\) being a zero object, implying:
    \begin{equation*}
      \ex^Z(f) \keq f_\sub{BA} = \pi_B \, f \, \iota_A = \rho (\id_B \oplus 0) f (\id_A \oplus 0) \rho^{-1} = \rho \, f \, \rho^{-1}
    \end{equation*}
    where the neutral element and singleton axioms of weak \(\Sigma\)-monoids have been used.
    Thus, vanishing I is satisfied. \qedhere
  \end{itemize}
\end{proof}

The proof above uses that composition is \(\Sigma\)-bilinear along with the neutral element and singleton axioms of weak \(\Sigma\)-monoids. 
However, neither the bracketing nor the flattening axioms of weak \(\Sigma\)-monoids are used.
The theorem below establishes that, in the case of \(\SCat{s}\)-UDCs, the vanishing II axiom is also satisfied.
The proof relies on the strong bracketing and the strong flattening axioms from the \(\SCat{s}\)-enrichment.
The same result was proven by Hoshino in~\cite{RTUDC} using a different strategy: Hoshino provides a representation theorem for \(\SCat{s}\)-UDCs, establishing that every such category embeds in a categories with countable biproducts, then derives a partial trace in the original category using a similar approach to Proposition~\ref{prop:induced_trace}.
In contrast, our proof illustrates the importance of strong bracketing and strong flattening explicitly and, hence, hints at the challenge of generalising this result to \(\SCat{w}\)-UDCs where these axioms are not available.
For further discussion on the connection between our work and Hoshino's, see Section~\ref{sec:trace_rel_work}.

\begin{theorem} \label{thm:classical_trace}
  Every \(\SCat{s}\)-UDC \((\C,\oplus,Z)\) is partially traced using the execution formula.
\end{theorem}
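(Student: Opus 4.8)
By Lemma~\ref{lem:all_but_vII}, every \(\SCat{s}\)-UDC already satisfies all axioms of partially traced categories except vanishing II, so the plan is to establish vanishing II alone:
\[
  \ex^U(\ex^V(f)) \keq \ex^{U \oplus V}(f)
\]
for any \(f \colon A \oplus U \oplus V \to B \oplus U \oplus V\), where the left-hand side's outer trace is taken assuming the inner trace \(\ex^V(f)\) is defined. The essential idea is combinatorial: both sides are infinite sums over \emph{paths} through the feedback wires, and the two sides merely \emph{group} the same collection of path-contributions differently. The iterated trace first aggregates all excursions through \(V\) (producing the matrix components of \(\ex^V(f)\)), then aggregates excursions of the resulting morphism through \(U\); the combined trace aggregates excursions through \(U \oplus V\) directly. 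My goal is to show these are the same summable family, reindexed.

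First I would fix notation for the \(3 \times 3\) block decomposition of \(f\) with respect to \(A \oplus U \oplus V\), writing components such as \(f_\sub{BA}\), \(f_\sub{UV}\), \(f_\sub{VV}\), \etc{} (justified by Proposition~\ref{prop:unique_decomposition}), and recall from Proposition~\ref{prop:matrix_composition} that composition of morphisms corresponds to matrix multiplication where scalar products become composites and scalar sums become \(\Sigma\). I would then compute the components of \(g := \ex^V(f)\) as a morphism \(A \oplus U \to B \oplus U\): by the execution formula each component \(g_\sub{YX}\) is a summable family
\[
  g_\sub{YX} \keq \Sigma\!\left(\{f_\sub{YX}\} \uplus \{f_\sub{YV} f_\sub{VV}^n f_\sub{VX}\}_\Nset\right),
\]
for \(X \in \{A,U\}\), \(Y \in \{B,U\}\). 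Substituting these expressions into \(\ex^U(g) \keq \Sigma(\{g_\sub{BA}\} \uplus \{g_\sub{BU}\, g_\sub{UU}^n\, g_\sub{UA}\}_\Nset)\) yields a nested sum of composites of the original components of \(f\), each summand being a word \(f_\sub{Y W_k} f_\sub{W_k W_{k-1}} \cdots f_\sub{W_1 X}\) where each internal letter \(W_j\) is either \(U\) or \(V\). The plan is to show this nested family, once flattened, is exactly the family defining \(\ex^{U \oplus V}(f)\), namely \(\Sigma(\{f_\sub{BA}\} \uplus \{f_\sub{B(U \oplus V)} f_\sub{(U \oplus V)(U \oplus V)}^n f_\sub{(U \oplus V)A}\}_\Nset)\), after expanding each \(f_\sub{(U \oplus V)(U \oplus V)}\) via its own \(2 \times 2\) block structure into \(\{U,V\}\)-words.

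The crux — and the main obstacle — is that this reindexing is \emph{not} a finite rearrangement: to equate the nested sum on the left with the single flattened sum on the right, I must repartition a countably infinite family and recombine it along a different (infinite) bracketing, and I must verify summability is preserved at each regrouping. This is precisely where the hypothesis of an \(\SCat{s}\)-enrichment is indispensable: the ordinary weak bracketing and flattening axioms (Definition~\ref{def:wSm}) only permit regrouping when the outer index set, respectively each inner family, is \emph{finite}, whereas here both are genuinely infinite. I would invoke \emph{strong flattening} to collapse the nested sum \(\Sigma\{\Sigma\,\fml{x}_j\}_J\) into the single flat sum \(\Sigma(\uplus_J \fml{x}_j)\) over all \(\{U,V\}\)-words, and \emph{strong bracketing} to re-group that flat family according to the word-length \(n\) that appears in \(\ex^{U \oplus V}(f)\), with \emph{subsummability} guaranteeing that each intermediate subfamily is itself summable so that the components of \(g\) and of \(f_\sub{(U\oplus V)(U\oplus V)}^n\) are well-defined. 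Care is needed to set up an explicit bijection between the two index sets — a \(\{U,V\}\)-word of length \(n\) on one side corresponds to a term in the binomial-style expansion of the \(n\)-fold composite \(f_\sub{(U\oplus V)(U\oplus V)}^n\) on the other — and to check that this bijection respects the elements of the families, so that the two families are equal in the sense of Definition~\ref{def:families}. Once the families are shown identical and summability transfers in both directions via the strong axioms, the Kleene equality \(\ex^U(\ex^V(f)) \keq \ex^{U \oplus V}(f)\) follows, completing vanishing II and hence the theorem.
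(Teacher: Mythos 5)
Your proposal is correct and follows essentially the same route as the paper's proof: both sides are recognised as different groupings of the single family of all \(\{U,V\}\)-words (chains of components of \(f\)), with strong flattening collapsing the nested sum coming from \(\ex^U(\ex^V(f))\) into the flat sum over all words, and strong bracketing re-grouping that flat family by word length to recover \(\ex^{U\oplus V}(f)\), the summability of the intermediate subfamilies being supplied by \(\Sigma\)-bilinearity of composition (Proposition~\ref{prop:matrix_composition}). The only cosmetic difference is that the paper names the two partitions explicitly (\(\fml{d}_n\) by chain length, \(\fml{q}_n\) by number of passes through \(U\)) rather than phrasing the regrouping as a bijection of index sets, and it derives the intermediate summability directly rather than via subsummability.
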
 \begin{proof}
  According to the previous lemma, all axioms except vanishing II are guaranteed to be satisfied.
  Thus, we only need to show that, whenever the enrichment is over \(\SCat{s}\), vanishing II is satisfied as well.
  Let \(f \colon A \oplus U \oplus V \to B \oplus U \oplus V\) be a morphism in \(\C\) such that \(\ex^V(f)\) is defined.
  For each \(n \in \Nset\), let \(\fml{d}_n \in \C(A,B)^*\) be the family of morphisms \(A \to B\) built from chains of compositions of the form:
  \begin{equation*}
    f_\sub{BW_n} \circ \ldots \circ f_\sub{W_2W_1} \circ f_\sub{W_1A}
  \end{equation*}
  where each \(W_i\) may be either \(U\) or \(V\); in the case of \(n = 0\), define \(\fml{d}_0 = \{f_\sub{BA}\}\).
  Let \(\fml{p} = \uplus_{n \in \Nset} \fml{d}_n\) and, for each \(n \in \Nset\), let \(\fml{q}_n\) be the subfamily of \(\fml{p}\) comprised of all morphisms built from chains that factor through \(U\) exactly \(n\) times; evidently, \(\fml{p} = \uplus_{n \in \Nset} \fml{q}_n\).
  Notice that for \(n > 0\) the sum of \(\fml{d}_n\) would correspond to:
  \begin{equation*}
    \Sigma \fml{d}_n \keq \begin{pmatrix}
      f_\sub{BU} & f_\sub{BV}
    \end{pmatrix} \begin{pmatrix}
      f_\sub{UU} & f_\sub{UV} \\
      f_\sub{VU} & f_\sub{VV} \\
    \end{pmatrix}^{n-1} \begin{pmatrix}
      f_\sub{UA} \\ f_\sub{VA}
    \end{pmatrix}
  \end{equation*}
  and thus \(\fml{d}_n\) is summable according to Proposition~\ref{prop:matrix_composition}.
  Since \(\Sigma \fml{d}_0 = f_\sub{BA}\), the definition of \(\ex^{U \oplus V}\) implies that:
  \begin{equation} \label{eq:vII_d}
    \ex^{U \oplus V}(f) \keq \Sigma \{\Sigma \fml{d}_n\}_\Nset.
  \end{equation}
  Similarly, \(\Sigma \fml{q}_0 \keq \pi_B \circ \ex^V(f) \circ \iota_A\) due to composition being \(\Sigma\)-bilinear and, for each \(n > 0\),
  \begin{equation*}
    \Sigma \fml{q}_n \keq (\pi_B \, \ex^V(f) \, \iota_U) \circ (\pi_U \, \ex^V(f) \, \iota_U)^{n-1} \circ (\pi_U \, \ex^V(f) \, \iota_A)
  \end{equation*}
  where each family \(\fml{q}_n\) is summable due to \(\ex^V(f)\) being defined (by assumption) and composition being \(\Sigma\)-bilinear.
  Then, according to the definition of \(\ex^U\):
  \begin{equation} \label{eq:vII_q}
    \ex^U(\ex^V(f)) \keq \Sigma \{\Sigma \fml{q}_n\}_\Nset.
  \end{equation}

  In summary, we have provided two partitions of the family \(\fml{p}\), both satisfying that for all \(n \in \Nset\) each subfamily \(\fml{d}_n\) and \(\fml{q}_n\) is summable.
  Moreover, we have shown that the family of their sums is defined iff either side of the vanishing II axiom is satisfied; it is now straightforward to prove vanishing II using the strong flattening and strong bracketing axioms from the \(\SCat{s}\)-enrichment.
  Assume \(\ex^U(\ex^V(f))\) is defined, then it follows that:
  \begin{align*}
    \ex^U(\ex^V(f)) &\keq \Sigma \{\Sigma \fml{q}_n\}_\Nset &&\text{(equation~\eqref{eq:vII_q})} \\
      &\keq \Sigma \fml{p} &&\text{(strong flattening)} \\
      &\keq \Sigma \{\Sigma \fml{d}_n\}_\Nset &&\text{(strong bracketing)} \\
      &\keq \ex^{U \oplus V}(f). &&\text{(equation~\eqref{eq:vII_d})}
  \end{align*}
  Similarly, if we assume that \(\ex^{U \oplus V}(f)\) is defined it follows that \(\ex^U(\ex^V(f))\) is defined as well and their results coincide.
  Therefore, vanishing II is satisfied and the proof is complete.
\end{proof}

An immediate question is whether this theorem can be generalised to \(\SCat{w}\)-UDCs.
Unfortunately, these are generally not partially traced since vanishing II cannot be guaranteed to hold, as shown in the following counterexample.

\begin{example} \label{ex:FdHilb_counterexample}
  Let the \(\SCat{g}\)-enrichment of \(\FdHilb\) be induced by the usual operator norm topology (see Proposition~\ref{prop:SOT_FdHilb} for details) and consider the following morphism in \(\FdHilb\):
  \begin{equation*}
    f \colon \Cset \oplus \Cset \oplus \Cset \to \Cset \oplus \Cset \oplus \Cset
  \end{equation*}
  \begin{equation*}
    f = \left( \begin{array}{ccc}
      0 & 1 & 1 \\
      1 & -\tfrac{2}{3} & 1 \\
      1 & 1 & \tfrac{1}{3}
    \end{array} \right).
  \end{equation*}
  Explicit calculation yields the following result:
  \begin{equation*}
    \ex^\Cset(f) \, = \, \ex^\Cset \left( \begin{array}{cc|c}
      0 & 1 & 1 \\
      1 & -\tfrac{2}{3} & 1 \\ \hline
      1 & 1 & \tfrac{1}{3}
    \end{array} \right) \, = \, 
    \begin{pmatrix}
      \tfrac{3}{2} & \tfrac{5}{2} \\
      \tfrac{5}{2} & \tfrac{5}{6}
    \end{pmatrix}
  \end{equation*}
  so that \(\abs{\tfrac{5}{6}} \leq 1\) and, hence, \(\ex^\Cset(\ex^\Cset(f))\) is well-defined, whereas
  \begin{equation*}
    \ex^{\Cset \oplus \Cset}(f) \, =\,  \ex^{\Cset \oplus \Cset} \left( \begin{array}{c|cc}
      0 & 1 & 1 \\ \hline
      1 & -\tfrac{2}{3} & 1 \\ 
      1 & 1 & \tfrac{1}{3}
    \end{array} \right) \, =\, 
    \sum_{n = 0}^\infty \begin{pmatrix}
      1 & 1
    \end{pmatrix} \begin{pmatrix}
      -\tfrac{2}{3} & 1 \\
      1 & \tfrac{1}{3}
    \end{pmatrix}^n \begin{pmatrix}
      1 \\ 1
    \end{pmatrix}
  \end{equation*}
  does not converge in norm and, hence, is undefined.
  Therefore, this morphism is an instance of vanishing II failing in a \(\SCat{g}\)-UDC.
\end{example} 

The construction of such a counterexample is enabled by the presence of additive inverses; in particular,  entry \(- \tfrac{2}{3}\) in the definition of \(f\).
The goal of the rest of this section is to establish sufficient conditions for vanishing II to be satisfied in \(\SCat{w}\)-UDCs.
But first, we discuss the relevance of the result for \(\SCat{s}\)-UDCs (Theorem~\ref{thm:classical_trace}) in the context of classical iterative loops.

\subsection{Classical iterative loops}
\label{sec:classical_loop}

This section provides three examples of \(\SCat{s}\)-UDCs which, according to Theorem~\ref{thm:classical_trace}, are partially traced with respect to the execution formula.
The relevance of each of these categories in classical and quantum computing is briefly discussed and, in particular, it is argued that their execution formula captures the semantics of classical iterative loops in their respective frameworks. 

\paragraph*{Classical reversible programs.} A bijection \(f \colon A \to B\) is interpreted to describe the input-output behaviour of a reversible program whose possible inputs and outputs are elements of the sets \(A\) and \(B\), respectively.

\begin{definition}
  Let \(\Bijection\) be the category whose objects are sets and whose morphisms are bijections.
  For any partial function \(f \colon A \pto B\) let \(\dom(f) \subseteq A\) be the subset where \(f\) is defined.
  Let \(\PInj\) be the category whose objects are sets and whose morphisms \(f \in \PInj(A,B)\) are partial injective functions, \ie{} partial functions whose restriction \(\dom(f) \to B\) is injective.
\end{definition}

\begin{proposition}[Haghverdi~\cite{Haghverdi}]
  Let \((\PInj,\uplus,\varnothing)\) be the symmetric monoidal category induced by disjoint union.
  A family of morphisms \(\{f_i\}_I \in \PInj(A,B)^*\) is summable iff \(\dom(f_i) \cap \dom(f_{i'}) = \varnothing\) and \(\im(f_i) \cap \im(f_{i'}) = \varnothing\) for all \(i \not= i'\); in such a case, the partial function \(\Sigma \{f_i\}_I \colon A \pto B\) is defined as follows for every \(a \in A\):
  \begin{equation*}
    (\Sigma \{f_i\}_I)(a) = \begin{cases}
      f_i(a) &\ifc \exists i \in I \text{ s.t. } a \in \dom(f_i) \\
      \undefined &\otherwise.
    \end{cases}
  \end{equation*}
  These definitions make \(\PInj\) a \(\SCat{s}\)-UDC.
\end{proposition}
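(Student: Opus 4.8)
The plan is to verify the two ingredients required by Definition~\ref{def:sUDC}: that every hom-set $\PInj(A,B)$ with the stated $\Sigma$ is a strong $\Sigma$-monoid on which composition is $\Sigma$-bilinear, and that $(\PInj,\uplus,\varnothing)$ carries the monoidal structure demanded of a $\SCat{s}$-UDC. The organising observation is that a family $\{f_i\}_I \in \PInj(A,B)^*$ is summable exactly when the relation $\bigcup_{i\in I} f_i \subseteq A \times B$ is itself a partial injection: disjointness of the domains $\dom(f_i)$ guarantees it is single-valued (hence a partial function), while disjointness of the images $\im(f_i)$ guarantees injectivity. When this holds, $\Sigma \{f_i\}_I$ is precisely this union, which is what the stated piecewise formula computes. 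With this reading, summability is a \emph{global} property of the index set --- pairwise disjointness over all of $I$ --- and the value of $\Sigma$ is an infinitary union; both are manifestly insensitive to how $I$ is grouped or reindexed.

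First I would check the strong $\Sigma$-monoid axioms (Definitions~\ref{def:wSm} and~\ref{def:sSm}). Singleton and neutral element are immediate, the neutral element being the nowhere-defined partial injection $0 \colon A \pto B$ (so $\Sigma \varnothing \keq 0$), and deletion of copies of $0$ leaves the domains, images, and the union unchanged. Subsummability is clear, since any subfamily of a family with pairwise disjoint domains and images again has pairwise disjoint domains and images. For strong bracketing and strong flattening, given a partition $\fml{x} = \uplus_{j \in J} \fml{x}_j$, the families $\{\Sigma \fml{x}_j\}_J$ and $\fml{x}$ have the \emph{same} total domain and image, and their unions agree as relations; hence one side is summable iff the other is, and the sums coincide. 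This subsumes weak bracketing and flattening and establishes the strong axioms directly.

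Next I would show composition is $\Sigma$-bilinear. For a fixed $g$ and a summable $\{f_i\}_I$, the family $\{g \circ f_i\}_I$ has domains contained in the disjoint $\dom(f_i)$, and images $\im(g \circ f_i) = g(\im(f_i) \cap \dom(g))$ which are pairwise disjoint because $g$ is injective and the $\im(f_i)$ are disjoint; a pointwise check gives $\Sigma \{g \circ f_i\}_I \keq g \circ \Sigma \{f_i\}_I$. The symmetric argument, using that the right factor is a partial injection, handles precomposition. Turning to the monoidal data: $(\PInj,\uplus,\varnothing)$ is symmetric monoidal by the standard disjoint-union structure, and $\varnothing$ is a zero object since the empty partial injection is the only morphism into or out of it. The functor $\uplus$ acts on hom-objects as a map $\PInj(A,C) \times \PInj(B,D) \to \PInj(A \uplus B, C \uplus D)$; because $f_i$ and $g_i$ live over the disjoint summands, a pair $\{(f_i,g_i)\}_I$ is summable in the categorical product (equivalently $\{f_i\}_I$ and $\{g_i\}_I$ are each summable, by Proposition~\ref{prop:SCat*_complete}) iff $\{f_i \uplus g_i\}_I$ is summable, with $\Sigma \{f_i \uplus g_i\}_I \keq (\Sigma \{f_i\}_I) \uplus (\Sigma \{g_i\}_I)$, so this is a $\Sigma$-homomorphism out of a product. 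Finally $\id_A \uplus 0$ and $0 \uplus \id_B$ have the disjoint domains $A$ and $B$ and disjoint images, hence are summable with union $\id_{A \uplus B}$, verifying the decomposition-of-identity condition. Together these confirm $\PInj$ is a $\SCat{s}$-UDC.

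The step I expect to require the most care is the pair of strong axioms. Although the ``same total domain, same union'' argument is morally a triviality --- infinitary disjoint union of partial injections is associative and commutative --- making it rigorous means tracking that summability is genuinely equivalent on both sides of an arbitrary (possibly infinite) repartition and that the pointwise values never clash, which is exactly where the framework needs the full strength of partition-associativity rather than its finitary weakenings. Everything else reduces to elementary manipulations of domains and images of partial injective functions.
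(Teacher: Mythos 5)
Your overall route is the same as the paper's --- verify that each hom-set is a strong \(\Sigma\)-monoid, that composition is \(\Sigma\)-bilinear, and that the disjoint-union monoidal structure with \(\varnothing\) as zero object satisfies the \(\SCat{s}\)-UDC requirements --- and most of your steps are sound; your treatment of \(\uplus\) as a \(\Sigma\)-homomorphism out of the categorical product is in fact more explicit than the paper's. However, there is a flaw at precisely the step you yourself flagged as delicate. The organising biconditional ``\(\{f_i\}_I\) is summable exactly when the relation \(\bigcup_{i \in I} f_i\) is a partial injection'' is false: take \(f_1 = f_2 = f\) for any nonzero partial injection \(f\). The union of the graphs is just \(f\), a perfectly good partial injection, yet the family \(\{f, f\}\) is not summable, since the domains coincide rather than being disjoint. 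Summability is pairwise disjointness of domains and images, and this is \emph{not} a property of the union relation: distinct families with identical unions can differ in summability.

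Because of this, your justification of strong bracketing and strong flattening --- ``their unions agree as relations; hence one side is summable iff the other is'' --- is a non sequitur: agreement of unions does not transfer summability once summability is decoupled from the union. The conclusion you want is true, but it needs the direct argument (which is what the paper gives). Since each \(\fml{x}_j\) is summable, one has \(\dom(\Sigma \fml{x}_j) = \bigcup_{i \in I_j} \dom(f_i)\) and \(\im(\Sigma \fml{x}_j) = \bigcup_{i \in I_j} \im(f_i)\); therefore pairwise disjointness of the domains and images of the sums \(\{\Sigma \fml{x}_j\}_J\), together with the within-subfamily disjointness already guaranteed, is equivalent to pairwise disjointness of the \(f_i\) over all of \(I\). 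This gives both directions of the transfer of summability, and the sums then agree pointwise because both sides have the same graph. With that patch your proof goes through and coincides with the paper's; without it, the central step rests on a false equivalence.
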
 \begin{proof}[Proof. (Sketch)]
  We first check that each hom-set \(\PInj(A,B)\) is a strong \(\Sigma\)-monoid using the definition of \(\Sigma\) given above. 
  The neutral element of the \(\Sigma\)-monoid is the partial function that is undefined on every \(a \in A\) and the singleton axiom holds trivially.
  Strong flattening holds as well: let \(\{\fml{f}_j \in \PInj(A,B)^*\}_J\) be an indexed set of summable families and let \(\fml{f} = \uplus_J \fml{f}_j\); assume that the family \(\{\Sigma \fml{f}_j\}_J\) is summable, this implies that \(\dom(\Sigma \fml{f}_j) \cap \dom(\Sigma \fml{f}_{j'})\) for every \(j \not= j'\) and hence, necessarily, the domains of the elements of \(\fml{f}_j \uplus \fml{f}_{j'}\) are all pair-wise disjoint.
  Since this applies to every pair of indices \(j \not= j'\) in \(J\) and a similar argument holds for the images, it follows that \(\fml{f} = \uplus_J \fml{f}_j\) is summable and it is trivial to check that \(\Sigma \fml{f} = \Sigma \{\Sigma \fml{f}_j\}_J\) so that strong flattening is satisfied.
  The proof of the strong bracketing axiom is dual to this argument.
  Finally, subsummability holds trivially since the summability condition is imposed on every pair of elements from the family.
  
  To show that \(\PInj\) is a \(\SCat{s}\)-enriched category we must check that composition is \(\Sigma\)-bilinear. 
  Assume \(\{f_i\}_I \in \PInj(A,B)^*\) is summable and let \(g \in \PInj(B,C)\); then \(\{g \circ f_i\}_I\) trivially satisfies the condition on disjoint domains.
  That \(\{g \circ f_i\}_I\) also satisfies the condition on disjoint images can be shown by contradiction:
  suppose there is an element \(c \in \im(g \circ f_i) \cap \im(g \circ f_{i'})\) for \(i \not= i'\) then there must be elements \(b \in \im(f_i)\) and \(b' \in \im(f_{i'})\) such that \(g(b) \keq c \keq g(b')\), but \(b \not= b'\) is necessary for \(\{f_i\}_I\) to be summable, so this contradicts \(g\) being injective.
  A similar argument can be used to prove that composition is a \(\Sigma\)-homomorphism on the left as well, making it \(\Sigma\)-bilinear.

  It is straightforward to check that \((\PInj,\uplus,\varnothing)\) is a symmetric monoidal category and its monoidal unit \(\varnothing\) is a zero object.
  Moreover, \(\Sigma \{\id_A \uplus 0, 0 \uplus \id_B\} \keq \id_{A \uplus B}\) holds trivially.
  Thus, \(\PInj\) is a \(\SCat{s}\)-UDC, as claimed.
\end{proof} 

Notice that \(\PInj\) does not have coproducts since the codiagonal map \(\nabla \colon A \uplus A \to A\) cannot be injective because it would need to map both \(\iota_l(a),\iota_r(a) \in A \uplus A\) to \(a \in A\).
And, evidently, \(\uplus\) is not a categorical product; nevertheless, morphisms \(f \colon A \uplus B \to C \uplus D\) in \(\PInj\) may uniquely be characterised by their matrix decomposition thanks to Proposition~\ref{prop:unique_decomposition}.
According to Theorem~\ref{thm:classical_trace}, the category \((\PInj,\uplus,\ex)\) is a partially traced and, furthermore, we can show that it is totally traced.
For an arbitrary morphism \(f \in \PInj(A \uplus U, B \uplus U)\), let \(p_0 = f_\sub{BA}\) and for all \(n \in \Nset\) let \(p_{n+1} = f_\sub{BU} f_\sub{UU}^n f_\sub{UA}\).
It can be shown by induction that no element \(b \in B\) can be both in \(\im(p_n)\) and \(\im(p_m)\) for \(n \not= m\) since this would contradict \(f\) being injective; similarly, no \(a \in A\) can be both in \(\dom(p_n)\) and \(\dom(p_m)\).
Consequently, the family \(\{p_n\}_\Nset\) is always summable and \(\PInj\) is in fact totally traced (see~ Example 2.23 (i) from \cite{Haghverdi} for more details).

Even though \(\Bijection\) itself is not a UDC --- quasi-projections are necessarily partial --- it is a subcategory of \(\PInj\); therefore, every bijection \(f \colon A \uplus B \to C \uplus D\) may be decomposed into a matrix form whose entries are in \(\PInj\).
We may transport the trace of \(\PInj\) to \((\Bijection,\uplus,\widehat{\ex})\) using Proposition~\ref{prop:induced_trace} which defines:
\begin{equation*}
  \widehat{\ex}^U(f) = \begin{cases}
    g &\ifc \ex^{U}(f) \text{ is a bijection} \\
    \undefined &\otherwise.
  \end{cases}
\end{equation*}
If \(f\) denotes a reversible program then the bijection \(\widehat{\ex}^U(f) \colon A \to B\) denotes a program that repeatedly applies \(f\) whenever the output is in \(U\) --- an iterative loop.
Notice that \(\Bijection\) is only partially traced; for instance, if \(U = \Nset\) we may define a bijection so that \(f_\sub{UU}(n) = n+1\) for all \(n \in \Nset\) and, for some \(a \in A\), \(f_\sub{UA}(a) = 0\) so that \(a \not\in \dom(\ex^U(f))\) and \(\ex^U(f)\) is only a partial injection.
The latter case would correspond to an iterative loop that never halts if the input is \(a\).

\paragraph*{Classical probabilistic programs.} On each input \(a \in A\), a probabilistic program may yield an output \(b \in B\) with probability \(p_a(b)\). These probabilities may be arranged in a \(\abs{B} \times \abs{A}\) matrix \(M\) with entries \(m_{ba} = p_a(b)\) so that each of its columns add up to one:
\begin{equation*}
  \sum_{b \in B} m_{ba} = \sum_{b \in B} p_a(b) = 1.
\end{equation*}
These are known as \emph{stochastic} matrices.
If the sum of each of their columns is only required to be bounded by one \(\sum_{b \in B} m_{ba} \leq 1\) we refer to them as \emph{substochastic} matrices.
Evidently, every stochastic matrix is substochastic.

\begin{definition}
  Let \(\SubStoch\) be the category whose objects are finite sets and whose morphisms \(f \in \SubStoch(A,B)\) are substochastic matrices \(\abs{B} \times \abs{A}\).
  Composition corresponds to matrix multiplication and identities correspond to the usual identity matrices.
  Let \(\Stoch\) be the subcategory of \(\SubStoch\) whose morphisms are restricted to be stochastic matrices.
\end{definition}

Similarly to the case with \(\PInj\) and \(\Bijection\); the category we are interested in is \(\Stoch\), but we instead work on the larger category \(\SubStoch\) that happens to be a \(\SCat{s}\)-UDC, as established below.

\begin{proposition}
  Let \((\SubStoch,\oplus,\varnothing)\) be the symmetric monoidal category where \(\oplus\) acts as disjoint union of sets and direct sum of matrices.
  A family of morphisms \(\{f_i\}_I \in \SubStoch(A,B)^*\) is summable iff the standard sum of matrices\footnote{An infinite family of matrices is summable iff it is entry-wise summable according to the notion of absolute convergence on real numbers (see Example~\ref{ex:wSm_Rset}).} yields a substochastic matrix, in which case \(\Sigma \{f_i\}_I\) is defined to be the corresponding result.
  These definitions make \(\SubStoch\) a \(\SCat{s}\)-UDC.
\end{proposition}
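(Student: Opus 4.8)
The plan is to mirror the structure of the preceding proof for $\PInj$, since $\SubStoch$ is a $\SCat{s}$-UDC for essentially the same reason, but with nonnegativity of the real matrix entries playing the role that disjointness of domains and images played for partial injections. The key structural feature I will exploit throughout is that every entry of a substochastic matrix is a nonnegative real number, so that all infinite sums involved are sums of nonnegative reals; these converge unconditionally, the order of summation is irrelevant, and rearrangement theorems apply freely. This is precisely what fails in the $\FdHilb$ counterexample of Example~\ref{ex:FdHilb_counterexample}, where destructive interference (negative entries) broke subsummability and hence vanishing II.

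First I would verify that each hom-set $\SubStoch(A,B)$ is a strong $\Sigma$-monoid. The neutral element is the zero matrix and the singleton axiom is immediate. For the remaining axioms I would reduce each statement to the corresponding entrywise statement about nonnegative series. Subsummability is where nonnegativity is decisive: if $\{f_i\}_I$ is summable then, writing $m_\sub{ba} = \Sigma \{(f_i)_\sub{ba}\}_I$, each column sum of the limit satisfies $\Sigma \{m_\sub{ba}\}_b \leq 1$; for any subset $J \subseteq I$ the partial column sums of $\{f_i\}_J$ are bounded above by those of the full family (removing nonnegative terms only decreases the total), hence they remain finite and $\leq 1$, so $\{f_i\}_J$ is again summable to a substochastic matrix. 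Strong flattening and strong bracketing then follow because, for a doubly-indexed family of nonnegative reals, the iterated sum and the double sum coincide whenever either is finite; the substochastic bound transfers automatically, since it is a statement about the common limit value.

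Next I would check that composition is $\Sigma$-bilinear. Composition is matrix multiplication, so I must show $\Sigma \{g \circ f_i\}_I \keq g \circ (\Sigma \{f_i\}_I)$ and its symmetric counterpart. For finite matrices over nonnegative reals this is just distributivity of the finite matrix product over a convergent nonnegative series, which again holds unconditionally; moreover the product of two substochastic matrices is substochastic, so the resulting family is summable to the expected substochastic matrix.

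Finally I would confirm the remaining UDC data. The triple $(\SubStoch,\oplus,\varnothing)$ is symmetric monoidal with $\oplus$ acting as disjoint union on objects and direct sum on matrices, and the empty set $\varnothing$ yields the empty matrix, which is the unique substochastic matrix into and out of $\varnothing$, making it a zero object. The equation $\Sigma \{\id_A \oplus 0, 0 \oplus \id_B\} \keq \id_{A \oplus B}$ is just the block decomposition of the identity matrix and is immediate. The only genuine subtlety --- and the part I expect to require the most care --- is the interplay between the substochastic constraint and infinite summation in the strong axioms; everything else is the same routine bookkeeping as in the $\PInj$ case. Once all the strong $\Sigma$-monoid axioms are in place, Theorem~\ref{thm:classical_trace} applies, so $(\SubStoch,\oplus,\ex)$ is partially traced by the execution formula, recovering the semantics of probabilistic iterative loops.
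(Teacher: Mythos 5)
Your proposal is correct and follows essentially the same route as the paper's own proof: nonnegativity of entries gives strong bracketing/flattening and subsummability via unconditional convergence of nonnegative series, $\Sigma$-bilinearity of composition follows from distributivity of matrix multiplication over (entrywise) sums together with products of substochastic matrices being substochastic, and the monoidal/zero-object data is routine. Your write-up is in fact somewhat more explicit than the paper's sketch (e.g.\ the bounding argument for subsummability), but there is no substantive difference in approach.
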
 \begin{proof}[Proof. (Sketch)]
  We first check that each hom-set \(\SubStoch(A,B)\) is a strong \(\Sigma\)-monoid using the definition of \(\Sigma\) given above. 
  Its neutral element is the matrix with all zero entries and the singleton axiom is trivially satisfied. Due to the lack of negative entries, both strong bracketing and strong flattening are immediately implied by the fact that a converging series of positive real numbers is convergent no matter the ordering of the sequence; subsummability is also straightforward to check.

  To show that \(\SubStoch\) is a \(\SCat{s}\)-enriched category we must check that composition is \(\Sigma\)-bilinear. 
  If \(\{f_i\}_I \in \SubStoch(A,B)^*\) is summable then \(g \circ \Sigma \{f_i\}_I\) is a substochastic matrix for all \(g \in \SubStoch(B,C)\).
  Given that matrix multiplication distributes over matrix addition, the family \(\{g \circ f_i\}_I\) is summable, with the result being the substochastic matrix \(g \circ \Sigma \{f_i\}_I\).
  A similar argument can be used to prove that composition is a \(\Sigma\)-homomorphism on the left as well, making it \(\Sigma\)-bilinear.

  It is straightforward to check that \((\SubStoch,\oplus,\varnothing)\) is a symmetric monoidal category and its monoidal unit \(\varnothing\) is a zero object.
  Moreover, \(\Sigma \{\id_A \oplus 0, 0 \oplus \id_B\} \keq \id_{A \oplus B}\) holds trivially.
  Thus, \(\SubStoch\) is a \(\SCat{s}\)-UDC, as claimed.
\end{proof}

In the case of \(\SubStoch\) objects \(A \oplus B\) are coproducts; however, \(A \oplus B\) is not a categorical product since the diagonal morphism \(\Delta \colon A \to A \oplus A\) would need to be \[\Delta = \vector{\id_A}{\id_A}\] which is not a substochastic matrix --- each of its columns add up to \(2\).

According to Theorem~\ref{thm:classical_trace}, the category \((\SubStoch,\oplus,\ex)\) is partially traced.
Moreover, it is straightforward to check that, whenever \(U\) is a singleton set, \(\ex^U(f)\) is well-defined (due to the convergence criteria of the geometric series); then, due to objects being finite sets, we may trace out each element of any arbitrary set \(U\) one by one and, after to a finite number of applications of vanishing II:
\begin{equation*}
  \ex^U(f) \keq \ex^{\{u\}}(\ex^{\{u'\}}(\ldots(f)))
\end{equation*}
so that we conclude \((\SubStoch,\oplus,\ex)\) is totally traced.
Using Proposition~\ref{prop:induced_trace} we can induce a partial trace on \(\Stoch\) using \(\ex\) from \(\SubStoch\).
Notice that the resulting partially traced category \((\Stoch,\oplus,\widehat{\ex})\) is not total since
\begin{equation*}
  \widehat{\ex}^u \begin{pmatrix}
    0 & 0 \\ 1 & 1
  \end{pmatrix} = 0 + \sum^\infty_{n = 0} 0 \cdot (1)^n \cdot 1 = 0.
\end{equation*}

\paragraph*{Quantum programs with classical control flow.} The category \(\CPTP\) was introduced in Section~\ref{sec:CPTR} to capture both coherent quantum operations and non-coherent ones (\eg{} measurements) in a single framework.
Along it, \(\CPTR\) was introduced to take up a similar role \(\SubStoch\) plays with respect to \(\Stoch\).
The objects in \(\CPTR\) are finite-dimensional \Cstar-algebras whose density operators represent \emph{mixed} quantum states: probability distributions of \emph{pure} quantum states.
It was discussed in Section~\ref{sec:CPTR} that for any two \Cstar-algebras \(A\) and \(B\), if \((\rho,\rho') \in A \oplus B\) is a density operator, then \((\rho,\rho')\) may be interpreted as probability distributions comprised of outcome \(\tfrac{\rho}{\tr(\rho)}\) with probability \(\tr(\rho)\) and outcome \(\tfrac{\rho'}{\tr(\rho')}\) with probability \(\tr(\rho')\).
Consequently, a morphism \(f \in \CPTR(A, B \oplus C)\) may be understood to apply either \(f_\sub{BA} \colon A \to B\) or \(f_\sub{CA} \colon A \to C\) with some probability.
Intuitively, it follows that the execution formula in \(\CPTR\) will correspond to the sum of probabilistic paths and, to a certain extent, it will be similar to the case of \(\SubStoch\).

\begin{proposition}
  A family of CPTR maps \(\{f_i\}_I \in \CPTR(A,B)^*\) is summable iff its pointwise sum\footnote{A definition in terms of pointwise sums requires that the codomain \(B\) has a notion of summability for infinite families. For finite families, addition is given by the standard addition in \(B\) and, since \(B\) is a finite-dimensional \Cstar-algebra, it can be regarded as the direct sum of a finite collection of finite-dimensional Hilbert spaces. Thus, the notion of infinite summability assigned to \(B\) corresponds to the usual one derived from operator norm convergence in \(\FdHilb\) (see Proposition~\ref{prop:SOT_FdHilb} for further details).} yields a CPTR map, in which case \(\Sigma \{f_i\}_I\) is defined to be the corresponding result.
  Then, the monoidal category \((\CPTR,\oplus,\{0\})\) is a \(\SCat{s}\)-UDC.
\end{proposition}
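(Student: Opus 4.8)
The plan is to follow the same three-step template used for \SubStoch, replacing the nonnegativity of substochastic entries with the positivity of completely positive maps as the structural feature that rules out destructive cancellation. First I would verify that each hom-set $\CPTR(A,B)$, equipped with the pointwise-sum $\Sigma$ of the statement, is a strong $\Sigma$-monoid. The neutral element is the zero map, which is trivially completely positive and trace-reducing, and the singleton axiom is immediate. The crucial observation is that, because finite-dimensional \Cstar-algebras decompose as direct sums $\oplus_j B(H_j)$ and completely positive maps form a convex cone closed under addition, for any positive $\rho$ the net of finite partial sums $\{\sum_{i \in F} f_i(\rho)\}_F$ is monotone increasing in the positive-semidefinite (L\"owner) order and bounded in trace by $\tr(\rho)$ whenever $\{f_i\}$ is summable. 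Since all the spaces involved are finite-dimensional and the maps are positive, such a net converges unconditionally, so its limit is independent of the order of summation; this is precisely the ingredient that delivers strong bracketing, strong flattening and subsummability, exactly as the absence of negative entries does for \SubStoch.

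Next I would check that \CPTR is $\SCat{s}$-enriched, i.e. that composition is $\Sigma$-bilinear. Composition of completely positive maps is completely positive, and composition preserves the trace-reducing condition since $\tr(g(f(\rho))) \leq \tr(f(\rho)) \leq \tr(\rho)$; thus $g \circ f$ is again a \CPTR map. Because composition is linear in each argument and the relevant convergence is just convergence in the finite-dimensional space of linear maps, if $\{f_i\}_I$ is summable with sum $f$ then $\{g \circ f_i\}_I$ is summable with sum $g \circ f$, and symmetrically on the left, making $- \circ -$ a $\Sigma$-homomorphism in each variable.

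Finally, the unique-decomposition structure is largely inherited. Definition~\ref{def:CPTR} already records that $(\CPTR,\oplus,\{0\})$ is a symmetric monoidal category, and the zero-dimensional \Cstar-algebra $Z = \{0\}$ is a zero object. The remaining requirement $\Sigma\{\id_A \oplus 0, 0 \oplus \id_B\} \keq \id_{A \oplus B}$ holds trivially, since the left-hand family sums pointwise to the identity. Assembling these three pieces shows that $(\CPTR,\oplus,\{0\})$ is a $\SCat{s}$-UDC and hence, by Theorem~\ref{thm:classical_trace}, partially traced with respect to the execution formula.

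The main obstacle I anticipate is making the strong bracketing and strong flattening arguments fully rigorous, since these are exactly the axioms that fail in the $\SCat{w}$ setting (cf. Example~\ref{ex:FdHilb_counterexample}) once additive inverses are present. The work lies in confirming that the positivity of completely positive maps genuinely forces order-independent convergence of the partial-sum nets, so that repartitioning a summable family never alters summability nor the value of the sum. Once this monotonicity-plus-boundedness argument is secured — mirroring the corresponding step for \SubStoch — the rest of the verification is routine.
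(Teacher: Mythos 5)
Your proposal is correct and follows essentially the same route as the paper's own proof: both mirror the \(\SubStoch\) argument, using positivity of completely positive maps to rule out cancellation (hence to get order-independent convergence yielding strong bracketing, strong flattening and subsummability), then noting that composition distributes over pointwise addition and that the monoidal/zero-object requirements hold trivially. Your explicit monotone-net argument in the L\"owner order is simply a more rigorous rendering of the step the paper compresses into the remark that positivity plus associativity and commutativity of addition suffice.
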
 \begin{proof}[Proof. (Sketch)]
  We first check that each hom-set \(\CPTR(A,B)\) is a strong \(\Sigma\)-monoid using the definition of \(\Sigma\) given above. 
  Its neutral element is the map that sends any \(\rho \in A\) to the zero operator in \(B\); the singleton axiom is trivially satisfied. 
  Notice that the pointwise addition of CPTR maps does not have additive inverses: considering that morphisms \(f,g \in \CPTR(A,B)\) send positive elements \(\rho \in A\) to positive elements in \(B\), the pointwise addition of CPTR maps satisfies 
  \begin{equation*}
    (f+g)(\rho) = f(\rho) + g(\rho) \geq 0.
  \end{equation*}
  Then, it follows that strong flattening and strong bracketing are implied by commutativity and associativity of addition.
  Subsummability is satisfied as well since if \(\Sigma \{f_i\}_I\) is trace-reducing, summing up fewer maps \(J \subseteq I\) means that \(\Sigma \{f_j\}_J\) is guaranteed to be trace-reducing.

  It is immediate that composition distributes over pointwise addition, so that composition is \(\Sigma\)-bilinear.
  It is evident that \(\{0\}\) is a zero object and \(\Sigma \{\id_A \oplus 0, 0 \oplus \id_B\} \keq \id_{A \uplus B}\) holds:
  \begin{equation*}
    \Sigma \{\id_A \oplus 0, 0 \oplus \id_B\}(a,b) = (\id_A \oplus 0)(a,b) + (0 \oplus \id_B)(a,b) = (a,0) + (0,b) = (a,b).
  \end{equation*}
  Thus, \(\CPTR\) is a \(\SCat{s}\)-UDC, as claimed.
\end{proof}

Objects \(A \oplus B\) in \(\CPTR\) are coproducts; however, \(A \oplus B\) is not a categorical product since the diagonal morphism \(\Delta \colon A \to A \oplus A\) is not trace-reducing.
According to Theorem~\ref{thm:classical_trace}, the category \((\CPTR,\oplus,\ex)\) is partially traced.
In fact, \(\CPTR\) is totally traced, as previously shown by Selinger in~\cite{SelingerPL} (where \(\CPTR\) is denoted \(\cat{Q}\)).
The trace given in~\cite{SelingerPL} is not exactly \(\ex\), in the sense that it is not defined in terms of \(\Sigma\)-monoids, but rather in terms of recursive applications of binary addition; nevertheless, it is apparent that the two notions of trace coincide in their result.
Alternatively, it may be shown that \((\CPTR,\oplus,\ex)\) is totally traced employing a similar argument to the one used for \(\SubStoch\), relying on the fact that objects in \(\CPTR\) are finite-dimensional.
The proof is cumbersome due to the need to manage CPTR maps, but the intuition is the same.

We may transport the trace in \(\CPTR\) to \((\CPTP,\oplus,\widehat{\ex})\) using Proposition~\ref{prop:induced_trace} which defines:
\begin{equation*}
  \widehat{\ex}^U(f) = \begin{cases}
    g &\ifc \ex^{U}(f) \text{ is trace-preserving} \\
    \undefined &\otherwise.
  \end{cases}
\end{equation*}
As in the case of \(\Stoch\), this is not a totally traced category.

The fact that \(\CPTR\) is \(\SCat{s}\)-enriched tells us that there is no interference between the paths being added up in the execution formula; thus, this is a clear example of ``quantum data, classical control".
In contrast, in a quantum iterative loop we would expect that paths may cancel out with each other, allowing for additive inverses of morphisms in the category, preventing it from being \(\SCat{s}\)-enriched.
Such is the case of \(\Contraction\) as will be shown in what remains of this chapter.

\paragraph*{Some traced monoidal functors.} There is an obvious faithful functor \(M \colon \FinPInj \to \SubStoch\) where \(\FinPInj\) is the subcategory of \(\PInj\) where objects are restricted to finite sets. The functor \(M\) acts as the identity on objects and maps partial injections \(f \colon A \to B\) to the matrix \(M(f)\) of entries:
\begin{equation*}
  m_{ba}  = \begin{cases}
    1 &\ifc b = f(a) \\
    0 &\otherwise.
  \end{cases}
\end{equation*}
It is trivial to check that \(\sum_{b \in B} m_{ba} = 1\) if \(a \in \dom(f)\) whereas \(\sum_{b \in B} m_{ba} = 0\) otherwise, so that the matrix \(M(f)\) is indeed substochastic. 
It is straightforward to check that \(M\) is a strict monoidal functor and, moreover, for each family \(\fml{f} \in \PInj(A,B)^*\) we have that:
\begin{equation*}
  \Sigma \fml{f} \keq f \implies \Sigma M\fml{f} \keq M(f)
\end{equation*}
so that the action of \(M\) on hom-sets is a \(\Sigma\)-homomorphism.
Then, it is immediate to show that \(M \colon (\FinPInj,\uplus,\ex) \to (\SubStoch,\uplus,\ex)\) is a traced monoidal functor.
Intuitively, this corresponds to the fact that reversible (hence, deterministic) programs are a subclass of probabilistic programs.

On a similar note, there is a faithful functor \(P \colon \SubStoch \to \CPTR\) mapping finite sets \(A\) to \Cstar-algebras \(\oplus_{a \in A}\, B(\Cset)\). The details of how \(P\) acts on morphisms are rather verbose to give explicitly, but the concept is simple: an element \((p_a)_A \in \oplus_{a \in A}\, B(\Cset)\) that is a positive operator can be represented as a vector of positive real numbers and, if it has trace one (\ie{} if it is a density operator) then \(\sum_{a \in A} p_a = 1\). Consequently, the elements in the \Cstar-algebra \(P(A)\) correspond precisely to probability distributions over \(A\) and it is straightforward to see how multiplying a substochastic matrix \(f \in \SubStoch(A,B)\) with the vector \((p_a)_A \in \Rset^{\abs{A}}\) would result in a sub-probability\footnote{In a sub-probability distribution the probabilities are only required to sum up to a value smaller than or equal to one.} distribution over \(B\) and, consequently, a positive operator in \(P(B)\) with trace at most one.
Following this intuition, it can be shown that substochastic matrices \(A \to B\) induce CPTR maps \(P(A) \to P(B)\).
It is immediate from the definition of \(P\) on objects that \(P\) is a strict monoidal functor. Since summability in \(\CPTR\) is defined pointwise, it is straightforward to show that the action of \(P\) on hom-sets is a \(\Sigma\)-homomorphism.
Consequently, it follows that \(P \colon (\SubStoch,\uplus,\ex) \to (\CPTR,\oplus,\ex)\) is a traced monoidal functor.
Intuitively, this corresponds to the fact that classical probabilistic programs are a subclass of quantum programs with classical control flow.

\subsection{Towards an execution formula with additive inverses}
\label{sec:towards}

The goal of the following sections is to prove that \((\FdContraction,\oplus,\ex)\) is a totally traced category.
Lemma~\ref{lem:all_but_vII} established that the execution formula on every \(\SCat{*}\)-UDC automatically satisfies all axioms of partially traced categories but vanishing II.
The \(\SCat{g}\)-UDC \((\FdHilb,\oplus,\{0\})\) was shown \emph{not} to be a partially traced category with respect to the execution formula, due to vanishing II failing.
The importance of vanishing II is made apparent in the discussion from the previous section where, for instance, \(\SubStoch\) was shown to be totally traced via an induction argument using vanishing II.
In Section~\ref{sec:UDC_limit} sufficient conditions for the execution formula to satisfy vanishing II are presented.
But first, we shall explain why the proof strategy from Theorem~\ref{thm:classical_trace} --- which establishes that \(\SCat{s}\)-UDCs satisfy vanishing II and are, hence, partially traced --- cannot be used in the general case where additive inverses are allowed.
The obvious answer is that, in general, weak \(\Sigma\)-monoids do not satisfy strong flattening and strong bracketing, which are at the core of the proof of Theorem~\ref{thm:classical_trace}.
In essence, the proof made use of the fact that \(\ex^U(\ex^V(f))\) and \(\ex^{U \oplus V}(f)\) add up the same morphisms, but they do so in different arrangements.
Unfortunately, it turns out that there are examples of morphisms \(f\) in \(\FdContraction\) where both \(\ex^U(\ex^V(f))\) and \(\ex^{U \oplus V}(f)\) are well-defined and coincide but the `flattened' family is not summable and, hence, a proof strategy via the flattening and bracketing axioms will not work.
Such an example is given below.

In \(\FdContraction\), a family of morphisms \(A \to B\) will be considered summable iff the sum of their operator norms converges (see Proposition~\ref{prop:Hilb_hom-convergence} for more details); in the particular case of morphisms \(\Cset \to \Cset\) this corresponds to absolute convergence.
Let \(f \colon \Cset \oplus \Cset \oplus \Cset \to \Cset \oplus \Cset \oplus \Cset\) be the following contraction:\footnote{An easy way to check that this is indeed a contraction is to realise that it corresponds to the lower right \(3 \times 3\) block of the matrix \(H \otimes H\) where \(H\) is the unitary matrix: \[H = \frac{1}{\sqrt{2}}\begin{pmatrix}
  1 & 1 \\ 1 & -1
\end{pmatrix}.\]}
\begin{equation*}
  f = \begin{pmatrix}
    f_\sub{00} & f_\sub{01} & f_\sub{02} \\
    f_\sub{10} & f_\sub{11} & f_\sub{12} \\
    f_\sub{20} & f_\sub{21} & f_\sub{22}
  \end{pmatrix}
  = \frac{1}{2}\begin{pmatrix}
    -1 & 1 & -1 \\
    1 & -1 & -1 \\
    -1 & -1 & 1 
  \end{pmatrix}.
\end{equation*}
Multiplying the lower right \(2\!\times\!2\) block of \(f\) with itself results in \(\tfrac{1}{2} \,\id\), so we can easily calculate \(\ex^{\Cset \oplus \Cset}(f)\):
\begin{align*}
  \ex^{\Cset \oplus \Cset}(f) &= f_\sub{00} + \sum_{n=0}^\infty \begin{pmatrix}
    f_\sub{01} & f_\sub{02}
  \end{pmatrix} \begin{pmatrix}
    f_\sub{11} & f_\sub{12} \\
    f_\sub{21} & f_\sub{22} \\
  \end{pmatrix}^n \begin{pmatrix}
    f_\sub{10} \\ f_\sub{20}
  \end{pmatrix} \\
    &= -\frac{1}{2} + \frac{1}{2} \sum_{n\text{ even}} \left( \frac{1}{2} \right)^{\tfrac{n}{2}} + \frac{1}{4} \sum_{n\text{ odd}} \left( \frac{1}{2} \right)^{\tfrac{n-1}{2}} \\
    &= -\frac{1}{2} + \frac{1}{2} \sum_{n=0}^\infty \left( \frac{1}{2} \right)^n + \frac{1}{4} \sum_{n=0}^\infty \left( \frac{1}{2} \right)^n = 1.
\end{align*}  
On the other hand, we have that:
\begin{equation*}
  \ex^\Cset(f) = \frac{1}{2}\begin{pmatrix}
    -1 & 1 \\
    1 & -1
  \end{pmatrix} + \frac{1}{4} \begin{pmatrix} -1 \\ -1 \end{pmatrix} \left( \sum_{n=0}^\infty \frac{1}{2^n} \right) \begin{pmatrix} -1 & -1 \end{pmatrix} = \begin{pmatrix}
    0 & 1 \\
    1 & 0
  \end{pmatrix}
\end{equation*}
so that \(\ex^\Cset(\ex^\Cset(f)) = 0 + \sum_{n=0}^\infty 1 (0)^n 1 = 1\).
Thus, \(\ex^\Cset(\ex^\Cset(f)) \keq \ex^{\Cset \oplus \Cset}(f)\) as required by vanishing II.
In contrast, their `flattened' family contains all \(\Cset \to \Cset\) morphisms arising from chains of the form:
\begin{equation*}
  f_\sub{0w_n} \circ \ldots \circ f_\sub{w_2w_1} \circ f_\sub{w_10}
\end{equation*}
where each \(w_i\) is either subscript \(1\) or \(2\) and \(n\) is an arbitrary positive integer.
It is straightforward to check that, in general, there are \(2^{k-1}\) chains of length \(k\) and each chain of length \(k\) has absolute value \(\tfrac{1}{2^k}\), consequently, the sum of absolute values of all elements in the flattened family is \[\sum_{k=1}^\infty \frac{1}{2^k} \cdot 2^{k-1} = \sum_{k=1}^\infty \frac{1}{2}\] and, hence, the sum is not absolutely convergent and the flattened family is not summable.
Thus, such a morphism \(f\) provides an example where the correctness of vanishing II cannot be shown by means of the flattening and bracketing axioms.
Instead, the following subsections build upon the definition of the kernel-image trace and the notion of limit from Hausdorff commutative monoids to provide sufficient conditions for vanishing II to be satisfied in a \(\SCat{w}\)-UDC.

\subsection{\(\SCat{g}\)-UDCs and their kernel-image trace}
\label{sec:SCatg_UDC}

Section~\ref{sec:kernel_image} discussed the kernel-image trace from~\cite{Malherbe}: a partial trace available on every \(\Ab\)-enriched category with finite biproducts.
The proof that this is a categorical trace appears in Proposition 3.17 from~\cite{Malherbe}; in it, the existence of biproducts is only required so that morphisms of type \(A \oplus B \to C \oplus D\) may be uniquely characterised by their matrix decomposition.
However, as discussed in Proposition~\ref{prop:unique_decomposition}, the structure of a \(\SCat{*}\)-UDC is sufficient for this unique decomposition to exist.
Moreover, every \(\Sigma\)-group \((X,\Sigma)\) can be seen as an abelian group \((X,\Sigma \{-,-\})\), so we may regard any \(\SCat{g}\)-enriched category as an \(\Ab\)-enriched one by simply forgetting the action of \(\Sigma\) on infinite families.
Therefore, it is to be expected that any \(\SCat{g}\)-UDC is partially traced using the kernel-image trace.
The proof of this claim is sketched below, it follows the same strategy used by the authors of~\cite{Malherbe} to prove that additive categories are partially traced.

\begin{notation}
  To improve readability, given a \(\Sigma\)-group \((X,\Sigma)\) and elements \(x,y \in X\), we use the shorthand \(x+y\) and \(x-y\) to refer to \(\Sigma \{x,y\}\) and \(\Sigma \{x,-y\}\) respectively.
\end{notation}

Recall that, for an arbitrary morphism \(f \colon A \oplus U \to B \oplus U\) in \(\C\), we write \((k,i) \Vdash \Tr^U(f)\) iff there are morphisms \(i \colon A \to U\) and \(k \colon U \to B\) in \(\C\) such that the diagram
\begin{equation} \label{eq:ki_trace}
  \begin{tikzcd} [row sep=large]
    A && U \\
    & {} & U && B
    \arrow["i", dashed, from=1-1, to=1-3]
    \arrow["{f_\dsub{UA}}"', from=1-1, to=2-3]
    \arrow["{f_\dsub{BU}}", from=1-3, to=2-5]
    \arrow["k", dashed, from=2-3, to=2-5]
    \arrow["{\id - f_\dsub{UU}}"', from=1-3, to=2-3]
  \end{tikzcd}
\end{equation}
commutes.

\begin{proposition} \label{prop:SCat_ki_trace}
  Let \(\C\) be a \(\SCat{g}\)-UDC. Then, the kernel-image trace (as defined in Proposition~\ref{prop:Ab_ki_trace}) makes \(\C\) a partially traced category.
\end{proposition}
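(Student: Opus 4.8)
The plan is to reduce the statement to the additive case already handled in Proposition~\ref{prop:Ab_ki_trace}, by observing that the proof of that result (due to Malherbe \etal{}~\cite{Malherbe}) never invokes the full universal property of biproducts: it relies only on the matrix calculus for morphisms between binary and ternary direct sums, together with the fact that each hom-set is an abelian group on which composition is bilinear. First I would record that a $\SCat{g}$-UDC supplies exactly this data. By Remark~\ref{rmk:SCat_inverse} every hom-object $\C(A,B)$ is an abelian group under $x + y := \Sigma \{x,y\}$, and $\Sigma$-bilinearity of composition restricts to ordinary bilinearity, so $\C$ may be regarded as $\Ab$-enriched. In the absence of biproducts, the role played by the matrix decomposition is instead supplied by Proposition~\ref{prop:unique_decomposition} (unique components $f_\sub{BA}, f_\sub{BU}, f_\sub{UA}, f_\sub{UU}$), by Proposition~\ref{prop:matrix_composition} (composition is matrix multiplication), and by the identities~\eqref{eq:UDC} relating the canonical quasi-injections and quasi-projections. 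Since the kernel-image trace of Definition~\ref{def:ki_trace} is purely finitary --- it involves no infinite sums --- this finite matrix calculus is all that is required.

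With the reduction in place, I would first check that the trace is well-defined wherever it is defined, i.e.\ that the value $f_\sub{BA} + k \circ f_\sub{UA} = f_\sub{BA} + f_\sub{BU} \circ i$ does not depend on the chosen witnesses $(k,i) \Vdash \Tr^U(f)$. This is immediate from the two commuting conditions of diagram~\eqref{eq:ki_trace}: writing $h = \id - f_\sub{UU}$, any witness satisfies $f_\sub{UA} = h \circ i$ and $f_\sub{BU} = k \circ h$, whence
\[
  k \circ f_\sub{UA} = k \circ h \circ i = f_\sub{BU} \circ i,
\]
and repeating the computation with a second witness $(k',i')$ gives $k \circ f_\sub{UA} = k' \circ f_\sub{UA}$ and $f_\sub{BU} \circ i = f_\sub{BU} \circ i'$, so the two expressions for $\Tr^U(f)$ coincide and are witness-independent.

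Next I would verify the six axioms of a partially traced category (Definition~\ref{def:partially_traced_category}) one at a time, in each case expanding both sides into components via Proposition~\ref{prop:unique_decomposition}, evaluating composites and monoidal products with Proposition~\ref{prop:matrix_composition} and Remark~\ref{rmk:sUDC_oplus}, and reducing to the corresponding algebraic identity checked by Malherbe \etal{}. Naturality, dinaturality and superposing follow by transporting witnesses through the surrounding morphisms; vanishing~I is the degenerate case $U = Z$, where $f_\sub{ZA}$, $f_\sub{BZ}$ and $f_\sub{ZZ}$ all vanish because $Z$ is a zero object, so that $h = \id_Z$ is trivially witnessed and $\Tr^Z(f) = \rho \circ f \circ \rho^{-1}$; and yanking uses the matrix form of the symmetry from Remark~\ref{rmk:sUDC_symmetry}. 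Because none of these manipulations appeals to anything beyond the finitary matrix calculus and the abelian-group structure of the hom-sets, each argument transfers essentially verbatim from the additive setting.

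I expect the main obstacle to be vanishing~II: the compatibility of tracing out $V$ and then $U$ with tracing out $U \oplus V$ in one step. Here one must construct witnesses for $\Tr^{U \oplus V}(f)$ out of the witnesses for the iterated trace $\Tr^U(\Tr^V(f))$ (and conversely), and confirm that the kernel-image conditions of diagram~\eqref{eq:ki_trace} are preserved under this recombination. This is the most delicate part of the Malherbe \etal{} argument, and it is precisely where I would take care to check that only the finitary identities~\eqref{eq:UDC} and Proposition~\ref{prop:matrix_composition} are used, never an infinite biproduct, so that the proof remains valid in a $\SCat{g}$-UDC which merely possesses unique decompositions rather than genuine biproducts. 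Once this is confirmed, all axioms hold and $\C$ is partially traced by the kernel-image trace, as claimed.
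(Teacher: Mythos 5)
Your proposal is correct and follows essentially the same route as the paper: regard the $\Sigma$-group enrichment as $\Ab$-enrichment, replace biproducts by the unique decomposition and matrix-composition results of a $\SCat{*}$-UDC, check witness-independence of the trace exactly as you do, and then transfer Malherbe \etal{}'s axiom-by-axiom argument, with vanishing~II handled by recombining witnesses via the finitary matrix calculus. The only difference is one of detail, not of strategy: the paper's sketch records the explicit transported witnesses (\eg{} $(kg,j)$ with $j = f_\sub{U'A} + f_\sub{U'U'} \circ i$ for dinaturality, and the component formulas for vanishing~II), which your plan leaves implicit.
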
 \begin{proof}[Proof. (Sketch)]
  First, let's check that the kernel-image trace \(\Tr^{U}\) is well-defined, \ie{} it does not depend on the choice of morphisms \(k\) and \(i\). Assume \((k,i) \Vdash \Tr^U(f)\) and \((k',i') \Vdash \Tr^{U}(f)\) then,
  \begin{equation*}
    k \circ f_\sub{UA} =  k \circ (\id\!-\!f_\sub{UU}) \circ i' =  f_\sub{BU} \circ i' = k' \circ f_\sub{UA}
  \end{equation*}
  and, similarly, \(f_\sub{BU} \circ i = f_\sub{BU} \circ i'\), so \[\Tr^{U}(f) = f_\sub{BA} + f_\sub{BU} \circ i = f_\sub{BA} + k \circ f_\sub{UA}\] is well-defined.
  The proof of each axiom of partially traced categories is sketched below; for further details see Proposition 3.17 from~\cite{Malherbe}.
  \begin{itemize}
    \item \emph{Naturality.} If \((k,i) \Vdash \Tr^U(f)\), then \((hk,ig) \Vdash \Tr^U((h \oplus \id) f (g \oplus \id))\) follows trivially, and \(\Tr^U((h \oplus \id) f (g \oplus \id)) = h \circ \Tr^U(f) \circ g\) due to composition being \(\Sigma\)-bilinear.
    \item \emph{Dinaturality.} If \((k,i) \Vdash \Tr^U((\id \oplus g) f)\), then \((kg,j) \Vdash \Tr^{U'}(f (\id \oplus g))\) where \(j = f_\sub{U'A} + f_\sub{U'U'} \circ i\); this follows from the fact that \(i = gj\) thanks to the left triangle of diagram~\eqref{eq:ki_trace} commuting. That \(\Tr^U((\id \oplus g) f) = \Tr^{U'}(f (\id \oplus g))\) follows trivially from the definition of \(\Tr^U\), using the witnesses \(k\) and \(kg\) respectively. The opposite direction of implication can be proven using a similar argument.
    \item \emph{Superposing.} If \((k,i) \Vdash \Tr^U(f)\), then it is immediate from the matrix decomposition of \(g \oplus f\) (see Remark~\ref{rmk:sUDC_oplus}) that \((\iota k, i \pi) \Vdash \Tr^U(g \oplus f)\) and \(\Tr^U(g \oplus f) = g \oplus \Tr^U(f)\).
    \item \emph{Yanking.} The axiom follows trivially from the matrix decomposition of the symmetry \(\sigma\) (see Remark~\ref{rmk:sUDC_symmetry}), with \((\id,\id) \Vdash \Tr^U(\sigma)\).
    \item \emph{Vanishing I.} For every morphism \(f \colon A \oplus Z \to B \oplus Z\) each of its components \(f_\sub{ZA}\), \(f_\sub{ZZ}\) and \(f_\sub{BZ}\) are zero morphisms due to \(Z\) being a zero object. Then, \((0,0) \Vdash \Tr^Z(f)\) and \(\Tr^Z(f) = f_\sub{BA} + 0\). Moreover, \(f_\sub{BA} = \pi_\sub{B}  f  \iota_A = \rho f \rho^{-1}\) due to Remark~\ref{rmk:sUDC_unitors}; therefore, \(\Tr^Z(f) = \rho f \rho^{-1}\).
    \item \emph{Vanishing II.} Let \(f \colon A \oplus U \oplus V \to A \oplus U \oplus V\) and assume \((k,i) \Vdash \Tr^{V}(f)\); vanishing II follows from uniqueness of the matrix decomposition along with \(\SCat{g}\)-enrichment: if \((i',k') \Vdash \Tr^U(\Tr^{V}(f))\) then \((i'',k'') \Vdash \Tr^{U \oplus V}(f)\) where \(i''_\sub{UA} = i'\), \(k''_\sub{BU} = k'\), \(i''_\sub{VA} = i_\sub{VA} + i_\sub{VU} i'\) and \(k''_\sub{BV} = k_\sub{BV} + k_\sub{UV} k'\). Similarly, if \((i'',k'') \Vdash \Tr^{U \oplus V}(f)\) then \((i''_\sub{UA},k''_\sub{BU}) \Vdash \Tr^U(\Tr^{V}(f))\) and, in both cases, \(\Tr^U(\Tr^{V}(f)) = \Tr^{U \oplus V}(f)\) follows from algebraic manipulation using the \(\SCat{g}\)-enrichment. \qedhere
  \end{itemize}
\end{proof}

The left triangle of diagram~\eqref{eq:ki_trace} imposes that \((k,i) \Vdash \Tr^U(f)\) implies \(f_\sub{UA} = (\id\!-\!f_\sub{UU}) \circ i\). Thanks to the \(\SCat{g}\)-enrichment, composition distributes over addition, and additive inverses are available for every morphism. Then, the previous equation is equivalent to \(i = f_\sub{UA} + f_\sub{UU} \circ i\) and we may expand this expression recursively, obtaining:
\begin{equation} \label{eq:recursive_i}
  i = f_\sub{UA} + f_\sub{UU} \circ (f_\sub{UA} + f_\sub{UU} \circ i) = \ldots
\end{equation}
Thus, for any arbitrary \(n \in \Nset\) we have that \((k,i) \Vdash \Tr^U(f)\) implies:
\begin{align*}
  \sum_{j=0}^n f_\sub{UU}^j f_\sub{UA} = i - f_\sub{UU}^{n+1} \circ i \\
  \sum_{j=0}^n f_\sub{BU} f_\sub{UU}^j = k - k \circ f_\sub{UU}^{n+1}
\end{align*}
which, when \(n \to \infty\), resembles the infinite sum from the execution formula~\eqref{eq:ex_formula_informal}.
This hints at a strategy to prove the validity of the execution formula as a categorical trace: use the kernel-image trace as a stepping stone, and find out under which circumstances the above equation can be extended to an infinite sum.
However, as \(n \to \infty\) these equations may become ill-defined unless we can show that \(f_\sub{UU}^n \circ i\) tends to zero.
With this situation in mind, Chapter~\ref{chap:Sigma} gave a brief introduction to topological monoids, where limits may be taken, and Section~\ref{sec:HausCMon->SCatft} established the relationship between these and \(\Sigma\)-monoids.
Continuing this line of work, the following subsection presents a subclass of \(\SCat{g}\)-UDCs whose hom-sets are endowed with a topology, enabling us to use nets and their limits to discuss the convergence of infinite sums.

\subsection{A limit condition}
\label{sec:UDC_limit}

This section introduces the notion of \emph{hom-convergence UDC} which \(\Hilb\) is an example of.
In Section~\ref{sec:ex_FdContraction}, the special properties of these UDCs will be used to prove that \(\FdContraction\) is a totally traced category with respect to the execution formula.
Conceptually, the goal is to study \(\SCat{g}\)-UDCs whose hom-sets are Hausdorff abelian groups, so that the notion of limit is readily available.
The first idea that comes to mind is to define \(\HAG\)-UDCs; unfortunately, the category \(\HAG\) of Hausdorff abelian groups does not have tensor products and such an enrichment would be ill-defined (see Remark~\ref{rmk:HausCMon-enriched}).
The alternative presented below is to consider \(\SCat{g}\)-UDCs whose hom-sets are identified with Hausdorff abelian groups and whose composition is continuous.
To formalise this notion, we may use the faithful functor \(G \colon \HAG \to \SCat{g}\) that maps each \((X,\tau,+) \in \HAG\) to \((X,\Sigma) \in \SCat{g}\), where \(\Sigma\) is the extended group operation induced by the topology (see Definition~\ref{def:HausCMon_ft}).

\begin{definition}
  Let \(\C\) be a \(\SCat{g}\)-UDC.
  Let \(\Phi\) be a collection containing, a \(\Sigma\)-isomorphism\footnote{A \(\Sigma\)-isomorphism is simply an isomorphism in a \(\SCat{*}\) category, \ie{} a \(\Sigma\)-homomorphism with an inverse \(\Sigma\)-homomorphism.} \(\varphi \colon \C(A,B) \to G(X_{A,B})\) for each hom-object \(\C(A,B) \in \SCat{g}\), where \(X_{A,B}\) is some Hausdorff abelian group.
  The pair \((\C,\Phi)\) is a \gls{hom-convergence UDC} if composition is continuous in each variable; more precisely: for each triple of objects \(A,B,C \in \C\) and morphisms \(f \in \C(A,B)\) and \(g \in \C(B,C)\), the \(\Sigma\)-homomorphisms
  \begin{align*}
    \varphi (- \circ f) \varphi^{-1} &\colon G(X_{B,C}) \to G(X_{A,C}) \\
    \varphi (g \circ -) \varphi^{-1} &\colon G(X_{A,B}) \to G(X_{A,C})
  \end{align*}
  are both in the image of \(G \colon \HAG \to \SCat{g}\).
\end{definition}

Recall that composition in a \(\SCat{*}\)-enriched category is a \(\Sigma\)-homomorphism \(\C(B,C) \otimes \C(A,B) \to \C(A,C)\) and, hence, by the definition of the tensor product in \(\SCat{*}\) categories (see Section~\ref{sec:SCat_tensor}), both \(- \circ f\) and \(g \circ -\) are \(\Sigma\)-homomorphisms.
Moreover, \(G\) acts as the identity on morphisms so, if as required by the definition of hom-convergence UDCs there is some continuous function \(\hat{f}\) satisfying that \(G(\hat{f}) = \varphi (- \circ f) \varphi^{-1}\), then it is justified to say that \(- \circ f\) is continuous.
In the framework of hom-convergence UDCs we may recover the notion of limits from topology.

\begin{notation} \label{not:UDC_limit}
  Recall that \(G \colon \HAG \to \SCat{g}\) does not change the underlying set of a given Hausdorff abelian group.
  Thus, in a hom-convergence UDC \((\C,\Phi)\), every morphism \(f \in \C(A,B)\) corresponds to a point \(\varphi(f)\) in the corresponding Hausdorff abelian group \(X_{A,B}\).
  Let \(\alpha \colon D \to \C(A,B)\) be a net; we say that \(\alpha\) has a limit point \(f\) iff \(\lim\, (\varphi \circ \alpha) \keq \varphi(f)\) using the notion of limits from the Hausdorff space \(X_{A,B}\).
  Even though (strictly speaking) \(\C(A,B)\) is not a Hausdorff abelian group, we recover the notation \(\lim \alpha \keq f\) to indicate that the limit point of \(\alpha\) exists and is \(f \in \C(A,B)\).
  The beginning of this section will deal with the limit of certain sequences --- recall that a sequence is a net defined on the directed set \((\Nset,\leq)\). For a sequence \(s \colon \Nset \to \C(A,B)\) we will use the shorthand
  \begin{equation*}
    \lim_{n \to \infty} s(n)
  \end{equation*}
  to refer to its limit point (if it exists).
\end{notation}  

It is possible to define a topology on hom-sets of \(\Hilb\) so that the category is a hom-convergence UDC.
An immediate candidate is the topology induced by the operator norm, but such a topology presents problems in the infinite-dimensional case when we attempt to prove certain results, such as Lemma~\ref{lem:lim_last_term}.
Instead, we may use the strong operator topology, defined below.
Any reader only interested in the finite-dimensional case may skip to Proposition~\ref{prop:SOT_FdHilb}, where it is shown that the strong operator topology in \(\FdHilb\) coincides with the standard operator norm topology.

\begin{definition} \label{def:SOT}
  Every hom-set \(\Hilb(A,B)\) can be assigned a \gls{strong operator topology} denoted \(\tau_\sub{\mathrm{SOT}}\) whose base is comprised of the following open sets:
  \begin{equation} \label{eq:SOT_ball}
    B^f_{S,\epsilon} = \{g \in \Hilb(A,B) \mid \forall a \in S,\ \norm{f(a) - g(a)} < \epsilon\}
  \end{equation}
  where \(f \in \Hilb(A,B)\), \(S\) is a finite set of vectors in \(A\) and \(\epsilon > 0\).
\end{definition}

It is not immediate that the definition above provides a valid base; the proof is discussed below.
For any \(g \in B^f_{S,\epsilon}\) it follows that \(\epsilon - \norm{f(a) - g(a)} > 0\) for all \(a \in S\); let \(\hat{\epsilon}\) be
\begin{equation*}
  \hat{\epsilon} = \min{} \{\epsilon - \norm{f(a) - g(a)}\}_{a \in S}.
\end{equation*}
Then, for any \(h \in B^g_{S,\hat{\epsilon}}\) and for all \(a \in S\) we have that
\begin{equation*}
  \norm{g(a) - h(a)} < \hat{\epsilon} \leq \epsilon - \norm{f(a) - g(a)}
\end{equation*}
and using the triangle inequality we obtain:
\begin{equation*}
  \norm{f(a) - h(a)} \leq \norm{f(a) - g(a)} + \norm{g(a) - h(a)} < \epsilon
\end{equation*}
implying \(h \in B^f_{S,\epsilon}\).
Therefore,
\begin{equation} \label{eq:SOT_ball_in_ball}
  g \in B^f_{S,\epsilon} \implies B^g_{S,\hat{\epsilon}} \subseteq B^f_{S,\epsilon}.
\end{equation}
Assume \(g\) is a map belonging to two different open sets \(g \in B^f_{S,\epsilon}\) and \(g \in B^{f'}_{S',\epsilon'}\) and define the corresponding \(\hat{\epsilon}\) and \(\hat{\epsilon}'\) as above.
It is straightforward to check that
\begin{equation*}
  B^g_{S \cup S',\min(\hat{\epsilon},\hat{\epsilon}')} \subseteq B^g_{S,\hat{\epsilon}} \cap B^g_{S',\hat{\epsilon}'}
\end{equation*}
and, thanks to~\eqref{eq:SOT_ball_in_ball}, we may conclude that:
\begin{equation*}
  g \in B^f_{S,\epsilon} \cap B^f_{S',\epsilon'} \implies B^g_{S \cup S',\min(\hat{\epsilon},\hat{\epsilon}')} \subseteq B^f_{S,\epsilon} \cap B^f_{S',\epsilon'}
\end{equation*}
proving the base is valid (see Definition~\ref{def:base}).

\begin{proposition}
  The strong operator topology \(\tau_\sub{\mathrm{SOT}}\) on every hom-set \(\Hilb(A,B)\) is Hausdorff.
\end{proposition}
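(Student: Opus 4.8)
The plan is to verify the Hausdorff separation condition directly from the base given in Definition~\ref{def:SOT}. Given two distinct bounded linear maps \(f, g \in \Hilb(A,B)\), I would first observe that \(f \neq g\) forces them to disagree on at least one vector, so there is some \(a \in A\) with \(f(a) \neq g(a)\); consequently the real number \(d = \norm{f(a) - g(a)}\) is strictly positive.

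Next I would exhibit two disjoint basic open sets. Taking the singleton \(S = \{a\}\) and radius \(\epsilon = \tfrac{d}{2}\), the basic open sets \(B^f_{\{a\},d/2}\) and \(B^g_{\{a\},d/2}\) are open neighbourhoods of \(f\) and \(g\) respectively. To show they are disjoint I would argue by contradiction: if some \(h\) belonged to both, then \(\norm{f(a) - h(a)} < \tfrac{d}{2}\) and \(\norm{g(a) - h(a)} < \tfrac{d}{2}\), and the triangle inequality for the norm on \(B\) would give \(d = \norm{f(a) - g(a)} \leq \norm{f(a) - h(a)} + \norm{h(a) - g(a)} < d\), which is absurd.

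Since every open set in \(\tau_\sub{\mathrm{SOT}}\) is a union of elements of the base, producing disjoint basic open neighbourhoods is enough to establish the Hausdorff condition. The argument is essentially the one used to show that a normed vector space is Hausdorff (Example~\ref{ex:metric_topology}), transported to the level of point evaluations. There is no substantial obstacle here; the only thing worth flagging is that the separation is witnessed by a \emph{single} evaluation vector \(a\), so the whole matter reduces to separating \(f(a)\) from \(g(a)\) inside the Hausdorff space \(B\) — the finiteness of \(S\) and the choice of \(\epsilon\) then fall out of the triangle inequality exactly as in the metric case.
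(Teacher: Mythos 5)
Your proof is correct and is essentially identical to the paper's own argument: both separate \(f\) and \(g\) using the basic open sets \(B^f_{\{a\},\epsilon}\) and \(B^g_{\{a\},\epsilon}\) with \(\epsilon = \tfrac{1}{2}\norm{f(a)-g(a)}\) for a vector \(a\) where \(f(a) \neq g(a)\), and both derive disjointness by the same triangle-inequality contradiction. No gaps.
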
 \begin{proof}
  Let \(f,g \in \Hilb(A,B)\) so that \(f \not = g\) and find an element \(a \in A\) such that \(f(a) \not = g(a)\); fix \(\epsilon = \tfrac{1}{2}\norm{f(a) - g(a)}\).
  Suppose there is a map \(h\) both in \(B^f_{\{a\},\epsilon}\) and \(B^g_{\{a\},\epsilon}\) then:
  \begin{equation*}
    \norm{f(a) - g(a)} \leq \norm{f(a) - h(a)} + \norm{h(a) - g(a)} < 2\epsilon
  \end{equation*}
  and we reach a contradiction \(\norm{f(a) - g(a)} < \norm{f(a) - g(a)}\).
  Therefore, \(B^f_{\{a\},\epsilon}\) and \(B^g_{\{a\},\epsilon}\) must be disjoint, implying that the topology is Hausdorff.
\end{proof}

\begin{remark}
  In the literature, the strong operator topology is often defined in terms of a subbase: a collection \(C\) of sets that induces a base by taking all finite intersections of elements in \(C\).
  Such a subbase comprises all open sets \(B^f_{S,\epsilon}\) where \(S\) is a singleton set; let \(\tau_\sub{\mathrm{SOT}}'\) denote the topology generated by this subbase.
  Every open set in \(\tau_\sub{\mathrm{SOT}}'\) is trivially an open set in \(\tau_\sub{\mathrm{SOT}}\), since every element in the subbase for \(\tau_\sub{\mathrm{SOT}}'\) is in \(\tau_\sub{\mathrm{SOT}}\).
  Moreover, \(B^f_{S,\epsilon} = \cap_{a \in S} B^f_{\{a\},\epsilon}\) so the converse also holds, implying that \(\tau_\sub{\mathrm{SOT}}' = \tau_\sub{\mathrm{SOT}}\).
\end{remark}

\begin{proposition} \label{prop:SOT_FdHilb}
  In hom-sets \(\FdHilb(A,B)\), the strong operator topology \(\tau_\sub{\mathrm{SOT}}\) is equivalent to the operator norm\footnote{Recall that the \gls{operator norm} \(\opnorm{f}\) of a bounded linear map \(f \in \Hilb(A,B)\) is the infimum of the subset of real numbers \(c \in \Rset\) that satisfy: \[\forall v \in A,\ \norm{f(v)} \leq c\cdot \norm{v}.\]} topology \(\tau_\mathrm{op}\) whose base is comprised of the following open sets:
  \begin{equation} \label{eq:opnorm_ball}
    B^f_\epsilon = \{g \in \FdHilb(A,B) \mid \opnorm{f - g} < \epsilon\}
  \end{equation}
  for every \(f \in \FdHilb(A,B)\) and \(\epsilon > 0\).
\end{proposition}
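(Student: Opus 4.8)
The plan is to invoke Proposition~\ref{prop:base_inclusion}, which reduces the equality of the two topologies to a pair of base-inclusion conditions; I would establish $\tau_\sub{\mathrm{SOT}} \subseteq \tau_\mathrm{op}$ and $\tau_\mathrm{op} \subseteq \tau_\sub{\mathrm{SOT}}$ separately. For the first inclusion, which in fact holds irrespective of dimension, I would start from an arbitrary strong-operator basic open set $B^f_{S,\epsilon}$ and a point $g$ inside it. Setting $\hat\epsilon = \min_{a \in S}\{\epsilon - \norm{f(a) - g(a)}\}$ (positive and attained since $S$ is finite) and $M = \max_{a \in S}\norm{a}$, the triangle inequality together with $\norm{g(a) - h(a)} \leq \opnorm{g-h}\cdot\norm{a}$ shows that every $h$ in the operator-norm ball $B^g_{\epsilon'}$ with $\epsilon' = \hat\epsilon/(M+1)$ satisfies $\norm{f(a)-h(a)} < \epsilon$ for all $a \in S$, i.e. $B^g_{\epsilon'} \subseteq B^f_{S,\epsilon}$. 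Since $g \in B^g_{\epsilon'}$ trivially, Proposition~\ref{prop:base_inclusion} yields $\tau_\sub{\mathrm{SOT}} \subseteq \tau_\mathrm{op}$.

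For the reverse inclusion I would fix an orthonormal basis $\{e_1, \ldots, e_n\}$ of the finite-dimensional space $A$ and exploit the elementary bound $\opnorm{T} \leq \sum_{i=1}^n \norm{T(e_i)}$ valid for any linear map $T \colon A \to B$; this is the step where finite-dimensionality is essential, as the operator norm of $T$ is then controlled by its values on finitely many vectors. Given a point $g$ in an operator-norm ball $B^f_\epsilon$, I would set $\delta = \epsilon - \opnorm{f-g} > 0$ and take the strong-operator basic open set $B^g_{S,\epsilon'}$ with $S = \{e_1,\ldots,e_n\}$ and $\epsilon' = \delta/(n+1)$. For any $h$ in this set the bound gives $\opnorm{g-h} \leq \sum_i \norm{g(e_i)-h(e_i)} < n\epsilon' < \delta$, whence $\opnorm{f-h} \leq \opnorm{f-g} + \opnorm{g-h} < \epsilon$, so $B^g_{S,\epsilon'} \subseteq B^f_\epsilon$. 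Again $g$ lies in this set, and Proposition~\ref{prop:base_inclusion} yields $\tau_\mathrm{op} \subseteq \tau_\sub{\mathrm{SOT}}$.

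Combining the two inclusions gives $\tau_\sub{\mathrm{SOT}} = \tau_\mathrm{op}$, completing the proof. The only genuinely non-routine ingredient is the operator-norm bound $\opnorm{T} \leq \sum_i \norm{T(e_i)}$ used in the second inclusion: expanding a unit vector $v = \sum_i c_i e_i$ in the orthonormal basis gives $\sum_i \abs{c_i}^2 = 1$ and hence $\abs{c_i} \leq 1$, so $\norm{T(v)} \leq \sum_i \abs{c_i}\,\norm{T(e_i)} \leq \sum_i \norm{T(e_i)}$. Rather than an obstacle, I expect the main point to be recognising precisely where finiteness of the dimension enters: it is exactly this reduction of the norm to finitely many basis evaluations, which is unavailable in the infinite-dimensional setting and is the reason the two topologies genuinely differ on $\Hilb(A,B)$, as anticipated in the remarks preceding Lemma~\ref{lem:lim_last_term}.
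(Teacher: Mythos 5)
Your proof is correct and follows essentially the same route as the paper's: both inclusions are reduced to base comparisons via Proposition~\ref{prop:base_inclusion}, with finite-dimensionality entering exactly where you say it does, through a bound on the operator norm by values on an orthonormal basis (you use \(\opnorm{T} \leq \sum_i \norm{T(e_i)}\) where the paper uses \(\opnorm{T} \leq \sqrt{\abs{S}} \cdot \max_{e \in S} \norm{T(e)}\), an immaterial difference). If anything, your treatment of the inclusion \(\tau_\sub{\mathrm{SOT}} \subseteq \tau_\mathrm{op}\) is more careful than the paper's, which sets \(\epsilon' = \epsilon - \opnorm{f-g}\) and deduces \(h \in B^f_{S,\epsilon}\) from \(\opnorm{f-h} < \epsilon\) --- steps that implicitly assume \(\opnorm{f-g} < \epsilon\) and \(\norm{a} \leq 1\) for all \(a \in S\) --- whereas your rescaling by \(M = \max_{a \in S}\norm{a}\) together with \(\hat{\epsilon} = \min_{a \in S}\{\epsilon - \norm{f(a)-g(a)}\}\) handles arbitrary finite \(S\) without these hidden assumptions.
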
 \begin{proof}
  We must show that \(\tau_\sub{\mathrm{SOT}} \subseteq \tau_\mathrm{op}\) and \(\tau_\mathrm{op} \subseteq \tau_\sub{\mathrm{SOT}}\).
  Thanks to Proposition~\ref{prop:base_inclusion}, to show that \(\tau \subseteq \tau'\) it is sufficient to prove, for their respective bases \(\cB\) and \(\cB'\), that for every \(x \in X\) and \(B \in \cB\):
  \begin{equation*}
    x \in B \implies \exists B' \in \cB' \text{ s.t. } B' \subseteq B \text{ and } x \in B'.
  \end{equation*}

  First, we show that \(\tau_\sub{\mathrm{SOT}} \subseteq \tau_\mathrm{op}\).
  For each element \(B^f_{S,\epsilon}\) of the base for \(\tau_\sub{\mathrm{SOT}}\) and each \(g \in B^f_{S,\epsilon}\) define \(\epsilon' = \epsilon - \opnorm{f-g}\).
  Consider the element \(B^g_{\epsilon'}\) of the base for \(\tau_\mathrm{op}\) so that if \(h \in B^g_{\epsilon'}\) then \(\opnorm{g - h} < \epsilon - \opnorm{f-g}\) and, consequently,
  \begin{equation*}
    \opnorm{f-h} \leq \opnorm{g-h} + \opnorm{f-g} < \epsilon
  \end{equation*}
  implying that \(h \in B^f_{S,\epsilon}\) due to the definition of the operator norm.
  Consequently, \(B^g_{\epsilon'} \subseteq B^f_{S,\epsilon}\) and, trivially, \(g \in B^g_{\epsilon'}\); thus, \(\tau_\sub{\mathrm{SOT}} \subseteq \tau_\mathrm{op}\) according to Proposition~\ref{prop:base_inclusion}.

  To show that \(\tau_\mathrm{op} \subseteq \tau_\sub{\mathrm{SOT}}\), let \(S\) be an orthonormal basis of \(A\).
  Thanks to \(A\) being finite-dimensional, we have that every \(f \in \FdHilb(A,B)\) satisfies:
  \begin{equation} \label{eq:opnorm_orthonormal}
    \opnorm{f} \leq \sqrt{\abs{S}} \cdot \max_{e \in S}\, \norm{f(e)}
  \end{equation}  
  For every open set \(B^f_\epsilon\) from the base for \(\tau_\mathrm{op}\) and every \(g \in B^f_\epsilon\) the base element \(B^g_{\epsilon'}\) where \(\epsilon' = \epsilon - \opnorm{f-g}\) satisfies that \(B^g_{\epsilon'} \subseteq B^f_\epsilon\) and \(g \in B^g_{\epsilon'}\).
  Let \(\hat{\epsilon} = \tfrac{\epsilon'}{\sqrt{\abs{S}}}\) and consider the open set \(B^g_{S,\hat{\epsilon}}\) from the base for \(\tau_\sub{\mathrm{SOT}}\) so that if \(h \in B^g_{S,\hat{\epsilon}}\) then:
  \begin{equation*}
     \forall e \in S,\ \norm{g(e) - h(e)} < \tfrac{\epsilon'}{\sqrt{\abs{S}}} 
  \end{equation*}
  which implies \(\opnorm{g-h} < \epsilon'\) due to~\eqref{eq:opnorm_orthonormal}.
  Consequently, \(B^g_{S,\hat{\epsilon}} \subseteq B^g_{\epsilon'} \subseteq B^f_\epsilon\) and, trivially, \(g \in B^g_{S,\hat{\epsilon}}\); thus, \(\tau_\mathrm{op} \subseteq \tau_\sub{\mathrm{SOT}}\) according to Proposition~\ref{prop:base_inclusion}, completing the proof.
\end{proof}

The following proposition establishes that \(\Hilb\) is a hom-convergence UDC when the strong operator topology is used.
A similar proof can be used to show that \(\Hilb\) with the operator norm topology is a hom-convergence UDC.
The reason for using the strong operator topology is that it will later enable us to prove results (such as Lemma~\ref{lem:lim_last_term}) that do not hold in \(\Hilb\) if the operator norm topology is used. 

\begin{proposition} \label{prop:Hilb_hom-convergence}
  \(\Hilb\) is a hom-convergence UDC, with the topology on each hom-set \(\Hilb(A,B)\) given by the strong operator topology.
\end{proposition}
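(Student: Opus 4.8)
The plan is to verify the two defining conditions of a hom-convergence UDC for $(\Hilb,\Phi)$ where each $\varphi \colon \Hilb(A,B) \to G(X_{A,B})$ is the identity map onto the Hausdorff abelian group $(\Hilb(A,B),\tau_\sub{\mathrm{SOT}},+)$ equipped with pointwise addition. First I would confirm that $X_{A,B} = (\Hilb(A,B),\tau_\sub{\mathrm{SOT}},+)$ is indeed a Hausdorff abelian group: pointwise addition of bounded linear maps is a bounded linear map and makes $\Hilb(A,B)$ an abelian group, the preceding proposition establishes that $\tau_\sub{\mathrm{SOT}}$ is Hausdorff, and it remains to check that addition and the inversion map $f \mapsto -f$ are continuous with respect to $\tau_\sub{\mathrm{SOT}}$. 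This follows by the standard base argument: given $h_1 + h_2 \in B^{f_1+f_2}_{S,\epsilon}$, the open neighbourhoods $B^{f_1}_{S,\epsilon/2}$ and $B^{f_2}_{S,\epsilon/2}$ satisfy $+(B^{f_1}_{S,\epsilon/2},B^{f_2}_{S,\epsilon/2}) \subseteq B^{f_1+f_2}_{S,\epsilon}$ by the triangle inequality applied at each $a \in S$, and continuity of inversion is immediate since $\norm{(-f)(a)-(-g)(a)} = \norm{f(a)-g(a)}$. Given this, $G$ being the faithful functor of Definition~\ref{def:HausCMon_ft} sends $X_{A,B}$ to the $\Sigma$-group whose $\Sigma$ is the extended monoid operation, matching the $\SCat{g}$-enrichment of $\Hilb$ described earlier, so $\varphi = \id$ is a valid $\Sigma$-isomorphism.

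The core of the proof is the continuity of composition in each variable. Fix $f \in \Hilb(A,B)$ and $g \in \Hilb(B,C)$; I must exhibit continuous functions $\hat f, \hat g$ (in $\HAG$) with $G(\hat f) = \varphi(-\circ f)\varphi^{-1}$ and $G(\hat g) = \varphi(g\circ-)\varphi^{-1}$. Since $\varphi$ is the identity, this reduces to showing that the $\Sigma$-homomorphisms $-\circ f \colon \Hilb(B,C) \to \Hilb(A,C)$ and $g\circ- \colon \Hilb(A,B) \to \Hilb(A,C)$ are continuous maps of Hausdorff abelian groups. For right composition $-\circ f$, I would take a basic open set $B^{k\circ f}_{S,\epsilon}$ in $\Hilb(A,C)$ and seek a basic open set in $\Hilb(B,C)$ mapping into it; the natural candidate is $B^{k}_{f(S),\epsilon}$ where $f(S) = \{f(a) \mid a \in S\}$ is again a finite set of vectors in $B$, because for $k' \in B^{k}_{f(S),\epsilon}$ and every $a \in S$ we have $\norm{(k\circ f)(a) - (k'\circ f)(a)} = \norm{k(f(a)) - k'(f(a))} < \epsilon$, placing $k'\circ f$ in $B^{k\circ f}_{S,\epsilon}$.

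Left composition $g\circ-$ is the more delicate step, and I expect it to be the main obstacle. Here a basic open set $B^{g\circ h}_{S,\epsilon}$ in $\Hilb(A,C)$ must be pulled back to a basic open set in $\Hilb(A,B)$, but controlling $\norm{g(h(a)) - g(h'(a))}$ requires bounding $g$'s action, which is where the boundedness (continuity) of $g$ as an operator enters rather than merely its values on a finite set. The plan is to use that $g$ is bounded: choosing $\hat\epsilon = \epsilon/(1+\opnorm{g})$, for any $h' \in B^{h}_{S,\hat\epsilon}$ and every $a \in S$ we get $\norm{g(h(a)) - g(h'(a))} \leq \opnorm{g}\cdot\norm{h(a)-h'(a)} < \opnorm{g}\cdot\hat\epsilon \leq \epsilon$, so $g\circ h' \in B^{g\circ h}_{S,\epsilon}$. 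This is precisely why $\tau_\sub{\mathrm{SOT}}$ rather than the operator norm topology is the right choice: left composition remains well-behaved on a finite test set $S$ because only the finitely many vectors $h(a)$ and $h'(a)$ need be compared, with the operator norm of the fixed map $g$ supplying a uniform Lipschitz constant. Once both one-sided compositions are shown continuous via Proposition~\ref{prop:base_continuous}, the two required $\Sigma$-homomorphisms lie in the image of $G \colon \HAG \to \SCat{g}$, and $(\Hilb,\Phi)$ is a hom-convergence UDC as claimed.
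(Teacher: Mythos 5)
Your proposal is correct and follows essentially the same route as the paper's proof: establish that each hom-set with pointwise addition and the strong operator topology is a Hausdorff abelian group, then verify continuity of composition in each variable via basic open sets, with $-\circ f$ handled by pulling back along the finite set $f(S)$ and $g\circ -$ handled using the operator norm of $g$ as a Lipschitz constant. The only (cosmetic) difference is your choice $\hat\epsilon = \epsilon/(1+\opnorm{g})$, which uniformly covers the case $g=0$ that the paper treats separately.
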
 \begin{proof}
  Pointwise addition of bounded linear maps makes \(\Hilb(A,B)\) an abelian group; we now prove that the strong operator topology makes it a Hausdorff abelian group.
  Recall from Proposition~\ref{prop:base_continuous} that \(+ \colon \Hilb(A,B) \times \Hilb(A,B) \to \Hilb(A,B)\) is continuous iff for every pair \(f,g \in \Hilb(A,B)\) and every open set \(B\) from the base for \(\tau_\sub{\mathrm{SOT}}\) the following implication holds:
  \begin{equation*}
    f\!+\!g \in B \implies \exists \text{ open sets }\  U \!\ni\! f \text{ and } V \!\ni\! g \ \text{ s.t. } +(U,V) \subseteq B.
  \end{equation*}
  Thanks to~\eqref{eq:SOT_ball_in_ball} it is sufficient to prove this for open sets \(B\) of the form \(B^{f+g}_{S,\epsilon}\) with arbitrary \(S\) and \(\epsilon\) since any other \(B^h_{S',\epsilon'}\) containing \(f\!+\!g\) will contain some open set of the form \(B^{f+g}_{S',\hat{\epsilon}}\).
  For every pair \(f,g \in \Hilb(A,B)\) and every \(S\) and \(\epsilon\) we have that
  \begin{equation} \label{eq:SOT_continuous_addition}
    (h,h') \in B^f_{S,\epsilon / 2} \times B^g_{S,\epsilon / 2} \implies h+h' \in B^{f+g}_{S,\epsilon}.
  \end{equation}
  due to the following, which holds for all \(a \in S\):
  \begin{equation*}
    \norm{(f + g)(a) - (h + h')(a)} \leq \norm{f(a) - h(a)} + \norm{g(a) - h'(a)} < \tfrac{\epsilon}{2} + \tfrac{\epsilon}{2}.
  \end{equation*}
  Therefore, \(+(B^f_{S,\epsilon / 2},B^g_{S,\epsilon / 2}) \subseteq B^{f+g}_{S,\epsilon}\), implying addition is continuous.
  Moreover, every open set \(B^f_{S,\epsilon}\) satisfies \(-(B^{-f}_{S,\epsilon}) = B^f_{S,\epsilon}\), so the inversion map is continuous according to Proposition~\ref{prop:base_continuous}. Therefore, each hom-set \(\Hilb(A,B)\) endowed with the strong operator topology is a Hausdorff abelian group.

  Recall that any Hausdorff abelian group can be made into a \(\Sigma\)-group via the functor \(G \colon \HAG \to \SCat{g}\) which does not alter the underlying set; thus, every \(\Hilb(A,B)\) is a \(\Sigma\)-group.
  To conclude that the \(\SCat{g}\)-enrichment on \(\Hilb\) is well-defined, we need to verify that composition is \(\Sigma\)-bilinear.
  According to Lemma~\ref{lem:HausCMon_preserves_summability}, it is sufficient to prove that \(g \circ -\) and \(- \circ f\) are continuous monoid homomorphisms for every choice of bounded linear maps \(f\) and \(g\).
  It is immediate that composition is a group homomorphism on both variables due to addition being defined pointwise; below we show that composition is also continuous in each variable.
  \begin{itemize}
    \item For every \(f \in \Hilb(A,B)\) and \(g \in \Hilb(B,C)\) with \(g \not= 0\) it follows from \(g\) being a bounded linear map that
    \begin{equation*}
      h \in B^f_{S,\epsilon / \opnorm{g}} \implies g \circ h \in B^{g \circ f}_{S,\epsilon}
    \end{equation*}
    due to
    \begin{equation*}
      \norm{g(f(a)) - g(h(a))} \leq \opnorm{g} \cdot \norm{f(a) - h(a)} < \opnorm{g} \cdot \tfrac{\epsilon}{\opnorm{g}}
    \end{equation*}
    for all \(a \in S\) and, hence, 
    \begin{equation*}
      (g \circ -)\left(B^f_{S,\epsilon / \opnorm{g}}\right) \subseteq B^{g \circ f}_{S,\epsilon};
    \end{equation*}
    whereas if \(g = 0\) then \((0 \circ -)(B^{f}_{S,\epsilon}) = \{0\}\) so, trivially, \((g \circ -)(B^{f}_{S,\epsilon}) \subseteq B^{g \circ f}_{S,\epsilon}\) as well. Consequently, since every open set from the base for \(\tau_\sub{\mathrm{SOT}}\) containing \(g \circ f\) will contain some open set of the form \(B^{g \circ f}_{S,\epsilon}\) --- see the discussion of~\eqref{eq:SOT_ball_in_ball} --- it follows from Proposition~\ref{prop:base_continuous} that \(g \circ -\) is continuous.
    \item Alternatively,
    \begin{equation*}
      h \in B^g_{f(S),\epsilon} \implies h \circ f \in B^{g \circ f}_{S,\epsilon}
    \end{equation*}
    follows trivially, so \(- \circ f\) is continuous as well.
  \end{itemize}
  Finally, it is immediate that \((\Hilb,\oplus,\{0\})\) is a \(\SCat{g}\)-UDC since its monoidal product is a biproduct.
  Therefore, \(\Hilb\) --- or more precisely \((\Hilb,\Phi)\), where \(\Phi\) is a collection of identity \(\Sigma\)-homomorphisms --- is a hom-convergence UDC.
\end{proof}

\begin{corollary} \label{cor:FdHilb_hom-convergence}
  \(\FdHilb\) is a hom-convergence UDC, with the topology on each hom-set \(\Hilb(A,B)\) given by the operator norm topology.
\end{corollary}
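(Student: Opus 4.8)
The plan is to derive Corollary~\ref{cor:FdHilb_hom-convergence} as an almost immediate consequence of Proposition~\ref{prop:Hilb_hom-convergence} together with Proposition~\ref{prop:SOT_FdHilb}. The key observation is that $\FdHilb$ is a full subcategory of $\Hilb$, so every hom-set $\FdHilb(A,B)$ coincides with $\Hilb(A,B)$ (since all linear maps between finite-dimensional spaces are automatically bounded), and composition in $\FdHilb$ is the restriction of composition in $\Hilb$. Proposition~\ref{prop:Hilb_hom-convergence} already establishes that $\Hilb$ is a hom-convergence UDC when each hom-set is equipped with the strong operator topology $\tau_\sub{\mathrm{SOT}}$, and Proposition~\ref{prop:SOT_FdHilb} shows that, on finite-dimensional hom-sets, $\tau_\sub{\mathrm{SOT}}$ and the operator norm topology $\tau_\mathrm{op}$ coincide. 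Thus the statement of the corollary (which refers to the operator norm topology) reduces to transporting the hom-convergence structure from $\Hilb$ to $\FdHilb$.

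First I would check that $\FdHilb$ is a $\SCat{g}$-UDC: this is recorded in the earlier examples, since $(\FdHilb,\oplus,\{0\})$ has biproducts, so its monoidal product makes it a $\SCat{g}$-UDC, with the $\SCat{g}$-enrichment on each hom-set $\FdHilb(A,B)$ induced by the Hausdorff abelian group structure given by pointwise addition and the operator norm topology. Second, I would verify that each hom-set $\FdHilb(A,B)$, endowed with $\tau_\mathrm{op}$, is a Hausdorff abelian group: this follows because $\FdHilb(A,B)$ is a finite-dimensional normed vector space, and by Proposition~\ref{prop:normed_Hausdorff} every normed vector space is a Hausdorff abelian group. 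The collection $\Phi$ of $\Sigma$-isomorphisms may again be taken to consist of identity maps $\varphi_{A,B} = \id$, exactly as in the proof of Proposition~\ref{prop:Hilb_hom-convergence}.

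The remaining step is to confirm that composition is continuous in each variable with respect to $\tau_\mathrm{op}$. Here I would invoke Proposition~\ref{prop:SOT_FdHilb}: since $\tau_\mathrm{op} = \tau_\sub{\mathrm{SOT}}$ on finite-dimensional hom-sets, the continuity of $g \circ -$ and $- \circ f$ already proven in Proposition~\ref{prop:Hilb_hom-convergence} for the strong operator topology transfers verbatim to the operator norm topology. Equivalently, one can give the direct norm-submultiplicativity estimate $\opnorm{g \circ h} \leq \opnorm{g}\cdot\opnorm{h}$, from which continuity of composition in each variable is standard; but appealing to the equivalence of topologies is cleaner and avoids redoing the calculation. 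By the continuity of composition in each variable, Lemma~\ref{lem:HausCMon_preserves_summability} guarantees that composition is $\Sigma$-bilinear, and the functor $G \colon \HAG \to \SCat{g}$ from Definition~\ref{def:HausCMon_ft} supplies the required factorisations through $\HAG$.

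I do not expect a genuine obstacle here: the corollary is designed to be a routine specialisation of the preceding proposition. The only point requiring a moment's care is the switch of topology from $\tau_\sub{\mathrm{SOT}}$ to $\tau_\mathrm{op}$, but this is precisely what Proposition~\ref{prop:SOT_FdHilb} resolves, so the argument amounts to assembling $\FdHilb \hookrightarrow \Hilb$, Proposition~\ref{prop:Hilb_hom-convergence}, and Proposition~\ref{prop:SOT_FdHilb}. The proof will therefore be a short paragraph noting that $\FdHilb$ inherits the hom-convergence UDC structure from $\Hilb$, with the operator norm topology taking the place of the strong operator topology by virtue of their equivalence in finite dimensions.
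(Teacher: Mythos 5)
Your proposal is correct and follows exactly the paper's own route: the paper's proof simply repeats the argument of Proposition~\ref{prop:Hilb_hom-convergence} and invokes Proposition~\ref{prop:SOT_FdHilb} to identify the strong operator topology with the operator norm topology in finite dimensions. Your additional details (full subcategory, identity \(\Sigma\)-isomorphisms, Hausdorff abelian group structure via Proposition~\ref{prop:normed_Hausdorff}) merely make explicit what the paper leaves implicit.
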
 \begin{proof}
  The claim follows from the same argument provided in the previous proof, thanks to the equivalence in the finite-dimensional case between the strong operator topology and the operator norm topology, as established in Proposition~\ref{prop:SOT_FdHilb}.
\end{proof}

Notice that the definition of hom-convergence UDC may be reproduced for \(\SCat{ft}\)-enriched UDCs.
However, the motivation behind hom-convergence UDCs is that we may consider the limit of the recursive equation~\eqref{eq:recursive_i} that arises in categories with a kernel-image trace, and such a trace assumes the existence of additive inverses in its definition.
The following lemma illustrates how the execution formula arises from the kernel-image trace in a hom-convergence UDC.

\begin{lemma}
  Let \(\C\) be a hom-convergence UDC and let \(f \colon A \oplus U \to B \oplus U\) be a morphism in \(\C\) satisfying \((k,i) \Vdash \Tr_\sub{\ki}^U(f)\) for some morphisms \(k \in \C(U,B)\) and \(i \in \C(A,U)\). If \(f\) satisfies either
  \begin{equation*}
    \Lim{n \to \infty} f_\sub{UU}^n f_\sub{UA} \keq 0 \quad\quad \text{or} \quad\quad \Lim{n \to \infty} f_\sub{BU} f_\sub{UU}^n \keq 0
  \end{equation*}
  then the sequence of partial sums \(s \colon \Nset \to \C(A,B)\)
  \begin{equation*}
    s(n) = \sum_{j=0}^n f_\sub{BU} f_\sub{UU}^j f_\sub{UA}
  \end{equation*}
  is convergent, with
  \begin{equation*}
    \Lim{n \to \infty} s(n) \keq \Tr_\sub{\ki}^U(f) - f_\sub{BA}.
  \end{equation*}
\end{lemma}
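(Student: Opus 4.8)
The plan is to reduce the infinite partial sum to the two telescoping identities that the kernel-image witnesses already supply, and then to push the sequential limit through composition using the continuity built into the hom-convergence structure; no convergence estimate or Cauchy argument is needed, since the hypotheses directly provide the vanishing of a tail term.

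First I would recall that $(k,i) \Vdash \Tr_\sub{\ki}^U(f)$ means the two triangles of diagram~\eqref{eq:ki_trace} commute, that is, $f_\sub{UA} = (\id - f_\sub{UU}) \circ i$ and $f_\sub{BU} = k \circ (\id - f_\sub{UU})$. Because $\C$ is $\SCat{g}$-enriched, composition distributes over addition and additive inverses are available, so each summand telescopes in two ways:
\begin{equation*}
  f_\sub{BU} f_\sub{UU}^j f_\sub{UA} \keq k(f_\sub{UU}^j - f_\sub{UU}^{j+1}) f_\sub{UA} \keq f_\sub{BU}(f_\sub{UU}^j - f_\sub{UU}^{j+1}) i.
\end{equation*}
Summing from $j = 0$ to $n$ collapses the telescope on either side, since $\sum_{j=0}^n (f_\sub{UU}^j - f_\sub{UU}^{j+1}) \keq \id - f_\sub{UU}^{n+1}$, yielding the two closed forms
\begin{equation*}
  s(n) \keq k f_\sub{UA} - k f_\sub{UU}^{n+1} f_\sub{UA} \keq f_\sub{BU} i - f_\sub{BU} f_\sub{UU}^{n+1} i.
\end{equation*}

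Next I would treat the two hypotheses separately, matching each to one of these closed forms. Suppose $\Lim{n \to \infty} f_\sub{UU}^n f_\sub{UA} \keq 0$; the shifted sequence $n \mapsto f_\sub{UU}^{n+1} f_\sub{UA}$ is a subnet of the original via the isotone, cofinal map $m(n) = n+1$, so by Proposition~\ref{prop:subnet_limit} it has the same limit $0$. Since the definition of a hom-convergence UDC guarantees that $k \circ -$ is continuous, Proposition~\ref{prop:preserve_limit} gives $\Lim{n \to \infty} k f_\sub{UU}^{n+1} f_\sub{UA} \keq 0$, and the first closed form yields $\Lim{n \to \infty} s(n) \keq k f_\sub{UA}$. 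Dually, if $\Lim{n \to \infty} f_\sub{BU} f_\sub{UU}^n \keq 0$, then continuity of $- \circ i$ applied to the shifted sequence gives $\Lim{n \to \infty} f_\sub{BU} f_\sub{UU}^{n+1} i \keq 0$, and the second closed form yields $\Lim{n \to \infty} s(n) \keq f_\sub{BU} i$. In both cases the value equals $\Tr_\sub{\ki}^U(f) - f_\sub{BA}$, because the kernel-image trace is $f_\sub{BA} + k f_\sub{UA} = f_\sub{BA} + f_\sub{BU} i$ and is well-defined independently of the chosen witnesses (Proposition~\ref{prop:SCat_ki_trace}).

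The only genuinely delicate point is the interaction between the sequential limits and composition. I must phrase every limit strictly in the sense of Notation~\ref{not:UDC_limit}, that is, as a limit in the ambient Hausdorff abelian group $X_{A,B}$ transported through the $\Sigma$-isomorphism $\varphi$; the continuity of $k \circ -$ and $- \circ i$ then means precisely that the transported maps lie in the image of $G \colon \HAG \to \SCat{g}$, so that Proposition~\ref{prop:preserve_limit} legitimately carries the $0$ limit across. The limit-shift step likewise deserves an explicit subnet justification rather than being taken for granted, as this is the place where the net-theoretic machinery of Section~\ref{sec:topology} does the real work.
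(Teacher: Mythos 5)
Your proof is correct and follows essentially the same route as the paper's: the telescoping closed forms $s(n) \keq k f_\sub{UA} - k f_\sub{UU}^{n+1} f_\sub{UA} \keq f_\sub{BU}\, i - f_\sub{BU} f_\sub{UU}^{n+1} i$ are exactly the paper's recursive identities for $i$ and $k$, and you then match each hypothesis to the corresponding form and transfer the vanishing tail through one-sided continuity of composition, just as the paper does. The only differences are cosmetic: you treat both witnesses symmetrically where the paper works one out and says ``similarly'', and you make the index-shift explicit via a subnet argument where the paper leaves it implicit.
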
 \begin{proof}
  Recall from the discussion of equation~\eqref{eq:recursive_i} that for any \(n \in \Nset\) we have:
  \begin{equation*}
    \sum_{j=0}^n f_\sub{UU}^j f_\sub{UA} = i - f_\sub{UU}^{n+1} \circ i.
  \end{equation*}
  In particular, the sequence \(\alpha \colon \Nset \to \C(A,B)\) given by
  \begin{equation*}
    \alpha(n) = s(n) + f_\sub{BU} f_\sub{UU}^{n+1} \circ i
  \end{equation*}
  is the constant function \(n \mapsto f_\sub{BU} \circ i\) and, hence,
  \begin{equation*}
    \Lim{n \to \infty} \alpha(n) \keq f_\sub{BU} \circ i = \Tr_\sub{\ki}^U(f) - f_\sub{BA}.
  \end{equation*}

  Take the claim's assumption that \(\lim f_\sub{BU} f_\sub{UU}^n \keq 0\); then, because composition in \(\C\) is continuous in each variable,
  \begin{equation*}
    \Lim{n \to \infty} f_\sub{BU} f_\sub{UU}^n \circ i \keq 0
  \end{equation*}
  and we may use that \(+\) and the inversion map are continuous to obtain:
  \begin{equation*}
    \Lim{n \to \infty} (\alpha(n) - f_\sub{BU} f_\sub{UU}^{n+1} \circ i) \keq \Lim{n \to \infty} \alpha(n) - \Lim{n \to \infty} f_\sub{BU} f_\sub{UU}^{n+1} \circ i = \left( \Tr_\sub{\ki}^U(f) - f_\sub{BA} \right) - 0
  \end{equation*}
  which simplifies to:
  \begin{equation*}
    \Lim{n \to \infty} s(n) \keq \Tr_\sub{\ki}^U(f) - f_\sub{BA}.
  \end{equation*}
  If we assume \(\lim f_\sub{UU}^n f_\sub{UA} \keq 0\) instead, we may use a similar argument along with the recursive expression of \(k\) to obtain the same result, completing the proof.
\end{proof}

Notice that the lemma above only implies the existence of the limit of a \emph{sequence}, whereas the \(\Sigma\) function of a hom-convergence UDC is defined (according to the functor \(G \colon \HAG \to \SCat{g}\)) in terms of the limit of a \emph{net}; this is meant to prevent different orderings of the summands to yield different results.
Since the execution formula is defined in terms of the \(\Sigma\) function, we need to upgrade the previous lemma from sequences to nets.
To do so, we need to impose an additional condition.

\begin{definition} \label{def:tail_convergent}
  Let \(\C\) be a hom-convergence UDC and let \(f \colon A \oplus U \to B \oplus U\) be a morphism in \(\C\). Let \((\fpset{\Nset},\subseteq)\) be the directed set of finite subsets of \(\Nset\).
  For any \(J \in \fpset{\Nset}\), let
  \begin{equation*}
    \tilde{J} = \{n \in \Nset \mid n < \max(J) \text{ and } n \not\in J\}.
  \end{equation*}
  If \(f\) satisfies both
  \begin{equation} \label{eq:sum_of_gaps}
    \Lim{J \in \fpset{\Nset}} \Sigma \{f_\sub{BU} f_\sub{UU}^j f_\sub{UA}\}_{\tilde{J}} \keq 0
  \end{equation}
  and either
  \begin{equation*}
    \Lim{n \to \infty} f_\sub{UU}^n f_\sub{UA} \keq 0 \quad\quad \text{or} \quad\quad \Lim{n \to \infty} f_\sub{BU} f_\sub{UU}^n \keq 0
  \end{equation*}
  we say \(f\) is \(U\)-\gls{tail vanishing}.
\end{definition}

For any finite set \(J \in \fpset{\Nset}\), its corresponding \(\tilde{J}\) as defined above can be seen as the set of `gaps' that are left by \(J\) when enumerating from \(0\) to \(\max(J)\).
If \(J \subseteq J'\) then \(\min(\tilde{J}) \leq \min(\tilde{J'})\) (or \(\tilde{J'} \not = \varnothing\)) so condition~\eqref{eq:sum_of_gaps} may be phrased as follows: \emph{as the first occurence of a gap tends towards infinity, the sum of the corresponding missing terms approaches zero}.

\begin{lemma} \label{lem:ex_from_ki}
  Let \(\C\) be a hom-convergence UDC and let \(f \colon A \oplus U \to B \oplus U\) be a \(U\)-tail vanishing morphism in \(\C\) satisfying \((k,i) \Vdash \Tr^U(f)\) for some morphisms \(k \in \C(U,B)\) and \(i \in \C(A,U)\).
  Then, the execution formula on \(f\):
  \begin{equation*} 
    \ex^U(f) = \Sigma \left( \{f_\sub{BA}\} \uplus \{f_\sub{BU} f_\sub{UU}^n f_\sub{UA}\}_\Nset \right)
  \end{equation*}
  is well-defined and
  \begin{equation*}
    \ex^U(f) = \Tr_\sub{\ki}^U(f).
  \end{equation*}
\end{lemma}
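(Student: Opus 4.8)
The plan is to reduce the convergence of the full net defining $\ex^U(f)$ to the convergence of the sequence of partial sums, which the previous lemma has already identified as $\Tr_\sub{\ki}^U(f) - f_\sub{BA}$. Recall that $\ex^U(f)$ is defined via the $\Sigma$ function of the hom-convergence UDC, which by the functor $G \colon \HAG \to \SCat{g}$ (Definition~\ref{def:HausCMon_ft}) amounts to the limit of the net $\sigma \colon \fpset{\Nset'} \to \C(A,B)$ of finite partial sums of the family $\{f_\sub{BA}\} \uplus \{f_\sub{BU} f_\sub{UU}^n f_\sub{UA}\}_\Nset$. Since the term $f_\sub{BA}$ may be handled separately (it is a single summand, contributing additively by $\Sigma$-bilinearity of composition and the neutral element axiom), it suffices to show that the net indexed by finite subsets $J \in \fpset{\Nset}$ sending $J$ to $\Sigma \{f_\sub{BU} f_\sub{UU}^j f_\sub{UA}\}_J$ converges to $\Tr_\sub{\ki}^U(f) - f_\sub{BA}$.

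First I would relate an arbitrary finite partial sum to a prefix partial sum. For any $J \in \fpset{\Nset}$, let $N = \max(J)$ and write $J_N = \{0,1,\ldots,N\}$; then $J_N = J \uplus \tilde{J}$ where $\tilde{J}$ is the set of gaps from Definition~\ref{def:tail_convergent}. Hence
\begin{equation*}
  \Sigma \{f_\sub{BU} f_\sub{UU}^j f_\sub{UA}\}_J = s(N) - \Sigma \{f_\sub{BU} f_\sub{UU}^j f_\sub{UA}\}_{\tilde{J}},
\end{equation*}
where $s(N)$ is the prefix partial sum from the previous lemma. The strategy is to show that, as $J$ grows in the directed set $(\fpset{\Nset},\subseteq)$, the first term $s(\max(J))$ approaches $\Tr_\sub{\ki}^U(f) - f_\sub{BA}$ while the gap term approaches $0$. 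The gap term vanishing is exactly condition~\eqref{eq:sum_of_gaps} in the definition of $U$-tail vanishing. For the prefix term, the $U$-tail vanishing hypothesis also provides either $\Lim{n} f_\sub{UU}^n f_\sub{UA} \keq 0$ or $\Lim{n} f_\sub{BU} f_\sub{UU}^n \keq 0$, so the previous lemma gives $\Lim{n \to \infty} s(n) \keq \Tr_\sub{\ki}^U(f) - f_\sub{BA}$.

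The main obstacle, and the step requiring care, is passing from a statement about the limit of the \emph{sequence} $s(N)$ as $N \to \infty$ to a statement about the \emph{net} indexed by $(\fpset{\Nset},\subseteq)$. I would argue directly from the definition of a limit point: given an open neighbourhood $V$ of $\Tr_\sub{\ki}^U(f) - f_\sub{BA}$, I use continuity of $+$ and of the inversion map (Proposition~\ref{prop:preserve_limit_2}) to find an open neighbourhood $W$ of $0$ and a neighbourhood $V'$ such that adding an element of $W$ to an element of $V'$ lands in $V$. From $\Lim{n} s(n) \keq \Tr_\sub{\ki}^U(f) - f_\sub{BA}$ I obtain $N_0$ with $s(N) \in V'$ for all $N \geq N_0$, and from~\eqref{eq:sum_of_gaps} I obtain a finite set $J_0$ such that for all $J \supseteq J_0$ the gap sum lies in $W$. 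Taking an upper bound $\hat{J} \supseteq J_0 \cup \{0,\ldots,N_0\}$ in $\fpset{\Nset}$ ensures that every $J \supseteq \hat{J}$ satisfies both $\max(J) \geq N_0$ and $J \supseteq J_0$, so $\Sigma \{f_\sub{BU} f_\sub{UU}^j f_\sub{UA}\}_J \in V$. Since $V$ was arbitrary, the net converges, establishing that $\ex^U(f)$ is well-defined and equals $\Tr_\sub{\ki}^U(f)$ after reincorporating the $f_\sub{BA}$ term. Throughout I would rely on Notation~\ref{not:UDC_limit}, which legitimises transporting limits in the Hausdorff abelian group $X_{A,B}$ back to the hom-object $\C(A,B)$ via the $\Sigma$-isomorphism $\varphi$.
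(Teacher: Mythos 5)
Your proposal is correct and follows essentially the same route as the paper's proof: the same decomposition of an arbitrary finite partial sum \(\Sigma \{f_\sub{BU} f_\sub{UU}^j f_\sub{UA}\}_J\) into the prefix sum \(s(\max(J))\) minus the gap sum over \(\tilde{J}\), the same appeal to the previous lemma for \(\Lim{n \to \infty} s(n) \keq \Tr_\sub{\ki}^U(f) - f_\sub{BA}\), the same use of condition~\eqref{eq:sum_of_gaps} to kill the gap term, and the same final reincorporation of \(f_\sub{BA}\). The only cosmetic difference is in the sequence-to-net passage, which you verify by a direct neighbourhood-and-upper-bound argument, while the paper gets it in one line by noting that \(s \circ \max\) is a subnet of \(s\) (Proposition~\ref{prop:subnet_limit}) and then invoking continuity of \(+\) and inversion (Proposition~\ref{prop:preserve_limit_2}); your argument is just an inlined proof of those two propositions.
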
 \begin{proof}
  The previous lemma established that the sequence
  \begin{equation*}
    s(n) = \sum_{j=0}^n f_\sub{BU} f_\sub{UU}^j f_\sub{UA}
  \end{equation*}
  is convergent under the claim's assumptions, with \(\lim s \keq \Tr_\sub{\ki}^U(f) - f_\sub{BA}\).
  It is evident that \(s \circ \max \colon \fpset{\Nset} \to \C(A,B)\) is a subnet of \(s\), so
  \begin{equation*}
    \Lim{J \in \fpset{\Nset}} s(\max(J)) \keq \Tr_\sub{\ki}^U(f) - f_\sub{BA}.
  \end{equation*}
  Let \(\sigma \colon \fpset{\Nset} \to \C(A,B)\) be the net of partial sums of the family
  \begin{equation*}
    \{f_\sub{BU} f_\sub{UU}^n f_\sub{UA}\}_\Nset.
  \end{equation*}
  For any \(J \in \fpset{\Nset}\) we may write:
  \begin{equation*}
    \sigma(J) = s(\max(J)) - \{f_\sub{BU} f_\sub{UU}^j f_\sub{UA}\}_{\tilde{J}}
  \end{equation*}
  and, using that \(f\) satisfies~\eqref{eq:sum_of_gaps} along with the fact that \(+\) and the inversion map are continuous, we obtain
  \begin{equation*}
    \Lim{J \in \fpset{\Nset}} \sigma(J) \keq \Lim{J \in \fpset{\Nset}} s(\max(J)) - \Lim{J \in \fpset{\Nset}} \{f_\sub{BU} f_\sub{UU}^j f_\sub{UA}\}_{\tilde{J}} \keq \left(\Tr_\sub{\ki}^U(f) - f_\sub{BA} \right) - 0
  \end{equation*}
  implying that \(\lim \sigma \keq \Tr_\sub{\ki}^U(f) - f_\sub{BA}\).
  Let \(\hat{\sigma}\) be the net of partial sums of the family
  \begin{equation*}
    \{f_\sub{BA}\} \uplus \{f_\sub{BU} f_\sub{UU}^n f_\sub{UA}\}_\Nset
  \end{equation*}
  and recall that, according to the definition of hom-convergence UDC and, in particular, that of the functor \(G \colon \HAG \to \SCat{g}\),
  \begin{equation*}
    \Sigma \left( \{f_\sub{BA}\} \uplus \{f_\sub{BU} f_\sub{UU}^n f_\sub{UA}\}_\Nset \right) \keq \lim \hat{\sigma}
  \end{equation*}
  and, hence, \(\ex^U(f) \keq \lim \hat{\sigma}\).
  Moreover, it is immediate that \(\lim \hat{\sigma} \keq f_\sub{BA} + \lim \sigma\) so that \(\ex^U(f) \keq \Tr_\sub{\ki}^U(f)\), as claimed.
\end{proof}

The conditions imposed in the definition of tail vanishing morphisms are quite strong but, as shown in the next section, they are satisfied by every morphism in \(\FdContraction\). This will be sufficient to prove that the execution formula makes \(\FdContraction\) a traced monoidal category.
Furthermore, it will be shown that the category \(\LSIContraction\) of linear shift invariant quantum processes over discrete time (see Section~\ref{sec:LSI}) is also traced using the corresponding execution formula.

\section{Quantum iterative loops and the execution formula}
\label{sec:quantum_ex_trace}

The focus of this section is on categories of Hilbert spaces and contractions.
Contractions have some useful properties with regards to convergence.
The section begins by introducing some well-known results that will be essential in the following discussions.

\subsection{Properties of contractions in \(\Hilb\) (\emph{Preamble})}

The following results are well-known in the literature of contractions and they will be essential in later discussions within this section.
The results in this subsection hold for arbitrary Hilbert spaces, thus including infinite-dimensional ones.

Every contraction can be seen as a component of an isometry acting on larger Hilbert spaces; such an isometry is commonly known as a \emph{dilation}.
Other dilations of contractions are known (perhaps the most widely known being Sz.-Nagy's dilation theorem: Theorem 4.1 from~\cite{Contraction}); the one established in the proposition below is arguably the simplest dilation and it will be sufficient for our purposes.

\begin{proposition}[Halmos~\cite{Halmos}] \label{prop:Halmos}
  Let \(f \in \Contraction(A,B)\) and define \(D_f = \sqrt{\id - f^\dagger f}\).
  Then, the linear map
  \begin{equation*}
    g = \begin{pmatrix}
      -f^\dagger & D_f \\
      D_{f^\dagger} & f
    \end{pmatrix}
  \end{equation*}
  is an isometry.
\end{proposition}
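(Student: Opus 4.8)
The plan is to show directly that $g^\dagger g = \id_{A \oplus B}$, which is the defining property of an isometry. The key definitions to unwind are $D_f = \sqrt{\id_A - f^\dagger f}$ acting on $A$ and $D_{f^\dagger} = \sqrt{\id_B - f f^\dagger}$ acting on $B$; these square roots are well-defined positive operators precisely because $f$ is a contraction, so that $\id - f^\dagger f$ and $\id - f f^\dagger$ are positive semi-definite (this is where the contraction hypothesis $\norm{f(a)} \leq \norm{a}$ enters, giving $\braket{a}{(\id - f^\dagger f)a} = \norm{a}^2 - \norm{f(a)}^2 \geq 0$). Being positive and self-adjoint, $D_f$ and $D_{f^\dagger}$ are self-adjoint, so $D_f^\dagger = D_f$ and $D_{f^\dagger}^\dagger = D_{f^\dagger}$, and they satisfy $D_f^2 = \id - f^\dagger f$ and $D_{f^\dagger}^2 = \id - f f^\dagger$.

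First I would write out the adjoint as a block matrix. Reading off from $g$, the adjoint is
\begin{equation*}
  g^\dagger = \begin{pmatrix}
    -f & D_{f^\dagger} \\
    D_f & f^\dagger
  \end{pmatrix}
\end{equation*}
using that the adjoint of a block operator is the conjugate-transpose of the block matrix with each entry replaced by its adjoint, together with the self-adjointness of the two defect operators. Then I would compute the product $g^\dagger g$ block-by-block. The top-left entry is $(-f)(-f^\dagger) + D_{f^\dagger} D_{f^\dagger} = f f^\dagger + (\id - f f^\dagger) = \id_B$; the bottom-right entry is $D_f D_f + f^\dagger f = (\id - f^\dagger f) + f^\dagger f = \id_A$; and the two off-diagonal entries are $(-f) D_f + D_{f^\dagger} f$ and $D_f(-f^\dagger) + f^\dagger D_{f^\dagger}$, which I must show vanish.

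The main obstacle — the only step that is not pure bookkeeping — is establishing the intertwining relation $f D_f = D_{f^\dagger} f$ (equivalently $D_f f^\dagger = f^\dagger D_{f^\dagger}$), which is exactly what makes the off-diagonal blocks cancel. The clean way to prove this is to first note the elementary identity $f(\id - f^\dagger f) = (\id - f f^\dagger) f$, i.e. $f f^\dagger f = f f^\dagger f$, which is trivially true; this shows $f$ intertwines the operators $\id - f^\dagger f$ and $\id - f f^\dagger$ inside their squares. Passing from the squares to the square roots requires a functional-calculus argument: since $D_f$ and $D_{f^\dagger}$ are the unique positive square roots, and $f$ intertwines $D_f^2$ with $D_{f^\dagger}^2$, one concludes $f$ intertwines $D_f$ with $D_{f^\dagger}$. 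I would justify this either by approximating the square root uniformly by polynomials in the relevant operators (Weierstrass approximation applied on a compact spectral interval, valid since the operators are bounded and positive) or by invoking the standard fact that $f T = S f$ with $S, T \geq 0$ implies $f\, p(T) = p(S)\, f$ for polynomials $p$ and hence $f\sqrt{T} = \sqrt{S}\, f$ by continuity. Granting this intertwining, the off-diagonal blocks become $-fD_f + D_{f^\dagger}f = -D_{f^\dagger}f + D_{f^\dagger}f = 0$ and its adjoint, so $g^\dagger g = \id_{A \oplus B}$ and $g$ is an isometry, completing the proof.
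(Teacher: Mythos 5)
Your proof is correct and follows essentially the same route as the paper's: write $g^\dagger$ as the block matrix $\begin{pmatrix} -f & D_{f^\dagger} \\ D_f & f^\dagger \end{pmatrix}$, compute $g^\dagger g$ block-by-block, and use self-adjointness of the defect operators together with the intertwining relation $f D_f = D_{f^\dagger} f$ to make the off-diagonal blocks vanish. The only difference is that the paper simply cites $D_f^\dagger = D_f$ and $f D_f = D_{f^\dagger} f$ from the literature (Section I.3 of the Sz.-Nagy--Foias reference), whereas you justify the intertwining relation yourself via the polynomial/functional-calculus argument --- a sound, self-contained filling-in of that citation.
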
 \begin{proof}
  In Section I.3 from~\cite{Contraction} it is shown that \(D_f^\dagger = D_f\) and \(f \circ D_f = D_{f^\dagger} \circ f\).
  Then,
  \begin{align*}
    g^\dagger g &=
    \begin{pmatrix}
      -f & D_{f^\dagger} \\
      D_f & f^\dagger \\
    \end{pmatrix}
    \begin{pmatrix}
      -f^\dagger & D_f \\
      D_{f^\dagger} & f \\
    \end{pmatrix} \\
    &=
    \begin{pmatrix}
      ff^\dagger + D_{f^\dagger}^2 & D_{f^\dagger} f - f D_f \\
      f^\dagger D_{f^\dagger} - D_f f^\dagger & D_f^2 + f^\dagger f
    \end{pmatrix} \\
    &=
    \begin{pmatrix}
      ff^\dagger + \id - ff^\dagger & f D_f - f D_f \\
      f^\dagger D_{f^\dagger} - f^\dagger D_{f^\dagger} & \id - f^\dagger f + f^\dagger f
    \end{pmatrix} \\
    &=
    \begin{pmatrix}
      \id & 0 \\
      0 & \id
    \end{pmatrix}
  \end{align*}
  and, hence, \(g^\dagger g = \id\), implying that \(g\) is an isometry.
\end{proof}

Contractions satisfy a useful property in terms of the sum of the norm of its components, as established below.

\begin{proposition} \label{prop:isometry_column}
  For any morphism \(f \in \Contraction(A \oplus A', B \oplus B')\) and any \(a \in A\),
  \begin{equation*}
    \norm{f_\sub{BA}(a)}^2 + \norm{f_\sub{B'A}(a)}^2 \leq \norm{a}^2
  \end{equation*}
\end{proposition}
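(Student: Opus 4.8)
The plan is to unpack the statement in terms of the direct-sum structure and reduce it to the defining inequality for contractions. The key observation is that the two components $f_\sub{BA}(a)$ and $f_\sub{B'A}(a)$ together describe the image of the vector $(a,0) \in A \oplus A'$ under $f$, projected onto the two summands $B$ and $B'$ of the codomain. Since the norm on a direct sum satisfies the Pythagorean identity $\norm{(b,b')}^2_{B \oplus B'} = \norm{b}^2_B + \norm{b'}^2_{B'}$ (this is exactly the inner product defining $A \oplus B$ as a Hilbert space, introduced earlier in the excerpt), the left-hand side is precisely $\norm{f(a,0)}^2_{B \oplus B'}$.

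First I would make this decomposition explicit. Recall from Notation~\ref{not:matrix_decomposition} that $f_\sub{BA} = \pi_B \circ f \circ \iota_A$ and $f_\sub{B'A} = \pi_{B'} \circ f \circ \iota_A$, where $\iota_A \colon A \to A \oplus A'$ is the injection $a \mapsto (a,0)$ and $\pi_B, \pi_{B'}$ are the orthogonal projections onto the respective summands. Writing $f(a,0) = (f_\sub{BA}(a), f_\sub{B'A}(a))$, the Pythagorean identity then gives
\begin{equation*}
  \norm{f_\sub{BA}(a)}^2 + \norm{f_\sub{B'A}(a)}^2 = \norm{f(a,0)}^2_{B \oplus B'}.
\end{equation*}

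Next I would invoke the contraction property of $f$: since $f \in \Contraction(A \oplus A', B \oplus B')$, by definition $\norm{f(v)}_{B \oplus B'} \leq \norm{v}_{A \oplus A'}$ for every $v \in A \oplus A'$. Applying this to $v = (a,0)$ and using $\norm{(a,0)}^2_{A \oplus A'} = \norm{a}^2_A + \norm{0}^2_{A'} = \norm{a}^2_A$ yields
\begin{equation*}
  \norm{f(a,0)}^2_{B \oplus B'} \leq \norm{(a,0)}^2_{A \oplus A'} = \norm{a}^2
\end{equation*}
which combined with the previous identity is exactly the claimed inequality.

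I do not anticipate a genuine obstacle here — the result is essentially a direct consequence of the definition of a contraction together with the orthogonal-direct-sum norm, and the entire argument is a short computation. The only point requiring a little care is being explicit that the direct sum $A \oplus A'$ carries the orthogonal inner product (so that injecting $a$ as $(a,0)$ does not change its norm and so that the Pythagorean splitting of the codomain norm holds); this is guaranteed by the definition of $\oplus$ on Hilbert spaces given earlier in the excerpt, so no separate verification is needed.
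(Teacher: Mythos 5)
Your proof is correct and is essentially the paper's argument: both identify $f(\iota_A(a))$ with the pair $(f_\sub{BA}(a), f_\sub{B'A}(a))$, apply the Pythagorean norm splitting on the orthogonal direct sum, and invoke the contraction inequality at the vector $(a,0)$. The only difference is presentational — the paper runs the same computation as a proof by contradiction, while you state it directly — which changes nothing of substance.
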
 \begin{proof}
  Suppose there is a vector \(a \in A\) such that
  \begin{equation*}
    \norm{f_\sub{BA}(a)}^2 + \norm{f_\sub{B'A}(a)}^2 > \norm{a}^2.
  \end{equation*}
  Then, the vector \(\iota_A(a) = \vector{a}{0}\) in \(A \oplus A'\) satisfies:
  \begin{equation*}
    \norm{f(\iota_A(a))}^2 = \norm{\vector{f_\sub{BA}(a)}{f_\sub{B'A}(a)}}^2 = \norm{f_\sub{BA}(a)}^2 + \norm{f_\sub{B'A}(a)}^2 > \norm{a}^2 = \norm{\iota_A(a)}^2.
  \end{equation*}
  But \(f\) is a contraction and we have reached a contradiction \(\norm{f(\iota_A(a))} > \norm{\iota_A(a)}\). Therefore, it must be that \(\norm{f_\sub{BA}(a)}^2 + \norm{f_\sub{B'A}(a)}^2 \leq \norm{a}^2\) for all \(a \in A\).
\end{proof}

\begin{corollary} \label{cor:isometry_column_zero}
  For any \(f \in \Contraction(A \oplus A', B \oplus B')\),
  \begin{equation*}
    \forall a \in A,\ \norm{f_\sub{BA}(a)} = \norm{a} \implies f_\sub{B'A} = 0
  \end{equation*}
\end{corollary}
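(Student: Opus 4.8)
The plan is to prove the contrapositive-free version directly by combining the previous proposition with the basic fact that in a Hilbert space a vector is zero iff its norm is zero. The hypothesis $\norm{f_\sub{BA}(a)} = \norm{a}$ for all $a \in A$, together with the inequality from Proposition~\ref{prop:isometry_column}, should immediately pin down $\norm{f_\sub{B'A}(a)}$ to be zero for every $a$, forcing $f_\sub{B'A}$ to be the zero map.

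First I would fix an arbitrary $a \in A$ and invoke Proposition~\ref{prop:isometry_column}, which gives
\begin{equation*}
  \norm{f_\sub{BA}(a)}^2 + \norm{f_\sub{B'A}(a)}^2 \leq \norm{a}^2.
\end{equation*}
Then I would substitute the hypothesis $\norm{f_\sub{BA}(a)} = \norm{a}$, i.e. $\norm{f_\sub{BA}(a)}^2 = \norm{a}^2$, which rearranges the inequality to $\norm{f_\sub{B'A}(a)}^2 \leq 0$. Since a norm is always non-negative, this forces $\norm{f_\sub{B'A}(a)}^2 = 0$, hence $\norm{f_\sub{B'A}(a)} = 0$, and therefore $f_\sub{B'A}(a) = 0$ because the only vector of norm zero in a Hilbert space is the zero vector.

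The final step is to observe that $a \in A$ was arbitrary, so $f_\sub{B'A}(a) = 0$ for all $a \in A$, which is precisely the statement that $f_\sub{B'A} = 0$ as a linear map. This completes the argument.

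I do not anticipate any genuine obstacle here: the corollary is an immediate consequence of the preceding proposition, and the only subtlety worth a sentence is the appeal to the non-degeneracy of the norm (a vector of zero norm is the zero vector) to pass from $\norm{f_\sub{B'A}(a)} = 0$ to $f_\sub{B'A}(a) = 0$. The proof is short enough that it needs no structural strategy beyond cleanly chaining these two facts.
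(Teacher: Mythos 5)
Your proof is correct and follows exactly the paper's own argument: invoke Proposition~\ref{prop:isometry_column}, substitute the hypothesis \(\norm{f_\sub{BA}(a)} = \norm{a}\) to force \(\norm{f_\sub{B'A}(a)} = 0\) for every \(a\), and conclude \(f_\sub{B'A} = 0\). Your explicit mention of norm non-degeneracy is a small clarity improvement over the paper's phrasing, but the approach is the same.
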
 \begin{proof}
  The previous proposition establishes that for all \(a \in A\):
  \begin{equation*}
    \norm{f_\sub{B'A}(a)}^2 \leq \norm{a}^2 - \norm{f_\sub{BA}(a)}^2.
  \end{equation*}
  Then, if \(\norm{f_\sub{BA}(a)} = \norm{a}\) we have that \(\norm{f_\sub{B'A}(a)} \leq 0\) for all \(a \in A\) and therefore \(f_\sub{B'A}\) is the zero map \(a \mapsto 0\).
\end{proof}

Contractions have a canonical decomposition into a direct sum of a unitary component and a nonunitary component, as established in Theorem~\ref{thm:Contraction_decomposition}.
The following definitions and proposition build towards this result.

\begin{definition}
  Let \(f \in \Hilb(H,H)\) and let \(H_0 \subseteq H\) be a closed subspace; let \(H_1\) be the orthogonal complement of \(H_0\) so that \(H = H_0 \oplus H_1\).
  If for every \(v \in H_0\) and every \(v' \in H_1\) it is satisfied that \(f(v) \in H_0\) and \(f(v') \in H_1\), we say that \(H_0\) \emph{reduces} \(f\).
  In such a case, \(f\) may be decomposed as follows:
  \begin{equation*}
    f = f_0 \oplus f_1 = \begin{pmatrix}
      f_0 & 0 \\
      0 & f_1
    \end{pmatrix}
  \end{equation*}
  where \(f_0 \colon H_0 \to H_0\) and \(f_1 \colon H_1 \to H_1\).
\end{definition}

Equivalently, reducing subspaces may be defined as subspaces that are invariant both on \(f\) and on its adjoint \(f^\dagger\).

\begin{proposition} \label{prop:reducing_space}
  Let \(f \in \Hilb(H,H)\) and let \(H_0 \subseteq H\) be a closed subspace; \(H_0\) reduces \(f\) iff for all \(v \in H_0\) both \(f(v) \in H_0\) and \(f^\dagger(v) \in H_0\).
\end{proposition}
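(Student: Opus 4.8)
The statement to prove is Proposition~\ref{prop:reducing_space}: for $f \in \Hilb(H,H)$ and a closed subspace $H_0 \subseteq H$, the subspace $H_0$ reduces $f$ if and only if $f(v) \in H_0$ and $f^\dagger(v) \in H_0$ for all $v \in H_0$. The plan is to prove both directions of the biconditional by exploiting the orthogonal decomposition $H = H_0 \oplus H_1$ (where $H_1 = H_0^\perp$ is closed since $H_0$ is) together with the defining adjoint relation $\braket{f(u)}{w} = \braket{u}{f^\dagger(w)}$.

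The forward direction is the easier one. Assuming $H_0$ reduces $f$, the definition immediately gives $f(v) \in H_0$ for all $v \in H_0$, so the only work is to show $f^\dagger(v) \in H_0$ for $v \in H_0$. First I would note that $f^\dagger(v) \in H_0$ is equivalent to $f^\dagger(v) \perp H_1$, i.e.\ $\braket{w}{f^\dagger(v)} = 0$ for every $w \in H_1$. Using the adjoint relation, $\braket{w}{f^\dagger(v)} = \braket{f(w)}{v}$, and since $H_0$ reduces $f$ we have $f(w) \in H_1$ for $w \in H_1$; as $v \in H_0$ and $H_1 \perp H_0$, this inner product vanishes. Hence $f^\dagger(v) \in H_1^\perp = H_0$, completing this direction.

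For the converse, assume $f(v) \in H_0$ and $f^\dagger(v) \in H_0$ for every $v \in H_0$. The first hypothesis already supplies the $f(v) \in H_0$ half of the reducing condition, so it remains to show $f(v') \in H_1$ for every $v' \in H_1$. As above, this is equivalent to $\braket{u}{f(v')} = 0$ for all $u \in H_0$. I would rewrite this via the adjoint as $\braket{u}{f(v')} = \braket{f^\dagger(u)}{v'}$; by the second hypothesis $f^\dagger(u) \in H_0$ for $u \in H_0$, and since $v' \in H_1 = H_0^\perp$ the inner product is zero. Thus $f(v') \in H_0^\perp = H_1$, and both invariance conditions of the reducing definition hold.

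I do not anticipate a genuine obstacle here: the argument is a routine symmetric manipulation of the adjoint identity, and the main points to be careful about are bookkeeping ones. Specifically, I would make sure to invoke that $H_1 = H_0^\perp$ is itself a closed subspace (so that $H = H_0 \oplus H_1$ is a genuine orthogonal decomposition and $H_1^\perp = H_0$, using $H_0 = H_0^{\perp\perp}$ for closed $H_0$ as in Proposition~\ref{prop:perp_inclusion}), and that characterizing membership in a closed subspace by orthogonality to its complement is valid. The only mild subtlety is recognizing that ``$f^\dagger(v) \in H_0$'' is best handled by testing against $H_1$ rather than directly, which is what makes the adjoint trick land cleanly.
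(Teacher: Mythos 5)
Your proof is correct and follows essentially the same route as the paper: both hinge on the orthogonal decomposition \(H = H_0 \oplus H_1\) and on transferring invariance between \(f\) and \(f^\dagger\) through the adjoint identity \(\braket{f(u)}{w} = \braket{u}{f^\dagger(w)}\). The paper merely packages the same computations as a \(2 \times 2\) block decomposition of \(f\), concluding that the off-diagonal blocks vanish (\(f_{10} = 0\) from \(f(H_0) \subseteq H_0\), and \(f_{01} = 0\) because \((f^\dagger)_{10} = f_{01}^\dagger = 0\)), which is exactly what your element-wise orthogonality tests establish.
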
 \begin{proof}
  Let \(H_1\) be the orthogonal complement of \(H_0\) and decompose \(f\) as a matrix:
  \begin{equation*}
    f = \begin{pmatrix}
      f_{00} \colon H_0 \to H_0 & f_{01} \colon H_1 \to H_0 \\
      f_{10} \colon H_0 \to H_1 & f_{11} \colon H_1 \to H_1
    \end{pmatrix}.
  \end{equation*}
  Assume \(f(v) \in H_0\) for all \(v \in H_0\), it follows that \(\pi_1 f(H_0) = \{0\}\) so that \(f_{10} = 0\); similarly, if \(f^\dagger(v) \in H_0\) for all \(v \in H_0\) then \(f_{01}^\dagger = 0\) and therefore \(f_{01} = 0\).
  Thus, \(f = f_{00} \oplus f_{11}\) and for all \(v' \in H_1\) it is satisfied that \(f(v') \in H_1\), implying \(H_0\) reduces \(f\).
  Conversely, if \(H_0\) reduces \(f\) it is immediate that \(f^\dagger(v) = f_{00}^\dagger(v) \in H_0\) for all \(v \in H_0\).
\end{proof}

\begin{definition}[Section I.3 from~\cite{Contraction}]
  A contraction \(f \in \Contraction(H,H)\) is \gls{completely nonunitary} if there is no reducing subspace \(H' \subseteq H\) such that the component \(f' \colon H' \to H'\) is unitary.
\end{definition}

\begin{theorem}[Theorem 3.2 from~\cite{Contraction}] \label{thm:Contraction_decomposition}
  Every contraction \(f \in \Contraction(H,H)\) has a reducing subspace \(H_0\) inducing a decomposition \(H = H_0 \oplus H_1\) such that the component \(f_0 \colon H_0 \to H_0\) is unitary and the component \(f_1 \colon H_1 \to H_1\) is completely nonunitary.
  This decomposition is uniquely determined, with
  \begin{equation*}
    H_0 = \{v \in H \mid \forall n \in \Nset,\ \norm{f^n(v)} = \norm{v} = \norm{f^{\dagger n}(v)}\}.
  \end{equation*}
\end{theorem}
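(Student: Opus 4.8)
The plan is to take the set \(H_0\) exhibited in the statement as the \emph{definition} of the reducing subspace and then verify, in order, that it is a closed subspace, that it reduces \(f\), that the induced components are unitary and completely nonunitary respectively, and finally that the decomposition is unique. First I would rewrite \(H_0\) as the intersection \(M \cap M^*\), where \(M = \{v \mid \norm{f^n v} = \norm{v} \text{ for all } n \in \Nset\}\) records forward norm-preservation and \(M^* = \{v \mid \norm{f^{\dagger n} v} = \norm{v} \text{ for all } n \in \Nset\}\) records norm-preservation under the adjoint. The key reformulation is that, since every power \(f^n\) is again a contraction, the operator \(\id - f^{\dagger n} f^n\) is positive, so \(\norm{f^n v} = \norm{v}\) holds exactly when \(v \in \ker(\id - f^{\dagger n} f^n)\). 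Hence \(M = \bigcap_n \ker(\id - f^{\dagger n} f^n)\) is an intersection of closed subspaces, and likewise for \(M^*\); this shows \(H_0\) is a closed subspace with no ad hoc argument.

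Next I would establish that \(H_0\) reduces \(f\) by checking, via Proposition~\ref{prop:reducing_space}, that \(f(H_0) \subseteq H_0\) and \(f^\dagger(H_0) \subseteq H_0\). The crucial observation is that for \(v \in M\) the defect operator satisfies \(D_f v = 0\), because \(\norm{D_f v}^2 = \norm{v}^2 - \norm{f v}^2 = 0\); since \(D_f^2 = \id - f^\dagger f\) this gives the identity \(f^\dagger f v = v\), and symmetrically \(v \in M^*\) yields \(f f^\dagger v = v\). Combining these with the shift relations \(f^{\dagger n}(f v) = f^{\dagger(n-1)} v\) and \(f^n(f v) = f^{n+1} v\), a short computation confirms \(f v \in M \cap M^* = H_0\), and the dual computation gives \(f^\dagger v \in H_0\).

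With \(H_0\) reducing \(f\), I would write \(f = f_0 \oplus f_1\) on \(H = H_0 \oplus H_1\) with \(H_1 = H_0^\perp\). On \(H_0\) the identities \(f_0^\dagger f_0 = \id\) and \(f_0 f_0^\dagger = \id\) (inherited from \(f^\dagger f v = v\) and \(f f^\dagger v = v\)) show \(f_0\) is invertible with inverse \(f_0^\dagger\), hence unitary. To see \(f_1\) is completely nonunitary I would argue by contradiction: any reducing subspace \(H' \subseteq H_1\) on which \(f_1\) restricts to a unitary also reduces \(f\), and unitarity forces \(\norm{f^n v} = \norm{v} = \norm{f^{\dagger n} v}\) for all \(v \in H'\) and all \(n\), so \(H' \subseteq H_0 \cap H_0^\perp = \{0\}\).

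Finally, for uniqueness, given any second decomposition \(H = K_0 \oplus K_1\) with \(f|_{K_0}\) unitary and \(f|_{K_1}\) completely nonunitary, the unitarity of \(f|_{K_0}\) immediately yields \(K_0 \subseteq H_0\). For the reverse inclusion I would take \(v \in H_0\), split it as \(v = v_0 + v_1\) with \(v_0 \in K_0\) and \(v_1 \in K_1\), and use orthogonality of the reducing splitting together with \(\norm{f^n v_0} = \norm{v_0}\) to deduce \(\norm{f^n v_1} = \norm{v_1} = \norm{f^{\dagger n} v_1}\) for all \(n\); this places \(v_1\) in the unitary part of \(f|_{K_1}\), which is trivial by complete nonunitarity, so \(v = v_0 \in K_0\). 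I expect this uniqueness step to be the main obstacle: its clean execution relies on the self-referential fact that the very construction of \(H_0\), applied instead to the completely nonunitary piece \(f|_{K_1}\), returns \(\{0\}\), and getting the orthogonal bookkeeping right while transporting the norm-preservation conditions through the two distinct reducing splittings is the most delicate part. The verification that \(H_0\) reduces \(f\) is conceptually the other load-bearing step, but it collapses to the elementary defect-operator identities above.
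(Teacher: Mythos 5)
Your proposal is correct and follows essentially the same route as the paper's proof: you define \(H_0\) through the kernels of the defect operators \(\id - f^{\dagger n}f^n\) (the paper writes this as \(\bigcap_n (K_{f^n} \cap K_{f^{\dagger n}})\) with \(K_g = \ker D_g\)), use \(\norm{D_f v}^2 = \norm{v}^2 - \norm{f(v)}^2\) to get \(f^\dagger f v = v\) on \(H_0\), run the same shift computations to show \(H_0\) reduces \(f\), and obtain complete nonunitarity of \(f_1\) by the same contradiction argument. The one place you go beyond the paper is uniqueness: the paper's paraphrased proof stops once \(f_1\) is shown to be completely nonunitary and never actually verifies the uniqueness claim in the statement, whereas your two-inclusion argument --- \(K_0 \subseteq H_0\) from unitarity of \(f\vert_{K_0}\), and \(H_0 \subseteq K_0\) by splitting \(v = v_0 + v_1\), transporting the norm conditions to \(v_1\) via orthogonality of the reducing decomposition, and applying the construction itself to the completely nonunitary piece \(f\vert_{K_1}\) --- correctly fills that gap.
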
 \begin{proof}
  The proof is paraphrased from~\cite{Contraction} (Theorem 3.2) and included here for completeness.
  For any contraction \(f \in \Contraction(H,H)\) let \(D_f \colon H \to H\) be the linear map \(D_f = \sqrt{\id - f^\dagger f}\) and let \(K_f = \ker(D_f)\), which is evidently a closed subspace of \(H\) and thus \(K_f \in \Hilb\).
  Notice that \(D_f\) is self-adjoint and for all \(v \in H\):
  \begin{equation*}
    \norm{D_f(v)}^2 = \braket{v}{D_f^2(v)} = \braket{v}{(\id - f^\dagger f)(v)} = \braket{v}{v} - \braket{f(v)}{f(v)} = \norm{v}^2 - \norm{f(v)}^2
  \end{equation*}
  and, hence, \(v \in K_f\) if and only if \(\norm{f(v)} = \norm{v}\) so \(K_f\) may be equivalently defined as the Hilbert space:
  \begin{equation} \label{eq:Kf_alternative}
    K_f = \{v \in H \mid \norm{f(v)} = \norm{v}\}.
  \end{equation}
  Moreover, if \(v \in K_f\) then \(D_f^2(v) = (\id - f^\dagger f)(v) = 0\) and we obtain the implication:
  \begin{equation} \label{eq:in_K_implies_isometric}
    \norm{f(v)} = \norm{v} \implies f^\dagger f (v) = v.
  \end{equation}
  Let \(H_0\) be the following subspace:
  \begin{equation*}
    H_0 = \bigcap_{n=0}^\infty (K_{f^n} \cap K_{f^{\dagger n}})
  \end{equation*}
  which is a Hilbert space since any intersection of closed subspaces is closed.
  If we expand the right hand side using~\eqref{eq:Kf_alternative}, we obtain the definition of \(H_0\) given in the claim.
  For any \(v \in H_0\) and all \(n \in \Nset\):
  \begin{align*}
    &\norm{f^nf(v)} = \norm{f^{n+1}(v)} = \norm{v} = \norm{f(v)} \\
    &\norm{f^{\dagger n}f(v)} = \norm{f^{\dagger n-1} f^\dagger f(v)} = \norm{f^{\dagger n-1}(v)} = \norm{v} = \norm{f(v)}
  \end{align*}
  where the third equality of the second equation uses~\eqref{eq:in_K_implies_isometric}.
  Therefore, \(f(v) \in H_0\) and a similar argument shows that \(f^\dagger(v) \in H_0\), implying that \(H_0\) reduces \(f\).

  Let \(H_1\) be the orthogonal complement of \(H_0\) and let \(f = f_0 \oplus f_1\) according to the decomposition \(H = H_0 \oplus H_1\).
  For any \(v \in H_0\), we have that \(v \in K_f\) so implication~\eqref{eq:in_K_implies_isometric} establishes that \(f_0^\dagger f_0 = \id_{H_0}\); conversely, we have that \(v \in K_{f^\dagger}\) so that \(f_0 f_0^\dagger = \id_{H_0}\), implying \(f_0\) is unitary.
  On the other hand, let \(H_2 \subseteq H_1\) be a closed subspace that reduces \(f\) and assume the corresponding \(f_2 \colon H_2 \to H_2\) is unitary.
  Then, for every \(v \in H_2\) and \(n \in \Nset\) we would have \(\norm{f^n(v)} = \norm{v} = \norm{f^{\dagger n}(v)}\) and, hence, \(v \in H_0\); but \(H_2 \subseteq H_1\) and \(H_1\) is orthogonal to \(H_0\), so we reach a contradiction.
  Therefore, we conclude that \(f_2\) cannot be unitary and, hence, \(f_1\) is completely nonunitary, as claimed.
\end{proof}

Finally, the following result will be essential to proving that \((\FdContraction,\oplus,\ex)\) is a totally traced category.

\begin{proposition}[Bartha~\cite{Bartha}] \label{prop:Bartha_isometry}
  Let \(f \in \Isometry(A \oplus U, B \oplus U)\) where \(\id\!-\!f_\sub{UU}\) is invertible, then
  \begin{equation*}
    h = f_\sub{BA} + f_\sub{BU} (\id\!-\!f_\sub{UU})^{-1} f_\sub{UA}
  \end{equation*}
  is an isometry.
\end{proposition}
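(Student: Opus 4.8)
The plan is to exploit the physical meaning of $h$ as the input--output behaviour of the feedback loop obtained by wiring the $U$-output of $f$ back into its $U$-input, and in particular the fact that $(\id\!-\!f_\sub{UU})^{-1}f_\sub{UA}$ computes the \emph{steady state} that makes the loop self-consistent. Concretely, I would fix an arbitrary $a \in A$ and set
\begin{equation*}
  u = (\id\!-\!f_\sub{UU})^{-1}f_\sub{UA}(a) \in U,
\end{equation*}
which is a genuine vector because $\id\!-\!f_\sub{UU}$ is assumed invertible (with bounded inverse, so no dimensional restriction is needed). The first observation is the fixpoint identity $f_\sub{UU}u + f_\sub{UA}a = u$, which is simply a rearrangement of $(\id\!-\!f_\sub{UU})u = f_\sub{UA}a$.

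Next I would read off the action of $f$ on the vector $a \oplus u \in A \oplus U$ using its block decomposition (recall $f_\sub{YX} = \pi_Y f \iota_X$, so that $f(a \oplus u) = (f_\sub{BA}a + f_\sub{BU}u) \oplus (f_\sub{UA}a + f_\sub{UU}u)$). The $U$-component equals $f_\sub{UA}a + f_\sub{UU}u = u$ by the fixpoint identity, while the $B$-component is $f_\sub{BA}a + f_\sub{BU}u = f_\sub{BA}a + f_\sub{BU}(\id\!-\!f_\sub{UU})^{-1}f_\sub{UA}(a) = h(a)$. Hence $f(a \oplus u) = h(a) \oplus u$: feeding the self-consistent loop state through $f$ returns the \emph{same} loop state, now paired with the output $h(a)$.

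The conclusion is then immediate from $f$ being an isometry. Using $\norm{(x \oplus y)}^2 = \norm{x}^2 + \norm{y}^2$ for the direct-sum inner product, norm preservation of $f$ gives $\norm{a}^2 + \norm{u}^2 = \norm{a \oplus u}^2 = \norm{f(a \oplus u)}^2 = \norm{h(a)}^2 + \norm{u}^2$; cancelling the finite quantity $\norm{u}^2$ yields $\norm{h(a)} = \norm{a}$. Since $a$ was arbitrary and $h$ is manifestly a bounded linear map (a sum and composition of bounded operators with the bounded inverse $(\id\!-\!f_\sub{UU})^{-1}$), this shows $h$ is an isometry.

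There is no serious obstacle here: the whole argument hinges on spotting the steady-state vector $u$ and the fixpoint identity, after which everything is forced. The only points demanding care are (i) justifying that $u$ has finite norm --- guaranteed by boundedness of the inverse, which is also what makes the cancellation of $\norm{u}^2$ legitimate --- and (ii) confirming that $h$ is a bona fide morphism of $\Isometry$, i.e. an everywhere-defined, bounded, norm-preserving linear map, all of which follow from the hypotheses. As a cross-check (and a fallback if a purely algebraic argument were preferred), one can instead expand $h^\dagger h$, substitute the four relations coming from $f^\dagger f = \id_{A \oplus U}$, and verify that every term beyond $\id_A$ cancels; I expect that computation to reduce, after conjugating by $(\id\!-\!f_\sub{UU}^\dagger)$ and $(\id\!-\!f_\sub{UU})$, to an identity that collapses to zero.
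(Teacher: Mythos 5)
Your proof is correct, and it takes a genuinely different route from the paper's. The paper's proof (following Bartha) is precisely the algebraic computation you relegate to a fallback: write \(f\) in block form, extract the four identities
\(a^\dagger a + c^\dagger c = \id\), \(a^\dagger b + c^\dagger d = 0\), \(b^\dagger a + d^\dagger c = 0\), \(b^\dagger b + d^\dagger d = \id\)
from \(f^\dagger f = \id\), expand \(h^\dagger h\), and show after conjugating by \((\id - d^\dagger)\) and \((\id - d)\) that everything collapses to \(\id\). Your main argument instead evaluates \(f\) on the steady-state vector \(a \oplus u\) with \(u = (\id - f_\sub{UU})^{-1}f_\sub{UA}(a)\), obtains \(f(a \oplus u) = h(a) \oplus u\) from the fixpoint identity, and cancels \(\norm{u}^2\) in the norm equation --- shorter, conceptually transparent, and it makes visible \emph{why} closing the loop preserves isometry. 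It has two further advantages: it needs no adjoint gymnastics, and the very same computation with ``\(\leq\)'' in place of ``\(=\)'' immediately gives the contraction case (the paper's Corollary~\ref{cor:Bartha_contraction}), bypassing the Halmos dilation used there. What the paper's approach buys in exchange is that it is point-free: it manipulates only operators, adjoints and the categorical identities, so it transfers to settings where one cannot argue by evaluating on vectors, whereas your argument is tied to working with elements of Hilbert spaces (which is entirely legitimate here). Your attention to the two analytic points --- boundedness of \((\id - f_\sub{UU})^{-1}\) via the bounded inverse theorem, and finiteness of \(\norm{u}\) justifying the cancellation --- closes the only gaps the element-based argument could have had.
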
 \begin{proof}
  The proof is paraphrased from~\cite{Bartha} and included here for completeness.
  For the sake of brevity, use the following shorthand for the components of \(f\):
  \begin{equation*}
    f = \begin{pmatrix}
      a & b \\
      c & d
    \end{pmatrix}
    = \begin{pmatrix}
      f_\sub{BA} & f_\sub{BU} \\
      f_\sub{UA} & f_\sub{UU}
    \end{pmatrix}
  \end{equation*}
  Considering that \(f\) is an isometry, we have that \(f^\dagger f = \id\); this gives us the following equalities:
  \begin{align}
    a^\dagger a + c^\dagger c &= \id_\sub{A} \label{eq:Bartha_4} \\
    a^\dagger b + c^\dagger d &= 0_\sub{AU} \label{eq:Bartha_5} \\
    b^\dagger a + d^\dagger c &= 0_\sub{UA} \label{eq:Bartha_6} \\
    b^\dagger b + d^\dagger d &= \id_\sub{U}. \label{eq:Bartha_7}
  \end{align}
  To prove that \(h\) is an isometry we show that \(h^\dagger h = \id\), where
  \begin{equation*}
    h^\dagger h = (a + b (\id - d)^{-1} c)^\dagger (a + b (\id - d)^{-1} c).
  \end{equation*}
  Using that \((g \circ f)^\dagger = f^\dagger \circ g^\dagger\) and \(((\id - g)^{-1})^\dagger = (\id - g^\dagger)^{-1}\), \(h^\dagger h\) is equal to:
  \begin{equation*}
    a^\dagger a + a^\dagger b (\id - d)^{-1} c + c^\dagger (\id - d^\dagger)^{-1} b^\dagger a + c^\dagger (\id - d^\dagger)^{-1} b^\dagger b (\id - d)^{-1} c.
  \end{equation*}
  Using~\eqref{eq:Bartha_5} and~\eqref{eq:Bartha_6}, this is equal to:
  \begin{equation*}
    a^\dagger a - c^\dagger d (\id - d)^{-1} c - c^\dagger (\id - d^\dagger)^{-1} d^\dagger c + c^\dagger (\id - d^\dagger)^{-1} b^\dagger b (\id - d)^{-1} c
  \end{equation*}
  and, thus, \(h^\dagger h = a^\dagger a + c^\dagger g c\) where
  \begin{equation*}
    g = (\id - d^\dagger)^{-1} b^\dagger b (\id - d)^{-1} - d (\id - d)^{-1} - (\id - d^\dagger)^{-1} d^\dagger.
  \end{equation*}
  Thanks to~\eqref{eq:Bartha_4}, to prove that \(h^\dagger h = \id\) it is sufficient to show that \(g = \id\). We show \(g = \id\) by applying \((\id - d^\dagger) \circ -\) and \(- \circ (\id - d)\) to both sides of the equation, then rearranging the terms to reach~\eqref{eq:Bartha_7}:
  \begin{align*}
    g = \id &\iff b^\dagger b - (\id - d^\dagger) d - d^\dagger (\id - d) = (\id - d^\dagger) (\id - d) \\
      &\iff b^\dagger b - d + d^\dagger d - d^\dagger + d^\dagger d = \id - d - d^\dagger + d^\dagger d \\
      &\iff b^\dagger b + d^\dagger d = \id
  \end{align*}
  Consequently, \(h^\dagger h = \id\) and \(h\) is an isometry, as claimed.
\end{proof}

\begin{corollary} \label{cor:Bartha_contraction}
  Let \(f \in \Contraction(A \oplus U, B \oplus U)\) where \(\id\!-\!f_\sub{UU}\) is invertible, then
  \begin{equation*}
    h = f_\sub{BA} + f_\sub{BU} (\id\!-\!f_\sub{UU})^{-1} f_\sub{UA}
  \end{equation*}
  is a contraction.
\end{corollary}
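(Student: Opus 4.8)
The plan is to reduce the claim to the isometric case (Proposition~\ref{prop:Bartha_isometry}) by exhibiting $f$ as a sub-block of a suitable isometry, chosen so that Bartha's feedback formula for the isometry contains $h$ verbatim. The crucial requirement is that the dilation must leave the feedback register $U$ and all four components $f_\sub{BA}, f_\sub{BU}, f_\sub{UA}, f_\sub{UU}$ untouched; only then will $\id\!-\!f_\sub{UU}$ remain the operator being inverted and $h$ reappear as a component of the dilated trace.

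First I would dilate $f$ on the \emph{output} side only. With $D_f = \sqrt{\id - f^\dagger f}$ as in Proposition~\ref{prop:Halmos} (well-defined and self-adjoint, since $f$ being a contraction gives $\id - f^\dagger f \geq 0$), the map
\begin{equation*}
  \tilde f = \begin{pmatrix} f \\ D_f \end{pmatrix} \colon A \oplus U \to (B \oplus U) \oplus (A \oplus U)
\end{equation*}
is an isometry, because $\tilde f^\dagger \tilde f = f^\dagger f + D_f^2 = f^\dagger f + (\id - f^\dagger f) = \id$. Since this dilation enlarges only the codomain, the domain is still exactly $A \oplus U$, and the blocks $f_\sub{BA}, f_\sub{BU}, f_\sub{UA}, f_\sub{UU}$ survive unchanged as components of $\tilde f$. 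Next I would reorganise the codomain along the canonical (unitary) isomorphism $(B \oplus U) \oplus (A \oplus U) \iso \hat B \oplus U$, where $\hat B = B \oplus A \oplus U$ and the distinguished $U$ summand is the copy carrying $f_\sub{UU}$. Composing $\tilde f$ with this isomorphism yields an isometry $A \oplus U \to \hat B \oplus U$ whose $U$-from-$U$ block is still $f_\sub{UU}$, whose $U$-from-$A$ block is $f_\sub{UA}$, and whose $B$-components of the $\hat B$-from-$A$ and $\hat B$-from-$U$ blocks are $f_\sub{BA}$ and $f_\sub{BU}$ respectively.

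Because $\id\!-\!f_\sub{UU}$ is invertible by hypothesis, Proposition~\ref{prop:Bartha_isometry} applies to this isometry and produces an isometry $\tilde h \colon A \to \hat B$ given by the same feedback formula. Reading off its $B$-component $\pi_B \circ \tilde h$ and using that composition is linear (distributes over addition), one obtains exactly $\pi_B \tilde h = f_\sub{BA} + f_\sub{BU}(\id - f_\sub{UU})^{-1} f_\sub{UA} = h$. Finally, since $\tilde h$ is an isometry (hence a contraction) and $\pi_B$ is a quasi-projection (hence a contraction), $h = \pi_B \circ \tilde h$ is a composite of contractions and therefore a contraction; alternatively, Proposition~\ref{prop:isometry_column} applied to $\tilde h$ gives $\norm{h(a)} \leq \norm{\tilde h(a)} = \norm{a}$ directly. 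The main obstacle --- and really the only design decision --- is the first step: one must resist the full two-sided Halmos dilation, which would inject a second copy of $U$ and scramble the block structure, and instead use the one-sided dilation $\bigl(\begin{smallmatrix} f \\ D_f \end{smallmatrix}\bigr)$ precisely so that the feedback register and the four relevant blocks are preserved.
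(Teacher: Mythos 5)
Your proof is correct, and it follows the same overall strategy as the paper: dilate \(f\) to an isometry in such a way that \(f_\sub{UU}\) survives as the feedback block, invoke Proposition~\ref{prop:Bartha_isometry}, and then recover \(h\) as a component of the resulting isometry, hence a composite of contractions. The only difference is the choice of dilation. You use the one-sided column dilation \(\tilde f = \bigl(\begin{smallmatrix} f \\ D_f \end{smallmatrix}\bigr)\), which enlarges only the codomain, so at the end you need only the quasi-projection \(\pi_B\). The paper instead takes the full two-sided Halmos dilation \(g = \bigl(\begin{smallmatrix} -f^\dagger & D_f \\ D_{f^\dagger} & f \end{smallmatrix}\bigr)\) of Proposition~\ref{prop:Halmos}, regroups its domain as \((B \oplus U \oplus A) \oplus U\) and codomain as \((A \oplus U \oplus B) \oplus U\) — the distinguished copies of \(U\) being those of \(f\) itself, which sits in the lower-right block of \(g\) — applies Proposition~\ref{prop:Bartha_isometry} to get an isometry \(h'\), and extracts \(h = \pi_B \circ h' \circ \iota_A\), which requires an injection as well as a projection. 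Your variant is marginally leaner, and both verifications of the final norm bound (contraction-composite, or Proposition~\ref{prop:isometry_column} applied to \(\tilde h\)) are fine.

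Your closing remark, however — that one \emph{must} resist the two-sided dilation because the second copy of \(U\) would ``scramble the block structure'' — is mistaken, and the paper's own proof is the counterexample. In the two-sided dilation the extra copies of \(U\) are simply not designated as the feedback register: the extra domain copy is absorbed into the enlarged input space \(B \oplus U \oplus A\) and the extra codomain copy into the enlarged output space \(A \oplus U \oplus B\), while the \((U,U)\) block of the regrouped \(g\) is still exactly \(f_\sub{UU}\), since the lower-right block of \(g\) is \(f\) itself. Consequently \(\pi_B\, h'\, \iota_A\) collapses, by bilinearity of composition, to \(f_\sub{BA} + f_\sub{BU}(\id - f_\sub{UU})^{-1} f_\sub{UA} = h\), exactly as in your one-sided version. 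The real design constraint is not one-sidedness of the dilation but only that the fed-back \(U\) be the \(U\) of \(f\) and nothing more; both dilations satisfy it.
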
 \begin{proof}
  Proposition~\ref{prop:Halmos} established that for every contraction \(f\) the following linear map:
  \begin{equation*}
    g = \begin{pmatrix}
      -f^\dagger & D_f \\
      D_{f^\dagger} & f
    \end{pmatrix}
  \end{equation*}
  is an isometry, where \(D_f = \sqrt{\id - f^\dagger f}\).
  Next, we decompose \(g \colon B\!\oplus\!U\!\oplus\!A\!\oplus\!U \to A\!\oplus\!U\!\oplus\!B\!\oplus\!A\) as follows:
  \begin{equation*}
    g = \begin{pmatrix}
      a \colon B\!\oplus\!U\!\oplus\!A \to A\!\oplus\!U\!\oplus\!B \quad& b \colon U \to A\!\oplus\!U\!\oplus\!B \\
      c \colon B\!\oplus\!U\!\oplus\!A \to U & f_\sub{UU}
    \end{pmatrix}
  \end{equation*}
  Since \(g\) is an isometry and \(\id\!-\!f_\sub{UU}\) is invertible by assumption, the previous proposition implies that:
  \begin{equation*}
    h' = a + b (\id\!-\!f_\sub{UU})^{-1} c
  \end{equation*}
  is an isometry.
  It is straightforward to check that \(h = \pi_\sub{B} \circ h' \circ \iota_\sub{A}\) where the injection \(\iota_\sub{A}\) is an isometry and the projection \(\pi_\sub{B}\) is a contraction.
  Therefore, \(h\) is a contraction.
\end{proof}

\subsection{The execution formula in \(\FdContraction\)}
\label{sec:ex_FdContraction}

The goal of this subsection is to prove that \((\FdContraction,\oplus,\ex)\) is a (totally) traced monoidal category, where \(\ex\) is the execution formula.
To do so, the notions of hom-convergence UDC and \(U\)-tail vanishing morphisms introduced in Section~\ref{sec:UDC_limit} will be used.
In essence, the strategy is to derive the validity of \(\ex\) as a categorical trace via the kernel-image trace on \(\FdHilb\).
Once the result that \((\FdContraction,\oplus,\ex)\) is a totally traced category has been established, it follows that \(\FdIsometry\) and \(\FdUnitary\) are themselves totally traced; the latter results were previously proven by Bartha~\cite{Bartha} using the framework of matrix iteration theories.

Before the formal discussion begins, it is worth presenting a simple example that will illustrate some of the intricacies of these results.
Let the linear map \(H \colon \Cset \oplus \Cset \to \Cset \oplus \Cset\) be described by the matrix:
\begin{equation} \label{eq:Hadamard}
  H = \begin{pmatrix}
    h_{00} & h_{01} \\
    h_{10} & h_{11} 
  \end{pmatrix} 
   = \frac{1}{\sqrt{2}}\begin{pmatrix}
    1 & 1 \\
    1 & -1    
  \end{pmatrix}.
\end{equation}
Notice that \(H\) is unitary and hence a morphism in \(\FdContraction\).
We may evaluate the execution formula:
\begin{equation*}
  \ex^\Cset(H) = \Sigma \left( \{h_{00}\} \uplus \{h_{01} h_{11}^n h_{10}\}_\Nset \right) = \tfrac{1}{\sqrt{2}} + \sum_{n=0}^\infty \tfrac{1}{\sqrt{2}} \left(-\tfrac{1}{\sqrt{2}} \right)^n \tfrac{1}{\sqrt{2}}
\end{equation*}
and realise that \(\ex^\Cset(H) = 1\), which is a unitary \(\Cset \to \Cset\).
Similarly, \(\ex^\Cset(\sigma H \sigma) = 1\) as well and \(\ex^\Cset(\sigma H) = -1 = \ex^\Cset(H \sigma )\), where \(\sigma\) is the symmetric braiding in \((\FdContraction,\oplus,\{0\})\).
What is peculiar about this situation is that, if we were to sum all elements from \(\ex^\Cset(\sigma H \sigma)\) but \(h_{11}\), the result is larger than \(1\):
\begin{equation*}
  \Sigma \{h_{10} h_{00}^n h_{01}\}_\Nset = \sum_{n=0}^\infty \tfrac{1}{\sqrt{2}} \left(\tfrac{1}{\sqrt{2}}\right)^n \tfrac{1}{\sqrt{2}} = \tfrac{1}{2 - \sqrt{2}} > 1
\end{equation*}
and it is not until we add in \(h_{11} = - \tfrac{1}{\sqrt{2}}\) that the result becomes \(1\).
The key insight to be drawn from this example is that, even when the result of the execution formula is a contraction, its partial sums need not be contractions.

The following lemma establishes that every contraction in \(\Hilb\) satisfies one of the conditions in the definition of tail vanishing morphisms (Definition~\ref{def:tail_convergent}).

\begin{lemma} \label{lem:lim_last_term}
  Let \(f \colon A \oplus U \to B \oplus U\) be a contraction in the hom-convergence UDC \(\Hilb\) defined in terms of the strong operator topology (see Proposition~\ref{prop:Hilb_hom-convergence}).
  Then,
  \begin{equation*}
    \Lim{n \to \infty} f_\sub{BU} f_\sub{UU}^n \keq 0.
  \end{equation*}
\end{lemma}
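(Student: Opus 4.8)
The plan is to use that, by Proposition~\ref{prop:Hilb_hom-convergence} together with Notation~\ref{not:UDC_limit}, a sequence in $\Hilb(U,B)$ converges to the zero map in the strong operator topology precisely when it converges pointwise to $0$. Concretely, the base sets of $\tau_\sub{\mathrm{SOT}}$ around the zero map are the sets $B^0_{S,\epsilon}$ of~\eqref{eq:SOT_ball}, so I would show that for every finite $S \subseteq U$ and every $\epsilon > 0$ there is an $N$ with $\norm{f_\sub{BU}(f_\sub{UU}^n(u))} < \epsilon$ for all $u \in S$ and all $n \geq N$. Since $S$ is finite, this reduces to establishing the pointwise statement $\norm{f_\sub{BU}(f_\sub{UU}^n(u))} \to 0$ for each fixed $u \in U$, after which one takes a maximum of the resulting thresholds over $S$.

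The key estimate comes from the contraction property. For any $u \in U$, evaluating $f$ on $\iota_U(u) \in A \oplus U$ and using that $f$ is a contraction (equivalently, invoking Proposition~\ref{prop:isometry_column} after reordering with the symmetric braiding so that $U$ is the first input factor) yields
\begin{equation*}
  \norm{f_\sub{BU}(u)}^2 + \norm{f_\sub{UU}(u)}^2 \leq \norm{u}^2 .
\end{equation*}
In particular $f_\sub{UU}$ is itself a contraction, so $\norm{f_\sub{UU}(u)} \leq \norm{u}$.

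With this in hand I would run a telescoping, monotone-convergence argument. Fixing $u \in U$ and writing $u_n = f_\sub{UU}^n(u)$, the displayed inequality applied to $u_n$ gives $\norm{f_\sub{BU}(u_n)}^2 \leq \norm{u_n}^2 - \norm{u_{n+1}}^2$. Because $f_\sub{UU}$ is a contraction, the real sequence $\norm{u_n}^2$ is nonincreasing and bounded below by $0$, hence convergent; consequently its consecutive differences $\norm{u_n}^2 - \norm{u_{n+1}}^2$ tend to $0$. Squeezing then forces $\norm{f_\sub{BU}(f_\sub{UU}^n(u))}^2 \to 0$, which is the required pointwise convergence. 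Assembling over the finite set $S$ as above yields $\Lim{n \to \infty} f_\sub{BU} f_\sub{UU}^n \keq 0$.

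I expect no deep obstacle here; the only points needing care are (i) extracting the per-column inequality from the contraction property correctly (a reordering by the braiding, or a direct appeal to Proposition~\ref{prop:isometry_column}), and (ii) keeping in mind that the convergence is with respect to $\tau_\sub{\mathrm{SOT}}$ rather than the operator norm. Indeed, the pointwise nature of $\tau_\sub{\mathrm{SOT}}$ is exactly what makes the telescoping argument suffice, whereas in infinite dimensions the analogous operator-norm limit can fail (the backward shift on $\ltwo$ furnishes the standard example, being a norm-$1$ contraction with $f_\sub{UU}^n \to 0$ strongly but not in norm); this is presumably why the strong operator topology is chosen in the hom-convergence structure on $\Hilb$, noting that in finite dimensions the two topologies coincide by Proposition~\ref{prop:SOT_FdHilb}.
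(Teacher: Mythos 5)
Your proposal is correct and follows essentially the same route as the paper's proof: reduce SOT-convergence to pointwise convergence over finite sets, extract the column inequality \(\norm{f_\sub{BU}(u)}^2 \leq \norm{u}^2 - \norm{f_\sub{UU}(u)}^2\) from Proposition~\ref{prop:isometry_column}, and conclude by the monotone-convergence/telescoping argument on the nonincreasing sequence \(\norm{f_\sub{UU}^n(u)}^2\). The only cosmetic difference is that the paper phrases the monotonicity step via infima rather than consecutive differences of a convergent sequence, which is the same argument.
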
 \begin{proof}
  Recall that this limit exists iff for every finite set of vectors \(S \subseteq U\) and every \(\epsilon > 0\) there is some \(n\) such that for all \(k \geq n\) we have that \(f_\sub{BU} f_\sub{UU}^k \in B^0_{S,\epsilon}\),\footnote{As established by Proposition~\ref{prop:base_limit} and implication~\eqref{eq:SOT_ball_in_ball}, it is sufficient to consider only the open sets from the base for \(\tau_\sub{\mathrm{SOT}}\) `centered' at \(0\), \ie{} of the form \(B^0_{S,\epsilon}\) for some \(S\) and \(\epsilon\).} \ie{} we need to show that:
  \begin{equation} \label{eq:lim_last_term_condition}
    k \geq n \implies \forall u \in S,\ \norm{f_\sub{BU} f_\sub{UU}^k(u)} < \epsilon
  \end{equation}
  To do so, the first step will be to show that the following is satisfied:
  \begin{equation} \label{eq:lim_last_term_aux}
    \Lim{n \in \Nset} \left( \norm{f_\sub{UU}^n(u)}^2 - \norm{f_\sub{UU}^{n+1}(u)}^2 \right) \keq 0.
  \end{equation}
  For every \(u \in U\), the collection of real numbers \(S_u = \{\norm{f_\sub{UU}^{n}(u)}^2\}_\Nset\) is bounded from below by \(0\) and, hence, by completeness of the total order \((\Rset,\leq)\), the infimum of the set \(S_u\) exists.
  Thus, for each \(\epsilon > 0\) there is some \(k \in \Nset\) such that
  \begin{equation*}
    \norm{f_\sub{UU}^{k}(u)}^2 < \inf(S_u) + \epsilon
  \end{equation*}
  since otherwise \(\inf(S_u)\! +\! \epsilon\) would be a lower bound to \(S_u\) greater than \(\inf(S_u)\), contradicting the definition of infimum.
  Considering that \(f_\sub{UU}\) is necessarily a contraction, every \(k' \geq k\) will satisfy
  \begin{equation*}
    \norm{f_\sub{UU}^{k'}(u)}^2 - \inf(S_u) \ \leq\ \norm{f_\sub{UU}^{k}(u)}^2 - \inf(S_u)
  \end{equation*}
  and, consequently, \(\norm{f_\sub{UU}^{k'}(u)}^2 - \inf(S_u) < \epsilon\) so that in the standard topology on \(\Rset\) we have that:
  \begin{equation*}
    \Lim{n \in \Nset}\, \norm{f_\sub{UU}^n(u)}^2 \keq \inf(S_u).
  \end{equation*}
  The previous argument is commonly known as the monotone convergence theorem.
  Moreover, it is trivial to check that \(\inf(S_u) = \inf(S_{f_\sub{UU}(u)})\) for every \(u \in U\) and due to addition (and negation) of real numbers being continuous:
  \begin{equation*} 
    \Lim{n \in \Nset} \left( \norm{f_\sub{UU}^n(u)}^2 - \norm{f_\sub{UU}^{n+1}(u)}^2 \right) \keq \inf(S_u) - \inf(S_{f_\sub{UU}(u)}) = 0
  \end{equation*}
  and, hence, equation~\eqref{eq:lim_last_term_aux} is satisfied.

  Proposition~\ref{prop:isometry_column} establishes that for all \(u \in U\):
  \begin{equation*}
    \norm{f_\sub{BU}(u)}^2 \leq \norm{u}^2 - \norm{f_\sub{UU}(u)}^2
  \end{equation*}
  and, in particular, for every \(n \in \Nset\), we have that \(f_\sub{UU}^n(u) \in U\) and
  \begin{equation*}
    \norm{f_\sub{BU}f_\sub{UU}^n(u)}^2 \leq \norm{f_\sub{UU}^n(u)}^2 - \norm{f_\sub{UU}^{n+1}(u)}^2.
  \end{equation*}
  Since the limit of the right hand side is known to be zero~\eqref{eq:lim_last_term_aux} and the sequence of real numbers \(\norm{f_\sub{BU}f_\sub{UU}^n(u)}^2\) is bounded from below by \(0\), it follows that:
  \begin{equation*}
    \Lim{n \to \infty}\, \norm{f_\sub{BU} f_\sub{UU}^n(u)}^2 \keq 0.
  \end{equation*}
  Therefore, for each \(u \in U\) and each \(\epsilon > 0\) we can find an \(n_u \in \Nset\) such that every \(k \geq n_u\) satisfies \(\norm{f_\sub{BU} f_\sub{UU}^k(u)}^2 < \epsilon^2\).
  For any given finite set of vectors \(S \subseteq U\) we can take \(n = \max{\{n_u \mid u \in S\}}\) so that~\eqref{eq:lim_last_term_condition} is satisfied, completing the proof.
\end{proof}

The previous lemma applies to contractions between arbitrary Hilbert spaces, including the case of infinite dimensions.
In contrast, the rest of the results in this subsection have only been established for the case of finite-dimensional spaces and Section~\ref{sec:open_questions} discusses the obstacles in the way of their generalisation.

\begin{lemma} \label{lem:FdContraction_ki}
  Every contraction \(f \colon A \oplus U \to B \oplus U\) in \(\FdHilb\) satisfies \((k,i) \Vdash \Tr^U(f)\) for some morphisms \(k \in \FdHilb(U,B)\) and \(i \in \FdHilb(A,U)\).
\end{lemma}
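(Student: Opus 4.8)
The plan is to reduce the claim to the kernel-image inclusions and then verify each of them directly from the contractivity of $f$. By Corollary~\ref{cor:FdHilb_ki}, a morphism $f \colon A \oplus U \to B \oplus U$ in $\FdHilb$ admits witnesses $(k,i) \Vdash \Tr^U(f)$ if and only if
\begin{equation*}
  \im(f_\sub{UA}) \subseteq \im(\id - f_\sub{UU}) \quad\text{and}\quad \ker(\id - f_\sub{UU}) \subseteq \ker(f_\sub{BU}).
\end{equation*}
So the whole argument amounts to establishing these two inclusions for an arbitrary contraction $f$. Throughout I would use that both $f$ and its adjoint $f^\dagger \colon B \oplus U \to A \oplus U$ are contractions, so that Proposition~\ref{prop:isometry_column} applies to either of them, together with the routine block-adjoint identities $(f^\dagger)_\sub{UU} = (f_\sub{UU})^\dagger$ and $(f^\dagger)_\sub{AU} = (f_\sub{UA})^\dagger$.

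For the second inclusion I would take any $v \in \ker(\id - f_\sub{UU})$, so that $f_\sub{UU}(v) = v$ and hence $\norm{f_\sub{UU}(v)} = \norm{v}$. Feeding $v$ into the $U$-summand of $f$, Proposition~\ref{prop:isometry_column} gives $\norm{f_\sub{BU}(v)}^2 + \norm{f_\sub{UU}(v)}^2 \leq \norm{v}^2$, which forces $f_\sub{BU}(v) = 0$. This yields $\ker(\id - f_\sub{UU}) \subseteq \ker(f_\sub{BU})$ at once.

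The first inclusion is where the adjoint and the orthogonal-complement machinery enter. Running the same pointwise estimate for $f^\dagger$, any $w \in \ker(\id - f_\sub{UU}^\dagger)$ satisfies $\norm{(f^\dagger)_\sub{UU}(w)} = \norm{w}$, and Proposition~\ref{prop:isometry_column} applied to $f^\dagger$ gives $(f^\dagger)_\sub{AU}(w) = (f_\sub{UA})^\dagger(w) = 0$; that is, $\ker(\id - f_\sub{UU}^\dagger) \subseteq \ker((f_\sub{UA})^\dagger)$. I would then translate this kernel statement into the desired image inclusion using Proposition~\ref{prop:ker_dagger_im_perp}, which identifies $\ker((f_\sub{UA})^\dagger) = \im(f_\sub{UA})^\perp$ and $\ker((\id - f_\sub{UU})^\dagger) = \im(\id - f_\sub{UU})^\perp$, so that the inclusion just proved reads $\im(\id - f_\sub{UU})^\perp \subseteq \im(f_\sub{UA})^\perp$. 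Finally, Proposition~\ref{prop:perp_inclusion}, taken with $K = \im(\id - f_\sub{UU})$ and $S = \im(f_\sub{UA})$, converts this into $\im(f_\sub{UA}) \subseteq \im(\id - f_\sub{UU})$, completing both inclusions.

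The step requiring the most care---and the only place where finite-dimensionality is essential---is this last conversion: Proposition~\ref{prop:perp_inclusion} demands that $K = \im(\id - f_\sub{UU})$ be a \emph{closed} subspace, which is automatic in finite dimensions but may fail in $\Hilb$. This is precisely the extra hypothesis appearing in Proposition~\ref{prop:Hilb_ki}, and it is the reason the present lemma is stated only for $\FdHilb$; the pointwise contractivity arguments for the two kernel conditions themselves go through verbatim in arbitrary dimension. I would also confirm the block-adjoint identities and the fact that the adjoint of a contraction is again a contraction, but these are routine and carry no real subtlety.
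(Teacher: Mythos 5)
Your proof is correct and follows essentially the same route as the paper: reduce to the kernel-image inclusions via Corollary~\ref{cor:FdHilb_ki}, deduce \(\ker(\id - f_\sub{UU}) \subseteq \ker(f_\sub{BU})\) directly from contractivity, and obtain the image inclusion by running the same argument on \(f^\dagger\) and dualizing through Propositions~\ref{prop:ker_dagger_im_perp} and~\ref{prop:perp_inclusion}, with finite-dimensionality entering only through closedness of \(\im(\id - f_\sub{UU})\). The only (harmless) difference is that you cite Proposition~\ref{prop:isometry_column} explicitly where the paper just says the vanishing of \(f_\sub{BU}(v)\) is forced ``for \(f\) to be a contraction.''
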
 \begin{proof}
  The proof will follow from Corollary~\ref{cor:FdHilb_ki} which establishes that it suffices to show that:
  \begin{equation*}
    \begin{aligned}
      \im(f_\sub{UA}) &\subseteq \im(\id\!-\!f_\sub{UU}) \\
      \ker(\id\!-\!f_\sub{UU}) &\subseteq \ker(f_\sub{BU}).
    \end{aligned}
  \end{equation*}
  Let \(v \in \ker(\id\!-\!f_\sub{UU})\); this implies \(\norm{f_\sub{UU}(v)} = \norm{v}\) and, necessarily, \(f_\sub{BU}(v) = 0\) for \(f\) to be a contraction.
  Consequently, \(\ker(\id\!-\!f_\sub{UU}) \subseteq \ker(f_\sub{BU})\) is satisfied.
  To prove the inclusion of images, we convert it to an statement on kernels as follows:
  \begin{align*}
    \im(f_\sub{UA}) \subseteq \im(\id\!-\!f_\sub{UU}) &\iff \im(\id\!-\!f_\sub{UU})^{\perp} \subseteq \im(f_\sub{UA})^{\perp} &&\text{(Proposition~\ref{prop:perp_inclusion})} \\
      &\iff \ker(\id\!-\!f_\sub{UU}^\dagger) \subseteq \ker(f_\sub{UA}^\dagger) &&\text{(Lemma~\ref{prop:ker_dagger_im_perp})}
  \end{align*}
  where the first step holds because \(\im(\id\!-\!f_\sub{UU})\) is a closed subspace --- in \(\FdHilb\) every subspace is closed.
  Considering that the adjoint of any contraction \(f\) is also a contraction,\footnote{Recall that every bounded linear map \(f \in \Hilb(A,B)\) satisfies that its adjoint \(f^\dagger\) has the same operator norm \(\opnorm{f} = \opnorm{f^\dagger}\). Thus, if \(f\) is a contraction then \(f^\dagger\) is also a contraction.} the inclusion \(\ker(\id\!-\!f_\sub{UU}^\dagger) \subseteq \ker(f_\sub{UA}^\dagger)\) follows from the same argument used for \(\ker(\id\!-\!f_\sub{UU}) \subseteq \ker(f_\sub{BU})\), completing the proof.
\end{proof}

The phrasing of the previous lemma purposely refers to the category \(\FdHilb\) instead of \(\FdContraction\) because, in principle, the morphisms \(k\) and \(i\) need not be contractions themselves.
For instance, in the case of the morphism \(H\) from~\eqref{eq:Hadamard} we have that \((k,i) \Vdash \Tr_\sub{\ki}^\Cset(\sigma H \sigma)\) where \(k = \tfrac{1}{\sqrt{2} - 1} = i\), but \(\abs{\tfrac{1}{\sqrt{2} - 1}} > 1\) and, hence, \(k\) and \(i\) are morphisms in \(\FdHilb\) but not in \(\FdContraction\).

The following two lemmas build towards Lemma~\ref{lem:FdContraction_tail_convergent} where it is established that every contraction \(f \colon A \oplus U \to B \oplus U\) between finite-dimensional Hilbert spaces is \(U\)-tail vanishing.

\begin{lemma} \label{lem:FdContraction_decomposition}
  Let \(f \in \FdContraction(H,H)\) and let \(f = f_0 \oplus f_1\) be its decomposition into a unitary \(f_0\) and a completely nonunitary contraction \(f_1 \colon H_1 \to H_1\) (see Theorem~\ref{thm:Contraction_decomposition}).
  Then,
  \begin{equation*}
    \opnorm{f_1^{\dim{H_1}}} < 1.
  \end{equation*}
\end{lemma}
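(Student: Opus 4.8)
Write $m = \dim H_1$. Since $H_1$ is finite-dimensional, the unit sphere of $H_1$ is compact and $v \mapsto \norm{f_1^m(v)}$ is continuous, so the operator norm $\opnorm{f_1^m}$ is attained; consequently $\opnorm{f_1^m} < 1$ is equivalent to the pointwise statement that $\norm{f_1^m(v)} < \norm{v}$ for every nonzero $v \in H_1$. The plan is to prove this pointwise statement by contradiction: suppose there were a nonzero $v \in H_1$ with $\norm{f_1^m(v)} = \norm{v}$, and from it construct a nonzero subspace that reduces $f_1$ and on which $f_1$ acts unitarily, contradicting the assumption that $f_1$ is completely nonunitary.

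First I would cascade the norm equality. Since $f_1$ is a contraction, the sequence $\norm{f_1^k(v)}$ is non-increasing in $k$; as it starts and ends at $\norm{v}$ (for $k=0$ and $k=m$), all intermediate values must coincide, so $\norm{f_1^k(v)} = \norm{v}$ for every $k = 0, \dots, m$. In particular $\norm{f_1(f_1^k v)} = \norm{f_1^k v}$ for each $k = 0, \dots, m-1$, and applying implication~\eqref{eq:in_K_implies_isometric} (from the proof of Theorem~\ref{thm:Contraction_decomposition}) to the vector $f_1^k v$ gives $f_1^\dagger f_1 (f_1^k v) = f_1^k v$. Hence each of $v, f_1 v, \dots, f_1^{m-1} v$ lies in the subspace $K_1 = \ker(\id - f_1^\dagger f_1)$. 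Setting $V = \Span\{v, f_1 v, \dots, f_1^{m-1} v\}$, we have $V \subseteq K_1$ since $K_1$ is a subspace. The subspace $V$ is $f_1$-invariant: the generators $f_1 v, \dots, f_1^{m-1} v$ are manifestly in $V$, while $f_1^m v \in V$ follows from the Cayley--Hamilton theorem, which expresses $f_1^m$ as a linear combination of $\id, f_1, \dots, f_1^{m-1}$.

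Next I would upgrade norm preservation to inner-product preservation on $V$. For $w_1, w_2 \in K_1$ we have $\braket{f_1 w_1}{f_1 w_2} = \braket{w_1}{f_1^\dagger f_1 w_2} = \braket{w_1}{w_2}$, so $f_1$ preserves inner products between elements of $K_1$, and in particular $f_1|_V \colon V \to V$ is an isometry; being an injective endomorphism of a finite-dimensional space, it is bijective, hence unitary on $V$. To see that $V$ also reduces $f_1$, I would check $f_1^\dagger(V) \subseteq V$: given $w \in V$, surjectivity of $f_1|_V$ yields $w' \in V$ with $w = f_1 w'$, and then $f_1^\dagger w = f_1^\dagger f_1 w' = w'$ because $w' \in V \subseteq K_1$. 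Thus $V$ is invariant under both $f_1$ and $f_1^\dagger$, so by Proposition~\ref{prop:reducing_space} it reduces $f_1$, and the component $f_1|_V$ is unitary on the nonzero subspace $V$. This contradicts $f_1$ being completely nonunitary, completing the argument. The main obstacle is precisely this last construction — passing from norm equalities along a single orbit to a genuine reducing subspace carrying a unitary component — which requires both the inner-product preservation identity on $K_1$ and the Cayley--Hamilton closure of $V$ under $f_1$; once those two ingredients are in place, the finite-dimensional isometry-implies-unitary step and the verification of $f_1^\dagger$-invariance are routine.
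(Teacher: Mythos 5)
Your proof is correct, and it takes a genuinely different route from the paper's. The paper argues at the level of a descending chain of subspaces: it sets \(V_k = \{v \in H_1 \mid \norm{f_1^n(v)} = \norm{v} \text{ for all } n \leq k\}\), shows that whenever \(V_k \neq \{0\}\) one must have \(V_{k+1} \subsetneq V_k\) (otherwise \(V_k\) is \(f_1\)-invariant, \(f_1\vert_{V_k}\) is an isometry, hence unitary in finite dimensions, hence \(V_k\) is a reducing subspace carrying a unitary part --- contradiction), and then counts dimensions to conclude \(V_{\dim H_1} = \{0\}\), which is exactly the pointwise strict inequality. You instead start from a single offending vector \(v\) with \(\norm{f_1^m(v)} = \norm{v}\), cascade the norm equalities along its orbit, and build the Krylov subspace \(V = \Span\{v, f_1 v, \dots, f_1^{m-1}v\}\), using Cayley--Hamilton to close it under \(f_1\) before running the same ``unitary reducing subspace'' contradiction. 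The shared core is identical --- both proofs rely on the implication \(\norm{f_1(w)} = \norm{w} \Rightarrow f_1^\dagger f_1 w = w\), on Proposition~\ref{prop:reducing_space}, and on the finite-dimensional fact that isometries are surjective --- but the constructions differ in a meaningful way: the paper's chain argument avoids Cayley--Hamilton entirely and exhibits the full nested structure of the norm-preserving subspaces, whereas your argument is more economical, converting one witness vector directly into a contradiction at the cost of invoking Cayley--Hamilton (or equivalently, the fact that an orbit span in an \(m\)-dimensional space is spanned by the first \(m\) powers). One cosmetic point: the paper dispatches the degenerate case \(H_1 = \{0\}\) in a separate sentence, and your compactness reduction is vacuous there, so a one-line remark would make your write-up airtight; this is not a gap of substance.
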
 \begin{proof}
  If \(H_1 = \{0\}\) then \(\opnorm{f_1} = 0\) and the proof is trivial.
  Otherwise, recall from~\eqref{eq:Kf_alternative} that for any contraction \(g\) we can define a closed subspace
  \begin{equation*}
    K_g = \{v \in H \mid \norm{g(v)} = \norm{v}\}.
  \end{equation*}
  Then, for every \(k \in \Nset\) we may define a closed subspace of \(H_1\) as follows:
  \begin{equation*}
    V_k = \bigcap_{n=0}^k K_{f_1^n} = \{v \in H_1 \mid \forall n \leq k,\ \norm{f_1^n(v)} = \norm{v}\}.
  \end{equation*}
  Clearly, \(V_0 = H_1\) and \(V_{k+1} \subseteq V_k\) for all \(k \in \Nset\).
  Fix a value of \(k \in \Nset\) such that \(V_k \not= \{0\}\) and suppose --- for the sake of an argument by contradiction --- that \(V_k = V_{k+1}\).
  Then, \(v \in V_k\) implies \(f_1(v) \in V_k\) since for all \(n \leq k\):
  \begin{equation*}
    \norm{f_1^n(f_1(v))} = \norm{f_1^{n+1}(v)} = \norm{v} = \norm{f_1(v)}.
  \end{equation*}
  Moreover, we have that \(\norm{f_1(v)} = \norm{v}\), so the restriction \(f_1\vert_{V_k} \colon V_k \to V_k\) is an isometry.
  In finite dimensions, every isometry is surjective (and, hence, unitary), implying that for each \(v \in V_k\) we also have that \(f_1^\dagger(v) \in V_k\). Then, \(V_k\) reduces \(f_1\) according to Proposition~\ref{prop:reducing_space} but \(f_1\vert_{V_k}\) is unitary, contradicting that \(f_1\) is completely nonunitary.
  Consequently, it must be that for all \(k \in \Nset\):
  \begin{equation*}
    V_k \not= \{0\} \implies V_k \not = V_{k+1}
  \end{equation*}
  and, hence, \(V_{k+1}\) is a proper subspace of \(V_k\) and \(\dim{V_{k+1}} < \dim{V_k}\).
  Then, it is immediate that \(V_{\dim{H_1}} = \{0\}\) which, according to the definition of \(V_k\), implies that for all \(v \in H_1\) (other than \(0\)):
  \begin{equation*}
    \norm{f_1^{\dim{H_1}}(v)} < \norm{v}.
  \end{equation*}
  Since \(H_1\) is finite-dimensional it follows that \(\opnorm{f_1^{\dim{H_1}}} < 1\), as claimed.
\end{proof}

\begin{lemma} \label{lem:FdContraction_bound_to_sum}
  Let \(f \colon A \oplus U \to B \oplus U\) be a contraction in \(\FdHilb\) and decompose \(U = U_0 \oplus U_1\) according to the decomposition of \(f_\sub{UU}\) into its unitary and completely nonunitary components respectively.
  Then, there is a real number \(M\) such that:  
  \begin{equation*}
    \sum_{k=0}^\infty \opnorm{f_\sub{UU}^k f_\sub{UA}} \leq M
  \end{equation*}
  where
  \begin{equation*}
    M = \begin{cases}
      \dim{U_1} \cdot \left(1 - \opnorm{f_\sub{U_1U_1}^{\dim{U_1}}} \right)^{-1} &\ifc \dim{U_1} \not= 0 \\
      0 &\otherwise.
    \end{cases}
  \end{equation*}
\end{lemma}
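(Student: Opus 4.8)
The plan is to reduce the claimed bound to a geometric series governed by the completely nonunitary part of $f_\sub{UU}$. By Theorem~\ref{thm:Contraction_decomposition} the subspace $U_0$ reduces $f_\sub{UU}$, so $f_\sub{UU} = f_\sub{U_0U_0} \oplus f_\sub{U_1U_1}$ with $f_\sub{U_0U_0}$ unitary and $f_\sub{U_1U_1}$ completely nonunitary, and the off-diagonal blocks $f_\sub{U_0U_1}$ and $f_\sub{U_1U_0}$ vanish; in particular $f_\sub{UU}^k$ leaves $U_1$ invariant. The crucial observation is that the unitary part contributes nothing to the column $f_\sub{UA}$: I claim $f_\sub{U_0A} = 0$. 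Granting this, $f_\sub{UA}$ lands entirely in $U_1$, so $f_\sub{UU}^k f_\sub{UA} = f_\sub{U_1U_1}^k f_\sub{U_1A}$ as a map into $U_1 \subseteq U$, whence $\opnorm{f_\sub{UU}^k f_\sub{UA}} = \opnorm{f_\sub{U_1U_1}^k f_\sub{U_1A}}$ for every $k$.

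To establish $f_\sub{U_0A} = 0$ I would pass to the adjoint. Since $U_0$ reduces $f_\sub{UU}$ it is invariant under $(f_\sub{UU})^\dagger$ as well (Proposition~\ref{prop:reducing_space}), and $(f^\dagger)_\sub{U_0U_0} = (f_\sub{U_0U_0})^\dagger$ is unitary, hence isometric on $U_0$. The adjoint $f^\dagger$ is again a contraction, so regrouping its domain as $U_0 \oplus (B \oplus U_1)$ and its codomain as $U_0 \oplus (A \oplus U_1)$ and applying Corollary~\ref{cor:isometry_column_zero} with distinguished input and output $U_0$ forces every other component of the $U_0$-column of $f^\dagger$ to vanish; in particular $(f^\dagger)_\sub{AU_0} = 0$. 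Since $(f^\dagger)_\sub{AU_0} = (f_\sub{U_0A})^\dagger$, this gives $f_\sub{U_0A} = 0$, as needed. This step also shows why the bound \emph{must} hold: were $f_\sub{U_0A} \neq 0$, the summands $\opnorm{f_\sub{U_0U_0}^k f_\sub{U_0A}}$ would be a nonzero constant in $k$ (as $f_\sub{U_0U_0}$ is unitary) and the series would diverge.

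It then remains to bound $\sum_{k=0}^\infty \opnorm{f_\sub{U_1U_1}^k f_\sub{U_1A}}$. If $\dim U_1 = 0$ every term vanishes and the sum is $0 = M$. Otherwise set $d = \dim U_1$ and $r = \opnorm{f_\sub{U_1U_1}^{d}}$, which satisfies $r < 1$ by Lemma~\ref{lem:FdContraction_decomposition}. Writing $k = qd + s$ with $0 \leq s < d$ and using submultiplicativity of the operator norm together with the fact that $f_\sub{U_1U_1}$ and $f_\sub{U_1A}$ are contractions (being components of the contraction $f$), I would bound
\begin{equation*}
  \opnorm{f_\sub{U_1U_1}^k f_\sub{U_1A}} \leq \opnorm{f_\sub{U_1U_1}^{d}}^{q} \opnorm{f_\sub{U_1U_1}^{s} f_\sub{U_1A}} \leq r^{\lfloor k/d \rfloor}.
\end{equation*}
Grouping the indices $k$ into blocks of length $d$ according to $q = \lfloor k/d \rfloor$ then yields $\sum_{k=0}^\infty \opnorm{f_\sub{U_1U_1}^k f_\sub{U_1A}} \leq \sum_{q=0}^\infty d\, r^{q} = d\,(1-r)^{-1} = M$, which is precisely the asserted value.

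I expect the geometric-series estimate to be routine; the main conceptual step is the vanishing $f_\sub{U_0A} = 0$, and in particular recognising that it should be extracted from the adjoint column via Corollary~\ref{cor:isometry_column_zero} rather than attempted directly on $f$. The one technical point to handle with care is the regrouping of domain and codomain needed to cast $f^\dagger$ into the exact shape required by that corollary, and verifying the identification $(f^\dagger)_\sub{AU_0} = (f_\sub{U_0A})^\dagger$ through the orthogonality of the quasi-injections and quasi-projections.
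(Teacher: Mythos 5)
Your proof is correct and follows essentially the same route as the paper's: the same decomposition via Theorem~\ref{thm:Contraction_decomposition}, the same application of Corollary~\ref{cor:isometry_column_zero} to the adjoint \(f^\dagger\) to obtain \(f_\sub{U_0A} = 0\), and the same geometric-series estimate via Lemma~\ref{lem:FdContraction_decomposition}, yielding the identical constant \(M\). The only difference is bookkeeping in the final estimate (you group indices by \(q = \lfloor k/\dim{U_1}\rfloor\), while the paper factors the sum into a product of two sums), which is immaterial.
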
 \begin{proof}
  According to Theorem~\ref{thm:Contraction_decomposition}, we may decompose \(f_\sub{UU}\) so that \(f\) may be expressed as:
  \begin{equation*}
    f = \begin{pmatrix}
      f_\sub{BA} & f_\sub{BU_0} & f_\sub{BU_1} \\
      f_\sub{U_0A} & f_\sub{U_0U_0} & f_\sub{U_0U_1} \\
      f_\sub{U_1A} & f_\sub{U_1U_0} & f_\sub{U_1U_1}
    \end{pmatrix}
  \end{equation*}
  where \(f_\sub{U_0U_0}\) is unitary.
  Then, Corollary~\ref{cor:isometry_column_zero} establishes that \(f_\sub{BU_0} = 0\) and \(f_\sub{U_1U_0} = 0\) and, similarly, \(f_\sub{U_0A} = 0\) and \(f_\sub{U_0U_1} = 0\) because \(f_\sub{U_0U_0}^\dagger\) is also an isometry.
  Consequently, for every \(k \in \Nset\) the following is satisfied:
  \begin{equation*}
    f_\sub{UU}^k f_\sub{UA} =
    \begin{pmatrix}
      f_\sub{U_0U_0}^k & 0 \\
      0 & f_\sub{U_1U_1}^k
    \end{pmatrix}
    \begin{pmatrix}
      0 \\ f_\sub{U_1A}
    \end{pmatrix}
    =
    \begin{pmatrix}
      0 \\ f_\sub{U_1U_1}^k f_\sub{U_1A}
    \end{pmatrix}
  \end{equation*}
  and, hence,
  \begin{equation*}
    \opnorm{f_\sub{UU}^k f_\sub{UA}} = \opnorm{f_\sub{U_1U_1}^k f_\sub{U_1A}}.
  \end{equation*}
  Therefore, if \(U_1 = \{0\}\) then \(\opnorm{f_\sub{UU}^k f_\sub{UA}} = 0\) for all \(k \in \Nset\) and it is trivial that \(\sum_{k=0}^\infty \opnorm{f_\sub{UU}^k f_\sub{UA}} = 0\), satisfying the claim in the case where \(\dim{U_1} = 0\).
  Otherwise, if \(U_1 \not= \{0\}\) we may use the well-known inequality \(\opnorm{h \circ g} \leq \opnorm{h} \cdot \opnorm{g}\) to obtain:
  \begin{equation*}
    \sum_{k=0}^\infty \opnorm{f_\sub{UU}^k f_\sub{UA}} \leq \left(\sum_{r=0}^{\dim{U_1}-1} \opnorm{f_\sub{U_1U_1}^r} \right) \left(\sum_{q=0}^\infty \opnorm{f_\sub{U_1U_1}^{q \cdot \dim{U_1}} f_\sub{U_1A}} \right)
  \end{equation*}
  and due to \(f_\sub{U_1U_1}\) and \(f_\sub{U_1A}\) being contractions, we conclude that:
  \begin{equation*}
    \sum_{k=0}^\infty \opnorm{f_\sub{UU}^k f_\sub{UA}} \leq \dim{U_1} \cdot \sum_{q=0}^\infty \left(\opnorm{f_\sub{U_1U_1}^{\dim{U_1}}} \right)^q
  \end{equation*}
  Finally, according to Lemma~\ref{lem:FdContraction_decomposition}, we have that \(\opnorm{f_\sub{U_1U_1}^{\dim{U_1}}} < 1\) so it is follows from the convergence criteria of the geometric series that:
  \begin{equation*}
    \sum_{k=0}^\infty \opnorm{f_\sub{UU}^k f_\sub{UA}} \ \leq \ \dim{U_1} \cdot \left(1 - \opnorm{f_\sub{U_1U_1}^{\dim{U_1}}} \right)^{-1}
  \end{equation*}
  thus proving the claim in the case where \(\dim{U_1} \not= 0\).
\end{proof}

\begin{lemma} \label{lem:FdContraction_tail_convergent}
  Every contraction \(f \colon A \oplus U \to B \oplus U\) in \(\FdHilb\) is \(U\)-tail vanishing.
\end{lemma}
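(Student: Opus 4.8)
The plan is to verify the two requirements in Definition~\ref{def:tail_convergent} separately, drawing on the two immediately preceding lemmas. The disjunctive condition is essentially free: Lemma~\ref{lem:lim_last_term} establishes $\Lim{n \to \infty} f_\sub{BU} f_\sub{UU}^n \keq 0$ for any contraction in $\Hilb$ equipped with the strong operator topology, and since $f$ lives in the finite-dimensional category $\FdHilb$, Proposition~\ref{prop:SOT_FdHilb} tells us that the strong operator topology and the operator norm topology coincide, so this limit also holds in the hom-convergence UDC structure of $\FdContraction$ supplied by Corollary~\ref{cor:FdHilb_hom-convergence}. Hence the condition on $f_\sub{BU} f_\sub{UU}^n$ is satisfied, and it only remains to treat the ``sum of gaps'' condition~\eqref{eq:sum_of_gaps}.

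The substance of the argument is therefore condition~\eqref{eq:sum_of_gaps}. First I would observe that each index set $\tilde{J}$ is finite by construction, so $\Sigma \{f_\sub{BU} f_\sub{UU}^j f_\sub{UA}\}_{\tilde{J}}$ is merely the finite sum $\sum_{j \in \tilde{J}} f_\sub{BU} f_\sub{UU}^j f_\sub{UA}$, obtained after a finite number of applications of $+$; there is no subtlety in its being defined. Next I would establish absolute convergence of the associated series of operator norms. Since $f_\sub{BU} = \pi_\sub{B} \circ f \circ \iota_\sub{U}$ is a composite of contractions it is itself a contraction, so $\opnorm{f_\sub{BU} f_\sub{UU}^j f_\sub{UA}} \leq \opnorm{f_\sub{UU}^j f_\sub{UA}}$; then Lemma~\ref{lem:FdContraction_bound_to_sum} yields a finite bound $M$ with $\sum_{j=0}^\infty \opnorm{f_\sub{BU} f_\sub{UU}^j f_\sub{UA}} \leq M$.

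With absolute convergence in hand, the net limit follows by a tail estimate. To show the net indexed by $(\fpset{\Nset},\subseteq)$ converges to $0$ in the operator norm topology, it suffices by Proposition~\ref{prop:base_limit} to produce, for each $\epsilon > 0$, a $J_0 \in \fpset{\Nset}$ such that every $J \supseteq J_0$ gives $\opnorm{\Sigma \{f_\sub{BU} f_\sub{UU}^j f_\sub{UA}\}_{\tilde{J}}} < \epsilon$. The key observation is that taking $J_0 = \{0,1,\ldots,N\}$ forces every gap to exceed $N$: if $\{0,\ldots,N\} \subseteq J$ then none of $0,\ldots,N$ can belong to $\tilde{J}$, so $\tilde{J} \subseteq \{j \in \Nset \mid j > N\}$. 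Consequently the finite sum over $\tilde{J}$ is dominated by the tail of the convergent series,
\[
  \opnorm{\Sigma \{f_\sub{BU} f_\sub{UU}^j f_\sub{UA}\}_{\tilde{J}}} \ \leq \ \sum_{j \in \tilde{J}} \opnorm{f_\sub{BU} f_\sub{UU}^j f_\sub{UA}} \ \leq \ \sum_{j=N+1}^\infty \opnorm{f_\sub{UU}^j f_\sub{UA}},
\]
and the right-hand side tends to $0$ as $N \to \infty$ because the full series is bounded by $M$. Choosing $N$ so that this tail is below $\epsilon$ finishes the verification.

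The one place demanding care is the translation of the informal phrase ``as the first occurrence of a gap tends towards infinity'' into the language of nets on $(\fpset{\Nset},\subseteq)$: the cofinal family $\{0,\ldots,N\}$ must be exploited to control all larger $J$ uniformly, and it is precisely the finiteness of $\tilde{J}$ together with the norm-absolute convergence supplied by Lemma~\ref{lem:FdContraction_bound_to_sum} that makes the gap sum uniformly small. Everything else is routine, and crucially no infinite-dimensional phenomena intervene, since the relevant series converges in norm thanks to the finite-dimensionality invoked by Lemma~\ref{lem:FdContraction_decomposition} inside Lemma~\ref{lem:FdContraction_bound_to_sum}.
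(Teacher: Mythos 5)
Your proof is correct, and for the key ``sum of gaps'' condition it takes a genuinely different (and leaner) route than the paper. The paper's proof handles the disjunctive condition exactly as you do, but for condition~\eqref{eq:sum_of_gaps} it argues multiplicatively: for \(J \subseteq J'\) it factors each gap term through the first gap index \(m_J\) of \(J\), writing \(f_\sub{BU} f_\sub{UU}^j f_\sub{UA} = (f_\sub{BU} f_\sub{UU}^{m_J}) \circ (f_\sub{UU}^{j-m_J} f_\sub{UA})\), bounds the gap sum by \(\opnorm{f_\sub{BU} f_\sub{UU}^{m_J}} \cdot M\) with \(M\) supplied by Lemma~\ref{lem:FdContraction_bound_to_sum}, and then invokes Lemma~\ref{lem:lim_last_term} a second time to make the prefix factor \(\opnorm{f_\sub{BU} f_\sub{UU}^{m_J}}\) small; since this divides by \(M\), it must treat \(M = 0\) as a separate case (there \(f_\sub{UU}\) is unitary, forcing \(f_\sub{UA} = 0\) by Corollary~\ref{cor:isometry_column_zero}). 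You instead absorb \(f_\sub{BU}\) into the norm estimate as a contraction and run a purely additive tail bound on the absolutely convergent series \(\sum_j \opnorm{f_\sub{UU}^j f_\sub{UA}} \leq M\), using the cofinal initial segments \(\{0,\ldots,N\}\) to control all larger \(J\) uniformly. What your version buys: it is shorter, it needs Lemma~\ref{lem:lim_last_term} only for the disjunctive condition (not for the gap estimate), and the \(M = 0\) case disappears, since a series of zeros has zero tails. What the paper's version buys: the multiplicative split cleanly separates a factor that vanishes even in infinite dimensions (Lemma~\ref{lem:lim_last_term} holds in \(\Hilb\) in the strong operator topology) from the single ingredient that genuinely fails there, namely the bound of Lemma~\ref{lem:FdContraction_bound_to_sum} (cf.\ Example~\ref{ex:Hari}); your argument ties the entire estimate to norm-absolute convergence, which is precisely the property lost in \(\Hilb\). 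Since both proofs ultimately rest on Lemma~\ref{lem:FdContraction_bound_to_sum}, neither generalises beyond finite dimensions, but the paper's dependency structure makes that failure point slightly more visible.
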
 \begin{proof}
  Lemma~\ref{lem:lim_last_term} already established that \(\lim f_\sub{BU} f_\sub{UU}^n \keq 0\). It remains to check that
  \begin{equation} \label{eq:FdContraction_sum_of_gaps}
    \Lim{J \in \fpset{\Nset}} \Sigma \{f_\sub{BU} f_\sub{UU}^j f_\sub{UA}\}_{\tilde{J}} \keq 0
  \end{equation}
  is satisfied, where
  \begin{equation*}
    \tilde{J} = \{n \in \Nset \mid n < \max(J) \text{ and } n \not\in J\}.
  \end{equation*}
  This requires us to show that for all \(\epsilon > 0\), there is a \(J \in \fpset{\Nset}\) such that
  \begin{equation*}
    J \subseteq J' \implies \opnorm{\Sigma \{f_\sub{BU} f_\sub{UU}^j f_\sub{UA}\}_{\tilde{J'}}} < \epsilon.
  \end{equation*}
  If \(\tilde{J'} = \varnothing\) it is trivial to prove that \(\Sigma \{f_\sub{BU} f_\sub{UU}^j f_\sub{UA}\}_{\tilde{J'}} = 0\).
  Otherwise, if \(\tilde{J'} \not= \varnothing\) let \(m_J \in \Nset\) be the index of the first `gap' in \(J\), explicitly:
  \begin{equation*}
    m_J = \begin{cases}
      \min \tilde{J} &\ifc \tilde{J} \not= \varnothing \\
      \max{J}+1 &\text{otherwise.}
    \end{cases}
  \end{equation*}
  Notice that whenever \(J \subseteq J'\) we have that \(m_J \leq \min{} \tilde{J'}\) and, hence, \(j-m_J \geq 0\) for all \(j \in \tilde{J'}\), so that we may write:
  \begin{equation*}
    \opnorm{\Sigma \{f_\sub{BU} f_\sub{UU}^j f_\sub{UA}\}_{\tilde{J'}}} \leq \opnorm{f_\sub{BU} f_\sub{UU}^{m_J}} \cdot \opnorm{\Sigma \{f_\sub{UU}^{j-m_J} f_\sub{UA}\}_{\tilde{J'}}}.
  \end{equation*}
  Moreover,
  \begin{equation*}
    \opnorm{\Sigma \{f_\sub{UU}^{j-m_J} f_\sub{UA}\}_{\tilde{J'}}} \leq \sum_{k=0}^\infty \opnorm{f_\sub{UU}^k f_\sub{UA}}
  \end{equation*}
  due to the triangle inequality of norms along with norms being positive numbers.
  Recall from Lemma~\ref{lem:FdContraction_bound_to_sum} that there is some \(M \in \Rset\) such that:
  \begin{equation*}
    \sum_{k=0}^\infty \opnorm{f_\sub{UU}^k f_\sub{UA}} \leq M.
  \end{equation*}
  We conclude that
  \begin{equation*}
    J \subseteq J' \implies \opnorm{\Sigma \{f_\sub{BU} f_\sub{UU}^j f_\sub{UA}\}_{\tilde{J'}}} \leq \opnorm{f_\sub{BU} f_\sub{UU}^{m_J}} \cdot M.
  \end{equation*}
  If \(M = 0\) then \(f_\sub{UU}\) is unitary (see the proof of Lemma~\ref{lem:FdContraction_bound_to_sum} for further details) and hence \(f_\sub{UA} = 0\) according to Corollary~\ref{cor:isometry_column_zero} so that~\eqref{eq:FdContraction_sum_of_gaps} follows trivially.
  Otherwise, if \(M \not= 0\) we may use that \(\lim f_\sub{BU} f_\sub{UU}^n \keq 0\) (established in Lemma~\ref{lem:lim_last_term}) which implies that for every \(\epsilon > 0\) we may choose some \(\widehat{m} \in \Nset\) such that for all \(m' \in \Nset\):
  \begin{equation*}
    \widehat{m} \leq m' \implies \opnorm{f_\sub{BU} f_\sub{UU}^{m'}} < \frac{\epsilon}{M}.
  \end{equation*}
  Then, we simply need to choose some \(J \in \fpset{\Nset}\) such that \(m_J = \widehat{m}\) and it will follow that for all \(J' \in \fpset{\Nset}\):
  \begin{equation*}
    J \subseteq J' \implies \opnorm{\Sigma \{f_\sub{BU} f_\sub{UU}^j f_\sub{UA}\}_{\tilde{J'}}} < \epsilon
  \end{equation*}
  thus, verifying that~\eqref{eq:FdContraction_sum_of_gaps} is satisfied and proving that \(f\) is \(U\)-tail vanishing.
\end{proof}

Finally, we reach the main result of this section.

\begin{theorem} \label{thm:FdContraction_traced}
  The category \((\FdContraction,\oplus,\ex)\) is totally traced, where \(\ex\) is defined on any morphism \(f \colon A \oplus U \to B \oplus U\) as follows:
  \begin{equation*}
    \ex^U(f) = \Sigma \left( \{f_\sub{BA}\} \uplus \{f_\sub{BU} f_\sub{UU}^n f_\sub{UA}\}_\Nset \right).
  \end{equation*}
\end{theorem}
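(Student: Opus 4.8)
The plan is to reduce the execution formula on $\FdContraction$ to the kernel-image trace, which is already known to be a well-behaved partial trace, and thereby inherit the trace axioms. Recall that $\FdContraction$ is a $\SCat{w}$-UDC, so Lemma~\ref{lem:all_but_vII} already guarantees that $\ex$ satisfies every axiom of a partially traced category except vanishing II. Consequently the real work splits into three tasks: showing that $\ex^U(f)$ is \emph{total} (defined for every contraction), showing that its value is again a contraction (so that $\ex^U$ is a map into $\FdContraction$), and establishing vanishing II.

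For totality and the identification with the kernel-image trace, I would argue as follows. By Lemma~\ref{lem:FdContraction_ki}, every contraction $f \colon A \oplus U \to B \oplus U$ satisfies $(k,i) \Vdash \Tr^U(f)$ for suitable morphisms $k,i$ of $\FdHilb$, and by Lemma~\ref{lem:FdContraction_tail_convergent} it is $U$-tail vanishing. Lemma~\ref{lem:ex_from_ki}, applied in the hom-convergence UDC $\FdHilb$, then yields that $\ex^U(f)$ is well-defined with $\ex^U(f) = \Tr_\sub{\ki}^U(f)$. Thus $\ex^U$ is a \emph{total} function on hom-sets, as a map valued in $\FdHilb$.

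Next I would show the value is a contraction, using the canonical decomposition. By Theorem~\ref{thm:Contraction_decomposition} write $U = U_0 \oplus U_1$ so that $f_\sub{U_0U_0}$ is unitary and $f_\sub{U_1U_1}$ is completely nonunitary. The decoupling established inside the proof of Lemma~\ref{lem:FdContraction_bound_to_sum} (via Corollary~\ref{cor:isometry_column_zero}) gives $f_\sub{BU_0} = f_\sub{U_1U_0} = f_\sub{U_0A} = f_\sub{U_0U_1} = 0$, so that $f_\sub{BU} f_\sub{UU}^n f_\sub{UA} = f_\sub{BU_1} f_\sub{U_1U_1}^n f_\sub{U_1A}$ for all $n$; hence only the $U_1$-block contributes. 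By Lemma~\ref{lem:FdContraction_decomposition}, $\opnorm{f_\sub{U_1U_1}^{\dim U_1}} < 1$, so the spectral radius of $f_\sub{U_1U_1}$ is strictly below one and $\id - f_\sub{U_1U_1}$ is invertible. The Neumann series then collapses $\ex^U(f)$ to the closed form
\begin{equation*}
  \ex^U(f) = f_\sub{BA} + f_\sub{BU_1} (\id - f_\sub{U_1U_1})^{-1} f_\sub{U_1A},
\end{equation*}
which is exactly the Bartha form for the restriction of $f$ to $A \oplus U_1 \to B \oplus U_1$, itself a contraction by the decoupling. Corollary~\ref{cor:Bartha_contraction} then certifies that $\ex^U(f)$ is a contraction, so $\ex^U$ maps into $\FdContraction$; and, since the sum converges in $\FdHilb$ to a contraction, it is summable in $\FdContraction$, whence $\ex^U$ is total there as well.

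Finally, vanishing II --- the step the text repeatedly flags as the genuine obstacle --- follows almost for free from the identification $\ex = \Tr_\sub{\ki}$. For $f \colon A \oplus U \oplus V \to B \oplus U \oplus V$ in $\FdContraction$, the intermediate morphism $\ex^V(f) = \Tr_\sub{\ki}^V(f)$ is itself a contraction by the previous step, so all of $\ex^{U \oplus V}(f)$, $\ex^V(f)$ and $\ex^U(\ex^V(f))$ coincide with the corresponding kernel-image traces, each of which is defined. Since $\Tr_\sub{\ki}$ is a partial trace on the additive $\SCat{g}$-UDC $\FdHilb$ (Proposition~\ref{prop:SCat_ki_trace}), its vanishing II axiom gives $\Tr_\sub{\ki}^U(\Tr_\sub{\ki}^V(f)) \keq \Tr_\sub{\ki}^{U \oplus V}(f)$, and transporting back along $\ex = \Tr_\sub{\ki}$ yields $\ex^U(\ex^V(f)) = \ex^{U \oplus V}(f)$. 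Combined with Lemma~\ref{lem:all_but_vII} and the totality just established, all six axioms hold as genuine (total) equalities, so $(\FdContraction, \oplus, \ex)$ is totally traced. The hardest part is really front-loaded into the convergence lemmas feeding this argument; within the theorem itself the only delicate points are verifying that every morphism appearing in vanishing II is a contraction (so that the kernel-image traces are all defined) and that the decomposition correctly reduces the contraction claim to the invertible case handled by Corollary~\ref{cor:Bartha_contraction}.
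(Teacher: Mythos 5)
Your proof is correct, and it rests on the same foundational results as the paper (Lemmas~\ref{lem:FdContraction_ki}, \ref{lem:FdContraction_tail_convergent}, \ref{lem:ex_from_ki} and Corollary~\ref{cor:Bartha_contraction}), but it assembles them along a genuinely different route. The organisational difference is that the paper never invokes Lemma~\ref{lem:all_but_vII} here: it obtains \emph{all} the partial-trace axioms, vanishing II included, in one stroke by transporting the kernel-image trace of \(\FdHilb\) along the faithful strict monoidal embedding \(\FdContraction \into \FdHilb\) via Proposition~\ref{prop:induced_trace}; your hand-transport of vanishing II along the identification \(\ex = \Tr_\sub{\ki}\) is the same mechanism specialised to a single axiom, so that part is equivalent in substance. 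The real divergence is in the totality proof. The paper shows that \(\ex^U(f)\) is a contraction by induction on \(\dim U\): the scalar case \(U = \Cset\) splits into \(\abs{f_\sub{\Cset\Cset}} < 1\) (geometric series plus Corollary~\ref{cor:Bartha_contraction}) and \(\abs{f_\sub{\Cset\Cset}} = 1\) (Corollary~\ref{cor:isometry_column_zero} forces \(f_\sub{B\Cset} = 0\)), and the general case is glued together by repeated applications of vanishing II and dinaturality of the already-established partial trace --- which is precisely why the paper must establish partial tracedness \emph{before} totality. You instead handle all of \(U\) at once: the canonical decomposition \(U = U_0 \oplus U_1\) of Theorem~\ref{thm:Contraction_decomposition}, the block decoupling from Corollary~\ref{cor:isometry_column_zero}, and the bound \(\opnorm{f_\sub{U_1U_1}^{\dim U_1}} < 1\) of Lemma~\ref{lem:FdContraction_decomposition} give invertibility of \(\id - f_\sub{U_1U_1}\) and an absolutely convergent Neumann series, collapsing \(\ex^U(f)\) to the Bartha form of the restriction of \(f\) to \(A \oplus U_1 \to B \oplus U_1\), a contraction by Corollary~\ref{cor:Bartha_contraction}. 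Your route buys a non-inductive, operator-theoretic argument with an explicit closed form, and it decouples totality from the trace axioms, avoiding the paper's bootstrap in which vanishing II is itself a tool inside the totality induction; the cost is invoking the heavier structure theory of completely nonunitary contractions directly in the theorem, where the paper quarantines it inside the tail-vanishing lemma and gets by with only the scalar Bartha case. Two small points to tidy: treat the degenerate case \(U_1 = \{0\}\) explicitly (there \(\ex^U(f) = f_\sub{BA}\) and nothing needs inverting), and spell out why \(\opnorm{f_\sub{U_1U_1}^{\dim U_1}} < 1\) gives invertibility of \(\id - f_\sub{U_1U_1}\) --- for instance, any \(v\) with \(f_\sub{U_1U_1}(v) = v\) satisfies \(\norm{v} = \norm{f_\sub{U_1U_1}^{\dim U_1}(v)} < \norm{v}\) unless \(v = 0\), and injectivity suffices in finite dimensions.
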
 \begin{proof}
  Corollary~\ref{cor:FdHilb_hom-convergence} established that \(\FdHilb\) is a hom-convergence UDC and, for every contraction \(f \colon A \oplus U \to B \oplus U\) in \(\FdHilb\), Lemma~\ref{lem:FdContraction_ki} shows that \((k,i) \Vdash \Tr_\sub{\ki}^U(f)\) whereas Lemma~\ref{lem:FdContraction_tail_convergent} shows that \(f\) is \(U\)-tail vanishing.
  According to Lemma~\ref{lem:ex_from_ki} this is sufficient for \(\ex^{U}(f)\) to be defined in \(\FdHilb\) for every contraction \(f\) and every \(U \in \FdHilb\), with \(\ex^{U}(f) = \Tr_\sub{\ki}^U(f)\).
  Since \(\Tr_\sub{\ki}\) is a partial trace in \(\FdHilb\) and the embedding functor \(\FdContraction \into \FdHilb\) is faithful and strict monoidal, Proposition~\ref{prop:induced_trace} implies that \((\FdContraction,\oplus,\widehat{\ex})\) is a partially traced category, where:
  \begin{equation*}
    \widehat{\ex}^U(f) = \begin{cases}
      \ex^U(f) &\ifc \ex^U(f) \text{ is a contraction} \\
      \undefined &\otherwise.
    \end{cases}
  \end{equation*}

  To show that \((\FdContraction,\oplus,\widehat{\ex})\) is a totally traced category we need to show that for every contraction \(f \in \FdContraction(A \oplus U,B \oplus U)\) the result of the execution formula \(\ex^U(f)\) is guaranteed to be a contraction. 
  We prove this by an inductive argument on the dimension of \(U\).
  If \(U = \Cset\) the component \(f_\sub{\Cset\Cset} \colon \Cset \to \Cset\) is just a scalar.
  \begin{itemize}
    \item If \(\abs{f_\sub{\Cset\Cset}} < 1\) then \(1\!-\!f_\sub{\Cset\Cset}\) is invertible and
    \begin{equation*}
      \Sigma \{f_\sub{\Cset\Cset}^n\}_\Nset = (1\!-\!f_\sub{\Cset\Cset})^{-1}.
    \end{equation*}
    Thanks to \(\FdHilb\) being \(\SCat{g}\)-enriched, composition distributes over \(\Sigma\) so that:
    \begin{equation*}
      \ex^\Cset(f) = f_\sub{BA} + f_\sub{B\Cset} (1\!-\!f_\sub{\Cset\Cset})^{-1} f_\sub{\Cset A}.
    \end{equation*}
    Then, according to Corollary~\ref{cor:Bartha_contraction}, \(\ex^\Cset(f)\) is a contraction.
    \item Otherwise, if \(\abs{f_\sub{\Cset\Cset}} = 1\) then Corollary~\ref{cor:isometry_column_zero} implies that \(f_\sub{B\Cset} = 0\) and, hence, \(\ex^{U}(f) = f_\sub{BA}\), which is necessarily a contraction.
  \end{itemize}
  If \(U = \Cset^{\oplus k}\) for some \(k > 1\) it follows from repeated applications of the above that \(\ex^\Cset(\ldots \ex^\Cset(f))\) is a contraction.
  Since \((\FdContraction,\oplus,\widehat{\ex})\) has been established to be partially traced, we may use vanishing II repeatedly to obtain:
  \begin{equation*}
    \widehat{\ex}^\Cset(\ldots \widehat{\ex}^\Cset(f)) \keq \widehat{\ex}^{\Cset^{\oplus k}}(f)
  \end{equation*}
  and, hence, \(\ex^{\Cset^{\oplus k}}(f)\) must be a contraction.
  For any arbitrary \(U\) there is a unitary \(\phi \colon U \to \Cset^{\oplus \dim{U}}\) due to \(U\) being finite-dimensional.
  Clearly, \(f' = (\id_B \oplus \phi) f (\id_A \oplus \phi^\dagger)\) is a contraction and it has already been established that \(\ex^{\Cset^{\oplus\dim{U}}}(f')\) is a contraction; then, using dinaturality we obtain:
  \begin{equation*}
    \widehat{\ex}^{\Cset^{\oplus\dim{U}}}(f') \keq \widehat{\ex}^U(f \circ (\id_A \oplus \phi^\dagger \phi)) = \widehat{\ex}^U(f)
  \end{equation*}
  implying that \(\ex^U(f)\) is a contraction for any arbitrary \(f \in \FdContraction(A\oplus U, B \oplus U)\).
  Consequently, \((\FdContraction,\oplus,\widehat{\ex})\) is a totally traced category, completing the proof.
\end{proof}

The theorem above along with Lemma~\ref{lem:ex_from_ki} establish that both the execution formula and the kernel-image trace coincide in \(\FdContraction\). This is in stark contrast to Proposition 3.14 from~\cite{Malherbe} where it was shown that no partial trace in \((\Vect,\oplus)\) --- nor in \((\FdHilb,\oplus)\) --- could simultaneously coincide with the execution formula and the kernel-image trace. This is no contradiction, though, as the proof of their result is based on a linear map presented as a counterexample, but such a linear map is not a contraction and hence their result need not hold for \(\FdContraction\). Notice that a similar situation occurred in Example~\ref{ex:FdHilb_counterexample} where a linear map in \(\FdHilb\) (but not in \(\FdContraction\)) violated vanishing II.
In retrospective, this is not surprising: notice that the characterisation of tail vanishing morphisms (Definition~\ref{def:tail_convergent}) immediately rules out any linear map whose operator norm is strictly larger than one. Since we heavily relied on the fact that morphisms in \(\FdContraction\) are tail vanishing (Lemma~\ref{lem:FdContraction_tail_convergent}) in our argument leading to the previous theorem (particularly, Lemma~\ref{lem:ex_from_ki}), our argument is exploiting a key difference between \(\FdHilb\) and \(\FdContraction\).

On the other hand, even though the categories \(\FdIsometry\) and \(\FdUnitary\) are not UDCs (because they lack quasi-projections), we may use their faithful strict monoidal embedding into \(\FdContraction\) to induce a total trace in them, calculated via the execution formula.

\begin{corollary}
  The category \((\FdIsometry,\oplus,\widehat{\ex})\) is totally traced, where:
  \begin{equation*}
    \widehat{\ex}^U(f) = \begin{cases}
      \ex^U(f) &\ifc \ex^U(f) \text{ is an isometry} \\
      \undefined &\otherwise.
    \end{cases}
  \end{equation*}
  Similarly, the category \((\FdUnitary,\oplus,\widehat{\ex})\) is totally traced, where \(\ex^U(f)\) is instead required to be unitary.
\end{corollary}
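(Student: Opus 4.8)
The plan is to transport the total trace on $(\FdContraction,\oplus,\ex)$ established in Theorem~\ref{thm:FdContraction_traced} along the embeddings $\FdIsometry \into \FdContraction$ and $\FdUnitary \into \FdContraction$. Both embeddings are faithful, strict (hence strong) symmetric monoidal functors, so Proposition~\ref{prop:induced_trace} immediately yields that $(\FdIsometry,\oplus,\widehat{\ex})$ and $(\FdUnitary,\oplus,\widehat{\ex})$ are \emph{partially} traced categories, with $\widehat{\ex}^U(f)$ defined precisely when $\ex^U(f)$ lands back in the subcategory. The entire content of the corollary is therefore to upgrade ``partial'' to ``total'', \ie{} to show that $\ex^U(f)$ is an isometry (resp.\ unitary) whenever $f$ is.

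For the isometry case I would mirror the inductive argument on $\dim U$ from the proof of Theorem~\ref{thm:FdContraction_traced}, replacing Corollary~\ref{cor:Bartha_contraction} by Bartha's Proposition~\ref{prop:Bartha_isometry} in the base case. Concretely, for $U = \Cset$ the scalar $f_\sub{\Cset\Cset}$ satisfies $\abs{f_\sub{\Cset\Cset}} \leq 1$. If $\abs{f_\sub{\Cset\Cset}} < 1$ then $\id - f_\sub{\Cset\Cset}$ is invertible, $\ex^\Cset(f) = f_\sub{BA} + f_\sub{B\Cset}(\id - f_\sub{\Cset\Cset})^{-1} f_\sub{\Cset A}$, and Proposition~\ref{prop:Bartha_isometry} gives that this is an isometry. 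If $\abs{f_\sub{\Cset\Cset}} = 1$ then Corollary~\ref{cor:isometry_column_zero} forces $f_\sub{B\Cset} = 0$, so $\ex^\Cset(f) = f_\sub{BA}$; expanding $f^\dagger f = \id_{A \oplus \Cset}$ blockwise, the $(A,\Cset)$ entry reads $f_\sub{\Cset A}^\dagger f_\sub{\Cset\Cset} = 0$, and since $f_\sub{\Cset\Cset}$ is a nonzero scalar this gives $f_\sub{\Cset A} = 0$, whence the $(A,A)$ entry $f_\sub{BA}^\dagger f_\sub{BA} = \id_A$ shows that $f_\sub{BA}$ is an isometry. The inductive step is then identical to the theorem: repeated application of $\widehat{\ex}^\Cset$ keeps us inside the isometries by the base case, vanishing~II reassembles these into $\widehat{\ex}^{\Cset^{\oplus \dim U}}$, and dinaturality along a unitary $\phi \colon U \to \Cset^{\oplus \dim U}$ transfers the conclusion to an arbitrary $U$.

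For the unitary case I would reduce to the isometry case via the identity
\begin{equation*}
  \ex^U(f)^\dagger \keq \ex^U(f^\dagger),
\end{equation*}
which follows by taking adjoints term by term in the execution formula, using that the dagger $\FdHilb(A,B) \to \FdHilb(B,A)$ is an additive, operator-norm-preserving (hence continuous) map and therefore, by Lemma~\ref{lem:HausCMon_preserves_summability}, commutes with $\Sigma$. If $f$ is unitary then both $f$ and $f^\dagger = f^{-1}$ are isometries, so the isometry case makes $\ex^U(f)$ and $\ex^U(f)^\dagger = \ex^U(f^\dagger)$ both isometries; an operator whose adjoint is also an isometry is unitary, hence $\ex^U(f)$ is unitary as required.

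The main obstacle will be the boundary sub-case $\abs{f_\sub{\Cset\Cset}} = 1$ of the isometry base case: here $\id - f_\sub{UU}$ fails to be invertible, so Bartha's formula is unavailable and one must argue directly --- as sketched above --- that the surviving component $f_\sub{BA}$ is itself an isometry rather than merely a contraction. A secondary technical point requiring care is the justification that taking adjoints commutes with the infinitary sum $\Sigma$ in the unitary case, since the dagger is only conjugate-linear; this is handled by observing that it is nonetheless an additive, norm-preserving map of hom-sets and invoking Lemma~\ref{lem:HausCMon_preserves_summability}.
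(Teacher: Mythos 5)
Your proof is correct. The isometry half is essentially the paper's own argument: induce a partial trace on \(\FdIsometry\) from \((\FdContraction,\oplus,\ex)\) via Proposition~\ref{prop:induced_trace}, then establish totality by induction on \(\dim U\), with Proposition~\ref{prop:Bartha_isometry} in the base case and vanishing II plus dinaturality along a unitary \(\phi \colon U \to \Cset^{\oplus \dim U}\) in the inductive step; indeed, you treat the boundary subcase \(\abs{f_\sub{\Cset\Cset}} = 1\) (where \(\id - f_\sub{\Cset\Cset}\) is not invertible and Bartha's formula is unavailable) more explicitly than the paper does, and your blockwise computation showing \(f_\sub{\Cset A} = 0\), hence that \(\ex^\Cset(f) = f_\sub{BA}\) is itself an isometry, is exactly the right way to fill in that detail. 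Where you genuinely diverge is the unitary case: the paper settles it in one line by the finite-dimensional fact that a morphism in \(\FdHilb\) is unitary iff it is an isometry between spaces of equal dimension (applicable because \(f\) unitary forces \(\dim A = \dim B\)), whereas you prove \(\ex^U(f)^\dagger \keq \ex^U(f^\dagger)\) --- legitimately invoking Lemma~\ref{lem:HausCMon_preserves_summability}, since the dagger is additive and operator-norm-preserving, hence a continuous monoid homomorphism between hom-sets --- and conclude that \(\ex^U(f)\) is an isometry with isometric adjoint, hence unitary. The paper's route is shorter and needs no commutation of \(\dagger\) with \(\Sigma\); yours buys robustness, since it makes no appeal to dimension counting, which is exactly the step that fails in infinite dimensions (\eg{} the unilateral shift is an isometry between isomorphic spaces that is not unitary). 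Your reduction of the unitary case to the isometry case would therefore carry over unchanged to any setting in which the isometry statement itself could be established, whereas the paper's would not.
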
 \begin{proof}
  The embedding functor \(\FdIsometry \into \FdContraction\) is faithful and strict monoidal and, hence, \((\FdIsometry,\oplus,\widehat{\ex})\) is partially traced according to Proposition~\ref{prop:induced_trace}.
  Next, we must prove that for every isometry \(f \colon A \oplus U \to B \oplus U\) in \(\FdHilb\) \(\ex^U(f)\) is an isometry.
  This follows from a similar argument as the one used in the previous theorem:
  \begin{itemize}
    \item if \(U = \Cset\) then Proposition~\ref{prop:Bartha_isometry} can be used to show that \(\ex^\Cset(f)\) is an isometry;
    \item then, repeated application of this implies that \(\ex^\Cset(\ldots \ex^\Cset(f))\) is an isometry and, by vanishing II, \(\ex^{\Cset^{\oplus k}}(f)\) is an isometry as well;
    \item for every \(U\) there is a unitary \(\phi \colon U \to \Cset^{\oplus \dim{U}}\) due to \(U\) being finite-dimensional, so \(f' = (\id_B \oplus \phi) f (\id_A \oplus \phi^\dagger)\) is an isometry and we may use dinaturality to conclude that \(\ex^U(f)\) is an isometry.
  \end{itemize}
  Consequently, \(\widehat{\ex}\) is a total function and \((\FdIsometry,\oplus,\widehat{\ex})\) is a totally traced category.

  The case of \((\FdUnitary,\oplus,\widehat{\ex})\) can be proven with a similar argument, this time using that, in \(\FdHilb\), a morphism \(f \colon A \to B\) is unitary if and only if \(\dim(A) = \dim(B)\) and \(f\) is an isometry.
\end{proof}

Thus, it has been shown that the execution formula --- which is closely related to iterative loops --- is a well-defined trace in some important categories of quantum processes.
This was previously established in the case of \(\FdIsometry\) and \(\FdUnitary\) by Bartha~\cite{Bartha} using a different approach; a brief comparison between these two approaches appears in Section~\ref{sec:trace_rel_work}.
The proof of some of the results in this section heavily relied on the fact that the objects of \(\FdContraction\) are finite-dimensional Hilbert spaces; some open questions regarding the general case of \(\Contraction\) are discussed in Section~\ref{sec:open_questions}. Furthermore, the following remark discusses the prospects of proving that the execution formula is continuous.

\begin{remark} \label{rmk:continuous_trace}
  We know from Proposition~\ref{prop:preserve_limit} that a function
  \begin{equation*}
    \ex^U \colon  \FdUnitary(A \oplus U, B \oplus U) \to  \FdUnitary(A, B)
  \end{equation*}
  is continuous if and only if for every net \(\alpha \colon D \to \FdUnitary(A \oplus U, B \oplus U)\) the existence of \(\lim \alpha\) implies \(\lim \ex^U \circ \alpha \keq \ex^U(\lim \alpha)\). The topology considered here is the strong operator topology (see~\ref{def:SOT}). Moreover, recall that \(\ex\) is defined in terms of a \(\Sigma\) function that arises from the limit of certain net of finite partial sums (see~\ref{def:topological_Sigma}). Consequently, we may write:
  \begin{equation*}
    \ex^U (\Lim{d \in D} \alpha(d)) \keq \Lim{\fpset{\Nset}} \Lim{d \in D} \sigma_{\alpha(d)}
  \end{equation*}
  where \(\sigma_{\alpha(d)} \colon \fpset{\Nset} \to \FdUnitary(A, B)\) is the net of finite partial sums corresponding to the sum of paths of \(\alpha(d) \colon A \oplus U \to B \oplus U\). Notice that we know that \(\ex^U (\lim \alpha)\) is well defined since \(\lim \alpha\) is unitary and \(\ex\) is total in \(\FdUnitary\).
  Then, we would need to prove that the limit point
  \begin{equation*}
    \Lim{d \in D} (\ex^U  \alpha(d)) \keq \Lim{d \in D} \Lim{\fpset{\Nset}} \sigma_{\alpha(d)}
  \end{equation*}
  exists. Notice that the difference between these expressions is the ordering of the limits so it is not immediate that the existence of \(\ex^U(\lim \alpha)\) implies that of \(\lim \ex^U \circ \alpha\).
  To prove such a limit exchange holds we often need both that:
  \begin{itemize}
    \item for each \(d \in D\) the corresponding limit converges --- in this case, \(\Lim{J \in \fpset{\Nset}} \sigma_{\alpha(d)}\) does exist for all \(d \in D\) since \(\ex\) is total --- and that
    \item the net \(\sigma_{\alpha(-)}\) \emph{uniformly} converges with respect to \(D\) --- in this case meaning that for each open neighbourhood \(U\) we can choose \(d_U \in D\) such that for all \(d \geq d_U\) and all \(J \in \fpset{\Nset}\) it holds that \(\sigma_{\alpha(d)}(J) \in U\).
  \end{itemize}
  Such conditions are sufficient in the case of metric spaces and when limits are taken over sequences (see Moore-Osgood theorem, Theorem 2 from Chapter VII in~\cite{Graves}). However, it is not immediate whether the second condition is satisfied in our case, nor whether these two conditions would be sufficient in our general case.
\end{remark}

The following subsection compares \((\FdContraction,\oplus,\ex)\) and \((\CPTR,\oplus,\ex)\) and shows that their traces cannot be reconciled, hinting at a fundamental difference between quantum and classical iterative loops.

\subsection{Comparing the traces in \(\FdContraction\) and \(\CPTR\)}
\label{sec:comparing_traces}

In Section~\ref{sec:classical_loop}, \((\CPTR,\oplus,\ex)\) was given as an example of a totally traced category capturing the notion of classical iterative loops of quantum processes.
The previous section has established that \((\FdContraction,\oplus,\ex)\) is totally traced as well and, in this case, different execution paths may cancel out --- additive inverses of morphisms exist --- which is a fundamental characteristic of quantum control-flow.
We show that the canonical functor from \(\FdContraction\) to \(\CPTR\) is not a traced monoidal functor, not even when the definition of traced monoidal functor is relaxed.

\begin{definition} \label{def:env_functor}
  Let \(E \colon \FdContraction \to \CPTR\) be the faithful functor that maps Hilbert spaces \(A \in \FdContraction\) to \Cstar-algebras \(B(A) \in \CPTR\) and which maps morphisms \(f \in \FdContraction(A,B)\) to the CPTR map \(E(f)\) defined for all \(\rho \in B(A)\) as follows:
  \begin{equation*}
    E(f)(\rho) = f \circ \rho \circ f^\dagger.
  \end{equation*}
\end{definition}

Notice that this functor is only \emph{lax} monoidal: for every \(A,B \in \FdContraction\) there is a canonical morphism \(\theta_{A,B} \colon E(A) \oplus E(B) \to E(A \oplus B)\) given by:
\begin{equation*}
  \theta_{A,B}(\rho,\rho') = \begin{pmatrix}
    \rho & 0 \\ 0 & \rho'
  \end{pmatrix}
\end{equation*}
for all \((\rho,\rho') \in E(A) \oplus E(B)\), but \(\theta_{A,B}\) is not an isomorphism.
It is straightforward to check that \(\theta\) is a natural transformation and, indeed, \((E,\theta)\) is a lax monoidal functor.
Moreover, considering that \(\theta\) is injective, it has a left inverse \(\phi \colon E(A \oplus B) \to E(A) \oplus E(B)\) acting as follows on every \(\rho \in E(A \oplus B)\):
\begin{equation*}
  \phi \begin{pmatrix}
    \rho_\sub{AA} & \rho_\sub{AB} \\
    \rho_\sub{BA} & \rho_\sub{BB} 
  \end{pmatrix} = (\rho_\sub{AA}, \rho_\sub{BB}).
\end{equation*}
It is immediate to check that, indeed, \(\phi \circ \theta = \id\) but \(\theta \circ \phi \not= \id\).
Notice that \((E,\phi)\) is a colax monoidal functor; in a situation such as this, we may define a weaker notion of traced monoidal functors.

\begin{definition}
  Let \((\C,\oplus_\C,\Tr_\C)\) and \((\D,\oplus_\D,\Tr_\D)\) be totally traced categories and let \(F \colon \C \to \D\) be a functor such that \((F,\theta)\) is lax monoidal, \((F,\phi)\) is colax monoidal and \(\phi \circ \theta = \id\).
  Then, \(F\) is said to be a \emph{lax traced monoidal functor} if for all \(A,B \in \C\) there is a function \(\epsilon_{A,B} \colon \D(F(A),F(B)) \to \D(F(A),F(B))\) such that, for all \(f \in \C(A\oplus U,B \oplus U)\), it maps:
  \begin{equation*}
    F(\Tr_\C^U(f)) \mapsto \Tr_\D^{F(U)}(\phi \circ F(f) \circ \theta).
  \end{equation*}
  Alternatively, \(F\) is said to be a \emph{colax traced monoidal functor} if \(\epsilon_{A,B}\) maps in the other direction:
  \begin{equation*}
    \Tr_\D^{F(U)}(\phi \circ F(f) \circ \theta) \mapsto F(\Tr_\C^U(f)).
  \end{equation*}
\end{definition}

The motivation behind this definition is two-fold: on one hand, the requirement (from the standard definition of traced monoidal functor) that \(F\) is strong monoidal is lifted and, on the other hand, instead of requiring that \(F(\Tr_\C(f)) = \Tr_\D(\phi \circ F(f) \circ \theta)\) we only require that there is some arbitrary mapping between the results.
This is an extremely weak notion of traced monoidal functor and yet, the functor \(E \colon \FdContraction \to \CPTR\) is neither lax traced nor colax traced, making it apparent that the traces in \(\FdContraction\) and \(\CPTR\) are fundamentally different.
This result is shown by counterexample.

Let \(\sigma\) be the symmetric braiding in \(\FdContraction\), let \(i \colon \Cset \to \Cset\) be the morphism in \(\FdContraction\) that multiplies a complex number by the imaginary unit and let \(H \colon \Cset \oplus \Cset \to \Cset \oplus \Cset\) be the Hadamard map:
\begin{equation*}
  H = \frac{1}{\sqrt{2}} \begin{pmatrix}
    1 & 1 \\
    1 & -1
  \end{pmatrix}.
\end{equation*}
Fix the following morphisms of type \(\Cset \oplus \Cset \oplus \Cset \to \Cset \oplus \Cset \oplus \Cset\) in \(\FdContraction\):
\begin{equation*}
  f = \id_\Cset \oplus H \quad\quad g = \id_\Cset \oplus \sigma \quad\quad h = (\id_{\Cset \oplus \Cset} \oplus i) \circ g.
\end{equation*}
Applying \(\ex\) to these morphisms yields:
\begin{equation} \label{eq:counterexample_Contraction}
  \ex^\Cset(f) = \id_{\Cset \oplus \Cset} \quad\quad \ex^\Cset(g) = \id_{\Cset \oplus \Cset} \quad\quad \ex^\Cset(h) = \id_{\Cset} \oplus i
\end{equation}
where the case of \(\ex^\Cset(f)\) follows from the example discussed at the beginning of Section~\ref{sec:ex_FdContraction}, and those of \(\ex^\Cset(g)\) and \(\ex^\Cset(h)\) follow from the yanking axiom of traced categories.
Let \(f' \colon E(\Cset \oplus \Cset) \oplus E(\Cset) \to E(\Cset \oplus \Cset) \oplus E(\Cset)\) be
\begin{equation*}
  f' = \phi \circ E(f) \circ \theta
\end{equation*}
and define \(g'\) and \(h'\) in the same manner.
Let \(f'\) be decomposed as follows:
\begin{equation*}
  f' = \begin{pmatrix}
    f'_{00} \colon E(\Cset \oplus \Cset) \to E(\Cset \oplus \Cset) \quad& f'_{01} \colon E(\Cset) \to E(\Cset \oplus \Cset) \\
    f'_{10} \colon E(\Cset \oplus \Cset) \to E(\Cset) \quad& f'_{11} \colon E(\Cset) \to E(\Cset) \\
  \end{pmatrix}
\end{equation*}
and similarly for \(g'\) and \(h'\).
Let \(\ket{\psi} \in \Cset \oplus \Cset\) be \(\ket{\psi} = \tfrac{1}{\sqrt{2}}\ket{0} + \tfrac{1}{\sqrt{2}}\ket{1}\) and let \(\rho = \ketbra{\psi}{\psi}\); then:
\begin{equation*}
  f'_{00}(\rho) = \frac{1}{2}\begin{pmatrix}
    1 & \tfrac{1}{\sqrt{2}} \\ \tfrac{1}{\sqrt{2}} & \tfrac{1}{2}
  \end{pmatrix}
  \quad\quad
  g'_{00}(\rho) = \frac{1}{2}\begin{pmatrix}
    1 & 0 \\ 0 & 0
  \end{pmatrix}
  \quad\quad
  h'_{00}(\rho) = \frac{1}{2}\begin{pmatrix}
    1 & 0 \\ 0 & 0
  \end{pmatrix}
\end{equation*}
and, for all \(n \in \Nset\),
\begin{align*}
  f'_{01} (f'_{11})^n f'_{10}(\rho) &= \frac{1}{4} \cdot \frac{1}{2^{n+1}}\begin{pmatrix}
    0 & 0 \\ 0 & 1
  \end{pmatrix}
  \\
  g'_{01} (g'_{11})^n g'_{10}(\rho) &= \frac{1}{2} \cdot \frac{1}{2^{n+1}}\begin{pmatrix}
    0 & 0 \\ 0 & 1
  \end{pmatrix}
  \\
  h'_{01} (h'_{11})^n h'_{10}(\rho) &= \frac{1}{2} \cdot \frac{1}{2^{n+1}}\begin{pmatrix}
    0 & 0 \\ 0 & 1
  \end{pmatrix}
\end{align*}
so that, when all of these paths are summed up, we obtain:
\begin{equation} \label{eq:counterexample_CPTR}
  \begin{aligned}
    \ex^{E(\Cset)}(f')(\rho) &= \frac{1}{2}\begin{pmatrix}
      1 & \tfrac{1}{\sqrt{2}} \\ \tfrac{1}{\sqrt{2}} & 1
    \end{pmatrix}
    \\
    \ex^{E(\Cset)}(g')(\rho) &= \frac{1}{2}\begin{pmatrix}
      1 & 0 \\ 0 & 1
    \end{pmatrix}
    \\
    \ex^{E(\Cset)}(h')(\rho) &= \frac{1}{2}\begin{pmatrix}
      1 & 0 \\ 0 & 1
    \end{pmatrix}.
  \end{aligned}
\end{equation}

According to~\eqref{eq:counterexample_Contraction}, \(\ex^\Cset(f) = \ex^\Cset(g)\) and, hence, \(E(\ex^\Cset(f)) = E(\ex^\Cset(g))\); however, according to~\eqref{eq:counterexample_CPTR}, \(\ex^{E(\Cset)}(f') \not= \ex^{E(\Cset)}(g')\) since they are distinct at least at \(\rho\).
Consequently, there exists no function acting as \(E(\ex^\Cset(f)) \mapsto \ex^{E(\Cset)}(f')\) both for \(f\) and \(g\), preventing \(E\) from being a lax traced monoidal functor.
On the other hand, \(\ex^\Cset(g) \not= \ex^\Cset(h)\) and \(E(\ex^\Cset(g)) \not= E(\ex^\Cset(h))\) due to \(E\) being faithful, but \(\ex^{E(\Cset)}(g') = \ex^{E(\Cset)}(h')\) so that a function acting as \(\ex^{E(\Cset)}(g') \mapsto E(\ex^\Cset(g))\) both for \(g\) and \(h\) cannot exist, thus preventing \(E\) from being colax traced monoidal either.

\subsection{Open questions in \(\Contraction\)}
\label{sec:open_questions}

Sections~\ref{sec:UDC} and~\ref{sec:quantum_ex_trace} have been organised in such a way that all of the results whose proofs depend on finite-dimensionality appear in Section~\ref{sec:ex_FdContraction}.
In this section, we discuss the role that finite-dimensionality plays in these proofs; the objective being to shed some light on whether or not \((\Contraction,\oplus,\ex)\) is a partially traced category and, if it were, identify the obstacles in the way of proving so.
\begin{itemize}
  \item Lemma~\ref{lem:FdContraction_ki} establishes that every contraction in \(\FdHilb\) can be traced using the kernel-image trace. Recall that \((\Hilb,\oplus,\Tr_\ki)\) is a partially traced category (see Proposition~\ref{prop:SCat_ki_trace}) and, hence, we may ask whether every contraction in \(\Hilb\) can be traced.
  The proof of Lemma~\ref{lem:FdContraction_ki} relies on the fact that \(\im(\id\!-\!f_\sub{UU})\) is a closed subspace of \(U\), which is trivial to prove since it is a finite-dimensional space.
  Unfortunately, it is unclear whether \(\im(\id\!-\!f_\sub{UU})\) would be closed for an arbitrary morphism \(f \colon A \oplus U \to B \oplus U\) in \(\Contraction\).
  \item Lemma~\ref{lem:FdContraction_decomposition} establishes that every completely nonunitary \(f_1 \colon H_1 \to H_1\) in \(\FdContraction\) satisfies that \(f_1^{\dim{H_1}}\) is a \emph{strict} contraction. This claim cannot be properly phrased in \(\Contraction\) since \(\dim{H_1}\) may be infinite. Moreover, Example~\ref{ex:Hari} provides a contraction \(g\) whose component \(g_\sub{UU}\) is completely nonunitary but satisfies \(\opnorm{g_\sub{UU}^n} = 1\) for all \(n \in \Nset\), giving strong evidence that a result akin to Lemma~\ref{lem:FdContraction_decomposition} cannot be established in \(\Contraction\).
  \item Lemma~\ref{lem:FdContraction_bound_to_sum} establishes an upper bound for \(\sum \opnorm{f_\sub{UU}^k f_\sub{UA}}\). Its proof heavily relies on Lemma~\ref{lem:FdContraction_decomposition} and, hence, it is unlikely its proof can be generalised to \(\Contraction\).
  Furthermore, Example~\ref{ex:Hari} provides a contraction \(g\) for which \(\sum \opnorm{g_\sub{UU}^k g_\sub{UA}}\) cannot be bounded from above; thus, Lemma~\ref{lem:FdContraction_bound_to_sum} does not hold in \(\Contraction\).
\end{itemize}

These three lemmas were used in Section~\ref{sec:ex_FdContraction} to establish that every contraction in \(\FdHilb\) is tail vanishing (Lemma~\ref{lem:FdContraction_tail_convergent}).
Due to the above, the approach from Section~\ref{sec:ex_FdContraction} cannot be used to prove that every contraction in \(\Hilb\) is tail vanishing.
However, recall that Lemma~\ref{lem:lim_last_term} establishes that \(\lim f_\sub{BU} f_\sub{UU}^n \keq 0\) for every contraction \(f \colon A \oplus U \to B \oplus U\) in \(\Hilb\) and, hence, for \(f\) to be \(U\)-tail vanishing we only need to show that:
\begin{equation*}
  \Lim{J \in \fpset{\Nset}} \Sigma \{f_\sub{BU} f_\sub{UU}^j f_\sub{UA}\}_{\tilde{J}} \keq 0
\end{equation*}
where \(\tilde{J}\) is the set of `gaps' in \(J\) (see Definition~\ref{def:tail_convergent}).
Unlike the previous lemmas, this equation does seem to hold for the contraction provided in Example~\ref{ex:Hari} and, if this were the case for all contractions, it would imply that \((\Contraction,\oplus,\ex)\) is a partially traced category, as established in Lemma~\ref{lem:ex_from_ki}.
Consequently, there is some hope that \(\Contraction\) is partially traced with respect to the execution formula, but proving so would require an approach different from the one presented in Section~\ref{sec:ex_FdContraction}.

\begin{example} \label{ex:Hari}
  Let \(U\), \(A\) and \(B\) be separable Hilbert spaces of countably infinite dimension, each with an orthonormal basis given by \(\{u_i\}_{i \in \Nset}\), \(\{a_i\}_{i \in \Nset}\) and \(\{b_i\}_{i \in \Nset}\), respectively.
  Let \(g \colon A \oplus U \to B \oplus U\) be a morphism in \(\Contraction\) where, for all \(n \in \Nset\):
  \begin{equation*}
    g(u_n) = \vector{\sqrt{\tfrac{1}{2^n}}\cdot b_n}{\sqrt{1 - \tfrac{1}{2^n}}\cdot u_{n+1}}
  \end{equation*}
  and \(g(a_n) = u_n\).
  Even though for every \(v \in U\) there is some \(c \in \Rset\) strictly smaller than \(1\) such that \(\norm{g_\sub{UU}(v)} \leq c \cdot \norm{v}\), notice that \(\opnorm{g_\sub{UU}} = 1\); the same applies to \(\opnorm{g_\sub{UU}^n} = 1\) for every \(n \in \Nset\).
  We thank Hari Bercovici for providing us with this example.
\end{example} 

On another note, it may be argued that the trace in \((\FdContraction,\oplus,\ex)\) is not appropriate to capture iterative loops: we would expect that different execution paths --- \eg{} \(f_\sub{BA}\) and \(f_\sub{BU} f_\sub{UU}^2 f_\sub{UA}\) --- on the same input would produce their output at different time-steps. However, \(\FdContraction\) cannot capture the notion of time since this is an infinite domain and \(\FdContraction\) can only deal with finite-dimensional Hilbert spaces.
To this end, the following section introduces a particular category of quantum processes over time on which we the execution formula is a well-defined trace, thus providing a satisfactory formalisation of iterative loops in quantum computing.

\section{Coherent quantum processes over time}
\label{sec:LSI}

When the output of a process depends only on its input at the previous time-step, the time domain may be omitted from the mathematical formalism.
For instance, given any sequence of operations \(U_n \in \Unitary(H,H)\) and any initial state \(\ket{\psi_0} \in H\) the state after \(t\) steps can be easily represented as
\begin{equation*}
  \ket{\psi_t} = U_t \ldots U_2 U_1 \ket{\psi_0}.
\end{equation*}
Thus, the time domain may be abstracted away from the state and represented by the sequence of operations that remains to be applied.
In contrast, if we turn to open quantum walks (see Section~\ref{sec:QW}) we encounter situations as the one depicted in Figure~\ref{fig:time_domain_loop_bis}:\footnote{Figure~\ref{fig:time_domain_loop_bis} is a copy of Figure~\ref{fig:time_domain_loop}, reproduced here for the reader's convenience.} if the walker enters device \(f \colon A \oplus U \to B \oplus U\) through \(A\) at time-step \(t_0\), some of its amplitude will be found in \(B\) at \(t_0+1\) but some will traverse the loop and eventually leave at a later time-step.
Consequently, the output is not only localised at \(t_0+1\) but instead its amplitude spreads throughout \([t_0+1,\infty)\). 
This situation may be captured by functions describing the input and output over time.
For instance, in the case of Figure~\ref{fig:time_domain_loop_bis}, we may use functions \(a \colon \Zset \to \Cset\) and \(b \colon \Zset \to \Cset\) to describe the input (state on \(A\)) and output (state on \(B\)) respectively:
\begin{equation*}
  a(t) = \begin{cases}
    \psi &\ifc t = t_0 \\
    0 &\otherwise
  \end{cases}
  \quad\quad
  b(t) = \begin{cases}
    0 &\ifc t \leq t_0 \\
    f_\sub{BA}(\psi) &\ifc t = t_0+1 \\
    f_\sub{BU} f_\sub{UU}^{t-t_0-2} f_\sub{UA}(\psi) &\otherwise.
  \end{cases}
\end{equation*}

\begin{figure}
  \centering
  \begin{tikzpicture}
  \node[rectangle,draw=black,thick] (f) {\(\begin{pmatrix} f_\sub{BA} & f_\sub{BU} \\ f_\sub{UA} & f_\sub{UU}  \end{pmatrix}\)};
  \coordinate[below=5mm of f.west] (A);
  \node[circle,fill=black,minimum size=2.5mm,inner sep=0,left=7mm of A] (w) {};
  \coordinate[left=3mm of w] (Ad);
  \draw (Ad) -- (w);
  \draw (w) -- node[above] {\tiny \(A\)} (A);
  \coordinate[below=5mm of f.east] (B);
  \coordinate[right=12mm of B] (Bd);
  \draw (Bd) -- node[above] {\tiny \(B\)} (B);
  \coordinate[above=5mm of f.west] (Ui);
  \coordinate[left=6mm of Ui] (Uid);
  \draw (Uid) -- node[above] {\tiny \(U\)} (Ui);
  \coordinate[above=5mm of f.east] (Uo);
  \coordinate[right=6mm of Uo] (Uod);
  \draw (Uod) -- node[above] {\tiny \(U\)} (Uo);
  \coordinate[above=9mm of Uid] (Uiu);
  \coordinate[above=9mm of Uod] (Uou);
  \draw (Uiu) -- (Uou);
  \draw (Uid) edge[out=180,in=180,looseness=1.5] (Uiu);
  \draw (Uod) edge[out=0,in=0,looseness=1.5] (Uou);
  \node[below=0mm of w] (wt) {\scriptsize \(\ket{\psi}\)};
\end{tikzpicture}
\quad {\Large \(\xmapsto{\text{\scriptsize \ \ two steps later \ \ }}\)} \quad
\begin{tikzpicture}
  \node[rectangle,draw=black,thick] (f) {\(\begin{pmatrix} f_\sub{BA} & f_\sub{BU} \\ f_\sub{UA} & f_\sub{UU}  \end{pmatrix}\)};
  \coordinate[below=5mm of f.west] (A);
  \coordinate[left=13mm of A] (Ad);
  \draw (Ad) -- node[below] {\tiny \(A\)} (A);
  \coordinate[below=5mm of f.east] (B);
  \node[circle,fill=black,minimum size=2.5mm,inner sep=0,right=3mm of B] (wBUA) {};
  \node[circle,fill=black,minimum size=2.5mm,inner sep=0,right=6mm of wBUA] (wBA) {};
  \coordinate[right=3mm of wBA] (Bd);
  \draw (B) -- (wBUA);
  \draw (wBUA) -- node[above] {\tiny \(B\)} (wBA);
  \draw (wBA) -- (Bd);
  \coordinate[above=5mm of f.west] (Ui);
  \node[circle,fill=black,minimum size=2.5mm,inner sep=0,left=4mm of Ui] (wUUA) {};
  \coordinate[left=1mm of wUUA] (Uid);
  \draw (Uid) -- (wUUA);
  \draw (wUUA) -- node[above] {\tiny \(U\)} (Ui);
  \coordinate[above=5mm of f.east] (Uo);
  \coordinate[right=6mm of Uo] (Uod);
  \draw (Uod) -- node[above] {\tiny \(U\)} (Uo);
  \coordinate[above=9mm of Uid] (Uiu);
  \coordinate[above=9mm of Uod] (Uou);
  \draw (Uiu) -- (Uou);
  \draw (Uid) edge[out=180,in=180,looseness=1.5] (Uiu);
  \draw (Uod) edge[out=0,in=0,looseness=1.5] (Uou);
  \node[below=0mm of wBA] (wBAt) {\scriptsize \(\ket{\psi_\dsub{ba}}\)};
  \node[below=0mm of wBUA] (wBUAt) {\scriptsize \(\ket{\psi_\dsub{bua}}\)};
  \node[below=0mm of wUUA] (wUUAt) {\scriptsize \(\ket{\psi_\dsub{uua}}\)};
\end{tikzpicture}
  \caption{A state \(\ket{\psi}\) is given as input to a quantum iterative process; two time-steps later, the state is in a superposition \(\ket{\psi_\sub{ba}} + \ket{\psi_\sub{bua}} + \ket{\psi_\sub{uua}}\) where \(\ket{\psi_\sub{ba}} = f_\sub{BA}\ket{\psi}\), \(\ket{\psi_\sub{bua}} = f_\sub{BU}f_\sub{UA}\ket{\psi}\) and \(\ket{\psi_\sub{uua}} = f_\sub{UU}f_\sub{UA}\ket{\psi}\). The term \(\ket{\psi_\sub{ba}}\) `leaves' the system a step earlier than \(\ket{\psi_\sub{bua}}\).}
  \label{fig:time_domain_loop_bis}
\end{figure}

Then, \(\abs{b(t)}^2\) determines the probability of finding the walker in \(B\) if we were to measure at time-step \(t\).
Thus, we would require that the square of the amplitudes sum up to one:
\begin{equation*}
  \sum_{t=-\infty}^\infty \abs{b(t)}^2 = 1
\end{equation*}
but, in order to work with vector spaces, we instead consider states up to normalisation and, hence, impose that functions describing a quantum state must be square-summable:
\begin{equation*}
  \sum_{t=-\infty}^\infty \abs{a(t)}^2 < \infty.
\end{equation*}
We can now define a Hilbert space of square-summable functions of type \(\Zset \to \Cset\).

\begin{definition} \label{def:ltwo}
  Let \GLS{\(\ltwo\)}{ltwo} be the complex Hilbert space defined as follows:\footnote{This is slightly different from the usual \(\ltwo\) space, which is defined as the space of square-summable sequences. In essence, instead of a vector being a sequence (a function \(\Nset \to \Cset\)), we use \(\Zset\) as the domain to describe both directions of time.}
  \begin{itemize}
    \item vectors are functions \(\psi \colon \Zset \to \Cset\) such that
    \begin{equation*}
      \sum_{t=-\infty}^\infty \abs{\psi(t)}^2 < \infty
    \end{equation*}
    \item addition and scalar multiplication are defined pointwise, \ie{} for \(\psi,\phi \in \ltwo\) and \(\alpha \in \Cset\):
    \begin{equation*}
      (\psi + \phi)(t) = \psi(t) + \phi(t) \quad\quad (\alpha \psi)(t) = \alpha \cdot \psi(t)
    \end{equation*}
    \item inner product is defined as follows:
    \begin{equation*}
      \braket{\phi}{\psi} = \sum_{t=-\infty}^\infty \phi(t)^* \cdot \psi(t)
    \end{equation*}
  \end{itemize}
\end{definition}

Proving that the inner product given above is well-defined is a common exercise in courses on functional analysis; it can be shown using the Cauchy-Schwarz inequality in \(\Rset^n\) along with absolute convergence in \(\Cset\).
We should also check that \(\ltwo\) is indeed Cauchy complete but, once again, this is a well-known fact and it is omitted here for the sake of brevity.
The goal of this section is to discuss linear maps \(\ltwo \to \ltwo\) that may be interpreted as operators that act on quantum states over the time domain.
When discussing these, the convolution of functions in \(\ltwo\) will take a fundamental role.

\begin{definition} \label{def:l2_conv}
  For any two functions \(f,g \in \ltwo\) we define \(g \conv f \colon \Zset \to \Cset\) acting on every \(t \in \Zset\) as follows:\footnote{It can be shown that \(f,g \in \ltwo\) implies that \(g \conv f\) is well-defined; the argument is similar to the one used to prove that the inner product in \(\ltwo\) is well-defined, reproducing it for \((g \conv f)(t)\) for every \(t \in \Zset\) and using that \(S_\tau\) is an isometry. However, the function \(g \conv f\) need not be square-summable.}
  \begin{equation*}
    (g \conv f)(t) = \sum_{\tau = -\infty}^\infty g(t - \tau) \cdot f(\tau).
  \end{equation*}
  We refer to \(g \conv f\) as the \gls{convolution} of \(f\) and \(g\).
\end{definition}

It is straightforward to check that convolution is associative and commutative.
Moreover, let \(\delta \in \ltwo\) be the function:
\begin{equation} \label{eq:l2_delta}
  \delta(t) = \begin{cases}
    1 &\ifc t = 0 \\
    0 &\otherwise
  \end{cases}
\end{equation}
then \(\delta \conv f = f\) for any \(f \in \ltwo\) and, due to commutativity, \(f \conv \delta = f\) as well.
Moreover, convolution distributes over pointwise addition of functions, as established below.

\begin{proposition} \label{prop:conv_and_addition}
  Let \(f,g,h \in \ltwo\) and let \((f+g)(t) = f(t) + g(t)\), then:
  \begin{align*}
    h \conv (f+g) &= (h \conv f) + (h \conv g) \\
    (f+g) \conv h &= (f \conv h) + (g \conv h)
  \end{align*}
\end{proposition}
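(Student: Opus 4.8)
The plan is to verify both identities as equalities of functions $\Zset \to \Cset$, that is, pointwise at each $t \in \Zset$. First I would note that the left-hand side of the first identity is well-defined: since $\ltwo$ is a vector space it is closed under pointwise addition, so $f+g \in \ltwo$, and the footnote to Definition~\ref{def:l2_conv} guarantees that $h \conv (f+g)$, $h \conv f$ and $h \conv g$ are all well-defined (the defining series converge absolutely) for every $t$. This is the only place the $\ltwo$ hypothesis is genuinely used; note we do \emph{not} need the convolutions to be square-summable, since we are only comparing their values as complex numbers.

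For the first identity I would unfold the definition of convolution and the pointwise definition of addition, then distribute the product inside each summand:
\begin{equation*}
  (h \conv (f+g))(t) = \sum_{\tau = -\infty}^\infty h(t-\tau)\bigl(f(\tau) + g(\tau)\bigr) = \sum_{\tau = -\infty}^\infty \bigl(h(t-\tau)f(\tau) + h(t-\tau)g(\tau)\bigr).
\end{equation*}
The crucial step is then to split this single series into the two series $\sum_\tau h(t-\tau)f(\tau)$ and $\sum_\tau h(t-\tau)g(\tau)$; this is legitimate precisely because each of these is absolutely convergent (again by the footnote to Definition~\ref{def:l2_conv}), and linearity of absolutely convergent series then gives $(h \conv (f+g))(t) = (h \conv f)(t) + (h \conv g)(t)$. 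Since $t$ was arbitrary, the two functions agree everywhere, establishing the first identity.

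For the second identity I would avoid repeating the argument and instead invoke commutativity of convolution, which is noted just before the proposition. Concretely, $(f+g) \conv h = h \conv (f+g) = (h \conv f) + (h \conv g) = (f \conv h) + (g \conv h)$, where the outer equalities use commutativity and the middle equality is the first identity already proved. The main (and essentially only) obstacle is the justification of the termwise splitting of the infinite sum; everything else is routine unfolding of definitions. I expect no difficulty there, since absolute convergence of each constituent series is exactly what the well-definedness of convolution on $\ltwo$ supplies, so the splitting reduces to the standard fact that $\sum_\tau (a_\tau + b_\tau) = \sum_\tau a_\tau + \sum_\tau b_\tau$ whenever both right-hand series converge.
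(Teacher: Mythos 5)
Your proof is correct and takes essentially the same route as the paper's: a pointwise calculation that unfolds the convolution, distributes the product over the sum, splits the series into two, and then obtains the second identity from commutativity of convolution. The only difference is that you explicitly justify the series-splitting step via absolute convergence of the constituent sums, a point the paper's calculation passes over silently.
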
 \begin{proof}
  By explicit calculation, for all \(t \in \Zset\):
  \begin{align*}
    (h \conv (f+g))(t) &= \sum_{\tau = -\infty}^\infty h(t - \tau) \cdot (f+g)(\tau) \\
      &= \sum_{\tau = -\infty}^\infty h(t - \tau) \cdot (f(\tau) + g(\tau)) \\
      &= \left( \sum_{\tau = -\infty}^\infty h(t - \tau) \cdot f(\tau) \right) + \left( \sum_{\tau = -\infty}^\infty h(t - \tau) \cdot g(\tau) \right) \\
      &= (h \conv f)(t) + (h \conv g)(t) = ((h \conv f) + (h \conv g))(t).
  \end{align*}
  This proves the first identity from the claim; the second one follows from it due to commutativity of convolution.
\end{proof}

The \(\ltwo\) space lets us formalise the notion of quantum states over the discrete-time domain.
Similarly, the definition below describes a Hilbert space \(\Ltwo\) of square-integrable functions, formalising quantum states over continuous time.
An appropriate notion of convolution of functions in \(\Ltwo\) is also provided.

\begin{definition} \label{def:Ltwo}
  Let \GLS{\(\Ltwo\)}{Ltwo} be the complex Hilbert space defined as follows:
  \begin{itemize}
    \item vectors are functions \(\psi \colon \Rset \to \Cset\) such that
    \begin{equation*}
      \int_{-\infty}^\infty \abs{\psi(t)}^2 \,dt < \infty
    \end{equation*}
    \item addition and scalar multiplication are defined pointwise, \ie{} for \(\psi,\phi \in \Ltwo\) and \(\alpha \in \Cset\):
    \begin{equation*}
      (\psi + \phi)(t) = \psi(t) + \phi(t) \quad\quad (\alpha \psi)(t) = \alpha \cdot \psi(t)
    \end{equation*}
    \item inner product is defined as:
    \begin{equation*}
      \braket{\phi}{\psi} = \int_{-\infty}^\infty \phi(t)^* \cdot \psi(t) \,dt.
    \end{equation*}
  \end{itemize}
\end{definition}

\begin{definition} \label{def:L2_conv}
  For any two functions \(f,g \in \Ltwo\) we define \(g \conv f \colon \Rset \to \Cset\) acting on every \(t \in \Rset\) as follows:
  \begin{equation*}
    (g \conv f)(t) = \int_{-\infty}^\infty g(t - \tau) \cdot f(\tau) \,d\tau.
  \end{equation*}
\end{definition}

\begin{remark} \label{rmk:Dirac_delta}
  Convolution in \(\Ltwo\) is associative and commutative, but it has no unit. 
  Physicists often circumvent this by defining \(\delta\) to be the Dirac delta: a map that is zero everywhere except at \(0\), and whose square-integral is equal to \(1\); but, evidently, no such a function exists.
  More formally, we may consider a sequence of functions \(\delta_n\) in \(\Ltwo\) whose square-integral equals \(1\) and whose support is centered at zero, becoming narrower as \(n\) tends to infinity.
  Such a sequence does not converge, but we can nonetheless choose an \(n\) large enough so that the support of \(\delta_n\) is sufficiently narrow for the approximation \(\delta_n \conv f \approx f\) to be satisfactory for practical purposes.
  In what follows, \(\delta_\bullet \in \Ltwo\) will be used to denote a choice of \(\delta_n\) where \(n\) is some arbitrary large number.
\end{remark}

\subsection{The discrete-time Fourier transform (\emph{Preamble})}

A fundamental result in physics and engineering establishes that the convolution of two functions in \(\Ltwo\) can be calculated by pointwise multiplication of the Fourier transformed functions.\footnote{See Section 9.2 from~\cite{Aubin} to learn about the theory around the Fourier transform and Proposition 9.2.3 for this particular result).}
The main goal of this subsection is to introduce its \(\ltwo\) version, known as the discrete-time Fourier transform.
Both the discrete-time Fourier transform and its convolution theorem (Theorem~\ref{thm:l2_convolution}) will be key ingredients in subsequent discussions of quantum processes over discrete time.

\begin{definition} \label{def:DTFT}
  The \gls{discrete-time Fourier transform (DTFT)} is a linear map \(\cF_d\) that sends any \(f \in \ltwo\) to the function \(\cF_d[f] \colon \Rset \to \Cset\) defined below:
  \begin{equation*}
    \cF_d[f](\omega) = \sum_{t = -\infty}^\infty f(t) \, e^{-i\omega t}.
  \end{equation*}
\end{definition}

It is well-known (although not immediate) that if \(f \not = 0\) then \(\cF_d[f]\) is a periodic function with period \(2\pi\); consequently, \(\cF_d[f]\) cannot possibly be square-integrable, \(\cF_d[f] \not\in \Ltwo\).
Nevertheless, the DTFT has a left inverse obtained by integrating over a full \(2\pi\) period:
\begin{equation} \label{eq:DTFT_linv}
  \cF_d^l[g] = \frac{1}{2\pi} \int_{0}^{2\pi} g(\omega) \, e^{i\omega t} \,d\omega
\end{equation}
so that \((\cF_d^l \circ \cF_d)[f] = f\) for any \(f \in \ltwo\).
This can be derived from the following identity, which is satisfied for integers \(n \in \Zset\):
\begin{equation} \label{eq:delta_as_integral}
  \int_0^{2\pi} e^{i\omega n} \,d\omega = 2\pi \cdot \delta(n)
\end{equation}
where, for \(n=0\) the integral is just \(\int_0^{2\pi} d\omega = 2\pi\) and for any other integer \(n \in \Zset\), the interval of integration contains \(\abs{n}\) full periods of the function \(e^{i\omega n}\) and, hence, for each \(\omega \in (0,\pi)\) the value of \(e^{i\omega}\) cancels out with \(e^{i(\omega + \pi)}\) and the integral evaluates to \(0\).
Notice that the existence of this left inverse implies that \(\cF_d\) is an injective map.

\begin{remark} \label{rmk:FT_plane_waves}
  When functions \(\psi \in \ltwo\) are interpreted as (non-normalised) quantum states whose amplitude spreads over discrete time, the DTFT is understood to yield a function \(\cF_d[\psi] \colon \Rset \to \Cset\) describing the same quantum state over the frequency domain.
  This interpretation becomes clearer after examining the identity \(\psi = (\cF_d^l \circ \cF_d)[\psi]\):
  \begin{equation*}
    \psi(t) = \frac{1}{2\pi} \int_{0}^{2\pi} \cF_d[\psi](\omega) \cdot \gamma_\omega(t) \,d\omega
  \end{equation*}
  where \(\gamma_\omega(t) = e^{i\omega t}\) is the \emph{plane wave} of angular frequency \(\omega\).
  Consequently, the state \(\psi\) is described as a superposition (\ie{} a linear combination) of plane waves \(\gamma_\omega\) for all frequencies \(\omega \in \Rset\), where each complex number \(\cF[\psi](\omega)\) determines the amplitude (and phase) of each plane wave.
\end{remark}

The convolution theorem for functions in \(\ltwo\) is presented below.

\begin{theorem} \label{thm:l2_convolution}
  For any \(\omega \in \Rset\) and any two functions \(f,g \in \ltwo\):
  \begin{equation*}
    \cF_d[g \conv f](\omega) = \cF_d[g](\omega) \cdot \cF_d[f](\omega).
  \end{equation*}
\end{theorem}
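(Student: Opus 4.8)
The plan is to establish the identity by a direct computation at each fixed $\omega \in \Rset$, expanding both sides into series and reducing the convolution theorem to a rearrangement of a double sum together with a single change of summation index. First I would unfold the definitions: combining Definition~\ref{def:DTFT} with Definition~\ref{def:l2_conv} gives
\[
  \cF_d[g \conv f](\omega) \;=\; \sum_{t = -\infty}^\infty (g \conv f)(t)\, e^{-i\omega t} \;=\; \sum_{t = -\infty}^\infty \sum_{\tau = -\infty}^\infty g(t - \tau)\, f(\tau)\, e^{-i\omega t}.
\]
The crux is to interchange the two summations; granting this for the moment, I would factor $e^{-i\omega t} = e^{-i\omega(t-\tau)} e^{-i\omega\tau}$ and perform the change of index $s = t - \tau$ in the inner sum, which is a bijection of $\Zset$ for each fixed $\tau$. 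This yields
\[
  \sum_{t=-\infty}^\infty g(t-\tau)\, e^{-i\omega t} \;=\; e^{-i\omega\tau} \sum_{s=-\infty}^\infty g(s)\, e^{-i\omega s} \;=\; e^{-i\omega\tau}\, \cF_d[g](\omega),
\]
where the last equality is just the definition of $\cF_d[g](\omega)$. Substituting back, the factor $\cF_d[g](\omega)$ is constant in $\tau$ and can be pulled out of the remaining sum, leaving $\cF_d[g](\omega) \sum_\tau f(\tau) e^{-i\omega\tau} = \cF_d[g](\omega)\cdot\cF_d[f](\omega)$, which is the claim.

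The main obstacle is making the interchange of summations rigorous, since the double series need not be absolutely convergent for arbitrary square-summable $f$ and $g$: the natural bound $\sum_t \sum_\tau \abs{g(t-\tau)}\abs{f(\tau)} = \norm{f}_1 \norm{g}_1$ is finite only when both $f$ and $g$ are absolutely summable, whereas $\ltwo$ contains sequences that are not in $\ell^1$. My plan to circumvent this is a two-step density argument. First I would prove the identity for $f$ and $g$ of finite support (or, more generally, in $\ell^1 \cap \ltwo$): here the double sum has only finitely many, or absolutely convergently many, nonzero terms, so Fubini for series applies and the interchange is immediate; note that in this case $g \conv f$ is itself finitely supported (respectively in $\ell^1$), so its DTFT is unambiguously defined. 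Second, I would extend to general $f,g \in \ltwo$ by approximating them in the $\ltwo$-norm by finitely supported sequences and passing to the limit, using that $\cF_d$ extends to a bounded map (Plancherel) and that convolution is continuous as a map $\ltwo \times \ltwo \to \ell^\infty$ by the Cauchy--Schwarz bound noted in the footnote to Definition~\ref{def:l2_conv}.

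The delicate point in this second step is that $g \conv f$ may fail to be square-summable, as that same footnote warns, so the identity must be read as an equality of the pointwise-defined functions on $\Rset$ rather than as an equality inside the image of $\cF_d$ on $\ltwo$; consistently, the product $\cF_d[g]\,\cF_d[f]$ of two such transforms generically lies in $L^1$ of a period rather than $L^2$. As an informal sanity check guiding the computation, the plane-wave picture of Remark~\ref{rmk:FT_plane_waves} makes the result transparent: convolution by $g$ acts on the plane wave $\gamma_\omega$ by multiplication by the scalar $\cF_d[g](\omega)$, so reading off the $\gamma_\omega$-component of $g \conv f$ multiplies the corresponding component of $f$ by $\cF_d[g](\omega)$. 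This is only a heuristic, since the $\gamma_\omega$ do not lie in $\ltwo$, but it confirms both the shape of the answer and the expectation that the theorem is really a statement about how convolution operators diagonalise in the frequency domain.
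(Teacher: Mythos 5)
Your first paragraph is exactly the paper's proof, just read in the opposite direction: the paper starts from the product \(\cF_d[g](\omega)\cdot\cF_d[f](\omega)\), expands it into a double sum, substitutes \(t' = t-\tau\), and recognises the convolution, calling the whole thing ``a simple rearrangement of terms.'' So the core approach is identical. Where you differ is that you flag the interchange of summations as a genuine gap for general \(f,g \in \ltwo \setminus \ell^1\) and propose repairing it by first proving the identity on finitely supported (or \(\ell^1 \cap \ltwo\)) sequences and then passing to the limit via Plancherel and the Cauchy--Schwarz bound on convolution; the paper does none of this and treats the formal rearrangement as the entire proof. Your added care is warranted — for \(f,g\) merely square-summable the double series need not converge absolutely, and indeed even the pointwise series defining \(\cF_d[g\conv f](\omega)\) can be delicate since \(g \conv f\) need only lie in \(\ell^\infty\), a point you correctly isolate — so your proposal is, if anything, more rigorous than the paper's own argument rather than divergent from it.
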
 \begin{proof}
  The theorem follows from a simple rearrangement of terms:
  \begin{align*}
    \cF_d[g](\omega)& \cdot \cF_d[f](\omega) = && \\
    &= \left( \sum_{t' = -\infty}^\infty g(t')\, e^{-i\omega t'} \right) \left( \sum_{\tau = -\infty}^\infty f(\tau)\, e^{-i\omega\tau} \right) && \text{(def\@. \(\cF_d\))} \\
    &= \sum_{t' = -\infty}^\infty \sum_{\tau = -\infty}^\infty g(t')\, f(\tau)\, e^{-i\omega (t' + \tau)} &&\text{(rearrangement)} \\
    &= \sum_{t = -\infty}^\infty \sum_{\tau = -\infty}^\infty g(t-\tau)\, f(\tau)\, e^{-i\omega t} &&\text{(substitute \(t' = t - \tau\))} \\
    &= \sum_{t = -\infty}^\infty (g \conv f)(t)\, e^{-i\omega t} &&\text{(def\@. convolution)} \\
    &= \cF[g \conv f](\omega). && \text{(def\@. \(\cF_d\))} \qedhere
  \end{align*}
\end{proof}

The Fourier transform \(\cF \colon \Ltwo \to \Ltwo\) is unitary, implying that \(\norm{\cF[\psi]} = \norm{\psi}\) for any function \(\psi \in \Ltwo\) (see Theorem 9.2.1 from~\cite{Aubin}).
In contrast, the function \(\cF_d[\psi]\) is not even square-integrable.
Nevertheless, a similar result can be obtained for the DTFT by integrating the values of \(\abs{\cF_d[\psi](\omega)}^2\) over a full period, as established below.

\begin{proposition} \label{prop:norm_from_Fourier_entry}
  For any \(\psi \in \ltwo\) the following identity is satisfied:
  \begin{equation*}
    \norm{\psi}^2 = \frac{1}{2\pi} \int_0^{2\pi} \abs{\cF_d[\psi](\omega)}^2 \,d\omega.
  \end{equation*}
\end{proposition}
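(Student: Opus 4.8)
The plan is to prove this Plancherel-type identity by expanding both sides in terms of the coefficients $\psi(t)$ and collapsing the resulting double sum using the orthogonality identity~\eqref{eq:delta_as_integral}. First I would rewrite the left-hand side directly from the definition of the inner product on $\ltwo$ (Definition~\ref{def:ltwo}) as $\norm{\psi}^2 = \braket{\psi}{\psi} = \sum_{t = -\infty}^\infty \abs{\psi(t)}^2$.

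Next I would expand the right-hand side. Writing $\abs{\cF_d[\psi](\omega)}^2 = \cF_d[\psi](\omega)^* \cdot \cF_d[\psi](\omega)$ and substituting the definition of the DTFT (Definition~\ref{def:DTFT}) into each factor, using a fresh summation index $s$ for the conjugated factor, gives
\[
  \frac{1}{2\pi} \int_0^{2\pi} \left( \sum_{s=-\infty}^\infty \psi(s)^* e^{i\omega s} \right) \left( \sum_{t=-\infty}^\infty \psi(t) e^{-i\omega t} \right) d\omega.
\]
The key step is then to interchange the two summations with the integral over the period $[0,2\pi]$, producing $\tfrac{1}{2\pi} \sum_s \sum_t \psi(s)^* \psi(t) \int_0^{2\pi} e^{i\omega(s-t)}\,d\omega$. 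At this point I would invoke the orthogonality identity~\eqref{eq:delta_as_integral}, which states $\int_0^{2\pi} e^{i\omega n}\,d\omega = 2\pi \cdot \delta(n)$ for integer $n$; since $s - t \in \Zset$, every term with $s \neq t$ vanishes and each diagonal term contributes a factor of $2\pi$ that cancels the prefactor. The double sum thus collapses to $\sum_t \psi(t)^* \psi(t) = \sum_t \abs{\psi(t)}^2$, which matches the left-hand side and completes the argument.

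The main obstacle I expect is rigorously justifying the interchange of the two infinite summations with the integral. The formal manipulation above treats the sums as if they converged pointwise and could be freely exchanged with the integral, whereas $\cF_d[\psi]$ is in general only well-behaved as a limit in the $\Ltwo([0,2\pi])$ sense rather than pointwise, so a direct appeal to Fubini requires care. A clean way to handle this is to first establish the identity for finitely supported $\psi$, where all sums are finite and the interchange with the integral is immediate. Since the finitely supported functions are dense in $\ltwo$, and the partial sums defining $\cF_d[\psi]$ are Cauchy in $\Ltwo([0,2\pi])$ precisely because of this finite-support version of the identity, one extends $\cF_d$ to a bounded map $\ltwo \to \Ltwo([0,2\pi])$. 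Both sides of the claimed identity are then continuous in the $\ltwo$-norm of $\psi$, so the equality, having been verified on a dense subspace, holds for all $\psi \in \ltwo$.
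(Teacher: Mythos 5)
Your proposal is correct and follows essentially the same computation as the paper's proof: expand \(\abs{\cF_d[\psi](\omega)}^2\) via the definition of the DTFT, rearrange the double sum past the integral, and collapse it with the orthogonality identity~\eqref{eq:delta_as_integral}. The only difference is that the paper carries out this formal rearrangement without further comment, whereas you additionally justify the interchange of summation and integration by a density argument (finitely supported \(\psi\) first, then extension by continuity), which only strengthens the argument.
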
 \begin{proof}
  The norm of any \(\psi \in \ltwo\) is:
  \begin{equation*}
    \norm{\psi} = \sqrt{ \sum_{t=-\infty}^\infty \psi(t)^* \cdot \psi(t) }
  \end{equation*}
  as determined by the inner product in \(\ltwo\).
  Then:
  \begin{align*}
    \frac{1}{2\pi} &\int_0^{2\pi} \abs{\cF_d[\psi](\omega)}^2 \,d\omega = && \\
     &= \frac{1}{2\pi} \int_0^{2\pi} \abs{\sum_{t=-\infty}^\infty \psi(t) \, e^{-i\omega t}}^2 \,d\omega &&\text{(def.\@ \(\cF_d\))} \\
     &= \frac{1}{2\pi} \int_0^{2\pi} \left(\sum_{\tau=-\infty}^\infty \psi(\tau)^*\, \, e^{i\omega\tau}\right)\left(\sum_{t=-\infty}^\infty \psi(t) \, e^{-i\omega t}\right) \,d\omega &&\text{(def.\@ \(\abs{\alpha}^2\) for \(\alpha \in \Cset\))} \\
     &= \frac{1}{2\pi} \sum_{t=-\infty}^\infty \sum_{\tau=-\infty}^\infty \psi(\tau)^* \cdot \psi(t) \cdot\! \int_0^{2\pi} e^{i\omega (\tau - t)} \,d\omega &&\text{(rearrange)} \\
     &= \sum_{t=-\infty}^\infty \sum_{\tau=-\infty}^\infty \psi(\tau)^* \cdot \psi(t) \cdot \delta(\tau - t) &&\text{(identity~\eqref{eq:delta_as_integral})} \\
     &= \sum_{t=-\infty}^\infty  \psi(t)^* \cdot \psi(t) \ =\ \norm{\psi}^2. &&\text{(def.\@ \(\delta\) and \(\norm{\psi}\))} \qedhere
  \end{align*}
\end{proof}

\subsection{Linear shift invariant maps}

Physical processes that act on states over an infinite domain are often \emph{shift invariant}. 
The input-output behaviour of a shift invariant device \(D \colon \ltwo \to \ltwo\) acting on quantum states is independent of the time-step the input reaches the device.
To formalise this notion we define a \emph{shift operator} \(S_\tau \colon \ltwo \to \ltwo\) for all \(\tau \in \Zset\):
\begin{equation*}
  S_\tau[\psi](t) = \psi(t-\tau)
\end{equation*}
which captures the passing of an amount \(\tau\) of time.
Then, a \gls{linear shift invariant (LSI)} map \(D \colon \ltwo \to \ltwo\) is a linear map that commutes with every shift operator:
\begin{equation*}
  S_\tau D = D S_\tau
\end{equation*}
for all \(\tau \in \Zset\).
Crucially, any operator \(\ltwo \to \ltwo\) that is both linear and shift invariant can be fully characterised by a function in \(\ltwo\), as shown below.

\begin{proposition} \label{prop:l2_lsi}
  For any LSI map \(D \colon \ltwo \to \ltwo\), there is a unique function \(\chi^D \in \ltwo\) such that for all \(\psi \in \ltwo\):
  \begin{equation*}
    D[\psi] = \chi^D \conv \psi.
  \end{equation*}
  Such a function is precisely \(\chi^D = D[\delta]\) where \(\delta\) is the unit of the convolution~\eqref{eq:l2_delta}. 
\end{proposition}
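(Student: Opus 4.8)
The plan is to exhibit the candidate $\chi^D = D[\delta]$ directly and verify it works, exploiting the fact that every vector of $\ltwo$ decomposes into shifted copies of the convolution unit $\delta$ from~\eqref{eq:l2_delta}. First I would note that $\chi^D = D[\delta]$ lies in $\ltwo$ because $D$ is by hypothesis a map $\ltwo \to \ltwo$. The crucial observation is that any $\psi \in \ltwo$ can be written as
\begin{equation*}
  \psi = \sum_{\tau = -\infty}^\infty \psi(\tau)\, S_\tau[\delta],
\end{equation*}
where the equality is to be understood as convergence of partial sums in the norm of $\ltwo$: since $\psi$ is square-summable, the tails $\sum_{\abs{\tau} > N} \abs{\psi(\tau)}^2$ vanish, and $\{S_\tau[\delta]\}_{\tau \in \Zset}$ is precisely the standard orthonormal basis of $\ltwo$ (as $S_\tau[\delta]$ is the indicator of $\{\tau\}$). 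Evaluating the right-hand side pointwise confirms $\sum_\tau \psi(\tau)\, \delta(t-\tau) = \psi(t)$.

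Next I would apply $D$ to this decomposition. Using linearity together with continuity of $D$ (to interchange $D$ with the limit of partial sums) gives $D[\psi] = \sum_\tau \psi(\tau)\, D[S_\tau[\delta]]$ in $\ltwo$. Shift invariance then yields $D[S_\tau[\delta]] = S_\tau[D[\delta]] = S_\tau[\chi^D]$, so that $D[\psi] = \sum_\tau \psi(\tau)\, S_\tau[\chi^D]$. Finally, recognising this sum as a convolution --- evaluating at an arbitrary $t \in \Zset$ and using that $\ltwo$-convergence implies pointwise convergence ---
\begin{equation*}
  D[\psi](t) = \sum_{\tau = -\infty}^\infty \psi(\tau)\, \chi^D(t - \tau) = (\chi^D \conv \psi)(t),
\end{equation*}
establishes $D[\psi] = \chi^D \conv \psi$ for all $\psi \in \ltwo$. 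Uniqueness is then immediate: if $\chi \in \ltwo$ satisfies $D[\psi] = \chi \conv \psi$ for every $\psi$, then taking $\psi = \delta$ and using that $\delta$ is the unit of convolution forces $\chi = \chi \conv \delta = D[\delta] = \chi^D$.

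The main obstacle I anticipate is justifying the interchange of $D$ with the infinite sum, i.e. pulling $D$ inside $\sum_\tau \psi(\tau) S_\tau[\delta]$. This step requires $D$ to be continuous (bounded), which is automatic in the intended setting where $D$ is a morphism of $\Hilb$ but is \emph{not} a formal consequence of linearity plus shift invariance alone; I would therefore make boundedness explicit, either as a standing assumption on the maps under consideration or by invoking it from the ambient category. A secondary, purely bookkeeping point is the passage between $\ltwo$-convergence and pointwise evaluation in the final display: this is harmless since convergence in $\ltwo$ dominates pointwise convergence at each coordinate (as $\abs{\phi_N(t) - \phi(t)} \leq \norm{\phi_N - \phi}$ for every $t$), and it is worth recording that although $\chi^D \conv \psi$ need not a priori be square-summable in general, here it coincides with $D[\psi] \in \ltwo$, so no integrability issue arises.
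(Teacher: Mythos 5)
Your proof is correct and follows essentially the same route as the paper: decompose \(\psi\) into shifted copies of \(\delta\), push \(D\) through the sum using linearity and shift invariance, recognise the result as \(\chi^D \conv \psi\), and obtain uniqueness by evaluating at \(\delta\). The one point worth noting is that the continuity issue you flag is genuine: the paper's own proof silently interchanges \(D\) with the infinite sum citing only linearity and shift invariance, so your explicit appeal to boundedness (automatic in the intended setting where these maps are morphisms of \(\Hilb\), but not a formal consequence of linearity plus shift invariance) makes your version of the argument more watertight than the original.
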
 \begin{proof}
  Considering \(\delta\) is the unit of convolution, we have that:
  \begin{equation*}
    \psi(t) = (\delta \conv \psi)(t) = \sum_{\tau = -\infty}^\infty \delta(t-\tau) \cdot \psi(\tau) = \sum_{\tau = -\infty}^\infty S_\tau[\delta](t) \cdot \psi(\tau).
  \end{equation*}
  Then, due to \(D\) being linear and shift invariant we can compute \(D[\psi]\) as follows:
  \begin{equation*}
    D[\psi] = \sum_{\tau = -\infty}^\infty D[S_\tau[\delta]] \cdot \psi(\tau) = \sum_{\tau = -\infty}^\infty S_\tau[D[\delta]] \cdot \psi(\tau) = D[\delta] \conv \psi.
  \end{equation*}
  Therefore, if we let \(\chi^D = D[\delta]\) then \(D[\psi] = \chi^D \conv \psi\) as claimed.
  To prove uniqueness, assume there is a different function \(\xi^D\) such that \(D[\psi] = \xi^D \conv \psi\) for all \(\psi \in \ltwo\). 
  It then follows that \(D[\delta] = \xi^D \conv \delta = \xi^D\) due to \(\delta\) being the unit of convolution and, hence, we conclude that \(\xi^D = \chi^D\).
\end{proof}

Due to convolution being associative, it is apparent that composition of LSI maps \(D,D' \colon \ltwo \to \ltwo\) corresponds to convolution of their characteristic functions:
\begin{equation*}
  (D' \circ D)[\psi] = (\chi^{D'} \conv \chi^D) \conv \psi.
\end{equation*}
Notice that \(D' \circ D\) is again a LSI map \(\ltwo \to \ltwo\) and, hence, \(\chi^{D'} \conv \chi^D\) is necessarily in \(\ltwo\).
A similar result holds for LSI maps \(D \colon \Ltwo \to \Ltwo\); however, in this case the characterisation of \(D\) as a function in \(\Ltwo\) is only an approximation. 
This is due to the lack of a unit of convolution in \(\Ltwo\) (see Remark~\ref{rmk:Dirac_delta}).

\begin{proposition} \label{prop:L2_lsi}
  For any LSI map \(D \colon \Ltwo \to \Ltwo\), let \(\chi^D \in \Ltwo\) be \(\chi^D = D[\delta_\bullet]\); then, for all \(\psi \in \Ltwo\):
  \begin{equation*}
    D[\psi] \approx \chi^D \conv \psi.
  \end{equation*}
\end{proposition}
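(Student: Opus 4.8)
The plan is to transcribe the proof of the discrete analogue, Proposition~\ref{prop:l2_lsi}, into the continuous setting, with the single but essential change that the exact convolution unit $\delta$ is replaced by the approximate unit $\delta_\bullet$ from Remark~\ref{rmk:Dirac_delta}. Every manipulation in the discrete proof is exact; here exactly one step — the opening identity $\psi = \delta \conv \psi$ — degrades to the approximation $\psi \approx \delta_\bullet \conv \psi$, and the symbol $\approx$ in the statement simply records this. Because $\Ltwo$ has no genuine unit of convolution, an exact equality $D[\psi] = \chi^D \conv \psi$ cannot be expected, so faithfully tracking this one approximation through the argument is the whole point.

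First I would start from $\psi \approx \delta_\bullet \conv \psi$ and apply $D$, which (assuming $D$ is continuous) preserves the approximation, giving $D[\psi] \approx D[\delta_\bullet \conv \psi]$. Next I would rewrite the convolution as a continuous superposition of shifted copies of $\delta_\bullet$, mirroring the discrete rewriting used in the proof of Proposition~\ref{prop:l2_lsi}:
\begin{equation*}
  (\delta_\bullet \conv \psi)(t) = \int_{-\infty}^\infty \delta_\bullet(t - \tau)\,\psi(\tau)\,d\tau = \int_{-\infty}^\infty S_\tau[\delta_\bullet](t)\,\psi(\tau)\,d\tau.
\end{equation*}
Then, using that $D$ is linear and shift invariant, I would pull $D$ through the integral and apply the identity $D[S_\tau[\delta_\bullet]] = S_\tau[D[\delta_\bullet]] = S_\tau[\chi^D]$, obtaining
\begin{equation*}
  D[\delta_\bullet \conv \psi] = \int_{-\infty}^\infty D[S_\tau[\delta_\bullet]]\,\psi(\tau)\,d\tau = \int_{-\infty}^\infty S_\tau[\chi^D]\,\psi(\tau)\,d\tau = \chi^D \conv \psi.
\end{equation*}
Chaining these gives $D[\psi] \approx \chi^D \conv \psi$, as required, with $\chi^D = D[\delta_\bullet]$, the continuous counterpart of $\chi^D = D[\delta]$.

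The hard part will be justifying the interchange of $D$ with the integral, i.e. passing from $D\big[\int S_\tau[\delta_\bullet]\,\psi(\tau)\,d\tau\big]$ to $\int D[S_\tau[\delta_\bullet]]\,\psi(\tau)\,d\tau$. In the discrete proof this is immediate, since linearity handles the (suitably convergent) sum termwise; here the superposition is a genuine integral. The cleanest route is to approximate the integral by Riemann sums $\sum_k S_{\tau_k}[\delta_\bullet]\,\psi(\tau_k)\,\Delta\tau$, on which ordinary linearity applies, and then pass to the limit — which requires $D$ to be a bounded (continuous) linear operator $\Ltwo \to \Ltwo$. This same boundedness is what licenses the first step $D[\psi] \approx D[\delta_\bullet \conv \psi]$. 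Given that the section works at the informal, engineering-style level signalled by the use of $\delta_\bullet$ and $\approx$, I would present this interchange as a continuous-superposition argument rather than developing the full weak-integral machinery, noting only that continuity of $D$ is the property that makes it go through.
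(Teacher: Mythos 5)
Your proposal is correct and takes essentially the same approach as the paper: the paper's own proof of Proposition~\ref{prop:L2_lsi} is a single sentence deferring to the argument of Proposition~\ref{prop:l2_lsi} ``provided that we are dealing with approximations,'' which is precisely the transcription you carried out. Your additional remarks on the interchange of \(D\) with the integral and the role of continuity make explicit what the paper leaves implicit, but they do not change the route.
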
 \begin{proof}
  The claim follows from the same argument as that for \(\ltwo\) (Proposition~\ref{prop:l2_lsi}), provided that we are dealing with approximations, as discussed in Remark~\ref{rmk:Dirac_delta}.
\end{proof}

For any finite set \(A\), the vector space \(\oplus_A\,\ltwo\) over the field \(\Cset\) has collections
\begin{equation*}
  \Psi = \{\psi_a \in \ltwo\}_{a \in A}
\end{equation*}
as vectors, with addition and scalar multiplication defined index-wise:
\begin{equation*}
  \Phi + \Psi = \{\phi_a + \psi_a\}_{a \in A}
\end{equation*}
\begin{equation*}
  \alpha \Psi = \{\alpha\psi_a\}_{a \in A}
\end{equation*}
The inner product in \(\oplus_A\,\ltwo\) is defined in terms of the inner product in \(\ltwo\):
\begin{equation*}
  \braket{\Phi}{\Psi} = \sum_{a \in A} \braket{\phi_a}{\psi_a}
\end{equation*}
Considering that \(A\) is a finite set, it follows that \(\oplus_A\,\ltwo\) is complete and, hence, a Hilbert space.
For each \(a \in A\) there is a projection \(\pi_a \colon \oplus_A\,\ltwo \to \ltwo\) that yields the component at index \(a\), \(\pi_a[\Psi] = \psi_a\), and an injection \(\iota_a \colon \ltwo \to \oplus_A\,\ltwo\) that maps \(\phi \in \ltwo\) to a collection \(\{\phi_a\}_{a \in A}\) where \(\phi_a = \phi\) and all other \(\phi_{a'}\) are the constant zero function.
It is trivial to check that for all \(\Psi \in \oplus_A\,\ltwo\) the following identity is satisfied:
\begin{equation} \label{eq:lsi_iota_pi_sum}
  \Psi = \sum_{a \in A} (\iota_a \pi_a)[\Psi] = \sum_{a \in A} \iota_a[\psi_a].
\end{equation}

We can then consider linear maps \(D \colon \oplus_A\,\ltwo \to \oplus_B\,\ltwo\) where both \(A\) and \(B\) are finite sets, and say that these are shift invariant iff for all \((b,a) \in B \times A\) the map \(\pi_b D \iota_a \colon \ltwo \to \ltwo\) is shift invariant.
Once again, convolution lets us describe these maps in terms of functions in \(\ltwo\), as established in the following proposition.

\begin{proposition} \label{prop:oplusl2_lsi}
  Let \(A\) and \(B\) be two arbitrary finite sets. For any LSI map \(D \colon \oplus_A\,\ltwo \to \oplus_B\,\ltwo\) and for each \((b,a) \in B \times A\) let \(\chi^D_{ba} \in \ltwo\) be
  \begin{equation*}
    \chi^D_{ba} = (\pi_b D \iota_a)[\delta]
  \end{equation*}
  Then, for all \(\Psi \in \oplus_A\,\ltwo\):
  \begin{equation*}
    D[\Psi] = \left\{\sum_{a \in A} \chi^D_{ba} \conv \psi_a \right\}_{b \in B}.
  \end{equation*}
\end{proposition}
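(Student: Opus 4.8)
The plan is to reduce the multi-component case to the single-component result already established in Proposition~\ref{prop:l2_lsi}. The key observation is that for each fixed pair $(b,a) \in B \times A$, the map $\pi_b D \iota_a \colon \ltwo \to \ltwo$ is, by the definition of shift invariance for $D$, an LSI map on a single copy of $\ltwo$. Hence Proposition~\ref{prop:l2_lsi} applies directly, giving $(\pi_b D \iota_a)[\phi] = \chi^D_{ba} \conv \phi$ for all $\phi \in \ltwo$, where $\chi^D_{ba} = (\pi_b D \iota_a)[\delta]$ is precisely the function named in the statement.

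With this in hand, the main computation is to decompose $D[\Psi]$ using the injection-projection identity~\eqref{eq:lsi_iota_pi_sum} and the linearity of $D$. First I would write $\Psi = \sum_{a \in A} \iota_a[\psi_a]$ and apply $D$, using linearity to pull the finite sum out:
\begin{equation*}
  D[\Psi] = D\!\left[\sum_{a \in A} \iota_a[\psi_a]\right] = \sum_{a \in A} D[\iota_a[\psi_a]].
\end{equation*}
Next, to read off the $b$-th component of the output I would apply $\pi_b$ and again use~\eqref{eq:lsi_iota_pi_sum} (this time on $\oplus_B\,\ltwo$) together with the fact that $\pi_b$ is linear:
\begin{equation*}
  \pi_b[D[\Psi]] = \sum_{a \in A} (\pi_b D \iota_a)[\psi_a] = \sum_{a \in A} \chi^D_{ba} \conv \psi_a,
\end{equation*}
where the last equality invokes the single-component result above. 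Collecting these components over all $b \in B$ yields exactly the claimed formula
\begin{equation*}
  D[\Psi] = \left\{\sum_{a \in A} \chi^D_{ba} \conv \psi_a\right\}_{b \in B}.
\end{equation*}

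The steps are essentially bookkeeping, and there is no single hard obstacle; the care required is mostly in justifying the interchange of the linear maps $D$, $\pi_b$, $\iota_a$ with the finite sums. Since both $A$ and $B$ are \emph{finite} sets, all these sums are finite, so linearity of $D$, $\pi_b$ and $\iota_a$ suffices and no convergence argument is needed --- this is the place where finiteness of $A$ and $B$ is quietly used. The only subtlety worth flagging is to confirm that the resulting collection $\{\sum_a \chi^D_{ba} \conv \psi_a\}_{b \in B}$ is genuinely a vector in $\oplus_B\,\ltwo$, but this is automatic because it equals $D[\Psi]$, which lies in $\oplus_B\,\ltwo$ by hypothesis; in particular each component $\sum_a \chi^D_{ba} \conv \psi_a = \pi_b[D[\Psi]]$ is square-summable. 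Thus the proof is a direct unfolding of definitions plus one application of the previously established LSI characterisation.
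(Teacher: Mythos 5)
Your proof is correct and follows essentially the same route as the paper's: decompose $\Psi$ via equation~\eqref{eq:lsi_iota_pi_sum}, use linearity to distribute $D$ over the finite sum, and apply Proposition~\ref{prop:l2_lsi} to each component map $\pi_b D \iota_a$, which is LSI by the very definition of shift invariance for maps $\oplus_A\,\ltwo \to \oplus_B\,\ltwo$. The only difference is presentational --- you read off components with $\pi_b$ while the paper recollects them with $\sum_b \iota_b \pi_b$ --- which is immaterial.
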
 \begin{proof}
  Since for each \((b,a) \in B \times A\) the map \(\pi_b D \iota_a\) is time shift invariant map, we may use Proposition~\ref{prop:l2_lsi} to compute the \(B\)-indexed family \(D[\Psi]\) in terms of convolution with the each of characteristic functions \(\chi^D_{ba}\):
  \begin{align*}
    D[\Psi] &=  \sum_{b \in B} (\iota_b \pi_b) [D[\Psi]] &&\text{(equation~\eqref{eq:lsi_iota_pi_sum})}\\
      &= \sum_{b \in B} (\iota_b \pi_b) \left[D\left[\sum_{a \in A} \iota_a[\psi_a]\right]\right] &&\text{(equation~\eqref{eq:lsi_iota_pi_sum})}\\
      &= \sum_{b \in B} \iota_b \left[\sum_{a \in A} (\pi_b D \iota_a)[\psi_a]\right] &&\text{(linearity)}\\
      &= \sum_{b \in B} \iota_b \left[\sum_{a \in A} \chi^D_{ba} \conv \psi_a \right] &&\text{(Proposition~\ref{prop:l2_lsi})}\\
      &= \left\{\sum_{a \in A} \chi^D_{ba} \conv \psi_a \right\}_{b \in B}. &&\text{(equation~\eqref{eq:lsi_iota_pi_sum})} \qedhere
  \end{align*}
\end{proof}

Therefore, an LSI map \(D \colon \oplus_A\,\ltwo \to \oplus_B\,\ltwo\) is characterised by a collection of \(\ltwo\) functions: \(\{\chi^D_{ba}\}_{(b,a) \in B \times A}\).
Moreover, if we arrange this collection into a \(\abs{B} \times \abs{A}\) matrix and, similarly, represent any \(\Psi \in \oplus_A\,\ltwo\) as an \(\abs{A} \times 1\) matrix of its \(\psi_a \in \ltwo\) components, the previous proposition establishes that:
\begin{equation*}
  D[\Psi] = 
  \begin{pmatrix}
    \chi^D_{ba} & \dots & \chi^D_{ba'} \\
    \vdots & \ddots & \vdots \\
    \chi^D_{b'a} & \dots & \chi^D_{b'a'}
  \end{pmatrix}
  \circledast
  \begin{pmatrix}
    \psi_a \\ \vdots \\ \psi_a'
  \end{pmatrix}
\end{equation*}
where \(\circledast\) denotes formal matrix multiplication whose entry-wise multiplication is replaced by convolution.
Furthermore, considering that the DTFT is an injective map --- due to it having a left inverse~\eqref{eq:DTFT_linv} --- any \(\Psi\) and \(D\) can be uniquely determined by the collection of their Fourier transformed components:
\begin{align*}
  \widehat{\Psi} &= \{\cF_d[\psi_a]\}_{a \in A} \\
  \widehat{D} &= \{\cF_d[\chi^D_{ba}]\}_{(b,a) \in B \times A}
\end{align*}
where the hat notation is used to indicate that \(\widehat{D}\) and \(\widehat{\Psi}\) are representations in the frequency domain (see Remark~\ref{rmk:FT_plane_waves}).
For each \(\omega \in \Rset\) let:
\begin{align*}
  \widehat{\Psi}_\omega &= \{\cF_d[\psi_a](\omega)\}_{a \in A} \\
  \widehat{D}_\omega &= \{\cF_d[\chi^D_{ba}](\omega)\}_{(b,a) \in B \times A}
\end{align*}
and notice that the elements of these collections are complex numbers.
Then, \(\widehat{D}\) can be identified with a collection of linear maps \(\widehat{D}_\omega \colon \Cset^{\abs{A}} \to \Cset^{\abs{B}}\) indexed by \(\omega \in \Rset\).

The importance of these remarks is due to the convolution theorem (Theorem~\ref{thm:l2_convolution}), which turns the convolution of entries used in \(\circledast\) into standard multiplication, so that for each \(\omega \in \Rset\):
\begin{equation} \label{eq:DPsiomega_identity}
  \widehat{(D[\Psi])}_\omega = \widehat{D}_\omega \cdot \widehat{\Psi}_\omega
\end{equation}
where \(- \cdot -\) corresponds to standard multiplication of matrices.
Considering that all of the operations involved in this discussion are associative, composition of LSI maps turns into index-wise composition of the linear maps that comprise them:
\begin{equation*}
  \widehat{(D' \circ D)}_\omega = \widehat{D'}_\omega \circ \widehat{D}_\omega
\end{equation*}
In particular, it unravels a deep connection between the algebra of LSI maps and the algebra of linear maps on finite-dimensional Hilbert spaces.
This motivates the definition of a category of LSI processes.

\begin{definition} \label{def:LSI}
  Let \(\LSI\) be the category whose objects are finite sets and whose morphisms \(f \colon A \to B\) are LSI maps \(\oplus_A\,\ltwo \to \oplus_B\,\ltwo\) represented as their \(\Rset\)-indexed collections:
  \begin{equation*}
    f = \{\widehat{f}_\omega \colon \Cset^{\abs{A}} \to \Cset^{\abs{B}}\}_{\omega \in \Rset}.
  \end{equation*}
  Composition is given by index-wise composition of the linear maps:
  \begin{equation*}
    g \circ f = \{\widehat{g}_\omega \circ \widehat{f}_\omega\}_{\omega \in \Rset}
  \end{equation*}
  and, for any finite set \(A\), its identity morphism is the \(\Rset\)-indexed family of copies of the identity map \(\widehat{\id}_\omega \colon \Cset^{\abs{A}} \to \Cset^{\abs{A}}\).
\end{definition}

\begin{remark}
  Unfortunately, the definition of \(\LSI\) cannot be reproduced for linear shift invariant maps over \(\Ltwo\) since, according to Proposition~\ref{prop:L2_lsi}, \(D[\psi]\) is only approximately equal to \(\chi^D \conv \psi\).
  It would perhaps be possible to define a category where equality is replaced by approximation up to certain factor; unfortunately, every time we compose we would reduce the precision of the approximation.
\end{remark}

The following results establish how the \(\widehat{D}_\omega\) components of a LSI map interact with the norm in \(\oplus_A\,\ltwo\). The ultimate goal is to define a category of LSI contractions.

\begin{lemma} \label{lem:norm_from_Fourier}
  For any finite set \(A\) and any vector \(\Psi \in \oplus_A\,\ltwo\) the following identity is satisfied:
  \begin{equation*}
    \norm{\Psi}^2 = \frac{1}{2\pi} \int_0^{2\pi} \norm{\widehat{\Psi}_\omega}^2 \,d\omega.
  \end{equation*}
\end{lemma}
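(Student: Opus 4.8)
The plan is to reduce the multi-component identity to the single-component result already established in Proposition~\ref{prop:norm_from_Fourier_entry}. The key observation is that both sides of the claimed identity decompose as finite sums over the index set $A$. On the left, the inner product on $\oplus_A\,\ltwo$ was defined so that $\braket{\Phi}{\Psi} = \sum_{a \in A} \braket{\phi_a}{\psi_a}$; taking $\Phi = \Psi$ gives $\norm{\Psi}^2 = \sum_{a \in A} \norm{\psi_a}^2$. On the right, the frequency-domain vector is $\widehat{\Psi}_\omega = \{\cF_d[\psi_a](\omega)\}_{a \in A} \in \Cset^{\abs{A}}$, so by the standard norm on $\Cset^{\abs{A}}$ we have $\norm{\widehat{\Psi}_\omega}^2 = \sum_{a \in A} \abs{\cF_d[\psi_a](\omega)}^2$ for each $\omega$.

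With these two decompositions in hand, I would simply apply Proposition~\ref{prop:norm_from_Fourier_entry} to each component $\psi_a \in \ltwo$ and then interchange the finite sum with the integral. Concretely, the chain of equalities would run
\begin{equation*}
  \norm{\Psi}^2 = \sum_{a \in A} \norm{\psi_a}^2 = \sum_{a \in A} \frac{1}{2\pi} \int_0^{2\pi} \abs{\cF_d[\psi_a](\omega)}^2 \,d\omega = \frac{1}{2\pi} \int_0^{2\pi} \sum_{a \in A} \abs{\cF_d[\psi_a](\omega)}^2 \,d\omega = \frac{1}{2\pi} \int_0^{2\pi} \norm{\widehat{\Psi}_\omega}^2 \,d\omega,
\end{equation*}
where the first equality is the definition of the norm on $\oplus_A\,\ltwo$, the second is Proposition~\ref{prop:norm_from_Fourier_entry} applied component-wise, the third is the interchange of a finite sum with the integral, and the last is the definition of $\norm{\widehat{\Psi}_\omega}^2$.

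There is essentially no deep obstacle here: the result is a direct lift of the single-component Plancherel-type identity to a finite direct sum. The only point that warrants a word of justification is the interchange of summation and integration in the third step; since $A$ is a finite set (as objects of $\LSI$ are finite sets), this is merely linearity of the Riemann/Lebesgue integral applied finitely many times, and no convergence subtlety arises. I would make sure to flag explicitly that finiteness of $A$ is what licenses this swap, since it is the one hypothesis being used, and then the proof is complete.
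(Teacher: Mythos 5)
Your proposal is correct and is essentially identical to the paper's own proof: both decompose \(\norm{\Psi}^2\) as \(\sum_{a \in A} \norm{\psi_a}^2\) via the inner product on \(\oplus_A\,\ltwo\), apply Proposition~\ref{prop:norm_from_Fourier_entry} component-wise, interchange the finite sum with the integral, and recognise the integrand as \(\norm{\widehat{\Psi}_\omega}^2\). Your explicit note that finiteness of \(A\) is what licenses the interchange is a slight improvement in rigour over the paper's bare ``rearrangement'' step.
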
 \begin{proof}
  The claim follows from Proposition~\ref{prop:norm_from_Fourier_entry}, as shown below:
  \begin{align*}
    \norm{\Psi}^2 &= \sum_{a \in A} \norm{\psi_a}^2 &&\text{(norm in \(\oplus_A\,\ltwo\))}\\
      &= \sum_{a \in A} \frac{1}{2\pi} \int_0^{2\pi} \abs{\cF_d[\psi_a](\omega)}^2 \,d\omega &&\text{(Proposition~\ref{prop:norm_from_Fourier_entry})}\\
      &= \frac{1}{2\pi} \int_0^{2\pi} \sum_{a \in A} \abs{\cF_d[\psi_a](\omega)}^2 \,d\omega &&\text{(rearrangement)}\\
      &= \frac{1}{2\pi} \int_0^{2\pi} \norm{\widehat{\Psi}_\omega}^2 \,d\omega &&\text{(norm in \(\Cset^{\abs{A}}\))} \qedhere
  \end{align*}
\end{proof}

\begin{theorem} \label{thm:LSI_sufficient_contraction}
  Let \(D \colon \oplus_A\,\ltwo \to \oplus_B\,\ltwo\) be an arbitrary LSI map represented by a collection of linear maps \(\{\widehat{D}_\omega \colon \Cset^{\abs{A}} \to \Cset^{\abs{B}}\}_{\omega \in \Rset}\).
  If for all \(\omega \in \Rset\) the linear map \(\widehat{D}_\omega\) is a contraction, then \(D\) is itself a contraction.
\end{theorem}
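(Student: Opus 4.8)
The plan is to verify the defining inequality of a contraction, namely $\norm{D[\Psi]} \leq \norm{\Psi}$ for every $\Psi \in \oplus_A\,\ltwo$, by transporting the computation to the frequency domain where the hypothesis is stated. First I would observe that, since $D$ is an LSI map of type $\oplus_A\,\ltwo \to \oplus_B\,\ltwo$, the output $D[\Psi]$ is automatically a vector of $\oplus_B\,\ltwo$; this is exactly what is needed so that Lemma~\ref{lem:norm_from_Fourier} applies not only to $\Psi$ but also to $D[\Psi]$.

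The core of the argument chains together three facts already established in the excerpt. Applying Lemma~\ref{lem:norm_from_Fourier} to $D[\Psi]$ rewrites $\norm{D[\Psi]}^2$ as the period-averaged integral of $\norm{\widehat{(D[\Psi])}_\omega}^2$; the identity~\eqref{eq:DPsiomega_identity}, which is itself a consequence of the convolution theorem (Theorem~\ref{thm:l2_convolution}), replaces $\widehat{(D[\Psi])}_\omega$ by $\widehat{D}_\omega \cdot \widehat{\Psi}_\omega$; and the hypothesis that each $\widehat{D}_\omega$ is a contraction between $\Cset^{\abs{A}}$ and $\Cset^{\abs{B}}$ supplies the pointwise bound $\norm{\widehat{D}_\omega\, \widehat{\Psi}_\omega}^2 \leq \norm{\widehat{\Psi}_\omega}^2$ valid for every $\omega \in \Rset$. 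I would then present this as a single short display,
\begin{equation*}
  \norm{D[\Psi]}^2 = \frac{1}{2\pi} \int_0^{2\pi} \norm{\widehat{D}_\omega\, \widehat{\Psi}_\omega}^2 \,d\omega \ \leq\ \frac{1}{2\pi} \int_0^{2\pi} \norm{\widehat{\Psi}_\omega}^2 \,d\omega = \norm{\Psi}^2,
\end{equation*}
where the inequality is monotonicity of the integral on $[0,2\pi]$ applied to the pointwise bound, and the final equality is Lemma~\ref{lem:norm_from_Fourier} applied to $\Psi$. Taking square roots gives $\norm{D[\Psi]} \leq \norm{\Psi}$, so $D$ is a contraction.

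There is no serious obstacle here: because the hypothesis is stated for \emph{all} $\omega$ rather than almost all $\omega$, the pointwise inequality holds everywhere on $[0,2\pi]$ and integrates directly without any measure-zero bookkeeping. The only points demanding a moment of care are that the integrands are measurable — which is implicit in the very existence of the integrals asserted by Lemma~\ref{lem:norm_from_Fourier} — and that the phrase ``$\widehat{D}_\omega$ is a contraction'' is unpacked to mean precisely $\norm{\widehat{D}_\omega v} \leq \norm{v}$ for all $v \in \Cset^{\abs{A}}$, which is what licenses the pointwise step. The essential content of the theorem is thus the bridge between the $\ltwo$-norm and its frequency-domain expression; once Lemma~\ref{lem:norm_from_Fourier} and the componentwise identity~\eqref{eq:DPsiomega_identity} are in hand, the conclusion is immediate.
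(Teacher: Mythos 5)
Your proof is correct and follows essentially the same route as the paper: apply Lemma~\ref{lem:norm_from_Fourier} to \(D[\Psi]\), rewrite the integrand via identity~\eqref{eq:DPsiomega_identity}, use the pointwise contraction hypothesis on each \(\widehat{D}_\omega\), and conclude by applying the lemma again to \(\Psi\). The paper's proof is the same chain of steps, so there is nothing to add.
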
 \begin{proof}  
  According to the previous lemma, for each vector \(\Psi \in \oplus_A\,\ltwo\) we may calculate the norm of \(D[\Psi]\) as follows:
  \begin{equation*}
    \norm{D[\Psi]} = \sqrt{\frac{1}{2\pi} \int_0^{2\pi} \norm{\widehat{(D[\Psi])}_\omega}^2 \,d\omega}
  \end{equation*}
  and, due to equation~\eqref{eq:DPsiomega_identity} and the assumption that \(\widehat{D}_\omega\) is a contraction,
  \begin{equation*}
    \norm{\widehat{(D[\Psi])}_\omega} = \norm{\widehat{D}_\omega \cdot \widehat{\Psi}_\omega} \leq \norm{\widehat{\Psi}_\omega}.
  \end{equation*}
  It then follows that:
  \begin{equation*}
    \norm{D[\Psi]} \leq \sqrt{\frac{1}{2\pi} \int_0^{2\pi} \norm{\widehat{\Psi}_\omega}^2 \,d\omega}
  \end{equation*}
  and, according to the previous lemma, the right hand side is equal to \(\norm{\Psi}\), so that \(\norm{D[\Psi]} \leq \norm{\Psi}\).
  This holds for every vector \(\Psi \in \oplus_A\,\ltwo\), implying that \(D\) is a contraction, as claimed.
\end{proof}

This result motivates the definition of a category of linear shift invariant maps that are \emph{guaranteed} to be contractions.

\begin{definition} \label{def:LSIContraction}
  Let \(\LSIContraction\) be the subcategory of \(\LSI\) obtained by imposing that morphisms \(f \colon A \to B\)
  \begin{equation*}
    f = \{\widehat{f}_\omega \colon \Cset^{\abs{A}} \to \Cset^{\abs{B}}\}_{\omega \in \Rset}
  \end{equation*}
  must satisfy that each \(\widehat{f}_\omega\) is a morphism in \(\FdContraction(\Cset^{\abs{A}},\Cset^{\abs{B}})\).
\end{definition}

\begin{remark} \label{rmk:LSIContraction}  
  The hom-set \(\LSIContraction(A,B)\) may not contain every LSI contraction \(\oplus_A\,\ltwo \to \oplus_B\,\ltwo\) since the condition that each linear map \(\widehat{D}_\omega\) is a contraction has only been established to be sufficient (Theorem~\ref{thm:LSI_sufficient_contraction}).
  Nevertheless, it is reasonable to expect this condition is also necessary: if for some \(\omega_0 \in (0,2\pi)\) there is a vector \(v \in \Cset^{\abs{A}}\) such that \(\norm{\widehat{D}_{\omega_\sub{0}}(v)} > \norm{v}\), then we may engineer a state \(\Psi \in \oplus_A\,\ltwo\) that, for each \(\omega \in \Rset\), is characterised by:
  \begin{equation*}
    \widehat{\Psi}_\omega = v \cdot \delta_\bullet(\omega - \omega_0)
  \end{equation*}
  where \(\delta_\bullet \colon \Rset \to \Cset\) approximates the Dirac delta (see Remark~\ref{rmk:Dirac_delta}).
  Then, using Lemma~\ref{lem:norm_from_Fourier} we find that \(\norm{\Psi}^2 \approx \tfrac{1}{2\pi} \norm{v}^2\) and \(\norm{D[\Psi]}^2 \approx \tfrac{1}{2\pi} \norm{\widehat{D}_{\omega_\sub{0}}(v)}^2\) so that we may expect \(\norm{D[\Psi]} > \norm{\Psi}\), preventing \(D\) from being a contraction.
  Thus, it is likely that the condition established in Theorem~\ref{thm:LSI_sufficient_contraction} is not only sufficient, but also necessary.
\end{remark}

Importantly, morphisms in \(\LSIContraction\) can be seen as \(\Rset\)-indexed collections of morphisms from \(\FdContraction\).
Then, the execution formula from \(\FdContraction\) induces an execution formula in \(\LSIContraction\).
To this end, the following proposition lets us discuss addition of LSI maps in terms of the index-wise addition of its component linear maps.

\begin{proposition} \label{prop:LSI_addition}
  For any two morphisms \(f,g \in \LSI(A,B)\), let their addition \(f+g\) be the following \(\Rset\)-indexed collection:
  \begin{equation*}
    f+g = \{\widehat{f}_\omega + \widehat{g}_\omega\}_{\omega \in \Rset}.
  \end{equation*}
  Then, \(f+g \in \LSI(A,B)\).
\end{proposition}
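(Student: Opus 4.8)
The plan is to avoid reasoning directly about whether the $\Rset$-indexed collection $\{\widehat{f}_\omega + \widehat{g}_\omega\}_\omega$ is the representation of some genuine LSI map, and instead to lift the problem to the underlying maps. Since $f, g \in \LSI(A,B)$, by Definition~\ref{def:LSI} they are the representations of LSI maps $F, G \colon \oplus_A\,\ltwo \to \oplus_B\,\ltwo$. I would define $F + G$ pointwise via $(F+G)[\Psi] = F[\Psi] + G[\Psi]$, using the vector-space structure of the codomain $\oplus_B\,\ltwo$; this immediately guarantees that $F + G$ lands in $\oplus_B\,\ltwo$ and is linear. The one genuine thing to check is shift invariance: for every $\tau \in \Zset$, linearity of the shift operator together with shift invariance of $F$ and $G$ gives $S_\tau(F+G) = S_\tau F + S_\tau G = F S_\tau + G S_\tau = (F+G)S_\tau$, so $F + G$ is a bona fide LSI map $\oplus_A\,\ltwo \to \oplus_B\,\ltwo$.

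Next I would compute its characteristic functions and frequency representation and confirm they coincide with the stated definition of $f+g$. Using linearity of the quasi-projections and quasi-injections, the characteristic function of $F + G$ is $\chi^{F+G}_{ba} = (\pi_b (F+G) \iota_a)[\delta] = \chi^F_{ba} + \chi^G_{ba}$ (see Proposition~\ref{prop:oplusl2_lsi}), a pointwise sum of $\ltwo$ functions and hence again in $\ltwo$. Applying the DTFT, which is linear, and evaluating at each $\omega$ yields $\cF_d[\chi^{F+G}_{ba}](\omega) = \cF_d[\chi^F_{ba}](\omega) + \cF_d[\chi^G_{ba}](\omega)$, so that $\widehat{(F+G)}_\omega = \widehat{f}_\omega + \widehat{g}_\omega$ for every $\omega \in \Rset$. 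Since the DTFT is injective (it has the left inverse~\eqref{eq:DTFT_linv}), the representation faithfully identifies an LSI map with its $\Rset$-indexed collection, so $\{\widehat{f}_\omega + \widehat{g}_\omega\}_\omega$ is precisely the representation of $F + G$; by Definition~\ref{def:LSI} this collection is therefore a morphism of $\LSI(A,B)$, namely $f + g$.

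The main conceptual obstacle is not in the above chain of routine verifications but in resisting the tempting shortcut of trying to build an LSI map directly from an arbitrary collection of component linear maps. That approach would require showing that $\Psi \mapsto \{\sum_a \chi_{ba} \conv \psi_a\}_b$ lands in $\oplus_B\,\ltwo$, but, as the footnote accompanying Definition~\ref{def:l2_conv} warns, the convolution of two $\ltwo$ functions need not be square-summable, so closure of $\ltwo$ under the relevant operations cannot be taken for granted. The point of lifting to $F$ and $G$ is exactly that well-definedness of the sum is inherited for free from the vector-space structure of the codomain $\oplus_B\,\ltwo$, bypassing any need to re-establish square-summability.
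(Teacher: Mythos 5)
Your proof is correct and takes essentially the same route as the paper's: view \(f\) and \(g\) as LSI maps, note that their pointwise sum is again linear and shift invariant, show the characteristic functions add, and transfer through the linear, injective DTFT to identify the resulting representation with \(\{\widehat{f}_\omega + \widehat{g}_\omega\}_{\omega \in \Rset}\). The only cosmetic difference is that you obtain \(\chi^{f+g}_{ba} = \chi^f_{ba} + \chi^g_{ba}\) by evaluating \(\pi_b(F+G)\iota_a\) at \(\delta\), whereas the paper derives it from convolution distributing over addition (Proposition~\ref{prop:conv_and_addition}) together with uniqueness of the characteristic function.
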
 \begin{proof}
  First, realise that if we see \(f\) and \(g\) as linear shift invariant maps \(\oplus_A\,\ltwo \to \oplus_B\,\ltwo\), their pointwise addition is again a linear shift invariant map --- it will commute with shift operators.
  The challenge is to prove that such a map is characterised by the collection \(\{\widehat{f}_\omega + \widehat{g}_\omega\}_{\omega \in \Rset}\) or, more explicitly, we must show that:
  \begin{equation} \label{eq:LSI_addition_aux}
    \widehat{(f+g)}_\omega = \widehat{f}_\omega + \widehat{g}_\omega
  \end{equation}
  for all \(\omega \in \Rset\).
  Recall that for any LSI map \(h \colon \oplus_A\,\ltwo \to \oplus_B\,\ltwo\) each \(\widehat{h}_\omega \colon \Cset^{\abs{A}} \to \Cset^{\abs{B}}\) is the linear map given by the matrix:
  \begin{equation*}
    \widehat{h}_\omega = \begin{pmatrix}
      \cF_d[\chi^h_{ba}](\omega) & \dots & \cF_d[\chi^h_{ba'}](\omega) \\
      \vdots & \ddots & \vdots \\
      \cF_d[\chi^h_{b'a}](\omega) & \dots & \cF_d[\chi^h_{b'a'}](\omega) \\
    \end{pmatrix}
  \end{equation*}
  where \(\chi^h_{ba} \in \ltwo\) is the unique function satisfying \((\pi_b h \iota_a)[\psi] = \chi^h_{ba} \conv \psi\) for all \(\psi \in \ltwo\).
  Considering that addition of linear maps is defined pointwise and that \(\pi_b\) is linear:
  \begin{equation*}
    (\pi_b (f+g) \iota_a)[\psi] = (\pi_b f \iota_a)[\psi] + (\pi_b g \iota_a)[\psi]
  \end{equation*}
  for all \(\psi \in \ltwo\).
  Then, thanks to convolution distributing over addition (see Proposition~\ref{prop:conv_and_addition}) we have that:
  \begin{equation*}
    (\pi_b (f+g) \iota_a)[\psi] = (\chi^f_{ba} \conv \psi) + (\chi^g_{ba} \conv \psi) = (\chi^f_{ba} + \chi^g_{ba}) \conv \psi
  \end{equation*}
  and, due to uniqueness of the characteristic function, we conclude that \(\chi^{f+g}_{ba} = \chi^f_{ba} + \chi^g_{ba}\).
  This, along with linearity of \(\cF_d\), implies that \(\cF_d[\chi^{f+g}_{ba}] = \cF_d[\chi^f_{ba}] + \cF_d[\chi^g_{ba}]\) and it is then immediate that \(\widehat{(f+g)}_\omega = \widehat{f}_\omega + \widehat{g}_\omega\) as required by equation~\eqref{eq:LSI_addition_aux}, completing the proof.
\end{proof}

Evidently, it is generally not the case that for two morphisms \(f,g \in \LSIContraction(A,B)\) their addition \(f+g\) is again in \(\LSIContraction\), since addition of contractions need not result in a contraction.
In the following subsection we prove that \(\LSIContraction\) is a totally traced category with the execution formula.

\subsection{The execution formula in \(\LSIContraction\)}
\label{sec:LSI_trace}

Since morphisms in \(\LSI\) are \(\Rset\)-indexed collections of morphisms in \(\FdHilb\) and composition and addition in \(\LSI\) are defined index-wise, it is not surprising that \(\LSI\) is a \(\SCat{g}\)-UDC.
Similarly, \(\LSIContraction\) is a \(\SCat{w}\)-UDC.

\begin{proposition} \label{prop:LSI_UDC}
  \(\LSI\) is a \(\SCat{g}\)-UDC.
\end{proposition}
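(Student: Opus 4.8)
The plan is to exploit that $\LSI$ is built index-wise out of $\FdHilb$: composition is index-wise by Definition~\ref{def:LSI}, and addition is index-wise by Proposition~\ref{prop:LSI_addition}. I would therefore follow the same three-step template used for $\PInj$, $\SubStoch$ and $\CPTR$, transporting the $\SCat{g}$-structure of $\FdHilb$ along the $\Rset$-indexing.

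First, establish that each hom-set is a $\Sigma$-group. By Definition~\ref{def:LSI}, $\LSI(A,B)$ is (the realisable part of) the small product $\prod_{\omega \in \Rset} \FdHilb(\Cset^{\abs{A}}, \Cset^{\abs{B}})$, so there is an inclusion $\iota$ into that product. Each factor is a $\Sigma$-group via the operator-norm topology and the functor $G \colon \HAG \to \SCat{g}$ (Corollary~\ref{cor:FdHilb_hom-convergence}, Definition~\ref{def:HausCMon_ft}), and products of $\Sigma$-groups are $\Sigma$-groups (Proposition~\ref{prop:SCat*_complete}). Since the zero LSI map lies in $\im(\iota)$, Lemma~\ref{lem:wSm_restriction} endows $\LSI(A,B)$ with a weak $\Sigma$-monoid structure in which a family $\fml{f}$ is summable precisely when each index-wise family $\{\widehat{f^{(i)}}_\omega\}$ is summable in $\FdHilb$ and the resulting collection $\omega \mapsto \Sigma\{\widehat{f^{(i)}}_\omega\}$ is realised by a genuine morphism of $\LSI$. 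I would then upgrade this to a $\Sigma$-group directly: every finite family is summable because finite sums of LSI maps are LSI (Proposition~\ref{prop:LSI_addition}); for each $f$ the collection $\{-\widehat{f}_\omega\}$ is LSI, being characterised by the $\ltwo$ functions $-\chi^f_{ba}$, so $-f \in \LSI(A,B)$; and inversion acts index-wise, hence is a $\Sigma$-homomorphism by the corresponding fact in each factor together with the restriction.

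Second, check that composition is $\Sigma$-bilinear. As $\widehat{(g \circ f)}_\omega = \widehat{g}_\omega \circ \widehat{f}_\omega$ and composition in each $\FdHilb$ factor is $\Sigma$-bilinear, index-wise composition is $\Sigma$-bilinear in the product; the only extra point is that if $\{f^{(i)}\}$ is summable with sum $f$ and $g$ is fixed, then $\{g \circ f^{(i)}\}$ sums index-wise to $\widehat{g}_\omega \circ \widehat{f}_\omega = \widehat{(g \circ f)}_\omega$, and $g \circ f$ is a genuine LSI map, so the family is summable in $\LSI$ with sum $g \circ f$; the symmetric argument handles $- \circ f$. Since the associativity and unit coherence of the enrichment hold at the level of the index-wise data, this makes $\LSI$ a $\SCat{g}$-enriched category. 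For the UDC structure, the empty set $\varnothing$ is a zero object ($\oplus_\varnothing\,\ltwo = \{0\}$, so every $\LSI(\varnothing, A)$ and $\LSI(A, \varnothing)$ is a singleton, making $\varnothing$ both initial and terminal), $\oplus$ is disjoint union on objects under the identification $\oplus_{A \uplus C}\,\ltwo \iso (\oplus_A\,\ltwo) \oplus (\oplus_C\,\ltwo)$ and acts index-wise by $\widehat{(f \oplus g)}_\omega = \widehat{f}_\omega \oplus \widehat{g}_\omega$, with symmetry, associators and unitors inherited index-wise from $(\FdHilb, \oplus, \{0\})$. Finally, the action of $\oplus$ on hom-objects is a $\Sigma$-homomorphism whose domain $\LSI(A,C) \times \LSI(B,D)$ is a categorical product in $\SCat{g}$ (products being computed as in Proposition~\ref{prop:SCatw_complete} and preserved by the embedding, Proposition~\ref{prop:SCat*_complete}); equivalently one checks $\Sigma\{\id_A \oplus 0, 0 \oplus \id_B\} \keq \id_{A \oplus B}$ index-wise. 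Hence $(\LSI, \oplus, \varnothing)$ is a $\SCat{g}$-UDC.

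The step requiring most care — and the main obstacle — is the realisability constraint baked into the summability condition: an index-wise limit of morphism representations need not itself represent an LSI map. This is exactly why the restriction construction of Lemma~\ref{lem:wSm_restriction} is indispensable and why $\Sigma$ on $\LSI$ is genuinely partial. Every verification above must respect that a family counts as summable only when its index-wise sum is realised by an actual morphism, so the crux is confirming that composition and $\oplus$ preserve realisability, i.e.\@ that composites and direct sums of LSI maps are again LSI.
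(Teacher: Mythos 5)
Your proof is correct and takes essentially the same route as the paper's: everything (the \(\Sigma\) structure, composition, and \(\oplus\)) is defined index-wise and inherited from the \(\Sigma\)-group structure and \(\Sigma\)-bilinear composition of \(\FdHilb\), with \(\varnothing\) as the zero object. The only difference is presentational --- where the paper defines \(\Sigma\) on \(\LSI(A,B)\) directly and calls the verification immediate, you package the same structure via Lemma~\ref{lem:wSm_restriction} applied to the inclusion into \(\prod_{\omega \in \Rset} \FdHilb(\Cset^{\abs{A}},\Cset^{\abs{B}})\), and you make explicit the realisability constraint (that index-wise sums, negations, composites and direct sums are again genuine LSI maps, via Proposition~\ref{prop:LSI_addition} and the characteristic-function description) which the paper's proof leaves tacit.
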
 \begin{proof}
  The \(\Sigma\)-group structure on every hom-set \(\LSI(A,B)\) is defined in terms of the \(\Sigma\)-group structure on \(\FdHilb(\Cset^{\abs{A}},\Cset^{\abs{B}})\), \ie{} for each family \(\{f_i\}_I \in \LSI(A,B)^*\) define: 
  \begin{equation*}
    \Sigma \{f_i\}_I = \begin{cases}
      \{\widehat{g}_\omega\}_{\omega \in \Rset} &\ifc \forall \omega \in \Rset,\, \Sigma \{\widehat{(f_i)}_\omega\}_I \keq \widehat{g}_\omega \\
      \undefined &\otherwise.
    \end{cases}
  \end{equation*}
  Since \(\FdHilb(\Cset^{\abs{A}},\Cset^{\abs{B}})\) is a \(\Sigma\)-group and \(\Sigma\) on \(\LSI(A,B)\) is defined index-wise, it is immediate that \(\LSI(A,B)\) is a \(\Sigma\)-group.
  Composition in \(\LSI\) is \(\Sigma\)-bilinear since both composition and addition in \(\LSI\) are defined index-wise and composition in \(\FdHilb\) is \(\Sigma\)-bilinear; thus, \(\LSI\) is a \(\SCat{g}\)-enriched category.
  Moreover, \((\LSI,\oplus,\varnothing)\) is a symmetric monoidal category where \(\oplus\) corresponds to disjoint union of sets on objects and, on any two morphisms \(f \in \LSI(A,B)\) and \(g \in \LSI(C,D)\), it acts as index-wise direct sum:
  \begin{equation*}
    f \oplus g = \{\widehat{f}_\omega \oplus \widehat{g}_\omega \colon \Cset^{\abs{A}} \oplus \Cset^{\abs{C}} \to \Cset^{\abs{B}} \oplus \Cset^{\abs{D}}\}_{\omega \in \Rset}.
  \end{equation*}
  This is well-defined thanks to the isomorphism \(\Cset^{\abs{A \uplus B}} \iso \Cset^{\abs{A}} \oplus \Cset^{\abs{B}}\).
  The monoidal unit \(\varnothing\) is clearly a zero object since \(\Cset^{\abs{\varnothing}} = \{0\}\), so that \(f \in \LSI(\varnothing,A)\) and \(g \in \LSI(A,\varnothing)\) are \(\Rset\)-indexed collections of copies of the zero map in \(\FdHilb\).
  Verifying the identity \(\Sigma \{\id_A \oplus 0, 0 \oplus \id_B\} \keq \id_{A \uplus B}\) is conceptually simple since both \(\oplus\) and \(\Sigma\) in \(\LSI\) are defined index-wise on morphisms, and the identity is satisfied in \(\FdHilb\).
  Consequently, \(\LSI\) is a \(\SCat{g}\)-UDC, as claimed.
\end{proof}

\begin{corollary} \label{prop:LSIContraction_UDC}
  \(\LSIContraction\) is a \(\SCat{w}\)-UDC.
\end{corollary}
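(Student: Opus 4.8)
The plan is to mirror the argument that realises $\Contraction$ and $\FdContraction$ as $\SCat{w}$-UDCs: exploit the faithful inclusion $\LSIContraction \into \LSI$ together with the $\SCat{g}$-UDC structure on $\LSI$ just established in Proposition~\ref{prop:LSI_UDC}. First I would equip each hom-set with a weak $\Sigma$-monoid structure. For every pair of objects $A,B$, the inclusion furnishes an injective function $\LSIContraction(A,B) \into \LSI(A,B)$ whose image contains the zero morphism (the index-wise zero map is a contraction). Since $\LSI(A,B)$ is a $\Sigma$-group, and hence a weak $\Sigma$-monoid, Lemma~\ref{lem:wSm_restriction} endows $\LSIContraction(A,B)$ with a weak $\Sigma$-monoid structure and makes the inclusion a $\Sigma$-homomorphism. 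Concretely, a family of morphisms in $\LSIContraction(A,B)$ is summable precisely when it is summable in $\LSI(A,B)$ and its $\LSI$-sum is again a contraction.

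Next I would verify the remaining axioms, in each case reducing to the corresponding property in $\LSI$ together with the closure of contractions under the relevant operation. Composition is $\Sigma$-bilinear: if $\{f_i\}$ is summable in $\LSIContraction(A,B)$ with contraction sum $f$ and $g \in \LSIContraction(B,C)$, then $\{g \circ f_i\}$ is summable in $\LSI$ with sum $g \circ f$ because composition in $\LSI$ is $\Sigma$-bilinear, and $g \circ f$ is a contraction since contractions compose to contractions; hence $\{g \circ f_i\}$ is summable in $\LSIContraction$ with sum $g \circ f$, and symmetrically in the other variable. The monoidal product restricts to $\LSIContraction$ because $f \oplus g = \{\widehat{f}_\omega \oplus \widehat{g}_\omega\}_{\omega \in \Rset}$ is a contraction whenever $f$ and $g$ are (the direct sum of contractions is a contraction), and the coherence isomorphisms of $\LSI$ are index-wise unitaries and therefore contractions; so $(\LSIContraction,\oplus,\varnothing)$ is a symmetric monoidal subcategory of $\LSI$. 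The unit $\varnothing$ remains a zero object because $\Cset^{\abs{\varnothing}} = \{0\}$ forces $\LSIContraction(\varnothing,A)$ and $\LSIContraction(A,\varnothing)$ to be singletons.

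It then remains to check that the hom-object action of $\oplus$ is a $\SCat{w}$-morphism whose domain is a categorical product. By Proposition~\ref{prop:SCatw_complete} the categorical product $\LSIContraction(A,C) \times \LSIContraction(B,D)$ in $\SCat{w}$ is the Cartesian product with coordinate-wise summability, so I would show $- \oplus -$ is a $\Sigma$-homomorphism out of this product by the same restriction argument: a summable family $\{(f_i,g_i)\}$ in the product consists of summable families $\{f_i\}$ and $\{g_i\}$ with contraction sums $f$ and $g$; since $\oplus$ on hom-objects is already a $\Sigma$-homomorphism in $\LSI$, the family $\{f_i \oplus g_i\}$ is summable in $\LSI$ with sum $f \oplus g$, which is a contraction, hence summable in $\LSIContraction$. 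The identity $\Sigma\{\id_A \oplus 0,\, 0 \oplus \id_B\} \keq \id_{A \uplus B}$ transfers from $\LSI$ in the same fashion.

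Since every step is a direct transfer from $\LSI$ guarded by the closure of contractions under composition and direct sum, there is no genuine obstacle here; the only point demanding care is bookkeeping the precise meaning of summability under Lemma~\ref{lem:wSm_restriction} — namely that a family is summable in $\LSIContraction$ if and only if it is summable in $\LSI$ \emph{and} its $\LSI$-sum happens to be a contraction — and ensuring this restricted notion is the one used consistently when invoking $\Sigma$-bilinearity of composition and the product condition on $\oplus$.
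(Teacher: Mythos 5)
Your proposal is correct and follows essentially the same route as the paper: both exploit the \(\SCat{g}\)-UDC structure of \(\LSI\) from Proposition~\ref{prop:LSI_UDC} together with the closure of contractions under composition and direct sum, transferring each UDC axiom index-wise. The only cosmetic difference is that you obtain the hom-set \(\Sigma\)-structure by restricting along the inclusion \(\LSIContraction(A,B) \into \LSI(A,B)\) via Lemma~\ref{lem:wSm_restriction}, whereas the paper defines it index-wise from the \(\SCat{w}\)-structure on \(\FdContraction\) hom-sets; the two definitions coincide, since in either case a family is summable exactly when each \(\omega\)-indexed family of components is summable in \(\FdHilb\) and each \(\omega\)-wise sum is a contraction.
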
 \begin{proof}
  By definition, \(\LSIContraction\) is a subcategory of \(\LSI\) and we may define a symmetric monoidal structure on it in the same manner we did for \(\LSI\) in the previous proposition.
  Moreover, we may also provide a \(\Sigma\)-monoid structure on each hom-set in \(\LSIContraction\) via the same index-wise definition --- this time using the \(\Sigma\)-monoid structure of hom-sets in \(\FdContraction\).
  Recall that \(\FdContraction\) is only \(\SCat{w}\)-enriched, since a pair of contractions does not generally add up to a contraction; thus, the enrichment in \(\LSIContraction\) is over \(\SCat{w}\) as well.
  Consequently, the same proof strategy from the previous proposition establishes that \(\LSIContraction\) is a \(\SCat{w}\)-UDC.
\end{proof}

It follows from the fact that \((\FdContraction,\oplus,\ex)\) is a totally traced category that \((\LSIContraction,\oplus,\ex)\) is totally traced as well.

\begin{theorem} \label{thm:LSIContraction_traced}
  The category \((\LSIContraction,\oplus,\ex)\) is totally traced.
\end{theorem}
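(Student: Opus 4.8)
The plan is to reduce the claim about $(\LSIContraction,\oplus,\ex)$ to the already-established result that $(\FdContraction,\oplus,\ex)$ is totally traced (Theorem~\ref{thm:FdContraction_traced}), exploiting the fact that morphisms in $\LSIContraction$ are $\Rset$-indexed collections $\{\widehat{f}_\omega\}_{\omega \in \Rset}$ of morphisms in $\FdContraction$, with composition, addition, the monoidal product $\oplus$, and the infinitary sum $\Sigma$ all defined index-wise. The central observation I would make precise is that the execution formula commutes with taking the $\omega$-th component: for any $f \colon A \oplus U \to B \oplus U$ in $\LSIContraction$, the equality
\begin{equation*}
  \widehat{(\ex^U(f))}_\omega = \ex^{\Cset^{\abs{U}}}(\widehat{f}_\omega)
\end{equation*}
should hold for every $\omega \in \Rset$. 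This is because the components of $\ex^U(f)$ are built from $f_\sub{BA}$, $f_\sub{BU}$, $f_\sub{UU}^n$ and $f_\sub{UA}$ using only composition and $\Sigma$, each of which (by Corollary~\ref{prop:LSIContraction_UDC} and the index-wise definitions) is computed componentwise.

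First I would verify the totality of $\ex$ in $\LSIContraction$. For each fixed $\omega$, the component $\widehat{f}_\omega$ is a morphism in $\FdContraction(\Cset^{\abs{A} + \abs{U}}, \Cset^{\abs{B} + \abs{U}})$, so Theorem~\ref{thm:FdContraction_traced} guarantees that $\ex^{\Cset^{\abs{U}}}(\widehat{f}_\omega)$ is defined and is a contraction $\Cset^{\abs{A}} \to \Cset^{\abs{B}}$. I then need to check that the resulting $\Rset$-indexed collection $\{\ex^{\Cset^{\abs{U}}}(\widehat{f}_\omega)\}_{\omega \in \Rset}$ is a legitimate morphism of $\LSIContraction$; since each entry is a contraction of the correct type, this is immediate from Definition~\ref{def:LSIContraction} provided the collection genuinely arises as the components of an LSI map. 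To secure that, I would argue that the partial sums $\sum_{n=0}^N f_\sub{BU} f_\sub{UU}^n f_\sub{UA}$ lie in $\LSI$ by Proposition~\ref{prop:LSI_addition} and that the index-wise $\Sigma$ (Corollary~\ref{prop:LSIContraction_UDC}) of a summable family in $\LSIContraction$ is again an $\Rset$-indexed collection in $\LSI$; convergence at each $\omega$ follows from the $\FdContraction$ result, so summability in $\LSIContraction$ reduces to summability at every $\omega$, which holds.

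Having established that $\ex^U$ is a total operator on $\LSIContraction$ satisfying the componentwise identity, the trace axioms then transfer directly. Each axiom of Definition~\ref{def:traced_category} --- naturality, dinaturality, superposing, vanishing I, vanishing II, yanking --- is an equation between morphisms, and two morphisms of $\LSIContraction$ are equal iff their $\omega$-components agree for all $\omega$. Since composition, $\oplus$, the symmetry $\sigma$, and $\ex$ all act index-wise, each axiom in $\LSIContraction$ is equivalent to the same axiom holding for $\widehat{f}_\omega$ in $\FdContraction$ at every $\omega$, which Theorem~\ref{thm:FdContraction_traced} supplies. The main obstacle I anticipate is not any single axiom but rather the careful bookkeeping establishing the componentwise commutation $\widehat{(\ex^U(f))}_\omega = \ex^{\Cset^{\abs{U}}}(\widehat{f}_\omega)$ rigorously: this requires showing that the infinitary $\Sigma$ in $\LSIContraction$ commutes with extraction of the $\omega$-component, i.e.\ that a family is summable in $\LSIContraction$ exactly when each component family is summable in $\FdContraction$ and the sums agree index-wise. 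This is precisely how $\Sigma$ was defined in Corollary~\ref{prop:LSIContraction_UDC}, so the identity should follow, but the interchange of the (possibly infinite) sum defining $\ex$ with the componentwise structure is the step demanding the most care, and I would present it as the technical heart of the argument before invoking the formal transfer of the six axioms.
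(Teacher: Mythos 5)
Your proposal is correct and takes essentially the same route as the paper's own proof: both reduce everything to Theorem~\ref{thm:FdContraction_traced} via the componentwise identity \(\widehat{(\ex^U(f))}_\omega = \ex^{\Cset^{\abs{U}}}(\widehat{f}_\omega)\) (valid because \(\Sigma\), composition and \(\oplus\) in \(\LSIContraction\) are all defined index-wise), and then let the trace axioms transfer index-wise. The only minor divergence is bookkeeping: the paper closes the totality argument by invoking Theorem~\ref{thm:LSI_sufficient_contraction} to conclude the assembled LSI map is itself a contraction, whereas you appeal directly to Definition~\ref{def:LSIContraction}; the point you flag as the technical heart --- that the componentwise sums genuinely assemble into an LSI map --- is passed over silently in the paper's proof as well, so your treatment is, if anything, slightly more careful.
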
 \begin{proof}
  Recall that on any \(\SCat{*}\)-UDC the execution formula \(\ex\) is defined as follows:
  \begin{equation*}
    \ex^U(f) = \Sigma \{f_\sub{BA}\} \uplus \{f_\sub{BU} f_\sub{UU}^n f_\sub{UA}\}_\Nset.
  \end{equation*}
  Since the morphisms in \(\LSIContraction\) are, by definition, \(\Rset\)-indexed collections of morphisms in \(\FdContraction\) and \(\Sigma\), \(\oplus\) and composition in \(\LSIContraction\) are all defined index-wise, we have that \(\ex^U(f)\) corresponds to:
  \begin{equation*}
     \ex^U(f) = \{\ex^{\Cset^\dsub{\abs{U}}}(\widehat{f}_\omega)\}_{\omega \in \Rset}
  \end{equation*}
  where \(\ex^{\Cset^\dsub{\abs{U}}}\) is the execution formula in \(\FdContraction\).
  Theorem~\ref{thm:FdContraction_traced} establishes that \((\FdContraction,\oplus,\ex)\) is a totally traced category and, hence, for all \(\omega \in \Rset\) the linear map \(\ex^{\Cset^\dsub{\abs{U}}}(\widehat{f}_\omega)\) is a well-defined contraction.
  According to Theorem~\ref{thm:LSI_sufficient_contraction} this guarantees that the LSI map \(\ex^U(f)\) is itself a contraction and, hence, the execution formula is a total function in \(\LSIContraction\).
  Moreover, \((\LSIContraction,\oplus,\ex)\) is a traced monoidal category since the execution formula in \(\FdContraction\) is a valid trace, which guarantees that the axioms of traced monoidal categories are satisfied for every index \(\omega \in \Rset\).
  Thus, we conclude that \((\LSIContraction,\oplus,\ex)\) is a totally traced category, as claimed.
\end{proof}

As discussed at the beginning of this section, the physical interpretation of the Hilbert space \(\oplus_A\,\ltwo\) is that each vector is a quantum state that spreads over a finite set of locations \(A\) and over the discrete-time domain \(\Zset\).
LSI maps \(\oplus_A\,\ltwo \to \oplus_B\,\ltwo\) are interpreted as devices with a finite number of input and output ports, labelled by the elements in \(A\) and \(B\) respectively, and whose input-output behaviour is independent of the time-step the input reaches the device.
Then, \(\ex^U(f)\) may be interpreted as the device obtained after connecting the output ports labelled by \(U\) to the input ports with the same label.

Considering the prevalence of the notion of time --- an infinite domain --- in the definition of \(\LSI\), it is perhaps surprising that we may prove \(\LSIContraction\) is a traced category by means of an argument in finite dimensions, \ie{} using \(\FdContraction\).
The key to this result were the discrete-time Fourier transform (Definition~\ref{def:DTFT}), together with its convolution theorem (Theorem~\ref{thm:l2_convolution}) and the unique characterisation of LSI maps (Proposition~\ref{prop:oplusl2_lsi}).
In short, the Fourier transform lets us uniquely characterise time-shift invariant maps according to their action on plane waves \(\gamma_\omega(t) = e^{i\omega t}\) (see Remark~\ref{rmk:FT_plane_waves}) and the convolution theorem establishes that the action of these maps at different frequencies \(\omega\) do not interact with each other.
Consequently, in order to describe the input-output behaviour of a time-shift invariant device, we simply need to figure out how it affects the phase and amplitude of each plane wave \(\gamma_\omega\) for every angular frequency \(\omega \in \Rset\); this is precisely what each map \(\widehat{f}_\omega \colon \Cset^{\abs{A}} \to \Cset^{\abs{B}}\) represents.
In some sense, the fact that \((\LSIContraction,\oplus,\ex)\) is a traced category tells us that we may calculate the behaviour of a device with feedback by an iterative method, adding up the contribution of each of the possible paths from ports in \(A\) to ports in \(B\).
Moreover, the fact that \(\ex\) is total indicates that this iterative method always converges.
These results are not novel: physicists and engineers have been using these facts about LSI maps and the Fourier transform for quite a long time; however, there is some value in providing a formal description of this situation in the language of traced monoidal categories, as it may enable computer scientists to use these notions in a more abstract setting. 

The category \(\LSI\) captures the notion of quantum processes over a discrete domain \(\Zset\); such domain has been interpreted to describe the line of time, but it could instead be interpreted to correspond to discrete one-dimensional space, or we may extend it to a domain \(\Zset^{3}\) so that the appropriate version of \(\LSI\) would capture maps that are invariant with respect to translations on three-dimensional space.
A natural question is whether a similar approach could be used to discuss LSI maps acting on states over a \emph{continuous} domain \(\Rset\).
To do so, we would need to work with the Hilbert space \(\Ltwo\) but, as discussed in Remark~\ref{rmk:Dirac_delta}, convolution in \(\Ltwo\) does not have a unit and, consequently, the characterisation of LSI via an \(\Rset\)-indexed collection of linear maps is only valid up to approximation (see Proposition~\ref{prop:L2_lsi}).
In practice, physicists and engineers tend to disregard these approximations and treat them as equalities (see Remark~\ref{rmk:Dirac_delta}); however, the objective of this section was to discuss feedback in the formal setting of traced monoidal categories and, to that extent, the case of the continuous-time domain is left as an open question.

\section{Related work}
\label{sec:trace_rel_work}

In this section, the contributions of this chapter are put into perspective and compared to previous works in the field.

\paragraph*{Hoshino~\cite{RTUDC}.} Hoshino proved that every \(\SCat{s}\)-UDC is partially traced using the execution formula.
The same result is established in Theorem~\ref{thm:classical_trace} of this chapter, although the proof strategy differs from Hoshino's.
In~\cite{RTUDC}, Hoshino proves that every \(\SCat{s}\)-UDC can be faithfully embedded in a category with countable biproducts and enriched over \emph{totally} defined strong \(\Sigma\)-monoids --- \ie{} every family of morphisms is summable.
It is straightforward to show that these categories are totally traced using the execution formula and, thus, the faithful embedding provided by Hoshino immediately induces a partial trace on the original category as described in Proposition~\ref{prop:induced_trace}.
Unfortunately, such a result cannot be reproduced in the general case of \(\SCat{w}\)-UDCs: there are examples of such categories that are not partially traced using the execution formula, \eg{} \((\FdHilb,\oplus,\{0\})\), see Example~\ref{ex:FdHilb_counterexample}.
Our interest in \(\SCat{w}\)-enriched categories comes from the fact that the categories of quantum processes we are interested in (\eg{} \(\Unitary\) and \(\LSI\)) have additive inverses of morphisms; these capture destructive interference in quantum computing but are not available in a \(\SCat{s}\)-enrichment.

\paragraph*{The Barr functor.} The Barr functor \(\PInj \to \Contraction\) originally proposed in~\cite{Barr} induces a \(\SCat{s}\)-enrichment in \(\Contraction\) using that of \(\PInj\).
The details of the induced \(\SCat{s}\)-enrichment in \(\Contraction\) appear in Section 5.5 from~\cite{Haghverdi}; in essence, families of contractions are deemed summable if and only if the domain and codomain of every pair of contractions come from orthogonal subspaces and, hence, no quantum interference may occur between them.
Such a situation can be perceived as a quantum computer simulating a classical iterative loop, which is not of interest to this thesis.

\paragraph*{Malherbe, Scott \& Selinger~\cite{Malherbe}.} The kernel-image trace plays an essential role in our proof that \((\FdContraction,\oplus,\ex)\) is a totally traced category.
We can draw a loose analogy between the kernel-image trace of a morphism \(f \colon A \oplus U \to B \oplus U\) and the existence of a pseudo-inverse \((\id\!-\!f_\sub{UU})^+\) which may be used to calculate \(\Tr_\sub{\ki} \approx f_\sub{BA} + f_\sub{BU} (\id\!-\!f_\sub{UU})^+ f_\sub{UA}\).
Under this analogy, the intuition behind our proof that \((\FdContraction,\oplus,\ex)\) is totally traced loosely corresponds to the identity in the real numbers \((1-r)^{-1} = \sum_{n = 0}^\infty r^n\), which only holds when both sides are defined and \(r < 1\).
More formally, Lemma~\ref{lem:ex_from_ki} identifies sufficient conditions for a morphism \(f\) to satisfy that the well-definedness of \(\Tr_\sub{\ki}(f)\) implies that of \(\ex(f)\), with their results matching.
Then, we take advantage of the kernel-image trace being a valid categorical trace in \(\Hilb\) to prove that the execution formula is a valid trace of its subcategory \(\FdContraction\).

\paragraph*{Bartha~\cite{Bartha}.} Bartha had shown that both \(\FdIsometry\) and \(\FdUnitary\) are totally traced categories using the execution formula.
Bartha's proof makes use of the kernel-image trace in a similar way as we do and, in particular, our proof that \(\FdContraction\) is totally traced (instead of simply partially traced) relies on the approach Bartha used to prove the same result in the case of \(\FdIsometry\) (see Proposition~\ref{prop:Bartha_isometry}).
Even though Bartha did not explicitly show that \(\FdContraction\) is totally traced, this can be easily inferred from the result that \(\FdIsometry\) is totally traced and, hence, we do not claim this to be a novel result.
Instead, the approach we have used to reach this result is novel: whereas Bartha relies on matrix iteration theories, our approach relies on the more general theory of Hausdorff abelian groups.
Moreover, our approach is self-contained, whereas Bartha defers the details of a key proof (that of Theorem 8 from~\cite{Bartha}) for the reader to check, assuming prior (nontrivial) knowledge on matrix iteration theories.

Our original goal was to answer an open question posed by Bartha: is \(\Isometry\) partially traced?
We rephrased this question in terms of \(\Contraction\) and, unfortunately, the question remains open.
However, we have been able to narrow down the search of an answer to the open problems discussed in Section~\ref{sec:open_questions}; in particular, we have shown that \(\Hilb\) is a hom-convergence UDC and, hence, even if \(\Contraction\) were not partially traced, we have opened up a path to prove that certain subcategories of \(\Contraction\) may be partially traced.
In particular, \(\LSIContraction\) is an example of a category that faithfully embeds in \(\Contraction\) and that is totally traced, as established in Section~\ref{sec:LSI}.

\paragraph*{Pseudo-traces with delays.} Previous works such as~\cite{DelayedTrace} have formalised discrete-time quantum iterative loops in an ad-hoc manner by introducing a notion of categorical pseudo-trace similar to the execution formula but that explicitly includes a `delay' operator in its definition.
Doing so implies that the category no longer satisfies the yanking axiom, as intuitively shown in the diagram below
\[\tr(f) \ = \ 
\begin{tikzpicture}
  \node[rectangle,draw=black,thick,minimum width=6mm,minimum height=10mm] (f) {\(f\)};
  \coordinate[below=3mm of f.west] (A);
  \coordinate[left=7mm of A] (Ad);
  \draw (Ad) -- (A);
  \coordinate[below=3mm of f.east] (B);  
  \coordinate[right=12mm of B] (Bd);
  \draw (Bd) -- (B);
  \coordinate[above=3mm of f.west] (Ui);
  \coordinate[left=3mm of Ui] (Uid);
  \draw (Uid) -- (Ui);
  \coordinate[above=3mm of f.east] (Uo);
  \node[circle,draw=black,thick,minimum size=4mm, inner sep=0,right=3mm of Uo] (delta) {\footnotesize \(\delta\)};
  \coordinate[right=1mm of delta] (Uod);
  \draw (Uo) -- (delta);
  \draw (Uod) -- (delta);
  \coordinate[above=6mm of Uid] (Uiu);
  \coordinate[above=6mm of Uod] (Uou);
  \draw (Uiu) -- (Uou);
  \draw (Uid) edge[out=180,in=180,looseness=1.5] (Uiu);
  \draw (Uod) edge[out=0,in=0,looseness=1.5] (Uou);
\end{tikzpicture}
\quad\quad;\quad\quad
\tr(\sigma) \ = \ 
\begin{tikzpicture}
  \node[rectangle,minimum width=8mm,minimum height=10mm] (f) {};
  \coordinate[below=3mm of f.west] (A);
  \coordinate[left=7mm of A] (Ad);
  \draw (Ad) -- (A);
  \coordinate[below=3mm of f.east] (B);  
  \coordinate[right=12mm of B] (Bd);
  \draw (Bd) -- (B);
  \coordinate[above=3mm of f.west] (Ui);
  \coordinate[left=3mm of Ui] (Uid);
  \draw (Uid) -- (Ui);
  \coordinate[above=3mm of f.east] (Uo);
  \node[circle,draw=black,thick,minimum size=4mm, inner sep=0,right=3mm of Uo] (delta) {\footnotesize \(\delta\)};
  \coordinate[right=1mm of delta] (Uod);
  \draw (A) edge[out=0,in=180] (Uo);
  \draw (Ui) edge[out=0,in=180] (B);
  \draw (Uo) -- (delta);
  \draw (Uod) -- (delta);  \coordinate[above=6mm of Uid] (Uiu);
  \coordinate[above=6mm of Uod] (Uou);
  \draw (Uiu) -- (Uou);
  \draw (Uid) edge[out=180,in=180,looseness=1.5] (Uiu);
  \draw (Uod) edge[out=0,in=0,looseness=1.5] (Uou);
\end{tikzpicture}
\ = \ 
\begin{tikzpicture}
  \node[circle,draw=black,thick,minimum size=4mm, inner sep=0] (delta) {\footnotesize \(\delta\)};
  \coordinate[left=3mm of delta] (Uid);
  \coordinate[right=3mm of delta] (Uod);
  \draw (Uid) -- (delta);
  \draw (delta) -- (Uod);
\end{tikzpicture}
\ \not= \ \id.\]

In contrast, the category \(\LSI\) captures such `delays' in the definition of its morphisms and, hence, an appropriate notion of discrete-time quantum iteration can be achieved in the standard framework of traced monoidal categories and the execution formula.
To the best of our knowledge, our result that \((\LSIContraction,\oplus,\ex)\) (Theorem~\ref{thm:LSIContraction_traced}) is a totally traced category is the first instance of a formalisation of discrete-time quantum iterative loops as a categorical trace.

\chapter{Weakly measured while loops}
\label{chap:while}

This chapter reproduces the contents of a publication due to Andres-Martinez and Heunen~\cite{WeakWhileLoop}; the text and notation has been revised to better fit this thesis.
The focus of this chapter is on classical control flow of quantum programs and, hence, measurements are fundamental. As such, the framework of completely positive weak trace reducing maps (CPTR) will be used (see Section~\ref{sec:CPTR} for a brief introduction).

A while loop tests a termination condition on every iteration but, on a quantum computer, such measurements perturb the evolution of the algorithm.
In this chapter we define a while loop primitive using weak measurements, offering a trade-off between the perturbation caused and the amount of information gained per iteration.
This trade-off is adjusted with a parameter set by the programmer.
We provide sufficient conditions that let us determine, with arbitrarily high probability, a worst-case estimate of the number of iterations the loop will run for.
As an example, we solve Grover's search problem using a while loop and prove the quadratic quantum speed-up is maintained.

This chapter is structured as follows: Section~\ref{sec:weak_meas} gives a brief introduction to weak measurements and defines the notion of \(\kappa\)-measurement; Section~\ref{sec:general} contains the main contribution of this chapter: the proposal of \(\kappa\)-while loops and the study of its properties; Section~\ref{sec:Grover} shows that Grover's algorithm may be implemented using a \(\kappa\)-while loop, and Section~\ref{sec:4_discussion} concludes the chapter with discussion of related work and open questions.

\section{Weak measurements (\emph{Preamble})}
\label{sec:weak_meas}

Roughly speaking, a weak measurement is ``a measurement which gives very little information about the system on average, but also disturbs the state very little''~\cite{ToddTutorial}.
For instance, the field of quantum feedback control uses weak measurements (often, continuous measurements) to monitor a state; the stream of measurement outcomes is used to control the strength of a Hamiltonian that corrects the system (see~\cite{QuantumFeedbackSurvey} for a survey).
Our approach is inspired by these ideas, contextualised for their application to algorithm design.
We restrict ourselves to the discrete-time setting and define a particular kind of parametrised measurement, the \(\kappa\)-measurement, that behaves as a weak measurement when \(\kappa\) is small.

Let \(H\) be a Hilbert space and let \(B\) be an orthonormal basis of \(H\); we wish to apply a measurement to test whether a state \(\psi \in H\) satisfies a predicate \(Q \colon B \to \{0,1\}\).
Assume the existence of a unitary \(O_Q \colon H \otimes \CTwo \to H \otimes \CTwo\) acting as the \emph{oracle} of predicate \(Q\):
\begin{equation*}
  O_Q \ket{x,p} = \ket{x,p \oplus Q(x)}.
\end{equation*}
Fix a value of parameter \(\kappa \in [0,1]\) and let \(P = \Span\{\bot,\top\}\) be an auxiliary space known as the \emph{probe}. We define a unitary \(E_{\kappa,Q}\) as follows:
\begin{align*}
  E_{\kappa,Q} &= (O_Q^\dagger \otimes \id_P) \, (\id_H \otimes \Lambda(R_\kappa)) \, (O_Q \otimes \id_P) \\
  \Lambda(R_\kappa) &= \ketbra{0}{0} \otimes \id_P + \ketbra{1}{1} \otimes R_\kappa \\
  R_\kappa &= \begin{pmatrix}
               \sqrt{1 - \kappa} &  \sqrt{\kappa} \\
               \sqrt{\kappa}     & -\sqrt{1 - \kappa} \\
              \end{pmatrix},
\end{align*}
The purpose of \(R_\kappa\) is to map \(\ket{\bot}\) to \(\alpha \ket{\bot} + \beta \ket{\top}\) so that \(\lvert \beta \rvert^2 = \kappa\).\footnote{Such a behaviour is not unique to this definition of \(R_\kappa\). For instance, \(Z(\theta) R_\kappa Z(\theta')\) for any \(\theta\) and \(\theta'\) is an equally valid choice for this unitary, where \(Z(\theta)\) is a \(Z\)-rotation of angle \(\theta\).}
In particular, if \(\kappa = 0\) then \(R_0 = \id\) and if \(\kappa = 1\) then \(R_1 \ket{\bot} = \ket{\top}\).
\(E_{\kappa,Q}\) is a quantumly controlled version of \(R_\kappa\) where the outcome from the oracle \(O_Q\) acts as the control.
Notice that the auxiliary qubit the oracle \(O_Q\) acts upon is restored to its initial state by \(O_Q^\dagger\) so it may be reused and we need not keep track of it. We initialise it to \(\ket{0}\) and omit it in further discussions.

\begin{definition} \label{def:kappa_measurement}
	A \GLS{\(\kappa\)-measurement}{k-measurement} of predicate \(Q\) may be applied on any density matrix \(\rho \in B(H)\) by the following procedure:
	\begin{itemize}
	  \item apply the unitary \(E_{\kappa,Q} \colon H \otimes P \to H \otimes P\) on the state \(\rho \otimes \ketbra{\bot}{\bot}\),
	  \item apply a measurement to discern between the orthogonal subspaces \(H \otimes \Span \{\bot\}\) and \(H \otimes \Span \{\top\}\).
	\end{itemize}
\end{definition}

Since \(E_{\kappa,Q}\) entangles the probe with the result of the oracle, the outcome of this measurement provides some information about the original state in \(H\).
If the outcome of the \(\kappa\)-measurement is \(\bot\) and \(\kappa < 1\), we obtain no definitive information on whether the state in \(H\) satisfies \(Q\) or not: for any state \(\rho \in B(H)\), the probability of outcome \(\top\) is
\begin{equation*}
	p_\top(\rho) = \kappa \cdot p_Q(\rho) 
\end{equation*}
where \(p_Q(\rho)\) is the probability of predicate \(Q\) being satisfied by \(\rho\), 
\begin{equation} \label{eq:pQ}
  p_Q(\rho) =  \tr \left( (\id_H \otimes \bra{1})\, O_Q \,(\rho \otimes \ketbra{0}{0})\, O_Q^\dagger \,(\id_H \otimes \ket{1}) \right).
\end{equation} 
The smaller \(\kappa\) is, the less likely it is that we read outcome \(\top\). 
On the other hand, when the outcome of the \(\kappa\)-measurement is \(\top\), no matter the value of \(\kappa\) we are \emph{certain} that the state left in space \(H\) satisfies predicate \(Q\).
In contrast, the map \(\cW \colon B(H) \to B(H)\) corresponding to outcome \(\bot\) does \emph{not} project the state to the subspace of \(H\) where \(Q\) is unsatisfiable:
\begin{equation} \label{eq:action_W}
  \cW(\rho) = \frac{W_\bot \,(\rho \otimes \ketbra{\bot}{\bot}) \,W_\bot^\dagger}{1 - p_\top(\rho)}
\end{equation}
where \(W_\bot = (\id_{H} \otimes \ketbra{\bot}{\bot})\,E_{\kappa,Q}\).

Intuitively, a \(\kappa\)-measurement models the behaviour of a device with internal state space \(P\) used to measure a property \(Q\) of a system \(H\), with \(\kappa\) indicating how strongly the probing device couples with the system.

\begin{remark} \label{rmk:known_probe_outcome}
After applying a measurement on a pure state, we may describe the result as a mixed state on the \Cstar-algebra \(B(H \otimes P)\). However, in doing so are omitting the information we have obtained from the classical outcome of the measurement: we in fact know whether the state is in \(H \otimes \Span \{\bot\}\) or in \(H \otimes \Span \{\top\}\).
For our purposes, it is more elucidating to describe the two possible outcomes separately.
Notice that when the state in \(H\) prior to measurement is a pure state both possible outcomes yield a pure state; this fact will be used in Section~\ref{sec:Grover} to simplify our discussions.
\end{remark}

\section{The \(\kappa\)-while loop}
\label{sec:general}

This section contains the main contribution of this chapter: we define \(\kappa\)-while loops and introduce the properties of \(\cA\)-guarantee and robustness which let us estimate the worst-case runtime of certain algorithms using \(\kappa\)-while loops. In short, a \(\kappa\)-while loop is a classically controlled while loop where the test of the termination condition is realised by a \(\kappa\)-measurement.

\subsection{Motivation}
\label{sec:motivation}

Fix a predicate \(Q\), and let \(\cC\) be a completely positive (weak) trace reducing (CPTR) map acting on \(B(H)\); we will refer to \(\cC\) as the \emph{body} of the loop.

\begin{definition}
Let \(\sigma \in B(H)\) be an arbitrary density matrix. Define a predicate \(\cA_{Q,\sigma} \colon \Nset \to \Bset\) as follows:
\begin{equation*}
  \cA_{Q,\sigma}(n) \iff p_Q(\cC^n(\sigma)) > \frac{1}{2}
\end{equation*}
where \(p_Q\) is given in~\eqref{eq:pQ}.
If \(\cA_{Q,\sigma}(n)\) is satisfied, we say \(n\) is an \(\cA\)-iteration --- read as \emph{active iteration}.
\end{definition}

If we can somehow identify an \(\cA\)-iteration \(m\), we may use a simple approach to find a state satisfying \(Q\): apply \(\cC^m(\sigma)\) and then perform a projective measurement; if the outcome does not satisfy \(Q\), repeat the process from the initial state \(\sigma\). Thanks to \(\cA_{Q,\sigma}(m)\) being satisfied, the probability of succeeding at some point within \(k\) restarts is \(1 - 1/2^k\) so the probability of success quickly approaches \(1\) as \(k\) increases.
This is an efficient approach whenever a small \(\cA\)-iteration \(m\) is known. 
In fact, this is how the standard Grover's algorithm works, choosing an iteration \(m\) where \(p_Q\) is maximised.

However, this measure-restart strategy can only be implemented if we already know when \(\cA\)-iterations will occur.
The distribution of \(\cA\)-iterations may become unpredictable as soon as some randomness is introduced in the body of the loop.
For instance, consider the standard Grover's algorithm being implemented on a faulty machine where, without notice, the quantum memory may reset to its initial state. 
If such an event occurs mid-computation the evolution is effectively restarted, but the algorithm -- oblivious to the memory reset -- continues for only the remaining fixed number of iterations.
In contrast, a version of Grover's algorithm that uses a \(\kappa\)-while loop (such as the one we describe in Section~\ref{sec:Grover}) will, by definition, keep iterating until it succeeds to find the target state.
Although a contrived example, this illustrates how a while loop could provide reliability against unpredictable behaviour.
More realistic scenarios that would lead to an unpredictable distribution of \(\cA\)-iterations may come from algorithms that, by design, explore the state space in an unpredictable manner; Section~\ref{sec:beyond_quantum_search} proposes the field of quantum walk-based algorithms as an area were such examples may perhaps be found.

In this section we present the concept of \(\kappa\)-while loops as an abstract quantum programming construct.
In Section~\ref{sec:Grover} we present a version of Grover's algorithm that uses a \(\kappa\)-while loop and maintains the quadratic quantum speed-up.
In the future, we hope to apply \(\kappa\)-while loops to practical problems where the distribution of \(\cA\)-iterations cannot be predicted.
To this end, we provide sufficient conditions that let us determine, with arbitrarily high probability, a worst-case estimate of the number of iterations the loop will run for.
Remarkably, these conditions do not require us to know the precise distribution of \(\cA\)-iterations, but only a guarantee of their proportion throughout the algorithm in the worst-case scenario.

\subsection{Definition and properties}
\label{sec:main}

In Figure~\ref{fig:kappa_while} we propose the syntax for the \(\kappa\)-while loop and its implementation in a quantum programming language with classical control flow. The syntax of the \(\kappa\)-while loop is meant to be read as ``repeat \(\cC\) while it is \emph{not certain} that \(Q\) is satisfied''.
On each iteration, the value of the state \(\rho\) will be updated to \(\cC(\rho)\), followed by a \(\kappa\)-measurement of predicate \(Q\) (see Section~\ref{sec:weak_meas}). If the outcome of the \(\kappa\)-measurement is \(\bot\), the loop keeps iterating; the state \(\rho\) becomes \(\cW(\rho)\) where \(\cW\) is given in~\eqref{eq:action_W}. Otherwise, outcome \(\top\) causes the loop to halt and we succeed in obtaining a state \(\rho\) that satisfies \(Q\).

\begin{figure}
	\begin{tikzpicture}
  \node (code_alias) {
    \begin{algorithm}
        while $!_\kappa$ $Q[\rho]$ do
          $\rho$ $\gets$ $\cC(\rho)$
        end
    \end{algorithm}
  };
  \node[right=5mm of code_alias] (code_def) {
    \begin{algorithm}
        $q$ $\gets$ $\ketbra{\bot}{\bot}$
        while $M[q] = \bot$ do
          $\rho$ $\gets$ $\cC(\rho)$
          $\rho,q$ $\gets$ $E_{\kappa,Q}(\rho \otimes q)$
        end
    \end{algorithm}
  };
\end{tikzpicture}
	\caption{\emph{Left:} the syntax we use to represent a \(\kappa\)-while loop; \(\kappa \in [0,1]\) is a parameter set by the programmer and \(Q\) is the predicate to be measured. \emph{Right:} the pseudocode that implements the \(\kappa\)-while loop on a programming language with classical control flow, following the notation from~\cite{YingBook}; \(E_{\kappa,Q}\) is defined in Section~\ref{sec:weak_meas} and the condition \(M[q] = \bot\) indicates that \(q\) is measured to distinguish between \(H \otimes \Span \{\bot\}\) and \(H \otimes \Span \{\top\}\), with the condition being satisfied if \(q\) collapses to the former.}
	\label{fig:kappa_while}
\end{figure}

\begin{definition} \label{def:guarantee}
An isotone function \(f \colon \Nset \to \Nset\) is a \(\cA_{Q,\sigma}\)-\emph{guarantee} if for all \(n \in \Nset\):
	\begin{equation} \label{eq:guarantee}
	  n \ \leq \ \abs{\{ k \in \Nset \mid k \leq f(n),\, \cA_{Q,\sigma}(k) \}}
	\end{equation}
\end{definition}

If such a function \(f\) exists we are promised that there will be at least \(n\) active iterations within the first \(f(n)\) applications of \(\cC\). In principle, this need not be a tight bound. 
For instance, we may only know that within the first thousand iterations there are at least ten \(\cA\)-iterations; a valid \(\cA_{Q,\sigma}\)-guarantee in this case would be a function \(f\) such that \(f(1)=f(2)=\dots=f(10) = 1000\). This gives us little information about when any of these \(\cA\)-iterations actually occur.

\begin{definition} \label{def:robustness}
	The evolution induced by a CPTR map \(\cC\) on a state \(\sigma\) is said to be \(\varepsilon\)-\emph{robust} to \(\kappa\)-measurements of predicate \(Q\) if there is an isotone function \(g \colon \Nset \to \Nset\) such that:
	\begin{equation}
	  \forall n \in \Nset, \;\; \exists m \leq g(n) \colon \quad \abs{p_Q(\cC^n(\sigma)) - p_Q((\cW\cC)^m(\sigma))} \leq \varepsilon
	\end{equation}
	where \(\varepsilon < 1/2\) and \(\cW\) is given in~\eqref{eq:action_W}. The function \(g\) is said to be a \emph{witness} of the robustness.
\end{definition}

In practice, if the evolution is \(\varepsilon\)-robust this means that the weakly-measured evolution will provide a similar curve of the success probability \(p_Q\) throughout its iterations, but certain intervals may span a larger number of iterations. Then, the witness function \(g\) provides a bound on how much slower the weakly-measured evolution may be.

\begin{lemma}
\label{lem:comp_guarantee}
If \(f\) is a \(\cA_{Q,\sigma}\)-guarantee and \(g\) witnesses that \(\cC\) is \(\varepsilon\)-robust to \(\kappa\)-measurements of predicate \(Q\), then \(g \circ f\) is a \(\cA'_{Q,\sigma}\)-guarantee where \(\cA'_{Q,\sigma}\) is the predicate satisfying:
\begin{equation*}
  \cA'_{Q,\sigma}(n) \iff p_Q \left( (\cW\cC)^n(\sigma) \right) > \frac{1}{2} - \varepsilon
\end{equation*}
\end{lemma}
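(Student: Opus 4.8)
The plan is to verify, for the composite $g \circ f$, the two requirements that Definition~\ref{def:guarantee} imposes on an $\cA'_{Q,\sigma}$-guarantee. Isotonicity of $g \circ f$ is immediate: $f$ is isotone (being an $\cA_{Q,\sigma}$-guarantee) and $g$ is isotone (being a robustness witness, per Definition~\ref{def:robustness}), and the composite of isotone functions is isotone. The real content is the counting inequality~\eqref{eq:guarantee}: I must show that for every $n \in \Nset$ there are at least $n$ distinct iterations $m \le g(f(n))$ at which $\cA'_{Q,\sigma}(m)$ holds.

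First I would fix $n$ and invoke the hypothesis that $f$ is an $\cA_{Q,\sigma}$-guarantee to extract active iterations $k_1 < k_2 < \dots < k_n$, all bounded above by $f(n)$, at each of which $p_Q(\cC^{k_i}(\sigma)) > \tfrac{1}{2}$. For each $k_i$, the $\varepsilon$-robustness of the evolution supplies an iteration $m_i \le g(k_i)$ of the weakly-measured evolution with $\abs{p_Q(\cC^{k_i}(\sigma)) - p_Q((\cW\cC)^{m_i}(\sigma))} \le \varepsilon$. The triangle inequality then yields $p_Q((\cW\cC)^{m_i}(\sigma)) > \tfrac{1}{2} - \varepsilon$, so $\cA'_{Q,\sigma}(m_i)$ holds; and since $g$ is isotone and $k_i \le f(n)$, we get $m_i \le g(k_i) \le g(f(n))$. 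Thus each $m_i$ is an $\cA'_{Q,\sigma}$-iteration occurring within the first $g(f(n))$ steps of the weakly-measured evolution.

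The main obstacle is exactly the counting: Definition~\ref{def:guarantee} counts the \emph{cardinality} of a set of iterations, so I need the witnesses $m_1, \dots, m_n$ to be \emph{distinct}. The robustness hypothesis, read literally, only asserts the existence of \emph{some} admissible $m_i \le g(k_i)$ for each $k_i$, and a priori two distinct active iterations $k_i \ne k_j$ could be matched to the same weakly-measured iteration (indeed, a literal reading even admits pathological evolutions where all the $k_i$ collapse onto a single $m$). I would resolve this by upgrading the pointwise matching to an order-preserving one, which is what the informal reading of robustness — the weakly-measured curve follows the unmeasured curve but runs slower (see the discussion after Definition~\ref{def:robustness}) — is meant to capture. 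Concretely, I would select the witnesses inductively in increasing order of $k_i$, each time taking the least admissible iteration exceeding the one already chosen, and then argue that this monotone choice still stays within the bound $g(f(n))$. Establishing that robustness genuinely furnishes such a strictly increasing correspondence, rather than merely an existential pointwise one, is the delicate step and may require either a mild strengthening of Definition~\ref{def:robustness} or an auxiliary argument exploiting how $\cW$ deforms the success-probability curve; once it is in place, the $n$ distinct iterations $m_1 < \dots < m_n \le g(f(n))$ witness $\cA'_{Q,\sigma}$ and complete the verification that $g \circ f$ is an $\cA'_{Q,\sigma}$-guarantee.
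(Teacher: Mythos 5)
Your completed steps are exactly the paper's own argument: the published proof is the same pointwise matching (extract the active iterations below \(f(n)\), match each to a weakly-measured iteration via robustness, apply the triangle inequality, bound the witness by \(g(f(n))\) using isotonicity), compressed into three sentences. But the distinctness issue you flag is not an artefact of an over-literal reading --- it is a genuine gap, and it is present in the paper's proof as well, hidden in the phrase ``we will find those \(n\) active iterations again'', which tacitly assumes the matching \(k_i \mapsto m_i\) is injective. Definition~\ref{def:robustness} does not supply this, and in fact the lemma is false as literally stated. Concretely: take \(H = \Cset^2\), let \(Q\) be the computational-basis predicate, let \(\sigma = \ketbra{\psi}{\psi}\) with \(p_Q(\sigma) = 3/4\), and let \(\cC\) be a diagonal (phase) unitary, so that \(p_Q(\cC^n(\sigma)) = 3/4\) for every \(n \in \Nset\). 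Then every iteration is active and \(f(n) = n\) is an \(\cA_{Q,\sigma}\)-guarantee; but the constant function \(g \equiv 0\) witnesses \(\varepsilon\)-robustness, because \(m = 0\) gives \(\abs{p_Q(\cC^n(\sigma)) - p_Q((\cW\cC)^0(\sigma))} = 0\) for every \(n\) (excluding \(m=0\) does not help: \(g \equiv 1\) works for small \(\kappa\)). The composite \(g \circ f \equiv 0\) forces \(n \leq 1\) for all \(n\), so it is not an \(\cA'_{Q,\sigma}\)-guarantee. This is not merely pedantic: for this \(\cC\) the back-action of \(\cW\) monotonically damps \(p_Q\), and the \(\kappa\)-while loop fails to halt with constant probability, so the framework's conclusion genuinely breaks.

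Your proposed repair is the right one, and the application bears it out. The prose gloss of robustness (``a similar curve \ldots{} but certain intervals may span a larger number of iterations'') is an order-preserving statement, and the proof of Lemma~\ref{lem:G_robustness} actually delivers it: there the witness for \(a_n\) comes from interleaving two increasing angle sequences, \(b_m \leq a_n \leq b_{m+1}\), so the matching \(n \mapsto m\) is monotone, and a collision \(m(n+1) = m(n)\) can be repaired by moving to \(m(n)+1\) at the cost of the same error bound \(\sin 3\alpha\), since \(b_{m(n)+1} - a_n \leq 3\alpha\) as well. So the clean fix is to strengthen Definition~\ref{def:robustness} to require that the witnesses \(m(n) \leq g(n)\) can be chosen strictly increasing in \(n\); the Grover evolution satisfies this strengthened property, and with it your inductive selection of witnesses closes the counting inequality and the lemma's proof goes through.
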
 
\begin{proof}
	Since \(f\) is a \(\cA_{Q,\sigma}\)-guarantee, at least \(n\) instances of \(\cA_{Q,\sigma}\)-iterations will occur within the first \(f(n)\) applications of \(\cC\). If we switch to weakly measured iterations \(\cW\cC\), robustness tells us that within the first \(gf(n)\) iterations of the loop we will find those \(n\) active iterations again, but the probability \(p_Q\) might differ by \(\varepsilon\), so it is at least \(p_Q > 1/2 - \varepsilon\). Thus, for every \(n \in \Nset\) it follows that:
	\begin{equation*}
		n \ \leq \ \abs{\{ k \in \Nset \mid k \leq gf(n),\, \cA'_{Q,\sigma}(k) \}}
	\end{equation*}
	This concludes the proof.
\end{proof}

Lemma~\ref{lem:comp_guarantee} guarantees that at least \(n\) of the first \(gf(n)\) iterations of the \(\kappa\)-while loop will be \(\cA'\)-iterations. For each of these \(\cA'\)-iterations, the probability of outcome \(\top\) is at least \(\kappa(1/2 - \varepsilon)\) since \(p_\top = \kappa \cdot p_Q\). Within \(N\) instances of \(\cA'\)-iterations the probability of loop termination is:
\begin{equation*}
	P_{succ} > 1 - \big(1 - \kappa(1/2 - \varepsilon)\big)^N
\end{equation*}
Since for all \(x \in (0,1)\) we have that \((1 - x)^\frac{1}{x} < \tfrac{1}{e}\), if the algorithm is allowed to run for at least \(N = \frac{2}{\kappa(1-2\varepsilon)}\) \(\cA'\)-iterations, the probability of success will be:
\begin{equation*}
  P_{succ} > 1 - \tfrac{1}{e} > \tfrac{1}{2}
\end{equation*}
Therefore, with probability higher than \(1/2\), our \(\kappa\)-while loop will halt within its first 
\begin{equation} \label{eq:T_bound}
  T = gf\left(\frac{2}{\kappa(1-2\varepsilon)}\right)
\end{equation}
iterations.
Thus, equation~\eqref{eq:T_bound} estimates the median of the number of iterations the \(\kappa\)-while loop runs for before halting.
More generally, for any \(c \in \Nset\), the probability of success after \(c N\) instances of \(\cA'\)-iterations is greater than \(1 - 1/e^c\). Since \(1/e^c\) quickly approaches \(0\) as \(c\) increases, we may say that, with arbitrarily high probability, the loop halts successfully within \(T_c = gf(cN)\) iterations for small \(c\). Hence, we may claim that the time complexity of the loop is asymptotically bounded by \(gf(cN)\) or, in other words, its time complexity is
\begin{equation*}
	\bigO{gf\left(\frac{2c}{\kappa(1-2\varepsilon)}\right)}.
\end{equation*}
Hence, if we manage to choose appropriate \(\kappa\), \(\varepsilon\), \(f\) and \(g\), we can prove statements about quantum speed-ups --- under the assumption that the time complexity of the best classical algorithm is known. Notice that if we do not manage to find witness functions \(f\) and \(g\) to prove certain quantum speed-up this does not rule out its existence.

\begin{remark}
Definition~\ref{def:guarantee} can be weakened so that instead of \(f\) being a `deterministic' guarantee of \(\cA_{Q,\sigma}\), we only impose that \(f(n)\) satisfies~\eqref{eq:guarantee} with probability higher than \(\eta\).
In such a case, after \(T_c\) iterations we can achieve a success probability arbitrarily close to \(\eta\).
\end{remark}

\section{An example: Grover's algorithm}
\label{sec:Grover}

In Grover's search problem~\cite{Grover}, we are given an unsorted set of elements \(B\), about which we know no structure or heuristics, and a predicate \(\chi \colon B \to \{0,1\}\) that satisfies \(\chi(\star) = 1\) for only one element \(\star \in B\).
We are tasked with finding this marked element \(\star\).
The standard Grover's algorithm defines an iteration operator \(G\) using an oracle of \(\chi\) and applies it a fixed number of times \(K \approx \sqrt{\abs{B}}\) on an initial state \(\ket{\psi}\).
Afterwards, a measurement on the basis \(B\) is applied, finding the marked element with high probability.
Thus, the algorithm achieves a quadratic quantum speed-up: it requires \(\bigO{\sqrt{\abs{B}}}\) iterations whereas a classical algorithm would require \(\bigO{\abs{B}}\) iterations since there is no available heuristic to guide the search.
In this section we discuss a different approach to Grover's search problem where a \(\kappa\)-while loop is used instead.

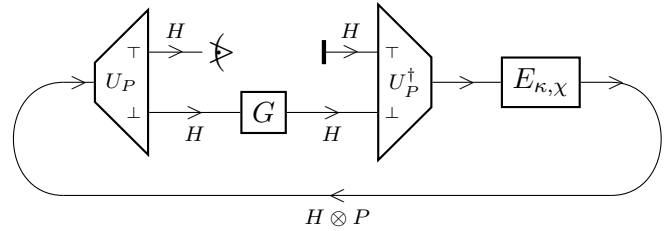
\begin{figure}
	\begin{tikzpicture}
\node (walk) {
  \begin{tikzpicture}
    \node[trapezium,draw=black,thick,trapezium angle=45,minimum width=15mm,shape border rotate=90] (BS) {\scriptsize \(U_P\)};
    \coordinate[below=4mm of BS.east] (BS-0); \node[left=-0.5mm of BS-0] {\tiny \(\bot\)};
    \coordinate[above=4mm of BS.east] (BS-1); \node[left=-0.5mm of BS-1] {\tiny \(\top\)};
    \node[trapezium,draw=black,thick,trapezium angle=45,minimum width=15mm,shape border rotate=270,right=30mm of BS] (BSd) {\scriptsize \(U_P^{\dagger}\)};
    \coordinate[below=4mm of BSd.west] (BSd-0); \node[right=-0.5mm of BSd-0] {\tiny \(\bot\)};
    \coordinate[above=4mm of BSd.west] (BSd-1); \node[right=-0.5mm of BSd-1] {\tiny \(\top\)};
    \node[right=7mm of BS-1] (meas) {\large \(\eye\)};
    \coordinate[left=7mm of BSd-1] (0);
    \coordinate[above=1.6mm of 0] (0up);
    \coordinate[below=1.6mm of 0] (0down);
    \node[rectangle,draw=black,thick,right=12mm of BS-0] (G) {\small \(G\)};
    \node[rectangle,draw=black,thick,right=9mm of BSd] (E) {\small \(E_{\kappa,\chi}\)};
    \coordinate[right=4mm of E] (rightE);
    \coordinate[below=15mm of rightE] (belowE);
    \coordinate[left=4mm of BS] (leftBS);
    \coordinate[below=15mm of leftBS] (belowBS);
    \draw (0up) edge[ultra thick] (0down);
    \draw (0) --node[midway,pin={[pin distance=-2mm, pin edge={opacity=0}] above : \scriptsize\(H\)}] {\scriptsize \(>\)} (BSd-1);
    \draw (BS-1) --node[midway,pin={[pin distance=-2mm, pin edge={opacity=0}] above : \scriptsize\(H\)}] {\scriptsize \(>\)} (meas);
    \draw (BS-0) --node[midway,pin={[pin distance=-2mm, pin edge={opacity=0}] below : \scriptsize\(H\)}] {\scriptsize \(>\)} (G.west);  
    \draw (G.east) --node[midway,pin={[pin distance=-2mm, pin edge={opacity=0}] below : \scriptsize\(H\)}] {\scriptsize \(>\)} (BSd-0); 
    \draw (BSd.east) --node[midway] {\scriptsize \(>\)} (E.west); 
    \draw (E.east) --node[right] {\scriptsize \(>\)} (rightE); 
    \draw (rightE) edge[out=0,in=0,looseness=1.5] (belowE);
    \draw (belowE) --node[midway,pin={[pin distance=-2mm, pin edge={opacity=0}] below : \scriptsize\(H \otimes P\)}] {\scriptsize \(<\)} (belowBS);
    \draw (belowBS) edge[out=180,in=180,looseness=1.5] (leftBS);
    \draw (leftBS) --node[midway] {\scriptsize \(>\)} (BS.west);
  \end{tikzpicture}
};
\node[left=5mm of walk] (code) {
  \begin{algorithm}
  input: $\chi$, $\ket{\psi}$
  output: $q$
  begin
    $q$ $\gets$ $\ket{\psi}$
    while $!_\kappa$ $\chi[q]$ do
      $q$ $\gets$ $G(q)$
    end
  end
  \end{algorithm}
};
\end{tikzpicture}
	\caption{\emph{Left:} pseudocode describing our algorithm. The while loop is controlled by a \(\kappa\)-measurement as described in Section~\ref{sec:main}. \emph{Right:} the algorithm's information flow; \(E_{\kappa,\chi}\) is given in Section~\ref{sec:weak_meas}, \(U_P\) applies the canonical isomorphism \(H \otimes P \cong H \oplus H\), separating with respect to the orthogonal basis \(\{\ket{\bot},\ket{\top}\}\) of \(P\).}
	\label{fig:code_walk}
\end{figure}

The algorithm's pseudocode is given in Figure~\ref{fig:code_walk}. When the \(\kappa\)-while loop halts, we are \emph{certain} that the state is \(\ket{\star}\).
We do not fix the number of times \(G\) is applied; instead, we fix the measurement strength \(\kappa\).
We will see that for \(\kappa \approx \abs{B}^{-1/2}\) the loop terminates within \(c\sqrt{\abs{B}}\) iterations with arbitrarily high probability, where \(c\) is a small constant.
Thus, the quadratic quantum speed-up of Grover's algorithm is preserved.

It is important to remark that, in the case of Grover's iterator \(G\), we can easily predict when the \(\cA\)-iterations occur (see Section~\ref{sec:motivation}).
Therefore, \(\kappa\)-while loops do not provide an algorithmic advantage on this problem.
Instead, we present Grover's search problem as a simple proof of concept showing how classically controlled while loops may be to monitor a quantum algorithm without destroying the quantum speed-up. 
This approach to Grover's problem was first proposed by Mizel~\cite{Mizel}.
This section reformulates Mizel's results in the broader framework of Section~\ref{sec:main}.

\subsection{Standard Grover's algorithm (\emph{Preamble})}
\label{sec:std_Grover}

We first give a brief description of the standard Grovers algorithm and introduce our notation.
The standard Grover's algorithm acts on a quantum state in \(H = \Span{B}\), starting from the uniform superposition state:
\begin{equation}
	\ket{\psi} = \tfrac{1}{\sqrt{\abs{B}}} \sum_{b \in B} \ket{b}.
\end{equation}
Let \(\ket{\psi_1} = \ket{\star}\) be the target state and 
\begin{equation*}
  \ket{\psi_0} = \tfrac{1}{\sqrt{\abs{B}-1}} \left( \ket{\psi} - \sqrt{\abs{B}} \cdot \ket{\psi_1} \right)
\end{equation*}
For every angle \(a \in [0,2\pi)\) define the state \(\ket{a}\) as follows:
\begin{equation} \label{eq:angle_state}
  \ket{a} = \cos{a}\,\ket{\psi_0} + \sin{a}\,\ket{\psi_1}.
\end{equation} 
Let \(\alpha = \arcsin{\braket{\star}{\psi}} = \arcsin{\abs{B}^{-1/2}}\); it is customary to assume that \(\abs{B}\) is large enough so that the approximation \(\alpha \approx \abs{B}^{-1/2}\) is reasonable. Notice that \(\ket{\alpha} = \ket{\psi}\).

For every state \(\ket{\varphi} \in H\) let:
\begin{equation*}
  S_\varphi = 2\ketbra{\varphi}{\varphi} - \id_H
\end{equation*}
Grover's iteration is given by the operator \(G = S_\psi \, S_{\psi_0}\) where \(S_{\psi_0}\) may be implemented using an oracle of \(\chi\).

On a state \(\ket{a}\), the action of Grover's iteration \(G\) is:
\begin{equation} \label{eq:action_G}
  G\ket{a} = S_\psi \, S_{\psi_0}\ket{a} = S_\psi\ket{-a} = \ket{a+2\alpha}
\end{equation}
and after \(k\) iterations:
\begin{equation*}
  G^k\ket{\psi} = \cos{(\alpha + 2k\alpha)}\ket{\psi_0} + \sin{(\alpha + 2k\alpha)}\ket{\psi_1}
\end{equation*}
The standard Grover's algorithm applies \(G\) a total of \(K = \floor{(\pi\sqrt{\abs{B}})/4}\) times on \(\ket{\psi}\) so that the amplitude of \(\ket{\psi_1}\) is maximised. Then, the state is measured on the basis \(B\), finding the marked element \(\star\) with high probability.

\subsection{While loop approach}
\label{sec:weak_Grover}

In this section we discuss the algorithm given in Figure~\ref{fig:code_walk}. To show that it retains the quadratic quantum speed-up, we provide an \(\cA\)-guarantee and prove that \(G\) is robust to \(\kappa\)-measurements of \(\chi\), as described in Section~\ref{sec:main}.
Our framework is general enough to deal with mixed states and CPTR maps but, in the case of Grover's algorithm, the body of the \(\kappa\)-while loop \(G\) is unitary and both the initial state \(\ket{\psi}\) and the target \(\ket{\star}\) are pure states.
Therefore, as discussed in Remark~\ref{rmk:known_probe_outcome}, all states involved in our analysis will be pure states.

\begin{lemma} \label{lem:G_guarantee}
	The function \(f \colon \Nset \to \Nset\) given by \(f(n) = 2n+K\) where \(K = \floor{\tfrac{\pi\sqrt{\abs{B}}}{4}}\) is a \(\cA_{\chi,\ket{\psi}}\)-guarantee.
\end{lemma}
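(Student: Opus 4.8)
The plan is to reduce the statement to an elementary counting problem about an arithmetic progression of angles. First I would compute \(p_\chi(G^k \ket{\psi})\) explicitly. By the action of Grover's iteration established in~\eqref{eq:action_G}, \(G^k \ket{\psi} = \ket{(2k+1)\alpha} = \cos((2k+1)\alpha)\ket{\psi_0} + \sin((2k+1)\alpha)\ket{\psi_1}\); since \(\ket{\psi_1} = \ket{\star}\) is orthogonal to \(\ket{\psi_0}\), specialising~\eqref{eq:pQ} to the single marked element gives \(p_\chi(G^k\ket{\psi}) = \abs{\braket{\star}{(2k+1)\alpha}}^2 = \sin^2((2k+1)\alpha)\). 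Hence \(k\) is an \(\cA\)-iteration precisely when \(\sin^2((2k+1)\alpha) > \tfrac{1}{2}\), equivalently when the angle \(\theta_k = (2k+1)\alpha\) satisfies \(\theta_k \bmod \pi \in (\pi/4, 3\pi/4)\).

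This turns the claim into the following: as \(k\) ranges over \(\{0,1,\dots\}\) the angles \(\theta_k\) form an arithmetic progression of step \(2\alpha\), and I must bound below how many of the first \(f(n)+1\) terms land in the active arcs, which occupy half of each period \(\pi\). Geometrically the active iterations form consecutive bands of integers, each of width roughly \(\tfrac{\pi/2}{2\alpha} = \tfrac{\pi}{4\alpha}\), separated by inactive bands of the same width. The constant \(K\) is tuned exactly so that one shift of \(K\) in the index corresponds to a shift of approximately \(\pi/2\) in the angle: I would show that \(2K\alpha\) differs from \(\pi/2\) by only \(\bigO{\abs{B}^{-1/2}}\), using \(K = \floor{\pi\sqrt{\abs{B}}/4}\) together with \(\arcsin(\abs{B}^{-1/2}) = \abs{B}^{-1/2} + \bigO{\abs{B}^{-3/2}}\).

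With this in hand I would run the counting in two regimes. The clean idea is the complementary-angle identity \(\sin^2(\theta + \pi/2) = 1 - \sin^2\theta\): because \(\theta_{k+K} = \theta_k + 2K\alpha \approx \theta_k + \pi/2\), at most one of \(\theta_k\) and \(\theta_{k+K}\) can have \(\sin^2 \le \tfrac{1}{2}\) up to the small error \(\delta = \pi/2 - 2K\alpha\), so each index pair \(\{k, k+K\}\) contributes (essentially) at least one active iteration. Packaging disjoint such pairs inside \([0, 2n+K]\) handles small \(n\), while for large \(n\) I would fall back on the density estimate \(\#\{k \le 2n+K : \cA_{\chi,\ket{\psi}}(k)\} \approx n + K/4\), which leaves a comfortable margin of about \(K/4\) above the required \(n\). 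Isotonicity of \(f(n) = 2n+K\) is immediate.

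The hard part will be making the two approximations rigorous and uniform in \(n\). The pairing and density pictures are exact only when \(2K\alpha\) equals \(\pi/2\) exactly; the genuine value differs by \(\delta = \bigO{\abs{B}^{-1/2}}\), and there are additional off-by-one effects from counting integers inside bands whose endpoints are irrational multiples of \(\pi\). I expect these to be absorbed by the generous \(\approx K/4\) slack built into the choice \(f(n) = 2n+K\), so the real care is in two places: (i) verifying that the first active band already contains an integer \(\le f(1) = K+2\), which settles the small-\(n\) boundary case where no margin has yet accumulated, and (ii) bounding the accumulated discretisation error across many bands for large \(n\) so that it never erodes more than \(K/4\) active iterations. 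As is customary for Grover-type analyses, this will require assuming \(\abs{B}\) is large enough that \(\alpha \approx \abs{B}^{-1/2}\) and that each band comfortably contains several integers.
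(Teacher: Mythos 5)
Your reduction is exactly the paper's: you compute \(p_\chi(G^k\ket{\psi}) = \sin^2((2k+1)\alpha)\), observe that \(k\) is an \(\cA\)-iteration precisely when the angle lies (mod \(\pi\)) in \((\pi/4,3\pi/4)\), and use that each iteration advances the angle by the constant \(2\alpha\), with \(K\) being (essentially) the number of iterations needed to sweep an arc of width \(\pi/2\). Where you diverge is the counting step. The paper uses a single worst-case run-length argument: since inactive arcs have width exactly \(\pi/2\), no more than (about) \(K\) consecutive iterations can be inactive, so among any \(m\) consecutive iterations at least \((m-K)/2\) are active; taking \(m = 2n+K\) settles every \(n\) at once, with no case split. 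Your pairing device via \(\sin^2(\theta+\pi/2)=1-\sin^2\theta\) is a legitimate alternative, and it can in fact be organised to cover \emph{all} \(n\), making your small-\(n\)/large-\(n\) split unnecessary: writing \(j = qK+i\) with \(0 \le i < K\), the pairs \(\{2qK+i,\ 2qK+i+K\}\) for \(j = 0,\dots,n-1\) are pairwise disjoint (even blocks paired with odd blocks) and their largest element is at most \(2n+K-2\), so \(n\) disjoint pairs already fit inside \([0,f(n)]\). What the paper's version buys is brevity and uniformity; what yours buys is an explicit display of the complementary-angle structure that makes \(K\) the right constant.

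The one claim in your plan that is wrong is the ``comfortable margin of about \(K/4\)'': there is no such slack. In the worst case --- a window opening at the start of an inactive run --- the number of active iterations among the first \(2n+K\) is exactly \(n\) (inactive and active runs of length \(K\) alternating), which is why the paper's bound \((m-K)/2\) is tight. Consequently your strategy of absorbing the boundary and discretisation errors (the defect \(\delta = \pi/2 - 2K\alpha\), the floor in \(K\), the gap between \(\alpha\) and \(\sin\alpha\)) into that margin cannot be carried out; with the pairing argument these ``bad pairs'' accumulate at a rate of \(\bigO{n/K}\) and nothing compensates for them. Note, however, that the paper's own proof does not treat these effects either: it works at the idealised level where \(K\) is taken to be exactly the quarter-period and exactly the longest inactive run. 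So at the level of rigour the paper itself adopts, your argument goes through; it is only your announced route to full rigour that rests on a margin that does not exist.
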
 \begin{proof}
	By definition~\eqref{eq:angle_state} of \(\ket{a}\), 
	\begin{equation*}
	  p_\chi(\ket{a}) = \abs{\braket{\star}{a}}^2 = \sin^2 a.
	\end{equation*} 
	Whenever \(a \in (\tfrac{\pi}{4},\tfrac{3\pi}{4})\) or \(a \in (\tfrac{5\pi}{4},\tfrac{7\pi}{4})\), we have \(p_\chi(\ket{a}) > \tfrac{1}{2}\).
	Moreover, \(G\ket{a} = \ket{a+2\alpha}\) according to~\eqref{eq:action_G}, so the angle is increased a constant amount on each iteration.
	Thus, within \(m \in \Nset\) iterations we know that the number of \(\cA\)-iterations is at least \(\tfrac{m - K}{2}\) where \(K\) (defined in the claim) is the number of applications of \(G\) it takes to traverse a \(\tfrac{\pi}{2}\) angle, which is the longest interval of consecutive iterations not satisfying \(\cA_{\chi,\ket{\psi}}\).
	Assuming the worst case scenario we find that within \(m = 2n + K\) iterations it is guaranteed that \(n\) of them are \(\cA_{\chi,\ket{\psi}}\)-iterations, \ie{} for all \(n \in \Nset\)
	\begin{equation*}
		n \ \leq \ \abs{\{ k \in \Nset \mid k \leq f(n),\, \cA_{\chi,\ket{\psi}}(k) \}}
	\end{equation*}
	thus, proving the claim.
\end{proof}

Consider applying a \(\kappa\)-measurement at iteration \(n\): if the outcome is \(\top\), the state becomes \(\ket{\star}\), otherwise it suffers a collapse towards \(\ket{\psi_0}\) determined by the projector \(W_\bot\) from Section~\ref{sec:weak_meas}:
\begin{equation} \label{eq:collapse}
\begin{aligned}
  W_\bot \ket{a,\bot} &=  \cos{a}\,W_\bot\ket{\psi_0,\bot} + \sin{a}\,W_\bot\ket{\psi_1,\bot} \\
   &= \cos{a}\,\ket{\psi_0,\bot} + \sin{a}\,(\id_H \otimes \ketbra{\bot}{\bot})\,E_{\kappa,\chi}\ket{\psi_1,\bot} \\
   &= \cos{a}\,\ket{\psi_0,\bot} + \sin{a}\,\sqrt{1 - \kappa}\,\ket{\psi_1,\bot}
\end{aligned}
\end{equation}
Thus, the output is still in a superposition between \(\ket{\psi_0}\) and \(\ket{\psi_1}\).
Notice that \(W_\bot\) returns an unnormalised state; the actual CPTP map \(\cW\) from~\eqref{eq:action_W} takes care of the normalization.
Then, we may describe the action of \(\cW\) as \(\ket{a} \xmapsto{\cW} \ket{a - \theta(a,\kappa)}\) on every angle \(a \in [0,2\pi)\) where the value of \(\theta(a,\kappa)\) may be calculated from~\eqref{eq:collapse} following the geometric construction of Figure~\ref{fig:theta_triangle}.
We now find an upper bound of \(\theta(a,\kappa)\); this will be used to prove robustness of \(G\).
To reduce clutter, we use the shorthand:
\begin{equation} \label{eq:xi}
  \xi = \sqrt{1 - \kappa}
\end{equation}

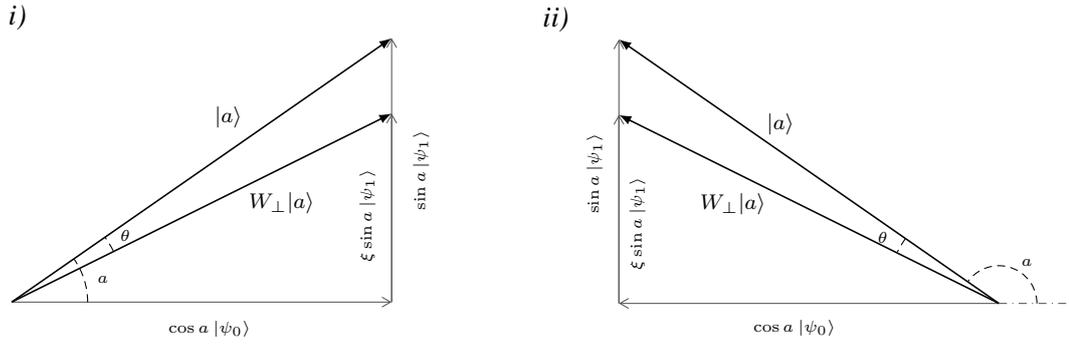
\begin{figure}
	\centering
	\begin{tikzpicture}
  \node (fstQuad) {
    \begin{tikzpicture}
    \draw[-angle 60,black!65] (0,0) -- (10,0) node[midway,black,xshift=3pt,yshift=-10pt] {\tiny \(\cos{a} \ \ket{\psi_0}\)};
    \draw[-angle 60,black!65] (10,0) -- (10,7) node[midway,black,rotate=90,yshift=-11pt] {\tiny \(\sin{a} \ \ket{\psi_1}\)};
    \draw[-angle 60,black!65] (10,0) -- (10,5) node[midway,black,rotate=90,xshift=-3pt,yshift=9pt] {\tiny \(\xi \sin{a} \ \ket{\psi_1}\)};
    \draw[-latex,semithick] (0,0) -- (10,7) node[midway,yshift=20pt,xshift=10pt] {\scriptsize \(\ket{a}\)};
    \draw[-latex,semithick] (0,0) -- (10,5) node[midway,yshift=2pt,xshift=30pt] {\scriptsize \(W_\bot\ket{a}\)};
    \draw[densely dashed] (2,0) arc (0:35:2) node[midway,xshift=7pt] {\tiny \(a\)};
    \draw[densely dashed] (27:3) arc (27:35:3) node[midway,xshift=6pt,yshift=3pt] {\tiny \(\theta\)};
    \end{tikzpicture}
  };
  \node[right=15mm of fstQuad] (sndQuad) {
    \begin{tikzpicture}
    \draw[-angle 60,black!65] (0,0) -- (-10,0) node[midway,black,xshift=-25pt,yshift=-9pt] {\tiny \(\cos{a} \ \ket{\psi_0}\)};
    \draw[-angle 60,black!65] (-10,0) -- (-10,7) node[midway,black,rotate=90,xshift=-20pt,yshift=9pt] {\tiny \(\sin{a} \ \ket{\psi_1}\)};
    \draw[-angle 60,black!65] (-10,0) -- (-10,5) node[midway,black,rotate=90,xshift=-27pt,yshift=-7pt] {\tiny \(\xi \sin{a} \ \ket{\psi_1}\)};
    \draw[-latex,semithick] (0,0) -- (-10,7) node[midway,yshift=18pt,xshift=-20pt] {\scriptsize \(\ket{a}\)};
    \draw[-latex,semithick] (0,0) -- (-10,5) node[midway,yshift=2pt,xshift=-45pt] {\scriptsize \(W_\bot\ket{a}\)};
    \draw[dashdotted,black!65] (0,0) -- (2,0);
    \draw[densely dashed] (1,0) arc (0:145:1) node[midway,yshift=2pt] {\tiny \(a\)};
    \draw[densely dashed] (145:3) arc (145:153:3) node[midway,xshift=-13pt,yshift=3pt] {\tiny \(\theta\)};
    \end{tikzpicture}
  };
  \node[above left=-2mm and -5mm of fstQuad] (a) {\small \emph{i)}};
  \node[above left=-2mm and -2mm of sndQuad] (b) {\small \emph{ii)}};
\end{tikzpicture}
	\caption{Geometric relation between the angle \(a\) before weak measurement and the offset \(\theta(a,\kappa)\) after measuring outcome \(\bot\). The construction is provided when \emph{i)} \(a \in [0,\tfrac{\pi}{2})\)and when \emph{ii)} \(a \in [\tfrac{\pi}{2},\pi)\).}
	\label{fig:theta_triangle}
\end{figure}

\begin{lemma} \label{lem:theta_bound}
For any \(\kappa\) and angle \(a\)
\begin{equation*}
  \abs{\theta(a,\kappa)} \leq \arcsin \left( \frac{1-\xi}{1+\xi} \right).
\end{equation*}
\end{lemma}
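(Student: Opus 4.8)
The plan is to reduce the claim to an explicit trigonometric optimisation problem derived directly from the geometric construction in Figure~\ref{fig:theta_triangle}. First I would extract from equation~\eqref{eq:collapse} the exact expression for $\theta(a,\kappa)$. The measurement sends the vector with components $(\cos a, \sin a)$ in the $\{\ket{\psi_0},\ket{\psi_1}\}$ plane to the unnormalised vector with components $(\cos a, \xi\sin a)$, where $\xi=\sqrt{1-\kappa}$ as in~\eqref{eq:xi}. The new angle $a-\theta$ satisfies $\tan(a-\theta)=\xi\tan a$, so that
\begin{equation*}
  \theta(a,\kappa) = a - \arctan(\xi \tan a).
\end{equation*}
Thus the whole problem becomes: maximise $|a-\arctan(\xi\tan a)|$ over $a$, for fixed $\xi\in[0,1]$.

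Next I would reduce the range of $a$ that needs to be considered. By the symmetry of the picture (reflecting $a\mapsto \pi-a$ and $a\mapsto -a$, which only changes signs of the components) it suffices to analyse $a\in[0,\tfrac{\pi}{2})$ and argue the extremal value of $|\theta|$ there bounds all quadrants; the two cases drawn in Figure~\ref{fig:theta_triangle} handle exactly this. On $[0,\tfrac\pi2)$ the collapse always decreases the $\ket{\psi_1}$-component, so $\theta\ge 0$, and I would find its maximum by differentiating. Writing $h(a)=a-\arctan(\xi\tan a)$, one computes
\begin{equation*}
  h'(a) = 1 - \frac{\xi\sec^2 a}{1+\xi^2\tan^2 a}.
\end{equation*}
Setting $h'(a)=0$ and clearing denominators gives, after substituting $\sec^2 a = 1+\tan^2 a$, the condition $\tan^2 a = \xi$, i.e. the maximiser is $a^\ast=\arctan\sqrt{\xi}$. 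Substituting back, the optimal offset is $h(a^\ast)=\arctan\sqrt\xi - \arctan(\xi/\sqrt\xi)=\arctan\sqrt\xi-\arctan\sqrt\xi$... so I must be careful: the cleaner route is to evaluate $\sin\theta$ at the critical point rather than $\theta$ itself.

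The key computation, and the step I expect to be the main obstacle, is showing that the maximal value of $\sin\theta$ equals $\tfrac{1-\xi}{1+\xi}$, matching the stated bound. At $a^\ast$ with $\tan a^\ast=\sqrt\xi$ one has $\sin a^\ast = \sqrt{\xi}/\sqrt{1+\xi}$ and $\cos a^\ast = 1/\sqrt{1+\xi}$, while the collapsed vector has normalised components proportional to $(\cos a^\ast,\ \xi\sin a^\ast)=(1,\ \xi\sqrt\xi)/\sqrt{1+\xi}$. Using the angle-subtraction formula $\sin\theta=\sin a^\ast\cos(a^\ast-\theta)-\cos a^\ast\sin(a^\ast-\theta)$ (equivalently, computing the cross product of the two planar vectors and dividing by their norms), a direct simplification should collapse to $\tfrac{1-\xi}{1+\xi}$; I would verify this reduction carefully since it is where the algebra is densest. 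Finally, since $\arcsin$ is increasing and $|\theta|$ attains its supremum at the interior critical point $a^\ast$ (the endpoints $a=0$ and $a\to\tfrac\pi2$ give $\theta=0$), monotonicity of $\arcsin$ converts the bound on $\sin\theta$ into the desired bound $|\theta(a,\kappa)|\le \arcsin\!\big(\tfrac{1-\xi}{1+\xi}\big)$, completing the proof.
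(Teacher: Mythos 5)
Your overall strategy is the same as the paper's: derive $\tan(a-\theta)=\xi\tan a$ from the geometry, restrict to the first quadrant by symmetry, locate the interior critical point of $\theta$ as a function of $a$, and bound $\sin\theta$ there. However, there is a concrete algebra error at the critical-point step that breaks everything downstream. Setting your $h'(a) = 1 - \tfrac{\xi\sec^2 a}{1+\xi^2\tan^2 a}$ to zero and substituting $\sec^2 a = 1+\tan^2 a$ gives $\xi + \xi\tan^2 a = 1 + \xi^2\tan^2 a$, i.e.\ $\xi(1-\xi)\tan^2 a = 1-\xi$, so the maximiser satisfies $\tan^2 a^* = 1/\xi$, \emph{not} $\tan^2 a^* = \xi$ as you claim. (This agrees with the paper's critical point $a^* = \arccos\sqrt{\xi/(\xi+1)}$, which indeed has $\tan^2 a^* = 1/\xi$.) The puzzling ``cancellation to zero'' you ran into is a symptom of this slip rather than a reason to switch from $\theta$ to $\sin\theta$: you wrote $\arctan(\xi\tan a^*)=\arctan(\xi/\sqrt{\xi})$, which implicitly uses $\tan a^* = 1/\sqrt{\xi}$, inconsistently with your stated $a^*=\arctan\sqrt{\xi}$ (for which $\xi\tan a^* = \xi^{3/2}$).

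The error is fatal to your final computation as written. Your cross-product formula gives $\sin\theta(a,\kappa) = \frac{(1-\xi)\sin a\cos a}{\sqrt{\cos^2 a + \xi^2\sin^2 a}}$. At your point ($\tan^2 a = \xi$) this evaluates to $\frac{(1-\xi)\sqrt{\xi}}{\sqrt{(1+\xi)(1+\xi^3)}}$, which is strictly \emph{smaller} than $\tfrac{1-\xi}{1+\xi}$ for $\xi\in(0,1)$, since $1+\xi^3-\xi-\xi^2 = (1-\xi)^2(1+\xi)>0$; so the algebra cannot ``collapse'' to the stated bound, and since your point is not the maximiser, a value of $\sin\theta$ there bounds nothing. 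At the correct point $\tan^2 a^* = 1/\xi$ one has $\sin^2 a^* = \tfrac{1}{1+\xi}$ and $\cos^2 a^* = \tfrac{\xi}{1+\xi}$, hence $\cos^2 a^* + \xi^2\sin^2 a^* = \xi$ and $\sin\theta(a^*,\kappa) = \frac{(1-\xi)\sqrt{\xi}/(1+\xi)}{\sqrt{\xi}} = \tfrac{1-\xi}{1+\xi}$ exactly. With this one correction, your endpoint observation ($\theta=0$ at $a=0$ and $a\to\tfrac{\pi}{2}$) plus monotonicity of $\arcsin$ completes the proof, and the argument then coincides with the paper's, which evaluates $\theta(a^*) = \arctan\left(\tfrac{1-\xi}{2\sqrt{\xi}}\right)$ and converts it to the sine bound via $\sin(\arctan x) = x/\sqrt{x^2+1}$.
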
 \begin{proof}
	Suppose \(a \in \left[ 0,\tfrac{\pi}{2} \right)\) and use the properties of sines on the triangle from Figure~\ref{fig:theta_triangle}i to obtain:
	\begin{equation*}
	  \frac{\sin (a - \theta)}{\xi \sin a} = \frac{\sin (\pi/2 - a + \theta)}{\cos a}
	\end{equation*}
	This simplifies as follows:
	\begin{equation*}
	 \begin{aligned}
	  && \frac{\sin (a - \theta)}{\xi \sin a} &= \frac{\cos (a - \theta)}{\cos a} \\
	  \iff&& \tan (a - \theta) &= \xi \tan a \\
	  \iff&& \frac{\tan a - \tan \theta}{1 + \tan a \tan \theta} &= \xi \tan a \\
	 \end{aligned}
	\end{equation*}
	Solving for \(\theta\) gives:
	\begin{equation} \label{eq:theta}
	  \theta(a,\kappa) = \arctan \left( \frac{(1 - \xi) \tan a}{1 + \xi \tan^2 a} \right)
	\end{equation}

	If \(a \in \left[\tfrac{\pi}{2},\pi\right)\) instead, a similar analysis yields:
	\begin{equation}\label{eq:thetaNeg}
	  \theta(a,\kappa) = -\arctan \left( \frac{(1 - \xi) \tan a}{1 + \xi \tan^2 a} \right)
	\end{equation}
	which only differs from equation~\eqref{eq:theta} in the sign. If \(a\) is in the third quadrant, then we obtain equation~\eqref{eq:theta} and, if \(a\) is in the fourth quadrant, we get~\eqref{eq:thetaNeg}.
	These sign changes are convenient: the geometric analysis in Figure~\ref{fig:theta_triangle} shows that in the first (and third) quadrant, the resulting angle is \(a - \abs{\theta}\), whereas in the second (and fourth) quadrant it is \(a + \abs{\theta}\). Thus the resulting angle is \(a - \theta\) regardless of which quadrant \(a\) lies in.

	Next, we determine the maximum value of \(\theta(a,\kappa)\). To do so, fix \(\kappa\) and find the critical points of \(\theta\) as a function on \(a\). These happen where the derivative
	\begin{equation*}
	  \frac{\mathrm{d}\theta}{\mathrm{d}a} = 1 - \frac{\xi}{\cos^2 a + \xi^2\sin^2 a}
	\end{equation*}
	vanishes. Critical points occur periodically (once in every quadrant), but all of them reach the same absolute value. The critical point within the first quadrant happens at:
	\begin{equation*}
	  a = \arccos \left( \sqrt{\frac{\xi}{\xi + 1}} \right)
	\end{equation*}
	When applied to equation~\eqref{eq:theta} this identity gives a tight upper bound:
	\begin{equation*}
	  \abs{\theta(a,\kappa)} \leq \arctan \left( \frac{1-\xi}{2\sqrt{\xi}} \right)
	\end{equation*}
	Using the equality \(\sin(\arctan(x)) = \frac{x}{\sqrt{x^2 + 1}}\), the fact that \(\xi \in [0,1]\) and some basic algebra we reach:
	\begin{equation*}
	  \sin \abs{\theta(a,\kappa)} \leq \frac{1-\xi}{1+\xi}
	\end{equation*}
	thus, proving the claim.
\end{proof}

\begin{corollary} \label{cor:theta_leq_kappa}
For every angle \(a \in [0,2\pi)\) the bound 
\begin{equation*}
	\abs{\theta(a,\kappa)} \leq \arcsin \kappa
\end{equation*}
holds.
\end{corollary}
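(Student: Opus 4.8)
The plan is to derive Corollary~\ref{cor:theta_leq_kappa} directly from Lemma~\ref{lem:theta_bound}, which already establishes the tight bound
\begin{equation*}
  \abs{\theta(a,\kappa)} \leq \arcsin \left( \frac{1-\xi}{1+\xi} \right)
\end{equation*}
where \(\xi = \sqrt{1-\kappa}\). Since \(\arcsin\) is monotone increasing on \([0,1]\), it suffices to show the purely algebraic inequality
\begin{equation*}
  \frac{1-\xi}{1+\xi} \leq \kappa
\end{equation*}
for all \(\kappa \in [0,1]\), and then apply \(\arcsin\) to both sides to obtain \(\arcsin\!\left(\frac{1-\xi}{1+\xi}\right) \leq \arcsin \kappa\), which chains with Lemma~\ref{lem:theta_bound} to give the claim.

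First I would substitute \(\xi = \sqrt{1-\kappa}\) and clear the positive denominator \(1+\xi\), reducing the target inequality to \(1 - \xi \leq \kappa(1+\xi)\), i.e.\ \(1 - \xi \leq \kappa + \kappa\xi\). Rewriting \(\kappa = 1 - \xi^2 = (1-\xi)(1+\xi)\), the right-hand side becomes \((1-\xi)(1+\xi) + (1-\xi)(1+\xi)\xi = (1-\xi)(1+\xi)(1+\xi)\). Thus the inequality to verify is \(1 - \xi \leq (1-\xi)(1+\xi)^2\). Since \(\xi \in [0,1]\), the factor \(1-\xi\) is nonnegative, so if \(\xi < 1\) we may divide through and reduce to \(1 \leq (1+\xi)^2\), which holds trivially because \(1+\xi \geq 1\); and if \(\xi = 1\) (the case \(\kappa = 0\)) both sides vanish and the bound holds with equality at \(\theta = 0\). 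This establishes the algebraic inequality on the whole range.

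I would then combine the pieces: monotonicity of \(\arcsin\) applied to \(\frac{1-\xi}{1+\xi} \leq \kappa\) yields \(\arcsin\!\left(\frac{1-\xi}{1+\xi}\right) \leq \arcsin \kappa\), and transitivity with Lemma~\ref{lem:theta_bound} gives \(\abs{\theta(a,\kappa)} \leq \arcsin \kappa\) for every angle \(a\), as required. The argument is essentially a one-line consequence of the previous lemma together with an elementary algebraic manipulation, so I do not anticipate any genuine obstacle. The only point demanding minor care is the boundary behaviour at \(\kappa = 0\) (equivalently \(\xi = 1\)), where the division step is invalid; handling that case separately, as above, keeps the argument clean. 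It is worth noting that this corollary trades the tight but unwieldy bound of Lemma~\ref{lem:theta_bound} for the looser but far more convenient \(\arcsin \kappa\), which is the form actually needed when establishing robustness of \(G\) in the subsequent analysis.
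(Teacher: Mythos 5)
Your proposal is correct and follows exactly the paper's own route: the paper also deduces the corollary from Lemma~\ref{lem:theta_bound} by establishing \(0 \leq \frac{1-\xi}{1+\xi} \leq \kappa\) (which it dismisses as ``simple algebra'') and then invoking monotonicity of \(\arcsin\). Your explicit factorisation \(\kappa = (1-\xi)(1+\xi)\) and the separate treatment of the boundary case \(\xi = 1\) merely fill in the details the paper omits.
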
\begin{proof}
	By definition, \(\kappa \in [0,1]\). Using this fact along with~\eqref{eq:xi} and simple algebra we can prove that:
	\begin{equation*} 
	  0 \leq \frac{1-\xi}{1+\xi} \leq \kappa.
	\end{equation*}
	Then, the corollary follows from Lemma~\ref{lem:theta_bound}.
\end{proof}

According to these results, if \(\kappa \approx 1\) the collapse \(\theta(a,\kappa)\) can be up to \(\tfrac{\pi}{2}\), thus always sending the state back to \(\ket{\psi_0}\), preventing a gradual evolution of the angle. In such a case, the algorithm would lose its quantum speed-up. 
Interestingly, Corollary~\ref{cor:theta_leq_kappa} shows we can keep \(\theta(a,\kappa)\) small by bounding \(\kappa\), so that the action of \(G\) overcomes the effect of the collapse.

\begin{lemma} \label{lem:G_robustness}
	The evolution of the unitary map \(G\) on the uniform superposition \(\ket{\psi}\) is \(\varepsilon\)-robust to \(\kappa\)-measurements of \(\chi\) for
	\begin{equation*}
	  \kappa \leq \tfrac{1}{\sqrt{\abs{B}}} \quad\quad \text{ and } \quad\quad \varepsilon = \sin 3\alpha
	\end{equation*}
	witnessed by \(g(n) = 2n\).
\end{lemma}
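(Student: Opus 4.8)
The plan is to exploit the fact that, on the uniform superposition, both the unweakened and the weakly measured evolutions keep the state inside the two-dimensional plane $\Span\{\ket{\psi_0},\ket{\psi_1}\}$, so every intermediate state has the form $\ket{a}$ for some angle $a$ (see~\eqref{eq:angle_state}) with success probability $p_\chi(\ket{a}) = \sin^2 a$. By~\eqref{eq:action_G} the unweakened evolution satisfies $G^n\ket{\psi} = \ket{(2n+1)\alpha}$, so I set the target angle $\tau_n = (2n+1)\alpha$, giving $p_\chi(G^n\ket{\psi}) = \sin^2\tau_n$. For the weakly measured evolution I track the angle $a_m$ of $(\cW G)^m\ket{\psi}$, which by~\eqref{eq:collapse} and the discussion preceding Lemma~\ref{lem:theta_bound} obeys $a_0 = \alpha$ together with the recurrence $a_{m+1} = a_m + 2\alpha - \theta(a_m + 2\alpha,\kappa)$, and has $p_\chi((\cW G)^m\ket{\psi}) = \sin^2 a_m$.

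The first key step is to bound the per-step increment. Since $\kappa \leq \abs{B}^{-1/2}$ and $\alpha = \arcsin(\abs{B}^{-1/2})$, Corollary~\ref{cor:theta_leq_kappa} gives $\abs{\theta(a,\kappa)} \leq \arcsin\kappa \leq \alpha$ for every angle $a$. Hence each weak step advances the angle by $2\alpha - \theta \in [\alpha, 3\alpha]$; in particular the sequence $(a_m)$ is strictly increasing with $a_{m+1} - a_m \leq 3\alpha$.

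With these bounds in hand, the core of the argument is a discrete intermediate-value step that produces, for each $n$, an index $m \leq 2n = g(n)$ with $a_m$ near $\tau_n$. Because $a_0 = \alpha \leq \tau_n$ and $a_{2n} \geq \alpha + 2n\alpha = \tau_n$, there is a least index $m \leq 2n$ with $a_m \geq \tau_n$ (the case $n = 0$ being handled trivially by $m = 0$). The step bound then forces $a_m \leq a_{m-1} + 3\alpha < \tau_n + 3\alpha$, so $0 \leq a_m - \tau_n < 3\alpha$. It remains to convert this closeness of angles into closeness of probabilities via the identity $\sin^2 x - \sin^2 y = \sin(x+y)\sin(x-y)$: since $\abs{\sin(a_m + \tau_n)} \leq 1$ and $\sin$ is increasing on $[0,\tfrac{\pi}{2}]$, I obtain $\abs{\sin^2 a_m - \sin^2\tau_n} \leq \sin(a_m - \tau_n) \leq \sin 3\alpha = \varepsilon$, which is exactly the estimate required by Definition~\ref{def:robustness}. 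This conversion sidesteps any concern about the angles exceeding $\tfrac{\pi}{2}$ or wrapping modulo $\pi$, because the product identity holds for all reals and only the difference $a_m - \tau_n$ needs to be controlled.

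The one remaining loose end, and the only place a finite-size assumption enters, is checking that $\varepsilon = \sin 3\alpha < \tfrac{1}{2}$ and $3\alpha \leq \tfrac{\pi}{2}$, so that the monotonicity of $\sin$ applies; both hold once $\abs{B}$ is large enough for $\alpha \approx \abs{B}^{-1/2}$ to be small, consistent with the standing assumption of Section~\ref{sec:std_Grover}. I do not expect a serious obstacle here: the genuine content lies entirely in the increment bound supplied by Corollary~\ref{cor:theta_leq_kappa} and in the elementary monotone intermediate-value argument, with the only mild care needed being the trigonometric passage from angular error to probability error.
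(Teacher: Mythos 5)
Your proposal is correct and takes essentially the same route as the paper's proof: both track the angle sequences of the ideal and weakly measured evolutions, use Corollary~\ref{cor:theta_leq_kappa} (with $\kappa \le \sin\alpha$) to pin each weak step's increment inside $[\alpha,3\alpha]$, locate an index $m \le 2n$ by a discrete intermediate-value argument, and convert the resulting angular gap of at most $3\alpha$ into a probability gap of at most $\sin 3\alpha$. The only difference is cosmetic: your use of $\sin^2 x - \sin^2 y = \sin(x+y)\sin(x-y)$ with $\abs{\sin(x+y)}\le 1$ handles all quadrants uniformly, whereas the paper restricts to $a_n \in [0,\tfrac{\pi}{2})$ and appeals to a ``similar argument'' for the remaining quadrants.
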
 
\begin{proof}
	Let \(\{a_n\}\) be the following sequence of angles:
	\begin{equation*}
		\begin{aligned}
		  a_{n+1} &= a_n + 2\alpha \\
		  a_0     &= \alpha
		\end{aligned}
	\end{equation*}
	Recall from Section~\ref{sec:std_Grover} that \(G^n \ket{\psi} = \ket{a_n}\).
	Similarly, we define a sequence \(\{b_n\}\) satisfying \((\cW G)^n\ket{\psi} = \ket{b_n}\):
	\begin{equation} \label{eq:recurrence}
		\begin{aligned}
		  b_{n+1} &= b_n - \theta(b_n,\kappa) + 2\alpha \\
		  b_0     &= \alpha.
		\end{aligned}
	\end{equation}
	Remember that \(\sin \alpha = \abs{B}^{-1/2}\), so the bound imposed on \(\kappa\) in the claim can be rephrased as \(\kappa \leq \sin \alpha\). 
	Thanks to Corollary~\ref{cor:theta_leq_kappa}, this implies \(\abs{\theta(b_n,\kappa)} \leq \alpha\) on every iteration.
	Then \(\alpha \leq b_{n+1} - b_n \leq 3\alpha\) for every \(n \in \Nset\), whereas \(a_{n+1} - a_n = 2\alpha\).
	Therefore, for some \(m \leq 2n\) we have that \(b_m \leq a_n \leq b_{m+1}\) and \(a_n - b_m \leq 3\alpha\).
	Since \(p_\chi(G^n\ket{\psi}) = p_\chi(\ket{a_n}) = \sin^2 a_n\) and \(p_\chi((\cW G)^m\ket{\psi}) = p_\chi(\ket{b_m}) = \sin^2 b_m\), it follows that:
	\begin{equation*}
		\abs{p_\chi(G^n\ket{\psi}) - p_\chi((\cW G)^m\ket{\psi})} \ = \ \abs{\sin^2 a_n - \sin^2 b_m}.
	\end{equation*}
	If we now restrict to \(a_n \in \left[ 0, \tfrac{\pi}{2} \right)\) and assume that \(\alpha\) is small enough so that \(a_n - 3\alpha \geq 0\) we find that \(\sin^2 a_n - \sin^2 b_m \leq \sin^2 a_n - \sin^2\, (a_n - 3\alpha)\); then, using the identity \(\sin^2 x - \sin^2(x-y) = \sin(y) \sin(2 x - y)\) (which holds for all \(x,y \in \Rset\)) we conclude that:
	\begin{equation*}
		\sin^2 a_n - \sin^2 b_m \leq \sin 3\alpha - \sin\, (2a_n - 3\alpha) \leq \sin 3\alpha.
	\end{equation*}
	A similar argument follows if \(a_n \in \left[ \tfrac{\pi}{2}, \pi \right)\) or if \(a_n\) is in the third or fourth quadrant so that, for every \(n \in \Nset\):
	\begin{equation*}
	  \exists m \leq g(n) \colon \quad \abs{p_\chi(G^n\ket{\psi}) - p_\chi((\cW G)^m\ket{\psi})} \ \leq \ \sin 3\alpha
	\end{equation*}
	thus, proving the claim.
\end{proof}

From the general result of Lemma~\ref{lem:comp_guarantee} and the discussion following it along with the assumption that \(\alpha\) is small --- so that \(\sin 3\alpha < 1/2\) --- we conclude that, with arbitrarily high probability, the algorithm in Figure~\ref{fig:code_walk} will succeed in finding the marked element \(\ket{\star}\) within
\begin{equation*}
  T_c = gf\left(\frac{2c}{\kappa(1-2\varepsilon)}\right)
\end{equation*}
iterations for small \(c \in \Nset\).
Combining this with Lemmas~\ref{lem:G_guarantee} and~\ref{lem:G_robustness} we find that, for \(\kappa \leq \abs{B}^{-1/2}\), the number of iterations is within \(\bigO{\sqrt{\abs{B}}}\). 

\begin{remark} \label{rmk:constant_factor}
Our analysis yields an arguably large constant factor
\begin{equation*}
  T_c \approx (8c + \tfrac{\pi}{2})\sqrt{\abs{B}}.
\end{equation*}
However, it is important to remark that this factor is an overestimation due to the simplifications applied when finding \(f\) and \(g\) in this section. With these simplifications we intended to prioritise clarity of our proof stategy. Tighter bounds would yield a more accurate factor. In fact, Figure~\ref{fig:sampling} suggests the average number of iterations before success is approximately \(2\sqrt{\abs{B}}\).

Moreover, our choice of \(\kappa = \abs{B}^{-1/2}\) is not optimal.
To prove Lemma~\ref{lem:G_robustness} we only needed that the collapse on each iteration was smaller than the increment of the angle due to \(G\), \ie{} for every angle \(a\) we required that \(\theta(a,\kappa) < 2\alpha\). If we set \(\kappa = 5\abs{B}^{-1/2}\) and use the bound of \(\theta(a,\kappa)\) given by Lemma~\ref{lem:theta_bound}, we can verify that for an instance of \(\abs{B} = 10^6\) we have the bound \(\theta(a,\kappa) < 1.3\alpha\), so the collapse is appropriately bounded below \(2\alpha\).
In this particular case, numerical analysis shows that the average number of iterations required is slightly smaller than \(K = \tfrac{\pi}{4}\sqrt{\abs{B}}\) where \(\floor{K}\) is the number of iterations the standard Grover's algorithm runs for.
The optimal value of \(\kappa\) depends on the parameters of the problem --- \ie{} it may be different for different values of \(\abs{B}\).
We conjecture that such an optimal value of \(\kappa\) exists for every instance of Grover's problem so that our approach and the standard one coincide in their expected number of iterations.

The realisation that for a fine-tuned value of \(\kappa\) our algorithm would perform as well as the standard Grover's algorithm was first discussed in a recent paper~\cite{Differentiable}. The paper provides an automatic procedure based on differentiable programming that looks for optimal values of the parameters of a quantum program. The authors discuss our algorithm as a use case of their approach and, for small values of \(\abs{B}\), they find optimal values for \(\kappa\) satisfying the conjecture above.
\end{remark}

\begin{figure}
	\centering
	\includegraphics[scale=0.43]{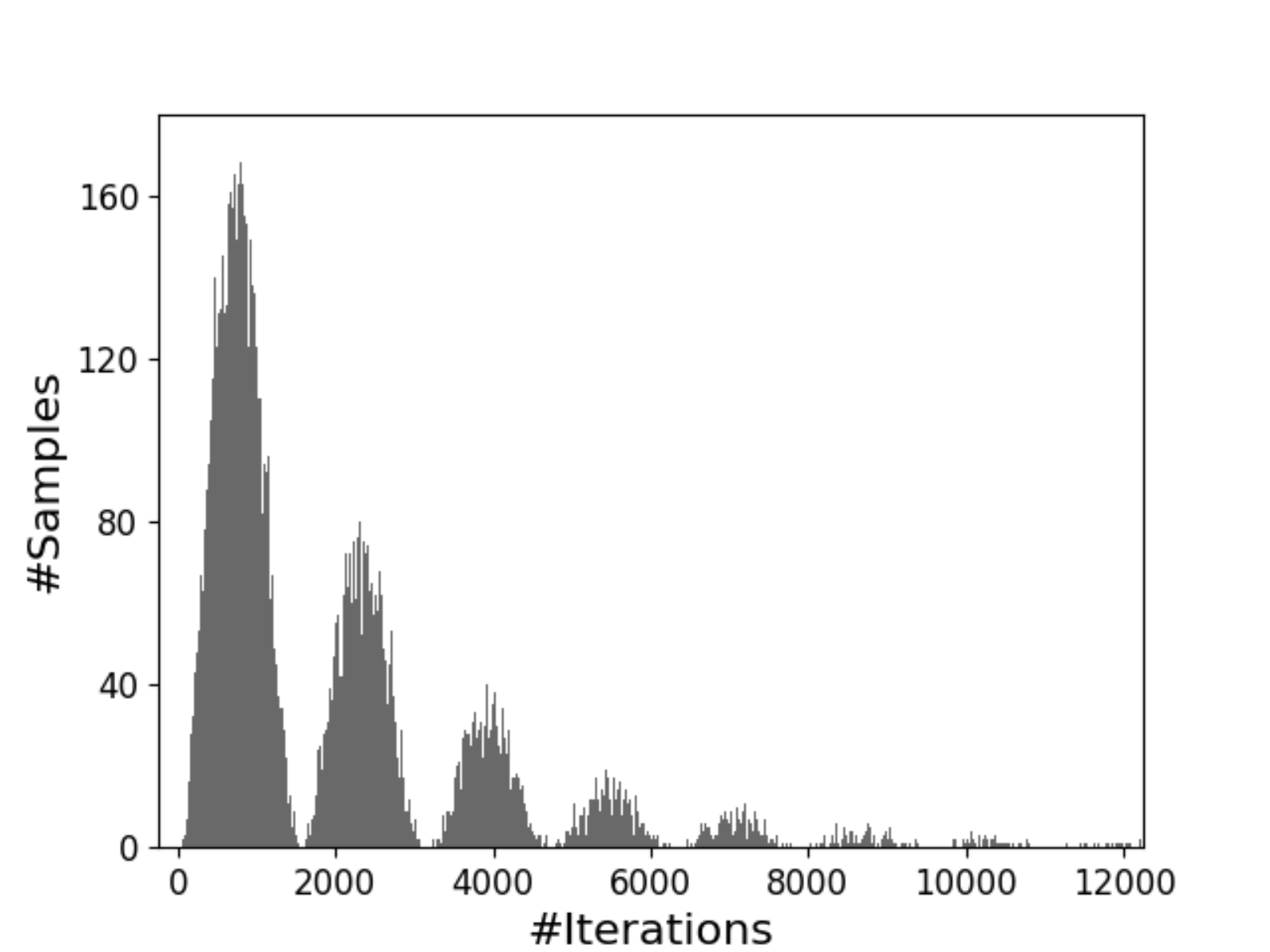}
	\includegraphics[scale=0.43]{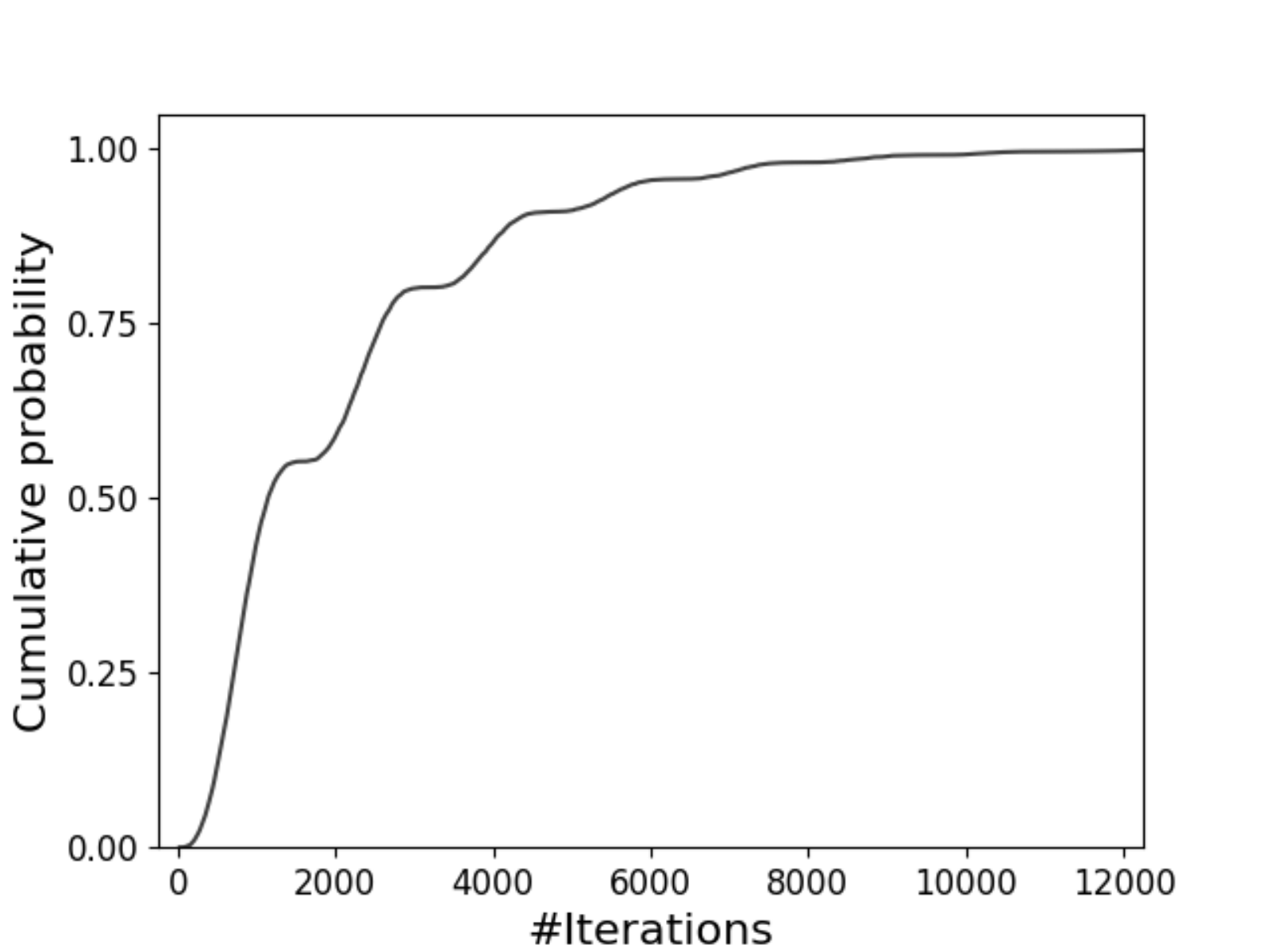}
	\caption{Distribution of the number of iterations before success for parameters \(\abs{B} = 10^6\) and \(\kappa = \abs{B}^{-1/2}\); \emph{left:} histogram, \emph{right:} cumulative probability distribution. Drawn from \(10000\) samples, obtained by sampling the success probability \(p_\top = \kappa\sin^2{b_n}\) where \(b_n\) is given by~\eqref{eq:recurrence}. The median is approximately \(1000\) iterations, and mean \(2000\) iterations, \ie{} approximately \(\tfrac{1}{\kappa}\) and \(\tfrac{2}{\kappa}\) respectively.}
	\label{fig:sampling}
\end{figure}

Figure~\ref{fig:sampling} helps us visualise the behaviour of the algorithm: as the angle \(b_n\) monotonically increases throughout the iterations, the instantaneous probability of success \(p_\top = \kappa\sin^2{b_n}\) changes periodically. The peaks of the histogram correspond to intervals of \(\cA\)-iterations where the probability of success approaches its maximum (that is, \(p_\top \approx \kappa\)); these alternate with troughs where the probability of halting approaches \(0\).
These periodic stages of unlikely termination are the cause of the plateaus in the second plot from Figure~\ref{fig:sampling} where the cumulative probability of halting --- \ie{} for each \(n \in \Nset\), the probability of halting before \(n\) iterations --- is graphed.
Apart from these plateaus, the curve resembles that of the cumulative probability of a geometric distribution, which is to be expected since these model experiments that ``continue until success".
The data points used to draw the figures were obtained by sampling from the success probability \(p_\top = \kappa\sin^2{b_n}\) throughout iterations until success, where \(b_n\) is given by~\eqref{eq:recurrence}; no simulation of the quantum algorithm is required since we are not running the computation itself.

\section{Discussion}
\label{sec:4_discussion}

The \(\kappa\)-while loop is a programming construct that offers a promising abstraction: a way to `peek' at quantum states while computing without sacrificing asymptotic quantum speed-up. The key challenge to its use is the need to strike a balance --- by tuning the measurement strength \(\kappa\) --- so that the collapse per iteration is low enough and the information gained is sufficient. The body of a \(\kappa\)-while loop may be any CPTR map, hence \(\kappa\)-while loops may be nested inside each other. As with classical while loops, termination implies the satisfaction of the predicate, which is a useful feature for the analysis of program correctness.
We emphasise that these \(\kappa\)-while loops can be realised in any quantum programming language with classical control flow, as indicated in Figure~\ref{fig:kappa_while}.
More precisely, \(\kappa\)-while loops are a particular instance of the classically controlled while loops described in~\cite{YingLoop}.
Hence, we expect it would be immediate to apply to \(\kappa\)-while loops any findings from the literature on classical control flow of quantum programs; for instance, verification of program correctness~\cite{YingHoare}, study of loop termination~\cite{YingTermination,YingLoop} and semantics~\cite{SelingerPL,YingBook}.
We conclude this chapter discussing previous works and proposing new avenues of research.

\subsection{Related work}

The key component of the \(\kappa\)-while loop is a weak measurement parametrised by \(\kappa\). Weak measurements are not new to quantum computer scientists, as they are at the heart of quantum feedback control~\cite{QuantumFeedbackSurvey}. Our \(\kappa\)-while loop can be seen as an example of quantum feedback control where the weak measurements are applied at discrete time steps.
Discrete time feedback control has been used to protect a single qubit from decoherence~\cite{FeedbackQECC,FeedbackQECCExperiment}, while similar notions of weak measurement (over continuous time) have been proposed to monitor and drive complex evolutions~\cite{ContinuousFeedback}.
However, weak measurements are rarely used in the design of quantum algorithms; Mizel's work~\cite{Mizel} being the only case we are aware of.
By defining the \(\kappa\)-while loop programming construct we are attempting to introduce weak measurements to  algorithm designers and programming language experts in a language that is familiar to them.

We have shown that a \(\kappa\)-while loop may be used to implement Grover's algorithm; however, there is no algorithmic advantage with respect to the standard approach.
The reason is that the distribution of \(\cA\)-iterations throughout Grover's algorithm is easy to predict, and thus we can fix the number of iterations we should run it for in advance, instead of using a \(\kappa\)-while loop.
Our approach trivially extends to the more general setting of amplitude amplification originally proposed in~\cite{AAA}.

Mizel~\cite{Mizel} proposed a version of Grover's algorithm that is, in essence, the same as ours.
We learned about Mizel's work while writing the first version of our publication~\cite{WeakWhileLoop}.
Our work can be seen as a generalisation of Mizel's where our study of Grover's algorithm is meant as a proof of concept; our main contribution is the proposal of a programming construct --- the \(\kappa\)-while loop --- and the introduction of tools to study the worst-case complexity of algorithms using it.
Mizel presents their algorithm as a fixed point routine, where a mixed state gradually converges to the target state.
However, this fixed point behaviour is an artefact of disregarding the outcome of the weak measurement (see Remark~\ref{rmk:known_probe_outcome}): we know the loop will eventually halt and, when it does, the result will be the target state; the fixed point behaviour Mizel observes is that of the cumulative probability of halting converging to \(1\) as the number of iterations tends to infinity (see Figure~\ref{fig:sampling}) and, thus, it is an statement about the stochastic behaviour of multiple runs of the algorithm.
In contrast, we have focused on describing the evolution of the quantum state in a single run, and we have shown that the state keeps evolving at approximately the same rate throughout the algorithm (\ie{} the angle increases an amount between \(\alpha\) and \(3\alpha\) per iteration) until a \(\top\) measurement outcome occurs, in which case the state collapses onto the target subspace.
The field of quantum trajectories~\cite{ToddTutorial} focuses on understanding these kind of stochastic dynamics in the presence of weak measurements.

It is worth mentioning that algorithms for Grover's search using a fixed point approach do exist~\cite{GroverFixPoint,YoderFixPoint}.
In these, no measurement is applied during execution; instead, each iteration applies a unitary parametrised by a value that is gradually reduced throughout the algorithm.
Intuitively, each iteration moves the state closer to the target, but each time the step is smaller to avoid overshooting.
The drawback of this fixed point approach is that, unless we can implement a circuit where the unitary's parameter can be changed during runtime, each iteration requires a different circuit.

\subsection{Applications beyond quantum search}
\label{sec:beyond_quantum_search}

We have presented a \(\kappa\)-while loop version of Grover's algorithm as a proof of concept.
However, the pragmatic value of \(\kappa\)-while loops cannot be confirmed unless we find examples of quantum algorithms where the distribution of \(\cA\)-iterations is unknown or it would be costly to predict.
We believe such examples may be found in the field of quantum walk based algorithms.
The literature on quantum walks provides multiple results where quantum computers exhibit an advantage in the study of Markov processes~\cite{Szegedy,VertexFinding}.
The unitary evolution of these quantum walks is defined in terms of the transition matrix of the Markov process, which is often only required to be symmetric and ergodic.
Thus, the literature from this field offers a diverse class of examples of quantum algorithms whose evolution may be arbitrarily complex.
We argue there may be Markov processes whose quantum walk can be proven to be robust to \(\kappa\)-measurements (as per Definition~\ref{def:robustness}) while, at the same time, being complex enough to prevent an accurate estimation of when a desired state would be reached.
The argument goes as follows: on one hand, the choice of \(\kappa\) may be determined by a lower bound of the rate at which the walker traverses the graph (so that \(\kappa\)-measurements are not so strong that they would prevent the walker from reaching certain regions of the graph). This rate may be estimated as the infimum of local rates, using notions such as conductance and effective resistances~\cite{Belovs}.
On the other hand, in order to predict \emph{when} the walker reaches a particular state, we would need to take into account the global behaviour of the walk, which may be a more daunting task.
Thus, in such situations choosing the value of the measurement strength \(\kappa\) may be an easier task than that of deciding the number of iterations in advance.

Another possible avenue of research is the study of continuously measured while loops.
The notion of weak measurement used in this chapter is discrete, but continuous measurements have been studied extensively in the literature of quantum feedback control~\cite{ContinuousMeasurement}.
We can entertain the idea of a while loop whose body describes an infinitesimal step of the evolution --- for instance, by providing a Hamiltonian --- thus extending \(\kappa\)-while loops to the continuous-time setting.
This may be valuable from an experimentalist perspective; for instance, it hints at implementations of algorithms (for instance, Grover's search) where the probe is continuously measured throughout the execution, thus simplifying control: we no longer need to know when each iteration finishes and the next one starts.
\chapter{Final remarks}
\label{chap:remarks}

The driving motivation of this thesis has been the study of control flow in quantum programs, with particular focus on unbounded iterative loops.
We have studied both the case of quantum control flow and the case of classical control flow.
Along the way we have presented multiple theoretical results whose relevance was discussed within their corresponding chapters.
In this final chapter we wish to highlight the two contributions of major pragmatic value.

Chapter~\ref{chap:trace} culminates with Theorem~\ref{thm:LSIContraction_traced}, establishing that \((\LSIContraction, \oplus, \ex)\) is a totally traced category.
The morphisms in \(\LSIContraction\) describe quantum processes over discrete time and the execution formula in this category represents quantum iteration.
This result opens up the possibility of providing categorical semantics for quantum programming languages using \emph{quantum} control flow.
Indeed, in Section~\ref{sec:semantics} we sketch a toy programming language with quantum control flow and we provide its denotational semantics in the category \(\LSIContraction\).

In Chapter~\ref{chap:while} we have proposed a novel programming primitive: the \(\kappa\)-while loop.
These \(\kappa\)-while loops are an instance of \emph{classical} control flow since the decision-making process is dependent on the outcome of a measurement.
This primitive may be introduced into any quantum programming language supporting classical control; we discuss how this may be achieved in Section~\ref{sec:kappa_CPTR} and provide denotational semantics in \(\CPTR\) for the \(\kappa\)-while loop.

\section{A toy language with quantum control flow}
\label{sec:semantics}

In this section we sketch the bare bones of a programming language \qwhile{} supporting quantum control flow and provide its denotational semantics using the results from Chapter~\ref{chap:trace}.
Previous works~\cite{YingFlow,DelayedTrace,PBS} have proposed programming languages and mathematical frameworks aiming to describe quantum computation with quantum control.
The shortcomings of the approach proposed by Ying, Yu and Feng~\cite{YingFlow} were discussed in Section~\ref{sec:control_flow}, whereas Section~\ref{sec:trace_rel_work} discussed the connection between our work and that of~\cite{DelayedTrace}.
The authors of~\cite{PBS} proposed the PBS-calculus, a diagrammatic language used to describe and reason about quantum computation with quantum control flow; at the end of this section we compare \qwhile{} with their approach.

\begin{definition}
	We use finite sets \(A,B\) to specify the input and output types of programs \(P \colon A \to B\) in \qwhile{}.
	Every program \(P \colon A \to B\) in \qwhile{} is inductively defined using the following syntax:
	\begin{equation*}
		U \mid \Delta_t \mid P;P' \mid P\!\oplus\!P' \mid \dowhilePX
	\end{equation*}
	\begin{itemize}
		\item there is a \emph{unitary primitive} \(U \colon A \to A\) for every finite set \(A\) and every unitary \(U \colon \Cset^{\abs{A}} \to \Cset^{\abs{A}}\);
		\item there is a \emph{delay primitive} \(\Delta_t \colon \{\bullet\} \to \{\bullet\}\) for every \(t \in \Nset\) and every element \(\bullet\);
		\item there is a \emph{sequential combinator} yielding \(P; P' \colon A \to C\) for every two programs \(P \colon A \to B\) and \(P' \colon B \to C\);
		\item there is a \emph{parallel combinator} yielding \(P \oplus P' \colon A \uplus A' \to B \uplus B'\) for any two programs \(P \colon A \to B\) and \(P' \colon A' \to B'\);
		\item for every program \(P \colon A \uplus X \to B \uplus X\) there is a \emph{while loop combinator} yielding \((\dowhilePX) \colon A \to B\).
	\end{itemize}
\end{definition}

We choose to write \((\dowhilePX)\) instead of the more usual syntax \((\text{while } X \text{ do } P)\) since its semantics are easier to describe.
The semantics of \((\dowhilePX)\) will be given by the execution formula where the shortest execution path is \(P_\sub{BA} \colon A \to B\) which can be understood as a single application of \(P\).
Thus, we want our syntax to suggest that at least one iteration of \(P\) will always be applied, hence, we `do' \(P\) before we test the loop condition.

The denotational semantics of \qwhile{} is given by a mapping \(\sem{-}\) from programs \(P \colon A \to B\) to morphisms \(\sem{P} \in \LSIContraction(A,B)\). 
Recall from Section~\ref{sec:LSI} that a morphism \(f \colon A \to B\) in \(\LSIContraction\) corresponds to a linear shift invariant contraction \(\oplus_A\, \ltwo \to \oplus_B\, \ltwo\). According to Definition~\ref{def:LSIContraction}, a morphism \(f \in \LSIContraction(A,B)\) is represented as an \(\Rset\)-indexed collection of contractions:
\begin{equation*}
	f = \{\widehat{f}_\omega \in \FdContraction(\Cset^{\abs{A}},\Cset^{\abs{B}})\}_{\omega \in \Rset}.
\end{equation*}

\begin{definition}
	The denotational semantics of a program \(P \colon A \to B\) in \qwhile{} is the morphism \(\sem{P} \in \LSIContraction(A,B)\) inductively defined as follows:
	\begin{itemize}
		\item for each \emph{unitary primitive} \(U \colon A \to A\), let \(\sem{U}\) be the \(\Rset\)-indexed collection where \(\widehat{\sem{U}}_\omega = U\) for all \(\omega \in \Rset\);
		\item for each \emph{delay primitive} \(\Delta_t \colon \{\bullet\} \to \{\bullet\}\), let \(\sem{\Delta_t}\) be the \(\Rset\)-indexed collection where \(\widehat{\sem{\Delta_t}}_\omega = e^{-i \omega t}\) for all \(\omega \in \Rset\);
		\item for each program of the form \(P; P'\), let \(\sem{P; P'}\) be the \(\Rset\)-indexed collection where \(\widehat{\sem{P; P'}}_\omega = \widehat{\sem{P'}}_\omega \circ \widehat{\sem{P}}_\omega\);
		\item for each program of the form \(P \oplus P'\), let \(\sem{P \oplus P'}\) be the \(\Rset\)-indexed collection where \(\widehat{\sem{P \oplus P'}}_\omega = \widehat{\sem{P}}_\omega \oplus \widehat{\sem{P'}}_\omega\);
		\item for each program of the form \((\dowhilePX)\) where \(P \colon A \uplus X \to B \uplus X\), let
		\begin{equation*}
			\widehat{\sem{\dowhilePX}}_\omega = \ex^X(\sem{P}_\omega)
		\end{equation*}
		where \(\ex\) is the execution formula in \(\LSIContraction\).
	\end{itemize}
\end{definition}

The definition of \(\sem{-}\) on each syntax primitive follows from the discussion on linear shift invariant maps in Section~\ref{sec:LSI}. In particular, recall that the elements of the \(\Rset\)-indexed collection representing a linear time-shift invariant map \(f\) may be understood as the action --- via multiplication --- of \(f\) on inputs that are periodic signals of angular frequency \(\omega \in \Rset\) (see Remark~\ref{rmk:FT_plane_waves}). 
Since unitary primitives are instantaneous, these act in the same manner on all input signals, whereas a delay primitive simply changes the phase of the signal.
The correctness of these definitions can be checked by obtaining the characteristic function \(\chi^U = U\!\cdot\!\delta\) and \(\chi^\Delta_t = S_t[\delta]\) of these LSI maps (see Proposition~\ref{prop:l2_lsi}), then applying the discrete-time Fourier transform to them (Definition~\ref{def:DTFT}).
We established in Section~\ref{sec:LSI} that, thanks to the convolution theorem~\ref{thm:l2_convolution}, composition of LSI maps can be described in terms of index-wise composition of their corresponding \(\Rset\)-indexed collections; a similar result trivially follows for parallel composition (\ie{} monoidal product).
Finally, we provide semantics for while loops using the execution formula; well-definedness is guaranteed since \((\LSIContraction,\oplus,\ex)\) is a totally traced category, as established in Theorem~\ref{thm:LSIContraction_traced}.

\paragraph*{Comparison with PBS-calculus~\cite{PBS}.} The authors of the PBS-calculus consider two separate state spaces: one being the internal state (holding `data') and another --- referred to as the `polarisation' --- which determines the execution path the computation will take.
In the case of \qwhile{} we only capture the `polarisation' space since the state space assigned to each input port \(a \in A\) of a program \(P \colon A \to B\) is the \(\ltwo\) space of square-summable functions \(\Zset \to \Cset\), \ie{} we only capture amplitude over time on each port \(a \in A\).
It would not be conceptually difficult to extend \qwhile{} so that the inputs and outputs ports may carry an `internal state'; it is a matter of increasing dimensions.
However, there is a fundamental conceptual difference between \qwhile{} and the PBS-calculus: the PBS-calculus does not support any primitive that changes the polarisation state during computation.
This restriction reduces the expressivity of loops dramatically: the variables used in conditional statements are, in fact, constants defined at the program's input.
The toy programming language \qwhile{} does not exhibit this restriction.

Notice that the introduction of the time delay primitive \(\Delta_t\) to the syntax of \qwhile{} is not necessary for the denotational semantics to be well-defined. In fact, if we remove \(\Delta_t\) from the syntax we would be able to define the semantics of \qwhile{} within \((\FdContraction, \oplus, \ex)\) --- or even \(\FdUnitary\) --- which we have shown to be a totally traced category as well (Theorem~\ref{thm:FdContraction_traced}).
However, in such a case the semantics would suggest that the different execution paths of the program may interact with each other as if all of them produced their output in the same instant --- this is not a physically sound proposal.
Thus, the delay primitive \(\Delta_t\) is added for pragmatic reasons rather than theoretical limitations.
An interesting consequence is that we may perceive \qwhile{} as a language for manipulating wave functions since states are signals rather than qubits.
However, recall that \(\LSIContraction\) only captures processes over \emph{discrete} time, whereas wave functions are defined over continuous time.

\section{The \(\kappa\)-while loop in existing programming languages}
\label{sec:kappa_CPTR}

The \(\kappa\)-while loop primitive may be introduced into any quantum programming language supporting classical control flow and standard while loops.
Figure~\ref{fig:kappa_while_bis} shows how this may be achieved:\footnote{Figure~\ref{fig:kappa_while_bis} is a copy of Figure~\ref{fig:kappa_while}, reproduced here for the reader's convenience.} a \(\kappa\)-while loop is nothing more than a standard while loop controlled by a \(\kappa\)-measurement, so it suffices to unpack the definition of \(\kappa\)-measurement (Definition~\ref{def:kappa_measurement}) within the body of the loop.
This may be done by a parser that scans the code and replaces any occurrence of a \(\kappa\)-while loop with the equivalent code from Figure~\ref{fig:kappa_while_bis}.

\begin{figure}
	\begin{tikzpicture}
  \node (code_alias) {
    \begin{algorithm}
        while $!_\kappa$ $Q[\rho]$ do
          $\rho$ $\gets$ $\cC(\rho)$
        end
    \end{algorithm}
  };
  \node[right=5mm of code_alias] (code_def) {
    \begin{algorithm}
        $q$ $\gets$ $\ketbra{\bot}{\bot}$
        while $M[q] = \bot$ do
          $\rho$ $\gets$ $\cC(\rho)$
          $\rho,q$ $\gets$ $E_{\kappa,Q}(\rho \otimes q)$
        end
    \end{algorithm}
  };
\end{tikzpicture}
	\caption{\emph{Left:} the syntax we use to represent a \(\kappa\)-while loop; \(\kappa \in [0,1]\) is a parameter set by the programmer and \(Q\) is the predicate to be measured. \emph{Right:} the pseudocode that implements the \(\kappa\)-while loop on a programming language with classical control flow, following the notation from~\cite{YingBook}; \(E_{\kappa,Q}\) is defined in Section~\ref{sec:weak_meas} and the condition \(M[q] = \bot\) indicates that \(q\) is measured to distinguish between \(H \otimes \Span \{\bot\}\) and \(H \otimes \Span \{\top\}\), with the condition being satisfied if \(q\) collapses to the former.}
	\label{fig:kappa_while_bis}
\end{figure}

On a similar note, the semantics of the \(\kappa\)-while loop may also be defined in terms of the semantics of the base language.
In fact, it is as simple as evaluating the semantics of the equivalent code provided in the right hand side of Figure~\ref{fig:kappa_while_bis} which, being comprised only of primitives of the base programming language, should already have well-defined semantics.
Nevertheless, and for the purpose of completeness, we construct the morphism in \(\CPTR\) that ought to describe the denotational semantics of the \(\kappa\)-while loop.
Let \(Q\) be the halting predicate and let \(\kappa\) be its measurement strength; we may construct a function 
\begin{equation*}
	\mathrm{Wk}_{\kappa,Q} \colon \CPTR(B(H),B(H)) \to \CPTR(B(H)\!\oplus\!B(H), B(H)\!\oplus\!B(H))
\end{equation*} 
that maps a morphism \(\cC \in \CPTR(B(H),B(H))\) describing the body of a \(\kappa\)-while loop to its corresponding weakly measured counterpart.
To do so, use the isomorphism \(h \colon H \otimes \Span \{\bot, \top\} \to H \oplus H\) and the functor \(F \colon \FdContraction \to \CPTR\) given in Definition~\ref{def:env_functor} along with the canonical inclusion \(\theta \colon B(H) \oplus B(H) \to B(H \oplus H)\) and its left inverse \(\phi \colon B(H \oplus H) \to B(H) \oplus B(H)\) (see Section~\ref{sec:comparing_traces}).\footnote{Notice that \(\theta\) may be understood as \emph{state preparation} and \(\phi\) as \emph{measurement}~\cite{Robin}.} Then, for every \(\cC \in \CPTR(B(H),B(H))\) we may define \(\mathrm{Wk}_{\kappa,Q}(\cC)\) as follows:
\begin{equation*}
	\mathrm{Wk}_{\kappa,Q}(\cC) =  \phi \circ F(h \circ E_{\kappa,Q}) \circ (\cC \otimes \id) \circ F(h^{-1}) \circ \theta
\end{equation*}
where \(E_{\kappa,Q}\) is a unitary defined in Section~\ref{sec:weak_meas}.
Since \(\CPTR\) is a \(\SCat{s}\)-UDC it has canonical quasi-injections \(\iota_\bot,\iota_\top \colon B(H) \to B(H) \oplus B(H)\) and quasi-projections \(\pi_\bot,\pi_\top \colon B(H) \oplus B(H) \to B(H)\) distinguishing the measurement outcomes \(\bot\) and \(\top\). Then, we may define \(\mathrm{Wk}_{\kappa,Q}\) in terms of its unique decomposition:
\begin{equation*}
	\mathrm{Wk}_{\kappa,Q}(\cC) = 
	\begin{pmatrix}
		\pi_\top \circ \mathrm{Wk}_{\kappa,Q}(\cC) \circ \iota_\top & \pi_\top \circ \mathrm{Wk}_{\kappa,Q}(\cC) \circ \iota_\bot \\
		\pi_\bot \circ \mathrm{Wk}_{\kappa,Q}(\cC) \circ \iota_\top & \pi_\bot \circ \mathrm{Wk}_{\kappa,Q}(\cC) \circ \iota_\bot
	\end{pmatrix}
\end{equation*}
so that \(\ex^{B(H)}(\mathrm{Wk}_{\kappa,Q})(\cC)\) corresponds to the morphism in \(\CPTR\) describing the \(\kappa\)-while loop of body \(\cC\) and halting predicate \(Q\), where the execution formula \(\ex\) in \(\CPTR\) is the one discussed in Section~\ref{sec:classical_loop}.

\section{Conclusion}

With this thesis we are attempting to put forward new perspectives on the field of control flow of quantum programs and, in particular, the study of unbounded iteration. On one hand, we have revisited Bartha's result~\cite{Bartha} that quantum processes with instantaneous coherent quantum feedback are well-defined: more formally, that \((\FdContraction,\oplus,\ex)\) is a totally traced category. A physical interpretation of this result is that, instead of assuming inputs and outputs come in the form of particles interacting with a device, these are described in terms of plane waves (see Remark~\ref{rmk:FT_plane_waves}) which are, in some sense, everywhere at once; thus, feedback in such a situation does not require a notion of `delay' to be well-defined. Following this intuition, we have used the Fourier transform to decompose any input signal to its plane wave components and used the properties of time-shift invariant maps to prove the novel result that \((\LSIContraction,\oplus,\ex)\) is totally traced.
Other novel technical results on category theory and, in particular, on categories of \(\Sigma\)-monoids and UDCs have been introduced while building towards our objective.
On the other hand, we have introduced a novel programming primitive --- the \(\kappa\)-while loop --- that in certain cases may be used to classically control the termination condition of a quantum subroutine without sacrificing the quantum speed-up. 

Quantum computer scientists tend to dismiss unbounded iteration in quantum algorithms as an uninteresting field of work: in the case of classical control flow due to the assumption that testing a termination condition on every iteration would destroy any achievable quantum speed-up and, in the case of quantum control flow, due to the technical obstacles that interference and the possibility of infinitely many execution paths would entail --- these obstacles were discussed in Section~\ref{sec:control_flow}. However, prior to this thesis there already were examples of both classically controlled unbounded loops --- such as Mizel's proposal of a measurement-controlled Grover algorithm~\cite{Grover} --- and instances of quantumly controlled unbounded loops --- in the form of a wide variety of quantum-walk based algorithms. In this thesis we have established that unbounded iteration in quantum computing is mathematically sound. It is my hope that further study on this field will yield quantum programming languages supporting quantum control flow and new algorithms that make use of unbounded iteration. 

\appendix


\printbibliography[heading=bibintoc, title={Bibliography}]
\printglossary[style=mcolindex]

\end{document}